\renewcommand{\headrulewidth}{0.0pt}
\theoremstyle{plain}
\newtheorem{theorem}{Theorem}[section]
\newtheorem{lemma}[theorem]{Lemma}
\newtheorem{corollary}[theorem]{Corollary}
\theoremstyle{definition}
\newtheorem{example}[theorem]{Example}
\newtheorem{remark}[theorem]{Remark}
\numberwithin{equation}{section}
\def\ba{{\mathbf{a}}}
\def\bb{{\mathbf{b}}}
\def\cB{{\mathcal{B}}}
\def\C{{\mathbb{C}}}
\def\be{{\mathbf{e}}}
\def\cE{{\mathcal{E}}}
\def\sH{{\mathsf{H}}}
\def\cI{{\mathcal{I}}}
\def\bJ{{\mathbf{J}}}
\def\bk{{\mathbf{k}}}
\def\cK{{\mathcal{K}}}
\def\cM{{\mathcal{M}}}
\def\N{{\mathbb{N}}}
\def\cP{{\mathcal{P}}}
\def\bp{{\mathbf{p}}}
\def\bq{{\mathbf{q}}}
\def\R{{\mathbb{R}}}
\def\cR{{\mathcal{R}}}
\def\S{{\mathbb{S}}}
\def\cS{{\mathcal{S}}}
\def\bs{{\mathbf{s}}}
\def\T{{\mathbb{T}}}
\def\bt{{\mathbf{t}}}
\def\bU{{\mathbf{U}}}
\def\cV{{\mathcal{V}}}
\def\sV{{\mathsf{V}}}
\def\bv{{\mathbf{v}}}
\def\fw{{\mathrm{w}}}
\def\bW{{\mathbf{W}}}
\def\bx{{\mathbf{x}}}
\def\bX{{\mathbf{X}}}
\def\by{{\mathbf{y}}}
\def\bY{{\mathbf{Y}}}
\def\Z{{\mathbb{Z}}}
\def\bz{{\mathbf{z}}}
\def\bZ{{\mathbf{Z}}}
\def\opsi{{\overline{\psi}}}
\def\btheta{{\mbox{\boldmath$\theta$}}}
\def\bpi{{\mbox{\boldmath$\pi$}}}
\def\spin{{\{\uparrow,\downarrow\}}}
\def\ua{{\uparrow}}
\def\da{{\downarrow}}
\def\O{{\Omega}}
\def\o{{\omega}}
\def\eps{{\varepsilon}}
\def\beps{{\mbox{\boldmath$\varepsilon$}}}
\def\g{{\gamma}}
\def\bga{{\mbox{\boldmath$\gamma$}}}
\def\G{{\Gamma}}
\def\s{{\sigma}}
\def\D{{\Delta}}
\def\<{{\langle}}
\def\>{{\rangle}}
\def\Tr{\mathop{\mathrm{Tr}}}
\def\Mat{\mathop{\mathrm{Mat}}}
\def\Map{\mathop{\mathrm{Map}}}
\def\sgn{\mathop{\mathrm{sgn}}\nolimits}
\def\Re{\mathop{\mathrm{Re}}}
\def\Ref{\mathop{\mathrm{Ref}}\nolimits}
\def\b0{{\mathbf{0}}}
\begin{document}
\scalefont{1.1}

\thispagestyle{fancy}
\lhead{}
\chead{}
\rhead{}
\cfoot{\thepage}
\renewcommand{\headrulewidth}{0.0pt}

\begin{center}\Large\bf
The Zero-Temperature Limit\\
of the Free Energy Density in Many-Electron
 Systems at Half-Filling
\end{center}
\bigskip 

\begin{center}\large
Yohei Kashima \medskip \\
Graduate School of Mathematical Sciences, University of Tokyo,\\
Komaba, Tokyo, 153-8914, Japan\\ 
kashima@ms.u-tokyo.ac.jp
\end{center}
\bigskip

\begin{quotation}
\small {\bf  Abstract. }
We prove by means of a renormalization group method that 
in weakly interacting many-electron systems at half-filling on
 a periodic hyper-cubic lattice, the free energy
 density uniformly converges to an analytic
 function of the coupling constants in the infinite-volume,
 zero-temperature limit if the external magnetic field has a
 chessboard-like flux configuration. The spatial dimension is allowed to
 be any number larger than 1. The system covers the Hubbard model with
 a nearest-neighbor hopping term, on-site interactions, exponentially decaying density-density interactions and exponentially decaying
 spin-spin interactions. The magnetic field must be included in the
 kinetic term by the Peierls substitution. The flux configuration and
 the sign of the nearest-neighbor density-density/spin-spin interactions can be
 adjusted so that the free energy density is minimum among all the flux
 configurations. Consequently, the minimum free energy density is proved to converge to an analytic function of the
 coupling constants in the infinite-volume, zero-temperature limit. 
 These are extension of the results on a square lattice in the preceding work 
([Kashima, Y., ``The special issue for the 20th anniversary'',
	J. Math. Sci. Univ. Tokyo. {\bf 23} (2016), 1--288]).
We refer to lemmas proved in the reference in order to complete the
 proof of the main results of this paper. So this work is a continuation
 of the preceding work. 
\end{quotation}
\medskip

\begin{quotation}\small 
{\it 2010 Mathematics Subject Classification.} Primary 81T17; Secondary 81T28.
\end{quotation}

\tableofcontents

\section{Many-electron systems and the main results}\label{sec_system}

\subsection{Introduction}\label{subsec_introduction}
Rigorous construction of many-electron systems in low temperature is a
frontier of mathematical physics. Especially reaching the
infinite-volume, zero-temperature limit from a formulation in
finite volume and positive temperature appears to be a mathematical
challenge. As considered as the simplest possible model of interacting
electrons, the Hubbard models have been the central objectives in the
constructive theories based on multi-scale Grassmann
integration. Among them, substantial progress has been made in the
zero-temperature construction of the 1-dimensional models. See
\cite{BFM1}, \cite{BFM2} for the latest results. As for the
2-dimensional Hubbard models, there have been attempts to develop
low-temperature theories since the 2000s (see \cite{Ri}, \cite{AMRi1},
\cite{AMRi2}, \cite{BGM}, \cite{P}). There was also a thorough construction
of 2-dimensional Fermion systems in spatial continuum at zero
temperature by Feldman,  Kn\"orrer and Trubowitz \cite{FKT1},
\cite{FKT2}. As yet we have seen few examples of reaching the zero-temperature limit in
the concrete lattice models in spatial dimension $\ge 2$. One pioneering example of taking the zero-temperature limit in 2
dimension was reported by Giuliani and Mastropietro in \cite{GM} where
the half-filled Hubbard model on the honeycomb lattice was specifically
considered. Beneath the model-dependent details, the work of Giuliani
and Mastropietro seems to suggest an effective remedy for the
temperature-dependency of the constructive theories. The
hint from \cite{GM} was explored and another example of the
2-dimensional Hubbard model which admits the infinite-volume,
zero-temperature limit was given in our previous work \cite{K15}. In
more detail the model studied in \cite{K15} was the half-filled
Hubbard model on a square lattice, containing an external magnetic
field whose flux is $\pi$ (mod $2\pi$) per plaquette and $0$ (mod
$2\pi$) through the large circles around the periodic lattice. 
Recently, Giuliani and Jauslin reported a zero-temperature construction of
the free energy density and the two-point Schwinger function of an
interacting Fermion model on a bilayer honeycomb lattice in \cite{GJ}.

Since the
focus of \cite{K15} was on presenting a pile of lemmas leading to the
zero-temperature limit in a self-contained manner, possibility of
applying its framework to other models was not fully investigated there. As
 a continuation of \cite{K15}, here we focus on providing other
examples of many-electron systems where the analyticity at
zero-temperature can be proven essentially within the same framework. 
The main results of this paper can be seen as a
generalization of the results of \cite{K15}. We will establish a theorem
stating that the free energy density of a weakly interacting
many-electron system at half-filling uniformly converges with respect to
the amplitude of interaction in the infinite-volume, zero-temperature
limit. Here we allow the spatial dimension to be any number larger than
1. The system is defined on a periodic hyper-cubic lattice. The kinetic
term of the Hamiltonian is determined by the nearest-neighbor hopping of
electrons and contains an external magnetic field by means of the
Peierls substitution. The magnetic flux is assumed to change its sign at
plaquette alternately like a chessboard. The flux $\pi$ (mod $2\pi$) per
plaquette is a special case of such configurations. The magnetic flux
through the large circles winding around the periodic lattice is assumed
to be either uniformly 0 (mod $2\pi$) or uniformly $\pi$ (mod
$2\pi$). The interacting part of the Hamiltonian has a general form
satisfying a number of invariant properties and a decay property which is faster than any polynomial order and slower than an
exponential order. The interaction covers on-site interactions, exponentially decaying density-density interactions and exponentially
decaying spin-spin interactions as special cases. The whole Hamiltonian
has a symmetry which ensures that the system is at half-filling. The
magnetic flux and the interacting term can be chosen so that the free
energy density of the system is minimum among all flux
configurations, according to Lieb's result on the flux phase problem
(\cite{L}). Thus, it follows that the minimum free energy density in the
flux phase problem on a hyper-cubic lattice uniformly converges in the
infinite-volume, zero-temperature limit. We will explain how these
results generalize the main results of \cite{K15} in Remark
\ref{rem_how_generalized} after officially stating the main theorem and
its corollary in Subsection \ref{subsec_results}. 

The key strategy of our construction is to view the hyper-cubic lattice
as a composition of some sparser hyper-cubic lattices. The original one-band Hamiltonian is accordingly formulated into a
multi-band Hamiltonian. More precisely, we transform the one-band
Hamiltonian on a $d$-dimensional hyper-cubic lattice into a $2^d$-band
Hamiltonian. This procedure is a generalization of the formulation in
\cite{K15} where the one-band Hamiltonian on a square lattice was
formulated into a 4-band Hamiltonian. The multi-band formulation makes
it feasible to study symmetric properties and spectral properties of the
hopping matrix. We prove that the modulus of the band spectrum
of the hopping matrix is bounded from below by a non-negative function
of momentum variable vanishing at a single point. In fact the hopping matrix in momentum space
fails to be invertible only at the point. Therefore, this point times zero
time-momentum is the only singular point of the free covariance in
the zero-temperature limit.
The Hamiltonian has sufficient
symmetries to guarantee that the singular point of the free covariance
remains to be the singular point of the effective covariance during
infrared (IR) integration. Therefore, the same renormalization technique
as in \cite{K15}, which was motivated by \cite{GM}, applies to this
model as well. The power-counting in the IR integration depends on the spatial
dimension quantitatively. The power in the norm estimation of Grassmann
polynomials contains the spatial dimension $d$ as a parameter. By
substituting $d=2$ we can recover the same power-counting as in the IR
integration process \cite[\mbox{Section 7}]{K15}. However, our
multi-scale integration is qualitatively unaffected by the generalization of
the spatial dimension in the sense that Grassmann monomials of degree
$\ge 4$ are irrelevant at every iteration of the IR integration if the 
spatial dimension is larger than 1. We follow
steps, which are seen essentially parallel to the stories of \cite{K15}
in the eyes of abstraction, to complete the proof of the main theorem. We
 will refer to the relevant parts of \cite{K15} from time to time to
 fill the proofs of necessary lemmas. For
this reason this work should be strictly considered as a continuation of
\cite{K15}.

Nonetheless the generalization of the spatial dimension and 
the generalization of the interaction cause some technical details to be
different from the previous construction in \cite{K15}. 
The generalization in terms of the spatial dimension requires the
multi-band formulation to be constructed inductively. This part is
explained in Subsection \ref{subsec_multi_band_hamiltonian}.
In addition to
the new $2^d$-band formulation procedure in Subsection
\ref{subsec_multi_band_hamiltonian}, we will present other sections which
are largely affected by the generalization of the interaction without
significant omission. These are the symmetric Grassmann
integral formulation in Subsection \ref{subsec_grassmann}, the Matsubara
ultra-violet (UV) integration in Section \ref{sec_UV} and the
time-continuum, infinite-volume limit of the truncated Grassmann
integral formulation in Appendix \ref{app_h_L_limit}. Moreover, in the
belief that the inductive arguments in \cite[\mbox{Section 7}]{K15} are
not seen trivial at present, we make this occasion to present a more
organized version of the IR integration process than \cite[\mbox{Section
7}]{K15} in order to convince the readers of the true validity of
the mathematical renormalization group method. 

As for a relevance to the contemporary physical research, one can find
the Fermionic Hamiltonian with magnetic flux in a mean-field
theory of the Heisenberg-Hubbard model simulating the high-Tc
superconducting materials (\cite{AM}). More recently, the half-filled
Hubbard model with flux $\pi$ per plaquette together with the
half-filled Hubbard model on the honeycomb lattice tends to be studied
by means of numerical computation in order to describe the
metal-insulator transition driven by the electron-electron interaction
(\cite{OH}, \cite{CS}, \cite{IAS}, \cite{THAH}, \cite{EKO}, \cite{OYS}
and so on). These numerical studies commonly start with a speculation
that in the $\pi$-flux Hubbard model at half-filling, unlike in the 0-flux
Hubbard model at half-filling, the semi-metal phase remains in a
weak-coupling region so that the metal-insulator transition is
detectable in a middle (not the edge) of the phase diagram with the
horizontal axis of the coupling strength. The main result of this paper
suggests that there is no phase transition caused by the weak electron
interaction not only in the $\pi$-flux Hubbard model but also in a class
of electron models with staggered flux. This should provide a theoretical
support for numerical studies into the metal-insulator transition away
from the edge of the phase diagram in
these models yet to appear in physical literature. 

The contents of this paper are outlined as follows. In the rest of
this section we define the Hamiltonian operators, see what kind of
interaction is actually covered by our general definition
and state the main results of this paper. In Section
\ref{sec_formulation} we transform the one-band Hamiltonian into a
multi-band Hamiltonian and formulate the multi-band Hamiltonian by means
of finite-dimensional Grassmann integration. In Section
\ref{sec_UV} we construct the Matsubara UV integration both at a fixed
temperature and at 2 different temperatures. In Section \ref{sec_IR} we
carry out the IR integration and complete the proof of the main
theorem. In Appendix \ref{app_normal_order} we provide a lemma
concerning reordering in a non-commutative $\C$-algebra, which is
conveniently used in the proof that our many-electron system is at
half-filling in Subsection \ref{subsec_hamiltonian}. In Appendix
\ref{app_flux_phase} we restate Lieb's result on a $d$-dimensional flux
phase problem in order to facilitate the derivation of the corollary
about the minimum free energy density from the main
theorem. Finally in Appendix \ref{app_h_L_limit} we prove that each
truncation of the Taylor series of the Grassmann integral formulation of
the free energy density converges in the time-continuum, infinite-volume
limit. A flow chart of our construction showing the dependency between
the sections of this paper and the lemmas of the previous work \cite{K15} is given in
Figure \ref{fig_flow_chart}. We also attach a list of notations for
sake of the readers in the end. However, this list only contains notations which
were not used in \cite{K15} or were used in \cite{K15} with different
meanings and thus need additional remarks. The readers should refer to
the more comprehensive list in \cite{K15} for notations which are not 
contained in the supplementary list of this paper.  
 
\begin{figure}
\begin{center}
\begin{picture}(355,315)(0,0)

\put(20,300){\line(1,0){80}}
\put(20,320){\line(1,0){80}}
\put(20,300){\line(0,1){20}}
\put(100,300){\line(0,1){20}}
\put(60,300){\vector(0,-1){90}}
\put(33,305){Section \ref{sec_system}}

\put(20,310){\line(-1,0){20}}
\put(0,310){\line(0,-1){305}}
\put(0,5){\vector(1,0){20}}

\put(150,300){\line(1,0){95}}
\put(150,320){\line(1,0){95}}
\put(150,300){\line(0,1){20}}
\put(245,300){\line(0,1){20}}
\put(150,310){\vector(-1,0){50}}
\put(160,305){Appendix \ref{app_normal_order}}

\put(150,270){\line(1,0){95}}
\put(150,290){\line(1,0){95}}
\put(150,270){\line(0,1){20}}
\put(245,270){\line(0,1){20}}
\put(150,280){\vector(-2,1){50}}
\put(150,280){\vector(-2,-3){50}}
\put(160,275){Appendix \ref{app_flux_phase}}
 
\put(150,235){\line(1,0){145}}
\put(150,260){\line(1,0){145}}
\put(150,235){\line(0,1){25}}
\put(295,235){\line(0,1){25}}
\put(200,260){\vector(0,1){10}}
\put(155,245){\tiny \cite{K15} $\left\{\begin{array}{ll} 
\text{Lemma } A.2,
 &\text{Lemma } A.4,\\
                            \text{Theorem A.5}. & \end{array}\right.$}

\put(20,210){\line(1,0){80}}
\put(20,190){\line(1,0){80}}
\put(20,190){\line(0,1){20}}
\put(100,190){\line(0,1){20}}
\put(60,190){\vector(0,-1){35}}
\put(33,195){Section \ref{sec_formulation}}

\put(20,200){\line(-1,0){10}}
\put(10,200){\line(0,-1){185}}
\put(10,15){\vector(1,0){10}}

\put(150,225){\line(1,0){145}}
\put(150,180){\line(1,0){145}}
\put(150,180){\line(0,1){45}}
\put(295,180){\line(0,1){45}}
\put(150,200){\vector(-1,0){50}}
\put(155,200){\tiny \cite{K15} $\left\{\begin{array}{ll} 
\text{Lemma } 2.1, &\text{Lemma } 2.2,\\
\text{Lemma } 2.4, &\text{Lemma } 2.8,\\
\text{Lemma } 2.10, &\text{Lemma } B.1,\\
\text{Lemma } B.2, &\text{Lemma } B.3. 
 \end{array}\right.$}

\put(20,155){\line(1,0){80}}
\put(20,135){\line(1,0){80}}
\put(20,135){\line(0,1){20}}
\put(100,135){\line(0,1){20}}
\put(150,145){\vector(-1,0){50}}
\put(60,135){\vector(0,-1){115}}
\put(33,140){Section \ref{sec_UV}}

\put(150,115){\line(1,0){145}}
\put(150,170){\line(1,0){145}}
\put(150,115){\line(0,1){55}}
\put(295,115){\line(0,1){55}}

\put(155,140){\tiny \cite{K15} $\left\{\begin{array}{ll} 
\text{Lemma } 3.1, &\text{Lemma } 3.8,\\
\text{Lemma } 3.9, &\text{Lemma } 4.1,\\
\text{Lemma } 4.6, &\text{Lemma } 5.1,\\
\text{Lemma } 5.3, &\text{Lemma } 6.1,\\
\text{Lemma } 6.2, &\text{Lemma } 6.3.  
 \end{array}\right.$}

\put(150,85){\line(1,0){95}}
\put(150,105){\line(1,0){95}}
\put(150,85){\line(0,1){20}}
\put(245,85){\line(0,1){20}}
\put(150,95){\vector(-2,-3){50}}
\put(160,90){Appendix \ref{app_h_L_limit}}

\put(20,0){\line(1,0){80}}
\put(20,20){\line(1,0){80}}
\put(20,0){\line(0,1){20}}
\put(100,0){\line(0,1){20}}

\put(33,5){Section \ref{sec_IR}}

\put(150,0){\line(1,0){205}}
\put(150,75){\line(1,0){205}}
\put(150,0){\line(0,1){75}}
\put(355,0){\line(0,1){75}}
\put(150,35){\vector(-2,-1){50}}
\put(155,35){\tiny \cite{K15} $\left\{\begin{array}{ll} 
\text{Lemma } 3.9,
 &\text{Proposition } 5.6,\\
                            \text{Proposition 5.9}, &\text{Proposition
			     6.4 (2),(3)},\\
\text{Lemma 7.4}, &\text{Lemma 7.5},\\
\text{Lemma 7.6 (1),(2)}, &\text{Lemma 7.13 (3),(4)},\\
\text{Lemma 7.18 (3)}, &\text{Proof of Theorem 1.1},\\
\text{Lemma C.3}, &\text{Lemma E.1},\\
\text{Lemma E.2}. &  \end{array}\right.$}
\end{picture}
 \caption{Flow chart of our construction, where the arrows mean major dependency.}\label{fig_flow_chart}
\end{center}
\end{figure}
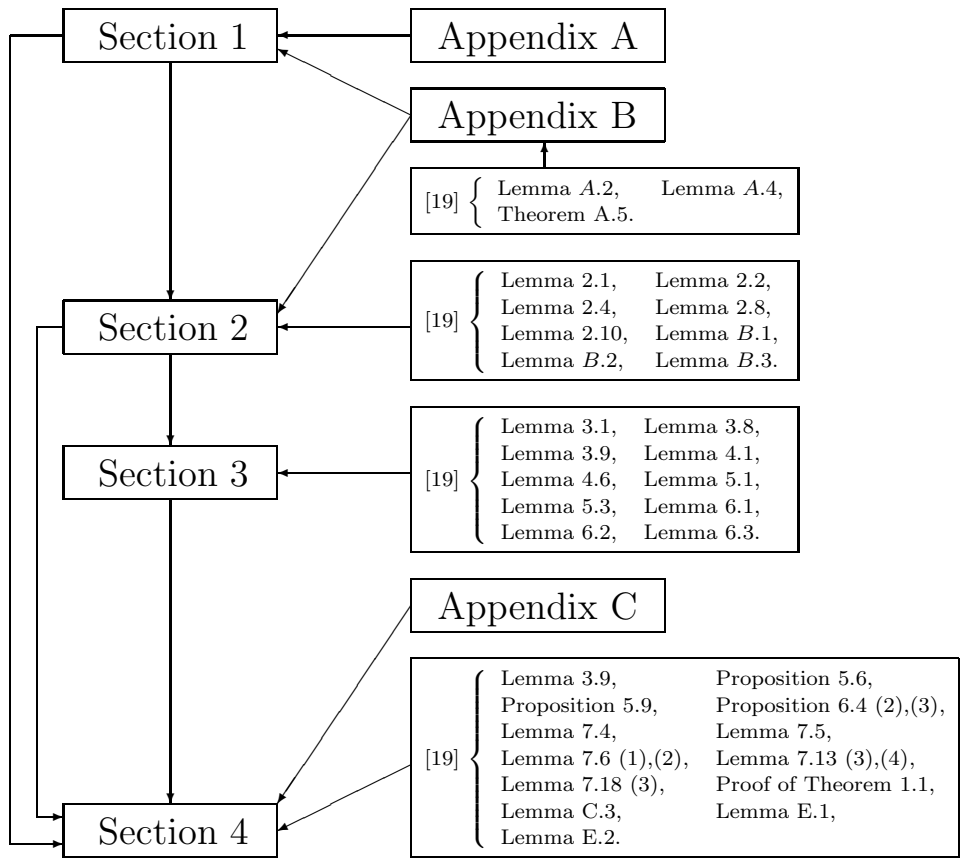

\subsection{Hamiltonians}\label{subsec_hamiltonian}

We let the number $d\ (\ge 2)$ denote the spatial dimension throughout
this paper. For $L\in\N$ we define the $d$-dimensional spatial lattice
$\G(L)$ by $\G(L):=\{0,1,\cdots,L-1\}^d$. In this subsection we
introduce a class of Hamiltonians on the Fermionic Fock space
$F_f(L^2(\G(2L)\times\spin))$. For a technical reason we define the
Hamiltonians in the spatial lattice of even length $2L$. Our Hamiltonians
contain an external magnetic field by means of the Peierls
substitution. The phase $\theta_L:\Z^d\times\Z^d\to\R$ is
assumed to satisfy that
\begin{align}
&\theta_L(\bx,\by)=-\theta_L(\by,\bx)\quad(\text{mod }2\pi),\label{eq_phase_condition_general}\\
&\theta_L\left(\bx+2L\sum_{j=1}^dm_j\be_j,\by\right)=\theta_L(\bx,\by)\quad(\text{mod }2\pi),\notag\\
&(\forall \bx,\by\in \Z^d,m_j\in \Z\ (j=1,2,\cdots,d)).\notag
\end{align}
Here $\be_j$ is the vector of $\Z^d$ whose j-th entry is 1 and the other
entries are 0. The free energy of the system with the periodic boundary
condition is known to be dependent on the magnetic field only by
the flux per plaquette and the flux through large circles winding around the periodic
lattice. Thus, it is important to specify these fluxes in advance. 
Let $\theta_{j,k}\in\R$,
$\eps_l^L\in\{0,1\}$ for 
$j,k,l\in\{1,2,\cdots,d\}$ with $j<k$. We allow $\eps_l^L$ to change its value depending
on $L$ and assume that $\eps^1_l=0$ $(\forall l\in\{1,2,\cdots,d\})$. We
assume that
\begin{align}
&\theta_L(\bx+\be_j,\bx)+\theta_L(\bx+\be_j+\be_k,\bx+\be_j)\notag\\
&+\theta_L(\bx+\be_k,\bx+\be_j+\be_k)+\theta_L(\bx,\bx+\be_k)=(-1)^{x_j+x_k}\theta_{j,k}\quad(\text{mod }2\pi),\label{eq_flux_per_plaquette}\\
&\sum_{m=0}^{2L-1}\theta_L(\bx+(m+1)\be_l,\bx+m\be_l)=\eps_l^L\pi
\quad(\text{mod }2\pi),\label{eq_flux_per_circle}\\
&(\forall \bx=(x_1,x_2,\cdots,x_d)\in\Z^d,j,k,l\in\{1,2,\cdots,d\}\text{
 with }j<k).\notag
\end{align}
The condition \eqref{eq_flux_per_plaquette} determines the flux per plaquette.
When $d=2$, the condition \eqref{eq_flux_per_plaquette} requires the flux per
plaquette to be arranged like a chessboard as pictured in Figure
\ref{fig_flux_per_plaquette}.
The condition \eqref{eq_flux_per_circle} states that the flux through
the closed contour parallel to $\be_l$ is $\eps_l^L\pi$ (mod $2\pi$) for
any $l\in \{1,2,\cdots,d\}$.

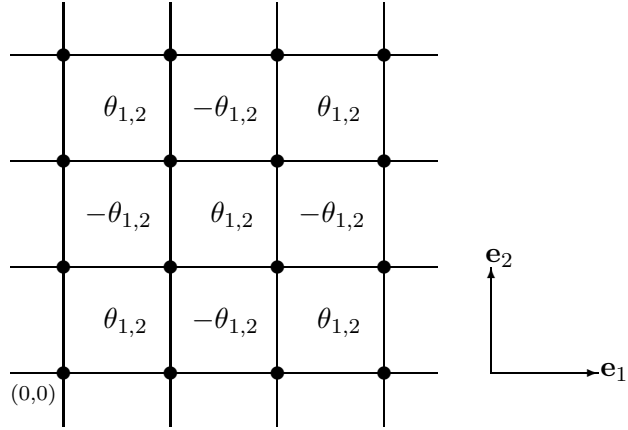
\begin{figure}
\begin{center}
\begin{picture}(230,160)(0,0)

\put(0,10){\tiny (0,0)}

\put(20,20){\circle*{5}}
\put(60,20){\circle*{5}}
\put(100,20){\circle*{5}}
\put(140,20){\circle*{5}}
\put(20,60){\circle*{5}}
\put(60,60){\circle*{5}}
\put(100,60){\circle*{5}}
\put(140,60){\circle*{5}}
\put(20,100){\circle*{5}}
\put(60,100){\circle*{5}}
\put(100,100){\circle*{5}}
\put(140,100){\circle*{5}}
\put(20,140){\circle*{5}}
\put(60,140){\circle*{5}}
\put(100,140){\circle*{5}}
\put(140,140){\circle*{5}}

\put(0,20){\line(1,0){160}}
\put(0,60){\line(1,0){160}}
\put(0,100){\line(1,0){160}}
\put(0,140){\line(1,0){160}}
\put(20,0){\line(0,1){160}}
\put(60,0){\line(0,1){160}}
\put(100,0){\line(0,1){160}}
\put(140,0){\line(0,1){160}}

\put(115,37){\small$\theta_{1,2}$}
\put(108,77){\small$-\theta_{1,2}$}
\put(115,117){\small$\theta_{1,2}$}

\put(68,37){\small$-\theta_{1,2}$}
\put(75,77){\small$\theta_{1,2}$}
\put(68,117){\small$-\theta_{1,2}$}

\put(35,37){\small$\theta_{1,2}$}
\put(28,77){\small$-\theta_{1,2}$}
\put(35,117){\small$\theta_{1,2}$}

\put(180,20){\vector(1,0){40}}
\put(180,20){\vector(0,1){40}}
\put(221,19){\small$\be_1$}
\put(178,62){\small$\be_2$}
\end{picture}
 \caption{The chessboard-like flux configuration.}\label{fig_flux_per_plaquette}
\end{center}
\end{figure}

Our analysis will be made on the quantitative assumption that 
\begin{align}
\frac{1}{2}\max_{m\in\{1,2,\cdots,d\}}\left(\sum_{j=1}^{m-1}|1+e^{i\theta_{j,m}}|+\sum_{j=m+1}^d|1+e^{i\theta_{m,j}}|\right)<1.\label{eq_flux_quantitative_assumption}
\end{align}

Let $t_j\in\R_{>0}$ $(j=1,2,\cdots,d)$ be the hopping amplitudes. The 
free Hamiltonian $\sH_0$ is defined by
\begin{align*}
\sH_0:=\sum_{\bx,\by\in\G(2L)}\sum_{\s\in \spin}
1_{\exists j\in \{1,2,\cdots,d\}\text{ s.t. }\bx-\by=\be_j\text{ or }-\be_j
\text{ in
 }(\Z/2L\Z)^d}t_je^{i\theta_L(\bx,\by)}\psi_{\bx\s}^*\psi_{\by\s},
\end{align*}
where $\psi_{\bx\s}$ is the annihilation operator and $\psi_{\bx\s}^*$
is its adjoint operator called creation operator. The function $1_P$
returns 1 if a proposition $P$ is true, 0 otherwise.
For any $\bx\in\Z^d$ we
define $\psi_{\bx\s}$, $\psi_{\bx\s}^*$ by identifying $\bx$ with the site
$\bx'$ of $\G(2L)$ satisfying that $\bx'=\bx$ in $(\Z/2L\Z)^d$. The condition
\eqref{eq_phase_condition_general} ensures that $\sH_0$ is self-adjoint.

To define the interacting part, we introduce the kernel functions. 
For any set $A,B$ let $\Map(A,B)$ denote the set of maps from $A$ to 
$B$. Take $n_v\in\N$, $N_v\in\N_{\ge 2}$. We assume
that $V_0^L\in\Map(\C^{n_v},\C)$, 
$V_m^L\in \Map(\C^{n_v},\Map((\Z^d\times \spin)^m\times (\Z^d\times
\spin)^m,\C))$ $(m=1,2,\cdots,N_v)$ satisfy the following conditions.
\begin{enumerate}[(i)]
\item\label{item_linearity} 
\begin{align*}
&\bU\mapsto V_0^L(\bU):\C^{n_v}\to\C,\\ 
&\bU\mapsto V_m^L(\bU):\C^{n_v}\to  \Map((\Z^d\times \spin)^m\times (\Z^d\times
\spin)^m,\C)
\end{align*}
are linear.
\item\label{item_bi_anti_symmetric}
\begin{align}
&V_m^L(\bU)((X_1,X_2,\cdots,X_m),(Y_1,Y_2,\cdots,Y_m))\label{eq_bi_anti_symmetric}\\
&=\sgn(\eta)\sgn(\xi)\notag\\
&\quad\cdot V_m^L(\bU)((X_{\eta(1)},X_{\eta(2)},\cdots,X_{\eta(m)}),
                          (Y_{\xi(1)},Y_{\xi(2)},\cdots,Y_{\xi(m)})),\notag\\
&(\forall X_j,Y_j\in \Z^d\times\spin\
 (j=1,2,\cdots,m),\bU\in\C^{n_v},\eta,\xi\in\S_m),\notag
\end{align}
where $\S_m$ is the set of all permutations over $\{1,2,\cdots,m\}$.
\item\label{item_spin_parity}
\begin{align}
&V_m^L(\bU)(((\bx_1,\s_1),\cdots,(\bx_m,\s_m)),((\by_1,\tau_1),\cdots,(\by_m,\tau_m)))\label{eq_spin_parity}\\
&=(-1)^{\sum_{j=1}^m(1_{\s_j=\ua}+1_{\tau_j=\ua})}\notag\\
&\quad\cdot
 V_m^L(\bU)(((\bx_1,\s_1),\cdots,(\bx_m,\s_m)),((\by_1,\tau_1),\cdots,(\by_m,\tau_m))),\notag\\
&(\forall (\bx_j,\s_j),(\by_j,\tau_j)\in \Z^d\times\spin\
 (j=1,2,\cdots,m),\bU\in\C^{n_v}).\notag
\end{align}
\item\label{item_spin_reflection}
\begin{align}
&V_m^L(\bU)(((\bx_1,\s_1),\cdots,(\bx_m,\s_m)),((\by_1,\tau_1),\cdots,(\by_m,\tau_m)))\label{eq_spin_reflection}\\
&=V_m^L(\bU)(((\bx_1,-\s_1),\cdots,(\bx_m,-\s_m)),((\by_1,-\tau_1),\cdots,(\by_m,-\tau_m))),\notag\\
&(\forall (\bx_j,\s_j),(\by_j,\tau_j)\in \Z^d\times\spin\
 (j=1,2,\cdots,m),\bU\in\C^{n_v}).\notag
\end{align}
\item\label{item_periodicity}
\begin{align}
&V_m^L(\bU)(((\bx_1^1,\s_1),\cdots,(\bx_m^1,\s_m)),((\by_1^1,\tau_1),\cdots,(\by_m^1,\tau_m)))\label{eq_periodicity}\\
&=V_m^L(\bU)(((\bx_1^2,\s_1),\cdots,(\bx_m^2,\s_m)),((\by_1^2,\tau_1),\cdots,(\by_m^2,\tau_m))),\notag\\
&(\forall \bx_j^1,\bx_j^2,\by_j^1,\by_j^2\in \Z^d \text{ satisfying
 }\bx_j^1=\bx_j^2,\by_j^1=\by_j^2\text{ in }(\Z/2L\Z)^d
,\notag\\
&\quad \s_j,\tau_j\in\spin\
 (j=1,2,\cdots,m),\bU\in\C^{n_v}).\notag
\end{align}
\item\label{item_translation}
\begin{align}
&V_m^L(\bU)(((\bx_1+2\bz,\s_1),\cdots,(\bx_m+2\bz,\s_m)),\label{eq_translation}\\
&\qquad\qquad ((\by_1+2\bz,\tau_1),\cdots,(\by_m+2\bz,\tau_m)))\notag\\
&=V_m^L(\bU)(((\bx_1,\s_1),\cdots,(\bx_m,\s_m)),((\by_1,\tau_1),\cdots,(\by_m,\tau_m))),\notag\\
&(\forall (\bx_j,\s_j),(\by_j,\tau_j)\in \Z^d\times\spin\
 (j=1,2,\cdots,m),\bz\in\Z^d,\bU\in\C^{n_v}).\notag
\end{align}
\item\label{item_inversion}
\begin{align}
&V_m^L(\bU)(((\bx_1,\s_1),\cdots,(\bx_m,\s_m)),((\by_1,\tau_1),\cdots,(\by_m,\tau_m)))\label{eq_inversion}\\
&=V_m^L(\bU)(((-\bx_1,\s_1),\cdots,(-\bx_m,\s_m)),((-\by_1,\tau_1),\cdots,(-\by_m,\tau_m))),\notag\\
&(\forall (\bx_j,\s_j),(\by_j,\tau_j)\in \Z^d\times\spin\
 (j=1,2,\cdots,m),\bU\in\C^{n_v}).\notag
\end{align}
\item\label{item_U1_invariance}
For any $\theta\in \Map(\Z^d,\R)$ satisfying that
     $\theta(\bx)=\theta(\by)$ ($\forall \bx,\by\in\Z^d$ with $\bx=\by$
     in $(\Z/2L\Z)^d$),
\begin{align}
&V_m^L(\bU)(((\bx_1,\s_1),\cdots,(\bx_m,\s_m)),((\by_1,\tau_1),\cdots,(\by_m,\tau_m)))\label{eq_U1_invariance}\\
&=e^{i(\sum_{j=1}^m\theta(\bx_j)-\sum_{j=1}^m\theta(\by_j))}\notag\\
&\quad \cdot V_m^L(\bU)(((\bx_1,\s_1),\cdots,(\bx_m,\s_m)),((\by_1,\tau_1),\cdots,(\by_m,\tau_m))),\notag\\
&(\forall (\bx_j,\s_j),(\by_j,\tau_j)\in \Z^d\times\spin\
 (j=1,2,\cdots,m),\bU\in\C^{n_v}).\notag
\end{align}
\item\label{item_hermiticity}
\begin{align}
& V_0^L(\bU)=\overline{V_0^L(\overline{\bU})},\quad V_m^L(\bU)(\bX,\bY)=\overline{V_m^L(\overline{\bU})(\bY,\bX)},
\label{eq_hermiticity}\\
&(\forall \bX,\bY\in (\Z^d\times\spin)^m,\bU\in\C^{n_v}).\notag
\end{align}
\item\label{item_particle_hole}
\begin{align}
&V_m^L(\bU)(\bX,\bY)=(-1)^m\sum_{l=0}^{N_v-m}\left(\begin{array}{c} m+l\\ l \end{array}
 \right)^2l!\sum_{(\bz_j,\eta_j)\in\G(2L)\times\spin\atop
 (j=1,2,\cdots,l)}\label{eq_particle_hole}\\
&\qquad\qquad\qquad
\quad\cdot
 V_{m+l}^L(\bU)((\bX,((\bz_1,\eta_1),(\bz_2,\eta_2),\cdots,(\bz_l,\eta_l))),\notag\\
&\qquad\qquad\qquad\qquad\qquad\qquad
                (((\bz_l,\eta_l),(\bz_{l-1},\eta_{l-1}),\cdots,(\bz_1,\eta_1)),\bY)),\notag\\
&(\forall \bX,\bY\in (\Z^d\times\spin)^m,\bU\in\C^{n_v}).\notag
\end{align}
\item\label{item_infinite_volume_limit}
For any $j\in\{1,2,\cdots,n_v\}$, $\bX,\bY\in(\Z^d\times\spin)^m$, 
\begin{align*}
\lim_{L\to\infty\atop L\in\N}\frac{1}{L^d}\frac{\partial}{\partial
 U_j}V_0^L(\bU),\quad \lim_{L\to\infty\atop L\in\N}\frac{\partial}{\partial
 U_j}V_m^L(\bU)(\bX,\bY)
\end{align*}
converge.
\item\label{item_decay_bound} For any $c\in\R_{\ge 0}$,
\begin{align}
&\sup_{L\in\N}\sup_{\bU\in\C^{n_v}\text{ with }|U_j|\le 1\atop
 (j=1,2,\cdots,n_v)}\sup_{p,q\in\{1,2,\cdots,2m-1\}\atop
 k\in\{1,2,\cdots,d\}}\sup_{(\bx,\s)\in\G(2L)\times
 \spin}\sum_{(\bx_j,\s_j)\in\G(2L)\times\spin\atop
 (j=1,2,\cdots,2m-1)}\label{eq_decay_bound}\\
&\cdot\left(\frac{L}{\pi}|e^{i\frac{\pi}{L}\<\bx-\bx_q,\be_k\>}-1|+1\right)e^{\sum_{j=1}^d(c\frac{L}{\pi}|e^{i\frac{\pi}{L}\<\bx-\bx_p,\be_j\>}-1|)^{1/2}}\notag\\
&\cdot|V_m^L(\bU)(((\bx,\s),(\bx_1,\s_1),\cdots,(\bx_{m-1},\s_{m-1})),\notag\\
&\qquad\qquad\quad ((\bx_m,\s_m),(\bx_{m+1},\s_{m+1}),\cdots,(\bx_{2m-1},\s_{2m-1})))|<\infty,\notag       
\end{align}
where $\<\cdot,\cdot\>$ is the standard inner product.
\end{enumerate}
For $\bU\in\R^{n_v}$ we define the interacting part of the Hamiltonian
     by 
\begin{align}
\sV:=&\sum_{m=0}^{N_v}\sum_{(\bx_j,\s_j),(\by_j,\tau_j)\in\G(2L)\times\spin\atop(j=1,2,\cdots,m)}\label{eq_one_band_interaction}\\
&\cdot V_m^L(\bU)(((\bx_1,\s_1),\cdots,(\bx_m,\s_m)),
                              ((\by_1,\tau_1),\cdots,(\by_m,\tau_m)))\notag\\
&\cdot \psi_{\bx_1\s_1}^*\cdots
 \psi_{\bx_m\s_m}^*\psi_{\by_1\tau_1}\cdots\psi_{\by_m\tau_m}.\notag
\end{align}
By the property \eqref{eq_hermiticity}
the operator $\sV$ is self-adjoint. The Hamiltonian $\sH$ is defined by
$\sH:=\sH_0+\sV$. Note that $\sH$ is a self-adjoint operator in the
Fermionic Fock space $F_f(L^2(\G(2L)\times \spin))$. 

The main results of
this paper concern analyticity and convergent properties of the
free energy density 
$$
-\frac{1}{\beta (2L)^d}\log(\Tr e^{-\beta \sH}),
$$                                
where $\beta\in \R_{>0}$ is the inverse temperature. 
Since the phase is an important parameter, we sometimes write
$\sH_0(\theta_L)$, $\sH(\theta_L)$ in place of $\sH_0$, $\sH$ respectively.
The many-electron
system is half-filled in the following sense.
\begin{lemma}\label{lem_half_filled}
For any $(\bx,\s)\in\G(2L)\times \spin$,
$$
\frac{\Tr(e^{-\beta \sH}\psi_{\bx\s}^*\psi_{\bx\s})}{\Tr e^{-\beta \sH}}=\frac{1}{2}.
$$
\end{lemma}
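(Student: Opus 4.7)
The plan is to prove half-filling by exhibiting a particle-hole symmetry of $\sH$ and then using invariance of the Gibbs trace. Concretely, define the unitary $U$ on $F_f(L^2(\G(2L)\times\spin))$ by
\[
U\psi_{\bx\s}U^{-1} = (-1)^{x_1+\cdots+x_d}\psi_{\bx\s}^*,\qquad
U\psi_{\bx\s}^*U^{-1} = (-1)^{x_1+\cdots+x_d}\psi_{\bx\s}.
\]
Once one shows $U\sH U^{-1} = \sH$, the cyclicity of the trace gives $\Tr(e^{-\beta\sH}\psi_{\bx\s}^*\psi_{\bx\s}) = \Tr(e^{-\beta\sH}\psi_{\bx\s}\psi_{\bx\s}^*) = \Tr(e^{-\beta\sH}) - \Tr(e^{-\beta\sH}\psi_{\bx\s}^*\psi_{\bx\s})$, since $(-1)^{2(x_1+\cdots+x_d)}=1$, and the claim $=1/2$ follows at once.

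For the kinetic term I would check $U\sH_0 U^{-1}=\sH_0$ directly: substituting the definition gives a sign $(-1)^{x_1+\cdots+x_d+y_1+\cdots+y_d}$, which equals $-1$ whenever $\bx-\by=\pm\be_j$ in $(\Z/2L\Z)^d$; that minus sign is then absorbed by the anticommutator $\psi_{\bx\s}\psi_{\by\s}^* = -\psi_{\by\s}^*\psi_{\bx\s}$ (the diagonal contribution vanishes because nearest neighbours are distinct). After relabelling $\bx \leftrightarrow \by$ one is left with $\sum t_j e^{i\theta_L(\by,\bx)}\psi_{\bx\s}^*\psi_{\by\s}$, and since $\theta_L(\by,\bx)=-\theta_L(\bx,\by)\bmod 2\pi$ by \eqref{eq_phase_condition_general}, this coincides with $\sH_0^* = \sH_0$.

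For the interaction I would expand $U\sV U^{-1}$ by applying $U$ to each factor in \eqref{eq_one_band_interaction}; the $2m$ signs $(-1)^{x_i+\cdots}$ drop out because condition \eqref{item_U1_invariance}, applied with $\theta(\bx)=\pi(x_1+\cdots+x_d)$, forces $V_m^L$ to vanish unless these signs cancel. What remains is the anti-normal-ordered monomial $\psi_{\bx_1\s_1}\cdots\psi_{\bx_m\s_m}\psi_{\by_1\tau_1}^*\cdots\psi_{\by_m\tau_m}^*$. I would then invoke the reordering lemma of Appendix \ref{app_normal_order} to bring each such monomial back to normal order; this introduces precisely the contracted lower-degree terms that appear on the right-hand side of \eqref{eq_particle_hole}, with the combinatorial factor $\binom{m+l}{l}^2 l!$ counting the number of ways to pair $l$ creation with $l$ annihilation operators through anticommutator contractions, and with the sign $(-1)^m$ arising from reversing the order of $\psi\psi^*$ pairs. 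Thus identity \eqref{eq_particle_hole}, together with the symmetries \eqref{eq_bi_anti_symmetric} and \eqref{item_periodicity}, is tailored to make $U\sV U^{-1}=\sV$ hold term by term.

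The main obstacle is the bookkeeping in the third step: verifying that the combinatorial coefficients produced by the normal-ordering procedure agree exactly with those prescribed by \eqref{eq_particle_hole}. I would handle this by working in the language of Grassmann/operator contractions and double-checking the signs coming from the anticommutation of $2m$ pairs. Once condition \eqref{item_particle_hole} is recognised as the precise algebraic statement of particle-hole invariance after normal-ordering, the remainder of the argument is immediate from the trace identity above.
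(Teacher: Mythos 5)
Your overall plan — particle-hole transform plus trace cyclicity — is the same as the paper's, but a key claim in your proposal is wrong: the unitary $U$ does \emph{not} leave $\sH$ invariant, and the gap is already visible in the kinetic term. After the anticommutation and the relabelling $\bx\leftrightarrow\by$ you arrive at $\sum t_j e^{-i\theta_L(\bx,\by)}\psi_{\bx\s}^*\psi_{\by\s}=\sH_0(-\theta_L)$. You then assert that this ``coincides with $\sH_0^*=\sH_0$''. It does not. The adjoint $\sH_0^*$ is indeed equal to $\sH_0$, but that self-adjointness is realized by \emph{simultaneously} conjugating the phase \emph{and} transposing the creation/annihilation pair; here only the phase is conjugated, so what remains is $\sH_0(-\theta_L)$, which differs from $\sH_0(\theta_L)$ unless $2\theta_L\equiv 0\bmod 2\pi$. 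For a generic chessboard flux configuration $\theta_{j,k}$ this fails, so $U\sH_0U^{-1}\ne\sH_0$. The same happens for the interaction: if you actually carry out the reordering via Lemma~\ref{lem_normal_order} together with condition \eqref{eq_particle_hole}, you are forced at one point to pass through the hermiticity property \eqref{eq_hermiticity}, which inserts a complex conjugation; the result is $\overline{\sV}$, not $\sV$. So the combinatorial bookkeeping you deferred is not merely tedious — it genuinely does not close to $\sV$.

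What the paper actually proves is $A\sH_0(\theta_L)A^*=\sH_0(-\theta_L)$ and $A\sV A^*=\overline{\sV}$, i.e.\ the particle-hole transform sends $\sH$ to its ``entrywise complex conjugate'' $\sH_0(-\theta_L)+\overline{\sV}$. The proof then needs one more step that your argument omits: because $e^{-\beta\sH}$ is self-adjoint and positive, $\Tr e^{-\beta\sH}$ and $\Tr(e^{-\beta\sH}\psi_{\bx\s}^*\psi_{\bx\s})$ are real, hence equal to the traces for the complex-conjugated Hamiltonian $\sH_0(-\theta_L)+\overline{\sV}$. Only with this observation can one pass from
\begin{align*}
\frac{\Tr(e^{-\beta\sH}\psi_{\bx\s}^*\psi_{\bx\s})}{\Tr e^{-\beta\sH}}
=1-\frac{\Tr(e^{-\beta(\sH_0(-\theta_L)+\overline{\sV})}\psi_{\bx\s}^*\psi_{\bx\s})}{\Tr e^{-\beta(\sH_0(-\theta_L)+\overline{\sV})}}
\end{align*}
to the claimed value $1/2$. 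Without recognizing the change $\theta_L\mapsto -\theta_L$, $\sV\mapsto\overline{\sV}$ and supplying this reality argument, the proof does not go through.
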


\begin{proof}
Let $\O_{2L}$ denote the vacuum of the Fock space 
$F_f(L^2(\G(2L)\times \spin))$. Define the transform $A$ on $F_f(L^2(\G(2L)\times \spin))$ by
\begin{align*}
&A\O_{2L}:=\prod_{\bx\in\G(2L)}(\psi_{\bx\ua}^*\psi_{\bx\da}^*)\O_{2L},\\
&A(\psi_{\bx_1\s_1}^*\cdots\psi_{\bx_n\s_n}^*\O_{2L})\\
&:=(-1)^{\sum_{j=1}^n\sum_{k=1}^d\<\bx_j,\be_k\>}\psi_{\bx_1\s_1}\cdots\psi_{\bx_n\s_n}\prod_{\bx\in\G(2L)}(\psi_{\bx\ua}^*\psi_{\bx\da}^*)
\O_{2L}
\end{align*}
for any $(\bx_j,\s_j)\in\G(2L)\times\spin$ $(j=1,2,\cdots,n)$,
 and by linearity. We can see that $A$ is a unitary transform and
 $A\sH_0(\theta_L)A^*=\sH_0(-\theta_L)$. Moreover, by using the properties
 \eqref{eq_U1_invariance}, \eqref{eq_particle_hole},
 \eqref{eq_hermiticity}, \eqref{eq_bi_anti_symmetric}
 and Lemma
 \ref{lem_normal_order} proved in Appendix \ref{app_normal_order} in
 this order,
\begin{align*}
&A\sV A^*\\
&=\sum_{m=0}^{N_v}\sum_{(\bx_j,\s_j),(\by_j,\tau_j)\in\G(2L)\times
 \spin\atop
 (j=1,2,\cdots,m)}(-1)^{\sum_{j=1}^m\sum_{k=1}^d\<\bx_j+\by_j,\be_k\>}\\
&\quad\cdot V_m^L(\bU)(((\bx_1,\s_1),\cdots,(\bx_m,\s_m)),((\by_1,\tau_1),\cdots,(\by_m,\tau_m)))\\
&\quad\cdot\psi_{\bx_1\s_1}\cdots\psi_{\bx_m\s_m}\psi_{\by_1\tau_1}^*\cdots\psi_{\by_m\tau_m}^*\\
&=\sum_{m=0}^{N_v}\sum_{l=0}^{N_v-m}\sum_{(\bx_j,\s_j),(\by_j,\tau_j)\in\G(2L)\times
 \spin\atop
 (j=1,2,\cdots,m)}\sum_{(\bz_j,\eta_j)\in\G(2L)\times\spin\atop
 (j=1,2,\cdots,l)}
(-1)^m
\left(\begin{array}{c} m+l\\ l\end{array}\right)^2l!\\
&\quad\cdot
 V_{m+l}^L(\bU)((((\bx_1,\s_1),\cdots,(\bx_m,\s_m)),((\bz_1,\eta_1),\cdots,(\bz_l,\eta_l))),\\
&\qquad\qquad\qquad\quad
 (((\bz_l,\eta_l),\cdots,(\bz_1,\eta_1)),((\by_1,\tau_1),\cdots,(\by_m,\tau_m))))\\
&\quad\cdot\psi_{\bx_1\s_1}\cdots\psi_{\bx_m\s_m}\psi_{\by_1\tau_1}^*\cdots\psi_{\by_m\tau_m}^*\\
&=\sum_{m=0}^{N_v}\sum_{l=0}^{m}\sum_{(\bx_j,\s_j),(\by_j,\tau_j)\in\G(2L)\times \spin\atop
 (j=1,2,\cdots,m-l)}\sum_{(\bz_j,\eta_j)\in\G(2L)\times\spin\atop
 (j=1,2,\cdots,l)}
(-1)^{m-l}\left(\begin{array}{c} m\\ l\end{array}\right)^2l!\\
&\quad\cdot
\overline{V_{m}^L(\bU)((((\by_1,\tau_1),\cdots,(\by_{m-l},\tau_{m-l})),((\bz_1,\eta_1),\cdots,(\bz_l,\eta_l))),}\\
&\qquad\qquad\qquad
 \overline{(((\bz_l,\eta_l),\cdots,(\bz_1,\eta_1)),((\bx_1,\s_1),\cdots,(\bx_{m-l},\s_{m-l}))))}\\
&\quad\cdot\psi_{\bx_1\s_1}\cdots\psi_{\bx_{m-l}\s_{m-l}}\psi_{\by_1\tau_1}^*\cdots\psi_{\by_{m-l}\tau_{m-l}}^*\\
&=\overline{\sV},
\end{align*}
where we set
\begin{align*}
\overline{\sV}:=&\sum_{m=0}^{N_v}\sum_{(\bx_j,\s_j),(\by_j,\tau_j)\in\G(2L)\times \spin\atop
 (j=1,2,\cdots,m)}\\
&\cdot
\overline{V_{m}^L(\bU)((((\bx_1,\s_1),\cdots,(\bx_{m},\s_{m})),((\by_1,\tau_1),\cdots,(\by_m,\tau_m)))}\\
& \cdot\psi_{\bx_1\s_1}^*\cdots\psi_{\bx_m\s_m}^*\psi_{\by_1\tau_1}\cdots\psi_{\by_m\tau_m}.
\end{align*}
Thus, we have for any $(\bx,\s)\in\G(2L)\times\spin$ that
\begin{align*}
\frac{\Tr(e^{-\beta \sH}\psi_{\bx\s}^*\psi_{\bx\s})}{\Tr e^{-\beta
 \sH}}&=\frac{\Tr(e^{-\beta(\sH_0(-\theta_L)+\overline{\sV})}A\psi_{\bx\s}^*\psi_{\bx\s}A^*)}{\Tr
 e^{-\beta(\sH_0(-\theta_L)+\overline{\sV})}}\\
&=1-\frac{\Tr(e^{-\beta (\sH_0(-\theta_L)+\overline{\sV})}\psi_{\bx\s}^*\psi_{\bx\s})}{\Tr e^{-\beta
 (\sH_0(-\theta_L)+\overline{\sV})}}.
\end{align*}
Then, by considering that 
\begin{align*}
&\Tr e^{-\beta(\sH_0(-\theta_L)+\overline{\sV})}=\overline{\Tr e^{-\beta
 \sH}}=\Tr e^{-\beta \sH},\\
&\Tr
 (e^{-\beta(\sH_0(-\theta_L)+\overline{\sV})}\psi_{\bx\s}^*\psi_{\bx\s})=\overline{\Tr
 (e^{-\beta \sH}\psi_{\bx\s}^*\psi_{\bx\s})}=\Tr
 (e^{-\beta \sH}\psi_{\bx\s}^*\psi_{\bx\s}),
\end{align*}
we obtain the claimed equality.
\end{proof}

\begin{remark}
There was unfortunately a flaw in the definition of the unitary
 transform in \cite[\mbox{Remark 1.4}]{K15} which was intended to
 demonstrate a proof of the same claim as the above lemma. By using the
 unitary transform $A$ we can correct \cite[\mbox{Remark 1.4}]{K15}.
It is simpler to confirm the equalities
 $A\sH_0(\theta_L)A^*=\sH_0(-\theta_L)$, $A\sV A^*=\sV$ for the free
 Hamiltonian $\sH_0(\theta_L)$ and the on-site interaction $\sV$ of
 \cite{K15}. Then, the conclusion of \cite[\mbox{Remark 1.4}]{K15}
 follows from the same argument as the last part of the above proof.
\end{remark}

\subsection{Examples}\label{subsec_examples}

Let us see that the interaction $\sV$ covers some relevant models of
interacting electrons. To shorten formulas, let $v_m(c)$ denote the left-hand side of the
inequality \eqref{eq_decay_bound} for $m\in\{1,2,\cdots,N_v\}$
and $c\in\R_{\ge 0}$. Moreover, set 
\begin{align*}
v_0:=\sup_{L\in\N}\sup_{\bU\in\C^{n_v}\text{ with }\atop |U_j|\le 1(j=1,2,\cdots,n_v)}\frac{1}{L^d}|V_0^L(\bU)|.
\end{align*}
\begin{example}[The on-site interaction]\label{eg_on_site}
Let $g\in \Map(\{1,-1\}^d,\{1,2,$ $3,\cdots,2^d\})$. With the coupling
 constants $\bU_o=(U_o(1),U_o(2),U_o(3),\cdots,$ $U_o(2^d))$ the on-site interaction
 $\sV_o$ is defined by
\begin{align*}
&\sV_o:=\\
&\sum_{\bx\in\G(2L)}U_o(g((-1)^{x_1},(-1)^{x_2},\cdots,(-1)^{x_d}))\left(\psi_{\bx\ua}^*\psi_{\bx\ua}-\frac{1}{2}\right)
\left(\psi_{\bx\da}^*\psi_{\bx\da}-\frac{1}{2}\right).
\end{align*}
The operator $\sV_o$ is equivalently written as follows. 
\begin{align*}
\sV_o=&\sum_{X_j,Y_j\in\G(2L)\times\spin\atop
 (j=1,2)}V_{o,2}^L(\bU_o)((X_1,X_2),(Y_1,Y_2))\psi_{X_1}^*\psi_{X_2}^*\psi_{Y_1}\psi_{Y_2}\\
&+\sum_{X,Y\in\G(2L)\times
 \spin}V_{o,1}^L(\bU_o)(X,Y)\psi_X^*\psi_Y+V_{o,0}^L(\bU_o)
\end{align*}
with
\begin{align*}
&V_{o,2}^L(\bU_o)(((\bx_1,\s_1),(\bx_2,\s_2)),((\by_1,\tau_1),(\by_2,\tau_2)))\\
&:=\frac{1}{4}U_o(g((-1)^{x_{1,1}},(-1)^{x_{1,2}},\cdots,(-1)^{x_{1,d}}))1_{\bx_1=\bx_2=\by_1=\by_2\text{
 in }(\Z/2L\Z)^d}\\
&\quad\cdot (1_{(\s_1,\s_2)=(\ua,\da)}-1_{(\s_1,\s_2)=(\da,\ua)})
          (1_{(\tau_1,\tau_2)=(\da,\ua)}-1_{(\tau_1,\tau_2)=(\ua,\da)}),\\
&V_{o,1}^L(\bU_o)((\bx,\s),(\by,\tau))\\
&:=-\frac{1}{2}U_o(g((-1)^{x_1},(-1)^{x_2},\cdots,(-1)^{x_d}))1_{(\bx,\s)=(\by,\tau)\text{
 in }(\Z/2L\Z)^d\times \spin},\\
&V_{o,0}^L(\bU_o):=\frac{L^d}{4}\sum_{\bx\in\{1,-1\}^d}U_o(g(\bx)).
\end{align*}
We can check that the kernels $V_{o,j}^L$ $(j=0,1,2)$ satisfy the
 conditions
 \eqref{item_linearity}, \eqref{item_bi_anti_symmetric}, $\cdots$, \eqref{item_infinite_volume_limit}
 with $N_v=2$, $n_v=2^d$. We can estimate the factors $v_0$, $v_m(c)$
 $(m=1,2)$ for this interaction as follows.
\begin{align*}
v_2(c)\le\frac{1}{2},\quad v_1(c)\le\frac{1}{2},\quad v_0\le 2^{d-2}.
\end{align*}
The operator $\sV_o-V_{o,0}^L(\bU_o)$ is also one example of the
 interaction $\sV$ and it is equal to the interaction treated in
 \cite{K15} when $d=2$ and $g$ is bijective.
\end{example}

\begin{example}[The density-density interaction]\label{eg_density_density}
Let $f_d$ be a real-valued continuous function on $\R^d$ satisfying that 
\begin{align*}
&f_d(\b0)=0,\quad |f_d(\bx)|\le c_1e^{-c_2\sum_{j=1}^d|x_j|},\ (\forall
 \bx\in\R^d),
\end{align*}
where $c_1,c_2\in\R_{>0}$ are fixed constants. We define the
 periodic function $f_d^L$ on $\R^d$ by
\begin{align*}
&f_{d}^L(\bx):=f_d\left(\frac{L}{\pi}|e^{i\frac{\pi}{L}x_1}-1|,\frac{L}{\pi}|e^{i\frac{\pi}{L}x_2}-1|,\cdots,\frac{L}{\pi}|e^{i\frac{\pi}{L}x_d}-1|\right)
\end{align*}
and the density-density interaction $\sV_d$ by
\begin{align*}
\sV_d:=U_d\sum_{\bx,\by\in\G(2L)}f_d^L(\bx-\by)(\psi_{\bx\ua}^*\psi_{\bx\ua}+\psi_{\bx\da}^*\psi_{\bx\da}-1)
(\psi_{\by\ua}^*\psi_{\by\ua}+\psi_{\by\da}^*\psi_{\by\da}-1),
\end{align*}
where $U_d$ is the coupling constant.  
We can write as follow.
\begin{align*}
\sV_d=&\sum_{X_j,Y_j\in\G(2L)\times\spin\atop
 (j=1,2)}V_{d,2}^L(U_d)((X_1,X_2),(Y_1,Y_2))\psi_{X_1}^*\psi_{X_2}^*\psi_{Y_1}\psi_{Y_2}\\
&+\sum_{X,Y\in\G(2L)\times
 \spin}V_{d,1}^L(U_d)(X,Y)\psi_X^*\psi_Y+V_{d,0}^L(U_d)
\end{align*}
with the bi-anti-symmetric kernels $V_{d,j}^L$ $(j=0,1,2)$ defined by
\begin{align*}
&V_{d,2}^L(U_d)(((\bx_1,\s_1),(\bx_2,\s_2)),((\by_1,\tau_1),(\by_2,\tau_2)))\\
&:=\frac{1}{4}U_d
 f_d^L(\bx_1-\bx_2)\sum_{\eta,\xi\in\S_2}\sgn(\eta)\sgn(\xi)
1_{(\bx_{\eta(1)},\s_{\eta(1)})=(\by_{\xi(2)},\tau_{\xi(2)})
\text{
 in }(\Z/2L\Z)^d\times \spin}\\
&\qquad\qquad\qquad\qquad\qquad\qquad\qquad\qquad\cdot 1_{(\bx_{\eta(2)},\s_{\eta(2)})=(\by_{\xi(1)},\tau_{\xi(1)})\text{
 in }(\Z/2L\Z)^d\times \spin},\\
&V_{d,1}^L(U_d)((\bx,\s),(\by,\tau)):=-2U_d1_{(\bx,\s)=(\by,\tau)\text{
 in }(\Z/2L\Z)^d\times \spin}\sum_{\bz\in\G(2L)}f_d^L(\bz),\\
&V_{d,0}^L(U_d):=(2L)^dU_d\sum_{\bz\in\G(2L)}f_d^L(\bz).
\end{align*}
The kernels $V_{d,j}^L$ $(j=0,1,2)$ satisfy the
 conditions
 \eqref{item_linearity}, \eqref{item_bi_anti_symmetric}, $\cdots$, \eqref{item_infinite_volume_limit}
 with $N_v=2$, $n_v=1$. The factors $v_0$, $v_m(c)$
 $(m=1,2)$ can be estimated as follows.
\begin{align*}
&v_2(c)
\le 2\sup_{L\in\N}\sup_{j'\in\{1,2,\cdots,d\}}\sum_{\bx\in\G(2L)}\\
&\qquad\quad\cdot\left(\frac{L}{\pi}|e^{i\frac{\pi}{L}x_{j'}}-1|+1\right)e^{\sum_{j=1}^d(c\frac{L}{\pi}|e^{i\frac{\pi}{L}x_{j}}-1|)^{1/2}}|f_d^L(\bx)|\\
&\qquad\le
 2c_1\left(\sum_{x\in\Z}(|x|+1)e^{(c|x|)^{1/2}-c_2\frac{2}{\pi}|x|}\right)^d,\\
&v_1(c)\le 2\sum_{\bz\in\G(2L)}|f_d^L(\bz)|\le
 2c_1\left(\sum_{x\in\Z}e^{-c_2\frac{2}{\pi}|x|}\right)^d,\\
&v_0\le 2^dc_1\left(\sum_{x\in\Z}e^{-c_2\frac{2}{\pi}|x|}\right)^d.
\end{align*}
The density-density interaction only between nearest-neighbor sites has
 particular importance for the flux phase problem, since it can be
 dealt within the framework of repeated reflection. Such a model is 
one special case of the interactions introduced above. To see this, let us choose a
 continuous function $f$ on $[0,\infty)$ satisfying that 
\begin{align}
&f(x)\in[0,1]\ (\forall x\in [0,\infty)),\label{eq_function_for_n_n}\\
&f(x)=\left\{\begin{array}{ll}1 &\text{if }x\in [\frac{2}{\pi},1],\\
0 &\text{if }x\in \{0\}\cup[\frac{4}{\pi},\infty)\end{array}
\right.  \notag
\end{align}
and set 
$$ f_d(\bx):=f\left(\sum_{j=1}^d|x_j|\right),\ (\bx\in\Z^d).$$
It follows that
\begin{align}\label{eq_nearest_neighbor_constants}
&f_d(\b0)=0,\quad |f_d(\bx)|\le e^{\frac{4}{\pi}}e^{-\sum_{j=1}^d|x_j|},\ (\forall
 \bx\in\Z^d).
\end{align}
Moreover, for any $\bx\in \Z^d$,
\begin{align*}
&f_d^L(\bx)=\left\{\begin{array}{ll}1 & \text{if }\exists
	    j\in\{1,2,\cdots,d\}\text{ s.t. }\bx=\be_j\text{ or
	     }-\be_j\text{ in }(\Z/2L\Z)^d,\\
0 &\text{ otherwise }
\end{array}
\right.
\end{align*}
and thus,
\begin{align*}
\sV_d=U_d&\sum_{\bx,\by\in\G(2L)}1_{\exists j\in\{1,2,\cdots,d\}\text{
 s.t. }\bx-\by=\be_j\text{ or }-\be_j\text{ in
 }(\Z/2L\Z)^d}\\
&\cdot(\psi_{\bx\ua}^*\psi_{\bx\ua}+\psi_{\bx\da}^*\psi_{\bx\da}-1)
(\psi_{\by\ua}^*\psi_{\by\ua}+\psi_{\by\da}^*\psi_{\by\da}-1).
\end{align*}
In this case the above estimation of $v_2(c)$, $v_1(c)$, $v_0$ holds
 with $c_1=e^{\frac{4}{\pi}}$, $c_2=1$.
\end{example}

\begin{example}[The spin-spin interaction]\label{eg_spin_spin}
Let us choose real-valued continuous functions $f_{s,j}$
 $(j=1,2,3)$ on $\R^d$ satisfying that 
\begin{align*}
&f_{s,j}(\b0)=0,\quad |f_{s,j}(\bx)|\le c_1e^{-c_2\sum_{k=1}^d|x_k|},\ (\forall
 \bx\in\R^d),
\end{align*}
with constants $c_1,c_2\in\R_{>0}$. Then, set 
\begin{align*}
&f_{s,j}^L(\bx):=f_{s,j}\left(\frac{L}{\pi}|e^{i\frac{\pi}{L}x_1}-1|,\frac{L}{\pi}|e^{i\frac{\pi}{L}x_2}-1|,\cdots,\frac{L}{\pi}|e^{i\frac{\pi}{L}x_d}-1|\right)
\end{align*}
for $\bx\in\R^d$. With the Pauli matrices 
\begin{align*}
&P^{(1)}=\left(\begin{array}{cc}0 & 1 \\ 1 & 0 \end{array}\right),\quad
P^{(2)}=\left(\begin{array}{cc}0 & -i \\ i & 0 \end{array}\right),\quad
P^{(3)}=\left(\begin{array}{cc}1 & 0 \\ 0 & -1 \end{array}\right)
\end{align*}
and the coupling constants $U_{s,j}$ $(j=1,2,3)$, the spin-spin
 interaction $\sV_s$ is defined as follows.
\begin{align}
&\sV_s:=\sum_{j=1}^3\sV_{s,j},\notag\\
&\sV_{s,j}:=U_{s,j}\sum_{\bx,\by\in\G(2L)}\sum_{\s,\tau,\mu,\lambda\in\spin}f_{s,j}^L(\bx-\by)(\psi_{\bx\s}^*P^{(j)}_{\s,\tau}\psi_{\bx\tau})
                    (\psi_{\by\mu}^*P^{(j)}_{\mu,\lambda}\psi_{\by\lambda}),\label{eq_spin_each_interaction}\\
&(j=1,2,3).\notag
\end{align}
The operators $\sV_{s,j}$ $(j=1,2,3)$ can be rewritten as 
\begin{align*}
&\sV_{s,j}=\sum_{X_k,Y_k\in\G(2L)\times\spin\atop
 (k=1,2)}V_{s,j,2}^L(U_{s,j})((X_1,X_2),(Y_1,Y_2))\psi_{X_1}^*\psi_{X_2}^*\psi_{Y_1}\psi_{Y_2}
\end{align*}
with the kernels $V_{s,j,2}^L$  $(j=1,2,3)$ defined by
\begin{align*}
&V_{s,j,2}^L(U_{s,j})(((\bx_1,\s_1),(\bx_2,\s_2)),((\by_1,\tau_1),(\by_2,\tau_2)))\\
&:=\frac{1}{4}U_{s,j}f_{s,j}^L(\bx_1-\bx_2)\\
&\quad\cdot
 \sum_{\eta,\xi\in\S_2}\sgn(\eta)\sgn(\xi)1_{\bx_{\eta(1)}=\by_{\xi(2)}\text{ in }(\Z/2L\Z)^d}
1_{\bx_{\eta(2)}=\by_{\xi(1)}\text{ in }(\Z/2L\Z)^d}\\
&\quad\cdot P_{\s_{\eta(1)},\tau_{\xi(2)}}^{(j)}
                              P_{\s_{\eta(2)},\tau_{\xi(1)}}^{(j)}.
\end{align*}
The kernel $V_{s,j,2}^L$ satisfies
 \eqref{item_linearity}, \eqref{item_bi_anti_symmetric}, $\cdots$, \eqref{item_infinite_volume_limit}
 with $N_v=2$, $n_v=1$ and so does the whole kernel
 $V_{s,2}^L=\sum_{j=1}^3V_{s,j,2}^{L}$ with $N_v=2$, $n_v=3$. For
 $V_{s,2}^L$ an upper bound on $v_2(c)$ is obtained as follows.
\begin{align*}
&v_2(c)\\
&\le 6\sup_{L\in\N}\sup_{j'\in\{1,\cdots,d\}\atop k\in\{1,2,3\}}\sum_{\bx\in\G(2L)}
\left(\frac{L}{\pi}|e^{i\frac{\pi}{L}x_{j'}}-1|+1\right)e^{\sum_{j=1}^d(c\frac{L}{\pi}|e^{i\frac{\pi}{L}x_{j}}-1|)^{1/2}}|f_{s,k}^L(\bx)|\\
&\le
 6c_1\left(\sum_{x\in\Z}(|x|+1)e^{(c|x|)^{\frac{1}{2}}-c_2\frac{2}{\pi}|x|}\right)^d.
\end{align*}
Again by using a continuous non-negative function $f$ on $[0,\infty)$
 satisfying \eqref{eq_function_for_n_n} we can formulate the spin-spin
 interaction between nearest-neighbor sites. By setting  
\begin{align*}
&f_{s,j}(\bx)=f\left(\sum_{k=1}^d|x_k|\right),\\
&f_{s,j}^L(\bx)=f_{s,j}\left(\frac{L}{\pi}|e^{i\frac{\pi}{L}x_1}-1|,\frac{L}{\pi}|e^{i\frac{\pi}{L}x_2}-1|,\cdots,\frac{L}{\pi}|e^{i\frac{\pi}{L}x_d}-1|\right)\\
&(j=1,2,3),
\end{align*}
the operator $\sV_{s,j}$ defined in \eqref{eq_spin_each_interaction}
 reads
\begin{align*}
\sV_{s,j}=&U_{s,j}\sum_{\bx,\by\in\G(2L)}\sum_{\s,\tau,\mu,\lambda\in\spin}1_{\exists k\in\{1,2,\cdots,d\}\text{
 s.t. }\bx-\by=\be_k\text{ or }-\be_k\text{ in
 }(\Z/2L\Z)^d}\\
&\qquad\qquad\qquad\qquad\cdot (\psi_{\bx\s}^*P^{(j)}_{\s,\tau}\psi_{\bx\tau})
                    (\psi_{\by\mu}^*P^{(j)}_{\mu,\lambda}\psi_{\by\lambda}).
\end{align*}
Moreover, the upper bound on $v_2(c)$ derived above holds with
 $c_1=e^{\frac{4}{\pi}}$, $c_2=1$ since $f_{s,j}$ satisfies
 \eqref{eq_nearest_neighbor_constants} in this case. 
\end{example}

In summary, the operator $\sV_o+\sV_d+\sV_s$ is one example of the
interactions treated in this paper.

\subsection{The main results}\label{subsec_results}
For $c\in\R_{>0}$ let $D(c)$ denote the disk $\{z\in\C\ |\
|z|<c\}$. Recall that for $m\in\{1,2,\cdots,N_v\}$, $v_m(c)$ denotes the
left-hand side of the inequality \eqref{eq_decay_bound}. For any
non-empty compact set $K$ of $\C^{n_v}$, $C(K;\C)$ denotes the Banach
space of all complex-valued continuous functions on $K$, equipped with
the uniform norm. Remind us that the norm of $f\in C(K;\C)$ is equal to
$\sup_{\bz\in K}|f(\bz)|$.
The following theorem is the main result of this
paper.

\begin{theorem}\label{thm_main_theorem}
There exists a constant $c(d,N_v)\in\R_{>0}$ depending only on $d$ and
 $N_v$ such that the following statements hold true with the quantity
 $R$ defined by
\begin{align*}
R:=&\left(\sum_{l=1}^{N_v}c(d,N_v)^lv_l(c(d,N_v))\right)^{-1}\\
&\cdot\left(1- 
\frac{1}{2}\max_{m\in\{1,2,\cdots,d\}}\left(\sum_{j=1}^{m-1}|1+e^{i\theta_{j,m}}|+\sum_{j=m+1}^d|1+e^{i\theta_{m,j}}|\right)\right)^{\frac{N_vd}{2}}\\
&\cdot\left(\min_{j\in\{1,2,\cdots,d\}}t_j\right)^{N_vd}
\left(\max_{j\in\{1,2,\cdots,d\}}t_j\right)^{1-N_vd}.
\end{align*}
\begin{enumerate}
\item There exists $F(\beta,L)\in C(\overline{D(R)}^{n_v};\C)$
      parameterized by $\beta\in\R_{>0}$ and $L\in \N$ satisfying 
$L\ge
      \max\{t_1,t_2,\cdots,t_d\}\beta$ such that $F(\beta,L)$
      is analytic in $D(R)^{n_v}$ and 
\begin{align*}
&F(\beta,L)(\bU)=-\frac{1}{\beta (2L)^d}\log(\Tr e^{-\beta \sH}),\\
&(\forall \bU\in \overline{D(R)}^{n_v}\cap \R^{n_v},\beta\in\R_{>0},\\
&\quad L\in \N\text{ satisfying }L\ge \max\{t_1,t_2,\cdots,t_d\}
 \beta).
\end{align*}
\item There exists $F(\beta)\in C(\overline{D(R)}^{n_v};\C)$
      parameterized by $\beta \in\R_{>0}$ such that 
\begin{align*}
\lim_{L\to\infty,L\in\N\atop \text{with }L\ge
 \max\{t_1,t_2,\cdots,t_d\}\beta}F(\beta,L)=F(\beta)\text{ in
 }C(\overline{D(R)}^{n_v};\C).
\end{align*}
\item There exists $F\in C(\overline{D(R)}^{n_v};\C)$ such that 
\begin{align*}
\lim_{\beta\to\infty,\beta\in\R_{>0}}F(\beta)=F\text{ in
 }C(\overline{D(R)}^{n_v};\C).
\end{align*}
\end{enumerate}
\end{theorem}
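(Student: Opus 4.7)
The plan is to follow the multi-scale Grassmann integration strategy of \cite{K15}, adapting it to the $d$-dimensional hyper-cubic setting with the more general interaction $\sV$ admitted here. First, I would transform the one-band Hamiltonian on $\G(2L)\times\spin$ into a $2^d$-band Hamiltonian on the coarser lattice $\G(L)\times\spin$ by partitioning sites according to the parities of their coordinates; this is the inductive construction announced in the introduction and carried out in Subsection \ref{subsec_multi_band_hamiltonian}. The chessboard flux condition \eqref{eq_flux_per_plaquette} together with \eqref{eq_flux_per_circle} forces the momentum-space hopping matrix to have a band spectrum whose modulus vanishes at a single point, and hence the free covariance in the zero-temperature limit has a single singular locus---the geometric fact on which all subsequent power-counting rests. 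The quantitative condition \eqref{eq_flux_quantitative_assumption} enters to control how small the band gap can become away from that singular point.

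Second, I would express $-\beta^{-1}(2L)^{-d}\log \Tr e^{-\beta\sH}$ as the $h\to\infty$ limit of the logarithm of a finite-dimensional Grassmann integral via the Trotter product formula, as in Subsection \ref{subsec_grassmann}. The interaction kernels $V_m^L$ enter through their symmetry properties and the super-polynomial decay bound \eqref{eq_decay_bound}. I would then carry out the Matsubara UV integration of Section \ref{sec_UV}, both at a single temperature and at two different temperatures in parallel. The output is an effective Grassmann polynomial whose coefficients are bounded in a suitable weighted norm uniformly in $\beta$, $L$ and $h$, using crucially the hermiticity \eqref{eq_hermiticity}, spin-parity \eqref{eq_spin_parity}, and spin-reflection \eqref{eq_spin_reflection} properties of the kernel.

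Third, I would run the IR renormalization group on the residual covariance concentrated near the singular point, decomposing it into dyadic shells and integrating them one at a time as in Section \ref{sec_IR}. Because the spatial dimension satisfies $d\ge 2$, the power-counting---whose power depends quantitatively on $d$---gives irrelevance of all Grassmann monomials of degree $\ge 4$. The $U(1)$-invariance \eqref{eq_U1_invariance}, bi-antisymmetry \eqref{eq_bi_anti_symmetric}, hermiticity, and particle-hole identity \eqref{eq_particle_hole} of the effective kernels are preserved at every RG step, forcing the effective two-point kernel to vanish at the singular point so that no relevant counterterm is generated and the singular locus of the effective covariance stays isolated throughout the flow. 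Summing the resulting geometric bound in the scale index produces the analytic function $F(\beta,L)(\bU)$ on $D(R)^{n_v}$, with the stated $R$ being exactly the smallness threshold imposed at every scale; equality with the free energy density on $\overline{D(R)}^{n_v}\cap\R^{n_v}$ follows from the Grassmann representation, proving (1).

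Finally, for (2) I would invoke Appendix \ref{app_h_L_limit} to take the $h\to\infty$, $L\to\infty$ limit of each Taylor coefficient of $F(\beta,L)$ in $\bU$; the uniform-in-$L$ Cauchy bounds inherited from step three then upgrade pointwise convergence to convergence in $C(\overline{D(R)}^{n_v};\C)$. For (3), the two-temperature Matsubara UV integration followed by the same IR scheme applied to $F(\beta,L)-F(\beta',L)$ would yield bounds tending to $0$ as $\beta,\beta'\to\infty$ uniformly in $\bU$, producing a Cauchy sequence in $C(\overline{D(R)}^{n_v};\C)$ whose limit is $F$. The main obstacle is precisely this uniformity in $\beta$: the IR multi-scale integration must avoid any logarithmic $\beta$-divergence, which is achieved only by the combination of the single-point structure of the singular locus, the preservation of the symmetry-induced vanishing at that point throughout the RG flow, and the $d>1$ irrelevance of all quartic interactions.
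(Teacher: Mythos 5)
Your outline correctly reproduces the paper's overall strategy---multi-band reformulation, Grassmann integral formulation, Matsubara UV integration at one and two temperatures, symmetry-preserving IR flow with $d\ge 2$ irrelevance of quartics, and the $h\to\infty$, $L\to\infty$, $\beta\to\infty$ limits---so you are on the right track at the level of architecture. However, two non-trivial technical steps in the final limit argument are missing and would need to be discovered to make the outline compile into a proof.

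First, the two-temperature comparison that gives the $\beta$-Cauchy estimate (the anisothermal bounds) is only available under the assumption \eqref{eq_basic_beta_h_assumption}, which in particular forces $\beta_1,\beta_2\in\N$. Your proposal claims a direct Cauchy bound for $F(\beta,L)-F(\beta',L)$ as $\beta,\beta'\to\infty$ over $\R_{>0}$, but the RG machinery only delivers this over $\N$. Bridging from $\beta\in\N$ to $\beta\in\R_{>0}$ requires a separate interpolation estimate, which the paper supplies via Lemma \ref{lem_partition_function_difference}: a direct operator-norm comparison of $\frac{1}{\beta}\log\Tr e^{-\beta\sH}$ with $\frac{1}{\lfloor\beta\rfloor}\log\Tr e^{-\lfloor\beta\rfloor\sH}$, bounded by a constant times $\log(\beta/\lfloor\beta\rfloor)\to 0$. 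Without this step the argument for statement (3) does not close.

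Second, the multi-scale RG only controls the interacting correction $\log(\Tr e^{-\beta H}/\Tr e^{-\beta H_0})$; the free-gas contribution $-\frac{1}{\beta(2L)^d}\log\Tr e^{-\beta H_0}$ must be handled separately by a direct diagonalization of the hopping matrix and an explicit limit as in \eqref{eq_free_free_energy_zero_temperature}. Your sketch treats ``$-\frac{1}{\beta(2L)^d}\log\Tr e^{-\beta\sH}$'' as the object the RG produces, which conflates the two. This matters because the free part is not small in $\bU$ and is analytic trivially (it is constant in $\bU$), so it must be added back after the Grassmann analysis, and its $L\to\infty$, $\beta\to\infty$ limits follow from dominated convergence applied to the Riemann sums over $\G(L)^*$, not from the RG. Finally, a minor omission: the proof proceeds under the normalization $\max_j t_j=1$ and recovers the general case by the scaling $\sH\mapsto t_{max}^{-1}\sH_0+\sV$ together with the linearity of $V_m^L(\bU)$ in $\bU$; this is where the factor $(\max_j t_j)^{1-N_vd}$ in $R$ comes from and you should not skip it.
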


If we restrict the interaction $\sV$ to have a special form and choose
the phase $\theta_L$ to satisfy a certain condition, the free energy
density considered in Theorem \ref{thm_main_theorem} becomes the minimum free
energy in the flux phase problem. More precisely, we assume that
\begin{align}
\sV=&U_o\sum_{\bx\in\G(2L)}
\left(\psi_{\bx\ua}^*\psi_{\bx\ua}-\frac{1}{2}\right)
\left(\psi_{\bx\da}^*\psi_{\bx\da}-\frac{1}{2}\right)\label{eq_reflection_positive_interaction}\\
&+U_d\sum_{\bx,\by\in\G(2L)}
1_{\exists k\in\{1,2,\cdots,d\}\text{
 s.t. }\bx-\by=\be_k\text{ or }-\be_k\text{ in
 }(\Z/2L\Z)^d}\notag\\
&\qquad\cdot(\psi_{\bx\ua}^*\psi_{\bx\ua}+\psi_{\bx\da}^*\psi_{\bx\da}-1)
(\psi_{\by\ua}^*\psi_{\by\ua}+\psi_{\by\da}^*\psi_{\by\da}-1)\notag
\\
&+\sum_{j=1}^3U_{s,j}\sum_{\bx,\by\in\G(2L)}\sum_{\s,\tau,\mu,\lambda\in\spin}
1_{\exists k\in\{1,2,\cdots,d\}\text{
 s.t. }\bx-\by=\be_k\text{ or }-\be_k\text{ in
 }(\Z/2L\Z)^d}\notag\\
&\qquad\cdot(\psi_{\bx\s}^*P_{\s,\tau}^{(j)}\psi_{\bx\tau})
            (\psi_{\by\mu}^*P_{\mu,\lambda}^{(j)}\psi_{\by\lambda})\notag
\end{align}
with the Pauli matrices $P^{(j)}$ $(j=1,2,3)$
 and $U_o\in\R$, $U_d,U_{s,j}\in\R_{\ge 0}$ $(j=1,2,3)$. 
The interaction $\sV$ has a form to which the reflection positivity
 lemma \cite[\mbox{Lemma}]{L} is applicable.
As studied in the
 previous subsection, the factors $v_0$, $v_1(c)$, $v_2(c)$ for this
 interaction are bounded from above by a constant depending only on $c$
 and $d$.

Recall that for a phase $\varphi:\Z^d\times\Z^d\to \R$ satisfying
\eqref{eq_phase_condition_general} we set 
\begin{align}
\sH_0(\varphi)=\sum_{\bx,\by\in\G(2L)}\sum_{\s\in \spin}
1_{\exists j\in \{1,2,\cdots,d\}\text{ s.t. }\bx-\by=\be_j\text{ or }-\be_j
\text{ in
 }(\Z/2L\Z)^d}t_je^{i\varphi(\bx,\by)}\psi_{\bx\s}^*\psi_{\by\s},\label{eq_one_band_free_hamiltonian}
\end{align}
and $\sH(\varphi)=\sH_0(\varphi)+\sV$. The flux phase problem is to find a
phase $\varphi$ which minimizes the free energy $-(1/\beta)\log(\Tr
e^{-\beta \sH(\varphi)})$. Theorem \ref{thm_flux_phase}, which is a  
simple extension of Lieb's theorem \cite{L}, stated in Appendix \ref{app_flux_phase} implies
that if the phase $\theta_L$ satisfies \eqref{eq_phase_condition_general},
\eqref{eq_flux_per_plaquette}
with $\theta_{j,k}=\pi$ for all $j,k\in\{1,2,\cdots,d\}$ with $j<k$ and
\eqref{eq_flux_per_circle} with $\eps_l^L=1_{L\in 2\Z}$ for all $l\in
\{1,2,\cdots,d\}$, then
\begin{align*}
&-\frac{1}{\beta}\log(\Tr e^{-\beta \sH(\theta_L)})\\
& =\min\left\{-\frac{1}{\beta}\log(\Tr e^{-\beta \sH(\varphi)})\
 \Big|\ \varphi:\Z^d\times\Z^d\to \R\text{ satisfying
 }\eqref{eq_phase_condition_general}\right\}.
\end{align*}
Combined with Theorem \ref{thm_main_theorem}, we obtain the
following corollary.
\begin{corollary}\label{cor_minimum_energy}
There exists a constant $c(d)\in\R_{>0}$ depending only on $d$ such that
the following statements hold with the quantity $R$ defined by
\begin{align*}
R:=c(d)\left(\min_{j\in\{1,2,\cdots,d\}}t_j\right)^{2d}
 \left(\max_{j\in\{1,2,\cdots,d\}}t_j\right)^{1-2d}.
\end{align*}
\begin{enumerate}
\item There exists $F(\beta,L)\in C(\overline{D(R)}^5;\C)$ parameterized
      by $\beta\in\R_{>0}$ and $L\in \N$ satisfying $L\ge
      \max\{t_1,t_2,\cdots,t_d\}\beta$ such that $F(\beta,L)$ is
      analytic in $D(R)^5$ and 
\begin{align*}
&F(\beta,L)(U_o,U_d,U_{s,1},U_{s,2},U_{s,3})\\
&=\min\left\{-\frac{1}{\beta(2L)^d}\log(\Tr e^{-\beta \sH(\varphi)})\
 \Big|\ \varphi:\Z^d\times\Z^d\to \R\text{ satisfying
 }\eqref{eq_phase_condition_general}\right\},\\
&(\forall U_o\in\overline{D(R)}\cap \R,
 (U_d,U_{s,1},U_{s,2},U_{s,3})\in\overline{D(R)}^4\cap \R_{\ge 0}^4,\\
&\quad \beta\in\R_{>0},L\in \N\text{ satisfying }L\ge \max\{t_1,t_2,\cdots,t_d\} \beta).
\end{align*}
\item There exists $F(\beta)\in C(\overline{D(R)}^{5};\C)$
      parameterized by $\beta \in\R_{>0}$ such that 
\begin{align*}
\lim_{L\to\infty,L\in\N\atop \text{with }L\ge
 \max\{t_1,t_2,\cdots,t_d\}\beta}F(\beta,L)=F(\beta)\text{ in
 }C(\overline{D(R)}^{5};\C).
\end{align*}
\item There exists $F\in C(\overline{D(R)}^{5};\C)$ such that 
\begin{align*}
\lim_{\beta\to\infty,\beta\in\R_{>0}}F(\beta)=F\text{ in
 }C(\overline{D(R)}^{5};\C).
\end{align*}
\end{enumerate}
\end{corollary}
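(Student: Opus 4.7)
The strategy is to specialize Theorem \ref{thm_main_theorem} to the interaction \eqref{eq_reflection_positive_interaction} with the optimal phase identified by Lieb's theorem, and then to rewrite the resulting identity in the form asserted by the corollary. First I would fix a phase $\theta_L:\Z^d\times\Z^d\to\R$ satisfying \eqref{eq_phase_condition_general}, the plaquette condition \eqref{eq_flux_per_plaquette} with $\theta_{j,k}=\pi$ for all $j<k$, and the large-loop condition \eqref{eq_flux_per_circle} with $\eps_l^L=1_{L\in 2\Z}$. Because $|1+e^{i\pi}|=0$, the maximum on the left-hand side of \eqref{eq_flux_quantitative_assumption} equals $0$, so the quantitative flux assumption is satisfied trivially and Theorem \ref{thm_main_theorem} is applicable with this $\theta_L$.

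Next I would verify that the interaction \eqref{eq_reflection_positive_interaction} fits the framework of Theorem \ref{thm_main_theorem}. The three summands are, up to the restriction of the on-site part from the $2^d$-parameter family $\bU_o$ to the single coupling $U_o$ (a linear substitution preserving all kernel properties), precisely those treated in Examples \ref{eg_on_site}, \ref{eg_density_density} and \ref{eg_spin_spin}, with the nearest-neighbor density-density and spin-spin kernels obtained from the choice of $f$ in \eqref{eq_function_for_n_n}. By linearity, the combined kernel $V_m^L$ inherits properties \eqref{item_linearity}--\eqref{item_decay_bound} with $N_v=2$ and $n_v=5$. The bounds on $v_0$, $v_1(c)$ and $v_2(c)$ obtained in those examples with $c_1=e^{4/\pi}$, $c_2=1$ are then controlled by a constant depending only on $d$. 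Substituting $N_v=2$ and the vanishing flux factor into the formula for $R$ in Theorem \ref{thm_main_theorem}, and absorbing all $d$-dependent constants into a single $c(d)\in\R_{>0}$, yields precisely the $R$ claimed in the corollary.

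With this preparation, Theorem \ref{thm_main_theorem} parts (1)--(3) directly provide functions $F(\beta,L)$, $F(\beta)$, $F$ in $C(\overline{D(R)}^5;\C)$ satisfying the required analyticity and uniform convergence statements, so parts (2) and (3) of the corollary are immediate. It remains to match part (1) with the minimum. On the real slice $U_o\in\R$, $(U_d,U_{s,1},U_{s,2},U_{s,3})\in\R_{\ge 0}^4$, the interaction \eqref{eq_reflection_positive_interaction} takes the form required by the reflection positivity argument; Theorem \ref{thm_flux_phase} in Appendix \ref{app_flux_phase} then asserts that $\sH(\theta_L)$ minimizes $-(1/\beta)\log(\Tr e^{-\beta\sH(\varphi)})$ over all $\varphi$ satisfying \eqref{eq_phase_condition_general}. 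Combining this with the identity in part (1) of Theorem \ref{thm_main_theorem} yields the identification claimed in part (1) of the corollary.

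The only real subtlety is conceptual rather than technical. The minimum on the right-hand side in part (1) is only directly identified with $-(\beta(2L)^d)^{-1}\log\Tr e^{-\beta\sH(\theta_L)}$ via reflection positivity when $U_d, U_{s,j}\ge 0$, while the function $F(\beta,L)$ furnished by Theorem \ref{thm_main_theorem} lives on the full complex polydisc $\overline{D(R)}^5$. One just has to observe that the corollary only asserts the identification on that real positive cone, so no analytic continuation of the minimum is required; the complex-analytic object is used solely to inherit the convergence properties in parts (2) and (3).
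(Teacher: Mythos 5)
Your proposal is correct and follows the same route as the paper: the paper itself presents the corollary as an immediate consequence of Theorem \ref{thm_main_theorem} together with Theorem \ref{thm_flux_phase} (Lieb's flux-phase result), after noting exactly the choice $\theta_{j,k}=\pi$, $\eps_l^L=1_{L\in 2\Z}$ and that the interaction \eqref{eq_reflection_positive_interaction} is covered by Examples \ref{eg_on_site}--\ref{eg_spin_spin} with $N_v=2$, $n_v=5$. Your observations that $|1+e^{i\pi}|=0$ trivializes condition \eqref{eq_flux_quantitative_assumption} and that the flux-phase identification is only invoked on the real cone $U_o\in\R$, $(U_d,U_{s,j})\in\R_{\ge 0}^4$ are precisely the two points one needs to notice.
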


\begin{remark}\label{rem_how_generalized}
Let us explain how Theorem \ref{thm_main_theorem} and Corollary
 \ref{cor_minimum_energy} generalize \cite[\mbox{Theorem 1.1, Corollary
 1.2}]{K15}. Both in Theorem \ref{thm_main_theorem} and Corollary
 \ref{cor_minimum_energy} the spatial dimension $d$ is any number larger
 than 1, while it was fixed to be 2 in \cite[\mbox{Theorem 1.1, Corollary
 1.2}]{K15}. In Theorem \ref{thm_main_theorem} we assume the flux
 conditions \eqref{eq_flux_per_plaquette}, \eqref{eq_flux_per_circle},
 which are more general than the conditions \cite[\mbox{(1,2)}]{K15}
 requiring that the flux per plaquette is $\pi$ (mod $2\pi$) and the
 flux through the large circles around the periodic square lattice is
 $0$ (mod $2\pi$). As we saw in Example \ref{eg_on_site}, the
 interaction $\sV$ covers the on-site interaction considered in
 \cite[\mbox{Theorem 1.1}]{K15} as a special case. Concerning the
 spatial dimension and the flux configuration, therefore, Theorem
 \ref{thm_main_theorem} 
is more general than 
 \cite[\mbox{Theorem 1.1}]{K15}. However, here the hopping amplitude
 depends only on the direction and thus the whole hopping
 amplitudes are described by the $d$ parameters $t_1,t_2,\cdots,t_d$,
 while in \cite[\mbox{Theorem 1.1}]{K15} the hopping amplitude is
 constant in each direction and is 
allowed to vary alternately and thus the whole
 hopping amplitudes are described by the 4 parameters ``$t_{h,e},t_{h,o},t_{v,e},t_{v,o}$''
as it was
 2-dimensional. See \cite[\mbox{Figure 2}]{K15} for the configuration of
 the hopping amplitudes.
Theorem \ref{thm_main_theorem} is less general than 
\cite[\mbox{Theorem 1.1}]{K15} only in this sense. In this paper we do
 not stick to the generalization of the hopping amplitudes in the
 interest of simplicity. If we assume that the hopping amplitude depends
 only on the direction in \cite[\mbox{Theorem 1.1}]{K15}, then the
 factor ``$f_{\bt}^2$'' determining the possible magnitude of the coupling
 in \cite[\mbox{Theorem 1.1}]{K15} becomes the
 factor $(\min\{t_1,t_2\})^4(\max\{t_1,t_2\})^{-3}$ included in $R$ in
 Theorem \ref{thm_main_theorem}. In this setting, therefore, Theorem
 \ref{thm_main_theorem} naturally extends \cite[\mbox{Theorem
 1.1}]{K15}. As for Corollary \ref{cor_minimum_energy}, the apparent
 generality is that the interaction includes not only the on-site
 interaction but also the density-density interaction and the spin-spin
 interaction as defined in
 \eqref{eq_reflection_positive_interaction}. Moreover, the number
 $L$ can be both odd and even, while it was restricted to be odd in
 \cite[\mbox{Corollary 1.2}]{K15}. This generalization is due to the
 fact that here the magnetic flux through the large circles around the
 lattice can be uniformly 0 (mod $2\pi$) or uniformly $\pi$ (mod
 $2\pi$) depending on the parity of $L$ and thus the free energy density in Theorem
 \ref{thm_main_theorem} can be the minimum in the flux phase problem in
 both cases, according to the known sufficient condition restated in
 Theorem \ref{thm_flux_phase}.
\end{remark}

\begin{remark}
It is not trivial to make explicit the dependency of the constants
 $c(d,N_v)$, $c(d)$ on $d,N_v$. We can see from our construction that it
 would require a wide range of additional calculations to do
 so. Not to lengthen the paper further, we decide not to tackle this clarification.
\end{remark}

\begin{remark} 
The condition \eqref{eq_flux_quantitative_assumption} requires the flux
 per plaquette  $\theta_{j,k}$ not to vanish for any
 $j,k\in\{1,2,\cdots,d\}$ with $j<k$. In 2-dimensional case the
 constraint \eqref{eq_flux_quantitative_assumption} is fulfilled if
 $\theta_{1,2}\neq 0$ (mod $2\pi$). This means that the infinite-volume,
 zero-temperature limit of the free energy density can be taken if 
the system contains an arbitrarily thin magnetic field having a
 chessboard-like flux pattern over the square lattice and the interaction is
 accordingly weak.
\end{remark}

\begin{remark}
The exponent $1/2$ in \eqref{eq_decay_bound} stems from the fact
 that we use a Gevrey-class cut-off function $\phi$ satisfying that 
$$
\sup_{x\in\R}|\phi^{(n)}(x)|\le 2^n(n!)^2,\ (\forall n\in\N\cup\{0\})
$$
(see the beginning of Subsection \ref{subsec_UV_covariance}). We can
 prove the similar results for the interactions satisfying
 \eqref{eq_decay_bound} with the exponent $r\in (0,1)$ in place of
 $1/2$ by using a cut-off function $\phi$ satisfying that 
$$
\sup_{x\in\R}|\phi^{(n)}(x)|\le const^n(n!)^\frac{1}{r},\ (\forall n\in\N\cup\{0\}).
$$
However, this generalization will bring the extra parameter $r$ into
 the major part of the construction since other parameters need to
 be tuned depending on $r$. In this paper we choose not to pursue this
 generalization for simplicity.
\end{remark}
 
\section{Multi-band formulation}\label{sec_formulation}
In this section we introduce a $2^d$-band Hamiltonian operator whose
free energy density is equal to that governed by the 1-band Hamiltonian
$\sH$. Then, we will focus on the $2^d$-band model and derive the
finite-dimensional Grassmann integral formulation of the partition
function. The Grassmann integral formulation of the $2^d$-band
model will be the major objective of our multi-scale analysis in
the following sections. 

\subsection{Multi-band Hamiltonian}\label{subsec_multi_band_hamiltonian}
We will define the hopping matrix of the multi-band Hamiltonian by induction
with respect to the spatial dimension. To this end,
we need some notations. For $n\in \N$ let $\Mat(n,\C)$ denote the set
of all $n\times n$ complex matrices and let $I_n$ denote the $n\times n$
unit matrix. Set 
\begin{align*}
\G_n(L):=\{0,1,\cdots,L-1\}^n,\quad \cB_n:=\{1,2,3,\cdots,2^n\}.
\end{align*}
Note that for any $\rho\in\cB_n$ there uniquely exists
$(\rho_1,\rho_2,\cdots,\rho_n)\in \{0,1\}^n$ such that
$\rho=\sum_{j=0}^{n-1}\rho_{j+1}2^j+1$. Thus, we can define $b_n\in
\Map(\cB_n,\{0,1\}^n)$ by $b_n(\rho):=(\rho_1,\rho_2,\cdots,\rho_n)$.
The map $b_n$ is bijective. 
We will suppress the index $n$ of $\G_n(L)$, $\cB_n$, $b_n$ after fixing
$n$ to be the spatial dimension $d$. We keep showing the dependency on $n$ while
we argue inductively with respect to $n$.
For $n\in \N$ and $(\xi_j)_{1\le j\le n}\in\R^n$ we define the matrix 
$U_n((\xi_j)_{1\le j\le
n})\in \Mat(2^n,\C)$ parameterized by $(\xi_j)_{1\le j\le
n}$ as follows. Set
$$U_1(\xi_1):=\left(\begin{array}{cc}1 & 0 \\
                                        0 &
		       e^{i\xi_1}\end{array}\right).$$
Assume that we have defined $U_m((\xi_j)_{1\le j\le m})\in
\Mat(2^m,\C)$. Then, define $U_{m+1}((\xi_j)_{1\le j\le m+1})\in
\Mat(2^{m+1},\C)$ by
$$U_{m+1}((\xi_j)_{1\le j\le m+1}):=\left(\begin{array}{cc} U_m((\xi_j)_{1\le j\le m}) & O \\
                                        O &
		       e^{i\xi_{m+1}} U_m((\xi_j)_{1\le j\le m}) \end{array}\right).$$

\begin{lemma}\label{lem_unitary_characterization}
For any $n\in \N$,
\begin{align*}
U_n((\xi_j)_{1\le j\le
 n})(\rho,\eta)=e^{i\sum_{j=1}^nb_n(\rho)(j)\xi_j}\delta_{\rho,\eta},\
 (\forall \rho,\eta\in\cB_n).
\end{align*}
\end{lemma}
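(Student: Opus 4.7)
The plan is to prove the identity by induction on $n$, exploiting the recursive block-diagonal definition of $U_n$ together with the analogously recursive structure of the bijection $b_n$.

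For the base case $n=1$, I would compute directly: the two elements of $\cB_1$ are $\rho = 1$ and $\rho = 2$, for which $b_1(1) = (0)$ and $b_1(2) = (1)$, so the claimed formula reads $U_1(\xi_1)(1,1) = 1$, $U_1(\xi_1)(2,2) = e^{i\xi_1}$, and the off-diagonal entries vanish. This matches the definition of $U_1(\xi_1)$.

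For the inductive step, assume the formula for $U_m$. The key observation about $b_{m+1}$ is that for $\rho \in \cB_{m+1}$ we have the dichotomy: if $\rho \le 2^m$, then $b_{m+1}(\rho) = (b_m(\rho), 0)$, while if $\rho > 2^m$, then $b_{m+1}(\rho) = (b_m(\rho - 2^m), 1)$; this follows from the uniqueness of the binary expansion used to define $b_n$. Therefore,
\begin{align*}
\sum_{j=1}^{m+1} b_{m+1}(\rho)(j)\,\xi_j
= \begin{cases} \sum_{j=1}^m b_m(\rho)(j)\,\xi_j, & \rho \le 2^m, \\
\sum_{j=1}^m b_m(\rho - 2^m)(j)\,\xi_j + \xi_{m+1}, & \rho > 2^m. \end{cases}
\end{align*}
Now I would read off the four block cases from the definition
\[
U_{m+1}((\xi_j)_{1\le j\le m+1}) = \begin{pmatrix} U_m((\xi_j)_{1\le j\le m}) & O \\ O & e^{i\xi_{m+1}} U_m((\xi_j)_{1\le j\le m}) \end{pmatrix}.
\]
For $\rho,\eta \le 2^m$, the entry equals $U_m((\xi_j)_{1\le j\le m})(\rho,\eta)$, which by the induction hypothesis equals $e^{i\sum_{j=1}^m b_m(\rho)(j)\xi_j}\delta_{\rho,\eta}$; this matches the target formula by the first case of the dichotomy above. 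For $\rho,\eta > 2^m$, the entry equals $e^{i\xi_{m+1}} U_m((\xi_j)_{1\le j\le m})(\rho - 2^m, \eta - 2^m)$, which by induction gives $e^{i\xi_{m+1}} e^{i\sum_{j=1}^m b_m(\rho-2^m)(j)\xi_j}\delta_{\rho - 2^m,\eta-2^m} = e^{i\sum_{j=1}^{m+1}b_{m+1}(\rho)(j)\xi_j}\delta_{\rho,\eta}$, using the second case of the dichotomy and the fact that $\rho - 2^m = \eta - 2^m$ iff $\rho = \eta$. For the two off-diagonal blocks ($\rho \le 2^m < \eta$ or $\eta \le 2^m < \rho$), the entry is $0$, matching $\delta_{\rho,\eta}$ since $\rho \ne \eta$ in both cases.

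This is a routine induction with essentially no obstacle; the only point requiring minor care is the bookkeeping of the bijection $b_n$ under the splitting $\cB_{m+1} = \{1,\dots,2^m\} \cup \{2^m+1,\dots,2^{m+1}\}$, which is handled by the binary-expansion observation above.
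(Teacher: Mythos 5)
Your proof is correct and follows essentially the same inductive argument as the paper: the paper indexes the four-block case split by the value of $b_{n+1}(\cdot)(n+1) \in \{0,1\}$, which is exactly your dichotomy $\rho \le 2^m$ versus $\rho > 2^m$; you simply make the binary-expansion bookkeeping for $b_{n+1}$ more explicit.
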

\begin{proof}
The claim holds for $n=1$ by definition. Assume that it holds for some
 $n\in\N$. Let $\rho,\eta\in \cB_{n+1}$. If $b_{n+1}(\rho)(n+1)\neq
 b_{n+1}(\eta)(n+1)$, $U_{n+1}((\xi_j)_{1\le j\le n+1})(\rho,\eta)$ $=0$ by
 definition. If $b_{n+1}(\rho)(n+1)=b_{n+1}(\eta)(n+1)=0$, by the hypothesis of
 induction, 
\begin{align*}
U_{n+1}((\xi_j)_{1\le j\le n+1})(\rho,\eta)&=U_n((\xi_j)_{1\le j\le
 n})(\rho,\eta)=e^{i\sum_{j=1}^nb_n(\rho)(j)\xi_j}\delta_{\rho,\eta}\\
&=e^{i\sum_{j=1}^{n+1}b_{n+1}(\rho)(j)\xi_j}\delta_{\rho,\eta}.
\end{align*}
If $b_{n+1}(\rho)(n+1)=b_{n+1}(\eta)(n+1)=1$, by the hypothesis of induction, 
\begin{align*}
U_{n+1}((\xi_j)_{1\le j\le n+1})(\rho,\eta)&=e^{i\xi_{n+1}}U_n((\xi_j)_{1\le j\le
 n})(\rho-2^n,\eta-2^n)\\
&=e^{i\sum_{j=1}^nb_n(\rho-2^n)(j)\xi_j+i\xi_{n+1}}\delta_{\rho-2^{n},\eta-2^n}\\
&=e^{i\sum_{j=1}^{n+1}b_{n+1}(\rho)(j)\xi_j}\delta_{\rho,\eta}.
\end{align*}
Thus, the result holds for $n+1$. By induction, the claim holds for any
 $n\in\N$. 
\end{proof}

Let $\g_{j,k}\in\R$ for $j,k\in\{1,2,\cdots,n\}$ with $j<k$. Then, let
$(\g_{j,k})_{1\le j<k\le n}$ denote the vector 
$$
(\g_{1,2},\g_{1,3},\g_{2,3},\g_{1,4},\g_{2,4},\g_{3,4},\cdots,\g_{n-1,n})\in\R^{\frac{n(n-1)}{2}}.
$$
For $n\in\N$ we define $M_n((a_j)_{1\le j\le n}, (\g_{j,k})_{1\le
j<k\le n})\in \Mat(2^n,\C)$ parameterized by 
$(a_j)_{1\le j\le n}\in\C^n$,
$(\g_{j,k})_{1\le j<k\le n}\in\R^{\frac{n(n-1)}{2}}$ as follows.
$$M_1(a_1):=\left(\begin{array}{cc}0 & a_1 \\ \overline{a_1} & 0
		  \end{array}\right).$$
Assume that we have defined $M_m((a_j)_{1\le j\le m}, (\g_{j,k})_{1\le
j<k\le m})\in \Mat(2^m,\C)$. Then, define $M_{m+1}((a_j)_{1\le j\le m+1}, (\g_{j,k})_{1\le
j<k\le m+1})\in \Mat(2^{m+1},\C)$ by
\begin{align*}
&M_{m+1}((a_j)_{1\le j\le m+1}, (\g_{j,k})_{1\le j<k\le m+1})\\
&:=\left(\begin{array}{cc} M_m((a_j)_{1\le j\le m}, (\g_{j,k})_{1\le
j<k\le m}) & a_{m+1}U_m((\g_{j,m+1})_{1\le j\le m}) \\
\overline{a_{m+1}}U_m((\g_{j,m+1})_{1\le j\le m})^* &
 M_m((a_j)_{1\le j\le m}, (\g_{j,k})_{1\le j<k\le m}) \end{array}
 \right).
\end{align*}
We can see from the definition that $M_n((a_j)_{1\le j\le n}, (\g_{j,k})_{1\le
j<k\le n})$ is hermitian. The matrix $M_n((a_j)_{1\le j\le n}, (\g_{j,k})_{1\le
j<k\le n})$ is meant to be a generalization of the hopping matrix in
momentum space. Before substituting the physical parameters, let us
summarize its general properties. For any $M\in \Mat(n,\C)$ let $\|M\|_{n\times n}$ denote
its operator norm $\sup_{\bv\in\C^n\text{ with
}\|\bv\|_{\C^n}=1}\|M\bv\|_{\C^n}$. 

\begin{lemma}\label{lem_general_hopping_properties}
\begin{enumerate}
\item\label{item_general_hopping_characterization}
For any $\rho,\eta\in\cB_n$, 
\begin{align}
&M_n((a_j)_{1\le j\le n},(\g_{j,k})_{1\le j<k\le n})(\rho,\eta)\label{eq_general_hopping_characterization}\\
&=1_{\exists j\in\{1,2,\cdots,n\}\text{
 s.t. }b_n(\rho)(j)<b_n(\eta)(j)\bigwedge b_n(\rho)(k)=b_n(\eta)(k)\
 (\forall k\in\{1,2,\cdots,n\}\backslash \{j\})}\notag\\
&\qquad\cdot e^{i1_{j\ge
 2}\sum_{l=1}^{j-1}b_n(\rho)(l)\g_{l,j}}a_j\notag\\
&\quad + 1_{\exists j\in\{1,2,\cdots,n\}\text{
 s.t. }b_n(\rho)(j)>b_n(\eta)(j)\bigwedge b_n(\rho)(k)=b_n(\eta)(k)\
 (\forall k\in\{1,2,\cdots,n\}\backslash \{j\})}\notag\\
&\qquad\quad\cdot e^{-i1_{j\ge
 2}\sum_{l=1}^{j-1}b_n(\eta)(l)\g_{l,j}}\overline{a_j}.\notag
\end{align}
\item\label{item_unitary_general_hopping_unitary}
For any $(\xi_j)_{1\le j\le n}\in\R^n$,
\begin{align*}
&U_n((\xi_j)_{1\le j\le n})M_n((a_j)_{1\le j\le n},(\g_{j,k})_{1\le
 j<k\le n})U_n((\xi_j)_{1\le j\le n})^*\\
&=M_n((e^{-i\xi_j}a_j)_{1\le j\le n},(\g_{j,k})_{1\le
 j<k\le n}).
\end{align*}
\item\label{item_general_hopping_upper}
\begin{align*}
\|M_n((a_j)_{1\le j\le n},(\g_{j,k})_{1\le j<k\le n})\|_{2^n\times
 2^n}\le\sum_{j=1}^n|a_j|.
\end{align*}
\item\label{item_general_hopping_lower}
\begin{align*}
&\inf_{\bv\in\C^{2^n}\text{ with }\atop
\|\bv\|_{\C^{2^n}}=1}\|M_n((a_j)_{1\le j\le n},(\g_{j,k})_{1\le
 j<k\le n})\bv\|_{\C^{2^n}}^2\\
&\ge \left(1-1_{n\ge
 2}\frac{1}{2}\max_{m\in\{1,2,\cdots,n\}}\left(\sum_{j=1}^{m-1}|1+e^{i\g_{j,m}}|+\sum_{j=m+1}^n|1+e^{i\g_{m,j}}|\right)\right)
\sum_{j=1}^n|a_j|^2.
\end{align*}
\end{enumerate}
\end{lemma}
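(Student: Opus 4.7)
The plan is to prove all four statements by induction on $n$, systematically using the block decomposition
$$
M_{m+1}=\left(\begin{array}{cc} M_m & a_{m+1}U_m\\ \overline{a_{m+1}}U_m^* & M_m\end{array}\right),
$$
where I write $U_m:=U_m((\g_{j,m+1})_{1\le j\le m})$ for brevity.

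For \eqref{item_general_hopping_characterization}, the base case $n=1$ is immediate from the definition of $M_1$. In the inductive step I would split $M_{m+1}(\rho,\eta)$ into four cases by the pair $(b_{m+1}(\rho)(m+1),b_{m+1}(\eta)(m+1))\in\{0,1\}^2$: when the two values coincide, the entry lies in one of the diagonal blocks and equals an entry of $M_m$ to which the inductive hypothesis applies; when they differ, the entry lies in the off-diagonal block and is given explicitly by Lemma \ref{lem_unitary_characterization}. In both subcases the condition ``differ at exactly one position'' lifts from $\cB_m$ to $\cB_{m+1}$ thanks to agreement at position $m+1$. Part \eqref{item_unitary_general_hopping_unitary} then follows immediately from \eqref{item_general_hopping_characterization}: since Lemma \ref{lem_unitary_characterization} says $U_n((\xi_j))$ is diagonal with entry $e^{i\sum_j b_n(\rho)(j)\xi_j}$ at $(\rho,\rho)$, conjugation yields
$$
(U_n((\xi_j))M_n U_n((\xi_j))^*)(\rho,\eta)=e^{i\sum_j(b_n(\rho)(j)-b_n(\eta)(j))\xi_j}M_n(\rho,\eta),
$$
and on any non-zero entry $\rho,\eta$ differ at a single position $j$, so the phase collapses to $e^{\mp i\xi_j}$, which is exactly the effect of replacing $a_j$ by $e^{-i\xi_j}a_j$ in the formula.

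For \eqref{item_general_hopping_upper}, induction on $n$ together with the block decomposition and the triangle inequality gives
$$
\|M_{m+1}\|_{2^{m+1}\times 2^{m+1}}\le\|M_m\|_{2^m\times 2^m}+|a_{m+1}|\|U_m\|_{2^m\times 2^m}=\|M_m\|_{2^m\times 2^m}+|a_{m+1}|,
$$
using that $U_m$ is unitary by Lemma \ref{lem_unitary_characterization}.

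The main obstacle is \eqref{item_general_hopping_lower}: a naive induction with a uniform lower bound $(1-C_n)\sum_j|a_j|^2$ does not close, because the off-block-diagonal correction at each step distributes unequally over the individual $|a_j|^2$. The key idea is to strengthen the inductive hypothesis to the index-dependent form
$$
M_n^2\ge\sum_{j=1}^n\left(1-\frac{1}{2}Q_j^{(n)}\right)|a_j|^2 I_{2^n},\quad Q_j^{(n)}:=\sum_{k<j}|1+e^{i\g_{k,j}}|+\sum_{j<k\le n}|1+e^{i\g_{j,k}}|,
$$
from which the stated claim follows via $\min_j(1-\tfrac{1}{2}Q_j^{(n)})=1-\tfrac{1}{2}\max_j Q_j^{(n)}$ (the factor $1_{n\ge 2}$ absorbing $Q_1^{(1)}=0$). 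For the inductive step I would write $M_{m+1}^2=M_m^2\otimes I_2+|a_{m+1}|^2 I_{2^{m+1}}+B_{m+1}$, where the top-right block of $B_{m+1}$ equals $a_{m+1}(U_m M_m+M_m U_m)$. Applying \eqref{item_unitary_general_hopping_unitary} to move $U_m$ past $M_m$ and using the additivity of $M_m((a_j),(\g_{jk}))$ in the first argument yields the crucial identity
$$
U_m M_m+M_m U_m=U_m M_m(((1+e^{i\g_{j,m+1}})a_j)_{1\le j\le m}),
$$
so by \eqref{item_general_hopping_upper}, the unitarity of $U_m$, and AM--GM $|a_{m+1}||a_j|\le\tfrac{1}{2}(|a_j|^2+|a_{m+1}|^2)$,
$$
\|B_{m+1}\|_{2^{m+1}\times 2^{m+1}}\le\frac{1}{2}\sum_{j=1}^m|1+e^{i\g_{j,m+1}}|(|a_j|^2+|a_{m+1}|^2).
$$
Combining with the strengthened hypothesis through $M_{m+1}^2\ge M_m^2\otimes I_2+|a_{m+1}|^2 I-\|B_{m+1}\|I$ produces precisely the coefficients $1-\tfrac{1}{2}Q_j^{(m+1)}$ for each $j\in\{1,\ldots,m+1\}$, closing the induction.
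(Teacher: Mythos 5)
Your proposal is correct, and in several places it takes a genuinely different and somewhat more elementary route than the paper. For part \eqref{item_unitary_general_hopping_unitary}, the paper runs a separate induction on $n$ directly from the block decomposition, whereas you derive it from \eqref{item_general_hopping_characterization} by observing that $U_n$ is diagonal and that conjugation multiplies each nonzero entry $(\rho,\eta)$ by $e^{i\sum_j(b_n(\rho)(j)-b_n(\eta)(j))\xi_j}=e^{\mp i\xi_j}$; this is a cleaner deduction that makes the logical dependence explicit. For part \eqref{item_general_hopping_upper}, the paper derives the identity for $M_{n+1}^2$ and reads off a quadratic-form estimate; you instead split $M_{m+1}$ into a block-diagonal part of norm $\|M_m\|$ and an off-diagonal part whose square is $|a_{m+1}|^2 I$, so the triangle inequality gives $\|M_{m+1}\|\le\|M_m\|+|a_{m+1}|$ in one line---strictly simpler. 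For part \eqref{item_general_hopping_lower}, the two strategies differ more substantially. The paper carries the full quadratic form $\sum_j|a_j|^2-\sum_{m=2}^n|a_m|\sum_{j<m}|1+e^{i\g_{j,m}}||a_j|$ through the induction without applying AM--GM, and only at the end introduces the matrix $S$ and bounds $\|S\|_{n\times n}$ by the maximum row sum. You instead apply AM--GM inside each inductive step, which forces you to strengthen the inductive hypothesis to the index-weighted form $M_n^2\ge\sum_j(1-\tfrac12 Q_j^{(n)})|a_j|^2 I$; the check that the coefficients update correctly from $Q_j^{(m)}$ to $Q_j^{(m+1)}$ is clean and avoids the separate Perron--Frobenius--type bound on $\|S\|$. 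Your observation that a uniform constant does not close the induction is precisely the reason the paper carries the whole quadratic form instead---two resolutions of the same difficulty, arriving at the identical final constant. All four parts of your argument go through.
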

\begin{proof}
\eqref{item_general_hopping_characterization}: Assume that the result is
 true for $\rho,\eta\in\cB_n$ with $\rho\le \eta$. Then, the result for
 $\rho,\eta\in\cB_n$ with $\rho>\eta$ follows from the hermiticity
of $M_n$. Thus, it suffices to prove the equality for $\rho,\eta\in
 \cB_n$ with $\rho\le \eta$. It holds for $n=1$ by definition. Assume
 that it is true for some $n\in\N$. Take $\rho,\eta\in\cB_{n+1}$
 satisfying $\rho\le \eta$. It follows that $b_{n+1}(\rho)(n+1)\le
 b_{n+1}(\eta)(n+1)$. If $b_{n+1}(\rho)(n+1)=b_{n+1}(\eta)(n+1)$, by
 setting $m:=b_{n+1}(\rho)(n+1)2^n$ we see that
\begin{align*}
&M_{n+1}((a_j)_{1\le j\le n+1},(\g_{j,k})_{1\le j<k\le
 n+1})(\rho,\eta)\\
&= M_{n}((a_j)_{1\le j\le n},(\g_{j,k})_{1\le j<k\le
 n})(\rho-m,\eta-m)\\
&=1_{\exists j\in\{1,2,\cdots,n\}\text{
 s.t. }b_n(\rho-m)(j)<b_n(\eta-m)(j)\bigwedge
 b_n(\rho-m)(k)=b_n(\eta-m)(k)\ (\forall
 k\in\{1,2,\cdots,n\}\backslash\{j\})}\\
&\quad\cdot e^{i1_{j\ge
 2}\sum_{l=1}^{j-1}b_n(\rho-m)(l)\g_{l,j}}a_j\\
&=1_{\exists j\in\{1,2,\cdots,n+1\}\text{
 s.t. }b_{n+1}(\rho)(j)<b_{n+1}(\eta)(j)\bigwedge
 b_{n+1}(\rho)(k)=b_{n+1}(\eta)(k)\ (\forall
 k\in\{1,2,\cdots,n+1\}\backslash\{j\})}\\
&\quad\cdot e^{i1_{j\ge
 2}\sum_{l=1}^{j-1}b_{n+1}(\rho)(l)\g_{l,j}}a_j\\
&=(\text{the right-hand side of
 }\eqref{eq_general_hopping_characterization}).
\end{align*}
If $b_{n+1}(\rho)(n+1)< b_{n+1}(\eta)(n+1)$, by Lemma
 \ref{lem_unitary_characterization}, 
\begin{align*}
&M_{n+1}((a_j)_{1\le j\le n+1},(\g_{j,k})_{1\le j<k\le
 n+1})(\rho,\eta)\\
&=a_{n+1}U_n((\g_{j,n+1})_{1\le j\le
 n})(\rho,\eta-b_{n+1}(\eta)(n+1)2^n)\\
&=e^{i\sum_{l=1}^nb_n(\rho)(l)\g_{l,n+1}}\delta_{\rho,\eta-b_{n+1}(\eta)(n+1)2^n}
a_{n+1}\\
&=1_{b_{n+1}(\rho)(k)=b_{n+1}(\eta)(k)\
 (\forall
 k\in\{1,2,\cdots,n\})}e^{i\sum_{l=1}^nb_n(\rho)(l)\g_{l,n+1}}a_{n+1}\\
&=(\text{the right-hand side of
 }\eqref{eq_general_hopping_characterization}).
\end{align*}
Thus, the results hold for $n+1$. The induction with $n$ proves the claim
 for any $n\in\N$.

\eqref{item_unitary_general_hopping_unitary}: The equality for $n=1$ can
 be confirmed by a direct calculation. Assume that it is true for some
 $n\in\N$. By the definition and the hypothesis of induction, 
\begin{align*}
 &U_{n+1}((\xi_j)_{1\le j\le n+1})
M_{n+1}((a_j)_{1\le j\le n+1},(\g_{j,k})_{1\le j<k\le
 n+1})U_{n+1}((\xi_j)_{1\le j\le n+1})^*\\
&=\Bigg(\begin{array}{c} U_{n}((\xi_j)_{1\le j\le n})
M_{n}((a_j)_{1\le j\le n},(\g_{j,k})_{1\le j<k\le
 n})U_{n}((\xi_j)_{1\le j\le n})^* \\
e^{i\xi_{n+1}}\overline{a_{n+1}}U_n((\g_{j,n+1})_{1\le j\le
  n})^*
\end{array}\\
&\qquad\qquad\qquad\ \begin{array}{c} e^{-i\xi_{n+1}}a_{n+1}U_n((\g_{j,n+1})_{1\le j\le n}) \\
  U_{n}((\xi_j)_{1\le j\le n})
M_{n}((a_j)_{1\le j\le n},(\g_{j,k})_{1\le j<k\le
 n})U_{n}((\xi_j)_{1\le j\le n})^*
\end{array}\Bigg)\\
&=M_{n+1}((e^{-i\xi_j}a_j)_{1\le j\le n+1},(\g_{j,k})_{1\le
 j<k\le n+1}).
\end{align*}
Thus, by induction the equality holds for any $n\in\N$. 

\eqref{item_general_hopping_upper}: We can see from the definition that 
the inequality holds for $n=1$. Assume that it holds for some $n\in\N$.
By the unitary property of $U_n((\g_{j,n+1})_{1\le j\le n})$ and the claim
 \eqref{item_unitary_general_hopping_unitary} we have that
\begin{align*}
&M_{n+1}((a_j)_{1\le j\le n+1},(\g_{j,k})_{1\le j<k\le n+1})^2=\\
&\Bigg(\begin{array}{c} M_n((a_j)_{1\le j\le n},(\g_{j,k})_{1\le
 j<k\le n})^2+|a_{n+1}|^2I_{2^n} \\
\overline{a_{n+1}}U_n((\g_{j,n+1})_{1\le j\le
 n})^*M_n(((1+e^{-i\g_{j,n+1}})a_j)_{1\le j\le n},(\g_{j,k})_{1\le j<k\le n})
\end{array}\\
&\qquad\quad \begin{array}{c}a_{n+1}M_n(((1+e^{-i\g_{j,n+1}})a_j)_{1\le j\le
  n},(\g_{j,k})_{1\le j<k\le n})U_n((\g_{j,n+1})_{1\le j\le
 n})\\
M_n((a_j)_{1\le j\le n},(\g_{j,k})_{1\le
 j<k\le n})^2+|a_{n+1}|^2I_{2^n} 
\end{array}\Bigg).
\end{align*}
Thus, for any $\bv_1,\bv_2\in\C^{2^n}$,
\begin{align}
&\left\|M_{n+1}((a_j)_{1\le j\le n+1},(\g_{j,k})_{1\le j<k\le
 n+1})\left(\begin{array}{c}\bv_1 \\\bv_2\end{array}
\right)\right\|^2_{\C^{2^{n+1}}}=\label{eq_key_matrix_equality}\\
&\sum_{l=1}^2\<\bv_l,M_{n}((a_j)_{1\le j\le n},(\g_{j,k})_{1\le j<k\le
 n})^2\bv_l+|a_{n+1}|^2\bv_l\>_{\C^{2^n}}\notag\\
&+2\Re \<\bv_1,\notag\\
&\qquad a_{n+1}M_n(((1+e^{-i\g_{j,n+1}})a_j)_{1\le
 j\le n},(\g_{j,k})_{1\le j<k\le
 n})U_n((\g_{j,n+1})_{1\le j\le n})\bv_2\>_{\C^{2^n}}.\notag
\end{align}
It follows from \eqref{eq_key_matrix_equality} and the hypothesis of
 induction that 
\begin{align*}
&\|M_{n+1}((a_j)_{1\le j\le n+1},(\g_{j,k})_{1\le j<k\le
 n+1})\|_{2^{n+1}\times 2^{n+1}}^2\\
&\le \|M_{n}((a_j)_{1\le j\le n},(\g_{j,k})_{1\le j<k\le
 n})\|_{2^{n}\times 2^{n}}^2+|a_{n+1}|^2\\
&\quad+2|a_{n+1}|\|M_n(((1+e^{-i\g_{j,n+1}})a_j)_{1\le
 j\le n},(\g_{j,k})_{1\le j<k\le n})\|_{2^n\times
 2^n}\\
&\qquad\cdot\sup_{\bv_1,\bv_2\in \C^{2^n}\text{ with }\atop
 \|\bv_1\|^2_{\C^{2^n}}+\|\bv_2\|^2_{\C^{2^n}}=1}\|\bv_1\|_{\C^{2^n}}\|\bv_2\|_{\C^{2^n}}\\
&\le \|M_{n}((a_j)_{1\le j\le n},(\g_{j,k})_{1\le j<k\le
 n})\|_{2^{n}\times 2^{n}}^2+|a_{n+1}|^2\\
&\quad+|a_{n+1}|\|M_n(((1+e^{-i\g_{j,n+1}})a_j)_{1\le
 j\le n},(\g_{j,k})_{1\le j<k\le n})\|_{2^n\times
 2^n}\\
&\le
 \left(\sum_{j=1}^n|a_j|\right)^2+|a_{n+1}|^2+|a_{n+1}|\sum_{j=1}^n|1+e^{i\g_{j,n+1}}||a_j|\\
&\le \left(\sum_{j=1}^{n+1}|a_j|\right)^2.
\end{align*}
Thus, the inequality holds for $n+1$. The induction with $n$ ensures the
 result.

\eqref{item_general_hopping_lower}: First let us prove that
\begin{align}
&\inf_{\bv\in\C^{2^n}\text{ with }\atop
\|\bv\|_{\C^{2^n}}=1}\|M_n((a_j)_{1\le j\le n},(\g_{j,k})_{1\le
 j<k\le n})\bv\|_{\C^{2^n}}^2\label{eq_pre_hopping_lower_bound}\\
&\ge \sum_{j=1}^n|a_j|^2-1_{n\ge
 2}\sum_{m=2}^n|a_m|\sum_{j=1}^{m-1}|1+e^{i\g_{j,m}}||a_j|.\notag
\end{align}
We can check that the inequality \eqref{eq_pre_hopping_lower_bound}
 holds for $n=1$. Assume that it holds for some $n\in\N$. By
 \eqref{eq_key_matrix_equality}, the induction hypothesis and the claim
 \eqref{item_general_hopping_upper}, 
\begin{align*}
&\inf_{\bv\in\C^{2^{n+1}}\text{ with }\atop
\|\bv\|_{\C^{2^{n+1}}}=1}\|M_{n+1}((a_j)_{1\le j\le n+1},(\g_{j,k})_{1\le
 j<k\le n+1})\bv\|_{\C^{2^{n+1}}}^2\\
&\ge \inf_{\bv\in\C^{2^n}\text{ with }\atop
\|\bv\|_{\C^{2^n}}=1}\|M_n((a_j)_{1\le j\le n},(\g_{j,k})_{1\le
 j<k\le n})\bv\|_{\C^{2^n}}^2\\
&\quad +|a_{n+1}|^2-|a_{n+1}|\|M_{n}(((1+e^{-i\g_{j,n+1}})a_j)_{1\le j\le n},(\g_{j,k})_{1\le
 j<k\le n})\|_{2^n\times 2^n}\\
 &\ge \sum_{j=1}^n|a_j|^2-1_{n\ge
 2}\sum_{m=2}^n|a_m|\sum_{j=1}^{m-1}|1+e^{i\g_{j,m}}||a_j|\\
&\quad +|a_{n+1}|^2-|a_{n+1}|\sum_{j=1}^n|1+e^{i\g_{j,n+1}}||a_j|\\
&=\sum_{j=1}^{n+1}|a_j|^2-\sum_{m=2}^{n+1}|a_m|\sum_{j=1}^{m-1}|1+e^{i\g_{j,m}}||a_j|.
\end{align*}
Thus, the inequality \eqref{eq_pre_hopping_lower_bound} holds for
 $n+1$. By induction it holds true for any $n\in\N$.

Define $S\in \Mat(n,\C)$ by
\begin{align*}
S(j,k):=\left\{\begin{array}{ll}\frac{1}{2}|1+e^{i\g_{j,k}}| &
	 \text{if }j<k,\\
\frac{1}{2}|1+e^{i\g_{k,j}}| &
	 \text{if }j>k,\\
0 & \text{if }j=k.
\end{array}\right.
\end{align*}
It follows from the inequality \eqref{eq_pre_hopping_lower_bound}
 that
\begin{align*}
&\inf_{\bv\in\C^{2^n}\text{ with }\atop
\|\bv\|_{\C^{2^n}}=1}\|M_n((a_j)_{1\le j\le n},(\g_{j,k})_{1\le
 j<k\le n})\bv\|_{\C^{2^n}}^2\\
&\ge \sum_{j=1}^n|a_j|^2-1_{n\ge
 2}\sum_{j=1}^n\sum_{k=1}^nS(j,k)|a_j||a_k|\ge (1-1_{n\ge 2}\|S\|_{n\times n})\sum_{j=1}^n|a_j|^2.
\end{align*}
It remains to prove that
\begin{align}
\|S\|_{n\times n}\le \frac{1}{2}\max_{m\in\{1,2,\cdots,n\}}\left(
\sum_{j=1}^{m-1}|1+e^{i\g_{j,m}}|+\sum_{j=m+1}^n|1+e^{i\g_{m,j}}|
\right).\label{eq_inside_hopping_upper_bound}
\end{align}
Though the inequality of this form is well-known (see e.g. \\
\cite[\mbox{Lemma 3.1.1}]{BR}), we give the
 proof for completeness.
Let $\alpha\in\R$ be an eigen value of $S$ such that
 $|\alpha|=\|S\|_{n\times n}$. Let $\bv=(v_1,\cdots,v_n)^t\in \C^n$ be its eigen vector. We
 can choose $l\in\{1,2,\cdots,n\}$ so that
 $|v_l|=\max_{j\in\{1,2,\cdots,n\}}|v_j|$. Then,
\begin{align*}
\|S\|_{n\times n}=\left|\frac{1}{v_l}\sum_{j=1}^nS(l,j)v_j\right|\le
 \sum_{j=1}^n|S(l,j)|\le \max_{m\in\{1,2,\cdots,n\}}\sum_{j=1}^n|S(m,j)|,
\end{align*}
which is \eqref{eq_inside_hopping_upper_bound}.
\end{proof}

Now we fix $d\in\N_{\ge 2}$ and use
the notations $\G(L)$, $\cB$, $b$ instead of $\G_d(L)$, $\cB_d$, $b_d$ respectively. 
Here we formulate the hopping matrix of our multi-band model. Set $\bpi:=(\pi,\pi,\cdots,\pi)\in \R^d$. For parameters
$\beps=(\eps_j)_{1\le j\le d}\in\R^d$, $\bga=(\g_{j,k})_{1\le j<k\le
d}\in \R^{d(d-1)/2}$ we define $E(\beps,\bga)\in\Map(\R^d,\Mat(2^d,\C))$ by
\begin{align*}
E(\beps,\bga)(\bk):=M_d\left(\left(t_j(1+e^{i\frac{\pi}{L}\eps_j-ik_j})\left(\frac{1}{2}\right)^{1_{L=1}}\right)_{1\le
 j\le d}, -\bga\right),\quad (\bk\in\R^d).
\end{align*}
We will see that $E(\beps,\bga)$ is equal to the hopping matrix of our multi-band
Hamiltonian in momentum
space if we replace $\beps, \bga$ by the actual parameters.
 The next lemma follows from Lemma
\ref{lem_general_hopping_properties} and the definition of
$E(\beps,\bga)$.

\begin{lemma}\label{lem_hopping_properties} The following statements
 hold for any $\bk\in\R^d$, $\beps\in\R^d$, $\bga\in \R^{d(d-1)/2}$.
\begin{enumerate}
\item\label{item_unitary_hopping_minus}
\begin{align*}
U_d(\bpi)E(\beps,\bga)(\bk)U_d(\bpi)^*=-E(\beps,\bga)(\bk).
\end{align*}
\item\label{item_unitary_hopping_invariance}
\begin{align*}
&U_d\left(-\frac{\pi}{L}\beps\right)U_d(\bk)E(\beps,\bga)\left(-\bk+\frac{2\pi}{L}\beps\right)U_d(\bk)^*U_d\left(-\frac{\pi}{L}\beps\right)^*\\
&=E(\beps,\bga)(\bk).
\end{align*}
\item\label{item_hopping_upper}
\begin{align*}
\|E(\beps,\bga)(\bk)\|_{2^d\times 2^d}\le 2\sum_{j=1}^dt_j.
\end{align*}
\item\label{item_derivative_hopping_upper}
\begin{align*}
&\left\|\left(\frac{\partial}{\partial
 k_j}\right)^mE(\beps,\bga)(\bk)\right\|_{2^d\times 2^d}\le t_j,\
 (\forall j\in\{1,2,\cdots,d\},m\in\N_{\ge 1}).
\end{align*}
\item\label{item_hopping_lower}
\begin{align*}
&\inf_{\bv\in\C^{2^d}\text{ with }\atop
\|\bv\|_{\C^{2^d}}=1}\|E(\beps,\bga)(\bk)\bv\|_{\C^{2^d}}\\
&\ge \left(1-\frac{1}{2}\max_{m\in\{1,2,\cdots,d\}}\left(\sum_{j=1}^{m-1}|1+e^{i\g_{j,m}}|+\sum_{j=m+1}^d|1+e^{i\g_{m,j}}|\right)\right)^{\frac{1}{2}}\\
&\quad\cdot \frac{1}{2}\min_{j\in\{1,2,\cdots,d\}}t_j\cdot\left(\sum_{l=1}^d|1+e^{i\frac{\pi}{L}\eps_l-ik_l}|^2\right)^{\frac{1}{2}}.
\end{align*}
\end{enumerate}
\end{lemma}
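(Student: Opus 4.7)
The plan is to specialize each item of Lemma~\ref{lem_general_hopping_properties} to the coefficients
\begin{align*}
a_j(\bk):=t_j(1+e^{i\frac{\pi}{L}\eps_j-ik_j})\left(\frac{1}{2}\right)^{1_{L=1}},\quad j=1,\ldots,d,
\end{align*}
so that $E(\beps,\bga)(\bk)=M_d((a_j(\bk))_{1\le j\le d},-\bga)$, and to record two preliminary observations. First, Lemma~\ref{lem_unitary_characterization} shows that each $U_d(\bxi)$ is diagonal with entries $e^{i\sum_j b(\rho)(j)\xi_j}$, hence $U_d(\bxi_1)U_d(\bxi_2)=U_d(\bxi_1+\bxi_2)$. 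Second, the entrywise description in Lemma~\ref{lem_general_hopping_properties}\eqref{item_general_hopping_characterization} shows that every entry of $M_d((a_j),\bga)$ is either zero, a unimodular multiple of a single $a_j$, or a unimodular multiple of a single $\overline{a_j}$; in particular $M_d((-a_j),\bga)=-M_d((a_j),\bga)$.

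Claim~\eqref{item_unitary_hopping_minus} then follows by applying Lemma~\ref{lem_general_hopping_properties}\eqref{item_unitary_general_hopping_unitary} with $\xi_j=\pi$, which replaces each $a_j$ by $-a_j$, combined with the real-linearity above. For claim~\eqref{item_unitary_hopping_invariance} the direct computation $a_j(-\bk+\frac{2\pi}{L}\beps)=\overline{a_j(\bk)}$ gives $E(\beps,\bga)(-\bk+\frac{2\pi}{L}\beps)=M_d((\overline{a_j(\bk)}),-\bga)$; I then invoke Lemma~\ref{lem_general_hopping_properties}\eqref{item_unitary_general_hopping_unitary} with $\xi_j=k_j-\frac{\pi}{L}\eps_j$ and verify $e^{-i(k_j-\frac{\pi}{L}\eps_j)}\overline{a_j(\bk)}=a_j(\bk)$; finally the multiplicative law collapses $U_d(-\frac{\pi}{L}\beps)U_d(\bk)$ into $U_d(\bk-\frac{\pi}{L}\beps)$, which matches the $\bxi$ just used.

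Claim~\eqref{item_hopping_upper} is immediate from Lemma~\ref{lem_general_hopping_properties}\eqref{item_general_hopping_upper} together with $|a_j(\bk)|\le 2t_j$. For claim~\eqref{item_derivative_hopping_upper} the entrywise picture recalled above implies that $(\partial/\partial k_j)^m$ acts only on the entries carrying the $j$-th coefficient, and because the derivative with respect to the real variable $k_j$ commutes with complex conjugation, the result equals $M_d((c_l)_{1\le l\le d},-\bga)$ with $c_l=0$ for $l\neq j$ and $c_j=(\partial/\partial k_j)^m a_j(\bk)$, whose modulus is bounded by $t_j$; Lemma~\ref{lem_general_hopping_properties}\eqref{item_general_hopping_upper} then delivers the bound. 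For claim~\eqref{item_hopping_lower} I apply Lemma~\ref{lem_general_hopping_properties}\eqref{item_general_hopping_lower} (noting that $|1+e^{i\g_{j,m}}|=|1+e^{-i\g_{j,m}}|$, so the sign of $\bga$ is harmless), take the square root, and observe that
\begin{align*}
\sum_{j=1}^d|a_j(\bk)|^2\ge \frac{1}{4}\left(\min_{j}t_j\right)^2\sum_{l=1}^d|1+e^{i\frac{\pi}{L}\eps_l-ik_l}|^2
\end{align*}
uniformly in $L$, the factor $1/4$ coming from the $L=1$ case and being absorbed into the $1/2$ on the right-hand side of the stated bound.

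The argument is mechanical once the coefficients $a_j(\bk)$ are in hand. The only place that really requires vigilance is claim~\eqref{item_unitary_hopping_invariance}, where one has to juggle the signs in the momentum shift $-\bk+\frac{2\pi}{L}\beps$, the conjugating exponents $\xi_j=k_j-\frac{\pi}{L}\eps_j$, and the algebraic identity $e^{-i(k_j-\frac{\pi}{L}\eps_j)}\overline{a_j(\bk)}=a_j(\bk)$ consistently; if any sign is flipped the two sides of the claimed identity cease to match.
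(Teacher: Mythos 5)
Your proposal is correct and is exactly the approach the paper intends: the paper supplies no proof of Lemma~\ref{lem_hopping_properties}, merely observing that it "follows from Lemma~\ref{lem_general_hopping_properties} and the definition of $E(\beps,\bga)$," and your write-up fills in precisely those routine specializations (identifying $a_j(\bk)$, the relation $a_j(-\bk+\tfrac{2\pi}{L}\beps)=\overline{a_j(\bk)}$, the multiplicativity of $U_d$, the sign insensitivity $|1+e^{i\g}|=|1+e^{-i\g}|$, and the $(1/2)^{1_{L=1}}$ bookkeeping that produces the factor $1/2$ in item~\eqref{item_hopping_lower}).
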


We define $\nu\in\Map(\cB\times\G(L),\G(2L))$ by
$\nu((\rho,\bx)):=2\bx+b(\rho)$. Note that $\nu$ is
bijective. The momentum lattice $\G(L)^*$, dual to $\G(L)$, is defined by 
$$
\G(L)^*:=\left\{0,\frac{2\pi}{L},\cdots,\frac{2\pi}{L}(L-1)\right\}^d.
$$
With the physical parameters $\eps_l^L\in\{0,1\}$, $\theta_{j,k}\in\R$ ($l,j,k\in
\{1,2,\cdots,d\}$ with $j<k$) introduced in Subsection
\ref{subsec_hamiltonian}, we set $\beps^L:=(\eps_j^L)_{1\le j\le d}$,
$\btheta:=(\theta_{j,k})_{1\le j<k\le d}$.
Then, we define $F(\beps^L,\btheta)\in\Map((\cB\times
\G(L))^2,\C)$, $G(\beps^L,\btheta)\in$ 
$\Map(\G(2L)^2,\C)$, which formulate
the hopping matrices, as follows.
\begin{align*}
&F(\beps^L,\btheta)((\rho,\bx),(\eta,\by)):=\frac{1}{L^d}\sum_{\bk\in\G(L)^*}e^{i\<\bx-\by,\bk\>}E(\beps^L,\btheta)(\bk)(\rho,\eta),\\
&(\forall (\rho,\bx),(\eta,\by)\in\cB\times\G(L)),\\
&G(\beps^L,\btheta)(\bx,\by):=F(\beps^L,\btheta)(\nu^{-1}(\bx),\nu^{-1}(\by)),\quad
 (\forall \bx,\by\in\G(2L)).
\end{align*}
Moreover, we define $\varphi\in\Map(\Z^d\times\Z^d,\R)$ by 
\begin{align*}
\varphi(\bx,\by):=\left\{\begin{array}{l} (-1)^{x_j+1}1_{j\ge
  2}\sum_{l=1}^{j-1}1_{x_l\in2\Z+1}\theta_{l,j}+1_{x_j\in2\Z}\frac{\pi}{L}\eps_{j}^L\\
\quad\text{ if }\exists j\in\{1,2,\cdots,d\}\text{ s.t. }\bx-\by=\be_j\text{
 in }(\Z/2L\Z)^d,\\
 (-1)^{x_j+1}1_{j\ge
  2}\sum_{l=1}^{j-1}1_{x_l\in2\Z+1}\theta_{l,j}-1_{x_j\in2\Z+1}\frac{\pi}{L}\eps_{j}^L\\
\quad\text{ if }\exists j\in\{1,2,\cdots,d\}\text{ s.t. }\bx-\by=-\be_j\text{
 in }(\Z/2L\Z)^d,\\
0\text{ otherwise.}
\end{array}
\right.
\end{align*}
Note that $\varphi$ satisfies \eqref{eq_phase_condition_general}.

\begin{lemma}\label{lem_position_hopping_properties}
\begin{enumerate}
\item\label{item_hopping_amplitude_phase}
\begin{align*}
&G(\beps^L,\btheta)(\bx,\by)=|G(\beps^L,\btheta)(\bx,\by)|e^{i\varphi(\bx,\by)},\
 (\forall \bx,\by\in\G(2L)).
\end{align*}
\item\label{item_hopping_amplitude}
\begin{align*}
&|G(\beps^L,\btheta)(\bx,\by)|\\
&=\left\{\begin{array}{ll} t_j &\text{if
				}\exists j\in\{1,2,\cdots,d\}\text{
				 s.t. }\bx-\by=\be_j\text{ or
				 }-\be_j\text{ in }(\Z/2L\Z)^d,\\
0 &\text{otherwise,}\end{array}\right.\\
&(\forall \bx,\by\in\G(2L)).
\end{align*}
\item\label{item_hopping_flux_per_plaquette}
\begin{align*}
&\varphi(\bx+\be_j,\bx)+\varphi(\bx+\be_j+\be_k,\bx+\be_j)+\varphi(\bx+\be_k,\bx+\be_j+\be_k)\\
&\quad+\varphi(\bx,\bx+\be_k)\\
&=(-1)^{x_j+x_k}\theta_{j,k},\quad (\forall \bx\in\Z^d,j,k\in\{1,2,\cdots,d\}\text{ with }j<k).
\end{align*}
\item\label{item_hopping_flux_per_circle}
\begin{align*}
\sum_{m=0}^{2L-1}\varphi(\bx+(m+1)\be_j,\bx+m\be_j)=\eps_j^L\pi,\quad(\forall
 \bx\in\Z^d,j\in\{1,2,\cdots,d\}).
\end{align*}
\end{enumerate}
\end{lemma}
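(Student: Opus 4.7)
The overall strategy is to derive parts (1) and (2) from the explicit matrix entries of $E(\beps^L,\btheta)(\bk)$ given by Lemma \ref{lem_general_hopping_properties}\eqref{item_general_hopping_characterization}, and to verify parts (3) and (4) by direct evaluation of the piecewise definition of $\varphi$.

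For (1) and (2), the entry $E(\beps^L,\btheta)(\bk)(\rho,\eta)$ is nonzero only when the bit strings $b(\rho)$ and $b(\eta)$ differ in exactly one coordinate $j$, in which case it equals the momentum-dependent factor $t_j(1+e^{i\pi\eps_j^L/L-ik_j})$ (or its complex conjugate, depending on which bit is larger) times a $\bk$-independent phase built from $\theta_{l,j}$ with $l<j$. The inverse Fourier sum over $\G(L)^*$ converts $1+e^{i\pi\eps_j^L/L-ik_j}$ into a sum of two Kronecker deltas: one supported on $\bx=\by$ and one supported on the shift $x_j=y_j+1\pmod L$, $x_k=y_k$ $(k\ne j)$, with coefficient $e^{i\pi\eps_j^L/L}$ on the latter. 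Translating back through the bijection $\nu$, these two deltas together with the two choices of which bit is larger exhaust exactly the configurations where $\nu(\rho,\bx)-\nu(\eta,\by)=\pm\be_j$ in $(\Z/2L\Z)^d$, and each such configuration receives a single contribution of modulus $t_j$, establishing (2).

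The phase carried by the surviving term is matched to $\varphi(\bx',\by')$, with $\bx'=\nu(\rho,\bx)$ and $\by'=\nu(\eta,\by)$, through a sub-case analysis. For instance, when $b(\rho)(j)<b(\eta)(j)$ and $x_j=y_j+1\pmod L$ (so $\bx'-\by'=\be_j$ and $x'_j$ is even), the surviving phase is $e^{-i\sum_{l<j}b(\rho)(l)\theta_{l,j}+i\pi\eps_j^L/L}$; using the identity $b(\rho)(l)=1_{x'_l\in 2\Z+1}$ for $l\ne j$, this matches the first branch of $\varphi(\bx',\by')$ because $(-1)^{x'_j+1}=-1$ and $1_{x'_j\in 2\Z}=1$. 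The three remaining sub-cases (the same bit ordering with $x_j=y_j$, and the two conjugate cases with $b(\rho)(j)>b(\eta)(j)$) are handled by parallel parity bookkeeping, establishing (1).

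Parts (3) and (4) follow from a direct evaluation of the piecewise $\varphi$. In the plaquette sum, the two $j$-direction edges contribute $\theta_{l,j}$-terms of opposite sign that cancel, as do their $\frac{\pi}{L}\eps_j^L$ indicators; the two $k$-direction edges differ only by a bit-flip $x_j\mapsto x_j+1$ inside one indicator $1_{(\bx+\be_j)_j\in 2\Z+1}$, whose net effect is $(-1)^{x_k+1}(1_{x_j\in 2\Z}-1_{x_j\in 2\Z+1})\theta_{j,k}=(-1)^{x_j+x_k}\theta_{j,k}$, while the $\frac{\pi}{L}\eps_k^L$ indicators cancel. For (4), every term in the $2L$-fold sum uses the first branch of $\varphi$; the alternating prefactor $(-1)^{x_j+m+2}$ makes the $\theta_{l,j}$-contributions sum to zero over $m\in\{0,\ldots,2L-1\}$, and the indicator $1_{(\bx+m\be_j)_j+1\in 2\Z}$ is triggered for exactly $L$ values of $m$, yielding $L\cdot\frac{\pi}{L}\eps_j^L=\pi\eps_j^L$. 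The main obstacle is the careful bookkeeping of parities under $\nu$ and across the piecewise branches of $\varphi$; once these are organized systematically, the remaining computations are routine.
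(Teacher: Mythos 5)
Your proposal follows the same route as the paper: you read off the entries of $E(\beps^L,\btheta)(\bk)$ from Lemma \ref{lem_general_hopping_properties}\eqref{item_general_hopping_characterization}, convert the momentum sum over $\G(L)^*$ into Kronecker deltas in position space, match phases to $\varphi$ by parity bookkeeping through $\nu$, and verify (3), (4) by direct evaluation of the piecewise definition of $\varphi$ — exactly what the paper does. There is a small slip in your plaquette computation: the two $k$-direction edges differ not only by the bit-flip $x_j\mapsto x_j+1$ inside the $\theta_{j,k}$-indicator but also by the $x_k$-parity flip in the overall sign, so the surviving term is $(-1)^{x_k+2}1_{x_j+1\in 2\Z+1}\theta_{j,k}+(-1)^{x_k+1}1_{x_j\in 2\Z+1}\theta_{j,k}=(-1)^{x_k}\bigl(1_{x_j\in 2\Z}-1_{x_j\in 2\Z+1}\bigr)\theta_{j,k}$, with $(-1)^{x_k}$ rather than $(-1)^{x_k+1}$; after this correction the conclusion $(-1)^{x_j+x_k}\theta_{j,k}$ still holds.
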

\begin{proof}
\eqref{item_hopping_amplitude_phase}, \eqref{item_hopping_amplitude}:
Take $\bx,\by\in\G(2L)$. Let
 $(\rho,\hat{\bx}),(\eta,\hat{\by})$ $(\in\cB\times\G(L))$ be such that
$(\rho,\hat{\bx})=\nu^{-1}(\bx)$, $(\eta,\hat{\by})=\nu^{-1}(\by)$.
Moreover, let $b(\rho)=(\rho_1,\rho_2,\cdots,\rho_d)$,
 $b(\eta)=(\eta_1,\eta_2,\cdots,\eta_d)$. By Lemma
 \ref{lem_general_hopping_properties}
 \eqref{item_general_hopping_characterization} and the assumption that
 $\eps_l^1=0$ $(\forall l\in \{1,2,\cdots,d\})$,
\begin{align*}
&G(\beps^L,\btheta)(\bx,\by)\\
&=\frac{1}{L^d}\sum_{\bk\in\G(L)^*}e^{i\<\hat{\bx}-\hat{\by},\bk\>}\Bigg(
1_{\exists j\in\{1,2,\cdots,d\}\text{ s.t. }\rho_j<\eta_j\bigwedge
 \rho_m=\eta_m\ (\forall m\in\{1,2,\cdots,d\}\backslash\{j\})}\\
&\qquad\cdot e^{-i1_{j\ge
 2}\sum_{l=1}^{j-1}\rho_l\theta_{l,j}}t_j(1+e^{i\frac{\pi}{L}\eps_{j}^L-ik_j})\left(\frac{1}{2}\right)^{1_{L=1}}\\
&\quad + 1_{\exists j\in\{1,2,\cdots,d\}\text{ s.t. }\rho_j>\eta_j\bigwedge
 \rho_m=\eta_m\ (\forall m\in\{1,2,\cdots,d\}\backslash\{j\})}\\
&\qquad \cdot e^{i1_{j\ge
 2}\sum_{l=1}^{j-1}\eta_l\theta_{l,j}}t_j(1+e^{-i\frac{\pi}{L}\eps_{j}^L+ik_j})\left(\frac{1}{2}\right)^{1_{L=1}}\Bigg)\\
&=1_{\exists j\in\{1,2,\cdots,d\}\text{ s.t. }\rho_j<\eta_j\bigwedge
 \rho_m=\eta_m\ (\forall m\in\{1,2,\cdots,d\}\backslash\{j\})}\\
&\qquad\cdot e^{-i1_{j\ge
 2}\sum_{l=1}^{j-1}\rho_l\theta_{l,j}}t_j
\left(1_{\hat{x}_j=\hat{y}_j}+e^{i\frac{\pi}{L}\eps_j^L}1_{\hat{x}_j=\hat{y}_j+1\
 (\text{mod} L)}\right)\left(\frac{1}{2}\right)^{1_{L=1}}\prod_{m=1\atop
 m\neq j}^d1_{\hat{x}_m=\hat{y}_m}\\
&\quad +
1_{\exists j\in\{1,2,\cdots,d\}\text{ s.t. }\rho_j>\eta_j\bigwedge
 \rho_m=\eta_m\ (\forall m\in\{1,2,\cdots,d\}\backslash\{j\})}\\
&\qquad\cdot e^{i1_{j\ge
 2}\sum_{l=1}^{j-1}\eta_l\theta_{l,j}}t_j
\left(1_{\hat{x}_j=\hat{y}_j}+e^{-i\frac{\pi}{L}\eps_j^L}1_{\hat{x}_j=\hat{y}_j-1\
 (\text{mod} L)}\right)\left(\frac{1}{2}\right)^{1_{L=1}}\prod_{m=1\atop
 m\neq j}^d1_{\hat{x}_m=\hat{y}_m}\\
&=1_{\exists j\in\{1,2,\cdots,d\}\text{ s.t. }\bx-\by=\be_j\text{ or }-\be_j
\text{ in }(\Z/2L\Z)^d\bigwedge x_j\in 2\Z}e^{-i1_{j\ge
 2}\sum_{l=1}^{j-1}1_{x_l\in2\Z+1}\theta_{l,j}}t_j\\
&\quad\cdot\left(1_{\bx-\by=-\be_j\text{ in }(\Z/2L\Z)^d}+
1_{\bx-\by=\be_j\text{ in
 }(\Z/2L\Z)^d}e^{i\frac{\pi}{L}\eps_{j}^L}\right)\left(\frac{1}{2}\right)^{1_{L=1}}\\
&\quad+1_{\exists j\in\{1,2,\cdots,d\}\text{ s.t. }\bx-\by=\be_j\text{ or }-\be_j
\text{ in }(\Z/2L\Z)^d\bigwedge x_j\in 2\Z+1}e^{i1_{j\ge
 2}\sum_{l=1}^{j-1}1_{x_l\in2\Z+1}\theta_{l,j}}t_j\\
&\qquad\cdot\left(1_{\bx-\by=\be_j\text{ in }(\Z/2L\Z)^d}+
1_{\bx-\by=-\be_j\text{ in
 }(\Z/2L\Z)^d}e^{-i\frac{\pi}{L}\eps_{j}^L}\right)\left(\frac{1}{2}\right)^{1_{L=1}}\\
&=1_{L=1}1_{\exists j\in\{1,2,\cdots,d\}\text{ s.t. }\bx-\by=\be_j
\text{ in }(\Z/2L\Z)^d}e^{i(-1)^{x_j+1}1_{j\ge
 2}\sum_{l=1}^{j-1}1_{x_l\in2\Z+1}\theta_{l,j}}t_j\\
&\quad + 1_{L\ge 2}
\Big(1_{\exists j\in \{1,2,\cdots,d\}\text{ s.t. }\bx-\by=\be_j\text{ in
 }(\Z/2L\Z)^d}\\
&\qquad\qquad\quad\cdot e^{i(-1)^{x_j+1}1_{j\ge
 2}\sum_{l=1}^{j-1}1_{x_l\in2\Z+1}\theta_{l,j}+i1_{x_j\in2\Z}\frac{\pi}{L}\eps_j^L}t_j\\
&\qquad\qquad+1_{\exists j\in \{1,2,\cdots,d\}\text{ s.t. }\bx-\by=-\be_j\text{ in
 }(\Z/2L\Z)^d}\\
&\qquad\qquad\quad\cdot e^{i(-1)^{x_j+1}1_{j\ge
 2}\sum_{l=1}^{j-1}1_{x_l\in2\Z+1}\theta_{l,j}-i1_{x_j\in2\Z+1}\frac{\pi}{L}\eps_j^L}t_j\Big).
\end{align*}
This implies the claims \eqref{item_hopping_amplitude_phase},
 \eqref{item_hopping_amplitude}.

\eqref{item_hopping_flux_per_plaquette}:
Take $j,k\in\{1,2,\cdots,d\}$ with $j<k$ and $\bx\in\Z^d$. By
 definition,
\begin{align*}
&\varphi(\bx+\be_j,\bx)+\varphi(\bx+\be_j+\be_k,\bx+\be_j)+\varphi(\bx+\be_k,\bx+\be_j+\be_k)\\
&\quad+\varphi(\bx,\bx+\be_k)\\
&=(-1)^{x_j+2}1_{j\ge 2}\sum_{l=1}^{j-1}1_{x_l\in 2\Z+1}\theta_{l,j}+
1_{x_j+1\in2\Z}\frac{\pi}{L}\eps_j^L\\
&\quad + (-1)^{x_k+2}\left(\sum_{l=1}^{j-1}1_{x_l\in
 2\Z+1}\theta_{l,k}+1_{x_j+1\in2\Z+1}\theta_{j,k}+\sum_{l=j+1}^{k-1}1_{x_l\in 2\Z+1}\theta_{l,k}\right)\\
&\quad + 1_{x_k+1\in2\Z}\frac{\pi}{L}\eps_k^L +(-1)^{x_j+1}1_{j\ge
 2}\sum_{l=1}^{j-1}1_{x_l\in2\Z+1}\theta_{l,j}-1_{x_j\in2\Z+1}\frac{\pi}{L}\eps_{j}^L\\
&\quad+(-1)^{x_k+1}\sum_{l=1}^{k-1}1_{x_l\in 2\Z+1}\theta_{l,k}
-1_{x_k\in2\Z+1}\frac{\pi}{L}\eps_{k}^L\\
&=(-1)^{x_k+2}1_{x_j+1\in2\Z+1}\theta_{j,k}+(-1)^{x_k+1}1_{x_j\in2\Z+1}\theta_{j,k}\\
&=(-1)^{x_j+x_k}\theta_{j,k}.
\end{align*}
Thus, the claim \eqref{item_hopping_flux_per_plaquette} holds.

\eqref{item_hopping_flux_per_circle}:
The equality follows from the definition of $\varphi$.
\end{proof}

Since we have constructed the hopping matrix, we can readily define the
$2^d$-band Hamiltonian. Using the creation, annihilation operators on
the Fermionic Fock space $F_f(L^2(\cB\times \G(L)\times \spin))$, we set
\begin{align*}
&H_0:=\sum_{(\rho,\bx),(\eta,\by)\in\cB\times
 \G(L)}\sum_{\s\in\spin}F(\beps^L,\btheta)((\rho,\bx),(\eta,\by))\psi_{\rho\bx\s}^*\psi_{\eta\by\s},\\
&V:=\sum_{m=0}^{N_v}\sum_{(\rho_j,\bx_j,\s_j),(\eta_j,\by_j,\tau_j)\in\cB\times\G(L)\times
 \spin\atop
 (j=1,2,\cdots,m)}\\
&\cdot V_m^L(\bU)((\nu(\rho_1,\bx_1)\s_1,\cdots,\nu(\rho_m,\bx_m)\s_m),
                             (\nu(\eta_1,\by_1)\tau_1,\cdots,\nu(\eta_m,\by_m)\tau_m))\\
&\qquad\cdot
 \psi_{\rho_1\bx_1\s_1}^*\cdots\psi_{\rho_m\bx_m\s_m}^*\psi_{\eta_1\by_1\tau_1}\cdots 
\psi_{\eta_m\by_m\tau_m},\\
&H:=H_0+V
\end{align*}
for $\bU\in \R^{n_v}$.
The operator $H$ is defined in $F_f(L^2(\cB\times
\G(L)\times \spin))$ and self-adjoint. The following lemma suggests that we can
focus on the free energy density governed by the Hamiltonian $H$ in
order to prove
Theorem \ref{thm_main_theorem}.

\begin{lemma}\label{lem_free_energy_equivalence}
$$
\Tr e^{-\beta \sH}=\Tr e^{-\beta H}.
$$
\end{lemma}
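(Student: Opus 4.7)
The plan is to establish a unitary equivalence, on the level of Hamiltonians, between $\sH$ acting on $F_f(L^2(\G(2L)\times\spin))$ and $H$ acting on $F_f(L^2(\cB\times\G(L)\times\spin))$, from which the trace identity is immediate. The strategy splits into two independent pieces: first a relabelling of sites by the bijection $\nu:\cB\times\G(L)\to\G(2L)$, and second a gauge transformation that reconciles the two different choices of Peierls phase.

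For the relabelling, I would let $W$ be the unitary of Fock spaces induced by $\nu$, sending $\psi_{\rho\bx\s}\leftrightarrow\psi_{\nu(\rho,\bx)\s}$. By construction of the interacting part $V$ in Subsection \ref{subsec_multi_band_hamiltonian}, conjugation by $W$ turns $V$ into exactly $\sV$, since the kernel $V_m^L$ is evaluated at $\nu(\rho_j,\bx_j)$ and the sum over $(\rho_j,\bx_j)\in\cB\times\G(L)$ is precisely a resummation of the sum over $\G(2L)$. Conjugation by $W$ turns $H_0$ into $\sum_{\bx,\by\in\G(2L),\s}G(\beps^L,\btheta)(\bx,\by)\psi^*_{\bx\s}\psi_{\by\s}$, and by Lemma \ref{lem_position_hopping_properties} parts \eqref{item_hopping_amplitude_phase} and \eqref{item_hopping_amplitude} this equals $\sH_0(\varphi)$ with the phase $\varphi$ defined there. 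Thus $WHW^{*}=\sH_0(\varphi)+\sV$, and it remains to compare $\sH_0(\varphi)+\sV$ with $\sH=\sH_0(\theta_L)+\sV$.

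For the gauge step, Lemma \ref{lem_position_hopping_properties} parts \eqref{item_hopping_flux_per_plaquette} and \eqref{item_hopping_flux_per_circle} show that $\varphi$ realises the same flux per plaquette and the same flux through each of the $d$ generating cycles of the periodic lattice as $\theta_L$ does via \eqref{eq_flux_per_plaquette} and \eqref{eq_flux_per_circle}. Since on the discrete torus $(\Z/2L\Z)^d$ the first cohomology is generated by the plaquettes together with these cycles, I can then produce $\chi:\G(2L)\to\R$ with $\theta_L(\bx,\by)-\varphi(\bx,\by)=\chi(\bx)-\chi(\by)\pmod{2\pi}$ on every nearest-neighbour bond. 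Extending $\chi$ periodically and setting $U_\chi:=\exp\bigl(i\sum_{\bx,\s}\chi(\bx)\psi^*_{\bx\s}\psi_{\bx\s}\bigr)$, the identity $U_\chi\psi_{\bx\s}U_\chi^{*}=e^{-i\chi(\bx)}\psi_{\bx\s}$ gives $U_\chi\sH_0(\theta_L)U_\chi^{*}=\sH_0(\varphi)$, and the U(1) invariance property \eqref{eq_U1_invariance} applied to $\theta=\chi$ gives $U_\chi\sV U_\chi^{*}=\sV$ term by term (each nonzero kernel value forces the gauge phase to be $1$). Combining, $U_\chi\sH U_\chi^{*}=\sH_0(\varphi)+\sV=WHW^{*}$, so
\begin{equation*}
\Tr e^{-\beta\sH}=\Tr(U_\chi e^{-\beta\sH}U_\chi^{*})=\Tr e^{-\beta(\sH_0(\varphi)+\sV)}=\Tr(W e^{-\beta H}W^{*})=\Tr e^{-\beta H}.
\end{equation*}

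The only genuinely nontrivial step is the construction of $\chi$: it is a cohomological exactness statement on the discrete torus, and the main obstacle is verifying that matching the fluxes through elementary plaquettes \emph{and} through the $d$ winding loops is enough to trivialise the $1$-cocycle $\theta_L-\varphi$. Once this is in place, the remaining computations are algebraic bookkeeping, with the U(1) invariance \eqref{eq_U1_invariance} doing all the work for the interaction.
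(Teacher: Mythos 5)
Your proposal is correct and follows the same two-step strategy as the paper: conjugating by the relabelling unitary $W$ induced by $\nu$ reduces the claim to the gauge equivalence $\Tr e^{-\beta\sH(\theta_L)}=\Tr e^{-\beta\sH(\varphi)}$, which the paper establishes as Lemma \ref{lem_partition_gauge_invariance} in Appendix \ref{app_flux_phase} by constructing exactly the gauge potential you describe (Lemmas \ref{lem_phase_circuit} and \ref{lem_phase_gauge_invariance} supply the cohomological exactness you flag as the crux). Note a harmless sign slip: with $U_\chi=\exp\bigl(i\sum_{\bx,\s}\chi(\bx)\psi^*_{\bx\s}\psi_{\bx\s}\bigr)$ the required relation on each bond is $\varphi(\bx,\by)-\theta_L(\bx,\by)=\chi(\bx)-\chi(\by)$, not $\theta_L-\varphi$.
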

\begin{proof}
Let us define the operators $\sH'_0$, $\sH'$ on $F_f(L^2(\G(2L)\times
 \spin))$ by
\begin{align*}
&\sH_0':=\sum_{\bx,\by\in\G(2L)}\sum_{\s\in\spin}G(\beps^L,\btheta)(\bx,\by)\psi_{\bx\s}^*\psi_{\by\s},\\
&\sH':=\sH_0'+\sV.
\end{align*}
Moreover, define the map $W$ from $F_f(L^2(\cB\times \G(L)\times
 \spin))$ to \\
$F_f(L^2(\G(2L)\times \spin))$ by
\begin{align*}
&W(\O_L):=\O_{2L},\\
&W(\psi_{\rho_1\bx_1\s_1}^*\cdots\psi_{\rho_n\bx_n\s_n}^*\O_L):=
\psi_{\nu(\rho_1\bx_1)\s_1}^*\cdots\psi_{\nu(\rho_n\bx_n)\s_n}^*\O_{2L}
\end{align*}
and by linearity. Here $\O_L$ denotes the vacuum of $F_f(L^2(\cB\times \G(L)\times
 \spin))$.  We can see that $W$ is unitary, $WHW^*=\sH'$ and thus $\Tr
 e^{-\beta \sH'}=\Tr e^{-\beta H}$. Since the phases $\theta_L$,
 $\varphi$ satisfy \eqref{eq_phase_condition_general},
 \eqref{eq_flux_per_plaquette} and \eqref{eq_flux_per_circle}, Lemma
 \ref{lem_partition_gauge_invariance} in Appendix \ref{app_flux_phase}
 ensures that $\Tr
 e^{-\beta \sH}=\Tr e^{-\beta \sH'}$. Thus, we obtain the claimed
 equality.
\end{proof}

From here until the proof of Theorem \ref{thm_main_theorem} in
Subsection \ref{subsec_completion_IR} we mainly
study $\Tr e^{-\beta H}$ instead of $\Tr e^{-\beta \sH}$.

\subsection{Grassmann integral formulation}\label{subsec_grassmann}
In this subsection we derive finite-dimensional Grassmann integral
formulations of the quantity $\log(\Tr e^{-\beta H}/\Tr e^{-\beta
H_0})$. Most of the lemmas in this subsection are based on the same
ideas as in \cite[\mbox{Subsection 2.2,\ 2.3,\ 2.4,\ 2.5}]{K15}. 
To avoid unnecessary repetition, we only provide parts of the proofs
which need to be clarified.

With the parameter $h\in (2/\beta)\N$ the index set $I$ of the
basis of Grassmann algebra is defined by
$$
I:=\cB\times\G(L)\times\spin\times[0,\beta)_h\times\{1,-1\},
$$
where $[0,\beta)_h:=\{0,1/h,2/h,\cdots,\beta-1/h\}$, a discrete version
of the interval $[0,\beta)$. Let $N$ denote
$2^{d+2}L^d\beta h$, the cardinality of $I$. Let $\cV$ be the complex
vector space spanned by the abstract basis $\{\psi_X\}_{X\in
I}$. Then, let $\bigwedge \cV$ denote the direct sum of anti-symmetric
tensor products of $\cV$. We call $\bigwedge\cV$ Grassmann algebra
generated by $\{\psi_X\}_{X\in I}$. Apart from minor differences between
the index sets, the basic description of finite-dimensional Grassmann
integral in \cite[\mbox{Subsection 2.2}]{K15} applies in this paper as
well. We follow the same notational rules concerning Grassmann
polynomials set in \cite[\mbox{Subsection 2.2}]{K15}. The Grassmann
polynomial $V(\psi)$, the analogue of the interaction $V$ in $\bigwedge
\cV$ is defined by
\begin{align}
&V(\psi):=\label{eq_Grassmann_interaction}\\
&\sum_{m=0}^{N_v}\sum_{(\rho_j,\bx_j,\s_j),(\eta_j,\by_j,\tau_j)\in\cB\times\G(L)\times\spin
 \atop
 (j=1,2,\cdots,m)}\frac{1}{h}\sum_{s\in[0,\beta)_h}\notag\\
&\cdot
 V_m^L(\bU)((\nu(\rho_1,\bx_1)\s_1,\cdots,\nu(\rho_m,\bx_m)\s_m),(\nu(\eta_1,\by_1)\tau_1,\cdots,\nu(\eta_m,\by_m)\tau_m))\notag\\
&\quad\cdot \opsi_{\rho_1\bx_1\s_1
 s}\cdots\opsi_{\rho_m\bx_m\s_ms}\psi_{\eta_1\by_1\tau_1s}\cdots\psi_{\eta_m\by_m\tau_ms}\notag
\end{align}
with $\bU\in\C^{n_v}$. 
We can expand a Grassmann polynomial $f(\psi)\in \bigwedge \cV$ by using
the anti-symmetric kernels $f_m:I^m\to\C$ $(m=1,2,\cdots,N)$ as follows.
$$
f(\psi)=f_0+\sum_{m=1}^N\left(\frac{1}{h}\right)^m\sum_{\bX\in I^m}f_m(\bX)\psi_{\bX},
$$
where $f_0\in\C$, $\psi_{\bX}:=\psi_{X_1}\psi_{X_2}\cdots\psi_{X_m}$ for
$\bX=(X_1,X_2,\cdots,X_m)\in I^m$. For any function $g$ on $I^m$ its
$L^1$-norm $\|g\|_{L^1}$ is defined by 
$$
\|g\|_{L^1}:=\left(\frac{1}{h}\right)^m\sum_{\bX\in I^m}|g(\bX)|.
$$
It will be convenient to let $\|g_0\|_{L^1}$ denote $|g_0|$ for
$g_0\in\C$ as well. Set
$U_{max}:=\max_{j\in\{1,2,\cdots,n_v\}}|U_j|$. The anti-symmetric
kernels of $V(\psi)$ can be estimated as follows.

\begin{lemma}\label{lem_estimations_Grassmann_V}
\begin{align*}
&|V_0|\le \beta L^dU_{max}v_0,\\
&\|V_{2m}\|_{L^1}\le 2^{d+1}\beta L^d U_{max}v_m(0),\quad (\forall
 m\in\{1,2,\cdots,N_v\}),\\
&\|V_{m}\|_{L^1}=0,\quad (\forall
 m\in\{1,2,\cdots,N\}\backslash\{2,4,\cdots,2N_v\}).
\end{align*}
\end{lemma}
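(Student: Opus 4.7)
My plan is to prove the three parts in order, with the main work being the third. The vanishing claim $\|V_m\|_{L^1}=0$ for $m\in\{1,\dots,N\}\setminus\{2,4,\dots,2N_v\}$ is immediate: inspecting \eqref{eq_Grassmann_interaction}, every monomial in $V(\psi)$ carries exactly $m$ factors of $\opsi$ and $m$ factors of $\psi$ with $0\le m\le N_v$, so only even degrees up to $2N_v$ appear.

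For the bound on $|V_0|$, I extract the $m=0$ term from \eqref{eq_Grassmann_interaction}, which is $(1/h)\sum_{s\in[0,\beta)_h}V_0^L(\bU)=\beta V_0^L(\bU)$. Property \eqref{item_linearity} makes $V_0^L$ linear in $\bU$, hence homogeneous of degree one, so for $U_{max}>0$ I can write $V_0^L(\bU)=U_{max}V_0^L(\bU/U_{max})$ with all coordinates of $\bU/U_{max}$ in $\overline{D(1)}$; the definition of $v_0$ then gives $|V_0^L(\bU)|\le U_{max}L^d v_0$, so $|V_0|\le \beta L^dU_{max}v_0$.

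The main work is the estimate on $\|V_{2m}\|_{L^1}$ for $m\in\{1,\dots,N_v\}$. First I rewrite the $m$-th summand in \eqref{eq_Grassmann_interaction} as a degree-$2m$ Grassmann polynomial of the form $(1/h)\sum_{\bY\in I^{2m}}K(\bY)\psi_{Y_1}\cdots\psi_{Y_{2m}}$, where $K$ is supported on tuples $\bY$ whose first $m$ entries carry the creation flag, the last $m$ the annihilation flag, and all $2m$ entries share a common Matsubara time $s$; on such tuples $K(\bY)$ equals the corresponding value of $V_m^L(\bU)$. Matching this against the canonical expansion $(1/h)^{2m}\sum_\bX V_{2m}(\bX)\psi_\bX$ forces $V_{2m}$ to be $h^{2m-1}$ times the antisymmetrization of $K$; since antisymmetrization does not increase $L^1$ norms, I obtain $\|V_{2m}\|_{L^1}\le (1/h)\sum_{\bX}|K(\bX)|$.

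Finally I bound $\sum_\bX|K(\bX)|$: the $s$-sum gives a factor $\beta h$; I then fix one coordinate $(\bx,\s)\in\G(2L)\times\spin$, contributing $2^{d+1}L^d$ choices, and sum the remaining $2m-1$ position–spin arguments of $V_m^L(\bU)$ using the bound \eqref{eq_decay_bound} with $c=0$ (noting that the weight $(L/\pi)|e^{\ldots}-1|+1\ge 1$), which after the same $U_{max}$ scaling as in the $V_0$ step yields $U_{max}v_m(0)$. Combining these factors and the $1/h$ in front gives $\|V_{2m}\|_{L^1}\le 2^{d+1}\beta L^d U_{max}v_m(0)$, as claimed. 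The bi-anti-symmetry property \eqref{item_bi_anti_symmetric} is used implicitly to ensure the raw sum does not depend on the role of the fixed coordinate, so I may freely invoke the $p,q$-supremum form of \eqref{eq_decay_bound}. The main bookkeeping obstacle is tracking the $1/h$ factors correctly through the antisymmetrization step; I expect no conceptual difficulty beyond that.
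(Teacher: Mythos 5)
Your proof is correct and takes essentially the same route as the paper's, which simply delegates the antisymmetrization step to \cite[Lemma B.1]{K15} and then invokes the bijectivity of $\nu$ together with the definition of $v_m(0)$; you unpack that cited lemma into explicit bookkeeping of the $1/h$ factors. (A small aside: the closing remark about needing bi-anti-symmetry to invoke the $p,q$-supremum form of \eqref{eq_decay_bound} is superfluous, since the definition of $v_m(c)$ already has the first coordinate as the distinguished one and you can simply always fix that slot, but this does not affect correctness.)
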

\begin{proof}
The bounds on $|V_0|$, $\|V_m\|_{L^1}$
 $(m\in\{1,2,\cdots,N\}\backslash\{2,4,\cdots,2N_v\})$ follow from
 definition. Let $m\in \{1,2,\cdots,N_v\}$. By \cite[\mbox{Lemma
 B.1}]{K15}, the bijectivity of $\nu$ and the
 definition of $v_m(0)$ we have that 
\begin{align*}
&\|V_{2m}\|_{L^1}\\
&\le
 \beta\sum_{(\bx_j,\s_j),(\by_j,\tau_j)\in\G(2L)\times\spin\atop
 (j=1,2,\cdots,m)}|V_m^L(\bU)((\bx_1\s_1,\cdots,\bx_m\s_m),(\by_1\tau_1,\cdots,\by_m\tau_m))|\\
&\le 2^{d+1}\beta L^d U_{max}v_m(0).
\end{align*}
\end{proof}

The free covariance $C$ is defined as follows. 
For $(\rho,\bx,\s,x),(\eta,\by,\tau,y)\in\cB\times \G(L)\times\spin
\times[0,\beta)$,
\begin{align}
C(\rho\bx\s x,\eta\by\tau y):=\frac{\Tr (e^{-\beta H_0}(1_{x\ge
 y}\psi_{\rho\bx\s}^*(x)\psi_{\eta\by\tau}(y)-1_{x<y}\psi_{\eta\by\tau}(y)\psi_{\rho\bx\s}^*(x)))}{\Tr
 e^{-\beta H_0}},\label{eq_covariance_original_definition}
\end{align}
where
$\psi_{\rho\bx\s}^{(*)}(x):=e^{xH_0}\psi_{\rho\bx\s}^{(*)}e^{-xH_0}$.
Let $\cM$ denote the set of the Matsubara frequency
$(\pi/\beta)(2\Z+1)$. We introduce the finite subset $\cM_h$ of $\cM$ by
$$
\cM_h:=\{\o\in\cM\ |\ |\o|<\pi h\}.
$$
If we restrict the time variables to the discrete set $[0,\beta)_h$, the
covariance can be written as a sum over $\cM_h\times\G(L)^*$. Set
$$I_0:=\cB\times\G(L)\times\spin\times[0,\beta)_h.$$
For
$(\rho,\bx,\s,x)$, $(\eta,\by,\tau,y)\in I_0$,
\begin{align}
&C(\rho\bx\s x,\eta\by \tau y)\label{eq_introduction_covariance}\\
&=\frac{\delta_{\s,\tau}}{\beta
 L^d}\sum_{(\o,\bk)\in\cM_h\times\G(L)^*}e^{i\<\bx-\by,\bk\>+i(x-y)\o}
h^{-1}(I_{2^d}-e^{-i\frac{\o}{h}I_{2^d}+\frac{1}{h}\cE(\bk)})^{-1}(\rho,\eta),\notag
\end{align}
where $\cE\in \Map(\R^d,\Mat(2^d,\C))$ is defined by
$$\cE(\bk):=E(-\beps^L,-\btheta)(\bk).$$
Let us briefly explain how to derive \eqref{eq_introduction_covariance}.
It is implied by 
\cite[\mbox{Lemma
 2.1}]{K15} that 
\begin{align*}
&C(\rho\bx\s x,\eta\by \tau y)\\
&=\frac{\delta_{\s,\tau}}{\beta
 L^d}\sum_{(\o,\bk)\in\cM_h\times\G(L)^*}e^{-i\<\bx-\by,\bk\>+i(x-y)\o}
h^{-1}\big(I_{2^d}-e^{-i\frac{\o}{h}I_{2^d}+\frac{1}{h}\overline{E(\beps^L,\btheta)(\bk)}}\big)^{-1}(\rho,\eta).
\end{align*}
Then, by using that
$\overline{E(\beps^L,\btheta)(\bk)}=E(-\beps^L,-\btheta)(-\bk)$ we
obtain \eqref{eq_introduction_covariance}.

The next lemma states that the quantity
$\log(\Tr e^{-\beta H}/\Tr e^{-\beta H_0})$ is equal to the
time-continuum limit of the Grassmann Gaussian integral with the
covariance $C$. Despite the
generalization of the interaction, its proof is parallel to \cite[\mbox{Lemma
 2.2}]{K15}, which was built upon the idea that the discretization of
 the integrals over $[0,\beta)$ inside the perturbative expansion of
 $\Tr e^{-\beta H}/\Tr e^{-\beta H_0}$ converges well as the step size
 is sent to zero. For any $z\in\C\backslash \R_{\le 0}$ we define $\log
 z\in\C$ by the principal value $\log |z|+i\theta$ with
 $\theta\in(-\pi,\pi)$ satisfying $z=|z|e^{i\theta}$.
See \cite[\mbox{Subsection 2.2}]{K15} for the definition of the
Grassmann Gaussian integral $\int\cdot d\mu_{C}(\psi)$.

\begin{lemma}\label{lem_grassmann_formulation}
\begin{enumerate}
\item\label{item_grassmann_formulation_real}
For any $r\in\R_{>0}$ there exists $h_0\in\R_{>0}$ such that
\begin{align*}
&\Re \int e^{-V(\psi)}d\mu_{C}(\psi)>0,\\
&(\forall \bU\in\overline{D(r)}^{n_v}\cap \R^{n_v},h\in
 (2/\beta)\N\text{ with }h\ge h_0).
\end{align*}
\item\label{item_grassmann_formulation_convergence}
For any $r\in\R_{>0}$,
\begin{align*}
&\lim_{h\to\infty\atop
 h\in(2/\beta)\N}\sup_{\bU\in\overline{D(r)}^{n_v}\cap\R^{n_v}}\left|
\log\left(\frac{\Tr e^{-\beta H}}{\Tr e^{-\beta H_0}}\right)
-\log\left(\int e^{-V(\psi)}d\mu_{C}(\psi)\right)\right|=0.
\end{align*}
\end{enumerate}
\end{lemma}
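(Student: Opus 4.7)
The plan is to compare term-by-term the Dyson expansion of $\Tr e^{-\beta H}/\Tr e^{-\beta H_0}$ with the power-series expansion of the Grassmann Gaussian integral $\int e^{-V(\psi)}\,d\mu_{C}(\psi)$, viewing the latter as a Riemann-sum approximation of the former. On the operator side, expanding $e^{-\beta H}$ by Duhamel's formula and using cyclicity of the trace yields
\begin{align*}
\frac{\Tr e^{-\beta H}}{\Tr e^{-\beta H_0}}=\sum_{n=0}^{\infty}\frac{(-1)^n}{n!}\int_{[0,\beta)^n}\langle T\,V(x_1)\cdots V(x_n)\rangle_0\,dx_1\cdots dx_n,
\end{align*}
and Wick's theorem for the free thermal state expresses each time-ordered expectation as a determinant of entries of $C$ contracted against the kernels $V_m^L(\bU)$. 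On the Grassmann side, expanding $e^{-V(\psi)}$ in powers of $V(\psi)$ and applying the definition of the Gaussian integral produces the same combinatorial sum in which $\int_{[0,\beta)^n}dx_1\cdots dx_n$ is replaced by the Riemann sum $h^{-n}\sum_{s_1,\dots,s_n\in[0,\beta)_h}$ with the same determinantal integrand (evaluated at the discrete times).

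The first step I would carry out is a uniform-in-$h$ bound on both series by a single absolutely convergent majorant. Using a Gram representation of the covariance to bound the relevant determinants and the kernel estimates of Lemma \ref{lem_estimations_Grassmann_V} together with $\|E(\beps^L,\btheta)(\bk)\|\le 2\sum_j t_j$ from Lemma \ref{lem_hopping_properties} (\ref{item_hopping_upper}), one obtains an $n$-th term majorant of the form $C_0(r)^n/n!$ with $C_0(r)$ independent of $h$. This yields absolute convergence of the Grassmann series uniformly for $\bU\in\overline{D(r)}^{n_v}\cap\R^{n_v}$ and $h$ large. For each fixed $n$, the integrand is piecewise continuous on $[0,\beta)^n$ with discontinuities concentrated on the measure-zero set $\{x_i=x_j\}$, so the Riemann sum converges to the integral uniformly in $\bU$ on the compact parameter set. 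Dominated convergence on the series index then gives
\[
\lim_{h\to\infty}\int e^{-V(\psi)}\,d\mu_{C}(\psi)=\frac{\Tr e^{-\beta H}}{\Tr e^{-\beta H_0}}\quad\text{uniformly on }\overline{D(r)}^{n_v}\cap\R^{n_v}.
\]

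Part (\ref{item_grassmann_formulation_real}) then follows from compactness: the map $\bU\mapsto\Tr e^{-\beta H}/\Tr e^{-\beta H_0}$ is continuous and strictly positive on the compact real slice (both traces are strictly positive since $H,H_0$ are self-adjoint on a finite-dimensional space), hence bounded below by some $\delta>0$, and the uniform convergence above forces $\Re\int e^{-V(\psi)}\,d\mu_{C}(\psi)\ge\delta/2$ once $h\ge h_0$. Part (\ref{item_grassmann_formulation_convergence}) follows by applying $\log$ to both sides, using that $\log$ is uniformly continuous on any set of the form $\{z\in\C:\Re z\ge\delta/2,\ |z|\le M\}$, where $M$ is an upper bound supplied by the same majorant.

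The main obstacle I anticipate is the uniform-in-$h$ Gram estimate for the determinants built out of $C$. The covariance in \eqref{eq_introduction_covariance} is the discrete resolvent $h^{-1}(I_{2^d}-e^{-i\o/h\,I_{2^d}+\cE(\bk)/h})^{-1}$, whose entries grow in $h$ near the boundary of $\cM_h$, so a naive sup-norm estimate fails. One must write $C(X,Y)=\langle f_X,g_Y\rangle_{\mathcal H}$ in an auxiliary Hilbert space with $\sup_X\|f_X\|_{\mathcal H}$ and $\sup_Y\|g_Y\|_{\mathcal H}$ bounded uniformly in $h$; this is precisely the construction of \cite[Lemma 2.2]{K15} which, after adapting the $2^d$-band index structure of this paper, can be imported to close the argument.
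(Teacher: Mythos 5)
Your proposal takes essentially the same route as the paper, which explicitly defers to \cite[Lemma 2.2]{K15} with the remark that the argument rests on showing that the Riemann discretization of the $[0,\beta)^n$ time integrals in the Duhamel/Wick expansion of $\Tr e^{-\beta H}/\Tr e^{-\beta H_0}$ converges as the step size $1/h\to 0$. Your term-by-term comparison via a uniform-in-$h$ Gram-determinant majorant $C_0(r)^n/n!$, followed by pointwise Riemann-sum convergence off the measure-zero coincidence hyperplanes $\{x_i=x_j\}$ and dominated convergence over the series index, is precisely the structure of that argument; the compactness-plus-strict-positivity derivation of part (1) and the uniform continuity of $\log$ on $\{\Re z\ge\delta/2,\ |z|\le M\}$ for part (2) are the expected closing steps, and you correctly flag the one nontrivial input — the Gram (not sup-norm) bound on $h^{-1}(I_{2^d}-e^{-i\o/h I+\cE(\bk)/h})^{-1}$ — as the piece that must be imported from K15 after adapting to the $2^d$-band index set.
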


Next we connect the above formulation to another Grassmann integral
formulation which has better symmetric properties from a technical view
point of infrared integration process. The general estimation in
\cite[\mbox{Appendix B}]{K15} underlies the analysis in the rest of this
section. Let $\chi$ be a compactly supported smooth function on $\R$
satisfying that $\chi(x)\in[0,1]$ $(\forall x\in\R)$. This section proceeds
without imposing more conditions on $\chi$. The function
$\chi$ will be specified after this section. Using $\chi$ as a cut-off
function, we introduce the covariances $C_{\le 0}^{+}$, $C_{>0}^+$,
$C_{\le 0}^{\infty}$, $C_{>0}^-$, $C_{>0}^{+(h)}$, $\cI\in\Map(I_0^2,\C)$ as follows. For $(\rho,\bx,\s,x),(\eta,\by,\tau,y)\in I_0$, 
\begin{align}
&C_{\le 0}^+(\rho\bx\s x,\eta\by \tau y):=\frac{\delta_{\s,\tau}}{\beta
 L^d}\sum_{(\o,\bk)\in\cM_h\times\G(L)^*}e^{i\<\bx-\by,\bk\>+i(x-y)\o}\notag\\
&\qquad\qquad\qquad\qquad\qquad\cdot\chi(h|1-e^{i\frac{\o}{h}}|)h^{-1}(I_{2^d}-e^{-i\frac{\o}{h}I_{2^d}+\frac{1}{h}\cE(\bk)})^{-1}(\rho,\eta),\notag\\
&C_{> 0}^+(\rho\bx\s x,\eta\by \tau y):=\frac{\delta_{\s,\tau}}{\beta
 L^d}\sum_{(\o,\bk)\in\cM_h\times\G(L)^*}e^{i\<\bx-\by,\bk\>+i(x-y)\o}\label{eq_UV_covariance_+}\\
&\quad\qquad\qquad\qquad\qquad\cdot(1-\chi(h|1-e^{i\frac{\o}{h}}|))h^{-1}(I_{2^d}-e^{-i\frac{\o}{h}I_{2^d}+\frac{1}{h}\cE(\bk)})^{-1}(\rho,\eta),\notag\\
&C_{\le 0}^{\infty}(\rho\bx\s x,\eta\by \tau y):=\frac{\delta_{\s,\tau}}{\beta
 L^d}\sum_{(\o,\bk)\in\cM_h\times\G(L)^*}e^{i\<\bx-\by,\bk\>+i(x-y)\o}\notag\\
&\qquad\qquad\qquad\qquad\qquad\cdot \chi(|\o|)(i\o I_{2^d}-\cE(\bk))^{-1}(\rho,\eta),\notag\\
&C_{> 0}^{-}(\rho\bx\s x,\eta\by \tau y):=\frac{\delta_{\s,\tau}}{\beta
 L^d}\sum_{(\o,\bk)\in\cM_h\times\G(L)^*}e^{i\<\bx-\by,\bk\>+i(x-y)\o}\label{eq_UV_covariance_-}\\
&\quad\qquad\qquad\qquad\qquad\cdot (1-\chi(h|1-e^{i\frac{\o}{h}}|))h^{-1}(e^{i\frac{\o}{h}I_{2^d}-\frac{1}{h}\cE(\bk)}-I_{2^d})^{-1}(\rho,\eta),\notag\\
&C_{> 0}^{+(h)}(\rho\bx\s x,\eta\by \tau y):= C_{> 0}^{+}(\rho\bx\s x,\eta\by \tau
 y)\notag\\
&\qquad\qquad\qquad\qquad\quad+\frac{1_{(\rho,\bx,\s)=(\eta,\by,\tau)}}{\beta h}
\sum_{\o\in\cM_h}e^{i(x-y)\o}\chi(h|1-e^{i\frac{\o}{h}}|),\notag\\
&\cI(\rho\bx\s x,\eta\by\tau y):=1_{(\rho,\bx,\s, x)=(\eta,\by,\tau,y)}.\notag
\end{align}
One can derive from the definitions that
\begin{align}
&C_{>0}^{+(h)}(\rho\bx\s x,\eta\by\tau y)=C^{-}_{>0}(\rho\bx\s
 x,\eta\by\tau y)+\cI(\rho\bx\s x,\eta\by\tau y),\label{eq_artificial_covariances_relation}\\
&(\forall (\rho,\bx,\s, x),(\eta,\by,\tau, y)\in I_0).\notag
\end{align}
The next lemma can be proved by applying Gram's inequality and the
Cauchy-Binet formula in the same way as in the proof of 
\cite[\mbox{Lemma 2.4}]{K15}.

\begin{lemma}\label{lem_artificial_covariances_bound}
There exist $(\beta,L,d,\chi,\cE)$-dependent, $h$-independent constants
 $h_0$, $c_1\in\R_{>0}$ such that the following inequalities hold for
 any $h\in(2/\beta)\N$ with $h\ge h_0$.
\begin{align*}
&|\det(C_o(X_i,Y_j))_{1\le i,j\le n}|\le c_1^n,\\
&|\det(C_{>0}^{+(h)}(X_i,Y_j)-C_{>0}^{+}(X_i,Y_j))_{1\le i,j\le
 n}|\le\frac{1}{h}c_1^n,\\
&|\det(C_{\le 0}^{+}(X_i,Y_j)-C_{\le 0}^{\infty}(X_i,Y_j))_{1\le i,j\le
 n}|\le\frac{1}{h}c_1^n,\\
&(\forall n\in \N,X_j,Y_j\in I_0\ (j=1,2,\cdots,n))
\end{align*}
for $C_o=C$, $C_{\le 0}^+$, $C_{> 0}^+$, $C_{\le 0}^{\infty}$, $C_{>0}^-$, $C_{>
 0}^{+(h)}$.
\end{lemma}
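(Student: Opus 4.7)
The plan is to follow the same template as in the corresponding proof \cite[Lemma 2.4]{K15}, which rests on the Gram (Hadamard) inequality: if the entries of a matrix can be written as inner products $C_o(X,Y)=\langle f_X,g_Y\rangle_{\cH}$ in some Hilbert space $\cH$, then
$$|\det(C_o(X_i,Y_j))_{1\le i,j\le n}|\le \prod_{i=1}^{n}\|f_{X_i}\|_{\cH}\|g_{Y_i}\|_{\cH}.$$
Accordingly, I would first build suitable Gram representations for each of the six kernels in the Hilbert space $\cH:=\C^{\cM_h\times\G(L)^{*}\times\cB\times\spin}$ equipped with $\langle a,b\rangle:=(\beta L^d)^{-1}\sum_{\o,\bk,\rho,\s}\overline{a(\o,\bk,\rho,\s)}b(\o,\bk,\rho,\s)$, and then bound the resulting norms uniformly in $X,Y$ and in $h$.

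For the kernel $C$, observe that the matrix $M(\o,\bk):=h^{-1}(I_{2^d}-e^{-i\o/h\,I_{2^d}+\cE(\bk)/h})^{-1}$ admits a polar decomposition $M=U|M|$. Setting
\begin{align*}
f_X(\o,\bk,\rho',\s')&:=\delta_{\s,\s'}\,e^{-ix\o-i\langle\bx,\bk\rangle}\,[|M(\o,\bk)|^{1/2}U(\o,\bk)^{*}](\rho',\rho),\\
g_Y(\o,\bk,\rho',\s')&:=\delta_{\tau,\s'}\,e^{-iy\o-i\langle\by,\bk\rangle}\,[|M(\o,\bk)|^{1/2}](\rho',\eta)
\end{align*}
makes $\langle f_X,g_Y\rangle=C(X,Y)$, with $\|f_X\|^{2},\|g_Y\|^{2}\le (\beta L^d)^{-1}\sum_{\o,\bk}\|M(\o,\bk)\|_{2^d\times 2^d}$. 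This sum is finite uniformly in $h$: for $|\o|\sim h$ one has $\|M\|=O(1/h)$ compensating the number $O(\beta h)$ of Matsubara modes, while for small $|\o|$ the bound $|\o|\ge\pi/\beta$ together with invertibility of $\cE(\bk)$ at all but one $\bk$ keeps the contribution finite. The same recipe produces Gram representations for $C_{\le 0}^{+}, C_{>0}^{+}, C_{\le 0}^{\infty}, C_{>0}^{-}, C_{>0}^{+(h)}$, with the cut-off factors $\chi(\cdot)$ or $1-\chi(\cdot)$ absorbed as scalar weights into $f_X$ or $g_Y$; this yields $|\det(\cdots)|\le c_1^{n}$.

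The $1/h$ improvement in the two difference bounds comes from exploiting cancellation between the relevant pairs of kernels. For $C_{>0}^{+(h)}-C_{>0}^{+}$ the definition gives directly
$$C_{>0}^{+(h)}(X,Y)-C_{>0}^{+}(X,Y)=\frac{\delta_{(\rho,\bx,\s),(\eta,\by,\tau)}}{\beta h}\sum_{\o\in\cM_h}e^{i(x-y)\o}\chi(h|1-e^{i\o/h}|),$$
so the factor $1/h$ is manifest; the compact support of $\chi$ restricts the $\o$-sum to $O(\beta)$ terms, and splitting the weight $1/h$ between the two sides of the inner product yields $\|\tilde f_X\|\|\tilde g_Y\|\le c/h$, giving $|\det(\cdots)|\le (c/h)^n\le h^{-1}c_1^{n}$ for $h\ge 1$. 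For $C_{\le 0}^{+}-C_{\le 0}^{\infty}$, Taylor expansions give $h^{-1}(I_{2^d}-e^{-i\o/h I_{2^d}+\cE/h})^{-1}-(i\o I_{2^d}-\cE)^{-1}=O(h^{-1})$ and $\chi(h|1-e^{i\o/h}|)-\chi(|\o|)=O(h^{-2})$ uniformly on the bounded support of the cut-offs, so the leading pieces cancel and the residue carries an explicit $1/h$ prefactor that can again be split between the two Gram factors. The principal technical obstacle is controlling the residual matrix-valued factor well enough for the polar decomposition to yield bounded Hilbert-space norms; this reduces to bounding $\partial_{\o}(i\o I_{2^d}-\cE(\bk))^{-1}$ on the support of $\chi(|\o|)$, which is a finite set at fixed $\beta,L,\cE$, explaining the $(\beta,L,d,\chi,\cE)$-dependence and the $h$-independence of the final constants $h_0, c_1$.
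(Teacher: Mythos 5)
The core difficulty you gloss over --- and the reason this proposal has a genuine gap --- is that the frequency-space polar-decomposition Gram representation you set up for $C$, $C_{>0}^{+}$, $C_{>0}^{-}$, $C_{>0}^{+(h)}$ does \emph{not} give an $h$-independent constant. Your Gram norm is
\begin{align*}
\|f_X\|^2 \le \frac{1}{\beta L^d}\sum_{(\o,\bk)\in\cM_h\times\G(L)^*}\|M(\o,\bk)\|_{2^d\times 2^d},
\quad M(\o,\bk)=h^{-1}\big(I_{2^d}-e^{-i\frac{\o}{h} I_{2^d}+\frac{1}{h}\cE(\bk)}\big)^{-1},
\end{align*}
and for $h$ large one has $\|M(\o,\bk)\|\approx(\o^2+\min_\lambda\lambda^2)^{-1/2}\le 1/|\o|$ when $|\o|\ll h$ while $\|M(\o,\bk)\|\approx 1/h$ only when $|\o|$ is comparable to $h$. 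Hence
\begin{align*}
\frac{1}{\beta L^d}\sum_{(\o,\bk)\in\cM_h\times\G(L)^*}\|M(\o,\bk)\|
\ \sim\ \frac{1}{\beta}\sum_{m=0}^{O(\beta h)}\frac{\beta}{(2m+1)\pi}
\ \sim\ \frac{1}{\pi}\log(\beta h),
\end{align*}
which diverges as $h\to\infty$. Your sentence ``this sum is finite uniformly in $h$'' is simply false: the $O(1/h)$ decay at the edge of the Matsubara window compensates only the $O(\beta h)$ modes with $|\o|\sim h$, while the logarithmic accumulation comes from the moderate frequencies $|\o|\in[\pi/\beta,\pi h]$. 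The same obstruction persists for $C_{>0}^{\pm}$ and $C_{>0}^{+(h)}$, since the cut-off $1-\chi(h|1-e^{i\o/h}|)$ only removes the bounded set of low frequencies and leaves the logarithmically divergent tail intact.

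The naive frequency-space Gram argument is fine for $C_{\le 0}^{+}$ and $C_{\le 0}^{\infty}$, where the compact support of $\chi$ confines the $\o$-sum to $O(\beta)$ modes independently of $h$, and it is also fine for the two difference bounds (the $1/h$ is either explicit in the kernel or follows from Taylor expansion on the compactly supported set). But for $C$ itself --- and this is exactly what the reference to \cite[Lemma~2.4]{K15} supplies --- one must go back to the time-space trace form \eqref{eq_covariance_original_definition}: the Matsubara sum is resummed into expressions of the type $e^{\mp(x-y)\cE(\bk)}(I_{2^d}+e^{-\beta\cE(\bk)})^{-1}$, and the Fermi factor $(I_{2^d}+e^{-\beta\cE(\bk)})^{-1}$ is split symmetrically between the two Gram vectors, which are then bounded uniformly in $h$ because $\|\cE(\bk)\|_{2^d\times 2^d}\le 2d$ and $x,y\in[0,\beta)$. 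Once $C$ and $C_{\le 0}^{+}$ (resp.\ $C_{\le 0}^{\infty}$) are Gram-bounded, the UV pieces $C_{>0}^{+}=C-C_{\le 0}^{+}$, $C_{>0}^{+(h)}$ and $C_{>0}^{-}=C_{>0}^{+(h)}-\cI$ are controlled by the Cauchy--Binet formula, not by a fresh frequency-space Gram construction. That is the role of Cauchy--Binet the paper explicitly names and that your proposal omits entirely.
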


In the following we assume that $h\ge h_0$ so that the results of Lemma
\ref{lem_artificial_covariances_bound} are available. Define the Grassmann polynomials $V^+(\psi)$, $V^-(\psi)$, $S^+(\psi)$,
$S^-(\psi)$, $S^0(\psi)\in\bigwedge \cV$ by
\begin{align}
&V^+(\psi):=V(\psi),\label{eq_Grassmann_interaction_plus_minus}\\
&V^-(\psi):=\sum_{m=0}^{N_v}\sum_{(\rho_j,\bx_j,\s_j),(\eta_j,\by_j,\tau_j)\in\cB\times
 \G(L)\times\spin\atop
 (j=1,2,\cdots,m)}\frac{1}{h}\sum_{s\in[0,\beta)_h}(-1)^m\notag\\
&\quad\cdot
 V_m^L((\nu(\rho_1,\bx_1)\s_1,\cdots,\nu(\rho_m,\bx_m)\s_m),
       (\nu(\eta_1,\by_1)\tau_1,\cdots,\nu(\eta_m,\by_m)\tau_m))\notag\\
&\qquad\qquad\cdot\opsi_{\rho_1\bx_1\s_1 s}\cdots\opsi_{\rho_m\bx_m\s_m
 s}\psi_{\eta_1\by_1\tau_1 s}\cdots \psi_{\eta_m\by_m\tau_m s},\notag\\
&S^{\delta}(\psi):=\int
 e^{-V^{\delta}(\psi+\psi^1)}d\mu_{C_{>0}^{\delta}}(\psi^1),\ (\delta
 \in \{+,-\}),\notag\\
&S^{0}(\psi):=\int
 e^{-V^{+}(\psi+\psi^1)}d\mu_{C_{>0}^{+(h)}}(\psi^1).\notag
\end{align}
For conciseness let $g(\alpha)$ denote
$$
\beta L^dv_0+2^{d+1}\beta L^d\sum_{m=1}^{N_v}(\alpha+1)^{2m}c_1^mv_m(0)
$$
for $\alpha\in\R_{\ge 0}$, where $c_1$ is the constant appearing in
Lemma \ref{lem_artificial_covariances_bound}.

\begin{lemma}\label{lem_grassmann_interaction_bound}
\begin{enumerate}
\item\label{item_grassmann_0th_bound}
\begin{align*}
|S_0^{\delta}-e^{-V_0}|\le e^{U_{max}g(0)}-e^{U_{max}\beta L^dv_0},\quad
 (\forall \delta\in\{+,-,0\}).
\end{align*}
\item\label{item_grassmann_all_bound}
\begin{align*}
\sum_{m=0}^N\alpha^mc_1^{\frac{m}{2}}\|S_m^{\delta}\|_{L^1}\le
 e^{U_{max} g(\alpha)},\quad (\forall \alpha\in \R_{\ge 0},
 \delta\in\{+,-,0\}).
\end{align*}
\item\label{item_grassmann_all_difference_bound}
\begin{align*}
\sum_{m=0}^N\alpha^mc_1^{\frac{m}{2}}\|S_m^{+}-S_m^0\|_{L^1}\le
 \frac{1}{h}(e^{U_{max}g(\alpha+1)}-e^{U_{max}\beta L^dv_0}),\quad
(\forall \alpha\in\R_{\ge 0}).
\end{align*}
\item\label{item_grassmann_minus_zero_difference_bound}
\begin{align*}
&\sum_{m=0}^N\alpha^mc_1^{\frac{m}{2}}\|S_m^{-}-S_m^0\|_{L^1}\le
 \frac{1}{\beta h}(U_{max}g(\alpha)-U_{max}\beta
 L^dv_0)^2e^{U_{max}g(\alpha)},\\
&(\forall \alpha\in\R_{\ge 0}).
\end{align*}
\end{enumerate}
\end{lemma}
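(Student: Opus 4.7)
The plan is to adapt the strategy of \cite[Lemma 2.8]{K15} to the generalized interaction $V(\psi)$ of \eqref{eq_Grassmann_interaction}. Throughout, the fundamental tools are the Gram-type determinant bounds from Lemma \ref{lem_artificial_covariances_bound} together with the $L^1$ estimates on the kernels of $V(\psi)$ from Lemma \ref{lem_estimations_Grassmann_V}, which jointly control perturbative expansions of Grassmann Gaussian integrals uniformly in $h$. The abstract Gaussian-integration bounds of \cite[Appendix B]{K15} can be reused essentially unchanged; each claim then reduces to substituting the new kernel bounds and keeping track of combinatorial factors.

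For claim \eqref{item_grassmann_all_bound}, I would apply the abstract estimate to $S^\delta(\psi) = \int e^{-V^\delta(\psi+\psi^1)}\,d\mu_{C_{>0}^\delta}(\psi^1)$, yielding
\[
\sum_{m=0}^N \alpha^m c_1^{m/2}\|S_m^\delta\|_{L^1} \le \exp\Bigl(|V_0^\delta| + \sum_{n\ge 1}(\alpha+1)^n c_1^{n/2}\|V_n^\delta\|_{L^1}\Bigr).
\]
Substituting $|V_0^\delta|\le U_{max}\beta L^d v_0$, $\|V_{2m}^\delta\|_{L^1}\le 2^{d+1}\beta L^d U_{max} v_m(0)$, and $\|V_n^\delta\|_{L^1}=0$ for odd $n$ from Lemma \ref{lem_estimations_Grassmann_V} gives precisely $\exp(U_{max} g(\alpha))$. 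Claim \eqref{item_grassmann_0th_bound} then follows by specializing to $\psi=0$ and applying the same bound to $V^\delta - V_0^\delta$: this factors $e^{-V_0^\delta}$ out of $S^\delta(0)$ and leaves a remainder bounded by the exponential difference $e^{U_{max}g(0)} - e^{U_{max}\beta L^d v_0}$.

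For \eqref{item_grassmann_all_difference_bound} I would use the covariance interpolation
\[
\int F\,d\mu_{C^+_{>0}} - \int F\,d\mu_{C^{+(h)}_{>0}} = \int_0^1 \frac{d}{dt}\!\int F\,d\mu_{tC^+_{>0}+(1-t)C^{+(h)}_{>0}}\,dt,
\]
applied with $F = e^{-V^+(\psi+\cdot)}$. Differentiation in $t$ produces an additional contraction through the difference covariance $C^+_{>0} - C^{+(h)}_{>0}$, whose Gram bound from Lemma \ref{lem_artificial_covariances_bound} supplies the prefactor $1/h$. The extra contracted $\psi^1$-pair shifts the exponential weight from $\alpha$ to $\alpha+1$, producing the right-hand side of \eqref{item_grassmann_all_difference_bound}.

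The most delicate estimate is \eqref{item_grassmann_minus_zero_difference_bound}, which is where the main obstacle lies. Using the identity \eqref{eq_artificial_covariances_relation} and the convolution formula for Grassmann Gaussian integrals,
\[
S^0(\psi) = \int\!\!\int e^{-V^+(\psi+\psi^1+\psi^2)}\,d\mu_{\cI}(\psi^2)\,d\mu_{C^-_{>0}}(\psi^1).
\]
Because $\cI(X,Y)=\delta_{X,Y}$, each contraction through $d\mu_\cI$ converts an $\opsi\psi$ pair inside a vertex of $V^+$ into a Kronecker delta. The combinatorial analysis, using the particle-hole symmetry \eqref{eq_particle_hole} together with the antisymmetry \eqref{eq_bi_anti_symmetric}, shows that the contractions which saturate \emph{all} $\psi^2$-fields inside a single vertex of $V^+$ reconstruct exactly $(-1)^m$ times the original kernel and thereby rebuild $V^-(\psi+\psi^1)$. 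The residual contributions link at least two distinct vertices of $V^+$ through an $\cI$-pairing; such a pairing costs a factor of $1/(\beta h)$ from the diagonal constraint on the summation variable $(\rho,\bx,\s,x)$, while the two participating vertices supply the quadratic weight $(U_{max}g(\alpha)-U_{max}\beta L^d v_0)^2$ once bounded via Lemma \ref{lem_estimations_Grassmann_V}. Controlling the remaining $\psi^1$-integration by the argument of \eqref{item_grassmann_all_bound} then closes the bound. The hard part is the rigorous combinatorial identification that the ``single-vertex saturation'' contributions reproduce $V^-$ exactly; this is precisely what the particle-hole symmetry \eqref{eq_particle_hole} is designed to ensure, and adapting the matching step of \cite[Lemma 2.8]{K15} to the present general kernels $V_m^L$ is the technical heart of the proof.
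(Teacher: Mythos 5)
Your proposal is correct and its mechanism is essentially the same as the paper's: parts (1)--(3) are obtained by plugging Lemma~\ref{lem_estimations_Grassmann_V} and Lemma~\ref{lem_artificial_covariances_bound} into the abstract Gaussian-integration bounds of \cite[Appendix~B]{K15} (the paper cites Lemma~B.2~(1),(2),(4) for exactly this), and your arithmetic verifying that the kernel bounds feed into $U_{max}g(\cdot)$ checks out. For part~(4) you correctly identify the decisive ingredients --- the convolution formula via $C_{>0}^{+(h)}=C_{>0}^{-}+\cI$, the particle-hole identity~\eqref{eq_particle_hole} used to show that ``tadpole'' $\cI$-contractions turn each $V^{+}$-vertex into a $V^{-}$-vertex, and the $1/(\beta h)$ gain from the time-diagonal constraint on two-vertex contractions.

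The one place you diverge organizationally from the paper is in how the residual in~(4) is carved out. You propose to decompose the $\cI$-Gaussian integral itself into ``all-intra-vertex'' Wick pairings (which rebuild $e^{-V^{-}}$ for every configuration of times, since the block-diagonal determinant of $\cI$ does not see time coincidence) plus ``inter-vertex'' pairings (which force time coincidence and hence yield the $1/(\beta h)$ factor). The paper instead truncates the \emph{exponential series}: it defines $Q^{\delta}(\psi)$ as the part of $e^{-V^{\delta}}$ supported on pairwise-distinct time configurations, proves the exact identity $\int Q^{+}d\mu_{C_{>0}^{+(h)}}=\int Q^{-}d\mu_{C_{>0}^{-}}$ by noting that for distinct times the $\cI$-integral factorizes over vertices, and then bounds $e^{-V^{\delta}}-Q^{\delta}$ by the elementary inequality $\prod_{j}\frac{1}{h}\sum_{s_{j}}1_{\exists j\neq k:\,s_{j}=s_{k}}\leq\binom{n}{2}\beta^{n-1}/h$. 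Both routes rest on the same core identity and yield the same $\frac{1}{\beta h}(\cdot)^{2}e^{(\cdot)}$ bound; the paper's $Q^{\delta}$ truncation is technically tidier because it operates on the exponential series (where the decomposition by time coincidence is canonical and sign-free) rather than on the Wick expansion of the Gaussian integral (where extracting the intra-vertex sub-determinant requires some care with Grassmann signs, which you defer to ``combinatorial analysis''). If you go your route, the precise statement you need to justify is that the block-diagonal approximation of the $\cI$-determinant, after the $\binom{m_{j}}{l_{j}}^{2}l_{j}!$ resummation, reproduces $\prod_{j}W_{s_{j}}^{-}$ with the correct signs for \emph{arbitrary} time configurations; this is exactly the content of the paper's middle displayed computation in~\eqref{eq_different_time_equality}, so the work is essentially the same. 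One minor slip: the paper frames~(4) as a generalization of \cite[Lemma~2.6]{K15}, not Lemma~2.8.
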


\begin{proof} 
Combination of Lemma \ref{lem_estimations_Grassmann_V}, Lemma
 \ref{lem_artificial_covariances_bound} and \\
\cite[\mbox{Lemma B.2 (1),(2),(4)}]{K15}
 yields the inequalities in
 \eqref{item_grassmann_0th_bound}, \eqref{item_grassmann_all_bound},
 \eqref{item_grassmann_all_difference_bound}. 

Let us prove the inequality in
 \eqref{item_grassmann_minus_zero_difference_bound}, which is a
 generalization of \cite[\mbox{Lemma 2.6}]{K15}. Define the functions
 $W_m^{\delta}$ $(\delta=+,-,$ $m=1,2,\cdots,N_v)$ on
 $(\cB\times\G(L)\times\spin)^m\times (\cB\times\G(L)\times\spin)^m$ 
by 
\begin{align*}
&W_m^{\delta}(((\rho_1,\bx_1,\s_1),\cdots,(\rho_m,\bx_m,\s_m)),
              ((\eta_1,\by_1,\tau_1),\cdots,(\eta_m,\by_m,\tau_m)))\\
&:=(1_{\delta=+}+1_{\delta=-}(-1)^m)\\
&\quad\cdot V_m^{L}((\nu(\rho_1,\bx_1)\s_1,\cdots,\nu(\rho_m,\bx_m)\s_m),(\nu(\eta_1,\by_1)\tau_1,\cdots,\nu(\eta_m,\by_m)\tau_m)),\\
&(\forall
 (\rho_j,\bx_j,\s_j),(\eta_j,\by_j,\tau_j)\in\cB\times\G(L)\times\spin\
(j=1,2,\cdots,m)).
\end{align*}
For any $s\in [0,\beta)_h$, $\bX=(X_1,X_2,\cdots,X_m)\in
 (\cB\times\G(L)\times\spin)^m$ we abbreviate
 $(X_m,X_{m-1},\cdots,X_1)$, $((X_1,s),(X_2,s),\cdots,(X_m,s))$,
 $\opsi_{X_1s}\opsi_{X_2s}$ $\cdots\opsi_{X_ms}$,
 $\psi_{X_1s}\psi_{X_2s}\cdots \psi_{X_ms}$ to $\widetilde{\bX}$, $\bX
 s$, $\opsi_{\bX s}$, $\psi_{\bX s}$ respectively. For
 $s\in[0,\beta)_h$ we define $W_s^{\delta}(\psi)\in\bigwedge\cV$ $(\delta
 =+,-)$ by
\begin{align*}
W_s^{\delta}(\psi):=\sum_{m=1}^{N_v}\sum_{\bX,\bY\in(\cB\times\G(L)\times\spin)^m}
W_m^{\delta}(\bX,\bY)\opsi_{\bX s}\psi_{\bY s}.
\end{align*}
Take $s_1,s_2,\cdots,s_n\in[0,\beta)_h$ satisfying $s_j\neq s_k$
 ($\forall j,k\in\{1,2,\cdots,n\}$ with $j\neq k$). By the invariance
 \eqref{eq_particle_hole}, the equality \eqref{eq_artificial_covariances_relation} and anti-symmetry, 
\begin{align}
&\int\prod_{j=1}^nW_{s_j}^+(\psi+\psi^1)d\mu_{C_{>0}^{+(h)}}(\psi^1)\label{eq_different_time_equality}
\\
&=\int\prod_{j=1}^nW_{s_j}^+(\psi+\psi^1+\psi^2)d\mu_{\cI}(\psi^2)d\mu_{C_{>0}^-}(\psi^1)\notag\\
&=\int\prod_{j=1}^n\Bigg(\sum_{m_j=1}^{N_v}\sum_{l_j=0}^{m_j}\sum_{\bX_j,\bY_j\in(\cB\times\G(L)\times\spin)^{m_j-l_j}\atop
 \bW_j\in (\cB\times\G(L)\times\spin)^{l_j}}\left(\begin{array}{c}m_j\\
						  l_j\end{array}\right)^2l_j! \notag\\
&\quad\cdot W_{m_j}^+((\bX_j,\bW_j),(\widetilde{\bW_j},\bY_j))(\opsi+\opsi^1)_{\bX_js_j}(\psi+\psi^1)_{\bY_js_j}
\opsi_{\bW_j s_j}^2\psi_{\widetilde{\bW_j} s_j}^2\Bigg)\notag\\
&\quad\cdot d\mu_{\cI}(\psi^2)d\mu_{C_{>0}^-}(\psi^1)\notag\\
&=\int\prod_{j=1}^n\Bigg(\sum_{m_j=1}^{N_v}\sum_{l_j=0}^{m_j}\sum_{\bX_j,\bY_j\in(\cB\times\G(L)\times\spin)^{m_j-l_j}\atop
 \bW_j\in (\cB\times\G(L)\times\spin)^{l_j}}\left(\begin{array}{c}m_j\\
						  l_j\end{array}\right)^2l_j!\notag\\
&\quad\cdot W_{m_j}^+((\bX_j,\bW_j),(\widetilde{\bW_j},\bY_j))(\opsi+\opsi^1)_{\bX_js_j}(\psi+\psi^1)_{\bY_js_j}
\Bigg)d\mu_{C_{>0}^-}(\psi^1)\notag\\
&=\int\prod_{j=1}^nW_{s_j}^-(\psi+\psi^1)
d\mu_{C_{>0}^-}(\psi^1).\notag
\end{align}

Define $Q^{\delta}(\psi)\in\bigwedge \cV$ $(\delta=+,-)$ by
\begin{align*}
Q^{\delta}(\psi)
:=e^{-V_0}+e^{-V_0}&\sum_{n=1}^N\frac{(-1)^n}{n!}\prod_{j=1}^n\left(\frac{1}{h}\sum_{s_j\in[0,\beta)_h}\right)\\
&\cdot 1_{\forall
 j\forall k\in\{1,2,\cdots,n\}(j\neq k\to s_j\neq
 s_k)}\prod_{j=1}^nW_{s_j}^{\delta}(\psi).
\end{align*}
Then, the equality \eqref{eq_different_time_equality} implies that
\begin{align*}
\int Q^+(\psi+\psi^1)d\mu_{C_{>0}^{+(h)}}(\psi^1)=
\int Q^-(\psi+\psi^1)d\mu_{C_{>0}^{-}}(\psi^1).
\end{align*}
Therefore,
\begin{align}
S^0(\psi)-S^-(\psi)&=\int(e^{-V^+(\psi+\psi^1)}-Q^+(\psi+\psi^1))d\mu_{C_{>0}^{+(h)}}(\psi^1)\label{eq_important_division}\\
&\quad -
 \int(e^{-V^-(\psi+\psi^1)}-Q^-(\psi+\psi^1))d\mu_{C_{>0}^{-}}(\psi^1).\notag
\end{align}

Let us set 
\begin{align*}
\tilde{S}^0(\psi):=\int(e^{-V^+(\psi+\psi^1)}-Q^+(\psi+\psi^1))d\mu_{C_{>0}^{+(h)}}(\psi^1).
\end{align*}
For any $m\in\{0,1,\cdots,N\}$ we can characterize
 $\tilde{S}^0_m(\psi)$, the $m$-th order part of $\tilde{S}^0(\psi)$ as follows.
\begin{align*}
\tilde{S}^0_m(\psi)=&e^{-V_0}\sum_{n=2}^N\frac{(-1)^n}{n!}\prod_{j=1}^n\Bigg(\frac{1}{h}\sum_{s_j\in[0,\beta)_h}\sum_{m_j=1}^{N_v}\sum_{k_j=0}^{m_j}\sum_{l_j=0}^{m_j}
\left(\begin{array}{c}m_j\\ k_j\end{array}\right)
\left(\begin{array}{c}m_j\\
      l_j\end{array}\right)\\
&\cdot
\sum_{\bX_j\in(\cB\times \G(L)\times\spin)^{m_j-k_j},\bX_j'\in(\cB\times
 \G(L)\times\spin)^{k_j}\atop
\bY_j\in(\cB\times \G(L)\times\spin)^{m_j-l_j},\bY_j'\in(\cB\times
 \G(L)\times\spin)^{l_j}}W_{m_j}^+((\bX_j,\bX_j'),(\bY_j,\bY_j'))\Bigg)\\
&\cdot 1_{\exists j\exists k\in\{1,2,\cdots,n\}(j\neq k\wedge
 s_j=s_k)}1_{\sum_{j=1}^nk_j=\sum_{j=1}^nl_j=\frac{m}{2}}\eps_{\pm}\\
&\cdot \int
 \opsi_{\bX_1s_1}^1\psi_{\bY_1s_1}^1\opsi_{\bX_2s_2}^1\psi_{\bY_2s_2}^1\cdots 
\opsi_{\bX_ns_n}^1\psi_{\bY_ns_n}^1d\mu_{C_{>0}^{+(h)}}(\psi^1)\\
&\cdot \opsi_{\bX_1's_1}\psi_{\bY_1's_1}\opsi_{\bX_2's_2}\psi_{\bY_2's_2}\cdots 
\opsi_{\bX_n's_n}\psi_{\bY_n's_n},
\end{align*}
where the factor $\eps_{\pm}\in\{1,-1\}$ depends only on $m_j,k_j,l_j$
 $(j=1,2,\cdots,$ $n)$. From this equality, Lemma
 \ref{lem_artificial_covariances_bound}, \cite[\mbox{Lemma B.1}]{K15}
 and the inequality that 
\begin{align*}
\prod_{j=1}^n\left(\frac{1}{h}\sum_{s_j\in[0,\beta)_h}\right)
1_{\exists j\exists k\in\{1,2,\cdots,n\}(j\neq k\wedge
 s_j=s_k)}\le \left(\begin{array}{c} n\\
		    2\end{array}\right)\frac{\beta^{n-1}}{h},\quad
 (\forall n\in\N_{\ge 2}),
\end{align*}
we can deduce that
\begin{align*}
\|\tilde{S}_m^0\|_{L^1}
&\le e^{|V_0|}\sum_{n=2}^N\frac{1}{n!}\left(\begin{array}{c}n \\
				   2\end{array}\right)\frac{\beta^{n-1}}{h}
\prod_{j=1}^n\Bigg(\sum_{m_j=1}^{N_v}\sum_{k_j=0}^{m_j}\sum_{l_j=0}^{m_j}
\left(\begin{array}{c}m_j\\ k_j\end{array}\right)
\left(\begin{array}{c}m_j\\
      l_j\end{array}\right)\\
&\quad\cdot \sum_{\bX,\bY\in(\G(2L)\times\spin)^{m_j}}|V_{m_j}^L(\bX,\bY)|c_1^{m_j}\Bigg)c_1^{-\frac{m}{2}}1_{\sum_{j=1}^nk_j=\sum_{j=1}^nl_j=\frac{m}{2}}.
\end{align*}
Thus, for any $\alpha\in\R_{\ge 0}$, 
\begin{align*}
\sum_{m=0}^N\alpha^mc_1^{\frac{m}{2}}\|\tilde{S}_m^0\|_{L^1}
&=\sum_{m=0}^{N/2}\alpha^{2m}c_1^{m}\|\tilde{S}_{2m}^0\|_{L^1}\\
&\le \frac{1}{2\beta
 h}e^{U_{max}\beta L^d v_0}
\sum_{n=2}^{\infty}\frac{1}{(n-2)!}(U_{max}g(\alpha)-U_{max}\beta L^d v_0)^n\\
&=\frac{1}{2\beta h}(U_{max}g(\alpha)-U_{max}\beta L^d v_0)^2e^{U_{max}g(\alpha)}.
\end{align*}
The Grassmann polynomial 
$$
\int(e^{-V^-(\psi+\psi^1)}-Q^-(\psi+\psi^1))d\mu_{C_{>0}^{-}}(\psi^1)
$$
can be estimated in the same way as above. By combining these bounds
 with the equality \eqref{eq_important_division} we can
 derive the claimed inequality.
\end{proof}

\begin{lemma}\label{lem_grassmann_exponential_bound}
Let $\alpha\in\R_{\ge 0}$ and $\eps\in (0,1)$. Assume that 
\begin{align*}
U_{max}g(\alpha)\le \log\left(\frac{2(\eps+1)}{\eps+2}\right).
\end{align*}
Then, the following inequalities hold.
\begin{enumerate}
\item\label{item_grassmann_exponential_0th_bound}
$$
|S_0^{\delta}-e^{-V_0}|\le\frac{\eps}{\eps+2},\quad (\forall \delta\in\{+,-,0\}).
$$
\item\label{item_grassmann_exponential_all_bound}
$$
\sup_{\delta\in\{+,-,0\}}\sum_{m=1}^N\alpha^mc_1^{\frac{m}{2}}\|S_m^{\delta}\|_{L^1}\le
     \eps\inf_{\delta\in\{+,-,0\}}|S_0^{\delta}|.
$$
\end{enumerate}
\end{lemma}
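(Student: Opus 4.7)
The plan is to handle part \eqref{item_grassmann_exponential_0th_bound} by direct substitution into Lemma \ref{lem_grassmann_interaction_bound}\eqref{item_grassmann_0th_bound}, and part \eqref{item_grassmann_exponential_all_bound} by factoring out the scalar $e^{-V_0}$ before invoking an exponential bound on the normalized Grassmann integral.

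For part \eqref{item_grassmann_exponential_0th_bound}, I would observe that $g$ is non-decreasing in $\alpha$ (evident from the factor $(\alpha+1)^{2m}$ in its definition) and that $e^{U_{max}\beta L^dv_0}\ge 1$. Lemma \ref{lem_grassmann_interaction_bound}\eqref{item_grassmann_0th_bound} then gives, for every $\delta\in\{+,-,0\}$,
\[
|S_0^{\delta}-e^{-V_0}|\le e^{U_{max}g(0)}-e^{U_{max}\beta L^dv_0}\le e^{U_{max}g(\alpha)}-1\le \frac{2(\eps+1)}{\eps+2}-1=\frac{\eps}{\eps+2}.
\]

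For part \eqref{item_grassmann_exponential_all_bound}, the direct bound of Lemma \ref{lem_grassmann_interaction_bound}\eqref{item_grassmann_all_bound} is not sharp enough: when $V_0\in\C$ has positive real part, $|e^{-V_0}|$ can be much smaller than the factor $e^{U_{max}\beta L^dv_0}$ embedded inside $e^{U_{max}g(\alpha)}$, so the residual $e^{U_{max}g(\alpha)}-|S_0^{\delta}|$ exceeds $\eps|S_0^{\delta'}|$ in general. The remedy is to factor
\[
S^{\delta}(\psi)=e^{-V_0}\tilde S^{\delta}(\psi),\qquad \tilde S^{\delta}(\psi):=\int e^{-(V^{\delta}(\psi+\psi^1)-V_0)}d\mu_{C^{\delta}_{>0}}(\psi^1),
\]
and similarly for $\delta=0$ using $V^+$ and $C^{+(h)}_{>0}$. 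Since $V^{\delta}-V_0$ has vanishing constant part and satisfies $\sum_{m\ge 1}\alpha^m c_1^{m/2}\|(V^{\delta}-V_0)_m\|_{L^1}\le U_{max}(g(\alpha)-\beta L^dv_0)$, re-running the argument underlying Lemma \ref{lem_grassmann_interaction_bound}\eqref{item_grassmann_all_bound} (via \cite[Lemma B.2]{K15} and Lemma \ref{lem_artificial_covariances_bound}) with the shifted polynomial, and then subtracting the trivial constant $1$ coming from the identity term in the Taylor expansion of $e^{-(V^{\delta}-V_0)}$, yields
\[
|\tilde S_0^{\delta}-1|+\sum_{m\ge 1}\alpha^m c_1^{m/2}\|\tilde S_m^{\delta}\|_{L^1}\le e^{U_{max}(g(\alpha)-\beta L^dv_0)}-1=:B-1.
\]

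Finally, setting $B:=e^{U_{max}(g(\alpha)-\beta L^dv_0)}\le e^{U_{max}g(\alpha)}\le 2(\eps+1)/(\eps+2)$, an elementary rearrangement shows this last inequality is equivalent to $B-1\le \eps(2-B)$; combined with $|\tilde S_0^{\delta'}|\ge 2-B$ from the preceding display, one obtains
\[
\sum_{m\ge 1}\alpha^m c_1^{m/2}\|S_m^{\delta}\|_{L^1}=|e^{-V_0}|\sum_{m\ge 1}\alpha^m c_1^{m/2}\|\tilde S_m^{\delta}\|_{L^1}\le |e^{-V_0}|(B-1)\le \eps|e^{-V_0}|(2-B)\le \eps|S_0^{\delta'}|
\]
for any $\delta,\delta'\in\{+,-,0\}$, which is part \eqref{item_grassmann_exponential_all_bound}. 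The main obstacle is spotting the need to normalize by $e^{-V_0}$: the uncontrolled phase of $V_0$ prevents closing the inequality using Lemma \ref{lem_grassmann_interaction_bound} alone, and only after moving to $\tilde S^{\delta}$, where the comparison with the identity takes place on an $|e^{-V_0}|$-independent scale, does the final scalar inequality on $B$ succeed.
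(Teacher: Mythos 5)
Part \eqref{item_grassmann_exponential_0th_bound} of your proposal matches the paper's proof. For part \eqref{item_grassmann_exponential_all_bound}, however, your premise that the direct route via Lemma \ref{lem_grassmann_interaction_bound} \eqref{item_grassmann_all_bound} ``is not sharp enough'' is incorrect, and as a consequence you constructed a more elaborate argument than is actually needed. The hypothesis $U_{max}g(\alpha)\le\log\frac{2(\eps+1)}{\eps+2}$ already dominates $U_{max}\beta L^d v_0\ge|V_0|$, so the scenario you worried about (``$|e^{-V_0}|$ much smaller than $e^{U_{max}\beta L^dv_0}$'') is excluded a priori; indeed, writing $A:=e^{U_{max}g(\alpha)}$, the hypothesis gives $A\le (\eps+1)(2-A)$, while the chain $|S_0^\delta-1|\le|S_0^\delta-e^{-V_0}|+|e^{-V_0}-1|\le e^{U_{max}g(0)}-1$ (Lemma \ref{lem_grassmann_interaction_bound} \eqref{item_grassmann_0th_bound} plus $|e^{-V_0}-1|\le e^{U_{max}\beta L^dv_0}-1$) gives $\inf_\delta|S_0^\delta|\ge 2-A$. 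Then Lemma \ref{lem_grassmann_interaction_bound} \eqref{item_grassmann_all_bound} directly yields $\sup_\delta\sum_{m\ge 1}\alpha^m c_1^{m/2}\|S_m^\delta\|_{L^1}\le A-\inf_\delta|S_0^\delta|\le\eps\inf_\delta|S_0^\delta|$, which is exactly how the paper closes the argument. Your alternative route—factoring out $e^{-V_0}$, re-running the machinery behind Lemma \ref{lem_grassmann_interaction_bound} \eqref{item_grassmann_all_bound} on the shifted interaction $V^\delta-V_0$, and obtaining $|\tilde S_0^\delta-1|+\sum_{m\ge 1}\alpha^m c_1^{m/2}\|\tilde S_m^\delta\|_{L^1}\le B-1$ with $B=e^{U_{max}(g(\alpha)-\beta L^dv_0)}$—is plausibly correct and even sharper (since $B\le A$), but it cannot be obtained by rescaling the stated inequalities of Lemma \ref{lem_grassmann_interaction_bound} and requires re-proving them via \cite[Lemma B.2]{K15}, which is unnecessary overhead here. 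Also, a minor point: your claim that $B\le\frac{2(\eps+1)}{\eps+2}$ is ``equivalent'' to $B-1\le\eps(2-B)$ is not quite right; the former is the strictly stronger condition $B\le\frac{2(\eps+1)}{\eps+2}<\frac{1+2\eps}{1+\eps}$ and \emph{implies} the latter, which is all you need.
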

\begin{proof}
\eqref{item_grassmann_exponential_0th_bound}: It follows from Lemma
 \ref{lem_grassmann_interaction_bound} \eqref{item_grassmann_0th_bound}
 and the assumption that for $\delta\in\{+,-,0\}$,
\begin{align*}
|S_0^{\delta}-e^{-V_0}|\le e^{U_{max}g(0)}-1\le\frac{\eps}{\eps+2}.
\end{align*}

\eqref{item_grassmann_exponential_all_bound}:
The assumption implies that
\begin{align}
e^{U_{max}g(\alpha)}\le
 (\eps+1)(2-e^{U_{max}g(\alpha)}).\label{eq_preliminary_long_ineq}
\end{align}
Moreover, by Lemma \ref{lem_grassmann_interaction_bound}
 \eqref{item_grassmann_0th_bound} and the inequality that 
$$
|e^{-V_0}-1|\le e^{U_{max}\beta L^dv_0}-1
$$
we see that
\begin{align}
|S_0^{\delta}-1|\le |S_0^{\delta}-e^{-V_0}|+|e^{-V_0}-1|\le
 e^{U_{max}g(0)}-1 \label{eq_grassmann_0th_with_1}
\end{align}
for any $\delta\in\{+,-,0\}$. Thus, 
\begin{align}
2-e^{U_{max}g(\alpha)}\le\inf_{\delta\in\{+,-,0\}}|S_0^{\delta}|.\label{eq_preliminary_ineq_next}
\end{align}
Using Lemma \ref{lem_grassmann_interaction_bound}
 \eqref{item_grassmann_all_bound}, \eqref{eq_preliminary_long_ineq} and
 \eqref{eq_preliminary_ineq_next}, we have that
\begin{align*}
\sup_{\delta\in\{+,-,0\}}\sum_{m=1}^N\alpha^mc_1^{\frac{m}{2}}\|S_m^{\delta}\|_{L^1}\le
    e^{U_{max}g(\alpha)}-\inf_{\delta\in\{+,-,0\}}|S_0^{\delta}|\le
 \eps\inf_{\delta\in\{+,-,0\}}|S_0^{\delta}|.
 \end{align*}
\end{proof}

\begin{lemma}\label{lem_grassmann_logarithm_bound}
Let $\alpha\in \R_{\ge 0}$, $\eps\in(0,1)$. Assume that
\begin{align*}
U_{max}g(\alpha+1)\le\log\left(\frac{2(\eps+1)}{\eps+2}\right).
\end{align*}
Set $R^{\delta}(\psi):=\log S^{\delta}(\psi)$, $(\delta\in
 \{+,-,0\})$. Then, the following inequalities hold for any $h\in
 (2/\beta)\N$ satisfying $h>\max\{1/2,2/\beta,h_0\}$.
\begin{enumerate}
\item\label{item_grassmann_logarithm_bound_0th}
$$
|R_0^{\delta}|\le \log\left(\frac{\eps+2}{2}\right),\quad (\forall \delta \in \{+,-,0\}).
$$
\item\label{item_grassmann_logarithm_bound}
$$
\sum_{m=1}^N\alpha^mc_1^{\frac{m}{2}}\|R_m^{\delta}\|_{L^1}\le
     -\log(1-\eps),\quad (\forall \delta\in \{+,-,0\}).
$$
\item\label{item_grassmann_logarithm_bound_0th_difference}
\begin{align*}
|R_0^{\delta}-R_0^0|\le
 -\log\left(1-\max\left\{1,\frac{4}{\beta}\right\}\frac{1}{2h}\right),\quad (\forall \delta\in\{+,-\}).
\end{align*}
\item\label{item_grassmann_logarithm_bound_difference}
\begin{align*}
\sum_{m=1}^N\alpha^mc_1^{\frac{m}{2}}\|R_m^{\delta}-R_m^0\|_{L^1}\le
\max\left\{1,\frac{4}{\beta}\right\}\frac{1}{2(1-\eps)h},\quad (\forall \delta\in\{+,-\}).\end{align*}
\end{enumerate}
\end{lemma}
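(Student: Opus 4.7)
The plan is to expand $R^{\delta}=\log S^{\delta}$ as a formal power series inside the even subalgebra of $\bigwedge\cV$, exploiting that under the hypothesis the constant part $S_{0}^{\delta}$ sits near $1$ while the non-constant part of $S^{\delta}$ is small relative to $S_{0}^{\delta}$ by Lemma \ref{lem_grassmann_exponential_bound}. Throughout I will work with the weighted $L^{1}$-norm $\|f\|_{\alpha}:=\sum_{m\ge 0}\alpha^{m}c_{1}^{m/2}\|f_{m}\|_{L^{1}}$, which is submultiplicative by \cite[\mbox{Lemma B.3}]{K15} and hence supports the functional calculus for $\log(1+\cdot)$ on Grassmann polynomials with small non-constant part.

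For (1), monotonicity of $g$ in $\alpha$ gives $U_{max}g(0)\le\log(2(\eps+1)/(\eps+2))$, and the argument used in the proof of Lemma \ref{lem_grassmann_exponential_bound} (1) shows $|S_{0}^{\delta}-1|\le\eps/(\eps+2)<1$, so $R_{0}^{\delta}=\log S_{0}^{\delta}$ is unambiguously defined on the principal branch and $|R_{0}^{\delta}|\le -\log(1-\eps/(\eps+2))=\log((\eps+2)/2)$. For (2), setting $T^{\delta}:=S^{\delta}/S_{0}^{\delta}-1$ gives $T_{0}^{\delta}=0$ and $\|T^{\delta}\|_{\alpha}\le\eps$ by Lemma \ref{lem_grassmann_exponential_bound} (2); expanding $R^{\delta}=R_{0}^{\delta}+\sum_{n\ge 1}(-1)^{n+1}(T^{\delta})^{n}/n$ (a finite Grassmann sum) and invoking submultiplicativity yields $\sum_{m\ge 1}\alpha^{m}c_{1}^{m/2}\|R_{m}^{\delta}\|_{L^{1}}\le\sum_{n\ge 1}\eps^{n}/n=-\log(1-\eps)$.

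For (3) and (4) I would write
$$R^{\delta}-R^{0}=\log(S_{0}^{\delta}/S_{0}^{0})+\sum_{n\ge 1}\frac{(-1)^{n+1}}{n}\bigl((T^{\delta})^{n}-(T^{0})^{n}\bigr).$$
For (3), the constant part equals $\log(1+(S_{0}^{\delta}-S_{0}^{0})/S_{0}^{0})$; specializing Lemma \ref{lem_grassmann_interaction_bound} (3), (4) to $\alpha=0$ and combining with $|S_{0}^{0}|\ge 2/(\eps+2)$ and the elementary estimate $\log(2(\eps+1)/(\eps+2))=\log(1+\eps/(\eps+2))\le\eps/(\eps+2)$ gives $|(S_{0}^{\delta}-S_{0}^{0})/S_{0}^{0}|\le\max\{1,4/\beta\}/(2h)$, and the logarithmic bound then follows from $|\log(1+z)|\le -\log(1-|z|)$. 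For (4), the telescoping identity $(T^{\delta})^{n}-(T^{0})^{n}=\sum_{k=0}^{n-1}(T^{\delta})^{k}(T^{\delta}-T^{0})(T^{0})^{n-1-k}$ combined with $\|T^{\delta}\|_{\alpha},\|T^{0}\|_{\alpha}\le\eps$ gives $\|(T^{\delta})^{n}-(T^{0})^{n}\|_{\alpha}\le n\eps^{n-1}\|T^{\delta}-T^{0}\|_{\alpha}$, whence $\sum_{m\ge 1}\alpha^{m}c_{1}^{m/2}\|R_{m}^{\delta}-R_{m}^{0}\|_{L^{1}}\le \|T^{\delta}-T^{0}\|_{\alpha}/(1-\eps)$. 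I would then bound $\|T^{\delta}-T^{0}\|_{\alpha}$ by rewriting
$$T^{\delta}-T^{0}=\frac{S^{\delta}-S^{0}}{S_{0}^{\delta}}+\frac{S^{0}(S_{0}^{0}-S_{0}^{\delta})}{S_{0}^{\delta}S_{0}^{0}}$$
and applying Lemma \ref{lem_grassmann_interaction_bound} (3), (4) to the first term and Lemma \ref{lem_grassmann_exponential_bound} (2) together with the constant-part bound from (3) to the second.

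The main obstacle I anticipate is the bookkeeping in parts (3) and (4): Lemma \ref{lem_grassmann_interaction_bound} (3) supplies an $h^{-1}$ bound with exponential prefactor $e^{U_{max}g(\alpha+1)}-e^{U_{max}\beta L^{d}v_{0}}$, while (4) supplies a $(\beta h)^{-1}$ bound with quadratic prefactor $(U_{max}g(\alpha)-U_{max}\beta L^{d}v_{0})^{2}e^{U_{max}g(\alpha)}$; uniformizing these into the single clean estimate $\max\{1,4/\beta\}/(2h)$ hinges on the elementary identity $2(\eps+1)/(\eps+2)=1+\eps/(\eps+2)$, which forces both the exponential difference and the squared quantity to be controlled by $\eps/(\eps+2)\le 1$.
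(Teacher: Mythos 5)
Your plan is in spirit the same as the paper's: the paper verifies $|S_0^{\delta}-1|\le\eps/(\eps+2)<1$ and then plugs the bounds from Lemmas \ref{lem_grassmann_interaction_bound} and \ref{lem_grassmann_exponential_bound} into the abstract logarithm lemma \cite[Lemma B.3]{K15}, which in turn rests on precisely the Neumann/Mercator expansion $\log(S_0(1+T))=\log S_0+\sum_{n\ge1}(-1)^{n+1}T^n/n$ that you write out by hand. Parts (1) and (2) go through exactly as you describe.

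The one place your sketch would break as literally written is part (4). With your decomposition
\begin{align*}
T^{\delta}-T^{0}=\frac{S^{\delta}-S^{0}}{S_{0}^{\delta}}+\frac{S^{0}(S_{0}^{0}-S_{0}^{\delta})}{S_{0}^{\delta}S_{0}^{0}},
\end{align*}
if you bound each summand by its \emph{full} weighted norm (using Lemma \ref{lem_grassmann_interaction_bound} (3),(4) for the first summand and Lemma \ref{lem_grassmann_exponential_bound} (2) for $\|S^0\|_\alpha$ in the second), the constant parts of the two summands are each nonzero even though their sum vanishes, and you end up with a prefactor like $\eps(1+\eps)$ instead of $\eps$. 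For $\eps$ near $1$ this exceeds the claimed $\max\{1,4/\beta\}$; the lemma asserts the bound for all $\eps\in(0,1)$. The repair is to observe that $T^\delta-T^0$ has zero constant part, restrict the triangle inequality to the degree-$\ge 1$ kernels of each summand, and bound the $m\ge 1$ piece of the second summand by $\|S^0-S_0^0\|_\alpha\le\eps|S_0^0|$ rather than by $\|S^0\|_\alpha$. Writing $A:=\sum_{m\ge1}\alpha^mc_1^{m/2}\|S_m^{\delta}-S_m^0\|_{L^1}$ and $B:=|S_0^{\delta}-S_0^0|$, one then gets $\|T^\delta-T^0\|_\alpha\le(A+B\eps)/|S_0^\delta|\le(A+B)/|S_0^\delta|$, and $A+B$ is exactly what Lemma \ref{lem_grassmann_interaction_bound} (3),(4) controls. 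That closes the gap cleanly for all $\eps\in(0,1)$; you flagged the bookkeeping as the obstacle but did not identify this specific cancellation.
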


\begin{proof}
It follows from \eqref{eq_grassmann_0th_with_1} and the
 assumption that
$$
|S_0^{\delta}-1|\le\frac{\eps}{\eps+2}<1,\quad(\forall \delta\in\{+,-,0\}).
$$
This means that the assumption of \cite[\mbox{Lemma
 B.3}]{K15} is satisfied and thus we can apply it.
The claims can be proved in a way close to the proof of \cite[\mbox{Lemma
 2.8}]{K15}. We only explain which lemmas are necessary to prove each
 claim. We use the assumption, \eqref{eq_grassmann_0th_with_1} and \cite[\mbox{Lemma
 B.3 (1)}]{K15} to prove the claim
 \eqref{item_grassmann_logarithm_bound_0th}. The assumption and Lemma
 \ref{lem_grassmann_exponential_bound}
 \eqref{item_grassmann_exponential_all_bound} enable us to apply \cite[\mbox{Lemma
 B.3 (2)}]{K15} to prove the claim
 \eqref{item_grassmann_logarithm_bound}. We use the assumption, Lemma
 \ref{lem_grassmann_interaction_bound}
 \eqref{item_grassmann_all_difference_bound},\eqref{item_grassmann_minus_zero_difference_bound},
 \eqref{eq_preliminary_ineq_next} and \cite[\mbox{Lemma
 B.3 (3)}]{K15} to prove the claim
 \eqref{item_grassmann_logarithm_bound_0th_difference}. By combining the
 assumption, Lemma \ref{lem_grassmann_interaction_bound}
 \eqref{item_grassmann_all_difference_bound},\eqref{item_grassmann_minus_zero_difference_bound},
 Lemma \ref{lem_grassmann_exponential_bound}
 \eqref{item_grassmann_exponential_all_bound} and
 \eqref{eq_preliminary_ineq_next} with \cite[\mbox{Lemma
 B.3 (4)}]{K15} we can deduce the claim
 \eqref{item_grassmann_logarithm_bound_difference}.
\end{proof}

Here we reach the lemma stating that the Grassmann integral formulation
in Lemma \ref{lem_grassmann_formulation} can be approximated by another
formulation which will turn out to have a desirable symmetry later in
Section \ref{sec_IR}. We will mainly deal with this formulation in the
infrared multi-scale analysis in Section \ref{sec_IR}.
 
\begin{lemma}\label{lem_grassmann_symmetric}
There exist $(\beta,L,d,g(2),\chi,\cE)$-dependent, $h$-independent constants
 $h_0,c_2,c_3\in\R_{> 0}$ such that the following statements hold
 for any $h\in (2/\beta)\N$ satisfying $h\ge h_0$ and $\bU\in\C^{n_v}$
 satisfying $|U_j|\le c_2$ $(\forall j\in \{1,2,\cdots,n_v\})$.
\begin{enumerate}
\item\label{item_grassmann_symmetric_positive}
\begin{align*}
&\Re \int e^{-V(\psi)}d\mu_C(\psi)>0,\\
& \Re \int e^{\frac{1}{2}(R^+(\psi)+R^-(\psi))}d\mu_{C_{\le
 0}^{\infty}}(\psi)>0.
\end{align*}
\item\label{item_logarithm_final_h_estimate}
\begin{align*}
&\left|\log\left(\int e^{-V(\psi)}d\mu_{C}(\psi)\right)-\log\left(\int e^{\frac{1}{2}(R^+(\psi)+R^-(\psi))}d\mu_{C_{\le
 0}^{\infty}}(\psi)\right)\right|\le \frac{1}{h}c_3.
\end{align*}
 \end{enumerate}
\end{lemma}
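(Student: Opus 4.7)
The plan is to bridge $\int e^{-V(\psi)}d\mu_{C}(\psi)$ and $\int e^{\frac{1}{2}(R^+(\psi)+R^-(\psi))}d\mu_{C_{\le 0}^{\infty}}(\psi)$ through a short chain of Grassmann Gaussian integrals, where each link contributes an error of order $1/h$. Since $\chi+(1-\chi)=1$, the free covariance decomposes as $C=C_{\le 0}^{+}+C_{>0}^{+}$, so the standard Grassmann convolution identity together with the definition of $R^{+}$ gives
\begin{align*}
\int e^{-V(\psi)}d\mu_{C}(\psi)=\int S^{+}(\psi)d\mu_{C_{\le 0}^{+}}(\psi)=\int e^{R^{+}(\psi)}d\mu_{C_{\le 0}^{+}}(\psi).
\end{align*}

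The first step I would carry out is to replace the covariance $C_{\le 0}^{+}$ by $C_{\le 0}^{\infty}$. Lemma \ref{lem_artificial_covariances_bound} provides a determinant bound of order $c_{1}^{n}/h$ for their difference, and combined with the norm bound on $S^{+}$ supplied by Lemma \ref{lem_grassmann_exponential_bound} \eqref{item_grassmann_exponential_all_bound} (with, say, $\alpha=1$), the general perturbation estimate of \cite[Lemma B.2]{K15} delivers an $O(1/h)$ error. The second step is to replace $e^{R^{+}}$ by $e^{\frac{1}{2}(R^{+}+R^{-})}$. Writing $e^{R^+}-e^{\frac{1}{2}(R^++R^-)}=e^{\frac{1}{2}(R^++R^-)}(e^{\frac{1}{2}(R^+-R^-)}-1)$ and invoking the bounds $\|R^{\pm}-R^{0}\|=O(1/h)$ from Lemma \ref{lem_grassmann_logarithm_bound} \eqref{item_grassmann_logarithm_bound_0th_difference},\eqref{item_grassmann_logarithm_bound_difference}, one obtains that the relevant weighted $L^{1}$-norm of $R^{+}-R^{-}$ is $O(1/h)$, so that by the standard exponential estimate for Grassmann polynomials the norm of $e^{R^+}-e^{\frac{1}{2}(R^++R^-)}$ is also $O(1/h)$. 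A second application of \cite[Lemma B.2]{K15} then produces another $O(1/h)$ error, yielding the claimed approximation at the level of the integrals.

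Positivity of $\int e^{-V(\psi)}d\mu_{C}(\psi)$ is already granted by Lemma \ref{lem_grassmann_formulation} \eqref{item_grassmann_formulation_real}. For the second integral, the coupling is taken so small (via Lemma \ref{lem_grassmann_interaction_bound} \eqref{item_grassmann_0th_bound} and Lemma \ref{lem_grassmann_exponential_bound}) that both integrals lie in a disk of radius strictly less than $|e^{-V_{0}}|$ around $e^{-V_{0}}$, whose real part is bounded away from $0$ uniformly in $\bU$; propagating through the $O(1/h)$ errors, the real part of $\int e^{\frac{1}{2}(R^++R^-)}d\mu_{C_{\le 0}^{\infty}}$ is also positive for sufficiently large $h$. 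Once positivity is secured, the principal-branch logarithm is locally Lipschitz on the relevant neighborhood, so the $O(1/h)$ bound on the integrals transfers to the same order of bound on the logarithms, producing the constant $c_{3}$.

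The main technical obstacle is the bookkeeping needed to fix compatible constants $c_{2}$, $h_{0}$, $c_{3}$ so that all hypotheses are simultaneously met: the coupling threshold $c_{2}$ must be small enough that the hypothesis $U_{\max}g(\alpha+1)\le\log(2(\varepsilon+1)/(\varepsilon+2))$ of Lemma \ref{lem_grassmann_logarithm_bound} holds with a choice of $\alpha$ and $\varepsilon$ matching the weights required by \cite[Lemma B.2]{K15}. The explicit appearance of $g(2)$ in the statement indicates that $\alpha=1$ is the correct choice, so $g(2)=g(\alpha+1)$ is what enters all the bounds. Simultaneously, $h_{0}$ must be taken large enough to satisfy the hypotheses of Lemma \ref{lem_artificial_covariances_bound}, of Lemma \ref{lem_grassmann_logarithm_bound}, and to guarantee that the $O(1/h)$ corrections do not destroy positivity of the second integral.
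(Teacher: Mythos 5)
Your three-step chain — rewrite $\int e^{-V}d\mu_C$ as $\int e^{R^+}d\mu_{C_{\le 0}^+}$, then swap $C_{\le 0}^+\to C_{\le 0}^\infty$, then swap $e^{R^+}\to e^{(R^++R^-)/2}$, each time collecting an $O(1/h)$ error controlled by Lemmas \ref{lem_artificial_covariances_bound}, \ref{lem_grassmann_interaction_bound}, \ref{lem_grassmann_exponential_bound}, \ref{lem_grassmann_logarithm_bound} and the general perturbation estimate of \cite[Lemma B.2]{K15} — is exactly the strategy the paper inherits from \cite[Lemma 2.10]{K15}, and the positivity and Lipschitz-of-logarithm argument you sketch is also what is needed. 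So the structure is right.

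The one genuine slip is the inference that $\alpha=1$ is ``the correct choice'' because $g(2)$ appears in the statement. The paper's own proof explicitly takes $\eps\in(0,2/5)$, assumes $U_{max}\le g(3)^{-1}\log\bigl(\tfrac{2(\eps+1)}{\eps+2}\bigr)$, and applies Lemma \ref{lem_grassmann_logarithm_bound} with $\alpha=2$, so it is $g(3)$, not $g(2)$, that enters. The ``$g(2)$'' in the statement is a misprint carried over from the $N_v=2$ setting of \cite{K15}. This is not a cosmetic point: when you pass the $O(1/h)$ determinant difference through \cite[Lemma B.2]{K15} (or, equivalently, compare $\det C_1$ with $\det C_2$ via the Cauchy--Binet trick as in the proof of Lemma \ref{lem_UV_covariance_properties} \eqref{item_UV_covariance_determinant_difference}), you pick up a factor $2^{2n}\sim 4^{m/2}=2^m$ per Grassmann degree $m$, and similarly the $\psi\mapsto\psi+\psi^1$ splitting in $\int R(\psi+\psi^1)d\mu(\psi^1)$ costs a factor $2^m$. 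Those factors must be absorbed by the spare power of $\alpha$; with $\alpha=1$ the weighted sums $\sum_m\alpha^m c_1^{m/2}\|\cdot\|_{L^1}$ no longer close the estimate. You should therefore run your argument with $\alpha=2$ (hence $g(3)$), which also forces the smallness assumption on $U_{max}$ to be stated in terms of $g(3)^{-1}$, giving a slightly smaller $c_2$ than your proposal would produce.
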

\begin{proof} Take $\eps\in (0,2/5)$. Assume that 
$$
U_{max}\le g(3)^{-1}\log\left(\frac{2(\eps+1)}{\eps+2}\right).
$$
Then, all the inequalities claimed in Lemma
 \ref{lem_grassmann_logarithm_bound} hold with $\alpha=2$
 and $h\in (2/\beta)\N$ satisfying $h>\max\{1/2,2/\beta,h_0\}$. 
Note that the inequalities proved in Lemma
 \ref{lem_grassmann_logarithm_bound} have exactly the same form as those
 proved in \cite[\mbox{Lemma 2.8}]{K15}.
Based on these inequalities and \cite[\mbox{Lemma B.2}]{K15}, we
 only need to follow the same argument
 as in the proof of \cite[\mbox{Lemma 2.10}]{K15} to obtain the results.
\end{proof}

\section{The Matsubara ultra-violet integration}\label{sec_UV}

In this section we carry out a multi-scale integration over the large
Matsubara frequency. In the first subsection we summarize properties of
the covariances with the Matsubara UV cut-off. Most of these
properties have already been proved in \cite[\mbox{Lemma 6.2,\ Lemma
6.3}]{K15}. We only provide proofs for claims which are not directly
implied by \cite[\mbox{Lemma 6.2,\ Lemma
6.3}]{K15}. Using these results, we will 
establish upper bounds on Grassmann
polynomials produced by the Matsubara UV integration 
in Subsection \ref{subsec_isothermal} and Subsection
\ref{subsec_anisothermal}. Though these subsections are
aimed at achieving the same goal as in \\
\cite[\mbox{Subsection 5.1,
Subsection 5.2, Section
6}]{K15}, the generalization of the interaction creates different
aspects which cannot be skipped without proof. We will provide the full
construction of the Matsubara UV integration.

\subsection{Covariances with the Matsubara ultra-violet
  cut-off}\label{subsec_UV_covariance}
From now till the proof of Theorem \ref{thm_main_theorem} in Subsection
\ref{subsec_completion_IR} we assume that
\begin{align}
\max_{j\in\{1,2,\cdots,d\}}t_j=1.\label{eq_hopping_amplitude_condition}
\end{align}
Theorem \ref{thm_main_theorem}, the main theorem of this paper, can be deduced from that proved under
this condition. It follows from Lemma \ref{lem_hopping_properties}
\eqref{item_hopping_upper},\eqref{item_derivative_hopping_upper} and
\eqref{eq_hopping_amplitude_condition} that
\begin{align}
\sup_{j\in\{1,2,\cdots,d\}}\sup_{\bk\in\R^d}\left\|\left(\frac{\partial}{\partial
 k_j}\right)^n\cE(\bk)\right\|_{2^d\times 2^d}\le 2d,\quad (\forall
 n\in\N\cup\{0\}).\label{eq_dispersion_derivative_bound}
\end{align}
In \cite[\mbox{Lemma 6.1}]{K15}, which was based on \cite[\mbox{Theorem
1.3.5}]{H}, we introduced a function $\phi\in C^{\infty}(\R)$ satisfying
that
\begin{align*}
&\phi(x)=1,\ (\forall x\in (-\infty,\pi^2/6]),\\
&\phi(x)=0,\ (\forall x\in [\pi^2/3,\infty)),\\
&\frac{d}{dx}\phi(x)\le 0,\ (\forall x\in \R),\\
&\left|\left(\frac{d}{dx}\right)^k\phi(x)\right|\le 2^k (k!)^2,\ (\forall x\in\R,k\in\N\cup
 \{0\}).
\end{align*} 
We keep using this function to construct cut-off functions in this paper
as well.

The inequality \eqref{eq_dispersion_derivative_bound} suggests that the
general results in \\
\cite[\mbox{Subsection 6.1}]{K15} hold with
``$E_1=2d$'', ``$E_2=1$'' for our covariances if we define the cut-off
functions in the same manner as in \cite[\mbox{Subsection 6.1}]{K15}.
Let us do so for simplicity. With $M\in\R_{>1}$, set
\begin{align*}
&M_{UV}:=\frac{2\sqrt{6}}{\pi}(2d+1),\\
&N_h:=\max\Bigg\{\Bigg\lfloor\frac{\log\Big(2h\left(\frac{\pi^2}{6}\right)^{-1/2}M_{UV}^{-1}\Big)}{\log
 M}\Bigg\rfloor+1,1\Bigg\}.
\end{align*}
Here $\lfloor x \rfloor$ denotes the largest integer not exceeding $x$
for $x\in\R$. It follows that
\begin{align*}
\phi(M_{UV}^{-2}M^{-2N_h}h^2|1-e^{i\frac{\o}{h}}|^2)=1,\quad(\forall
 \o\in\R).
\end{align*}
We define the cut-off function $\chi_{h,l}:\R\to \R_{\ge 0}$
$(l=0,1,\cdots,N_h)$ by 
\begin{align*}
&\chi_{h,0}(\o):=\phi(M_{UV}^{-2}h^2|1-e^{i\frac{\o}{h}}|^2),\\
&\chi_{h,l}(\o):=\phi(M_{UV}^{-2}M^{-2l}h^2|1-e^{i\frac{\o}{h}}|^2)
               -\phi(M_{UV}^{-2}M^{-2(l-1)}h^2|1-e^{i\frac{\o}{h}}|^2),\\
&(\o\in\R,l\in\{1,2,\cdots,N_h\}).
\end{align*}
These functions have the properties described in \cite[\mbox{(6.3),\
(6.4)}]{K15}. Using these functions, we define the covariances with the Matsubara UV
cut-off $C_l^+$, $C_l^-:I_0^2\to \C$ $(l=0,1,\cdots,N_h)$ as follows.
\begin{align*}
&C_l^+(\rho\bx\s x,\eta\by \tau y)\\
&:=\frac{\delta_{\s,\tau}}{\beta
 L^d}\sum_{(\o,\bk)\in\cM_h\times\G(L)^*}e^{i\<\bx-\by,\bk\>+i(x-y)\o}\chi_{h,l}(\o)h^{-1}(I_{2^d}-e^{-i\frac{\o}{h}I_{2^d}+\frac{1}{h}\cE(\bk)})^{-1}(\rho,\eta),\\
&C_l^-(\rho\bx\s x,\eta\by \tau y)\\
&:=\frac{\delta_{\s,\tau}}{\beta
 L^d}\sum_{(\o,\bk)\in\cM_h\times\G(L)^*}e^{i\<\bx-\by,\bk\>+i(x-y)\o}\chi_{h,l}(\o)h^{-1}(e^{i\frac{\o}{h}I_{2^d}-\frac{1}{h}\cE(\bk)}-I_{2^d})^{-1}(\rho,\eta),\\
&((\rho,\bx,\s,x),(\eta,\by,\tau,y)\in I_0).
\end{align*}

Here let us introduce some notations which will be used to study the
decay properties of the covariances in this section and for many other
purposes in the rest of this paper. For any
$(\rho,\bx,\s,x,\theta),(\eta,\by,\tau,y,\xi)\in I$,
$j\in\{0,1,\cdots,d\}$, set
\begin{align*}
&d_j((\rho,\bx,\s,x,\theta),(\eta,\by,\tau,y,\xi))\\
&:=\left\{\begin{array}{ll}\frac{\beta}{2\pi}|e^{i\frac{2\pi}{\beta}x}-
e^{i\frac{2\pi}{\beta}y}| & \text{if }j=0,\\
\frac{L}{2\pi}|e^{i\frac{2\pi}{L}\<\bx,\be_j\>}-
e^{i\frac{2\pi}{L}\<\by,\be_j\>}| & \text{if }j\in\{1,2,\cdots,d\}.
\end{array}
\right.
\end{align*}
For any $x\in(1/h)\Z$ let $r_{\beta}(x)\in[0,\beta)_h$,
 $n_{\beta}(x)\in\Z$ be such that $x=n_{\beta}(x)\beta+r_{\beta}(x)$.
This defines the maps $r_{\beta}:(1/h)\Z\to [0,\beta)_h$,
 $n_{\beta}:(1/h)\Z\to \Z$. 
We will assume that
\begin{align}
\beta_1,\beta_2\in\N,\quad \beta_1\le\beta_2,\quad h\in
 4\N,\label{eq_basic_beta_h_assumption}
\end{align}
when we need to estimate differences between anti-symmetric functions
defined at 2 different temperatures. Here $\beta_1$, $\beta_2$ are meant
to be the 2 different inverse-temperatures. Though the inverse temperature originally
belongs to $\R_{>0}$, we will later see that the convergence property of
the free energy density as $\beta\to \infty$ $(\beta\in\R_{>0})$ can be deduced
from the convergent property as $\beta\to\infty$ $(\beta\in\N)$. On the
assumption \eqref{eq_basic_beta_h_assumption}, set
$$
\left[-\frac{\beta_1}{4},\frac{\beta_1}{4}\right)_h:=\left\{-\frac{\beta_1}{4},-\frac{\beta_1}{4}+\frac{1}{h},-\frac{\beta_1}{4}+\frac{2}{h},\cdots,\frac{\beta_1}{4}-\frac{1}{h}\right\}.
$$ 
Note that $0\in[-\beta_1/4,\beta_1/4)_h$. We define the index sets
$\hat{I}_0$, $\hat{I}$, $I_0^0$, $I^0$ by
\begin{align*}
&\hat{I}_0:=\cB\times\G(L)\times\spin\times\left[-\frac{\beta_1}{4},\frac{\beta_1}{4}\right)_h,\quad
 \hat{I}:=\hat{I}_0\times\{1,-1\},\\
&I_0^0:=\cB\times\G(L)\times\spin\times\{0\},\quad
 I^0:=I_0^0\times\{1,-1\}.
\end{align*}
For any $(\rho,\bx,\s,x,\theta)$, $(\eta,\by,\tau,y,\xi)\in\hat{I}$,
$j\in\{0,1,\cdots,d\}$, set 
\begin{align*}
&\hat{d}_j((\rho,\bx,\s,x,\theta),(\eta,\by,\tau,y,\xi))\\
&:=\left\{\begin{array}{ll} |x-y| & \text{if }j=0,\\
\frac{L}{2\pi}|e^{i\frac{2\pi}{L}\<\bx,\be_j\>}-
e^{i\frac{2\pi}{L}\<\by,\be_j\>}| & \text{if }j\in\{1,2,\cdots,d\}.
\end{array}
\right.
\end{align*}
In fact these notations were used in \cite{K15}.
We add the notation $(\beta)$ to the right side of a
temperature-dependent object when we want to show its temperature
dependency explicitly. For example we sometimes write $I_0(\beta)$ instead
 of $I_0$ and $C_l^+(\beta):I_0(\beta)^2\to\C$ instead of 
$C_l^+:I_0^2\to\C$. 

\begin{lemma}\label{lem_UV_covariance_properties}
Assume that $h\ge e^{4d}$. There exists a constant $c_0\in\R_{\ge 1}$,
 which depends only on $d$, $M$, and a constant $c_w\in(0,1]$
 independent of any parameter such that the following statements hold
 for any $\delta\in\{+,-\}$, $l\in\{1,2,\cdots,N_h\}$.
\begin{enumerate} 
\item\label{item_UV_covariance_determinant}
\begin{align}
&|\det(\<\bp_i,\bq_j\>_{\C^m}C_{l}^{\delta}(X_i,Y_j))_{1\le i,j\le n}|\le
 c_0^n,\label{eq_UV_covariance_determinant}\\
&(\forall m,n\in\N,\bp_i,\bq_i\in\C^m\text{ with }
\|\bp_i\|_{\C^m},\|\bq_i\|_{\C^m}\le 1,\notag\\
&\quad X_i,Y_i\in I_0\
 (i=1,2,\cdots,n)).\notag
\end{align}
\item\label{item_UV_covariance_determinant_equal_time}
\begin{align}
&|\det(\<\bp_i,\bq_j\>_{\C^m}C_{l}^{\delta}((X_i,s),(Y_j,s)))_{1\le i,j\le n}|\le (M^{-l}+M^{l-N_h})c_0^n,\label{eq_UV_covariance_determinant_equal_time}\\
&(\forall m,n\in\N,\bp_i,\bq_i\in\C^m\text{ with }
\|\bp_i\|_{\C^m},\|\bq_i\|_{\C^m}\le 1,\notag\\
&\quad X_i,Y_i\in \cB\times\G(L)\times\spin\
 (i=1,2,\cdots,n),s\in[0,\beta)_h).\notag
\end{align}
\item\label{item_UV_covariance_decay}
\begin{align}
&\sup_{j'\in\{0,1,\cdots,d\}}\sup_{X\in I}\frac{1}{h}\sum_{Y\in
 I}(d_{j'}(X,Y)+1)e^{\sum_{j=0}^d(c_w(d+1)^{-2}M^{-2}d_j(X,Y))^{1/2}}|\widetilde{C_l^{\delta}}(X,Y)|\label{eq_UV_covariance_decay}\\
&\le
M^{-l} c_0.\notag
\end{align}
\item\label{item_UV_covariance_determinant_difference}
On the assumption \eqref{eq_basic_beta_h_assumption},
\begin{align}
&|\det(\<\bp_i,\bq_j\>_{\C^m}C_{l}^{\delta}(\beta_1)(\rho_i\bx_i\s_ir_{\beta_1}(x_i),\eta_j\by_j\tau_jr_{\beta_1}(y_j)))_{1\le
 i,j\le n}\label{eq_UV_covariance_determinant_difference}\\
&-
\det(\<\bp_i,\bq_j\>_{\C^m}C_{l}^{\delta}(\beta_2)(\rho_i\bx_i\s_ir_{\beta_2}(x_i),\eta_j\by_j\tau_jr_{\beta_2}(y_j)))_{1\le
 i,j\le n}|\notag\\
&\le\beta_1^{-\frac{1}{2}}M^{-\frac{l}{2}}c_0^n,\notag\\
&(\forall m,n\in\N,\bp_i,\bq_i\in\C^m\text{ with }
\|\bp_i\|_{\C^m},\|\bq_i\|_{\C^m}\le 1,\notag\\
&\quad (\rho_i,\bx_i,\s_i,x_i),(\eta_i,\by_i,\tau_i,y_i)\in \hat{I}_0\
 (i=1,2,\cdots,n)).\notag
\end{align}
\item\label{item_UV_covariance_decay_difference}
On the assumption \eqref{eq_basic_beta_h_assumption},
\begin{align}
&\sup_{X\in I^0}\frac{1}{h}\sum_{(\eta,\by,\tau,y,\xi)\in
 \hat{I}}e^{\sum_{j=0}^d(\frac{1}{\pi}c_w(d+1)^{-2}M^{-2}\hat{d}_j(X,(\eta,\by,\tau,y,\xi)))^{1/2}}\label{eq_UV_covariance_decay_difference}\\
&\qquad\cdot|\widetilde{C_l^{\delta}}(\beta_1)(X,\eta\by\tau r_{\beta_1}(y)\xi)
-\widetilde{C_l^{\delta}}(\beta_2)(X,\eta\by\tau r_{\beta_2}(y)\xi)|\notag\\
&\le
 \beta_{1}^{-\frac{1}{2}}M^{-l}c_0.\notag
\end{align}
\end{enumerate}
In \eqref{item_UV_covariance_decay},
 \eqref{item_UV_covariance_decay_difference},
 $\widetilde{C_l^{\delta}}:I^2\to\C$ denotes the anti-symmetric
 extension of $C_l^{\delta}$ defined by
\begin{align}
&\widetilde{C_l^{\delta}}((X,\theta),(Y,\xi)):=\frac{1}{2}(1_{(\theta,\xi)=(1,-1)}C_l^{\delta}(X,Y)-1_{(\theta,\xi)=(-1,1)}C_l^{\delta}(Y,X)),\label{eq_covariance_anti_symmetrization}\\
&(\forall X,Y\in I_0,\theta,\xi\in\{1,-1\}).\notag
\end{align}
\end{lemma}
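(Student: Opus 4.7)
The plan is to derive each of the five claims as an application of the generic covariance estimates \cite[\mbox{Lemma 6.2, Lemma 6.3}]{K15}, whose hypotheses are formulated in terms of bounds on the dispersion and its derivatives. The inequality \eqref{eq_dispersion_derivative_bound} shows that $\cE$ fits that framework with the roles of ``$E_1$'' and ``$E_2$'' played by $2d$ and $1$; the cut-off functions $\chi_{h,l}$ moreover coincide with the ones used in \cite[\mbox{Subsection 6.1}]{K15} since they are built from the same Gevrey-class $\phi$. After aligning the two setups the first three claims should be essentially direct transcriptions, while the last two will need an additional comparison argument of the type carried out in \cite[\mbox{Lemma 6.3}]{K15}.

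For the determinant bounds \eqref{eq_UV_covariance_determinant} and \eqref{eq_UV_covariance_determinant_equal_time} I would use a Gram representation. An integral representation of the resolvent $h^{-1}(I_{2^d}-e^{-i\omega/h+\cE(\bk)/h})^{-1}$ on the support of $\chi_{h,l}$ realizes $C_l^{\delta}(X,Y)=\langle f_X,g_Y\rangle_{\cH}$ for vectors $f_X,g_Y$ in a Hilbert space $\cH$ with norms bounded by $c_0^{1/2}$. Tensoring with $\bp_i$ and $\bq_j$, the matrix inside the determinant becomes a Gram matrix, and Hadamard's inequality produces the $c_0^n$ bound. For \eqref{eq_UV_covariance_determinant_equal_time} one must additionally sum $\chi_{h,l}(\omega)$ over all Matsubara frequencies: the support of $\chi_{h,l}$ has length $O(M^l)$ while the resolvent has norm $O(M^{-l})$ in the bulk of the scale and $O(h^{-1})=O(M^{-N_h})$ near the UV cut-off, yielding the factor $M^{-l}+M^{l-N_h}$.

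For the weighted decay bound \eqref{eq_UV_covariance_decay}, I would perform discrete integration by parts in the Fourier variables. Each finite difference in $k_j$ or $\omega$ extracts a factor of $d_j(X,Y)$ or $d_0(X,Y)$; applying $n$ such differences to the integrand generates derivatives of the resolvent (bounded by $n!\,M^{-(n+1)l}$ via a Neumann-series expansion on the support of $\chi_{h,l}$) and derivatives of $\chi_{h,l}$ (bounded by the Gevrey estimate $2^n(n!)^2$ of \cite[\mbox{Lemma 6.1}]{K15}). Optimizing in $n$ converts the factorial growth into the stretched-exponential weight $\exp((c_w(d+1)^{-2}M^{-2}d_j(X,Y))^{1/2})$, and the overall $M^{-l}$ prefactor comes from a single explicit resolvent norm on the support.

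The hard part will be \eqref{eq_UV_covariance_determinant_difference} and \eqref{eq_UV_covariance_decay_difference}, the temperature-difference estimates. The plan is to view the two Matsubara sums at $\beta_1$ and $\beta_2$ as Riemann sums of a common smooth integrand on $\R$ and to estimate the difference by a quadrature-error inequality controlled by a continuous $\omega$-derivative. Since the Matsubara spacing at $\beta$ is $2\pi/\beta$, the naive error is $O(\beta_1^{-1})$; splitting this as $\beta_1^{-1/2}\cdot\beta_1^{-1/2}$ allows one half-power to be absorbed into the subsequent scale sums while the other is exhibited explicitly. The derivative norm of the integrand supplies the $M^{-l/2}$ or $M^{-l}$ factor, depending on whether the estimate involves a determinant (where the loss can be distributed between two rows via Cauchy-Schwarz) or a plain weighted $L^1$-sum. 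Once these ingredients are in place, \eqref{eq_UV_covariance_determinant_difference} and \eqref{eq_UV_covariance_decay_difference} reduce to parallel versions of \eqref{eq_UV_covariance_determinant} and \eqref{eq_UV_covariance_decay}, and the remaining challenge is careful bookkeeping of the constants rather than any fundamentally new analytic idea.
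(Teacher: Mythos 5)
The outline for parts \eqref{item_UV_covariance_determinant}, \eqref{item_UV_covariance_decay}, and \eqref{item_UV_covariance_decay_difference} is fine in spirit, and matches what the paper does, which is simply to cite \cite[\mbox{Lemma 6.2, Lemma 6.3}]{K15} after verifying that the present $\cE$ satisfies the required derivative bound \eqref{eq_dispersion_derivative_bound} with ``$E_1=2d$, $E_2=1$''. (Two cosmetic slips: the paper uses Gram's inequality, not Hadamard's, for the determinant bound, and for part \eqref{item_UV_covariance_determinant_equal_time} the paper's argument is not a direct frequency-support count — it first derives a \emph{pointwise} equal-time bound $|C_l^\delta(\rho\bx\s s,\eta\by\tau s)|\le c(d,M)(M^{l-N_h}+M^{-l})$ by (i) quoting the tadpole estimate from \cite[\mbox{Lemma 6.2}]{K15} on the diagonal and (ii) gaining a factor $M^{-l}$ off-diagonal by a discrete momentum-derivative trick, and \emph{then} combines this pointwise bound with the determinant bound of part \eqref{item_UV_covariance_determinant} to get \eqref{eq_UV_covariance_determinant_equal_time}. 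Your sketch skips the step of turning a pointwise smallness of entries into a small determinant bound.)

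The real problem is part \eqref{item_UV_covariance_determinant_difference}. Your plan is to reduce to the non-difference estimates plus a quadrature bound — exactly the route taken in \cite[\mbox{Lemma 6.3, Lemma 7.14}]{K15} — and you close with the claim that there is ``no fundamentally new analytic idea,'' the loss being ``distributed between two rows via Cauchy-Schwarz.'' The paper explicitly disagrees: a remark attached to this lemma states that the corresponding determinant-difference estimates in \cite{K15} contain insufficiencies, and that the present proof is designed to repair them. The actual argument is a two-stage Cauchy–Binet computation that is not a Cauchy–Schwarz step. Starting from the entrywise bound $|C_1(i,j)-C_2(i,j)|\le c(d,M)\beta_1^{-1/2}M^{-l/2}$ (your quadrature estimate, cf.~\cite[\mbox{(6.27)}]{K15}), one first writes $C_1-C_2=\bigl(\begin{smallmatrix}C_1 & I_n\end{smallmatrix}\bigr)\bigl(\begin{smallmatrix}I_n\\-C_2\end{smallmatrix}\bigr)$ and uses Cauchy–Binet to obtain $|\det(C_1-C_2)|\le 2^{2n}c_0^n$; expanding along the first column and using the entrywise bound then sharpens this to $|\det(C_1-C_2)|\le c(d,M)\beta_1^{-1/2}M^{-l/2}n\,2^{2(n-1)}c_0^{n-1}$. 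Finally one applies Cauchy–Binet to $\bigl(\begin{smallmatrix}C_1-C_2 & I_n\end{smallmatrix}\bigr)\bigl(\begin{smallmatrix}I_n\\C_2\end{smallmatrix}\bigr)$, isolating the terms that use at least one column from the $C_1-C_2$ block, to bound $|\det C_1-\det C_2|$ by the claimed $\beta_1^{-1/2}M^{-l/2}c_0^n$ (after absorbing the polynomial and $4^n$ factors into $c_0$). A telescoping or ``Cauchy–Schwarz'' argument, without this factorization, would require a determinant bound for matrices whose rows are mixed between $C_1$, $C_2$, and $C_1-C_2$, and that is precisely the piece of bookkeeping the paper points out was missing in the earlier proofs. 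You should state the double Cauchy–Binet argument explicitly rather than declaring the estimate a routine perturbation of part \eqref{item_UV_covariance_determinant}.
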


\begin{remark}
There are unfortunately insufficiencies in the estimation of the
 difference between the determinants defined at $\beta_1$, $\beta_2$ in the
 proofs of \cite[\mbox{Lemma 6.3,\ Lemma 7.14}]{K15}, though the results
 themselves hold true. Here we prove \eqref{eq_UV_covariance_determinant_difference} in a way that it recovers the
 insufficient parts of the proofs of the related inequalities in 
\cite[\mbox{Lemma 6.3,\ Lemma 7.14}]{K15}. 
\end{remark}

\begin{proof}[Proof of Lemma \ref{lem_UV_covariance_properties}]
 First of all, let us note that the condition ``$h\ge
 e^{2E_1}$'' required in \cite[\mbox{Lemma 6.2,\ Lemma 6.3}]{K15} is
 equal to $h\ge e^{4d}$ in this case because of
 \eqref{eq_dispersion_derivative_bound}. Thus, we can refer to these
 lemmas in the following.

\eqref{item_UV_covariance_determinant}: This was proved in
 \cite[\mbox{Lemma 6.2}]{K15}.

\eqref{item_UV_covariance_determinant_equal_time}: Let us confirm that
 there exists a constant $c(d,M)\in\R_{>0}$ depending only on $d$ and
 $M$ such that 
\begin{align}
&|C_l^{\delta}(\rho\bx\s s,\eta\by\tau s)|\le c(d,M)(M^{l-N_h}+M^{-l}),\label{eq_UV_equal_time_decay}\\
&(\forall (\rho,\bx,\s),(\eta,\by,\tau)\in\cB\times \G(L)\times \spin,s\in[0,\beta)_h).\notag\end{align}
 By periodicity, for any $j\in\{1,2,\cdots,d\}$,
\begin{align*}
&\frac{L}{2\pi}(e^{-i\frac{2\pi}{L}\<\bx-\by,\be_j\>}-1)C_l^+(\cdot\bx\s
 x,\cdot\by\tau y)\\
&=\frac{\delta_{\s,\tau}}{\beta L^d}\sum_{(\o,\bk)\in\cM_h\times\G(L)^*}
e^{i\<\bx-\by,\bk\>+i(x-y)\o}\chi_{h,l}(\o)\frac{L}{2\pi}\int_{0}^{2\pi/L}dp\\
&\quad \cdot \frac{\partial}{\partial
 p}h^{-1}(I_{2^d}-e^{-i\frac{\o}{h}I_{2^d}+\frac{1}{h}\cE(\bk+p\be_j)})^{-1}\\
&=\frac{\delta_{\s,\tau}}{\beta L^d}\sum_{(\o,\bk)\in\cM_h\times\G(L)^*}
e^{i\<\bx-\by,\bk\>+i(x-y)\o}\chi_{h,l}(\o)\frac{L}{2\pi}\int_{0}^{2\pi/L}dp\\
&\quad\cdot
 h^{-1}(I_{2^d}-e^{-i\frac{\o}{h}I_{2^d}+\frac{1}{h}\cE(\bk+p\be_j)})^{-1}
\left(\frac{\partial}{\partial p}
 h(e^{-i\frac{\o}{h}I_{2^d}+\frac{1}{h}\cE(\bk+p\be_j)}-I_{2^d})\right)\\
&\quad\cdot  h^{-1}(I_{2^d}-e^{-i\frac{\o}{h}I_{2^d}+\frac{1}{h}\cE(\bk+p\be_j)})^{-1}.
\end{align*}
Using the inequalities \cite[\mbox{(6.7),\ (6.10),\ (6.14)}]{K15}, we can
 derive from the above equality that
\begin{align*}
\left\|\frac{L}{2\pi}(e^{-i\frac{2\pi}{L}\<\bx-\by,\be_j\>}-1)C_l^+(\cdot\bx\s
 x,\cdot\by\tau y)\right\|_{2^d\times 2^d}\le c(d,M)M^{-l}.
\end{align*}
This inequality implies that
\begin{align}\label{eq_UV_difference_space_decay}
&|C_l^+(\rho\bx\s x,\eta \by \tau y)|\le c(d,M)M^{-l},\\
&(\forall
 (\rho,\bx,\s,x),(\eta,\by,\tau,y)\in I_0\text{ with }\bx\neq \by).\notag
\end{align}
In the final part of the proof of \cite[\mbox{Lemma 6.2}]{K15} we proved
 that
\begin{align}\label{eq_UV_tadpole_decay}
\|C_l^+(\cdot\b0\s 0,\cdot \b0\s 0)\|_{2^d\times 2^d}\le
 c(M)(M^{l-N_h}+M^{-l})
\end{align}
with a constant $c(M)\in\R_{>0}$ depending only on $M$.
The inequalities \eqref{eq_UV_difference_space_decay},
 \eqref{eq_UV_tadpole_decay} imply \eqref{eq_UV_equal_time_decay} for
 $\delta = +$. The proof for $\delta = -$ is parallel. The determinant
 bound \eqref{eq_UV_covariance_determinant_equal_time} can be
 obtained by combining the determinant bound
 \eqref{eq_UV_covariance_determinant} with
 \eqref{eq_UV_equal_time_decay}.

\eqref{item_UV_covariance_decay},\eqref{item_UV_covariance_decay_difference}: 
These were essentially proved in \cite[\mbox{Lemma 6.2, Lemma 6.3}]{K15}. Recall that the weight
 ``$\fw(0)$'' was given by
 $$
c_w(d+1)^{-2}\min\{M_{UV},(E_2+1)^{-1}\}M^{-2}
$$
with a constant $c_w\in(0,1]$ independent of any parameter
in\\
 \cite[\mbox{Lemma 6.2}]{K15}. Since $E_2=1$ in the present case, 
$\min\{M_{UV},(E_2+1)^{-1}\}=1/2$. We can replace $(1/2)c_w$ in the
 weight ``$\fw(0)$'' in \\
\cite[\mbox{Lemma 6.2, Lemma 6.3}]{K15}
by $c_w$ to obtain the
 weight $c_w(d+1)^{-2}M^{-2}$ with some $c_w\in (0,1]$ and thus
 \eqref{eq_UV_covariance_decay} and
 \eqref{eq_UV_covariance_decay_difference} follow.

\eqref{item_UV_covariance_determinant_difference}:
The inequality \cite[\mbox{(6.27)}]{K15} implies that 
\begin{align}
&|C_l^+(\beta_1)(\rho\bx\s r_{\beta_1}(x),\eta\by\tau r_{\beta_1}(y))
-C_l^+(\beta_2)(\rho\bx\s r_{\beta_2}(x),\eta\by\tau
 r_{\beta_2}(y))|\label{eq_one_component_determinant_UV}\\
&\le c(d,M)\beta_1^{-\frac{1}{2}}M^{-\frac{l}{2}},\quad (\forall
 (\rho,\bx,\s,x),(\eta,\by,\tau,y)\in \hat{I}_0).\notag
\end{align}
Take any $(\rho_i,\bx_i,\s_i,x_i),(\eta_i,\by_i,\tau_i,y_i)\in\hat{I}_0$
 and $\bp_i,\bq_i\in\C^m$ satisfying
 $\|\bp_i\|_{\C^m},\|\bq_i\|_{\C^m}\le 1$ $(i=1,2,\cdots,n)$. Define
 $C_1,C_2\in \Mat(n,\C)$ by
\begin{align*}
C_a:=(\<\bp_i,\bq_j\>_{\C^m}C_l^+(\beta_a)(\rho_i\bx_i\s_ir_{\beta_a}(x_i),
\eta_j\by_j\tau_jr_{\beta_a}(y_j)))_{1\le i,j\le n},\quad (a=1,2).
\end{align*}
Since 
$$
C_1-C_2=\left(\begin{array}{cc}C_1 &
	      I_n\end{array}\right)\left(\begin{array}{c} I_n\\
					 -C_2\end{array}\right),
$$
the Cauchy-Binet formula yields that 
\begin{align*}
\det(C_1-C_2)=\sum_{\gamma:\{1,2,\cdots,n\}\to \{1,2,\cdots,2n\}\atop
 \text{with }\gamma(1)<\gamma(2)<\cdots<\gamma(n)}&\det\left(\left(\begin{array}{cc}C_1 &
	      I_n\end{array}\right)(i,\gamma(j))\right)_{1\le i,j\le n}\\
&\cdot\det\left(\left(\begin{array}{c} I_n\\
					 -C_2\end{array}\right)(\gamma(i),j)\right)_{1\le i,j\le n}.
\end{align*}
By using \eqref{eq_UV_covariance_determinant} and assuming that
 $\gamma(0)=n$, $\gamma(n+1)=n+1$ we see that 
\begin{align}
|\det(C_1-C_2)|
&\le \sum_{m=0}^n
\sum_{\gamma:\{1,2,\cdots,n\}\to \{1,2,\cdots,2n\}\atop
 \text{with }\gamma(1)<\gamma(2)<\cdots<\gamma(n)}1_{\gamma(m)\le
 n<\gamma(m+1)}c_0^mc_0^{n-m}\label{eq_UV_determinant_difference_preliminary}\\
&=\sum_{m=0}^n\left(\begin{array}{c} n \\ m\end{array}\right)
              \left(\begin{array}{c} n \\ n-m\end{array}\right)c_0^n\le
 2^{2n}c_0^n.\notag
\end{align}
By expanding along the 1st column and using
 \eqref{eq_one_component_determinant_UV},
 \eqref{eq_UV_determinant_difference_preliminary} we have
\begin{align}
|\det(C_1-C_2)|&\le
 c(d,M)\beta_1^{-\frac{1}{2}}M^{-\frac{l}{2}}\sum_{s=1}^n\left|\det((C_1-C_2)(i,j))_{1\le
 i,j\le n\atop i\neq s,j\neq
 1}\right|\label{eq_UV_determinant_difference_next}\\
&\le
 c(d,M)\beta_1^{-\frac{1}{2}}M^{-\frac{l}{2}}n2^{2(n-1)}c_0^{n-1}.\notag
\end{align}
By applying the Cauchy-Binet formula once more and substituting
 \eqref{eq_UV_covariance_determinant},
 \eqref{eq_UV_determinant_difference_next},
\begin{align*}
&|\det C_1- \det C_2|\\
&=\Bigg| \sum_{\gamma:\{1,2,\cdots,n\}\to \{1,2,\cdots,2n\}\atop
 \text{with }\gamma(1)<\gamma(2)<\cdots<\gamma(n),\ r(1)\le n}\\
&\qquad\cdot\det\left(\left(\begin{array}{cc}C_1-C_2 &
	      I_n\end{array}\right)(i,\gamma(j))\right)_{1\le i,j\le n}
\det\left(\left(\begin{array}{c} I_n\\
					 C_2\end{array}\right)(\gamma(i),j)\right)_{1\le i,j\le n}\Bigg|\\
&\le \sum_{m=1}^n\sum_{\gamma:\{1,2,\cdots,n\}\to \{1,2,\cdots,2n\}\atop
 \text{with }\gamma(1)<\gamma(2)<\cdots<\gamma(n)}1_{\gamma(m)\le
 n<\gamma(m+1)}c(d,M)\beta^{-\frac{1}{2}}M^{-\frac{l}{2}}m2^{2(m-1)}c_0^{m-1}c_0^{n-m}\\
&\le \beta^{-\frac{1}{2}}M^{-\frac{l}{2}}(c(d,M)c_0)^n.
\end{align*}
Thus, we obtained the determinant bound of the form
 \eqref{eq_UV_covariance_determinant_difference} for $\delta=+$.
The bound for $\delta=-$ can be proved in the same way. 
\end{proof}

\subsection{Isothermal bounds}\label{subsec_isothermal}
Our multi-scale analysis at fixed temperature is built on estimation of
kernels of Grassmann polynomials with respect to scale-dependent
(semi-)norms. Let us define the (semi-)norms at this point. Set 
$$\fw(0):=c_w(d+1)^{-2}M^{-2}$$ 
with the constant $c_w\in (0,1]$ appearing in Lemma
\ref{lem_UV_covariance_properties}. For $l\in\Z_{\le 0}$, set
$\fw(l):=\fw(0)M^l$. For an anti-symmetric function $f$ on $I^m$ $(m\ge
2)$ we define $\|f\|_{l,0}$, $\|f\|_{l,1}$ by 
\begin{align}\label{eq_definition_norm_semi_norm}
&\|f\|_{l,0}:=\sup_{X\in I}\left(\frac{1}{h}\right)^{m-1}\sum_{\bY=(Y_1,Y_2,\cdots,Y_{m-1})\in
 I^{m-1}}e^{\sum_{j=0}^d(\fw(l)d_j(X,Y_1))^{1/2}}|f(X,\bY)|,\\
&\|f\|_{l,1}\notag\\
&:=\sup_{j'\in\{0,1,\cdots,d\}}\sup_{q\in\{1,2,\cdots,m-1\}}\sup_{X\in
 I}\notag\\
&\quad\cdot\left(\frac{1}{h}\right)^{m-1}\sum_{\bY=(Y_1,Y_2,\cdots,Y_{m-1})\in
 I^{m-1}}d_{j'}(X,Y_q)e^{\sum_{j=0}^d(\fw(l)d_j(X,Y_1))^{1/2}}|f(X,\bY)|.\notag
\end{align}
In our Matsubara UV integration, anti-symmetric kernels are measured by
$\|\cdot\|_{0,t}$ $(t=0,1)$. The measurement with $\|\cdot\|_{l,t}$
$(l<0,\ t=0,1)$ will be necessary in the infrared integration in
Section \ref{sec_IR}. From now we assume that
$$h\ge e^{4d}$$
so that the results of Lemma \ref{lem_UV_covariance_properties} are
available.
The inequality \eqref{eq_UV_covariance_decay} implies that
\begin{align}
\|\widetilde{C_l^{\delta}}\|_{0,t}\le
 c_0M^{-l},\quad(\forall l\in\{1,2,\cdots,N_h\},\delta\in\{+,-\},
 t\in\{0,1\}).\label{eq_UV_covariance_decay_norm}
\end{align}

Fix $\delta \in\{+,-\}$ and set
\begin{align*}
&F^{N_h}(\psi):=-\sum_{m=1}^{N_v}\frac{1}{h}\sum_{s\in[0,\beta)_h}\sum_{(\rho_j,\bx_j,\s_j),(\eta_j,\by_j,\tau_j)\in\cB\times
 \G(L)\times \spin\atop
 (j=1,2,\cdots,m)}(1_{\delta=+}+1_{\delta=-}(-1)^m)\\
&\cdot V_m^L(\bU)((\nu(\rho_1,\bx_1)\s_1,\cdots,\nu(\rho_m,\bx_m)\s_m),(\nu(\eta_1,\by_1)\tau_1,\cdots,\nu(\eta_m,\by_m)\tau_m))\\
&\qquad\qquad\qquad \cdot \opsi_{\rho_1\bx_1\s_1 s}\cdots\opsi_{\rho_m\bx_m\s_ms}
\psi_{\eta_1\by_1\tau_1s}\cdots\psi_{\eta_m\by_m\tau_m s},\\
&T^{N_h}(\psi):=0,\\
&J^{N_h}(\psi):=F^{N_h}(\psi)
\end{align*}
with $\bU\in \C^{n_v}$.
We input $J^{N_v}(\psi)$ into the Matsubara UV integration process as
the initial data. We define $F^l(\psi)$, $T^l(\psi)$, $J^l(\psi)\in\bigwedge \cV$
$(l=0,1,\cdots,N_h-1)$ inductively as follows. Assume that we have
$J^{l+1}(\psi)\in\bigwedge \cV$ for some $l\in\{0,1,\cdots,N_h-1\}$. Set
\begin{align*}
&F^l(\psi):=\int J^{l+1}(\psi+\psi^1)d\mu_{C^{\delta}_{l+1}}(\psi^1),\\
&T^{l,(n)}(\psi):=\frac{1}{n!}\left.\left(\frac{\partial}{\partial
 z}\right)^n\right|_{z=0}\log\left(\int e^{z
 J^{l+1}(\psi+\psi^1)}d\mu_{C_{l+1}^{\delta}}(\psi^1)\right)
\end{align*}
for $n\in \N_{\ge 2}$. Then, set 
\begin{align*}
T^l(\psi):=\sum_{n=2}^{\infty}T^{l,(n)}(\psi),\quad
 J^l(\psi):=F^l(\psi)+T^l(\psi)
\end{align*}
on the assumption that $\sum_{n=2}^{\infty}T^{l,(n)}(\psi)$ converges. 
See  \cite[\mbox{Subsection 2.2}]{K15} for the notion of convergence and
differentiation of Grassmann polynomials. Note that
$F^{N_h}(\psi)=-V^+(\psi)+\beta V_0^L$ if $\delta =+$, 
$F^{N_h}(\psi)=-V^-(\psi)+\beta V_0^L$ if $\delta =-$.
Also, an inductive argument based on \cite[\mbox{Lemma 3.9 (1)}]{K15},
 parallel to the proof of \cite[\mbox{Lemma 5.1}]{K15} ensures that 
if $\sum_{n=2}^{\infty}T^{l,(n)}(\psi)$ converges for any $l\in
\{0,1,\cdots,N_h-1\}$,
\begin{align*}
&T_m^{l,(n)}(\psi)=F_m^l(\psi)=0,\\
&(\forall l\in\{0,1,\cdots,N_h\},m\in \{0,1,\cdots,N\}\cap
 (2\N+1),n\in\N_{\ge 2}).
\end{align*}

\begin{lemma}\label{lem_UV_initial_bound}
\begin{align*}
&\|F_{2m}^{N_h}\|_{0,t}\le
 \max_{k\in\{1,2,\cdots,n_v\}}|U_k|e^{d\fw(0)^{1/2}}v_m(\fw(0)),\\
&(\forall m\in\{1,2,\cdots,N_v\},t\in\{0,1\}).
\end{align*}
\end{lemma}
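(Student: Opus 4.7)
The plan is to compute the antisymmetric kernel $F^{N_h}_{2m}$ explicitly and then bound it using the hypothesis \eqref{eq_decay_bound} defining $v_m(\fw(0))$, together with the comparison between the multi-band distance $d_j$ on $\cB\times \G(L)$ and the one-band distance on $\G(2L)$.

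First, by matching coefficients in the expansion $F^{N_h}(\psi)=(1/h)^{2m}\sum_{\bW\in I^{2m}}F^{N_h}_{2m}(\bW)\psi_\bW$ against the given formula for $F^{N_h}(\psi)$, and antisymmetrizing, one obtains
\[
F^{N_h}_{2m}(W_1,\ldots,W_{2m}) = \mp\frac{h^{2m-1}}{(2m)!}\sum_{\pi\in\S_{2m}}\sgn(\pi)\,\mathbf{1}_{[\pi\bW\text{ in canonical form}]}\,V_m^L(\bU)(\nu\text{-transf. of }\pi\bW),
\]
where ``canonical form'' means all time coordinates coincide and exactly $m$ of the $\theta$-signs equal $+1$. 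Because $V_m^L(\bU)$ is bi-anti-symmetric in its two blocks of $m$ variables (property \eqref{item_bi_anti_symmetric}), the $(2m)!$ permutations reduce to sign-choices times $(m!)^2$ equal contributions, yielding
\[
|F^{N_h}_{2m}(\bW)|\le h^{2m-1}\binom{2m}{m}^{-1}|V_m^L(\bU)(\cdots)|\,\mathbf{1}_{[\bW\text{ in canonical form}]}.
\]

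Next, I would insert this bound into the definition \eqref{eq_definition_norm_semi_norm}. Fixing $X_1$ and summing the other $2m-1$ variables: the $s$-deltas collapse each sum $(1/h)\sum_{s_j}$ to value $1/h$, so the $h^{2m-1}$ from the kernel cancels the $(1/h)^{2m-1}$ from the norm; the sign-deltas restrict $\theta$-configurations, giving a combinatorial factor $\binom{2m-1}{m-1}$. Together with the prefactor $\binom{2m}{m}^{-1}$ this produces the constant $1/2$. What remains is
\[
\|F^{N_h}_{2m}\|_{0,t}\le \tfrac12\sup_{(\rho_1,\bx_1,\s_1)}\sum_{(\rho_j,\bx_j,\s_j)_{j\ge 2}}[\text{poly factor}]\cdot e^{\sum_{j=1}^{d}(\fw(0)d_j(X_1,X_2))^{1/2}}|V_m^L(\bU)(\cdots)|,
\]
where the $j=0$ contribution vanishes because all $s_j=s_1$.

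The key comparison is between $d_j$ on $\G(L)$ and the one-band distance on $\G(2L)$. Writing $\by_j=\nu(\rho_j,\bx_j)=2\bx_j+b(\rho_j)$ and using $|\sin(A-B)|\le|\sin A|+|\sin B|$ with $A=\pi(y_{1,j}-y_{k,j})/(2L)$ and $B=\pi(b(\rho_1)_j-b(\rho_k)_j)/(2L)$ (so $A-B=\pi(x_{1,j}-x_{k,j})/L$), one obtains
\[
d_j(X_1,X_k)\le \frac{L}{\pi}\bigl|e^{i(\pi/L)(y_{1,j}-y_{k,j})}-1\bigr|+1,\qquad j\in\{1,\ldots,d\}.
\]
Taking square roots and applying $\sqrt{a+b}\le\sqrt a+\sqrt b$, then summing over $j$, yields $e^{\sum_{j=1}^d(\fw(0)d_j)^{1/2}}\le e^{d\fw(0)^{1/2}}e^{\sum_{j=1}^d(\fw(0)(L/\pi)|\cdots|)^{1/2}}$. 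For $t=1$, the extra factor $d_{j'}(X_1,X_q)$ is controlled by the same inequality, fitting exactly into the polynomial factor $(L/\pi|e^{i(\pi/L)\langle\bx-\bx_q,\be_k\rangle}-1|+1)$ appearing inside $v_m(\fw(0))$; for $t=0$ one uses $1\le(\ldots)+1$.

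Finally, replacing the sums over $(\rho_j,\bx_j)\in\cB\times\G(L)$ by sums over $\by_j\in\G(2L)$ via the bijection $\nu$, and extracting $\max_k|U_k|$ via the linearity of $\bU\mapsto V_m^L(\bU)$, the remaining supremum-sum is at most $\max_k|U_k|\,v_m(\fw(0))$ by definition. This gives $\|F^{N_h}_{2m}\|_{0,t}\le \tfrac12 e^{d\fw(0)^{1/2}}\max_k|U_k|\,v_m(\fw(0))$, which in particular implies the claimed bound. The only non-mechanical step is the lattice-distance comparison above; everything else is bookkeeping of combinatorial constants and an appeal to hypotheses \eqref{item_linearity}, \eqref{item_bi_anti_symmetric}, and \eqref{eq_decay_bound}.
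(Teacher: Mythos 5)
Your proof is correct and follows essentially the same route as the paper: expand the antisymmetric kernel $F^{N_h}_{2m}$, collapse the $(2m)!$ permutation sum via bi-anti-symmetry, cancel the $h^{2m-1}$ against the $(1/h)^{2m-1}$ using the equal-time indicator, compare the $\cB\times\G(L)$ distance $d_j$ with the $\G(2L)$ distance appearing in $v_m$, and finally appeal to linearity and \eqref{eq_decay_bound}. You in fact supply a derivation of the distance comparison
\[
\frac{L}{2\pi}\bigl|e^{i\frac{2\pi}{L}\langle\bx-\by,\be_j\rangle}-1\bigr|\le \frac{L}{\pi}\bigl|e^{i\frac{\pi}{L}\langle\hat{\bx}-\hat{\by},\be_j\rangle}-1\bigr|+1
\]
via $|\sin(A-B)|\le|\sin A|+|\sin B|$, which the paper only states, and you track the combinatorial prefactor (obtaining $1/2$) more explicitly than is needed.

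One small point you glossed over: the definition of $v_m(c)$ fixes $(\bx,\s)$ in the \emph{first} (creation-block) slot of $V_m^L$, but the outer supremum in $\|\cdot\|_{0,t}$ ranges over $X_1=(\rho_1,\bx_1,\s_1,s_1,\theta_1)\in I$ with $\theta_1$ equal to either $+1$ or $-1$. When $\theta_1=-1$, the index $\nu(\rho_1,\bx_1)\s_1$ lands in the second (annihilation) block, and bi-anti-symmetry alone cannot move it across blocks. You need the hermiticity hypothesis \eqref{eq_hermiticity}, i.e.\ $|V_m^L(\bU)(\bX,\bY)|=|V_m^L(\overline{\bU})(\bY,\bX)|$, together with the fact that $\overline{D(1)}^{n_v}$ is invariant under complex conjugation, to reduce this case to $\theta_1=+1$. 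The paper cites \eqref{eq_hermiticity} explicitly for exactly this reason. With that observation added, your argument is complete and matches the paper's.
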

\begin{proof}
 By the uniqueness of the anti-symmetric kernel we
 have that for any $(\rho_j,\bx_j,\s_j,s_j,\theta_j)\in I$
 $(j=1,2,\cdots,2m)$,
\begin{align*}
&F_{2m}^{N_h}(\rho_1\bx_1\s_1 s_1\theta_1,\cdots,\rho_{2m}\bx_{2m}\s_{2m}
 s_{2m}\theta_{2m})\\
&=\frac{-1}{(2m)!}(1_{\delta=+}+1_{\delta=-}(-1)^m)\\
&\quad\cdot
\sum_{\xi\in\S_{2m}}\sgn(\xi)V_m^L(\nu(\rho_{\xi(1)},\bx_{\xi(1)})\s_{\xi(1)},\cdots,\nu(\rho_{\xi(2m)},\bx_{\xi(2m)})\s_{\xi(2m)})\\
&\quad\cdot
 h^{2m-1}1_{s_1=\cdots=s_{2m}}1_{(\theta_{\xi(1)},\cdots,\theta_{\xi(2m)})=(1,\cdots,1,-1,\cdots,-1)}.
\end{align*}
If $\hat{\bx},\hat{\by}\in\G(2L)$,
 $(\rho,\bx),(\eta,\by)\in\cB\times\G(L)$ satisfy
 $\hat{\bx}=\nu(\rho,\bx)$, $\hat{\by}=\nu(\eta,\by)$, then
\begin{align*}
&\frac{L}{2\pi}|e^{i\frac{2\pi}{L}\<\bx-\by,\be_j\>}-1|\le 
\frac{L}{\pi}|e^{i\frac{\pi}{L}\<\hat{\bx}-\hat{\by},\be_j\>}-1|+1,\quad
 (\forall j\in \{1,2,\cdots,d\}).
\end{align*}
Using this inequality and the invariances \eqref{eq_bi_anti_symmetric}, \eqref{eq_hermiticity}, we observe that for $t\in\{0,1\}$,
\begin{align*}
&\|F_{2m}^{N_h}\|_{0,t}\\
&\le \max_{k\in
 \{1,2,\cdots,n_v\}}|U_k|\sup_{\bU\in\overline{D(1)}^{n_v}}\sup_{p,q\in\{1,2,\cdots,2m-1\}\atop j'\in \{1,2,\cdots,d\}}\sup_{(\rho,\bx,\s)\in\cB\times \G(L)\times \spin}\\
&\quad\cdot\sum_{(\rho_j,\bx_j,\s_j)\in\cB\times\G(L)\times\spin\atop
 (j=1,2,\cdots,2m-1)}\left(\frac{L}{2\pi}|e^{i\frac{2\pi}{L}\<\bx-\bx_q,\be_{j'}\>}-1|\right)^t
e^{\sum_{j=1}^d(\fw(0)\frac{L}{2\pi}|e^{i\frac{2\pi}{L}\<\bx-\bx_p,\be_j\>}-1|)^{1/2}}\\
&\quad\cdot
 |V_m^L(\bU)((\nu(\rho,\bx)\s,\nu(\rho_1,\bx_1)\s_1,\cdots,\nu(\rho_{m-1},\bx_{m-1})\s_{m-1}), \\
&\qquad\qquad (\nu(\rho_m,\bx_m)\s_m,\nu(\rho_{m+1},\bx_{m+1})\s_{m+1},\cdots, 
\nu(\rho_{2m-1},\bx_{2m-1})\s_{2m-1}))|\\
&\le \max_{k\in\{1,2,\cdots,n_v\}}|U_k|e^{d\fw(0)^{1/2}}v_m(\fw(0)).
\end{align*}
\end{proof}

The main purpose of this subsection is to prove the following lemma. We
will refer to \cite[\mbox{Lemma 3.8}]{K15} as the main tool in the
proof.

\begin{lemma}\label{lem_UV_integration}
Let $\alpha\in\R_{\ge 1}$ and let $c_0$ be the constant appearing in
 Lemma \ref{lem_UV_covariance_properties}. 
There exists a constant $c\in\R_{>0}$
 independent of any parameter such that if 
\begin{align}
M\ge c^{N_v^2},\quad\alpha\ge c
 M^{\frac{1}{2}}\label{eq_UV_parameter_conditions}
\end{align}
and
\begin{align}
\max_{j\in\{1,2,\cdots,n_v\}}|U_j|e^{d\fw(0)^{1/2}}\sum_{m=1}^{N_v}c_0^m\alpha^{2m}v_m(\fw(0))\le
 \frac{1}{2},\label{eq_interaction_amplitude_UV}
\end{align}
the following inequalities hold for any $l\in\{0,1,\cdots,N_h\}$,
 $t\in\{0,1\}$. 
\begin{align}
&\frac{h}{N}(|F_0^l|+|T_0^l|)\le \alpha^{-1},\label{eq_UV_0th_bound}\\
&\sum_{m=1}^{2N_v}c_0^{\frac{m}{2}}\alpha^m(\|F_m^l\|_{0,t}+\|T_m^l\|_{0,t})\le
 1,\label{eq_UV_bound}\\
&M^{-\frac{N_v}{N_v-1}l}\sum_{m=1}^Nc_0^{\frac{m}{2}}\alpha^mM^{\frac{l}{2N_v-2}m}
(\|F_m^l\|_{0,t}+\|T_m^l\|_{0,t})\le 1.\label{eq_UV_all_bound}
\end{align}
Moreover, for any $l\in \{0,1,\cdots,N_h-1\}$, $m\in \{1,2,\cdots,N\}$,
\begin{align}
&\sum_{n=2}^{\infty}\sup_{\bU\in\C^{n_v}\text{ with }\atop|U_j|\le
 U_{max}(\alpha,M)\ (j=1,\cdots,n_v)}|T_0^{l,(n)}(\bU)|<\infty,
\label{eq_UV_tree_part_uniform_convergence_0}\\
&\sum_{n=2}^{\infty}\sup_{\bU\in\C^{n_v}\text{ with }\atop |U_j|\le
 U_{max}(\alpha,M)\ 
 (j=1,\cdots,n_v)}\|T_m^{l,(n)}(\bU)\|_{0,0}<\infty,\label{eq_UV_tree_part_uniform_convergence}
\end{align}
where
\begin{align*}
U_{max}(\alpha,M):=\left(2e^{d\fw(0)^{1/2}}\sum_{m=1}^{N_v}c_0^m\alpha^{2m}v_m(\fw(0))\right)^{-1}.
\end{align*}
\end{lemma}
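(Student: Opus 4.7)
The plan is to prove all four statements simultaneously by downward induction on $l$, from $l = N_h$ to $l = 0$, using the general tree-expansion machinery of \cite[\mbox{Lemma 3.8}]{K15}. That lemma converts the determinant bound \eqref{eq_UV_covariance_determinant} and the decay bound \eqref{eq_UV_covariance_decay_norm}, $\|\widetilde{C_{l+1}^{\delta}}\|_{0,t} \le c_0 M^{-(l+1)}$, into quantitative bounds on the Gaussian integral $F^l = \int J^{l+1}\, d\mu_{C_{l+1}^{\delta}}$ and on each cumulant $T^{l,(n)}$ in terms of the weighted norms of $J^{l+1}$.

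The base case $l = N_h$ is immediate: $T^{N_h} \equiv 0$, $F_0^{N_h} = 0$ by construction, and $F_{2m}^{N_h}$ is bounded by Lemma \ref{lem_UV_initial_bound}. The smallness assumption \eqref{eq_interaction_amplitude_UV} yields \eqref{eq_UV_bound}. For \eqref{eq_UV_all_bound} at $l = N_h$, the kernel $V_m^L$ is supported on $m \le N_v$, so $F^{N_h}$ lives in degrees $\le 2N_v$, and the prefactor $M^{l(m - 2N_v)/(2N_v - 2)}$ at $l = N_h$ is at most $1$ on that range.

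For the inductive step from $l+1$ to $l$, each Wick contraction pays a factor bounded by $c_0 M^{-(l+1)}$ from \eqref{eq_UV_covariance_decay_norm}, each uncontracted pair of surviving fields contributes $\alpha^{-2}$ via the induction hypothesis \eqref{eq_UV_bound}, and the determinant bound $c_0$ handles the combinatorics of connecting trees in the cumulants $T^{l,(n)}$. The product is a geometric series in $n$ with ratio of order $c M^{-(l+1)} \alpha^{-2}$, summable once $\alpha \ge c M^{1/2}$ as in \eqref{eq_UV_parameter_conditions}; the condition $M \ge c^{N_v^2}$ absorbs the factorial combinatorics generated by $2N_v$-body input kernels. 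Repeating the same weighted-norm estimate with the extra interpolating factor $M^{lm/(2N_v - 2)}$ gives \eqref{eq_UV_all_bound}, while specializing to the constant term and observing that each new contraction brings at least one factor $M^{-(l+1)}$ gives \eqref{eq_UV_0th_bound}. The uniform convergence \eqref{eq_UV_tree_part_uniform_convergence_0}--\eqref{eq_UV_tree_part_uniform_convergence} on $\{|U_j| \le U_{max}(\alpha, M)\}$ is a byproduct of the same geometric decay in $n$, since $U_{max}(\alpha, M)$ is defined precisely so that \eqref{eq_interaction_amplitude_UV} holds with a factor of $1/2$ on the right-hand side.

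The main obstacle is the delicate power counting that makes the three bounds close simultaneously under iteration. The exponent $m/(2N_v - 2)$ in \eqref{eq_UV_all_bound} interpolates linearly between the relevant quadratic contribution $m = 2$, which gains no factor of $M^{-l}$, and the borderline $m = 2N_v$ contribution which saturates the full gain $M^{-l N_v/(N_v - 1)}$ exactly compensating the $M^{-l}$ picked up from a single covariance contraction. Odd-degree kernels vanish at every scale by the inductive argument of \cite[\mbox{Lemma 3.9 (1)}]{K15}, so only even degrees need tracking; but otherwise the arithmetic of matching the gains from contractions, the losses coming from shrinking the weight via $\fw(l) = M^{-1}\fw(l+1)$, and the combinatorial growth in $n$ and in contraction patterns is precisely where the parameter conditions \eqref{eq_UV_parameter_conditions} on $M$ and $\alpha$ are used.
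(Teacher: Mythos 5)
There is a genuine gap in your treatment of the free part $F^l = \int J^{l+1}\,d\mu_{C_{l+1}^{\delta}}(\psi^1)$, specifically in the low-degree range $m \in \{2,\dots,2N_v\}$. Your claim that ``each Wick contraction pays a factor bounded by $c_0 M^{-(l+1)}$ from \eqref{eq_UV_covariance_decay_norm}'' is correct for the single contraction that joins two tree vertices inside the cumulant $T^{l,(n)}$, but it is \emph{not} correct for the contractions performed inside the free Gaussian integral. There the relevant tool is the determinant bound \eqref{eq_UV_covariance_determinant}, which gives $c_0^n$ with no factor of $M^{-(l+1)}$ attached. Consequently, for a low-degree kernel $F_m^l$ with $m\le 2N_v$, each iteration adds a finite but $O(1)$ contribution, and naively one obtains a divergent sum over the $N_h$ scales; the interpolating weight $M^{lm/(2N_v-2)}$ helps only for $m\ge 2N_v+2$, where the exponent arithmetic $n/(2N_v-2)\ge (N_v+1)/(N_v-1)>N_v/(N_v-1)$ produces genuine irrelevance.

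The mechanism the paper uses, which your proposal omits entirely, is the introduction of the auxiliary polynomials $\hat{F}^{l'}$ in \eqref{eq_artificial_free_definition}. They are defined so that, by \eqref{eq_artificial_free_part}, the difference $F_m^{l'}-\hat{F}_m^{l'}$ is supported on configurations whose time coordinates all coincide. When such an equal-time kernel is contracted against the covariance, one invokes not the plain determinant bound but the equal-time determinant bound \eqref{eq_UV_covariance_determinant_equal_time}, which yields the summable factor $(M^{-j}+M^{j-N_h})$; this is exactly what makes the sum $\sum_{j=l+1}^{N_h}$ in \eqref{eq_UV_bound_free_preliminary} converge. The $\hat{F}$-part itself is shown to be small by the separate recursion \eqref{eq_artificial_free_part_bound}, whose gain $M^{-l'/(N_v-1)}$ comes from the induction hypothesis in the irrelevant range $n\ge 2N_v+2$. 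Without some version of this split (or an equivalent device exploiting that the initial data $F^{N_h}$ is equal-time), the estimate \eqref{eq_UV_bound} simply does not close under iteration, and your proposal gives no substitute. Everything else in your outline --- the downward induction, the tree formula from \cite[\mbox{Lemma 3.8}]{K15}, the geometric summability in $n$ under $\alpha\gtrsim M^{1/2}$, the role of $M\ge c^{N_v^2}$, the derivation of the constant-part bound, and the uniform convergence statement --- is accurate and matches the paper, but the missing step is the load-bearing one.
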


\begin{remark}\label{rem_uniform_convergence_correction_UV}
We claim \eqref{eq_UV_tree_part_uniform_convergence_0},
 \eqref{eq_UV_tree_part_uniform_convergence} in order to emphasize the
 uniform convergent property of $\sum_{n=2}^{\infty}T^{l,(n)}(\psi)$ with
 respect to the coupling constants. We should have explicitly claimed
 the uniform convergent properties of the infinite series of the Grassmann
 polynomials produced by the tree expansions in \cite[\mbox{Proposition
 5.2, Proposition 5.6, Proposition 6.4}]{K15}, 
though these properties
 are obvious from the proofs. Strictly speaking, the previous deduction
 of the regularity with the coupling constants \cite[\mbox{Proposition
 6.4 (2)}]{K15} from the point-wise convergent properties
 \cite[\mbox{Proposition 6.4 (1)}]{K15} is incomplete. The claim  
\cite[\mbox{Proposition 6.4 (2)}]{K15} is rigorously proved by
 additionally remarking the uniform convergent properties such as  
\eqref{eq_UV_tree_part_uniform_convergence_0},
\eqref{eq_UV_tree_part_uniform_convergence} in \cite[\mbox{Proposition 6.4
 (1)}]{K15}. With the aim of convincing the readers of the validity of
 the construction, in this paper we intend to make clear the deduction 
of the regularity with the coupling constants from the uniform
 convergent properties. The clarification will be specifically made in
 the proof of Lemma \ref{lem_IR_recursive_structure}
 \eqref{item_IR_recursive_structure} and Lemma
 \ref{lem_input_to_IR_integration}.
\end{remark}

\begin{proof}[Proof of Lemma \ref{lem_UV_integration}]
During the proof the symbol $c$ denotes a generic constant independent of any
 parameter. We replace $c$ by a larger generic constant denoted by
 the same symbol from time to time without any comment. However, such replacements
 do not affect the conclusions of the proof. We prove the claimed
 inequalities by induction with $l\in\{0,1,\cdots,N_h\}$. By 
 assumption and Lemma \ref{lem_UV_initial_bound},
\begin{align}
&\frac{h}{N}(|F_0^{N_h}|+|T_0^{N_h}|)=0,\label{eq_initial_UV_bound}\\
&\sum_{m=1}^{2N_v}c_0^{\frac{m}{2}}\alpha^m(\|F_m^{N_h}\|_{0,t}+\|T_m^{N_h}\|_{0,t})\le
 \frac{1}{2},\notag\\
&M^{-\frac{N_v}{N_v-1}N_h}\sum_{m=1}^Nc_0^{\frac{m}{2}}\alpha^mM^{\frac{N_h}{2N_v-2}m}
(\|F_m^{N_h}\|_{0,t}+\|T_m^{N_h}\|_{0,t})\notag\\
&\le
 \sum_{m=1}^{2N_v}c_0^{\frac{m}{2}}\alpha^m\|F_m^{N_h}\|_{0,t}\le
 \frac{1}{2},\quad (\forall t\in \{0,1\}).\notag
\end{align}
Thus, the inequalities \eqref{eq_UV_0th_bound}, \eqref{eq_UV_bound},
 \eqref{eq_UV_all_bound} hold for $l=N_h$.

Assume that $l\in\{0,1,\cdots,N_h-1\}$ and for any
 $j\in\{l+1,l+2,\cdots,N_h\}$, $t\in\{0,1\}$ the inequalities \eqref{eq_UV_bound},
 \eqref{eq_UV_all_bound} hold.

Let us prepare a couple of inequalities. By
 the hypothesis of induction, for any $t\in \{0,1\}$,
\begin{align}
&\sum_{m=2}^N2^{3m}c_0^{\frac{m}{2}}\alpha^m\|J_m^{l+1}\|_{0,t}\label{eq_bound_from_induction}\\
&\le
 c^{N_v}\sum_{m=2}^{2N_v}c_0^{\frac{m}{2}}\alpha^m\|J_m^{l+1}\|_{0,t}+c^{N_v}M^{-\frac{N_v+1}{N_v-1}(l+1)}\sum_{m=2N_v+2}^Nc_0^{\frac{m}{2}}\alpha^mM^{\frac{l+1}{2N_v-2}m}\|J_m^{l+1}\|_{0,t}\notag\\
&\le c^{N_v},\notag\\
&\sum_{m=2}^N2^{2m}c_0^{\frac{m}{2}}\alpha^mM^{\frac{l}{2N_v-2}m}\|J_m^{l+1}\|_{0,t}\label{eq_bound_from_induction_weight}\\
&\le c
 M^{-\frac{1}{N_v-1}}\sum_{m=2}^Nc_0^{\frac{m}{2}}\alpha^mM^{\frac{l+1}{2N_v-2}m}\|J_m^{l+1}\|_{0,t}\le c M^{-\frac{1}{N_v-1}+\frac{N_v}{N_v-1}(l+1)},\notag
\end{align}
where we especially used the condition that $M\ge c^{N_v}$. 

By combining
 \eqref{eq_UV_covariance_determinant},
 \eqref{eq_UV_covariance_decay_norm}, \eqref{eq_bound_from_induction}
 with \cite[\mbox{Lemma 3.8 (1)}]{K15} we obtain that for any $n\in \N_{\ge
 2}$, 
\begin{align*}
|T_0^{l,(n)}|&\le
 \frac{N}{h}c_0^{-n+1}(c_0M^{-l-1})^{n-1}\left(\sum_{m=2}^N2^{2m}c_0^{\frac{m}{2}}\|J_m^{l+1}\|_{0,0}\right)^n\\
&\le \frac{N}{h}M^{l+1}(c^{N_v}M^{-l-1}\alpha^{-2})^n.
\end{align*}
Thus, on the assumption $M\ge c^{N_v}$,
 \eqref{eq_UV_tree_part_uniform_convergence_0} holds and
\begin{align}
\frac{h}{N}|T_0^l|\le c^{N_v}M^{-l-1}\alpha^{-4}.
\label{eq_tree_0th}
\end{align}
By \eqref{eq_UV_covariance_determinant},
 \eqref{eq_UV_covariance_decay_norm} and \cite[\mbox{Lemma 3.8
 (2)}]{K15}, for any $m\in\{2,3,\cdots,N\}$, $t\in\{0,1\}$, $n\in\N_{\ge
 2}$,
\begin{align}
&\|T_m^{l,(n)}\|_{0,t}\le
 2^{-2m}c_0^{-\frac{m}{2}-n+1}\prod_{i=1}^n\left(\sum_{q_i=0}^1\right)\prod_{j=2}^n\left(\sum_{r_j=0}^1\right)1_{\sum_{i=1}^nq_i+\sum_{j=2}^nr_j=t}\label{eq_tree_higher_preliminary}\\
&\qquad\qquad\quad\cdot (c_0M^{-l-1})^{n-1}\prod_{k=1}^n\left(\sum_{m_k=2}^N2^{3m_k}c_0^{\frac{m_k}{2}}\|J_{m_k}^{l+1}\|_{0,q_k}\right)1_{\sum_{j=1}^nm_j-2n+2\ge m}.\notag
\end{align}
Moreover, by substituting \eqref{eq_bound_from_induction}, 
\begin{align*}
&\sum_{m=2}^Nc_0^{\frac{m}{2}}\alpha^m\|T_m^{l,(n)}\|_{0,t}\\
&\le
 2^{4n-4}M^{-(l+1)(n-1)}\alpha^{-2n+2}\prod_{i=1}^n\left(\sum_{q_i=0}^1\right)\prod_{j=2}^n\left(\sum_{r_j=0}^1\right)1_{\sum_{i=1}^nq_i+\sum_{j=2}^nr_j=t}\\
&\quad\cdot 
\prod_{k=1}^n\left(\sum_{m_k=2}^N2^{m_k}c_0^{\frac{m_k}{2}}\alpha^{m_k}\|J_{m_k}^{l+1}\|_{0,q_k}\right)\\
&\le M^{l+1}\alpha^2(c^{N_v}M^{-l-1}\alpha^{-2})^n,
\end{align*}
which implies on the assumption $M\ge c^{N_v}$ that
 \eqref{eq_UV_tree_part_uniform_convergence} holds and  
\begin{align}
\sum_{m=2}^Nc_0^{\frac{m}{2}}\alpha^m\|T_m^l\|_{0,t}\le
 c^{N_v}M^{-l-1}\alpha^{-2},\quad (\forall
 t\in\{0,1\}).\label{eq_tree_sum}
\end{align}
Also, by \eqref{eq_tree_higher_preliminary} and
 \eqref{eq_bound_from_induction_weight},
\begin{align*}
&M^{-\frac{N_v}{N_v-1}l}
\sum_{m=2}^Nc_0^{\frac{m}{2}}\alpha^mM^{\frac{l}{2N_v-2}m}\|T_m^{l,(n)}\|_{0,t}\\
&\le c^n M^{-\frac{N_v}{N_v-1}l-(l+1+\frac{l}{N_v-1})(n-1)}\alpha^{-2n+2}\prod_{i=1}^n\left(\sum_{q_i=0}^1\right)\prod_{j=2}^n\left(\sum_{r_j=0}^1\right)1_{\sum_{i=1}^nq_i+\sum_{j=2}^nr_j=t}\\
&\quad\cdot 
\prod_{k=1}^n\left(\sum_{m_k=2}^N2^{m_k}c_0^{\frac{m_k}{2}}\alpha^{m_k}M^{\frac{l}{2N_v-2}m_k}
\|J_{m_k}^{l+1}\|_{0,q_k}\right)\\
&\le M^{1-(l+1+\frac{l}{N_v-1})n}\alpha^{-2n+2}(c
 M^{-\frac{1}{N_v-1}+\frac{N_v}{N_v-1}(l+1)})^n\\
&=M\alpha^2(c\alpha^{-2})^n.
\end{align*}
Thus, on the assumption $\alpha\ge c$, 
\begin{align}
M^{-\frac{N_v}{N_v-1}l}
\sum_{m=2}^Nc_0^{\frac{m}{2}}\alpha^mM^{\frac{l}{2N_v-2}m}\|T_m^{l}\|_{0,t}\le
 c M\alpha^{-2},\quad (\forall t\in \{0,1\}).\label{eq_tree_sum_weight}
\end{align}

To establish upper bounds on the free part $F^l(\psi)$, we introduce the
 Grassmann polynomials $\hat{F}^j(\psi)$ $(j=l,l+1,\cdots,N_h)$
 inductively as follows.
Set $\hat{F}^{N_h}(\psi):=0$. Assume that $l'\in\{l,l+1,\cdots,N_h-1\}$
 and we have $\hat{F}^j(\psi)$ $(j=l'+1,l'+2,\cdots,N_h)$. For any $m\in
 \{0,2N_v+1,2N_v+2,\cdots,N\}$, $\hat{F}_m^{l'}(\psi):=0$. 
For any $m\in\{1,2,\cdots,2N_v\}$,
\begin{align}
\hat{F}_m^{l'}(\psi)
:=&F_m^{l'}(\psi)-F_m^{N_h}(\psi)\label{eq_artificial_free_definition}\\
&-\sum_{j=l'+1}^{N_h}\cP_m\sum_{n=m+2}^{2N_v}
\int(F_n^j(\psi+\psi^1)-\hat{F}_n^j(\psi+\psi^1))d\mu_{C_j^{\delta}}(\psi^1),\notag
\end{align}
where $\cP_m:\bigwedge \cV\to \bigwedge^m\cV$ is the standard
 projection. It follows that for any $l'\in\{l,l+1,\cdots,N_h\}$,
 $m\in\{1,2,\cdots,2N_v\}$, $(\rho_j,\bx_j,\s_j,s_j,\theta_j)\in I$
 $(j=1,2,\cdots,m)$,
\begin{align}
1_{\exists j\exists k\in \{1,2,\cdots,m\}(j\neq k\wedge s_j\neq
 s_k)}
 (&F_m^{l'}(\rho_1\bx_1\s_1s_1\theta_1,\cdots,\rho_m\bx_m\s_ms_m\theta_m)\label{eq_artificial_free_part}\\
& -\hat{F}_m^{l'}(\rho_1\bx_1\s_1s_1\theta_1,\cdots,\rho_m\bx_m\s_ms_m\theta_m))=0.\notag
\end{align}
In fact, the equality \eqref{eq_artificial_free_part} is true for
 $l'=N_h$ by definition. Assume that it holds true for any $j\in
 \{l'+1,l'+2,\cdots,N_h\}$. Since 
\begin{align}
&F_m^{l'}(\bX)-\hat{F}_m^{l'}(\bX)
=F_m^{N_h}(\bX)+\sum_{j=l'+1}^{N_h}\sum_{n=m+2}^{2N_v}\left(\begin{array}{c}n
							\\
							     m\end{array}\right)
\left(\frac{1}{h}\right)^{n-m}\sum_{\bY\in
 I^{n-m}}\label{eq_artificial_free_kernel}\\
&\quad\qquad\qquad\qquad\qquad\qquad\qquad\cdot(F_n^j(\bX,\bY)-\hat{F}^j_n(\bX,\bY))\int\psi_{\bY}^1d\mu_{C_j^{\delta}}(\psi^1),\notag\\
&(\forall m\in \{1,2,\cdots,2N_v\},\bX\in I^m),\notag
\end{align}
the equality \eqref{eq_artificial_free_part} holds for $l'$ as
 well. Thus, by induction the equality \eqref{eq_artificial_free_part}
 is true for any $l'\in\{l,l+1,\cdots,N_h\}$. 

Let us prove that for any $l'\in\{l,l+1,\cdots,N_h\}$, $t\in \{0,1\}$,
\begin{align}
\sum_{m=2}^{2N_v}c_0^{\frac{m}{2}}\alpha^m\|\hat{F}_m^{l'}\|_{0,t}\le
 \alpha^{-2}M^{-\frac{l'}{N_v-1}}.\label{eq_artificial_free_part_bound}
\end{align}
This inequality is true for $l'=N_h$ by definition. 
Assume that $l'\in \{l,l+1,\cdots,N_h-1\}$ and 
$\eqref{eq_artificial_free_part_bound}$ holds for any $j\in
 \{l'+1,l'+2,\cdots,N_h\}$. Note that for any $m\in\{2,3,\cdots,2N_v\}$,
\begin{align*}
F_m^{l'}(\psi)
&=F_m^{l'+1}(\psi)+T_m^{l'+1}(\psi)+\cP_m\sum_{n=m+2}^{2N_v}\int
 F_n^{l'+1}(\psi+\psi^1)d\mu_{C_{l'+1}^{\delta}}(\psi^1)\\
&\quad+\cP_m\sum_{n=m+2}^{2N_v}\int
 T_n^{l'+1}(\psi+\psi^1)d\mu_{C_{l'+1}^{\delta}}(\psi^1)\\
&\quad+\cP_m\sum_{n=2N_v+2}^{N}\int
 J_n^{l'+1}(\psi+\psi^1)d\mu_{C_{l'+1}^{\delta}}(\psi^1),
\end{align*}
and thus,
\begin{align}
\hat{F}_m^{l'}(\psi)
&=\hat{F}_m^{l'+1}(\psi)+T_m^{l'+1}(\psi)+\cP_m\sum_{n=m+2}^{2N_v}\int
 \hat{F}_n^{l'+1}(\psi+\psi^1)d\mu_{C_{l'+1}^{\delta}}(\psi^1)\label{eq_artificial_free_decomposition}\\
&\quad+\cP_m\sum_{n=m+2}^{2N_v}\int
 T_n^{l'+1}(\psi+\psi^1)d\mu_{C_{l'+1}^{\delta}}(\psi^1)\notag\\
&\quad+\cP_m\sum_{n=2N_v+2}^{N}\int
 J_n^{l'+1}(\psi+\psi^1)d\mu_{C_{l'+1}^{\delta}}(\psi^1).\notag
\end{align}
It follows from this equality and an estimation similar to 
\cite[\mbox{Lemma 3.1}]{K15} that 
\begin{align*}
\|\hat{F}_m^{l'}\|_{0,t}&\le
 \sum_{n=m}^{2N_v}2^nc_0^{\frac{n-m}{2}}\|\hat{F}_n^{l'+1}\|_{0,t}\\
&\quad
+\sum_{n=m}^{2N_v}2^nc_0^{\frac{n-m}{2}}\|T_n^{l'+1}\|_{0,t}
+\sum_{n=2N_v+2}^N2^nc_0^{\frac{n-m}{2}}\|J_n^{l'+1}\|_{0,t}.
\end{align*}
Using \eqref{eq_UV_all_bound}, \eqref{eq_tree_sum},
 \eqref{eq_artificial_free_part_bound} for $l'+1$ and the conditions
 $\alpha\ge c$, $M\ge c^{N_v^2}$, we have that
\begin{align*}
&\sum_{m=2}^{2N_v}c_0^{\frac{m}{2}}\alpha^m\|\hat{F}_m^{l'}\|_{0,t}\\
&\le
 \sum_{m=2}^{2N_v}2^m\sum_{n=m}^{2N_v}c_0^{\frac{n}{2}}\alpha^n\|\hat{F}_n^{l'+1}\|_{0,t}
+\sum_{m=2}^{2N_v}2^m\sum_{n=m}^{2N_v}c_0^{\frac{n}{2}}\alpha^n\|T_n^{l'+1}\|_{0,t}
\\
&\quad+\sum_{m=2}^{2N_v}2^{2N_v+2}\alpha^{m-2N_v-2}M^{-\frac{l'+1}{N_v-1}(N_v+1)}
\sum_{n=2N_v+2}^Nc_0^{\frac{n}{2}}\alpha^nM^{\frac{l'+1}{2N_v-2}n}
\|J_n^{l'+1}\|_{0,t}\\
&\le c^{N_v}\alpha^{-2}M^{-\frac{l'+1}{N_v-1}}\le
 \alpha^{-2}M^{-\frac{l'}{N_v-1}}.
\end{align*}
Thus, the inequality \eqref{eq_artificial_free_part_bound} for $l'$
 holds. By induction, \eqref{eq_artificial_free_part_bound} holds for all
 $l'\in \{l,l+1,\cdots,N_h\}$.

By \eqref{eq_UV_covariance_determinant_equal_time},
 \eqref{eq_artificial_free_part} and \eqref{eq_artificial_free_kernel}, for any $m\in\{2,3,\cdots,2N_v\}$,
 $t\in \{0,1\}$,
\begin{align*}
\|F_m^{l}\|_{0,t}\le&
 \|\hat{F}_m^l\|_{0,t}+\|F_m^{N_h}\|_{0,t}\\
&+\sum_{j=l+1}^{N_h}\sum_{n=m+2}^{2N_v}2^n(M^{-j}+M^{j-N_h})c_0^{\frac{n-m}{2}}(\|F_n^j\|_{0,t}+\|\hat{F}_n^j\|_{0,t}).
\end{align*}
Moreover, by \eqref{eq_UV_bound} for $l'\in\{l+1,l+2,\cdots,N_h\}$,
 \eqref{eq_initial_UV_bound}, \eqref{eq_artificial_free_part_bound} for
 $l'\in\{l,l+1,\cdots,N_h\}$ and the assumption that $\alpha\ge 2$,
 $M\ge 2$,
\begin{align}
&\sum_{m=2}^{2N_v}c_0^{\frac{m}{2}}\alpha^m\|F_m^{l}\|_{0,t}\label{eq_UV_bound_free_preliminary}\\
&\le
 \sum_{m=2}^{2N_v}c_0^{\frac{m}{2}}\alpha^m\|\hat{F}_m^l\|_{0,t}+\sum_{m=2}^{2N_v}c_0^{\frac{m}{2}}\alpha^m\|F_m^{N_h}\|_{0,t}\notag\\
&\quad+
\sum_{m=2}^{2N_v}\sum_{j=l+1}^{N_h}2^{m+2}\alpha^{-2}(M^{-j}+M^{j-N_h})
\sum_{n=m+2}^{2N_v}c_0^{\frac{n}{2}}\alpha^n(\|F_n^j\|_{0,t}+\|\hat{F}_n^j\|_{0,t})\notag\\
&\le\frac{1}{2}+c^{N_v}\alpha^{-2}.\notag
\end{align}
This also yields that
\begin{align}
M^{-\frac{N_v}{N_v-1}l}\sum_{m=2}^{2N_v}c_0^{\frac{m}{2}}\alpha^mM^{\frac{l}{2N_v-2}m}\|F_m^l\|_{0,t}\le
 \frac{1}{2}+c^{N_v}\alpha^{-2}.\label{eq_free_lower_half_bound}
\end{align}

On the other hand, for $m\in \{2N_v+2,2N_v+3,\cdots,N\}$,
\begin{align*}
F_m^{l}(\psi)=J_m^{l+1}(\psi)+\cP_m\sum_{n=m+2}^N\int
 J_n^{l+1}(\psi+\psi^1)d\mu_{C_{l+1}^{\delta}}(\psi^1)
\end{align*}
and thus by \eqref{eq_UV_covariance_determinant},
\begin{align*}
\|F_m^l\|_{0,t}\le
 \sum_{n=m}^N2^nc_0^{\frac{n-m}{2}}\|J_n^{l+1}\|_{0,t}.
\end{align*}
Moreover, by \eqref{eq_UV_all_bound} for $l+1$ and the condition $M\ge
 c^{N_v}$, 
\begin{align*}
&\sum_{m=2N_v+2}^Nc_0^{\frac{m}{2}}\alpha^mM^{\frac{l}{2N_v-2}m}\|F_m^l\|_{0,t}\\
&\le
 \sum_{n=2N_v+2}^N\sum_{m=2N_v+2}^n\alpha^mM^{\frac{l}{2N_v-2}m}2^nc_0^{\frac{n}{2}}\|J_n^{l+1}\|_{0,t}\\
&\le c
 \sum_{n=2N_v+2}^N2^nc_0^{\frac{n}{2}}\alpha^nM^{\frac{l}{2N_v-2}n}\|J_n^{l+1}\|_{0,t}\\
&\le
 c^{N_v}M^{-\frac{N_v+1}{N_v-1}}\sum_{n=2N_v+2}^Nc_0^{\frac{n}{2}}\alpha^nM^{\frac{l+1}{2N_v-2}n}\|J_n^{l+1}\|_{0,t}\\
&\le
 c^{N_v}M^{-\frac{N_v+1}{N_v-1}+\frac{N_v}{N_v-1}(l+1)},
\end{align*}
or
\begin{align}
M^{-\frac{N_v}{N_v-1}l}\sum_{m=2N_v+2}^{N}c_0^{\frac{m}{2}}\alpha^mM^{\frac{l}{2N_v-2}m}\|F_m^l\|_{0,t}\le
c^{N_v}M^{-\frac{1}{N_v-1}}.
\label{eq_free_upper_half_bound}
\end{align}

It remains to deal with $F_0^l$. Note that
\begin{align*}
F_0^l
&=F_0^{l+1}+T_0^{l+1}+\sum_{m=2}^{2N_v}\int
 \hat{F}_m^{l+1}(\psi)d\mu_{C_{l+1}^{\delta}}(\psi)\\
&\quad+\sum_{m=2}^{2N_v}\int
 (F_m^{l+1}(\psi)-\hat{F}_m^{l+1}(\psi))d\mu_{C_{l+1}^{\delta}}(\psi)
+\sum_{m=2}^{2N_v}\int
 T_m^{l+1}(\psi)d\mu_{C_{l+1}^{\delta}}(\psi)\\
&\quad +\sum_{m=2N_v+2}^{N}\int
 J_m^{l+1}(\psi)d\mu_{C_{l+1}^{\delta}}(\psi).
\end{align*}
Then, by \eqref{eq_UV_covariance_determinant}, 
\eqref{eq_UV_covariance_determinant_equal_time},
\eqref{eq_UV_bound},
\eqref{eq_UV_all_bound},
\eqref{eq_tree_0th},
\eqref{eq_tree_sum},
\eqref{eq_artificial_free_part_bound} for $l'\in\{l+1,l+2,\cdots,N_h\}$
and \eqref{eq_artificial_free_part},
\begin{align}
|F_0^l|
&\le
 |F_0^{l+1}|+|T_0^{l+1}|+\frac{N}{h}\sum_{m=2}^{2N_v}c_0^{\frac{m}{2}}\|\hat{F}_m^{l+1}\|_{0,0}\label{eq_free_0th}\\
&\quad +\frac{N}{h}\sum_{m=2}^{2N_v}(M^{-l-1}+M^{l+1-N_h})c_0^{\frac{m}{2}}(\|\hat{F}_m^{l+1}\|_{0,0}+\|{F}_m^{l+1}\|_{0,0})\notag\\
&\quad +\frac{N}{h}\sum_{m=2}^{2N_v}c_0^{\frac{m}{2}}\|T_m^{l+1}\|_{0,0}
+\frac{N}{h}\sum_{m=2N_v+2}^{N}c_0^{\frac{m}{2}}\|J_m^{l+1}\|_{0,0}\notag\\
&\le
 |F_0^{l+1}|+\frac{N}{h}c^{N_v}\alpha^{-2}(M^{-l-1}+M^{l+1-N_h}+M^{-\frac{l+1}{N_v-1}})\notag\\
&\le
 \frac{N}{h}c^{N_v}\alpha^{-2}\sum_{j=l}^{N_h-1}(M^{-j-1}+M^{j+1-N_h}+M^{-\frac{j+1}{N_v-1}})\notag\\
&\le \frac{N}{h}c^{N_v}\alpha^{-2}.\notag
\end{align}

Finally we sum up \eqref{eq_tree_0th}, \eqref{eq_tree_sum},
 \eqref{eq_tree_sum_weight},
\eqref{eq_UV_bound_free_preliminary},
\eqref{eq_free_lower_half_bound},
\eqref{eq_free_upper_half_bound} and \eqref{eq_free_0th} to deduce that
 for any $t\in \{0,1\}$,
\begin{align*}
&\frac{h}{N}(|F_0^l|+|T_0^l|)\le c^{N_v}\alpha^{-2},\\
&\sum_{m=1}^{2N_v}c_0^{\frac{m}{2}}\alpha^m(\|F_m^l\|_{0,t}+\|T_m^l\|_{0,t})\le
 \frac{1}{2}+c^{N_v}\alpha^{-2},\\
&M^{-\frac{N_v}{N_v-1}l}\sum_{m=1}^Nc_0^{\frac{m}{2}}\alpha^mM^{\frac{l}{2N_v-2}m}
(\|F_m^l\|_{0,t}+\|T_m^l\|_{0,t})\\
&\le \frac{1}{2}+c^{N_v}\alpha^{-2}+
cM\alpha^{-2}+c^{N_v}M^{-\frac{1}{N_v-1}}.
\end{align*}
Recall that so far we have used the conditions $\alpha\ge c$, $M\ge
 c^{N_v^2}$ and \eqref{eq_interaction_amplitude_UV}. Now we can see that under the conditions
 \eqref{eq_UV_parameter_conditions} with a sufficiently large generic
 constant $c$ the inequalities above imply
 \eqref{eq_UV_0th_bound}, \eqref{eq_UV_bound}, \eqref{eq_UV_all_bound}
 for $l$. Therefore, by induction these inequalities hold true for all
 $l\in \{0,1,\cdots,N_h\}$, $t\in \{0,1\}$ on the conditions
 \eqref{eq_UV_parameter_conditions}, \eqref{eq_interaction_amplitude_UV}.
\end{proof}

\subsection{Anisothermal bounds}\label{subsec_anisothermal}
Our result concerning the existence of the zero-temperature limit of the
free energy density is made out of a series of estimates on the differences
between Grassmann polynomials defined at 2 different temperatures. As one
part of these analysis, here we focus on establishing
temperature-dependent upper bounds on Grassmann polynomials produced by
the Matsubara UV integration. In addition to the notations already
introduced in Subsection \ref{subsec_UV_covariance} and Subsection
\ref{subsec_isothermal}, let us define some notations necessary for our
anisothermal measurements. These notations are essentially same as those
introduced in the beginning of \cite[\mbox{Section 4}]{K15}. 

For any
$\bX=((\rho_1,\bx_1,\s_1,s_1),(\rho_2,\bx_2,\s_2,s_2),\cdots,(\rho_m,\bx_m,\s_m,s_m))$\\
$\in (\cB\times\G(L)\times\spin\times(1/h)\Z)^m$, we define
$R_{\beta}(\bX)\in I_0^m$, $N_{\beta}(\bX)\in\Z$ by
\begin{align*}
&R_{\beta}(\bX):=((\rho_1,\bx_1,\s_1,r_{\beta}(s_1)),\cdots,(\rho_m,\bx_m,\s_m,r_{\beta}(s_m))),\\
&N_{\beta}(\bX):=\sum_{j=1}^mn_{\beta}(s_j).
\end{align*}
Though this is admittedly abuse of notation,
we let $R_{\beta}(\bX)$, $N_{\beta}(\bX)$ denote
\begin{align*}
&((\rho_1,\bx_1,\s_1,r_{\beta}(s_1),\theta_1),\cdots,(\rho_m,\bx_m,\s_m,r_{\beta}(s_m),\theta_m))\
 (\in I^m),\\
&\sum_{j=1}^mn_{\beta}(s_j)\ (\in\Z)
\end{align*}
respectively, for $\bX=((\rho_1,\bx_1,\s_1,s_1,\theta_1),\cdots,(\rho_m,\bx_m,\s_m,s_m,\theta_m))$\\
$\in (\cB\times\G(L)\times\spin\times(1/h)\Z\times\{1,-1\})^m$ as well. 

For $\bX=((\rho_1,\bx_1,\s_1,s_1,\theta_1),\cdots,(\rho_m,\bx_m,\s_m,s_m,\theta_m))$\\
$\in (\cB\times\G(L)\times\spin\times(1/h)\Z\times\{1,-1\})^m$, $s\in(1/h)\Z$, 
we let $\bX+s$ denote 
$$
((\rho_1,\bx_1,\s_1,s_1+s,\theta_1),\cdots,(\rho_m,\bx_m,\s_m,s_m+s,\theta_m))
$$
in order to shorten formulas.

In the rest of this subsection we always assume
\eqref{eq_basic_beta_h_assumption}. Set 
\begin{align*}
\left[\frac{\beta_1}{4},\beta_a-\frac{\beta_1}{4}\right)_h:=\left\{\frac{\beta_1}{4},\frac{\beta_1}{4}+\frac{1}{h},\cdots,\beta_a-\frac{\beta_1}{4}-\frac{1}{h}\right\},\quad
 (a=1,2).
\end{align*}
We measure the difference between anti-symmetric functions 
$f(\beta_a):I(\beta_a)^m\to\C$ $(a=1,2,m\in\N_{\ge 2})$ by the quantify
$|f(\beta_1)-f(\beta_2)|_l$, which is defined by 
\begin{align*}
&|f(\beta_1)-f(\beta_2)|_l\\
&:=\sup_{X\in
 I^0}\left(\frac{1}{h}\right)^{m-1}\sum_{(Y_1,Y_2,\cdots,Y_{m-1})\in\hat{I}^{m-1}}e^{\sum_{j=0}^d(\frac{1}{\pi}\fw(l)\hat{d}_j(X,Y_1))^{1/2}}\\
&\quad\cdot |f(\beta_1)(R_{\beta_1}(X,Y_1,\cdots,Y_{m-1}))
-f(\beta_2)(R_{\beta_2}(X,Y_1,\cdots,Y_{m-1}))|.
\end{align*}
In this subsection we estimate Grassmann polynomials by using $|\cdot -
\cdot|_0$. The infrared analysis in Section \ref{sec_IR} will largely
use $|\cdot -
\cdot|_l$ with $l\in\Z_{<0}$. 

With these notations we have for any $l\in \{1,2,\cdots,N_h\}$,
$\delta\in\{+,-\}$ that
\begin{align}
&\widetilde{C_l^{\delta}}(\beta_a)(\bX)=(-1)^{N_{\beta_a}(\bX+s)}\widetilde{C_l^{\delta}}(\beta_a)(R_{\beta_a}(\bX+s)),\label{eq_UV_covariance_temperature_invariance}\\
&(\forall \bX\in I(\beta_a)^2,s\in (1/h)\Z,a\in \{1,2\}),\notag\\
&|\widetilde{C_l^{\delta}}(\beta_1)-\widetilde{C_l^{\delta}}(\beta_2)|_0\le
 \beta_1^{-\frac{1}{2}}M^{-l}c_0.\label{eq_UV_covariance_decay_difference_norm}
\end{align}
The inequality \eqref{eq_UV_covariance_decay_difference_norm} is due to
\eqref{eq_UV_covariance_decay_difference}. 

Fix $\delta\in \{+,-\}$ and for $a=1,2$ let $F^l(\beta_a)(\psi)$,
$T^l(\beta_a)(\psi)$, $J^l(\beta_a)(\psi)\in\bigwedge \cV(\beta_a)$
$(l=0,1,\cdots,N_h)$ be the Grassmann polynomials defined in the
beginning of the previous subsection at the inverse temperature $\beta_a$. 

By anti-symmetry, for any $f(\beta_1)(\psi)\in\bigwedge\cV(\beta_1)$ and
$m\in\{N(\beta_1)+1,N(\beta_1)+2,\cdots,N(\beta_2)\}$,
$f_m(\beta_1)(\psi)=0$.
Keeping this fact in mind, we can write that 
$f(\beta_1)(\psi)=\sum_{m=0}^{N(\beta_2)}f_m(\beta_1)(\psi)$.

\begin{lemma}\label{lem_UV_polynomial_temperature_invariance}
\begin{align*}
&F_m^{l}(\beta_a)(\bX)=(-1)^{N_{\beta_a}(\bX+s)}F_m^l(\beta_a)(R_{\beta_a}(\bX+s)),\\
&T_m^{l,(n)}(\beta_a)(\bX)=(-1)^{N_{\beta_a}(\bX+s)}T_m^{l,(n)}(\beta_a)(R_{\beta_a}(\bX+s)),\\
&(\forall l\in\{0,1,\cdots,N_h\},n\in \N_{\ge 2},a\in\{1,2\},
 m\in\{1,2,\cdots,N(\beta_2)\},\\
&\quad \bX\in I(\beta_a)^m,s\in (1/h)\Z).
\end{align*}
\end{lemma}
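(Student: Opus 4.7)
The plan is to proceed by downward induction on $l \in \{N_h, N_h-1, \ldots, 0\}$, using two facts repeatedly: the antiperiodicity of the UV covariances stated in \eqref{eq_UV_covariance_temperature_invariance}, and the bijection $\bY \mapsto R_{\beta_a}(\bY+s)$ on $I(\beta_a)^k$, which allows reindexing internal summation variables without changing the value of the sum.

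For the base case $l = N_h$ I would observe that $T^{N_h}$ is identically zero so its invariance is vacuous, while $F^{N_h}$ is built from the interaction kernels $V_m^L$, and its anti-symmetric realization $F_m^{N_h}(\bX)$ vanishes unless $m$ is even and all external time coordinates coincide; when these conditions hold, writing $s_*$ for the common time, one has $(-1)^{N_{\beta_a}(\bX+s)} = (-1)^{m \cdot n_{\beta_a}(s_*+s)} = 1$, and $F_m^{N_h}(R_{\beta_a}(\bX+s)) = F_m^{N_h}(\bX)$ because the value of the kernel depends on the common time only through its lying in $[0,\beta_a)_h$, which $R_{\beta_a}$ enforces.

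For the inductive step I would use the Wick / tree expansion expressing $F_m^l$ and $T_m^{l,(n)}$ as sums of products of $J^{l+1}$-kernels (one per vertex) contracted by entries of $\widetilde{C_{l+1}^\delta}$ (one per edge), with internal time coordinates summed over $I(\beta_a)$. Setting the external indices to $R_{\beta_a}(\bX+s)$ and reindexing every internal summation variable by the bijection $\cdot \mapsto R_{\beta_a}(\cdot+s)$, I would apply the inductive hypothesis at each vertex kernel and \eqref{eq_UV_covariance_temperature_invariance} at each covariance edge. Since each internal leg appears exactly once in its vertex and once in its covariance edge, the sign $(-1)^{n_{\beta_a}(\cdot+s)}$ attached to it appears twice and cancels; each external leg, on the other hand, picks up $(-1)^{n_{\beta_a}(\cdot+s)}$ exactly once. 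The combined sign is therefore exactly $(-1)^{N_{\beta_a}(\bX+s)}$, and the surviving factor reassembles into $F_m^l(\bX)$ (respectively $T_m^{l,(n)}(\bX)$).

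The main care to be taken -- and essentially the only obstacle -- is the combinatorial bookkeeping in the tree / Wick expansion: one must verify that the combinatorial signs arising from anti-symmetrization and from the graph structure depend only on the pairing and ordering of indices, not on the time coordinates themselves, so that they factor cleanly out of the sign-cancellation argument above. This parallels the treatment of \cite[Section 5, Section 6]{K15}, and no genuinely new difficulty arises from the $2^d$-band generalization since the time structure is identical to the one-band square-lattice case.
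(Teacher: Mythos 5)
Your proof is correct and follows essentially the same route as the paper's: downward induction on $l$ from $N_h$, with the base case checked from the definition and the inductive step driven by the covariance antiperiodicity \eqref{eq_UV_covariance_temperature_invariance}. The paper packages the inductive step into a citation to the general invariance-propagation lemma \cite[Lemma 3.9 (1)]{K15} (applied as in \cite[Lemma 5.3]{K15}), which is precisely the tree/Wick sign-cancellation bookkeeping you spell out by hand, and your observation that the combinatorial signs from anti-symmetrization and the graph structure depend only on the pairing pattern and not on the time coordinates is exactly the content that makes that lemma work.
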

\begin{proof}
We can see from the definition that the claimed equalities hold for
 $l=N_h$. Then, by \eqref{eq_UV_covariance_temperature_invariance} the
 same inductive argument based on \cite[\mbox{Lemma 3.9 (1)}]{K15}
as in the proof of \cite[\mbox{Lemma 5.3}]{K15} ensures the results.
\end{proof}

The invariant property summarized in Lemma
\ref{lem_UV_polynomial_temperature_invariance} is one of the basic
assumptions in the general theory \cite[\mbox{Section 4}]{K15}. 
The rest of the assumptions in \cite[\mbox{Section 4}]{K15} are the
bound properties of the covariances which we prepared in Lemma
\ref{lem_UV_covariance_properties}. Thus, 
we can apply  \cite[\mbox{Lemma 4.1,\ Lemma 4.6}]{K15} in the
proof of the following lemma.

\begin{lemma}\label{lem_UV_integration_difference}
Let $\alpha\in\R_{\ge 1}$ and let $c_0$ be the constant appearing in
 Lemma \ref{lem_UV_covariance_properties}. There exists a constant $c\in\R_{>0}$
 independent of any parameter such that if \eqref{eq_UV_parameter_conditions}
 holds with $c$ and \eqref{eq_interaction_amplitude_UV} holds, 
the following inequalities hold for any $l\in\{0,1,\cdots,N_h\}$.
\begin{align}
&\left|\frac{h}{N(\beta_1)}F^l_0(\beta_1)-\frac{h}{N(\beta_2)}F^l_0(\beta_2)\right|
+\left|\frac{h}{N(\beta_1)}T^l_0(\beta_1)-\frac{h}{N(\beta_2)}T^l_0(\beta_2)\right|
\label{eq_UV_0th_bound_difference}\\
&\le \beta_1^{-\frac{1}{2}}\alpha^{-1},\notag\\
&\sum_{m=2}^{2N_v}c_0^{\frac{m}{2}}\alpha^m(|F_m^l(\beta_1)-F_m^l(\beta_2)|_0+|T_m^l(\beta_1)-T_m^l(\beta_2)|_0)\le \beta_1^{-\frac{1}{2}},\label{eq_UV_bound_difference}\\
&M^{-\frac{N_v}{N_v-1}l}\sum_{m=2}^{N(\beta_2)}c_0^{\frac{m}{2}}\alpha^mM^{\frac{l}{2N_v-2}m}
(|F_m^l(\beta_1)-F_m^l(\beta_2)|_{0}+|T_m^l(\beta_1)-T_m^l(\beta_2)|_{0})\label{eq_UV_all_bound_difference}\\
&\le \beta_1^{-\frac{1}{2}}.\notag
\end{align}
\end{lemma}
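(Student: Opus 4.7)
The plan is to mirror the downward induction on $l \in \{0, 1, \ldots, N_h\}$ used in the proof of Lemma \ref{lem_UV_integration}, but now with all kernels measured in the anisothermal norm $|\cdot-\cdot|_0$ rather than $\|\cdot\|_{0,t}$. The invariance property in Lemma \ref{lem_UV_polynomial_temperature_invariance} makes this anisothermal norm an admissible measurement of $F^l_m(\beta_a)$, $T^{l,(n)}_m(\beta_a)$, and the determinant and decay bounds \eqref{eq_UV_covariance_determinant_difference}, \eqref{eq_UV_covariance_decay_difference_norm} provide the scale-dependent input factor $\beta_1^{-1/2}M^{-l}$.

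For the base case $l=N_h$, note that $T^{N_h}(\beta_a)=0$ and that $F^{N_h}_{2m}(\beta_a)$ is essentially the anti-symmetrization of $V_m^L(\bU)$ multiplied by the indicator $1_{s_1=\cdots=s_{2m}}$; since $V_m^L(\bU)$ is manifestly $\beta$-independent, the kernels satisfy $F^{N_h}_m(\beta_1)(R_{\beta_1}\bX)=F^{N_h}_m(\beta_2)(R_{\beta_2}\bX)$ for every $\bX\in \hat{I}^m$, so the left-hand sides of \eqref{eq_UV_0th_bound_difference}--\eqref{eq_UV_all_bound_difference} vanish at $l=N_h$.

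For the inductive step, assume \eqref{eq_UV_bound_difference}, \eqref{eq_UV_all_bound_difference} at scale $l+1$. Writing
\[
F^l(\beta_a)(\psi)-F^l(\beta_a)(0) \;=\; \int J^{l+1}(\beta_a)(\psi+\psi^1)\,d\mu_{C^\delta_{l+1}(\beta_a)}(\psi^1) - \text{(zeroth part)},
\]
and similarly for each $T^{l,(n)}(\beta_a)$ via its tree-formula, the difference between $a=1$ and $a=2$ decomposes linearly into three kinds of terms: (i) one factor of the covariance-determinant difference bounded by $\beta_1^{-1/2}M^{-(l+1)/2}c_0^n$ via \eqref{eq_UV_covariance_determinant_difference}; (ii) one factor of $|\widetilde{C^\delta_{l+1}}(\beta_1)-\widetilde{C^\delta_{l+1}}(\beta_2)|_0 \le \beta_1^{-1/2}M^{-(l+1)}c_0$ via \eqref{eq_UV_covariance_decay_difference_norm}; (iii) one factor of $|J^{l+1}_m(\beta_1)-J^{l+1}_m(\beta_2)|_0$ controlled by the induction hypothesis. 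All remaining factors in each such term are estimated by the uniform bounds from Lemma \ref{lem_UV_integration} (valid at both $\beta_1$ and $\beta_2$ under the same parameter conditions). Packaging these three contributions is precisely the role of the abstract bounds \cite[Lemma 4.1, Lemma 4.6]{K15}, whose assumptions are already verified by Lemma \ref{lem_UV_covariance_properties} and Lemma \ref{lem_UV_polynomial_temperature_invariance}.

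To close the bookkeeping I would introduce the auxiliary difference $\hat{F}^l(\beta_1)-\hat{F}^l(\beta_2)$ exactly as in \eqref{eq_artificial_free_definition}, prove the difference-analogue of \eqref{eq_artificial_free_part_bound}, and then sum the contributions at scale $l$ in parallel with \eqref{eq_UV_bound_free_preliminary}--\eqref{eq_free_0th}. The main obstacle is ensuring the $\beta_1^{-1/2}$ factor is extracted \emph{once} and then propagated linearly rather than being reintroduced at every scale: this is accomplished by the telescoping structure in which each tree-expansion term either contains one "difference factor" of type (i)--(iii) (already carrying $\beta_1^{-1/2}$) while all other factors use the uniform isothermal bounds, or it contains only kernel-difference factors which already carry $\beta_1^{-1/2}$ by the induction hypothesis. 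The geometric sums in $M^{-(l+1)/2}$, $M^{-(l+1)/(N_v-1)}$, and $\alpha^{-2}$ then close exactly as in the isothermal proof, yielding \eqref{eq_UV_0th_bound_difference}--\eqref{eq_UV_all_bound_difference} at scale $l$ after possibly enlarging the constant $c$ in \eqref{eq_UV_parameter_conditions}.
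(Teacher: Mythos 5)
Your proposal follows essentially the same downward induction that the paper carries out: base case vanishing at $l=N_h$, the three-way decomposition of covariance/kernel differences controlled by \eqref{eq_UV_covariance_determinant_difference} and \eqref{eq_UV_covariance_decay_difference_norm}, the appeal to \cite[Lemma 4.1, Lemma 4.6]{K15}, the auxiliary estimate for $\hat{F}^{l'}(\beta_1)-\hat{F}^{l'}(\beta_2)$ (including checking the invariance \eqref{eq_artificial_free_invariance} so the anisothermal norm applies), and closing the geometric sums over scales. You have correctly identified the one subtle bookkeeping point, namely that the $\beta_1^{-1/2}$ factor must appear exactly once per tree-expansion term, which is exactly what the telescoping decomposition in the paper achieves.
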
 

\begin{proof}
We assume the conditions \eqref{eq_UV_parameter_conditions}, \eqref{eq_interaction_amplitude_UV} with a
 constant $c'\in\R_{>0}$ so that the results of Lemma
 \ref{lem_UV_integration} hold for $\beta_1$ and $\beta_2$. 
Let us make clear the logic. During the proof we do not touch the initial
 constant $c'$. In the end of the proof we will see that all the
 estimations are justified if the initial constant $c'$ is sufficiently large.
In the
 following we use the symbol $c$ to express a generic positive constant independent of any
 parameter and will replace it by a larger constant denoted by
the same symbol from time to time. This notational convention helps to
 simplify the arguments. Not to confuse, we should stress that a
 constant denoted by $c$ does not depend on $c'$, either.

We prove the claims by induction with $l\in\{0,1,\cdots,N_h\}$. 
By definition the left-hand sides of the claimed inequalities for
 $l=N_h$ vanish. Thus, the results hold for $l=N_h$.

Assume that $l\in\{0,1,\cdots,N_h-1\}$ and for any
 $j\in\{l+1,l+2,\cdots,N_h\}$ the inequalities \eqref{eq_UV_bound_difference},
 \eqref{eq_UV_all_bound_difference} hold.
In the same way as in the derivation of \eqref{eq_bound_from_induction},
 \eqref{eq_bound_from_induction_weight} we can derive from 
the hypothesis of induction that
\begin{align}
&\sum_{m=2}^{N(\beta_2)}2^{3m}c_0^{\frac{m}{2}}\alpha^m|J_m^{l+1}(\beta_1)-J_m^{l+1}(\beta_2)|_{0}
\le c^{N_v}\beta_1^{-\frac{1}{2}},\label{eq_bound_from_induction_difference}\\
&\sum_{m=2}^{N(\beta_2)}2^{2m}c_0^{\frac{m}{2}}\alpha^mM^{\frac{l}{2N_v-2}m}|J_m^{l+1}(\beta_1)-J_m^{l+1}(\beta_2)|_{0}
\le c M^{-\frac{1}{N_v-1}+\frac{N_v}{N_v-1}(l+1)}\beta_1^{-\frac{1}{2}},
\label{eq_bound_from_induction_weight_difference}
\end{align}
on the assumption that $M\ge c^{N_v}$.

Substitution of 
 \eqref{eq_UV_covariance_determinant},
\eqref{eq_UV_covariance_determinant_difference},
 \eqref{eq_UV_covariance_decay_norm}, 
\eqref{eq_bound_from_induction},
 \eqref{eq_UV_covariance_decay_difference_norm},
\eqref{eq_bound_from_induction_difference} into the inequality in
\cite[\mbox{Lemma 4.6 (1)}]{K15} yields that for any $n\in \N_{\ge 2}$,
\begin{align*}
&\left|\frac{h}{N(\beta_1)}T_0^{l,(n)}(\beta_1)
- \frac{h}{N(\beta_2)}T_0^{l,(n)}(\beta_2)\right|\\
&\le
c^nM^{-(l+1)(n-1)}\left(\sum_{m=2}^{N(\beta_2)}2^{3m}c_0^{\frac{m}{2}}\sum_{a=1}^2\|J_m^{l+1}(\beta_a)\|_{0,0}\right)^{n-1}\\
&\quad\cdot \sum_{m=2}^{N(\beta_2)}2^{3m}c_0^{\frac{m}{2}}\left(
\beta_1^{-\frac{1}{2}}\sum_{b=1}^2\sum_{t=0}^1
\|J_m^{l+1}(\beta_b)\|_{0,t}+|J_m^{l+1}(\beta_1)-J_m^{l+1}(\beta_2)|_0\right)\\
&\le \beta_1^{-\frac{1}{2}}M^{l+1}(c^{N_v}M^{-l-1}\alpha^{-2})^n. 
\end{align*}
Moreover, on the assumption that $M\ge c^{N_v}$,
\begin{align}
\left|\frac{h}{N(\beta_1)}T_0^l(\beta_1)-\frac{h}{N(\beta_2)}T_0^l(\beta_2)\right|
\le c^{N_v}\beta_1^{-\frac{1}{2}}M^{-l-1}\alpha^{-4}.
\label{eq_tree_0th_difference}
\end{align}
By \eqref{eq_UV_covariance_determinant},
\eqref{eq_UV_covariance_determinant_difference},
 \eqref{eq_UV_covariance_decay_norm}, 
 \eqref{eq_UV_covariance_decay_difference_norm}
and \cite[\mbox{Lemma 4.6
 (2)}]{K15} we have for any $m\in\{2,3,\cdots,N(\beta_2)\}$, $n\in\N_{\ge
 2}$ that
\begin{align}
&|T_m^{l,(n)}(\beta_1)-T_m^{l,(n)}(\beta_2)|_{0}\label{eq_tree_higher_preliminary_difference}\\
&\le c^n\cdot 2^{-2m}c_0^{-\frac{m}{2}}M^{-(l+1)(n-1)}
\prod_{j=2}^n\left(\sum_{m_j=2}^{N(\beta_2)}2^{4m_j}c_0^{\frac{m_j}{2}}\sum_{a=1}^2\|J_{m_j}^{l+1}(\beta_a)\|_{0,0}\right)\notag\\
&\quad\cdot
 \sum_{m_1=2}^{N(\beta_2)}2^{4m_1}c_0^{\frac{m_1}{2}}
\left(\beta_1^{-\frac{1}{2}}\sum_{b=1}^2\sum_{t=0}^1\|J_{m_1}^{l+1}(\beta_b)\|_{0,t}
+|J_{m_1}^{l+1}(\beta_1)-J_{m_1}^{l+1}(\beta_2)|_0\right)\notag\\
&\quad\cdot 1_{\sum_{j=1}^nm_j-2n+2\ge m}.\notag
\end{align}
Then, by \eqref{eq_bound_from_induction} and
 \eqref{eq_bound_from_induction_difference}, 
\begin{align*}
&\sum_{m=2}^{2N_v}c_0^{\frac{m}{2}}\alpha^m|T_m^{l,(n)}(\beta_1)-T_m^{l,(n)}(\beta_2)
|_{0}\\
&\le c^nM^{-(l+1)(n-1)}\alpha^{-2n+2}
\left(\sum_{m=2}^{N(\beta_2)}2^{2m}c_0^{\frac{m}{2}}\alpha^{m}\sum_{a=1}^2
\|J_{m}^{l+1}(\beta_a)\|_{0,0}\right)^{n-1}\\
&\quad\cdot \sum_{m=2}^{N(\beta_2)}2^{2m}c_0^{\frac{m}{2}}\alpha^{m}
\left(\beta_1^{-\frac{1}{2}}\sum_{b=1}^2
\sum_{t=0}^1\|J_{m}^{l+1}(\beta_b)\|_{0,t} +
|J_m^{l+1}(\beta_1)-J_m^{l+1}(\beta_2)|_0\right)\\
&\le \beta_1^{-\frac{1}{2}}M^{l+1}\alpha^2(c^{N_v}M^{-l-1}\alpha^{-2})^n.
\end{align*}
Thus, on the assumption $M\ge c^{N_v}$ we have that 
\begin{align}
\sum_{m=2}^{2N_v}c_0^{\frac{m}{2}}\alpha^m|T_m^l(\beta_1)-T_m^l(\beta_2)|_{0}\le
 \beta_1^{-\frac{1}{2}}c^{N_v}M^{-l-1}\alpha^{-2}.\label{eq_tree_sum_difference}
\end{align}
Also, it follows from \eqref{eq_tree_higher_preliminary_difference} and
 \eqref{eq_bound_from_induction_weight},
 \eqref{eq_bound_from_induction_weight_difference} that
\begin{align*}
&M^{-\frac{N_v}{N_v-1}l}
\sum_{m=2}^{N(\beta_2)}c_0^{\frac{m}{2}}\alpha^mM^{\frac{l}{2N_v-2}m}|T_m^{l,(n)}(\beta_1)-T_m^{l,(n)}(\beta_2)|_{0}\\
&\le c^n
 M^{-\frac{N_v}{N_v-1}l-(l+1+\frac{l}{N_v-1})(n-1)}\alpha^{-2n+2}\\
&\quad\cdot\left(\sum_{m=2}^{N(\beta_2)}2^{2m}\alpha^{m}c_0^{\frac{m}{2}}M^{\frac{l}{2N_v-2}m}
\sum_{a=1}^2\|J_{m}^{l+1}(\beta_a)\|_{0,0}\right)^{n-1}\\
&\quad \cdot
 \sum_{m=2}^{N(\beta_2)}2^{2m}c_0^{\frac{m}{2}}\alpha^{m}M^{\frac{l}{2N_v-2}m}\\
&\quad\cdot \left(
\beta_1^{-\frac{1}{2}}\sum_{b=1}^2\sum_{t=0}^1\|J_{m}^{l+1}(\beta_b)\|_{0,t}
+|J_m^{l+1}(\beta_1)-J_m^{l+1}(\beta_2)|_0\right)\\
&\le \beta_1^{-\frac{1}{2}}M^{1-(l+1+\frac{l}{N_v-1})n}\alpha^{-2n+2}(c
 M^{-\frac{1}{N_v-1}+\frac{N_v}{N_v-1}(l+1)})^n\\
&=\beta_1^{-\frac{1}{2}} M\alpha^2(c\alpha^{-2})^n.
\end{align*}
Thus, by the assumption $\alpha\ge c$, 
\begin{align}
M^{-\frac{N_v}{N_v-1}l}
\sum_{m=2}^{N(\beta_2)}c_0^{\frac{m}{2}}\alpha^mM^{\frac{l}{2N_v-2}m}|T_m^{l}(\beta_1)-T_m^{l}(\beta_2)|_{0}
\le
 c \beta_1^{-\frac{1}{2}} M\alpha^{-2}.\label{eq_tree_sum_weight_difference}
\end{align}

In order to find upper bounds on the difference between
 $F^l(\beta_1)(\psi)$ and $F^l(\beta_2)(\psi)$, we need to establish upper
 bounds on the difference between $\hat{F}^l(\beta_1)(\psi)$ and
 $\hat{F}^l(\beta_2)(\psi)$. To this end, first
we need to confirm that
\begin{align}
&\hat{F}_m^{l'}(\beta_a)(\bX)=(-1)^{N_{\beta_a}(\bX+s)}\hat{F}_m^{l'}(\beta_a)(R_{\beta_a}(\bX+s)),\label{eq_artificial_free_invariance}\\
&(\forall l'\in\{l,l+1,\cdots,N_h\},a\in\{1,2\},
 m\in\{1,2,\cdots,2N_v\},\notag\\
&\quad \bX\in I(\beta_a)^m,s\in (1/h)\Z
 ),\notag
\end{align}
where $\hat{F}_m^{l'}(\beta_a)(\bX)$ $(m=1,2,\cdots,2N_v)$ are
 the kernels of $\hat{F}^{l'}(\beta_a)(\psi)\in \bigwedge \cV(\beta_a)$
 defined in \eqref{eq_artificial_free_definition}. By definition,
 \eqref{eq_artificial_free_invariance} holds for $l'=N_h$. Assume that
$l'\in \{l,l+1,\cdots,N_h-1\}$ and 
 \eqref{eq_artificial_free_invariance} is true for $j\in
 \{l'+1,l'+2,\cdots,N_h\}$. Take any $s\in (1/h)\Z$. It follows from
 \eqref{eq_UV_covariance_temperature_invariance} that for any
 $a\in\{1,2\}$, $n\in \N$, $\bY\in I(\beta_a)^n$,  $j\in
 \{l'+1,l'+2,\cdots,N_h\}$,
\begin{align}
\int\psi_{R_{\beta_a}(\bY+s)}d\mu_{C_j^{\delta}(\beta_a)}(\psi)=(-1)^{N_{\beta_a}(\bY+s)}\int
 \psi_{\bY}d\mu_{C_j^{\delta}(\beta_a)}(\psi).\label{eq_Grassmann_integral_invariance_specific}
\end{align}
By using this equality, Lemma
 \ref{lem_UV_polynomial_temperature_invariance},
 \eqref{eq_artificial_free_kernel} and the induction hypothesis we have
 that for any $\bX\in
 I(\beta_a)^m$, 
\begin{align*}
&(-1)^{N_{\beta_a}(\bX+s)}\hat{F}_m^{l'}(\beta_a)(R_{\beta_a}(\bX+s))\\
&=F_m^{l'}(\beta_a)(\bX)-F_m^{N_h}(\beta_a)(\bX)\\
&\quad -\sum_{j=l'+1}^{N_h}\sum_{n=m+2}^{2N_v}\left(\begin{array}{c}n \\
					      m\end{array}\right)
\left(\frac{1}{h}\right)^{n-m}\sum_{\bY\in I(\beta_a)^{n-m}}(-1)^{N_{\beta_a}(\bX+s)}\\
&\qquad\qquad\cdot (F_n^j(\beta_a)(R_{\beta_a}(\bX+s),\bY)-
 \hat{F}_n^j(\beta_a)(R_{\beta_a}(\bX+s),\bY))\\
&\qquad\qquad\cdot \int \psi_{\bY}^1d\mu_{C_j^{\delta}(\beta_a)}(\psi^1)\\
&=F_m^{l'}(\beta_a)(\bX)-F_m^{N_h}(\beta_a)(\bX)\\
&\quad -\sum_{j=l'+1}^{N_h}\sum_{n=m+2}^{2N_v}\left(\begin{array}{c}n \\
					      m\end{array}\right)
\left(\frac{1}{h}\right)^{n-m}\sum_{\bY\in I(\beta_a)^{n-m}}(-1)^{N_{\beta_a}(\bX+s)+N_{\beta_a}(\bY+s)}\\
&\qquad\cdot (F_n^j(\beta_a)(R_{\beta_a}(\bX+s),R_{\beta_a}(\bY+s))\\
&\qquad\quad -
 \hat{F}_n^j(\beta_a)(R_{\beta_a}(\bX+s),R_{\beta_a}(\bY+s)))\\
&\qquad\cdot\int \psi_{\bY}^1d\mu_{C_j^{\delta}(\beta_a)}(\psi^1)\\
&=\hat{F}_m^{l'}(\beta_a)(\bX).
\end{align*}
Thus, by induction the equality \eqref{eq_artificial_free_invariance}
 holds for all $l'\in \{l,l+1,\cdots,N_h\}$.

Let us prove that for any $l'\in\{l,l+1,\cdots,N_h\}$, 
\begin{align}
\sum_{m=2}^{2N_v}c_0^{\frac{m}{2}}\alpha^m|\hat{F}_m^{l'}(\beta_1)-
\hat{F}_m^{l'}(\beta_2)|_{0}\le\beta_1^{-\frac{1}{2}}
 \alpha^{-2}M^{-\frac{l'}{N_v-1}}.\label{eq_artificial_free_part_bound_difference}
\end{align}
For $l'=N_h$ the inequality
 \eqref{eq_artificial_free_part_bound_difference} holds since its
 left-hand side is zero. 
Assume that $l'\in \{l,l+1,\cdots,N_h-1\}$ and 
$\eqref{eq_artificial_free_part_bound_difference}$ holds for all $j\in
 \{l'+1,l'+2,\cdots,N_h\}$. 
By \eqref{eq_UV_covariance_determinant},
\eqref{eq_UV_covariance_determinant_difference},
\eqref{eq_artificial_free_decomposition} and the estimation parallel to 
\cite[\mbox{Lemma 4.1 (2)}]{K15},
\begin{align*}
&|\hat{F}_m^{l'}(\beta_1)-\hat{F}_m^{l'}(\beta_2)|_{0}\\
&\le|\hat{F}_m^{l'+1}(\beta_1)-\hat{F}_m^{l'+1}(\beta_2)|_{0}
+|T_m^{l'+1}(\beta_1)-T_m^{l'+1}(\beta_2)|_{0}\\
&\quad+
c \sum_{n=m+2}^{2N_v}2^{2n}c_0^{\frac{n-m}{2}}\Bigg(
|\hat{F}_n^{l'+1}(\beta_1)-\hat{F}_n^{l'+1}(\beta_2)|_{0}\\
&\qquad\qquad+\beta_1^{-\frac{1}{2}}\sum_{t=0}^1\sum_{a=1}^2\|\hat{F}_n^{l'+1}(\beta_a)\|_{0,t}
+|T_n^{l'+1}(\beta_1)-T_n^{l'+1}(\beta_2)|_0\\
&\qquad\qquad+\beta_1^{-\frac{1}{2}}\sum_{t=0}^1\sum_{a=1}^2\|T_n^{l'+1}(\beta_a)\|_{0,t}\Bigg)\\
&
\quad+c\sum_{n=2N_v+2}^{N(\beta_2)}2^{2n}c_0^{\frac{n-m}{2}}\Bigg(|J_n^{l'+1}(\beta_1)-J_n^{l'+1}(\beta_2)|_0\\
&\qquad\qquad+\beta_1^{-\frac{1}{2}}\sum_{t=0}^1\sum_{a=1}^2
\|J_n^{l'+1}(\beta_a)\|_{0,t}\Bigg).
\end{align*} 
On the assumption  $\alpha\ge c$, $M\ge c^{N_v^2}$, insertion of 
\eqref{eq_UV_all_bound}, 
\eqref{eq_tree_sum},
\eqref{eq_artificial_free_part_bound},
\eqref{eq_UV_all_bound_difference},
\eqref{eq_tree_sum_difference},
\eqref{eq_artificial_free_part_bound_difference}
for $l'+1$ yields that
\begin{align*}
&\sum_{m=2}^{2N_v}c_0^{\frac{m}{2}}\alpha^m|\hat{F}_m^{l'}(\beta_1)-\hat{F}_m^{l'}(\beta_2)|_{0}\\
&\le c
 \sum_{m=2}^{2N_v}2^{2m}\sum_{n=m}^{2N_v}c_0^{\frac{n}{2}}\alpha^n
\Bigg(|\hat{F}_n^{l'+1}(\beta_1)-\hat{F}_n^{l'+1}(\beta_2)|_{0}
+|T_n^{l'+1}(\beta_1)-T_n^{l'+1}(\beta_2)|_{0}\\
&\qquad +\beta_1^{-\frac{1}{2}}\sum_{t=0}^1\sum_{a=1}^2\|\hat{F}_n^{l'+1}(\beta_a)\|_{0,t}
+\beta_1^{-\frac{1}{2}}\sum_{t=0}^1\sum_{a=1}^2\|T_n^{l'+1}(\beta_a)\|_{0,t}\Bigg)\\
&\quad +c\sum_{m=2}^{2N_v}2^{4N_v+4}\alpha^{m-2N_v-2}M^{-\frac{l'+1}{N_v-1}(N_v+1)}
\sum_{n=2N_v+2}^{N(\beta_2)}c_0^{\frac{n}{2}}\alpha^nM^{\frac{l'+1}{2N_v-2}n}\\
&\quad\qquad\cdot\left(
|J_n^{l'+1}(\beta_1)-J_n^{l'+1}(\beta_2)|_{0}
+\beta_1^{-\frac{1}{2}}
\sum_{t=0}^1\sum_{a=1}^2\|J_n^{l'+1}(\beta_a)\|_{0,t}
\right)\\
&\le  c^{N_v}\beta_1^{-\frac{1}{2}}\alpha^{-2}M^{-\frac{l'+1}{N_v-1}}\le\beta_1^{-\frac{1}{2}} \alpha^{-2}M^{-\frac{l'}{N_v-1}}.
\end{align*}
Therefore, the induction concludes that
 \eqref{eq_artificial_free_part_bound_difference}  holds for all
 $l'\in \{l,l+1,\cdots,N_h\}$.

We can see from \eqref{eq_artificial_free_part} and
 \eqref{eq_artificial_free_kernel} that for any $X_0\in I^0$, $\bX\in
 \hat{I}^{m-1}$, $a\in \{1,2\}$, $m\in\{2,3,\cdots,2N_v\}$,
\begin{align*}
&F_m^{l}(\beta_a)(X_0,R_{\beta_a}(\bX))\\
&=\hat{F}_m^{l}(\beta_a)(X_0,R_{\beta_a}(\bX))+
F_m^{N_h}(\beta_a)(X_0,R_{\beta_a}(\bX))\\
&\quad+\sum_{j=l+1}^{N_h}\sum_{n=m+2}^{2N_v}\left(\begin{array}{c}n
							\\
							     m\end{array}\right)
\left(\frac{1}{h}\right)^{n-m}\sum_{\bY\in
 (I^0)^{n-m}}\\
&\qquad\cdot(F_n^j(\beta_a)(X_0,R_{\beta_a}(\bX,\bY))-\hat{F}^j_n(\beta_a)(X_0,R_{\beta_a}(\bX,\bY)))\\
&\qquad\cdot \int\psi_{\bY}^1d\mu_{C_j^{\delta}(\beta_a)}(\psi^1).
\end{align*}
Then, by \eqref{eq_UV_covariance_determinant_equal_time},
 \eqref{eq_UV_covariance_determinant_difference}, 
\begin{align*}
&|F_m^{l}(\beta_1)(X_0,R_{\beta_1}(\bX))-F_m^{l}(\beta_2)(X_0,R_{\beta_2}(\bX))|\\
&\le |\hat{F}_m^{l}(\beta_1)(X_0,R_{\beta_1}(\bX))-\hat{F}_m^{l}(\beta_2)(X_0,R_{\beta_2}(\bX))|\\
&\quad+\sum_{j=l+1}^{N_h}\sum_{n=m+2}^{2N_v}\left(\begin{array}{c}n
							\\
							     m\end{array}\right)
\left(\frac{1}{h}\right)^{n-m}\sum_{\bY\in
 (I^0)^{n-m}}\\
&\qquad\cdot(|F_n^j(\beta_1)(X_0,R_{\beta_1}(\bX,\bY))-F_n^j(\beta_2)(X_0,R_{\beta_2}(\bX,\bY))|\\
&\qquad\quad +
|\hat{F}^j_n(\beta_1)(X_0,R_{\beta_1}(\bX,\bY))
-\hat{F}^j_n(\beta_2)(X_0,R_{\beta_2}(\bX,\bY))|)\\
&\qquad\cdot
(M^{-j}+M^{j-N_h})c_0^{\frac{n-m}{2}}\\
&\quad+\sum_{j=l+1}^{N_h}\sum_{n=m+2}^{2N_v}\left(\begin{array}{c}n
							\\
							     m\end{array}\right)
\left(\frac{1}{h}\right)^{n-m}\sum_{\bY\in
 (I^0)^{n-m}}\\
&\qquad\cdot(|F_n^j(\beta_2)(X_0,R_{\beta_2}(\bX,\bY))|
+|\hat{F}^j_n(\beta_2)(X_0,R_{\beta_2}(\bX,\bY))|)\\
&\qquad\cdot\beta_{1}^{-\frac{1}{2}}M^{-\frac{j}{2}}c_0^{\frac{n-m}{2}}.
\end{align*}
Thus,
\begin{align*}
&|F_m^{l}(\beta_1)-F_m^{l}(\beta_2)|_{0}\\
&\le |\hat{F}_m^l(\beta_1)-\hat{F}_m^l(\beta_2)|_{0}\\
&\quad +\sum_{j=l+1}^{N_h}(M^{-j}+M^{j-N_h})\\
&\qquad\cdot \sum_{n=m+2}^{2N_v}2^nc_0^{\frac{n-m}{2}}
(|F_n^j(\beta_1)-F_n^j(\beta_2)|_{0}+|\hat{F}_n^j(\beta_1)-\hat{F}_n^j(\beta_2)|_{0})\\
&\quad +\beta_1^{-\frac{1}{2}}\sum_{j=l+1}^{N_h}M^{-\frac{j}{2}}\sum_{n=m+2}^{2N_v}2^nc_0^{\frac{n-m}{2}}
(\|F_n^j(\beta_2)\|_{0,0}+\|\hat{F}_n^j(\beta_2)\|_{0,0}).
\end{align*}
Moreover, by substituting \eqref{eq_UV_bound}, 
\eqref{eq_artificial_free_part_bound},
\eqref{eq_UV_bound_difference} for $j\in
 \{l+1,l+2,\cdots,N_h\}$, 
\eqref{eq_artificial_free_part_bound_difference} for $j\in\{l,l+1,\cdots,N_h\}$ and using the condition $\alpha\ge c$
 we deduce that
\begin{align}
&\sum_{m=2}^{2N_v}c_0^{\frac{m}{2}}\alpha^m|F_m^{l}(\beta_1)-F_m^{l}(\beta_2)|_{0}\label{eq_UV_bound_free_preliminary_difference}\\
&\le
 \sum_{m=2}^{2N_v}c_0^{\frac{m}{2}}\alpha^m|\hat{F}_m^{l}(\beta_1)-\hat{F}_m^{l}(\beta_2)|_{0}
\notag\\
&\quad + \sum_{m=2}^{2N_v}\sum_{j=l+1}^{N_h}2^{m+2}\alpha^{-2}(M^{-j}+M^{j-N_h})\notag\\
&\qquad\cdot \sum_{n=m+2}^{2N_v}c_0^{\frac{n}{2}}\alpha^n
(|F_n^j(\beta_1)-F_n^j(\beta_2)|_0+|\hat{F}_n^j(\beta_1)-\hat{F}_n^j(\beta_2)|_0)\notag\\&\quad +\beta_1^{-\frac{1}{2}}
\sum_{m=2}^{2N_v}\sum_{j=l+1}^{N_h}2^{m+2}\alpha^{-2}M^{-\frac{j}{2}}\sum_{n=m+2}^{2N_v}c_0^{\frac{n}{2}}\alpha^n
(\|F_n^j(\beta_2)\|_{0,0}+\|\hat{F}_n^j(\beta_2)\|_{0,0})\notag\\
&\le \beta_1^{-\frac{1}{2}}c^{N_v}\alpha^{-2},\notag
\end{align}
which also implies that
\begin{align}
M^{-\frac{N_v}{N_v-1}l}\sum_{m=2}^{2N_v}c_0^{\frac{m}{2}}\alpha^mM^{\frac{l}{2N_v-2}m}
|F_m^{l}(\beta_1)-F_m^{l}(\beta_2)|_{0}\le\beta_1^{-\frac{1}{2}}c^{N_v}\alpha^{-2}.\label{eq_free_lower_half_bound_difference}
\end{align}

It follows from \eqref{eq_UV_covariance_determinant},
 \eqref{eq_UV_covariance_determinant_difference} and 
\cite[\mbox{Lemma 4.1 (2)}]{K15} that
for $m\in\{2N_v+2,2N_v+3,\cdots,N(\beta_2) \}$,
\begin{align*}
&|F_m^l(\beta_1)-F_m^l(\beta_2)|_{0}\\
&\le |J_m^{l+1}(\beta_1)-J_m^{l+1}(\beta_2)|_0\\
&\quad +c \sum_{n=m+2}^{N(\beta_2)}
2^{2n}c_0^{\frac{n-m}{2}}\Bigg(|J_n^{l+1}(\beta_1)-J_n^{l+1}(\beta_2)|_{0}
+\beta_1^{-\frac{1}{2}}\sum_{a=1}^2\sum_{t=0}^1\|J_n^{l+1}(\beta_a)\|_{0,t}\Bigg).
\end{align*}
Then, by \eqref{eq_UV_all_bound}, \eqref{eq_UV_all_bound_difference} for
 $l+1$ and the conditions $M\ge c^{N_v}$, $\alpha\ge c$, 
\begin{align*}
&\sum_{m=2N_v+2}^{N(\beta_2)}c_0^{\frac{m}{2}}\alpha^mM^{\frac{l}{2N_v-2}m}
|F_m^l(\beta_1)-F_m^l(\beta_2)|_{0}\\
&\le c 
 \sum_{n=2N_v+2}^{N(\beta_2)}\sum_{m=2N_v+2}^n2^{2n}c_0^{\frac{n}{2}}\alpha^mM^{\frac{l}{2N_v-2}m}\\
&\quad\cdot \left(|J_n^{l+1}(\beta_1)-J_n^{l+1}(\beta_2)|_{0}
+\beta_1^{-\frac{1}{2}}\sum_{a=1}^2\sum_{t=0}^1\|J_n^{l+1}(\beta_a)\|_{0,t}\right)\\
&\le c 
 \sum_{n=2N_v+2}^{N(\beta_2)}2^{2n}c_0^{\frac{n}{2}}\alpha^nM^{\frac{l}{2N_v-2}n}
\\
&\quad\cdot \left(|J_n^{l+1}(\beta_1)-J_n^{l+1}(\beta_2)|_{0}
+\beta_1^{-\frac{1}{2}}\sum_{a=1}^2\sum_{t=0}^1\|J_n^{l+1}(\beta_a)\|_{0,t}\right)\\
&\le
 c^{N_v}M^{-\frac{N_v+1}{N_v-1}+\frac{N_v}{N_v-1}(l+1)}\beta_1^{-\frac{1}{2}},
\end{align*}
and thus
\begin{align}
M^{-\frac{N_v}{N_v-1}l}
\sum_{m=2N_v+2}^{N(\beta_2)}c_0^{\frac{m}{2}}\alpha^mM^{\frac{l}{2N_v-2}m}
|F_m^l(\beta_1)-F_m^l(\beta_2)|_{0}
\le
 c^{N_v}M^{-\frac{1}{N_v-1}}\beta_1^{-\frac{1}{2}}.\label{eq_free_upper_half_bound_difference}
\end{align}

Finally, let us estimate the difference between $F_0^l(\beta_1)$ and
 $F_0^l(\beta_2)$. Note that
\begin{align}
&F_0^l(\beta_a)\label{eq_free_0th_temperature_dependent_decomposition}\\
&=F_0^{l+1}(\beta_a)+T_0^{l+1}(\beta_a)+\sum_{m=2}^{2N_v}\int
 (F_m^{l+1}(\beta_a)(\psi)-\hat{F}_m^{l+1}(\beta_a)(\psi))d\mu_{C_{l+1}^{\delta}(\beta_a)}(\psi)\notag
\\
&\quad+\sum_{m=2}^{2N_v}\int
 \hat{F}_m^{l+1}(\beta_a)(\psi)d\mu_{C_{l+1}^{\delta}(\beta_a)}(\psi)
+\sum_{m=2}^{2N_v}\int
 T_m^{l+1}(\beta_a)(\psi)d\mu_{C_{l+1}^{\delta}(\beta_a)}(\psi)\notag\\
&\quad+\sum_{m=2N_v+2}^{N(\beta_2)}\int
 J_m^{l+1}(\beta_a)(\psi)d\mu_{C_{l+1}^{\delta}(\beta_a)}(\psi).\notag
\end{align}
By \eqref{eq_artificial_free_part}, 
Lemma \ref{lem_UV_polynomial_temperature_invariance},
\eqref{eq_artificial_free_invariance},
\eqref{eq_Grassmann_integral_invariance_specific},
\begin{align*}
&\int
 (F_m^{l+1}(\beta_a)(\psi)-\hat{F}_m^{l+1}(\beta_a)(\psi))d\mu_{C_{l+1}^{\delta}(\beta_a)}(\psi)\\
&=\left(\frac{1}{h}\right)^m\sum_{s\in [0,\beta_a)_h}\sum_{X\in
 I^0}\sum_{\bY\in
 I(\beta_a)^{m-1}}(F_m^{l+1}(\beta_a)(X,R_{\beta_a}(\bY-s))\\
&\qquad\qquad\qquad -\hat{F}_m^{l+1}(\beta_a)(X,R_{\beta_a}(\bY-s)))\int
 \psi_{X}\psi_{R_{\beta_a}(\bY-s)}d\mu_{C_{l+1}^{\delta}(\beta_a)}(\psi)\\
&=\beta_a\left(\frac{1}{h}\right)^{m-1}\sum_{X\in
 I^0}\sum_{\bY\in
 (I^0)^{m-1}}(F_m^{l+1}(\beta_a)(X,\bY)-\hat{F}_m^{l+1}(\beta_a)(X,\bY))\\
&\qquad\cdot \int
 \psi_{X}\psi_{\bY}d\mu_{C_{l+1}^{\delta}(\beta_a)}(\psi).
\end{align*}
Combined with \eqref{eq_UV_covariance_determinant_equal_time} and
 \eqref{eq_UV_covariance_determinant_difference}, this equality implies
 that
\begin{align}
&\Bigg|\frac{h}{N(\beta_1)}\int
 (F_m^{l+1}(\beta_1)(\psi)-\hat{F}_m^{l+1}(\beta_1)(\psi))d\mu_{C_{l+1}^{\delta}(\beta_1)}(\psi)\label{eq_free_0th_special_part_difference}\\
&-\frac{h}{N(\beta_2)}\int
 (F_m^{l+1}(\beta_2)(\psi)-\hat{F}_m^{l+1}(\beta_2)(\psi))d\mu_{C_{l+1}^{\delta}(\beta_2)}(\psi)\Bigg|\notag\\
&\le
 (|F_m^{l+1}(\beta_1)-F_m^{l+1}(\beta_2)|_0+|\hat{F}_m^{l+1}(\beta_1)-\hat{F}_m^{l+1}(\beta_2)|_0)\notag\\
&\quad\cdot(M^{-l-1}+M^{l+1-N_h})c_0^{\frac{m}{2}}\notag\\
&\quad +
 (\|F_m^{l+1}(\beta_2)\|_{0,0}+\|\hat{F}_m^{l+1}(\beta_2)\|_{0,0})M^{-\frac{l+1}{2}}\beta_1^{-\frac{1}{2}}c_0^{\frac{m}{2}}.\notag
\end{align}
By applying the same estimation as  
\cite[\mbox{Lemma 4.1 (1)}]{K15} to
 \eqref{eq_free_0th_temperature_dependent_decomposition} and using  
\eqref{eq_UV_covariance_determinant}, 
\eqref{eq_UV_covariance_determinant_difference},
\eqref{eq_UV_bound},
\eqref{eq_UV_all_bound},
\eqref{eq_tree_sum},
\eqref{eq_artificial_free_part_bound},
\eqref{eq_UV_bound_difference},
\eqref{eq_UV_all_bound_difference},
\eqref{eq_tree_0th_difference},
\eqref{eq_tree_sum_difference},
\eqref{eq_artificial_free_part_bound_difference},  
\eqref{eq_free_0th_special_part_difference}
for $l'\in\{l+1,l+2,\cdots,N_h\}$ we observe that
\begin{align}
&\left|\frac{h}{N(\beta_1)}F_0^l(\beta_1)-\frac{h}{N(\beta_2)}F_0^l(\beta_2)\right|\label{eq_free_0th_difference}\\
&\le
 \left|\frac{h}{N(\beta_1)}F_0^{l+1}(\beta_1)-\frac{h}{N(\beta_2)}F_0^{l+1}(\beta_2)\right|\notag\\
&\quad + \left|\frac{h}{N(\beta_1)}T_0^{l+1}(\beta_1)-\frac{h}{N(\beta_2)}T_0^{l+1}(\beta_2)\right|\notag\\
&\quad
 +\sum_{m=2}^{2N_v}(M^{-l-1}+M^{l+1-N_h})c_0^{\frac{m}{2}}\notag\\
&\qquad\qquad\cdot (|F_m^{l+1}(\beta_1)-F_m^{l+1}(\beta_2)|_0+|\hat{F}_m^{l+1}(\beta_1)-\hat{F}_m^{l+1}(\beta_2)|_0)\notag\\
&\quad +\sum_{m=2}^{2N_v}M^{-\frac{l+1}{2}}\beta_1^{-\frac{1}{2}}c_0^{\frac{m}{2}}(\|\hat{F}_m^{l+1}(\beta_2)\|_{0,0}+\|{F}_m^{l+1}(\beta_2)\|_{0,0})\notag\\
&\quad + c\sum_{m=2}^{2N_v}2^m
 c_0^{\frac{m}{2}}\Bigg(|\hat{F}_m^{l+1}(\beta_1)-\hat{F}_m^{l+1}(\beta_2)|_0
+\beta_1^{-\frac{1}{2}}\sum_{a=1}^2\sum_{t=0}^1\|\hat{F}_m^{l+1}(\beta_a)\|_{0,t}\notag\\
&\qquad\qquad\qquad\qquad+|T_m^{l+1}(\beta_1)-T_m^{l+1}(\beta_2)|_0
+\beta_1^{-\frac{1}{2}}\sum_{a=1}^2\sum_{t=0}^1\|{T}_m^{l+1}(\beta_a)\|_{0,t}\Bigg)\notag\\
&\quad + c\sum_{m=2N_v+2}^{N(\beta_2)}2^m
 c_0^{\frac{m}{2}}\Bigg(|{J}_m^{l+1}(\beta_1)-{J}_m^{l+1}(\beta_2)|_0
+\beta_1^{-\frac{1}{2}}\sum_{a=1}^2\sum_{t=0}^1\|{J}_m^{l+1}(\beta_a)\|_{0,t}
\Bigg)\notag\\
&\le
 \left|\frac{h}{N(\beta_1)}F_0^{l+1}(\beta_1)-\frac{h}{N(\beta_2)}F_0^{l+1}(\beta_2)\right|\notag\\
&\quad + c^{N_v}\beta_1^{-\frac{1}{2}}\alpha^{-2}(M^{-\frac{l+1}{2}}+M^{-\frac{l+1}{N_v-1}}+M^{l+1-N_h})\notag\\
&\le  c^{N_v}\beta_1^{-\frac{1}{2}}\alpha^{-2}\sum_{j=l}^{N_h-1}(M^{-\frac{j+1}{2}}+M^{-\frac{j+1}{N_v-1}}+M^{j+1-N_h})\notag\\
&\le c^{N_v}\beta_1^{-\frac{1}{2}}\alpha^{-2}.\notag
\end{align}

By putting 
\eqref{eq_tree_0th_difference}, 
\eqref{eq_tree_sum_difference},
\eqref{eq_tree_sum_weight_difference},
\eqref{eq_UV_bound_free_preliminary_difference},
\eqref{eq_free_lower_half_bound_difference},
\eqref{eq_free_upper_half_bound_difference},
\eqref{eq_free_0th_difference} together we obtain that
\begin{align}
&\left|\frac{h}{N(\beta_1)}F^l_0(\beta_1)-\frac{h}{N(\beta_2)}F^l_0(\beta_2)\right|
+\left|\frac{h}{N(\beta_1)}T^l_0(\beta_1)-\frac{h}{N(\beta_2)}T^l_0(\beta_2)\right|\label{eq_UV_0th_bound_difference_induction}\\
&\le c^{N_v}\alpha^{-2}\beta_1^{-\frac{1}{2}},\notag\\
&\sum_{m=2}^{2N_v}c_0^{\frac{m}{2}}\alpha^m(|F_m^l(\beta_1)-F_m^l(\beta_2)|_0+|T_m^l(\beta_1)-T_m^l(\beta_2)|_0)\le
 c^{N_v}\alpha^{-2} \beta_1^{-\frac{1}{2}},\label{eq_UV_bound_difference_induction}\\
&M^{-\frac{N_v}{N_v-1}l}\sum_{m=2}^{N(\beta_2)}c_0^{\frac{m}{2}}\alpha^mM^{\frac{l}{2N_v-2}m}
(|F_m^l(\beta_1)-F_m^l(\beta_2)|_{0}+|T_m^l(\beta_1)-T_m^l(\beta_2)|_{0})\label{eq_UV_all_bound_difference_induction}\\
&\le (c^{N_v}\alpha^{-2}+cM\alpha^{-2}+c^{N_v}M^{-\frac{1}{N_v-1}})
\beta_1^{-\frac{1}{2}}.\notag
\end{align}
Recall that in the derivation of the above inequalities we assumed the
 conditions $\alpha\ge c$, $M\ge c^{N_v^2}$ with a constant $c\in
 \R_{>0}$ independent of any parameter including $c'$. If we start by the
 conditions \eqref{eq_UV_parameter_conditions} and
 \eqref{eq_interaction_amplitude_UV} with a sufficiently large constant
 $c'$, then the conditions $\alpha\ge c$, $M\ge c^{N_v^2}$ are
 satisfied,
the right-hand side of \eqref{eq_UV_0th_bound_difference_induction} is
 less than $\beta_1^{-1/2}\alpha^{-1}$ and the right-hand sides of
 \eqref{eq_UV_bound_difference_induction},
 \eqref{eq_UV_all_bound_difference_induction} are less than $\beta_1^{-1/2}$.
Therefore, we obtain
 \eqref{eq_UV_0th_bound_difference}, \eqref{eq_UV_bound_difference},
 \eqref{eq_UV_all_bound_difference} for $l$ on the assumptions
 \eqref{eq_UV_parameter_conditions}, \eqref{eq_interaction_amplitude_UV} with a generic constant $c'$, which
 does not depend on any parameter. The induction with $l\in
 \{0,1,\cdots,N_h\}$ now concludes the proof.
\end{proof}

\section{The infrared integration}\label{sec_IR}

In this section we perform the multi-scale analysis around the singular
point of the covariance in momentum space, namely the infrared
analysis. The output of the Matsubara ultra-violet integration is
substituted into the infrared integration as the initial data. So the
infrared integration is the second step of the whole multi-scale
integration process. Conservation of symmetries is essential to validate
the iteration of the integration. We have to keep track of the preserved
 symmetries as well as analyticity and scale-dependent bound
properties of Grassmann polynomials during the iteration. For this
purpose it is convenient to organize sets of Grassmann polynomials
having the relevant properties and define maps between these sets
resembling the real renormalization group maps in advance. We plan to do
so in the first subsection. In the second subsection we will complete
the proof of Theorem \ref{thm_main_theorem} by making use of the tools
developed in the preceding subsection. We should remark that in
principle one can reach our main result by combining the materials
prepared so far in this paper with calculations parallel to those
presented in \cite[\mbox{Section 7}]{K15}. Apart from proving the
theorem itself, this section is aimed at providing a more organized
construction of the infrared integration than the previous version
\cite[\mbox{Section 7}]{K15} so that the readers can confirm the validity of
the infrared integration more clearly.

Throughout this section we assume that 
$$
M\ge 2,\quad h\ge e^{4d},\quad L\ge \beta,
$$
unless stated otherwise.

\subsection{General lemmas}\label{subsec_general_lemma}
Let $n\in\N$ and let $D$ be a bounded domain of $\C^n$ satisfying that
$\overline{\bz}\in \overline{D}$ for any $\bz\in\overline{D}$, where
$\overline{D}$ denotes the closure of $D$. Set
\begin{align*}
&C\left(\overline{D};\bigwedge\cV\right)\\
&:=\left\{J\in\Map\left(\overline{D},\bigwedge\cV\right)\
 \big|\ \bU\mapsto J(\bU)(\psi)\text{ is continuous in }\overline{D}\right\},\\
&C^{\o}\left(D;\bigwedge\cV\right)\\
&:=\left\{J\in\Map\left(D,\bigwedge\cV\right)\
 \big|\ \bU\mapsto J(\bU)(\psi)\text{ is analytic in }D\right\}.
\end{align*}
See \cite[\mbox{Subsection 2.2}]{K15} for the meaning of continuity and
analyticity of Grassmann polynomials. We are going to define a subset of 
$C(\overline{D};\bigwedge\cV)\cap
C^{\o}(D;\bigwedge\cV)$ to which Grassmann polynomials dealt
in our infrared analysis belong. To describe symmetric properties of 
Grassmann polynomials, let us fix some
notational conventions. Let $S$ be a bijective map from $I$ to $I$ and $Q$ be a map
from $I$ to $\R$. The maps $S_m:I^m\to I^m$, $Q_m:I^m\to\R$ $(m\in\N)$
are defined by 
\begin{align*}
&S_m(X_1,X_2,\cdots,X_m):=(S(X_1),S(X_2),\cdots,S(X_m)), \\
&Q_m(X_1,X_2,\cdots,X_m):=\sum_{j=1}^mQ(X_j).
\end{align*}
For $f(\psi)=\sum_{m=0}^N(\frac{1}{h})^m\sum_{\bX\in
I^m}f_m(\bX)\psi_{\bX}\in \bigwedge\cV$, define $f(\cR \psi),\overline{f}(\psi)\in\bigwedge\cV$ by
\begin{align*}
&f(\cR\psi):=\sum_{m=0}^{N}\left(\frac{1}{h}\right)^m\sum_{\bX\in
 I^m}f_m(\bX)e^{iQ_m(S_m(\bX))}\psi_{S_m(\bX)},\\
&\overline{f}(\psi):=\sum_{m=0}^{N}\left(\frac{1}{h}\right)^m\sum_{\bX\in
 I^m}\overline{f_m(\bX)}\psi_{\bX}.
\end{align*}
In fact these notational rules have been introduced in \cite[\mbox{Subsection 3.3}]{K15}.
In addition, for $\bx\in\Z^d$ we let $r_L(\bx)$ denote a site of $\G(L)$
satisfying $\bx=r_L(\bx)$ in $(\Z/L\Z)^d$. 

Now, for parameters $c_0$, $\alpha\in\R_{\ge 1}$, $M\in\R_{\ge 2}$, $l\in\Z_{\le 0}$
we define the subset $\cS(D,c_0,\alpha,M)(l)$ of $C(\overline{D};\bigwedge\cV)\cap
C^{\o}(D;\bigwedge\cV)$ as follows. $J\in
C(\overline{D};\bigwedge\cV)$ $\cap
C^{\o}(D;\bigwedge\cV)$ belongs to $\cS(D,c_0,\alpha,M)(l)$
if and only if $J$ satisfies the following properties.
\begin{enumerate}[(i)]
\item\label{item_IR_boundedness}
\begin{align}
&\frac{h}{N}|J_0(\bU)|\le
 M^{(d+\frac{3}{2})l}\alpha^{-1},\label{eq_IR_boundedness_0th}\\
&M^{-(d+\frac{3}{2})l+tl}\sum_{m=2}^Nc_{0}^{\frac{m}{2}}\alpha^mM^{\frac{d}{2}lm}\|J_m(\bU)\|_{l,t}\le 1,\ (\forall \bU\in\overline{D},t\in \{0,1\}).\label{eq_IR_boundedness}
\end{align}
\item\label{item_IR_invariances}
$$
J(\bU)(\psi)=J(\bU)(\cR\psi),\ (\forall \bU\in \overline{D}),
$$
for each $S:I\to I$ and $Q:I\to \R$ defined as follows.
\begin{align}
&S((\rho,\bx,\s,x,\theta)):=(\rho,\bx,\s,x,\theta),\label{eq_IR_particle_hole}\\
&Q((\rho,\bx,\s,x,\theta)):=\frac{\pi}{2}\theta,\ (\forall
 (\rho,\bx,\s,x,\theta)\in I).\notag
\end{align}
\begin{align}
&S((\rho,\bx,\s,x,\theta)):=(\rho,\bx,\s,x,\theta),\label{eq_IR_spin_up}\\
&Q((\rho,\bx,\s,x,\theta)):= \pi 1_{\s=\ua },\ (\forall
 (\rho,\bx,\s,x,\theta)\in I).\notag
\end{align}
\begin{align}
&S((\rho,\bx,\s,x,\theta)):=(\rho,\bx,-\s,x,\theta),\label{eq_IR_spin_inversion}\\
&Q((\rho,\bx,\s,x,\theta)):= 0,\ (\forall
 (\rho,\bx,\s,x,\theta)\in I).\notag
\end{align}
\begin{align}
&S((\rho,\bx,\s,x,\theta)):=(\rho,r_L(\bx+\bz),\s,r_{\beta}(x+s),\theta),\label{eq_IR_translation}\\
&Q((\rho,\bx,\s,x,\theta)):= \pi n_{\beta}(r_{\beta}(x-s)+s),\ (\forall
 (\rho,\bx,\s,x,\theta)\in I),\notag
\end{align}
where $\bz\in \Z^d$ and $s\in (1/h)\Z$ are arbitrarily taken and fixed.
\begin{align}
&S((\rho,\bx,\s,x,\theta)):=(\rho,r_L(-\bx-b(\rho)),\s,x,\theta),\label{eq_IR_momentum_inversion}\\
&Q((\rho,\bx,\s,x,\theta)):=\theta\<\bx,(2\pi/L)\beps^L\>+\theta\<b(\rho),(\pi/L)\beps^L\>,\notag\\
&(\forall
 (\rho,\bx,\s,x,\theta)\in I).\notag
\end{align}
\item\label{item_IR_invariances_complex}
$$
J(\bU)(\psi)=\overline{J(\overline{\bU})}(\cR\psi),\ (\forall \bU\in \overline{D}),
$$
for each $S:I\to I$ and $Q:I\to \R$ defined as follows.
\begin{align}
&S((\rho,\bx,\s,x,\theta)):=(\rho,\bx,\s,r_{\beta}(-x),-\theta),\label{eq_IR_adjoint}\\
&Q((\rho,\bx,\s,x,\theta)):= \pi (1_{\theta=1}+1_{x\neq 0}),\ (\forall
 (\rho,\bx,\s,x,\theta)\in I).\notag
\end{align}
\begin{align}
&S((\rho,\bx,\s,x,\theta)):=(\rho,\bx,\s,x,-\theta),\label{eq_IR_inversion}\\
&Q((\rho,\bx,\s,x,\theta)):= \<b(\rho),\bpi\>,\ (\forall
 (\rho,\bx,\s,x,\theta)\in I).\notag
\end{align}
\end{enumerate}

Moreover, on the assumption \eqref{eq_basic_beta_h_assumption} we define
the subset $\hat{\cS}(D,c_0,\alpha,M)(l)$ of
$\cS(D,c_0,\alpha,M)(l)(\beta_1)\times \cS(D,c_0,\alpha,M)(l)(\beta_2)$
as follows. $(J(\beta_1),J(\beta_2))$ $\in
\cS(D,c_0,\alpha,M)(l)(\beta_1)\times \cS(D,c_0,\alpha,M)(l)(\beta_2)$
belongs to $\hat{\cS}(D,c_0,\alpha,M)(l)$ if and only if 
\begin{align}
&\left|\frac{h}{N(\beta_1)}J_0(\beta_1)(\bU)-\frac{h}{N(\beta_2)}J_0(\beta_2)(\bU)
\right|\le \beta_1^{-\frac{1}{2}}
 M^{(d+\frac{1}{2})l}\alpha^{-1},\label{eq_IR_boundedness_difference_0th}\\
&M^{-(d+\frac{1}{2})l}\sum_{m=2}^{N(\beta_2)}c_{0}^{\frac{m}{2}}\alpha^mM^{\frac{d}{2}lm}|J_m(\beta_1)(\bU)-J_m(\beta_2)(\bU)|_{l}\le \beta_{1}^{-\frac{1}{2}},\ (\forall \bU\in\overline{D}).\label{eq_IR_boundedness_difference}
\end{align}

We will later define a set designed to contain kernels of quadratic
Grassmann polynomials belonging to $\cS(D,c_0,\alpha,M)(l)$. Since one
criterion to be an element of the set involves cut-off functions for
the infrared integration, let us define the cut-off functions at this stage.
Set   
\begin{align*}
&f_{\bt}:=\\
&\frac{1}{4}\min_{p\in\{1,2,\cdots,d\}}t_p^2\left(1- 
\frac{1}{2}\max_{m\in\{1,2,\cdots,d\}}\left(\sum_{j=1}^{m-1}|1+e^{i\theta_{j,m}}|+\sum_{j=m+1}^d|1+e^{i\theta_{m,j}}|\right)\right),\\
&M_{IR}:=\frac{\sqrt{6}}{\pi}\left(\frac{\pi^2}{3}M_{UV}^2+d\right)^{\frac{1}{2}},\\
&N_{\beta}:=\min\Bigg\{\Bigg\lfloor\frac{\log\big(\frac{\pi}{\beta}\big(\frac{\pi}{\sqrt{3}}M_{IR}\big)^{-1}\big)}{\log
M}\Bigg\rfloor,0\Bigg\}.
\end{align*}
By the assumption \eqref{eq_hopping_amplitude_condition}, $f_{\bt}\le 1/4$.
Using the function $\phi$ introduced in Subsection
\ref{subsec_UV_covariance}, we define the functions $\chi_l:\R^{d+1}\to\R$
$(l\in \Z)$ by
\begin{align*}
&\chi_l(\o,\bk)\\
&:=\phi(M_{UV}^{-2}\o^2)\Bigg(\phi\left(M_{IR}^{-2}M^{-2(l+1)}\left(\o^2+f_{\bt}\sum_{j=1}^d|1+e^{i\frac{\pi}{L}\eps_j^L+ik_j}|^2\right)\right)\\
&\qquad\qquad\qquad\quad -\phi\left(M_{IR}^{-2}M^{-2l}\left(\o^2+f_{\bt}\sum_{j=1}^d|1+e^{i\frac{\pi}{L}\eps_j^L+ik_j}|^2\right)\right)\Bigg).
\end{align*}
We can check that 
\begin{align}
&\phi\left(M_{IR}^{-2}M^{-2}\left(\o^2+f_{\bt}\sum_{j=1}^d
|1+e^{i\frac{\pi}{L}\eps_j^L+ik_j}|^2\right)\right)=1,\notag\\
&(\forall \o\in
 \R\text{ with }\phi(M_{UV}^{-2}\o^2)\neq
 0,\bk\in\R^d),\notag\\
&\phi\left(M_{IR}^{-2}M^{-2N_{\beta}}\left(\o^2+f_{\bt}\sum_{j=1}^d
|1+e^{i\frac{\pi}{L}\eps_j^L+ik_j}|^2\right)\right)=0,\label{eq_IR_cut_off_vanish}\\
&(\forall \o\in
 \R\text{ with }|\o|\ge \pi/\beta,\bk\in\R^d).\notag
\end{align}
These equalities imply that
\begin{align*}
\sum_{l=0}^{N_{\beta}}\chi_l(\o,\bk)=\phi(M_{UV}^{-2}\o^2),\ (\forall
\o\in \cM,\bk\in \R^d).
\end{align*}
The support property of $\chi_l$ is described as follows.
\begin{align}
&\chi_l(\o,\bk)\left\{\begin{array}{ll}=0,&\text{if
	       }\left(\o^2+f_{\bt}\sum_{j=1}^d|1+e^{i\frac{\pi}{L}\eps_j^L+ik_j}|^2\right)^{\frac{1}{2}}\le
	       \frac{\pi}{\sqrt{6}}M_{IR}M^l,\\
\in [0,1],&\text{if
	       } \frac{\pi}{\sqrt{6}}M_{IR}M^l<
	       \left(\o^2+f_{\bt}\sum_{j=1}^d|1+e^{i\frac{\pi}{L}\eps_j^L+ik_j}|^2\right)^{\frac{1}{2}}\\
&\qquad\qquad\quad\ < \frac{\pi}{\sqrt{3}}M_{IR}M^{l+1},\\
=0,&\text{if
	       }\left(\o^2+f_{\bt}\sum_{j=1}^d|1+e^{i\frac{\pi}{L}\eps_j^L+ik_j}|^2\right)^{\frac{1}{2}}\ge
		     \frac{\pi}{\sqrt{3}}M_{IR}M^{l+1}.\end{array}
\right.\label{eq_IR_cut_off_description}
\end{align}
We define the functions $\chi_{\le l}:\R^{d+1}\to\R$ ($l\in\Z$ with
$l\ge N_{\beta}$), $\hat{\chi}_{\le m}:\R^{d+1}\to\R$ ($m\in\Z$) by 
\begin{align*}
&\chi_{\le l}(\o,\bk):=\sum_{j=l}^{N_{\beta}}\chi_j(\o,\bk),\\
&\hat{\chi}_{\le
 m}(\o,\bk)\\
&:=\phi(M_{UV}^{-2}\o^2)\phi\left(M_{IR}^{-2}M^{-2(m+1)}\left(\o^2+f_{\bt}\sum_{j=1}^d|1+e^{i\frac{\pi}{L}\eps_j^L+ik_j}|^2
\right)\right),\\
&(\forall (\o,\bk)\in\R^{d+1}).
\end{align*}
Here let us list properties of these cut-off functions for later
use. For simplicity we write $\partial/\partial k_0$ in place of the
differential operator $\partial/\partial \o$ in the following.
Note that the condition $L\ge \beta$ is necessary in the proof of the
item \eqref{item_IR_cut_off_support} of the next lemma.

\begin{lemma}\label{lem_IR_cut_off_properties}
\begin{enumerate}
\item\label{item_IR_cut_off_equivalence}
Assume that $0<\beta_1\le \beta_2$. Then,
\begin{align*}
&\chi_{\le l}(\beta_1)(\o,\bk)=\chi_{\le
 l}(\beta_2)(\o,\bk)=\hat{\chi}_{\le l}(\o,\bk),\\
&(\forall (\o,\bk)\in\R^{d+1}\text{ with }|\o|\ge \pi/\beta_1,\
 l\in\Z\text{ with }l\ge N_{\beta_1}).
\end{align*}
\item\label{item_IR_cut_off_large_matsubara}
\begin{align*}
\hat{\chi}_{\le l}(\o,\bk)=0,\ (\forall (\o,\bk)\in\R^{d+1}\text{ with
 }|\o|\ge \pi h,\ l\in\Z).
\end{align*}
\item\label{item_IR_cut_off_derivative}
There exists a constant $c_{\chi}\in\R_{>0}$ independent of any
     parameter such that
\begin{align*}
&\left|\left(\frac{\partial}{\partial k_j}\right)^n\hat{\chi}_{\le
 l}(\o,\bk)\right|,\ \left|\left(\frac{\partial}{\partial
 k_j}\right)^n\chi_{ l}(\o,\bk)\right|\le (c_{\chi}
 \fw(l)^{-1})^n(n!)^2,\\
&(\forall (\o,\bk)\in\R^{d+1},n\in \N\cup
 \{0\},j\in\{0,1,\cdots,d\},l\in\{0,-1,\cdots,N_{\beta}\}).
\end{align*}
\item\label{item_IR_cut_off_implication}
If there exists $(\o,\bk)\in\R^{d+1}$ with $|\o|\ge \pi/\beta$ such that 
$\hat{\chi}_{\le 0}(\o,\bk)\neq 0$, then 
$$
\frac{1}{\beta}\le
     M_{IR}M^{N_{\beta}+1}.$$
\item\label{item_IR_cut_off_support}
Assume that $1/\beta\le M_{IR}M^{N_{\beta}+1}$. Then, there exists a
     constant $c(M,d)\in\R_{>0}$ depending only on $M$, $d$ such that
the following inequalities hold for any $l\in\Z$ with $l\ge N_{\beta}$,
     $(\o',\bk')\in\R^{d+1}$.
\begin{align*}
&\frac{1}{\beta L^d}\sum_{(\o,\bk)\in \cM\times\G(L)^*}1_{\hat{\chi}_{\le
     l}(\o+\o',\bk+\bk')\neq 0}\le c(M,d)
 f_{\bt}^{-\frac{d}{2}}M^{(d+1)l},\\
&\int_{-\infty}^{\infty}d\o
\frac{1}{L^d}\sum_{\bk\in\G(L)^*}1_{\hat{\chi}_{\le
     l}(\o,\bk+\bk')\neq 0}\le c(M,d)
 f_{\bt}^{-\frac{d}{2}}M^{(d+1)l},\\
&\frac{1}{L^d}\sum_{\bk\in\G(L)^*}1_{\hat{\chi}_{\le
     l}(\o',\bk+\bk')\neq 0}\le c(M,d) f_{\bt}^{-\frac{d}{2}}M^{dl}.
\end{align*}
\end{enumerate}
\end{lemma}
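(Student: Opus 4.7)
My plan is to handle each of the five items in turn, using the telescoping structure of $\chi_{\le l}$ and the explicit support and derivative properties of the Gevrey cut-off $\phi$ recalled at the beginning of Subsection \ref{subsec_UV_covariance}. Throughout, it is convenient to abbreviate $\phi_j(\o,\bk):=\phi(M_{IR}^{-2}M^{-2j}(\o^2+f_{\bt}\sum_{m=1}^d|1+e^{i\pi \eps^L_m/L+ik_m}|^2))$, so that by definition
$\chi_j=\phi(M_{UV}^{-2}\o^2)(\phi_{j+1}-\phi_j)$ and $\hat{\chi}_{\le m}=\phi(M_{UV}^{-2}\o^2)\phi_{m+1}$. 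Summing telescopically, $\chi_{\le l}=\phi(M_{UV}^{-2}\o^2)(\phi_{l+1}-\phi_{N_{\beta}})$.

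For \eqref{item_IR_cut_off_equivalence}, I would first note that $\beta_1\le \beta_2$ forces $N_{\beta_2}\le N_{\beta_1}$, hence $M^{-2N_{\beta_2}}\ge M^{-2N_{\beta_1}}$. The hypothesis $|\o|\ge \pi/\beta_1$ together with \eqref{eq_IR_cut_off_vanish} applied at $\beta=\beta_1$ yields $\phi_{N_{\beta_1}}=0$; monotonicity of $\phi$ then gives $\phi_{N_{\beta_2}}=0$ as well (alternatively, apply \eqref{eq_IR_cut_off_vanish} at $\beta=\beta_2$, noting $|\o|\ge \pi/\beta_1\ge\pi/\beta_2$). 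Thus both $\chi_{\le l}(\beta_a)=\phi(M_{UV}^{-2}\o^2)\phi_{l+1}=\hat{\chi}_{\le l}$. Item \eqref{item_IR_cut_off_large_matsubara} is immediate from the support property $\phi(x)=0$ for $x\ge \pi^2/3$, since $|\o|\ge \pi h$ together with $h\ge e^{4d}\ge M_{UV}/\sqrt{3}$ (using $M_{UV}=(2\sqrt{6}/\pi)(2d+1)$) makes $M_{UV}^{-2}\o^2\ge\pi^2/3$. For \eqref{item_IR_cut_off_implication}, the condition $\hat{\chi}_{\le 0}(\o,\bk)\neq 0$ forces $M_{IR}^{-2}M^{-2}(\o^2+\cdots)<\pi^2/3$, in particular $|\o|<\pi M_{IR}M/\sqrt{3}$, so combined with $|\o|\ge\pi/\beta$ one gets $1/\beta<M_{IR}M/\sqrt{3}$; the definition of $N_{\beta}$ as a floor then gives either $N_{\beta}=0$ (and the claim reads $1/\beta\le M_{IR}M$) or $M^{N_{\beta}+1}>\sqrt{3}/(M_{IR}\beta)$, in both cases implying $1/\beta\le M_{IR}M^{N_{\beta}+1}$.

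For the derivative estimate \eqref{item_IR_cut_off_derivative} I would apply Faà di Bruno to the composition $\phi_{l+1}$. The argument of $\phi$ inside $\phi_{l+1}$ is $M_{IR}^{-2}M^{-2(l+1)}$ times $\o^2+f_{\bt}\sum_j|1+e^{i(\pi\eps^L_j/L+k_j)}|^2$, whose derivatives of any order in $k_j$ ($j\in\{0,1,\ldots,d\}$) are bounded uniformly in $(\o,\bk)$ by a constant (since $f_{\bt}\le 1/4$). Combining this with $|\phi^{(n)}|\le 2^n(n!)^2$ and collecting the resulting combinatorial sum yields bounds of the form $(c\,M^{-2(l+1)})^{n/2}(n!)^2$ on each $\partial_{k_j}^n\phi_{l+1}$; the same reasoning applies to $\partial^n\phi_l$ in the telescoping difference giving $\chi_l$. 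Since $\fw(l)=c_w(d+1)^{-2}M^{-2}M^l$, the ratio $M^{-l}/\fw(l)$ is a bounded quantity-dependent-on-$d$, so after absorbing the $(d+1)^{-2}$ into the constant one gets the claimed $(c_\chi\fw(l)^{-1})^n(n!)^2$. The $\phi(M_{UV}^{-2}\o^2)$ factor only improves the bound since its derivatives in $\o$ pick up $M_{UV}^{-1}\le 1$ factors independent of $l$.

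For \eqref{item_IR_cut_off_support} the key geometric observation is that the support of $\hat{\chi}_{\le l}$ is contained in the ellipsoidal region $\{(\o,\bk):\o^2+f_{\bt}\sum_j|1+e^{i(\pi\eps^L_j/L+k_j)}|^2\le \pi^2 M_{IR}^2 M^{2(l+1)}/3\}$, whose Lebesgue volume in $\R\times[-\pi,\pi]^d$ is bounded by $c(d)f_{\bt}^{-d/2}M^{(d+1)l}$. The three inequalities then follow by comparing this volume with the respective lattice/integration measures $\tfrac{1}{\beta L^d}\sum_{(\o,\bk)\in\cM\times\G(L)^*}$, $\int d\o\cdot \tfrac{1}{L^d}\sum_{\bk\in\G(L)^*}$, and $\tfrac{1}{L^d}\sum_{\bk\in\G(L)^*}$; the ratio of volume to cell-volume $(2\pi)^{d+1}/(\beta L^d)$ (respectively $(2\pi)^d/L^d$, etc.) produces the claimed counts. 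The hypothesis $1/\beta\le M_{IR}M^{N_\beta+1}$ is precisely what ensures that, for $l\ge N_\beta$, the support width exceeds one lattice spacing in the $\o$ direction, so that the counting does not degenerate to an over-count from a single lattice point. The anticipated main obstacle is the careful bookkeeping of the Faà di Bruno sum in \eqref{item_IR_cut_off_derivative}, and organising \eqref{item_IR_cut_off_support} so that the constants are uniform in $l\ge N_\beta$ near the infrared edge $l=N_\beta$; both are routine but must track the $M$, $d$, and $f_{\bt}$ dependence cleanly.
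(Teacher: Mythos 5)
Your treatment of items \eqref{item_IR_cut_off_equivalence}, \eqref{item_IR_cut_off_large_matsubara} and \eqref{item_IR_cut_off_implication} is correct and matches the paper, which for those items simply invokes \eqref{eq_IR_cut_off_vanish}, the support of $\phi$ together with $h\ge e^{4d}$, and \cite[Lemma 7.5]{K15}. For item \eqref{item_IR_cut_off_support} the volume-versus-lattice comparison is the paper's argument too; a minor point is that you only address degeneracy in the $\o$-direction, whereas the $\bk$-lattice spacing $2\pi/L$ must also be controlled against the support width $\sim M_{IR}M^{l+1}/\sqrt{f_{\bt}}$, which is where the standing assumption $L\ge\beta$ (together with $1/\beta\le M_{IR}M^{N_\beta+1}$) is used; the paper names both hypotheses explicitly.

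The genuine gap is in item \eqref{item_IR_cut_off_derivative}. You assert that the derivatives in $k_j$ ($j\in\{0,1,\dots,d\}$) of $u(\o,\bk):=\o^2+f_{\bt}\sum_{m}|1+e^{i\pi\eps^L_m/L+ik_m}|^2$ are ``bounded uniformly in $(\o,\bk)$ by a constant'', and that Fa\`a di Bruno then yields $(c\,M^{-2(l+1)})^{n/2}(n!)^2$. The first claim fails for $j=0$, since $\partial_\o u=2\o$ is unbounded. More seriously, even for $j\ge 1$, a uniform constant bound on $\partial_{k_j}^m u$ combined with $|\phi^{(m)}|\le 2^m(m!)^2$ yields, through Fa\`a di Bruno, bounds that scale at worst like $(c\,M_{IR}^{-2}M^{-2(l+1)})^{n}(n!)^2$ (all singleton blocks), not with the exponent $n/2$ that you wrote. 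This exponent is exactly what is at stake: one needs $|\partial_{k_j}^n\chi_l|\le(c_\chi\fw(l)^{-1})^n(n!)^2$ with $c_\chi$ parameter-free and $\fw(l)^{-1}=c_w^{-1}(d+1)^2M^{2-l}$, i.e.\ one factor of $M^{-l}$ per derivative, whereas the crude estimate carries $M^{-2l}$ per derivative and cannot be absorbed uniformly in $l$ (recall $l$ ranges down to $N_\beta\to-\infty$ as $\beta\to\infty$). The missing ingredient — which is why the paper's reference to \cite[Lemma 7.4]{K15} comes with the caveat ``use $f_{\bt}\le 1$'' — is the restriction to the support of the cut-off: wherever $\phi^{(m)}(g)\neq 0$ (with $g=M_{IR}^{-2}M^{-2(l+1)}u$) one has $g\le\pi^2/3$, hence $f_{\bt}|1+e^{i\xi_j}|^2\lesssim M_{IR}^2M^{2(l+1)}$ and $|\o|\lesssim M_{IR}M^{l+1}$. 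Since $|\partial_{k_j}|1+e^{i\xi_j}|^2|=2|\sin\xi_j|\le 2|1+e^{i\xi_j}|$, the first-order derivative of $g$ gains a square root, $|\partial_{k_j}g|\lesssim M_{IR}^{-1}M^{-(l+1)}\sqrt{f_{\bt}}$, and similarly $|\partial_\o g|=2|\o|M_{IR}^{-2}M^{-2(l+1)}\lesssim M_{IR}^{-1}M^{-(l+1)}$. It is this on-support improvement, together with $f_{\bt}\le 1$ for the higher-order derivatives, that produces the $n/2$ exponent; without it the bookkeeping you describe does not close.
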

\begin{proof}
\eqref{item_IR_cut_off_equivalence}: The claim follows from
 \eqref{eq_IR_cut_off_vanish}.

\eqref{item_IR_cut_off_large_matsubara}: The assumption $h\ge e^{4d}$
 and the support property of $\phi(M_{UV}^{-2}\o^2)$ imply that
 result.

\eqref{item_IR_cut_off_derivative}: The proof for this claim is
 essentially same as the proof for \cite[\mbox{Lemma 7.4}]{K15}.
Here we especially need to use the fact $f_{\bt}\le 1$.

\eqref{item_IR_cut_off_implication}: This was essentially proved in \cite[\mbox{Lemma 7.5}]{K15}.

\eqref{item_IR_cut_off_support}: We can derive the claimed inequalities
 from the support property of $\hat{\chi}_{\le l}$
and the assumptions $1/\beta\le M_{IR}M^{N_{\beta}+1}$, $L\ge \beta$.
\end{proof}

Set 
\begin{align*}
&C^{\infty}(\R^{d+1};\Mat(2^d,\C))\\
&:=\{f:\R^{d+1}\to \Mat(2^d,\C)\ |\ f(\cdot)(\rho,\eta)\in
 C^{\infty}(\R^{d+1};\C)\ (\forall \rho,\eta\in\cB)\}.
\end{align*}
Here we define the subset $\cK(D,\alpha,M)(l)$ of
$\Map(\overline{D},C^{\infty}(\R^{d+1};\Mat(2^d,\C)))$ which is designed to
contain kernels of relevant quadratic Grassmann polynomials.
Let $l\in \Z_{\le 0}$ and $c_{\chi}$ be the constant appearing in Lemma
\ref{lem_IR_cut_off_properties} \eqref{item_IR_cut_off_derivative}. 
$W\in \Map(\overline{D},C^{\infty}(\R^{d+1};\Mat(2^d,\C)))$ belongs to
$\cK(D,\alpha,M)(l)$ if and only if $W$ satisfies the following
conditions.
\begin{enumerate}[(i)]
\item\label{item_kernel_analyticity}
$\bU\mapsto W(\bU)(\o,\bk)(\rho,\eta)$ is continuous in $\overline{D}$,
     analytic in $D$ for any $(\o,\bk)\in\R^{d+1}$, $\rho,\eta\in\cB$.
\item\label{item_kernel_periodicity}
\begin{align*}
&W(\bU)(\o,\bk)=W(\bU)(\o,\bp),\\
&(\forall \bU\in \overline{D},\o\in\R,\bk,\bp\in\R^d\text{ with
 }\bk=\bp\text{ in }(\R/2\pi \Z)^d).
\end{align*}
\item\label{item_kernel_hermiticity}
\begin{align*}
&W(\bU)(\o,\bk)=W(\overline{\bU})(-\o,\bk)^*,\\
&(\forall \bU\in \overline{D},(\o,\bk)\in\cM\times((2\pi/L)\Z)^d).
\end{align*}
\item\label{item_kernel_inversion}
\begin{align*}
&U_d(\bpi)W(\bU)(\o,\bk)U_d(\bpi)^*=-W(\overline{\bU})(\o,\bk)^*,\\
&(\forall \bU\in \overline{D}, (\o,\bk)\in\cM\times((2\pi/L)\Z)^d).
\end{align*}
\item\label{item_kernel_momentum_inversion}
\begin{align*}
&U_d\left(\frac{\pi}{L}\beps^L\right)
U_d(\bk)W(\bU)\left(\o,-\bk-\frac{2\pi}{L}\beps^L\right)U_d(\bk)^*
U_d\left(\frac{\pi}{L}\beps^L\right)^*\\
&=W(\bU)(\o,\bk),\ (\forall \bU\in \overline{D},(\o,\bk)\in\cM\times((2\pi/L)\Z)^d).
\end{align*}
\item\label{item_kernel_derivative}
\begin{align}
&\prod_{j=0}^d\left(\sum_{n_j=0}^{\infty}\left(\frac{1}{2c_{\chi}+\pi}\right)^{n_j}
\frac{\fw(l)^{n_j}}{(2n_j)!}\right)
\left|\prod_{p=0}^d\left(\frac{\partial}{\partial
 k_p}\right)^{n_p}W(\bU)(\o,\bk)(\rho,\eta)\right|\label{eq_kernel_bound_derivative}\\
&\quad\cdot 1_{\sum_{q=0}^dn_q>0}\notag\\
&\le \alpha^{-2}M^l,\ (\forall \bU\in \overline{D},(\o,\bk)\in\R^{d+1},
 \rho,\eta\in\cB).\notag
\end{align}
\item\label{item_kernel_bound}
\begin{align}
&| 1_{\hat{\chi}_{\le j}(\o,\bk)\neq 0}W(\bU)(\o,\bk)(\rho,\eta) |\le
 \alpha^{-2}M^j,\label{eq_kernel_bound}\\
&(\forall \bU\in \overline{D},(\o,\bk)\in\R^{d+1},\rho,\eta\in\cB,
 j\in\Z\text{ with }j\ge N_{\beta}).\notag
\end{align}
\end{enumerate}

On the assumption \eqref{eq_basic_beta_h_assumption} we define the
subset $\hat{\cK}(D,\alpha,M)(l)$ of \\
$\cK(D,\alpha,M)(l)(\beta_1)\times\cK(D,\alpha,M)(l)(\beta_2)$ as
follows.
$(W(\beta_1),W(\beta_2))\in
\cK(D,\alpha,M)(l)(\beta_1)\times\cK(D,\alpha,M)(l)(\beta_2)$ belongs to
$\hat{\cK}(D,\alpha,M)(l)$ if and only if 
\begin{align}
&\prod_{j=0}^d\left(\sum_{n_j=0}^{\infty}\left(\frac{1}{2c_{\chi}+\pi^2}\right)^{n_j}
\frac{\fw(l)^{n_j}}{(2n_j)!}\right)\label{eq_kernel_bound_difference}\\
&\quad \cdot\left|\prod_{p=0}^d\left(\frac{\partial}{\partial
 k_p}\right)^{n_p}(W(\beta_1)(\bU)(\o,\bk)(\rho,\eta)-W(\beta_2)(\bU)(\o,\bk)(\rho,\eta))\right|\notag\\
&\le \beta_1^{-\frac{1}{2}}\alpha^{-2},\ (\forall \bU\in \overline{D},(\o,\bk)\in\R^{d+1},
 \rho,\eta\in\cB).\notag
\end{align}

Let us make an inequality which will enable us to substitute an
element of $\cK(D,\alpha,M)(l)$ into the denominator of the free
covariance.
\begin{lemma}\label{lem_IR_inverse_bound} There exists
a constant $c(d)\in\R_{>0}$ depending only on $d$ such that if
$\alpha\ge c(d)$, the following inequality holds for any $W\in \cK(D,\alpha,M)(l)$.
\begin{align*}
&\|(i\o I_{2^d}-\cE(\bk)-W(\bU)(\o,\bk))^{-1}\|_{2^d\times 2^d}\le
 M^{-l'},\\
&(\forall l'\in\Z\text{ with }l'\ge N_{\beta},(\o,\bk)\in\R^{d+1}\text{ satisfying }\chi_{l'}(\o,\bk)\neq
 0,\bU\in \overline{D}).
\end{align*}
\end{lemma}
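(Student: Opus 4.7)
The plan is to show that on the support of $\chi_{l'}$ the free piece $i\o I_{2^d}-\cE(\bk)$ is invertible with an operator norm lower bound proportional to $M^{l'}$, and the perturbation $W(\bU)(\o,\bk)$ is dominated by $M^{l'}$ with a prefactor that can be made arbitrarily small by enlarging $\alpha$; then a standard perturbation argument closes the proof.

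First I would unpack the cut-off support. By \eqref{eq_IR_cut_off_description}, $\chi_{l'}(\o,\bk)\neq 0$ forces
$$
\frac{\pi}{\sqrt{6}}M_{IR}M^{l'}<\left(\o^{2}+f_{\bt}\sum_{j=1}^{d}|1+e^{i\frac{\pi}{L}\eps_{j}^{L}+ik_{j}}|^{2}\right)^{\!1/2}<\frac{\pi}{\sqrt{3}}M_{IR}M^{l'+1}.
$$
Next I would obtain the free lower bound. Since $\cE(\bk)=E(-\beps^{L},-\btheta)(\bk)$ is hermitian, for any $v\in\C^{2^{d}}$,
$$
\|(i\o I_{2^d}-\cE(\bk))v\|_{\C^{2^d}}^{2}=\o^{2}\|v\|_{\C^{2^d}}^{2}+\|\cE(\bk)v\|_{\C^{2^d}}^{2}.
$$
Applying Lemma \ref{lem_hopping_properties}\eqref{item_hopping_lower} (squared) and using the definition of $f_{\bt}$ together with $|1+e^{-i\frac{\pi}{L}\eps_{j}^{L}-ik_{j}}|=|1+e^{i\frac{\pi}{L}\eps_{j}^{L}+ik_{j}}|$ gives $\|\cE(\bk)v\|^{2}\ge f_{\bt}\sum_{j=1}^{d}|1+e^{i\frac{\pi}{L}\eps_{j}^{L}+ik_{j}}|^{2}\|v\|^{2}$, so combined with the lower cut-off bound,
$$
\|(i\o I_{2^d}-\cE(\bk))v\|_{\C^{2^d}}\ge \frac{\pi}{\sqrt{6}}M_{IR}M^{l'}\|v\|_{\C^{2^d}}.
$$

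Then I would control $W$. The strict upper cut-off bound above implies $M_{IR}^{-2}M^{-2(l'+1)}(\o^{2}+f_{\bt}\sum|\ldots|^{2})<\pi^{2}/3$, which makes the second factor in $\hat\chi_{\le l'}$ nonzero; the first factor $\phi(M_{UV}^{-2}\o^{2})$ is nonzero as it is an explicit factor in $\chi_{l'}$. Hence $\hat\chi_{\le l'}(\o,\bk)\neq 0$, and condition \eqref{eq_kernel_bound} with $j=l'$ yields
$$
|W(\bU)(\o,\bk)(\rho,\eta)|\le \alpha^{-2}M^{l'},\qquad\forall\,\rho,\eta\in\cB,
$$
whence $\|W(\bU)(\o,\bk)\|_{2^d\times 2^d}\le 2^{d}\alpha^{-2}M^{l'}$ by the Hilbert--Schmidt bound on $2^{d}\times 2^{d}$ matrices.

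Finally I would combine the two estimates by the reverse triangle inequality:
$$
\|(i\o I_{2^d}-\cE(\bk)-W(\bU)(\o,\bk))v\|_{\C^{2^d}}\ge\left(\tfrac{\pi}{\sqrt{6}}M_{IR}-2^{d}\alpha^{-2}\right)M^{l'}\|v\|_{\C^{2^d}}.
$$
Recalling $\frac{\pi}{\sqrt{6}}M_{IR}=(\pi^{2}M_{UV}^{2}/3+d)^{1/2}=(8(2d+1)^{2}+d)^{1/2}$, this quantity depends only on $d$ and is $>1$ for every $d\ge 2$. Choosing $c(d):=\bigl(2^{d}/((\pi/\sqrt{6})M_{IR}-1)\bigr)^{1/2}$ guarantees that whenever $\alpha\ge c(d)$, the right-hand side is at least $M^{l'}\|v\|$, so $i\o I_{2^d}-\cE(\bk)-W(\bU)(\o,\bk)$ is invertible with $\|(i\o I_{2^d}-\cE(\bk)-W(\bU)(\o,\bk))^{-1}\|_{2^d\times 2^d}\le M^{-l'}$, as desired. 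The only genuinely non-routine step is the matching of the supports of $\chi_{l'}$ and $\hat\chi_{\le l'}$ needed to invoke \eqref{eq_kernel_bound} at the tight scale $j=l'$; everything else is a direct norm computation.
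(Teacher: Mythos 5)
Your proof is correct and follows essentially the same route as the paper's: both control the free part via Lemma \ref{lem_hopping_properties}\eqref{item_hopping_lower} on the support of $\chi_{l'}$, control $W$ via \eqref{eq_kernel_bound} at scale $j=l'$, and require $\alpha\ge c(d)$ to absorb the perturbation. The only cosmetic difference is that the paper closes the argument with a Neumann series for the inverse, while you use the reverse triangle inequality on $\|(i\o I-\cE-W)v\|$ with the exact identity $\|(i\o I-\cE)v\|^{2}=\o^{2}\|v\|^{2}+\|\cE v\|^{2}$; the two are equivalent in content, and your explicit check that $\chi_{l'}\neq0$ implies $\hat\chi_{\le l'}\neq0$ makes a step explicit that the paper leaves implicit.
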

\begin{proof}
By Lemma \ref{lem_hopping_properties} \eqref{item_hopping_lower} and
 \eqref{eq_IR_cut_off_description}, 
\begin{align*}
\|(i\o I_{2^d}-\cE(\bk))^{-1}\|_{2^d\times 2^d}&\le
 \left(\o^2+f_{\bt}\sum_{j=1}^d|1+e^{i\frac{\pi}{L}\eps_j^L+ik_j}|^2\right)^{-\frac{1}{2}}\\
&\le
 \frac{\sqrt{6}}{\pi}M_{IR}^{-1}M^{-l'}.
\end{align*}
Thus, by \eqref{eq_kernel_bound},
\begin{align*}
&\|(i\o I_{2^d}-\cE(\bk)-W(\bU)(\o,\bk))^{-1}\|_{2^d\times 2^d}\\
&\le \|(i\o I_{2^d}-\cE(\bk))^{-1}\|_{2^d\times 2^d}
\sum_{n=0}^{\infty}\|(i\o
 I_{2^d}-\cE(\bk))^{-1}W(\bU)(\o,\bk)\|_{2^d\times 2^d}^n\\
&\le
 \frac{\sqrt{6}}{\pi}M_{IR}^{-1}M^{-l'}\sum_{n=0}^{\infty}(c(d)\alpha^{-2})^n
\le M^{-l'}.
\end{align*}
In order to derive the last inequality, we used the condition $\alpha\ge
 c(d)$.
\end{proof}

At every step of the iterative IR integration we receive a Grassmann
polynomial from the preceding IR integration and substitute the
kernel of its quadratic term into the covariance. Our aim here is to
construct lemmas which justify this process.
Let us define the maps $r_{\beta}':[0,\beta)\to
[-\beta/2,\beta/2)$, $s_L:[0,L)\to [-L/2,L/2)$, $r_L':[0,L)^d\to
[-L/2,L/2)^d$ by
\begin{align*}
&r_{\beta}'(x):=\left\{\begin{array}{ll} x & \text{if }x\in
			\left[0,\frac{\beta}{2}\right),\\
                      x-\beta & \text{if }x\in
			\left[\frac{\beta}{2},\beta\right),\end{array}\right.\quad
 s_{L}(x):=\left\{\begin{array}{ll} x & \text{if }x\in
			\left[0,\frac{L}{2}\right),\\
                      x-L & \text{if }x\in
		  \left[\frac{L}{2},L\right),\end{array}\right.\\
&r_L'(\bx):=(s_L(x_1),s_L(x_2),\cdots,s_L(x_d)).
\end{align*}
Let $l\in\{0,-1,\cdots,N_{\beta}\}$ and $(J^0,J^{-1},\cdots,J^l)\in
\prod_{j=0}^l\cS(D,c_0,\alpha,M)(j)$. For $\bU\in \overline{D}$,
$(\o,\bk)\in\R^{d+1}$, $\rho,\eta\in \cB$, set  
\begin{align}
&W^j(\bU)(\o,\bk)(\rho,\eta)
:=\frac{2}{h}\sum_{(\bx,s)\in
 \G(L)\times[0,\beta)_h}e^{-i\<\bk,r_L'(\bx)\>-i\o
 r_{\beta}'(s)}(-1)^{n_{\beta}(r_{\beta}'(s))}\label{eq_effective_kernel_definition}\\
&\qquad\qquad\qquad\qquad\qquad\qquad\cdot J_2^j(\bU)((\rho,\bx,\ua,s,-1),(\eta,\b0,\ua,0,1)),\notag\\
&(j=0,-1,\cdots,l),\notag\\
&E_l(\bU)(\o,\bk)(\rho,\eta):=1_{\frac{1}{\beta}\le
 M_{IR}M^{N_{\beta}+1}}\sum_{j=0}^l\hat{\chi}_{\le
 j}(\o,\bk)W^j(\bU)(\o,\bk)(\rho,\eta).\label{eq_effective_self_energy_definition}
\end{align}
In fact $J^j$ mimics the output of the infrared integration at one
scale. The kernel of its quadratic part is characterized as in
\eqref{eq_effective_kernel_definition} and substituted into the
covariance. The covariance at scale $l$ contains a collection of the
kernels of the form \eqref{eq_effective_self_energy_definition}.
We are going to prove that $E_l\in \cK(D,\alpha,M)(l)$ and
$(E_l(\beta_1),E_l(\beta_2))$ $\in\hat{\cK}(D,\alpha,M)(l)$ on the
assumption \eqref{eq_basic_beta_h_assumption}, which is important information
for the validity of the process. We need the next lemma.
\begin{lemma}\label{lem_kernel_one_scale_bound}
\begin{enumerate}
\item\label{item_kernel_one_scale_bound}
\begin{align*}
&\prod_{j=0}^d\left(\sum_{n_j=0}^{\infty}\left(\frac{2}{\pi}\right)^{n_j}
\frac{\fw(l)^{n_j}}{(2n_j)!}\right)
\left|\prod_{p=0}^d\left(\frac{\partial}{\partial
 k_p}\right)^{n_p}W^l(\bU)(\o,\bk)(\rho,\eta)\right|\\
&\le
 2\|J_2^l(\bU)\|_{l,0},\quad (\forall \bU\in \overline{D},(\o,\bk)\in\R^{d+1},
 \rho,\eta\in\cB).
\end{align*}
\item\label{item_kernel_one_scale_bound_difference}
Assume that \eqref{eq_basic_beta_h_assumption} holds. Then,
\begin{align*}
&\prod_{j=0}^d\left(\sum_{n_j=0}^{\infty}\left(\frac{2}{\pi^2}\right)^{n_j}
\frac{\fw(l)^{n_j}}{(2n_j)!}\right)\\
&\quad\cdot\left|\prod_{p=0}^d\left(\frac{\partial}{\partial
 k_p}\right)^{n_p}(W^l(\beta_1)(\bU)(\o,\bk)(\rho,\eta)-W^l(\beta_2)(\bU)(\o,\bk)(\rho,\eta))\right|\\
&\le
 2|J_2^l(\beta_1)(\bU)-J_2^l(\beta_2)(\bU)|_{l}+\frac{4\pi}{\beta_1}\sum_{a=1}^2 \|J_2^l(\beta_a)(\bU)\|_{l,1},\\
&\ (\forall \bU\in \overline{D},(\o,\bk)\in\R^{d+1},
 \rho,\eta\in\cB).
\end{align*}
\end{enumerate}
\end{lemma}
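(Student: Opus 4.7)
The plan for (1) is to differentiate \eqref{eq_effective_kernel_definition} under the sum: each $\partial/\partial k_p$ for $p\ge 1$ pulls out a factor $-i r_L'(\bx)_p$ and $\partial/\partial k_0=\partial/\partial\omega$ pulls out $-i r_\beta'(s)$. The key elementary geometric bound is
\begin{align*}
|r_L'(\bx)_p|\le \tfrac{\pi}{2}\,d_p\bigl((\eta,\b0,\ua,0,1),(\rho,\bx,\ua,s,-1)\bigr),\qquad
|r_\beta'(s)|\le \tfrac{\pi}{2}\,d_0(\ldots),
\end{align*}
which follow from $|\sin y|\ge 2y/\pi$ on $[0,\pi/2]$ together with $L\ge\beta$ ensuring $|r_L'(\bx)_p|\le L/2$ and $|r_\beta'(s)|\le\beta/2$ are in the regime where this bound applies. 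With these in hand, for each tuple $(n_0,\dots,n_d)$ I get
\begin{align*}
\prod_{j=0}^d\left(\tfrac{2}{\pi}\right)^{n_j}\tfrac{\fw(l)^{n_j}}{(2n_j)!}\Bigl|\prod_p\bigl(\partial/\partial k_p\bigr)^{n_p}W^l(\bU)(\omega,\bk)(\rho,\eta)\Bigr|\le
\tfrac{2}{h}\sum_{\bx,s}\prod_{p=0}^d\tfrac{(\fw(l)d_p)^{n_p}}{(2n_p)!}|J_2^l(\bU)(\cdot)|.
\end{align*}
Summing over all $n_p\ge 0$ and using $\sum_n x^n/(2n)!=\cosh\sqrt{x}\le e^{\sqrt{x}}$ produces the exponential weight $e^{\sum_p(\fw(l)d_p)^{1/2}}$, and taking $X=(\eta,\b0,\ua,0,1)$ in the supremum defining $\|J_2^l\|_{l,0}$ yields the claimed bound $2\|J_2^l(\bU)\|_{l,0}$.

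For (2), the plan is to reparameterize the time sum in each $W^l(\beta_a)$ by $\tilde s:=r_\beta'(s)$, and use the translation invariance encoded in \eqref{eq_IR_translation} of $\cS(D,c_0,\alpha,M)(l)$ to absorb the sign $(-1)^{n_\beta(r_\beta'(s))}$ and rewrite $J_2^l(\beta_a)$ as $J_2^l(\beta_a)(R_{\beta_a}(\cdots))$. I then split the $\tilde s$-range at $|\tilde s|=\beta_1/4$: on the common window $|\tilde s|<\beta_1/4$, both $\beta_a$-sums run over exactly $[-\beta_1/4,\beta_1/4)_h$, and, after identifying via $R_{\beta_a}$, the difference of summands is precisely what is measured by $|J_2^l(\beta_1)-J_2^l(\beta_2)|_l$; outside that window I use the trivial bound $1\le (2\pi/\beta_1)d_0$ (since $d_0\ge 2|\tilde s|/\pi\ge\beta_1/(2\pi)$) to trade the absence of a difference for a $\beta_1^{-1}$ factor and estimate each $J_2^l(\beta_a)$ individually by $\|J_2^l(\beta_a)\|_{l,1}$.

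The sharper prefactor $(2/\pi^2)^{n_j}$ instead of $(2/\pi)^{n_j}$ in part (2) is exactly the slack needed to accommodate the extra $1/\pi$ inside the weight of $|\cdot|_l$: after the same derivative-plus-$\cosh$ estimate as in (1), the summand for the common window appears with weight $e^{\sum_p(\fw(l)d_p/\pi)^{1/2}}$, which on $|\tilde s|<\beta_1/4$ is bounded by $e^{\sum_p(\fw(l)\hat d_p/\pi)^{1/2}}$ because $d_0\le\hat d_0=|\tilde s|$ in that range (and $d_p=\hat d_p$ for $p\ge 1$). This finally produces $2|J_2^l(\beta_1)-J_2^l(\beta_2)|_l$ from the common window and $(4\pi/\beta_1)\sum_{a=1,2}\|J_2^l(\beta_a)\|_{l,1}$ from the complementary region.

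The main obstacle I anticipate is the bookkeeping in the last paragraph's second step: checking that the combined action of reparameterization by $\tilde s$, the $(-1)^{n_\beta(r_\beta'(s))}$ sign, and the invariance $J_2^l(\beta_a)(\bX)=(-1)^{N_{\beta_a}(\bX+s)}J_2^l(\beta_a)(R_{\beta_a}(\bX+s))$ (which must first be extracted from \eqref{eq_IR_translation}, \eqref{eq_IR_particle_hole}) really lines the common-window contributions up so that their difference is the anti-symmetric difference measured by $|\cdot|_l$, with no residual sign. Once this is verified, the quantitative estimates above fit together straightforwardly.
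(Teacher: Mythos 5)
Your proof of (1) is correct and is exactly the paper's argument: differentiating \eqref{eq_effective_kernel_definition} under the sum pulls out $(-i\,r_L'(\bx)_p)^{n_p}$ and $(-i\,r_\beta'(s))^{n_0}$, the Jordan inequality $|\sin u|\ge \tfrac{2}{\pi}|u|$ on $[-\pi/2,\pi/2]$ gives $|r_L'(\bx)_p|\le\tfrac{\pi}{2}d_p$ and $|r_\beta'(s)|\le\tfrac{\pi}{2}d_0$, and then $\sum_n x^n/(2n)!=\cosh\sqrt{x}\le e^{\sqrt{x}}$ produces the exponential weight in $\|\cdot\|_{l,0}$. One small inaccuracy: you do not need $L\ge\beta$ here. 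Since $r_L'(\bx)_p\in[-L/2,L/2)$ and $r_\beta'(s)\in[-\beta/2,\beta/2)$ by definition, the arguments of $\sin$ are automatically in $[-\pi/2,\pi/2)$, independently of any relation between $L$ and $\beta$.

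Your plan for (2) is also the paper's plan (reparameterize by $\tilde s = r_{\beta_a}'(s)$, split at $|\tilde s|=\beta_1/4$, measure the common window with $|\cdot|_l$ via the $\hat d_j$ weights, trade a $1\le (2\pi/\beta_1)d_0$ in the complement for a $\|\cdot\|_{l,1}$ term), and your accounting of why the extra $\pi$ in $(2/\pi^2)^{n_j}$ exactly absorbs the $1/\pi$ inside the $|\cdot|_l$ weight is correct. The one place I would push back is your anticipated obstacle about the sign $(-1)^{n_{\beta_a}(r_{\beta_a}'(s))}$: you do not need to invoke the $\cS$-invariance associated to \eqref{eq_IR_translation}, \eqref{eq_IR_particle_hole} at all. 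On the common window $|\tilde s|<\beta_1/4\le\beta_a/2$, we have $n_{\beta_1}(\tilde s)=n_{\beta_2}(\tilde s)$ ($=0$ if $\tilde s\ge 0$, $=-1$ if $\tilde s<0$), so the sign $(-1)^{n_{\beta_a}(\tilde s)}$ is literally the same for $a=1,2$, factors out of the difference, and disappears after taking absolute values. Moreover $(\rho,\bx,\ua,r_{\beta_a}(\tilde s),-1)=R_{\beta_a}((\rho,\bx,\ua,\tilde s,-1))$ by definition, so the two summands immediately line up with the quantity measured by $|\cdot|_l$; the bookkeeping is simpler than you feared.
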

\begin{proof}
\eqref{item_kernel_one_scale_bound}: Note that
\begin{align*}
&\prod_{j=0}^d\left(\left(\frac{2}{\pi}\right)^{n_j}
\frac{\fw(l)^{n_j}}{(2n_j)!}\right)
\left|\prod_{p=0}^d\left(\frac{\partial}{\partial
 k_p}\right)^{n_p}W^l(\bU)(\o,\bk)(\rho,\eta)\right|\\
&\le \frac{2}{h}\sum_{(\bx,s)\in\G(L)\times[0,\beta)_h}
\prod_{j=0}^d\left(\frac{\fw(l)^{n_j}}{(2n_j)!}d_j((\rho,\bx,\ua,s,-1),(\eta,\b0,\ua,0,1))^{n_j}\right)\\
&\quad\cdot
|J_2^{l}(\bU)((\rho,\bx,\ua,s,-1),(\eta,\b0,\ua,0,1))|.
\end{align*}
This inequality leads to the result.

\eqref{item_kernel_one_scale_bound_difference}:
\begin{align*}
&\prod_{j=0}^d\left(\left(\frac{2}{\pi^2}\right)^{n_j}
\frac{\fw(l)^{n_j}}{(2n_j)!}\right)\\
&\cdot\left|\prod_{p=0}^d\left(\frac{\partial}{\partial
 k_p}\right)^{n_p}(W^l(\beta_1)(\bU)(\o,\bk)(\rho,\eta)-W^l(\beta_2)(\bU)(\o,\bk)(\rho,\eta))\right|\\
&\le  \frac{2}{h}\sum_{(\bx,s)\in\G(L)\times[-\beta_1/4,\beta_1/4)_h}\\
&\quad\cdot \prod_{j=0}^d\left(\frac{1}{(2n_j)!}\left(\frac{\fw(l)}{\pi}\right)^{n_j}
\hat{d}_j((\rho,\bx,\ua,s,-1),(\eta,\b0,\ua,0,1))^{n_j}\right)\\
&\quad\cdot|J_2^{l}(\beta_1)(\bU)(R_{\beta_1}((\rho,\bx,\ua,s,-1),(\eta,\b0,\ua,0,1)))\\
&\qquad -J_2^{l}(\beta_2)(\bU)(R_{\beta_2}((\rho,\bx,\ua,s,-1),(\eta,\b0,\ua,0,1)))|\\
&\quad + \frac{2}{h}\sum_{a=1}^2\sum_{(\bx,s)\in\G(L)\times
 [0,\beta_a)_h}\left(\frac{2}{\pi}\right)^{n_0}\\
&\qquad \cdot\left(1_{s\in
 [\frac{\beta_1}{4},\frac{\beta_a}{2})}\frac{|s|^{n_0}}{(\frac{\beta_a}{2\pi}|e^{i\frac{2\pi}{\beta_a}s}-1|)^{n_0+1}}
+ 1_{s\in
 [\frac{\beta_a}{2},\beta_a-\frac{\beta_1}{4})}\frac{|s-\beta_a|^{n_0}}{(\frac{\beta_a}{2\pi}|e^{i\frac{2\pi}{\beta_a}s}-1|)^{n_0+1}}\right)\\
&\qquad\cdot |d_0(\beta_a)((\rho,\bx,\ua,s,-1),(\eta,\b0,\ua,0,1))|\\
&\qquad\cdot\prod_{j=0}^d\left(\frac{\fw(l)^{n_j}}{(2n_j)!}d_j(\beta_a)((\rho,\bx,\ua,s,-1),(\eta,\b0,\ua,0,1))^{n_j}\right)\\
&\qquad\cdot|J_2^{l}(\beta_a)(\bU)((\rho,\bx,\ua,s,-1),(\eta,\b0,\ua,0,1))|.
\end{align*}
The result follows from the above inequality.
\end{proof}

In the following $c_{\chi}$ is the constant appearing in Lemma
\ref{lem_IR_cut_off_properties} \eqref{item_IR_cut_off_derivative}.

\begin{lemma}\label{lem_self_energy_properties}
There exists a constant $c(d,M,c_w,c_{\chi})\in\R_{>0}$ depending only
 on $d,M,c_w,c_{\chi}$ such that if $c_0\ge
 c(d,M,c_w,c_{\chi}) f_{\bt}^{-\frac{1}{2}}$, the following statements
 hold.
\begin{enumerate}
\item\label{item_self_energy_property}
$$E_l\in \cK(D,\alpha,M)(l).$$ 
\item\label{item_self_energy_property_difference}
Assume that \eqref{eq_basic_beta_h_assumption} holds, $l\in
     \{0,-1,\cdots,N_{\beta_1}\}$ and $(J^j(\beta_1),J^j(\beta_2))$ $\in
     \hat{\cS}(D,c_0,\alpha,M)(j)$ $(j=0,-1,\cdots,l)$. Then,
$$(E_l(\beta_1),E_l(\beta_2))\in \hat{\cK}(D,\alpha,M)(l).$$
\end{enumerate}
\end{lemma}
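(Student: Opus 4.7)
The plan is to verify each of the seven defining properties of $\cK(D,\alpha,M)(l)$ for $E_l$ in turn, and then the additional difference bound \eqref{eq_kernel_bound_difference} for the pair $(E_l(\beta_1),E_l(\beta_2))$ in part (2). Properties (i)--(v) --- analyticity in $\bU$, $2\pi$-periodicity in $\bk$, hermiticity, $U_d(\bpi)$-inversion and momentum inversion --- are qualitative and are obtained by transporting the corresponding properties of each $J^j\in\cS(D,c_0,\alpha,M)(j)$ through the Fourier transform \eqref{eq_effective_kernel_definition} and the multiplication by $\hat\chi_{\le j}$. Analyticity carries through because \eqref{eq_effective_kernel_definition} is a finite linear combination of values $J_2^j(\bU)(\cdot)$. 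The three nontrivial symmetry conditions are direct translations: the hermiticity \eqref{item_kernel_hermiticity} follows from the $\cR$-invariance \eqref{eq_IR_adjoint}, the $U_d(\bpi)$-inversion \eqref{item_kernel_inversion} follows from \eqref{eq_IR_spin_up} combined with \eqref{eq_IR_inversion}, and the momentum inversion \eqref{item_kernel_momentum_inversion} follows from \eqref{eq_IR_momentum_inversion}. Each verification reduces to a change of variables in the sum \eqref{eq_effective_kernel_definition}, using that $\hat\chi_{\le j}$ is real-valued, $2\pi$-periodic in $\bk$, and even in $\o$.

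For the derivative bound \eqref{eq_kernel_bound_derivative}, I expand $\partial^{\mathbf{n}}(\hat\chi_{\le j}W^j)$ by Leibniz's rule. The cut-off factor is controlled by Lemma \ref{lem_IR_cut_off_properties}\eqref{item_IR_cut_off_derivative} (the mixed-derivative bound is obtained from the single-variable statement by applying the chain rule to the explicit form of $\hat\chi_{\le j}$), and the kernel factor by Lemma \ref{lem_kernel_one_scale_bound}\eqref{item_kernel_one_scale_bound} applied at scale $j$ rather than $l$. The specific coefficient $1/(2c_\chi+\pi)$ appearing in \eqref{eq_kernel_bound_derivative} is tuned precisely so that the Cauchy product of the two Taylor-type series --- one of ``radius'' $c_\chi\fw(j)^{-1}$ coming from the cut-off and the other of ``radius'' $(\pi/2)\fw(j)^{-1}$ coming from $W^j$ --- converges and can be majorised uniformly. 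Extracting the bound $\|J_2^j(\bU)\|_{j,0}\le c_0^{-1}\alpha^{-2}M^{\frac{3}{2}j}$ from \eqref{eq_IR_boundedness} with $m=2$, $t=0$ and using $\fw(j)^{-1}=\fw(l)^{-1}M^{l-j}$ to convert weight-$\fw(j)$ factors into weight-$\fw(l)$ factors (at the cost of powers of $M^{l-j}$), the worst contribution after summing $j\in\{l,l+1,\ldots,0\}$ geometrically scales like $M^l$, giving the target $\alpha^{-2}M^l$ provided $c_0\ge c(d,M,c_w,c_\chi)f_{\bt}^{-1/2}$.

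The pointwise bound \eqref{eq_kernel_bound} splits into two regimes. For $j'\ge 0$ I use $|W^j|\le 2\|J_2^j\|_{j,0}\le 2c_0^{-1}\alpha^{-2}M^{\frac{3}{2}j}$ from the $\mathbf{n}=\mathbf{0}$ case of Lemma \ref{lem_kernel_one_scale_bound}\eqref{item_kernel_one_scale_bound} together with geometric summation to get $|E_l|\le Cc_0^{-1}\alpha^{-2}\le\alpha^{-2}M^{j'}$. For $j'<0$ the essential ingredient is a symmetry-enforced vanishing $W^j(\bU)(0,\bk^*)=0$ at the singular point $\bk^*=\bpi-(\pi/L)\beps^L$ of $\cE$: combining hermiticity \eqref{item_kernel_hermiticity} with the $U_d(\bpi)$-inversion \eqref{item_kernel_inversion}, and noting that the momentum inversion \eqref{item_kernel_momentum_inversion} fixes $\bk^*$ (so that $U_d(\frac{\pi}{L}\beps^L)U_d(\bk^*)=U_d(\bpi)$), one deduces for real $\bU$ both $W(0,\bk^*)=-W(0,\bk^*)^*$ and $W(0,\bk^*)=W(0,\bk^*)^*$, which forces the vanishing. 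A first-order Taylor expansion around $(0,\bk^*)$, combined with the derivative bound from property (vi) and the support estimate $\mathrm{dist}((\o,\bk),(0,\bk^*))\lesssim f_{\bt}^{-1/2}M^{j'+1}$ on $\{\hat\chi_{\le j'}\ne 0\}$ from \eqref{eq_IR_cut_off_description}, then yields the required $\alpha^{-2}M^{j'}$. Part (2) is obtained in parallel: Lemma \ref{lem_IR_cut_off_properties}\eqref{item_IR_cut_off_equivalence} removes any cut-off mismatch so that only $W^j(\beta_1)-W^j(\beta_2)$ contributes, Lemma \ref{lem_kernel_one_scale_bound}\eqref{item_kernel_one_scale_bound_difference} bounds the latter in terms of $|J_2^j(\beta_1)-J_2^j(\beta_2)|_j$ (controlled via \eqref{eq_IR_boundedness_difference}) plus an $O(\beta_1^{-1}\|J_2^j\|_{j,1})$ remainder with $\|J_2^j\|_{j,1}\le c_0^{-1}\alpha^{-2}M^{j/2}$ from \eqref{eq_IR_boundedness} with $t=1$, and geometric summation gives the $\beta_1^{-1/2}\alpha^{-2}$ target. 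The main technical obstacle is the precise bookkeeping in (vi) --- in particular the Cauchy-product convolution of the two Taylor-series weights that motivates the specific coefficient $1/(2c_\chi+\pi)$ --- together with establishing and exploiting the symmetry-forced vanishing at $\bk^*$ for (vii).
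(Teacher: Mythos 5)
Your proposal is correct and follows essentially the same route as the paper's proof: transport of the $\cR$-invariances of $J_2^j$ through the finite Fourier sum \eqref{eq_effective_kernel_definition} for (i)--(v), Leibniz plus geometric summation in $j$ for (vi), and a symmetry-forced vanishing followed by first-order Taylor for (vii), with part (2) obtained in parallel via Lemma \ref{lem_kernel_one_scale_bound} \eqref{item_kernel_one_scale_bound_difference}.

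Two small bookkeeping points are worth flagging. First, you attach the requirement $c_0\ge c(d,M,c_w,c_\chi)f_{\bt}^{-1/2}$ to the derivative bound (vi); in fact (vi) only needs a dimension-dependent constant ($c_0\ge 2^{d+2}/(1-M^{-1/2})$ suffices) — the $f_{\bt}^{-1/2}$ enters only in (vii), through the support estimate $\mathrm{dist}((\o,\bk),(0,\bk^*))\lesssim f_{\bt}^{-1/2}M^{j'+1}$. Moreover, in the Taylor step you invoke ``the derivative bound from property (vi)'', i.e.\ $\alpha^{-2}M^l$; to close the constants in (vii) you must use the sharper intermediate estimate with the extra factor $c_0^{-1}$ (coming from $\|J_2^p\|_{p,0}\le c_0^{-1}\alpha^{-2}M^{3p/2}$), which is precisely what makes the condition $c_0\ge c\,f_{\bt}^{-1/2}$ sufficient. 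Second, the invariances (iii)--(v) of $\cK$ are stated only for $(\o,\bk)\in\cM\times((2\pi/L)\Z)^d$, and $\bk^*=\bpi-(\pi/L)\beps^L$ is not a momentum lattice point when $\beps^L\neq\mathbf{0}$; the paper avoids this by comparing the lattice points $\bpi$ and $\bpi-(2\pi/L)\beps^L$ (picking up an $O(1/L)$ phase correction). Your argument still works because $W^j(\bU)(\o,\cdot)(\rho,\eta)$ is a trigonometric polynomial of degree $<L$ in each $k$-variable, so an identity checked on $((2\pi/L)\Z)^d$ extends to all $\bk$ — but this interpolation observation should be made explicit rather than read off directly from the definition of $\cK$.
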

\begin{proof}
\eqref{item_self_energy_property}: It suffices to consider the case that
 $1/\beta\le M_{IR}M^{N_{\beta}+1}$. The continuity and analyticity with
 $\bU$ is clear. Let us prove the invariant properties. The
 periodicity claimed in \eqref{item_kernel_periodicity} follows from the
 definition. Since 
\begin{align*}
\hat{\chi}_{\le j}(\o,\bk)=\hat{\chi}_{\le j}(-\o,\bk)=\hat{\chi}_{\le
 j}\left(\o,-\bk-\frac{2\pi}{L}\beps^L\right),\quad (\forall (\o,\bk)\in\R^{d+1}),
\end{align*}
it is sufficient to confirm the invariances of $W^j$
 $(j\in\{0,-1,\cdots,l\})$ to prove the invariances of $E_l$.
The proof for the invariance of $W^j$ claimed in
 \eqref{item_kernel_hermiticity}, \eqref{item_kernel_inversion},
 \eqref{item_kernel_momentum_inversion} is parallel to the proof for\\ 
\cite[\mbox{Lemma 7.6 (2), (7.25), (7.26), (7.27)}]{K15} respectively.
Here we only provide the sketch of the proof. 
The invariance in \eqref{item_kernel_hermiticity} is proved by combining
 the anti-symmetry of $J_2^j(\bU)(\cdot)$, the invariance
 $J_2^j(\bU)(\psi)=J_2^j(\bU)(\cR\psi)$ for $S:I\to I$, $Q:I\to \R$
 defined in \eqref{eq_IR_translation} and the invariance
 $J_2^j(\bU)(\psi)=\overline{J_2^j(\overline{\bU})}(\cR\psi)$ for
 $S:I\to I$, $Q:I\to \R$ defined in \eqref{eq_IR_adjoint}. 
The invariance in \eqref{item_kernel_inversion} follows from the
 anti-symmetry of $J_2^j(\bU)(\cdot)$, the invariance $J_2^j(\bU)(\psi)=J_2^j(\bU)(\cR\psi)$ for $S:I\to I$, $Q:I\to \R$
 defined in \eqref{eq_IR_translation} and the invariance
 $J_2^j(\bU)(\psi)=\overline{J_2^j(\overline{\bU})}(\cR\psi)$ for
 $S:I\to I$, $Q:I\to \R$ defined in \eqref{eq_IR_inversion}.
The invariance in \eqref{item_kernel_momentum_inversion} is due to the
 invariance $J_2^j(\bU)(\psi)=J_2^j(\bU)(\cR\psi)$ for $S:I\to I$, $Q:I\to \R$
 defined in \eqref{eq_IR_translation} and \eqref{eq_IR_momentum_inversion}. 

Next let us show the bound property \eqref{item_kernel_derivative}. Take
 $n_0,n_1,\cdots,n_d\in\N\cup \{0\}$ satisfying $\sum_{j=0}^dn_j>0$. Using
 \eqref{eq_IR_boundedness}, Lemma \ref{lem_IR_cut_off_properties}
 \eqref{item_IR_cut_off_derivative} and Lemma
 \ref{lem_kernel_one_scale_bound} \eqref{item_kernel_one_scale_bound},
 we can derive that for any $(\o,\bk)\in\R^{d+1}$,
\begin{align}
&\left|\prod_{j=0}^d\left(\frac{\fw(l)^{n_j}}{(2n_j)!}\left(\frac{\partial}{\partial
 k_j}\right)^{n_j}\right)E_l(\o,\bk)(\rho,\eta)\right|\label{eq_self_energy_derivative_pre}\\
&\le
 M^{l}\sum_{p=0}^lM^{-p}\prod_{j=0}^d\left(\frac{\fw(p)^{n_j}}{(2n_j)!}\right)\prod_{q=0}^d\left(\sum_{m_q=0}^{n_q}\left(\begin{array}{c}
															 n_q\\ m_q\end{array}\right)\right)
\notag\\
&\quad\cdot\left|\prod_{r=0}^d\left(\frac{\partial}{\partial
 k_r}\right)^{m_r}\hat{\chi}_{\le p}(\o,\bk)\right|
\left|\prod_{s=0}^d\left(\frac{\partial}{\partial
 k_s}\right)^{n_s-m_s}W^p(\o,\bk)(\rho,\eta)\right|\notag\\
&\le  M^{l}\sum_{p=0}^lM^{-p}\prod_{j=0}^d\left(\frac{\fw(p)^{n_j}}{(2n_j)!}\right)\prod_{q=0}^d\left(\sum_{m_q=0}^{n_q}\left(\begin{array}{c}
															 n_q\\ m_q\end{array}\right)\right)\notag\\
&\quad\cdot \left|\prod_{r=0}^d(c_{\chi}\fw(p)^{-1})^{m_r}(m_r!)^2\right|
\left|\prod_{s=0}^d\left(\frac{\pi}{2}\fw(p)^{-1}\right)^{n_s-m_s}
(2(n_s-m_s))!\right|
 2\|J_2^p\|_{p,0}\notag\\
&\le  2M^{l}\sum_{p=0}^lM^{-p}\|J_2^p\|_{p,0}
\prod_{j=0}^d\left(\sum_{m_j=0}^{n_j}\left(\begin{array}{c}								 n_j\\ m_j\end{array}\right)
c_{\chi}^{m_j}\left(\frac{\pi}{2}\right)^{n_j-m_j}
\right)\notag\\
&\le
 \frac{2}{1-M^{-\frac{1}{2}}}\left(c_{\chi}+\frac{\pi}{2}\right)^{\sum_{j=0}^dn_j}c_0^{-1}\alpha^{-2}M^l,\notag
\end{align}
which implies that
\begin{align*}
&\prod_{j=0}^d\left(\sum_{n_j=0}^{\infty}\left(\frac{1}{2c_{\chi}+\pi}\right)^{n_j}
\frac{\fw(l)^{n_j}}{(2n_j)!}\right)
\left|\prod_{p=0}^d\left(\frac{\partial}{\partial
 k_p}\right)^{n_p}E_l(\bU)(\o,\bk)(\rho,\eta)\right|\\
&\quad\cdot 1_{\sum_{q=0}^dn_q>0}\notag\\
&\le \frac{2^{d+1}}{1-M^{-\frac{1}{2}}}c_0^{-1}\alpha^{-2}M^l
\le \frac{2^{d+1}}{1-2^{-\frac{1}{2}}}c_0^{-1}\alpha^{-2}M^l,\\
&(\forall \bU\in\overline{D},(\o,\bk)\in\R^{d+1},\rho,\eta\in\cB),
\end{align*}
where we used the condition $M\ge 2$.
Thus, if $c_0\ge 2^{d+2}/(1-2^{-\frac{1}{2}})$, $E_l$ satisfies the inequality in
 \eqref{item_kernel_derivative}.

It remains to prove the bound property \eqref{item_kernel_bound}. Take
 $\rho,\eta\in \cB$. If $e^{i\<b(\rho)-b(\eta),\bpi\>}=1$, the
 invariances \eqref{item_kernel_hermiticity},
 \eqref{item_kernel_inversion} imply that 
\begin{align*}
&E_l(\bU)(\o,\bk)(\rho,\eta)=-E_l(\bU)(-\o,\bk)(\rho,\eta),\\
&(\forall \bU\in \overline{D},(\o,\bk)\in\cM\times
 ((2\pi/L)\Z)^d).
\end{align*}
Thus, by \eqref{eq_self_energy_derivative_pre},
\begin{align}
&|E_l(\o,\bk)(\rho,\eta)|\label{eq_self_energy_bound_matsubara}\\
&\le
 \frac{1}{2}\left|E_l(\o,\bk)(\rho,\eta)-E_l\left(\frac{\pi}{\beta},\bk\right)(\rho,\eta)\right|\notag\\
&\quad +  \frac{1}{2}\left|E_l(\o,\bk)(\rho,\eta)-E_l\left(-\frac{\pi}{\beta},\bk\right)(\rho,\eta)\right|\notag\\
&\le
 c(d,M,c_{\chi},c_w)\left(|\o|+\frac{1}{\beta}\right)c_0^{-1}\alpha^{-2},\quad (\forall (\o,\bk)\in \cM\times ((2\pi/L)\Z)^d),\notag
\end{align}
where $c(d,M,c_{\chi},c_w)\in\R_{>0}$ is a constant depending only on
 $d,M,c_{\chi},c_w$.
If $e^{i\<b(\rho)-b(\eta),\bpi\>}=-1$, the invariance
 \eqref{item_kernel_momentum_inversion} and the periodicity
 \eqref{item_kernel_periodicity} yield that  
\begin{align*}
&E_l(\bU)(\o,\bpi)(\rho,\eta)=-e^{i\<b(\rho)-b(\eta),\frac{\pi}{L}\beps^L\>}
E_l(\bU)\left(\o,\bpi-\frac{2\pi}{L}\beps^L\right)(\rho,\eta),\\
&(\forall \bU\in \overline{D},\o\in\cM).
\end{align*}
Thus, by using the bound \eqref{eq_IR_boundedness}, Lemma
 \ref{lem_kernel_one_scale_bound} \eqref{item_kernel_one_scale_bound}
 and \eqref{eq_self_energy_derivative_pre} we have that
\begin{align}
|E_l(\o,\bk)(\rho,\eta)|
&\le \frac{1}{2}|E_l(\o,\bk)(\rho,\eta)-E_l(\o,\bpi)(\rho,\eta)|\label{eq_self_energy_bound_momentum}\\
&\quad +
 \frac{1}{2}\left|E_l(\o,\bk)(\rho,\eta)-E_l\left(\o,\bpi-\frac{2\pi}{L}\beps^L\right)(\rho,\eta)\right|\notag\\
&\quad+\frac{c(d)}{L}\left|E_l\left(\o,\bpi-\frac{2\pi}{L}\beps^L\right)(\rho,\eta)\right|\notag\\
&\le
 c(d,M,c_{\chi},c_w)\left(\sum_{j=1}^d|k_j-\pi|+\frac{1}{L}\right)c_0^{-1}\alpha^{-2},\notag\\
&(\forall (\o,\bk)\in\cM\times ((2\pi/L)\Z)^d).\notag
\end{align}
By coupling \eqref{eq_self_energy_bound_matsubara} with
 \eqref{eq_self_energy_bound_momentum} we obtain
\begin{align*}
&|E_l(\o,\bk)(\rho,\eta)|\le c(d,M,c_{\chi},c_w)\left(|\o|+\sum_{j=1}^d|k_j-\pi|+\frac{1}{\beta}+\frac{1}{L}\right)c_0^{-1}\alpha^{-2},\\
&(\forall (\o,\bk)\in\cM\times
 ((2\pi/L)\Z)^d,\rho,\eta\in\cB).
\end{align*}
For any $(\o,\bk)\in\R^{d+1}$ there exists $(\hat{\o},\hat{\bk})\in\cM\times
 ((2\pi/L)\Z)^d$ such that $|\o-\hat{\o}|\le \pi/\beta$,
 $\|\bk-\hat{\bk}\|_{\R^d}\le \sqrt{d}\pi/L$. By taking into account
 this fact,
 we can deduce from the above inequality for $(\hat{\o},\hat{\bk})$ and
 \eqref{eq_self_energy_derivative_pre} that 
\begin{align*}
&|E_l(\o,\bk)(\rho,\eta)|\le c(d,M,c_{\chi},c_w)\left(|\o|+\sum_{j=1}^d|k_j-\pi|+\frac{1}{\beta}+\frac{1}{L}\right)c_0^{-1}\alpha^{-2},\\
&(\forall (\o,\bk)\in\R^{d+1},\rho,\eta\in\cB).
\end{align*}
Then, by using the periodicity of $E_l(\o,\bk)$ with $\bk$, the support
 property of $\hat{\chi}_{\le j}$, the assumption $1/L\le 1/\beta \le
 M_{IR}M^{N_{\beta}+1}$ and the fact $f_{\bt}\le 1$ we reach the
 inequality that
\begin{align*}
&|E_l(\bU)(\o,\bk)(\rho,\eta)|\le
 c(d,M,c_{\chi},c_w)f_{\bt}^{-\frac{1}{2}}c_0^{-1}\alpha^{-2}M^j,\\
&(\forall \bU\in\overline{D},j\in\Z\text{ with }j\ge N_{\beta},\\
&\quad(\o,\bk)\in\R^{d+1}\text{
 satisfying }\hat{\chi}_{\le j}(\o,\bk)\neq 0,\rho,\eta\in\cB).
\end{align*}
Thus, $E_l$ satisfies the property \eqref{item_kernel_bound} under the
 assumption that $c_0 \ge c(d,M,c_{\chi},c_w)f_{\bt}^{-\frac{1}{2}}$.

\eqref{item_self_energy_property_difference}: Take
 $n_0,n_1,\cdots,n_d\in \N\cup\{0\}$. By \eqref{eq_IR_boundedness},
\eqref{eq_IR_boundedness_difference}, Lemma
 \ref{lem_IR_cut_off_properties} \eqref{item_IR_cut_off_derivative}, 
Lemma \ref{lem_kernel_one_scale_bound}
 \eqref{item_kernel_one_scale_bound_difference} and the assumption $M\ge
 2$,
\begin{align*}
&\Bigg|\prod_{j=0}^d\left(\frac{\fw(l)^{n_j}}{(2n_j)!}\left(\frac{\partial}{\partial
 k_j}\right)^{n_j}\right)(E_l(\beta_1)(\o,\bk)(\rho,\eta)-E_l(\beta_2)(\o,\bk)(\rho,\eta))\Bigg|\\
&\le
 \sum_{p=0}^l\prod_{j=0}^d\left(\frac{\fw(l)^{n_j}}{(2n_j)!}\sum_{m_j=0}^{n_j}
\left(\begin{array}{c} n_j\\ m_j\end{array}\right)\right)
\left|\prod_{q=0}^d\left(\frac{\partial}{\partial
 k_q}\right)^{m_q}\hat{\chi}_{\le p}(\o,\bk)\right|\\
&\quad\cdot\left|\prod_{r=0}^d\left(\frac{\partial}{\partial
 k_r}\right)^{n_r-m_r}(W^p(\beta_1)(\o,\bk)(\rho,\eta)-W^p(\beta_2)(\o,\bk)(\rho,\eta))\right|\\
&\le\sum_{p=0}^l\prod_{j=0}^d\left(\sum_{m_j=0}^{n_j}\left(\begin{array}{c} n_j\\ m_j\end{array}\right)
c_{\chi}^{m_j}\left(\frac{\pi^2}{2}\right)^{n_j-m_j}\right)\\
&\quad\cdot
 \left(2|J_2^p(\beta_1)-J_2^p(\beta_2)|_p+\frac{4\pi}{\beta_1}\sum_{a=1}^2\|J_2^p(\beta_a)\|_{p,1}\right)\\
&\le \left(c_{\chi}+\frac{\pi^2}{2}\right)^{\sum_{j=0}^dn_j}\sum_{p=0}^l(2\beta_1^{-\frac{1}{2}}c_0^{-1}\alpha^{-2}M^{\frac{p}{2}}+4\pi
 \beta_1^{-1}c_0^{-1}\alpha^{-2}M^{\frac{p}{2}})\\
&\le \left(c_{\chi}+\frac{\pi^2}{2}\right)^{\sum_{j=0}^dn_j}\frac{2+4\pi}{1-2^{-\frac{1}{2}}}c_0^{-1}\alpha^{-2}\beta_1^{-\frac{1}{2}}.
\end{align*} 
This inequality implies that
\begin{align*}
&\prod_{j=0}^d\left(\sum_{n_j=0}^{\infty}\left(\frac{1}{2c_{\chi}+\pi^2}\right)^{n_j}\frac{\fw(l)^{n_j}}{(2n_j)!}\right)\\
&\quad\cdot\left|\prod_{p=0}^d\left(\frac{\partial}{\partial
 k_j}\right)^{n_p}(E_l(\beta_1)(\bU)(\o,\bk)(\rho,\eta)-E_l(\beta_2)(\bU)(\o,\bk)(\rho,\eta))\right|\\
&\le
 \frac{2^{d+1}(2+4\pi)}{1-2^{-\frac{1}{2}}}c_0^{-1}\alpha^{-2}\beta_1^{-\frac{1}{2}},\quad (\forall \bU\in \overline{D},\rho,\eta\in\cB,
 (\o,\bk)\in\R^{d+1}).
\end{align*}
Thus, the claim holds true if $c_0\ge 2^{d+1}(2+4\pi)/(1-2^{-\frac{1}{2}})$.
\end{proof}

In the next lemma we summarize properties of a function of
$\bU\in\overline{D}$ which resembles the final output of the infrared
integration. In the following $C^{\o}(D;\C)$ denotes the set of
analytic functions in $D$. 

\begin{lemma}\label{lem_IR_final_output}
Assume that $l\in \{0,-1,\cdots,N_{\beta}\}$, $G_l\in
 \cK(D,\alpha,M)(l)$, $G_{l+1}\in
 \cK(D,\alpha,M)(l+1)$ if $l\le -1$, $G_{l+1}=0$ if $l=0$. 
Moreover, assume that 
\begin{align*}
&G_l(\bU)(\o,\bk)-G_{l+1}(\bU)(\o,\bk)=O,\\
&(\forall \bU\in \overline{D},(\o,\bk)\in\R^{d+1}\text{ with
 }\hat{\chi}_{\le l}(\o,\bk)=0).
\end{align*}
Define the function
 $H_l:\overline{D}\to \C$ by 
\begin{align*}
&H_l(\bU)\\
&:=\frac{1}{\beta L^d}\sum_{(\o,\bk)\in\cM\times\G(L)^*}\log(\det(I_{2^d}-(i\o
 I_{2^d}-\cE(\bk)-G_{l+1}(\bU)(\o,\bk))^{-1}\\
&\qquad\qquad\qquad\qquad\qquad\qquad\qquad \cdot (G_{l}(\bU)(\o,\bk)-G_{l+1}(\bU)(\o,\bk)))).
\end{align*}
Then, there exist a constant $c(d)\in\R_{>0}$ depending only on $d$ and
 a constant $c(d,M,c_w,c_{\chi})\in\R_{>0}$ depending only on
 $d,M,c_w,c_{\chi}$ such that the following statements hold true if
 $\alpha \ge c(d)$.
\begin{enumerate}
\item\label{item_IR_final_analyticity}
$$
H_l\in C(\overline{D};\C)\cap C^{\o}(D;\C).
$$ 
\item\label{item_IR_final_bound}
\begin{align*}
|H_l(\bU)|\le
 c(d,M,c_{w},c_{\chi})f_{\bt}^{-\frac{d}{2}}M^{(d+1)l}\alpha^{-2},\quad
 (\forall \bU\in \overline{D}).
\end{align*}
\item\label{item_IR_final_bound_difference}
In addition, assume that \eqref{eq_basic_beta_h_assumption} holds, $l\in
     \{0,-1,\cdots,N_{\beta_1}\}$, \\
$(G_l(\beta_1),G_l(\beta_2))\in
     \hat{\cK}(D,\alpha,M)(l)$ and  \\
$(G_{l+1}(\beta_1),G_{l+1}(\beta_2))\in
     \hat{\cK}(D,\alpha,M)(l+1)$ if $l\le -1$. Then, 
\begin{align*}
|H_l(\beta_1)(\bU)-H_l(\beta_2)(\bU)|\le
 c(d,M,c_{w},c_{\chi})\beta_1^{-\frac{1}{2}}
f_{\bt}^{-\frac{d}{2}}M^{dl}\alpha^{-2},\quad
 (\forall \bU\in \overline{D}).
\end{align*}
\end{enumerate}
\end{lemma}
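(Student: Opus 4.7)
The plan is to write the log-det as a convergent power series, bound each term pointwise in $(\o,\bk)$ via the kernel and covariance bounds already established, and then exploit the support restriction imposed by the assumption $G_l-G_{l+1}\equiv 0$ outside $\mathrm{supp}\,\hat{\chi}_{\le l}$ together with Lemma \ref{lem_IR_cut_off_properties} \eqref{item_IR_cut_off_support} to gain the overall factor $M^{(d+1)l}f_{\bt}^{-d/2}$. For each fixed $(\o,\bk)$ with $\hat{\chi}_{\le l}(\o,\bk)\neq 0$ there exists $l'\in\{N_{\beta},\dots,l\}$ with $\chi_{l'}(\o,\bk)\neq 0$, so Lemma \ref{lem_IR_inverse_bound} (applied to $W=G_{l+1}\in\cK(D,\alpha,M)(l+1)\subset\cK(D,\alpha,M)(l)$, using monotonicity in the scale index that follows from \eqref{eq_kernel_bound_derivative}, \eqref{eq_kernel_bound}) gives $\|(i\o I_{2^d}-\cE(\bk)-G_{l+1}(\bU)(\o,\bk))^{-1}\|_{2^d\times 2^d}\le M^{-l'}$. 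On the same point \eqref{eq_kernel_bound} gives $|G_l(\bU)(\o,\bk)(\rho,\eta)-G_{l+1}(\bU)(\o,\bk)(\rho,\eta)|\le 2\alpha^{-2}M^{l'}$ entrywise, hence the operator norm of the product $A(\bU)(\o,\bk):=(i\o I_{2^d}-\cE(\bk)-G_{l+1})^{-1}(G_l-G_{l+1})$ is bounded by $2^{d+1}\alpha^{-2}$. Choosing $\alpha\ge c(d)$ large enough we get $\|A\|_{2^d\times 2^d}\le 1/2$ uniformly, so
\begin{equation*}
\log\det(I_{2^d}-A)=-\sum_{n=1}^{\infty}\frac{1}{n}\Tr(A^n)
\end{equation*}
converges absolutely and $|\log\det(I_{2^d}-A)|\le 2^{d+1}\|A\|_{2^d\times 2^d}\le 2^{2d+2}\alpha^{-2}$.

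For item \eqref{item_IR_final_analyticity} continuity and analyticity of $H_l$ follow from the uniform convergence of the series for fixed $(\o,\bk)$ in $\bU\in\overline{D}$ (resp.\ $D$), combined with the fact that the finite Matsubara-momentum sum preserves continuity/analyticity. (One checks $\bU\mapsto (i\o I_{2^d}-\cE(\bk)-G_{l+1}(\bU)(\o,\bk))^{-1}$ is analytic via the Neumann series around the free inverse using property \eqref{item_kernel_analyticity} of $\cK(D,\alpha,M)(l+1)$.) For item \eqref{item_IR_final_bound} the above pointwise estimate yields
\begin{equation*}
|H_l(\bU)|\le \frac{1}{\beta L^d}\cdot\#\{(\o,\bk)\in\cM\times\G(L)^*:\hat{\chi}_{\le l}(\o,\bk)\neq 0\}\cdot 2^{2d+2}\alpha^{-2},
\end{equation*}
and Lemma \ref{lem_IR_cut_off_properties} \eqref{item_IR_cut_off_support} (applicable since the item \eqref{item_IR_cut_off_implication} ensures $1/\beta\le M_{IR}M^{N_{\beta}+1}$ whenever the sum is non-empty) gives the claimed $f_{\bt}^{-d/2}M^{(d+1)l}$ factor.

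For item \eqref{item_IR_final_bound_difference} I split
\begin{align*}
H_l(\beta_1)(\bU)-H_l(\beta_2)(\bU)
&=\Bigl(\tfrac{1}{\beta_1L^d}-\tfrac{1}{\beta_2L^d}\Bigr)\sum_{\cM(\beta_1)\times\G(L)^*}(\cdots)_{\beta_1}\\
&\quad+\tfrac{1}{\beta_2L^d}\sum_{\cM(\beta_1)\times\G(L)^*}\big[(\cdots)_{\beta_1}-(\cdots)_{\beta_2}\big]\\
&\quad+\tfrac{1}{\beta_2L^d}\Bigl(\sum_{\cM(\beta_1)\times\G(L)^*}-\sum_{\cM(\beta_2)\times\G(L)^*}\Bigr)(\cdots)_{\beta_2},
\end{align*}
where $(\cdots)_{\beta_a}$ denotes the log-det integrand at inverse temperature $\beta_a$. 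The first term is bounded by $|\beta_1^{-1}-\beta_2^{-1}|\cdot\beta_1\cdot f_{\bt}^{-d/2}M^{(d+1)l}\alpha^{-2}\lesssim \beta_1^{-1}f_{\bt}^{-d/2}M^{(d+1)l}\alpha^{-2}$ which is sharper than needed. The pointwise comparison in the second term is done by expanding both log-det's in the Neumann series and using \eqref{eq_kernel_bound_difference} for $(G_l(\beta_1),G_l(\beta_2))$ and $(G_{l+1}(\beta_1),G_{l+1}(\beta_2))$; this produces a pointwise bound of order $\beta_1^{-1/2}\alpha^{-2}$ which, together with the support count, gives the claim. The third (Matsubara-mismatch) term is the main obstacle: one replaces each Matsubara sum by the continuous $\o$-integral of the same integrand restricted to support of $\hat\chi_{\le l}$ (which by Lemma \ref{lem_IR_cut_off_properties} \eqref{item_IR_cut_off_equivalence} is literally the same cut-off for both temperatures once $|\o|\ge\pi/\beta_1$) and controls the discretization error via an Euler--Maclaurin type estimate, where the required smoothness of the integrand with respect to $\o$ is supplied by the derivative bound \eqref{eq_kernel_bound_derivative} for $G_l,G_{l+1}$ and the cut-off derivative bound Lemma \ref{lem_IR_cut_off_properties} \eqref{item_IR_cut_off_derivative}. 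Since $\fw(l)=c_w(d+1)^{-2}M^{-2}M^l$, each $\o$-derivative costs a factor $M^{-l}$, so the Euler--Maclaurin correction at step $2\pi/\beta_a$ is of size $\beta_a^{-1}M^{-l}$ per frequency, yielding total size $\beta_1^{-1}f_{\bt}^{-d/2}M^{dl}\alpha^{-2}\le\beta_1^{-1/2}f_{\bt}^{-d/2}M^{dl}\alpha^{-2}$. The hardest bookkeeping is ensuring that after summing over $\bk$ and integrating/summing over $\o$ one loses exactly one factor $M^l$ relative to the bound in \eqref{item_IR_final_bound}; this is where the factor $M^{dl}$ (rather than $M^{(d+1)l}$) appears, and it comes from the fact that the $\o$-integral over the support of $\hat{\chi}_{\le l}$ contributes $M^l$, which is absorbed by one derivative gained in the Euler--Maclaurin step. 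This part of the argument is technically the most delicate and will be carried out following the proof structure of the analogous estimate in \cite[Lemma 7.13 (3),(4)]{K15}.
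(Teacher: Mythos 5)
Your treatment of items \eqref{item_IR_final_analyticity} and \eqref{item_IR_final_bound} is essentially the paper's: the uniformly convergent expansion of the log-det at each $(\o,\bk)$ (you use $-\sum_n \frac{1}{n}\Tr(A^n)$, the paper uses the power series of $\log$ around $\det(\cdot)=1$; both work), combined with the operator-norm bounds from Lemma \ref{lem_IR_inverse_bound} and \eqref{eq_kernel_bound}, the support assumption on $G_l-G_{l+1}$, and the support count of Lemma \ref{lem_IR_cut_off_properties} \eqref{item_IR_cut_off_support}. Your appeal to a monotonicity $\cK(D,\alpha,M)(l+1)\subset\cK(D,\alpha,M)(l)$ is unnecessary: the conclusion of Lemma \ref{lem_IR_inverse_bound} already holds for all scales $l'\ge N_{\beta}$ for any $W$ in $\cK(D,\alpha,M)(l_W)$, and the proof of that lemma uses only \eqref{eq_kernel_bound}, which carries no scale index.

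For item \eqref{item_IR_final_bound_difference}, however, your decomposition has a genuine gap. In your first term you claim $|\beta_1^{-1}-\beta_2^{-1}|\cdot\beta_1\lesssim\beta_1^{-1}$, but in fact $|\beta_1^{-1}-\beta_2^{-1}|\cdot\beta_1 = 1-\beta_1/\beta_2\le 1$, which is $O(1)$, not $O(\beta_1^{-1})$, when $\beta_2\gg\beta_1$. Thus the bound you get for that term alone is $O(1)\cdot f_{\bt}^{-d/2}M^{(d+1)l}\alpha^{-2}$, which delivers neither the $\beta_1^{-1/2}$ decay nor the additional $M^{-l}$ gain the claim requires (for $l$ close to $0$ and $\beta_1$ large, $M^{(d+1)l}\gg\beta_1^{-1/2}M^{dl}$). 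The first and third terms of your split do cancel to leading order, but an additive bound cannot see this, so your decomposition is a dead end as written.

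The paper avoids this by a different splitting. Define $\hat{H}_l(\beta_a)$ by replacing the discrete Riemann sum $\frac{1}{\beta_a}\sum_{\o\in\cM(\beta_a)}$ with the continuous $\frac{1}{2\pi}\int d\o$ on the tails $|\o|\ge\pi/\beta_a$, and decompose $H_l(\beta_1)-H_l(\beta_2)$ into the two discretization errors $H_l(\beta_a)-\hat{H}_l(\beta_a)$ plus the continuum comparison $\hat{H}_l(\beta_1)-\hat{H}_l(\beta_2)$. In this decomposition, what you treated as a ``prefactor difference'' plus a ``Matsubara-set mismatch'' merge into a single Riemann-sum estimate. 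The discretization error at each $\beta_a$ is bounded by a first-order Taylor estimate with respect to $\o$, gaining $\beta_1^{-1}$ from the step size $2\pi/\beta_a$ and $M^{-l}$ because one $\o$-derivative on the integrand costs a factor $\fw(l)^{-1}\sim M^{-l}$ but gains one via the denominator bound $\|(i\o I_{2^d}-\cE(\bk)-G_{l+1})^{-1}\|\le M^{-l'}$ acting twice, and the resulting support sum uses the third inequality of Lemma \ref{lem_IR_cut_off_properties} \eqref{item_IR_cut_off_support} (giving $M^{dl}$ rather than $M^{(d+1)l}$). The continuum comparison then decomposes into the shared range $|\o|\ge\pi/\beta_1$ (where \eqref{eq_kernel_bound_difference} supplies the $\beta_1^{-1/2}$ per point) plus the small extra range $\pi/\beta_2\le|\o|\le\pi/\beta_1$ (length $\le 2\pi/\beta_1$, integrand bounded by $c(d)\alpha^{-2}$). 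Your plan hints at the continuous-integral idea for your third term, but it has to be applied to both $H_l(\beta_1)$ and $H_l(\beta_2)$ simultaneously to make the prefactors compatible; applying it only to one side leaves the $O(1)$ first term uncontrolled.
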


\begin{proof}
\eqref{item_IR_final_analyticity}, \eqref{item_IR_final_bound}: Take
 $j\in \{l,l-1,\cdots,N_{\beta}\}$. It follows from Lemma
 \ref{lem_IR_inverse_bound} and \eqref{eq_kernel_bound} that for any
 $(\o,\bk)\in\R^{d+1}$ satisfying $\chi_j(\o,\bk)\neq 0$, 
\begin{align}
&\|(i\o I_{2^d}-\cE(\o,\bk)-G_{l+1}(\o,\bk))^{-1}\|_{2^d\times 2^d}\le
 M^{-j},\label{eq_application_inverse_bound}\\
&\|G_l(\o,\bk)-G_{l+1}(\o,\bk)\|_{2^d\times 2^d}\le c(d)\alpha^{-2}M^j,
\label{eq_application_bound}
\end{align}
on the assumption that $\alpha$ is larger than a positive constant
 depending only on $d$. By using the above inequalities and 
 Lemma \ref{lem_IR_cut_off_properties} \eqref{item_IR_cut_off_equivalence},\eqref{item_IR_cut_off_implication},\eqref{item_IR_cut_off_support} we
 see that
\begin{align}
|H_l(\bU)|&\le
\frac{1}{\beta
 L^d}\sum_{(\o,\bk)\in\cM\times\G(L)^*}\sum_{j=l}^{N_{\beta}}1_{\chi_{j}(\o,\bk)\neq 0}\label{eq_estimation_typical_uniform}\\
&\quad\cdot\sum_{n=1}^{\infty}|\det(I_{2^d}-(i\o
  I_{2^d}-\cE(\bk)-G_{l+1}(\bU)(\o,\bk))^{-1}\notag\\
&\qquad\qquad\qquad\qquad \cdot
 (G_{l}(\bU)(\o,\bk)-G_{l+1}(\bU)(\o,\bk)))-1|^n\notag\\
&\le \frac{1}{\beta
 L^d}\sum_{(\o,\bk)\in\cM\times\G(L)^*}\sum_{j=l}^{N_{\beta}}1_{\chi_{j}(\o,\bk)\neq 0}
\sum_{n=1}^{\infty}(c(d)\alpha^{-2})^n\notag\\
&\le c(d,M)f_{\bt}^{-\frac{d}{2}}M^{(d+1)l}\alpha^{-2},\quad (\forall
 \bU\in \overline{D}).\notag
\end{align}
This implies \eqref{item_IR_final_bound}. 
Take $j\in \{l,l-1,\cdots,N_{\beta}\}$ and $(\o,\bk)\in\R^{d+1}$
 satisfying $\chi_j(\o,\bk)\neq 0$. Since 
\begin{align}
&(i\o
 I_{2^d}-\cE(\bk)-G_{l+1}(\bU)(\o,\bk))^{-1}\label{eq_neumann_series_expansion}\\
&=\sum_{n=0}^{\infty}((i\o
 I_{2^d}-\cE(\bk))^{-1}G_{l+1}(\bU)(\o,\bk))^n(i\o
 I_{2^d}-\cE(\bk))^{-1}\notag
\end{align}
and this series converges uniformly with $\bU$,
\begin{align*}
&(i\o
 I_{2^d}-\cE(\bk)-G_{l+1}(\cdot)(\o,\bk))^{-1}(\rho,\eta)\in
 C(\overline{D};\C)\cap C^{\o}(D;\C),\\
&(\forall \rho,\eta\in \cB).
\end{align*}
Thus,
\begin{align*}
&\det(I_{2^d}-(i\o
 I_{2^d}-\cE(\bk)-G_{l+1}(\cdot)(\o,\bk))^{-1}(G_l(\cdot)(\o,\bk)-G_{l+1}(\cdot)(\o,\bk)))\\
&\in C(\overline{D};\C)\cap C^{\o}(D;\C).
\end{align*}
Moreover, an estimation similar to \eqref{eq_estimation_typical_uniform} implies that the series 
\begin{align*}
\sum_{n=1}^{\infty}\frac{(-1)^{n-1}}{n}(\det(I_{2^d}-&(i\o
 I_{2^d}-\cE(\bk)-G_{l+1}(\bU)(\o,\bk))^{-1}\\
&\cdot (G_l(\bU)(\o,\bk)-G_{l+1}(\bU)(\o,\bk)))-1)^n
\end{align*}
converges uniformly with $\bU$ and thus 
\begin{align*}
&\log(\det(I_{2^d}-(i\o
 I_{2^d}-\cE(\bk)-G_{l+1}(\cdot)(\o,\bk))^{-1}\\
&\qquad\qquad\qquad\cdot (G_l(\cdot)(\o,\bk)-G_{l+1}(\cdot)(\o,\bk))))\\
&\in C(\overline{D};\C)\cap C^{\o}(D;\C).
\end{align*}
Therefore, the claim \eqref{item_IR_final_analyticity} holds true. 

\eqref{item_IR_final_bound_difference}: Let us prepare a couple of
 necessary inequalities. By
 \eqref{eq_kernel_bound_derivative}, 
\begin{align}
&\left\|\frac{\partial}{\partial \o}(i\o
 I_{2^d}-\cE(\bk)-G_{l+1}(\beta_a)(\bU)(\o,\bk))\right\|_{2^d\times
 2^d}\le
 c(d,M,c_w,c_{\chi}),\label{eq_application_bound_derivative_mixed}\\
&\left\|\frac{\partial}{\partial \o}(G_l(\beta_a)(\bU)(\o,\bk)
-G_{l+1}(\beta_a)(\bU)(\o,\bk))\right\|_{2^d\times 2^d}
\le c(d,M,c_w,c_{\chi})\alpha^{-2},\label{eq_application_bound_derivative}\\
&(\forall \bU\in \overline{D},(\o,\bk)\in\R^{d+1},a\in
 \{1,2\}).\notag
\end{align}
Then, define the functions $\hat{H}_l(\beta_a):\overline{D}\to \C$
 $(a=1,2)$ by 
\begin{align*}
\hat{H}_l(\beta_a)(\bU):=&\frac{1}{2\pi
 L^d}\sum_{\bk\in\G(L)^*}\left(\int_{-\pi
 h}^{-\frac{\pi}{\beta_a}}d\o
+\int^{\pi h}_{\frac{\pi}{\beta_a}}d\o\right)\\
&\cdot\log(\det(I_{2^d}-(i\o
 I_{2^d}-\cE(\bk)-G_{l+1}(\beta_a)(\bU)(\o,\bk))^{-1}\\
&\qquad\qquad\qquad\cdot (G_{l}(\beta_a)(\bU)(\o,\bk)-G_{l+1}(\beta_a)(\bU)(\o,\bk)))).
\end{align*}
By using Lemma \ref{lem_IR_cut_off_properties} \eqref{item_IR_cut_off_equivalence},\eqref{item_IR_cut_off_large_matsubara},\eqref{item_IR_cut_off_implication},\eqref{item_IR_cut_off_support}, 
\eqref{eq_application_inverse_bound}, \eqref{eq_application_bound}, 
\eqref{eq_application_bound_derivative_mixed} and
\eqref{eq_application_bound_derivative} we deduce that
\begin{align}
&|H_l(\beta_a)(\bU)-\hat{H}_l(\beta_a)(\bU)|\label{eq_IR_final_continuous_discrete}\\
&\le
\frac{1}{2\pi L^d}\sum_{\bk\in\G(L)^*}\sum_{m=0}^{\frac{\beta_a h}{2}-1}\notag\\
&\quad\cdot\left(\int_{\frac{\pi}{\beta_a}+\frac{2\pi}{\beta_a}m}^{\frac{\pi}{\beta_a}+\frac{2\pi}{\beta_a}(m+1)}d\o\int_{\frac{\pi}{\beta_a}+\frac{2\pi}{\beta_a}m}^{\o}d\eta
+\int^{-\frac{\pi}{\beta_a}-\frac{2\pi}{\beta_a}m}_{-\frac{\pi}{\beta_a}-\frac{2\pi}{\beta_a}(m+1)}d\o\int^{-\frac{\pi}{\beta_a}-\frac{2\pi}{\beta_a}m}_{\o}d\eta\right)\notag\\
&\quad\cdot\Bigg|\frac{\partial}{\partial\eta}\log(\det(I_{2^d}-(i\eta
  I_{2^d}-\cE(\bk)-G_{l+1}(\beta_a)(\eta,\bk))^{-1}\notag\\
&\qquad\qquad\qquad\qquad\qquad \cdot
 (G_{l}(\beta_a)(\eta,\bk)-G_{l+1}(\beta_a)(\eta,\bk))))\Bigg|\notag\\
&\le \frac{1}{\beta_1L^d}\sum_{\bk\in \G(L)^*}\sum_{j=l}^{N_{\beta_a}}
\left(\int_{\frac{\pi}{\beta_a}}^{\pi h}d\o+
 \int^{-\frac{\pi}{\beta_a}}_{-\pi h}d\o\right)1_{\chi_j(\o,\bk)\neq 0}
\notag\\
&\quad \cdot \Bigg|\frac{\partial}{\partial\o}
\log(\det(I_{2^d}-(i\o
  I_{2^d}-\cE(\bk)-G_{l+1}(\beta_a)(\o,\bk))^{-1}\notag\\
&\qquad\qquad\qquad\qquad\qquad \cdot
 (G_{l}(\beta_a)(\o,\bk)-G_{l+1}(\beta_a)(\o,\bk))))\Bigg|\notag\\
&\le \frac{1}{\beta_1L^d}\sum_{\bk\in \G(L)^*}\sum_{j=l}^{N_{\beta_a}}
\left(\int_{\frac{\pi}{\beta_a}}^{\pi h}d\o+
 \int^{-\frac{\pi}{\beta_a}}_{-\pi
 h}d\o\right)1_{\chi_j(\o,\bk)\neq 0}\notag\\
&\quad\cdot c(d)\Big\| \frac{\partial}{\partial \o}((i\o
 I_{2^d}-\cE(\bk)-G_{l+1}(\beta_a)(\o,\bk))^{-1}\notag\\
&\qquad\qquad\qquad\cdot(G_l(\beta_a)(\o,\bk)-G_{l+1}(\beta_a)(\o,\bk)))\Big\|_{2^d\times
 2^d}\notag\\
&\le \frac{1}{\beta_1L^d}\sum_{\bk\in \G(L)^*}\sum_{j=l}^{N_{\beta_a}}
\left(\int_{\frac{\pi}{\beta_a}}^{\pi h}d\o+
 \int^{-\frac{\pi}{\beta_a}}_{-\pi
 h}d\o\right)1_{\chi_j(\o,\bk)\neq 0}\notag\\
&\quad\cdot \Bigg(c(d)
\|(i\o I_{2^d}-\cE(\bk)-G_{l+1}(\beta_a)(\o,\bk))^{-1}\|_{2^d\times
 2^d}^2\notag\\
&\qquad\qquad\cdot\Big\| \frac{\partial}{\partial \o}(i\o
 I_{2^d}-\cE(\bk)-G_{l+1}(\beta_a)(\o,\bk))\Big\|_{2^d\times 2^d}\notag\\
&\qquad\qquad\cdot\|G_l(\beta_a)(\o,\bk)-G_{l+1}(\beta_a)(\o,\bk)\|_{2^d\times
 2^d}\notag\\
&\qquad+c(d)\|(i\o I_{2^d}-\cE(\bk)-G_{l+1}(\beta_a)(\o,\bk))^{-1}\|_{2^d\times 2^d}\notag\\
&\qquad\qquad\quad\cdot \Big\| \frac{\partial}{\partial
 \o}(G_l(\beta_a)(\o,\bk)-G_{l+1}(\beta_a)(\o,\bk))\Big\|_{2^d\times 2^d}\Bigg)\notag\\
&\le
 c(d,M,c_w,c_{\chi})\beta_1^{-1}f_{\bt}^{-\frac{d}{2}}\alpha^{-2}\sum_{j=l}^{N_{\beta_a}}M^{dj}\notag\\
&\le
 c(d,M,c_w,c_{\chi})\beta_1^{-1}f_{\bt}^{-\frac{d}{2}}\alpha^{-2}M^{dl}.\notag
\end{align}

Take $j\in \{l,l-1,\cdots,N_{\beta_1}\}$ and $(\o,\bk)\in\R^{d+1}$
 with $\chi_j(\o,\bk)\neq 0$. Note that for any $a,b\in\C\backslash
 \R_{\le 0}$ with $|a-1|\le 1/2$, $|b-1|\le 1/2$, $|\log a-\log b|\le
 2|a-b|$. Using this inequality, \eqref{eq_kernel_bound_difference}, \eqref{eq_application_inverse_bound},
 \eqref{eq_application_bound} as
 well as the assumption $\alpha\ge c(d)$, we can justify the following
 calculation. 
\begin{align*}
&|\log(\det(I_{2^d}-(i\o
  I_{2^d}-\cE(\bk)-G_{l+1}(\beta_1)(\o,\bk))^{-1}\notag\\
&\qquad\qquad\qquad\quad \cdot
 (G_{l}(\beta_1)(\o,\bk)-G_{l+1}(\beta_1)(\o,\bk))))\\
&\quad -\log(\det(I_{2^d}-(i\o
  I_{2^d}-\cE(\bk)-G_{l+1}(\beta_2)(\o,\bk))^{-1}\notag\\
&\qquad\qquad\qquad\qquad\quad \cdot
 (G_{l}(\beta_2)(\o,\bk)-G_{l+1}(\beta_2)(\o,\bk))))|\\
&\le 2|\det(I_{2^d}-(i\o
  I_{2^d}-\cE(\bk)-G_{l+1}(\beta_1)(\o,\bk))^{-1}\notag\\
&\qquad\qquad\qquad\quad \cdot
 (G_{l}(\beta_1)(\o,\bk)-G_{l+1}(\beta_1)(\o,\bk)))\\
&\qquad -\det(I_{2^d}-(i\o
  I_{2^d}-\cE(\bk)-G_{l+1}(\beta_2)(\o,\bk))^{-1}\notag\\
&\qquad\qquad\qquad\qquad \cdot
 (G_{l}(\beta_2)(\o,\bk)-G_{l+1}(\beta_2)(\o,\bk)))|\\
&\le c(d) \|(i\o
  I_{2^d}-\cE(\bk)-G_{l+1}(\beta_1)(\o,\bk))^{-1}\\
&\qquad\qquad\cdot 
 (G_{l}(\beta_1)(\o,\bk)-G_{l+1}(\beta_1)(\o,\bk))\\
&\qquad\qquad -(i\o
  I_{2^d}-\cE(\bk)-G_{l+1}(\beta_2)(\o,\bk))^{-1}\\
&\quad\qquad\qquad\cdot(G_{l}(\beta_2)(\o,\bk)-G_{l+1}(\beta_2)(\o,\bk))\|_{2^d\times
 2^d}\\
&\le c(d) 
 \|(i\o I_{2^d}-\cE(\bk)-G_{l+1}(\beta_1)(\o,\bk))^{-1}\|_{2^d\times
 2^d}\\
&\quad\cdot \|(i\o
 I_{2^d}-\cE(\bk)-G_{l+1}(\beta_2)(\o,\bk))^{-1}\|_{2^d\times 2^d}\\
&\quad\cdot\|G_{l+1}(\beta_1)(\o,\bk)-G_{l+1}(\beta_2)(\o,\bk)\|_{2^d\times
 2^d}\\
&\quad\cdot \|G_{l}(\beta_1)(\o,\bk)-G_{l+1}(\beta_1)(\o,\bk)\|_{2^d\times
 2^d}\\
&\quad +c(d)  \|(i\o
 I_{2^d}-\cE(\bk)-G_{l+1}(\beta_2)(\o,\bk))^{-1}\|_{2^d\times 2^d}\\
&\qquad\cdot (\|G_{l}(\beta_1)(\o,\bk)-G_{l}(\beta_2)(\o,\bk)\|_{2^d\times 2^d}\\
&\qquad\quad+\|G_{l+1}(\beta_1)(\o,\bk)-G_{l+1}(\beta_2)(\o,\bk)\|_{2^d\times
 2^d})\\
&\le c(d)M^{-j}\beta_1^{-\frac{1}{2}}\alpha^{-2}.
\end{align*}
It follows from this inequality, Lemma \ref{lem_IR_cut_off_properties}
 \eqref{item_IR_cut_off_equivalence},\eqref{item_IR_cut_off_implication},\eqref{item_IR_cut_off_support}, 
\eqref{eq_application_inverse_bound} and 
\eqref{eq_application_bound} that
\begin{align}
&|\hat{H}_l(\beta_1)(\bU)-\hat{H}_l(\beta_2)(\bU)|\label{eq_IR_final_continuous_continuous}\\
&\le \frac{1}{2\pi L^d}\sum_{\bk\in \G(L)^*}\left(\int_{-\pi h}^{-\frac{\pi}{\beta_1}}d\o+
\int^{\pi
 h}_{\frac{\pi}{\beta_1}}d\o\right)\sum_{j=l}^{N_{\beta_1}}1_{\chi_j(\o,\bk)\neq 0}\notag\\
&\quad\cdot |\log(\det(I_{2^d}-(i\o
 I_{2^d}-\cE(\bk)-G_{l+1}(\beta_1)(\o,\bk))^{-1}\notag\\
&\qquad\qquad\qquad\qquad\cdot (G_{l}(\beta_1)(\o,\bk)-G_{l+1}(\beta_1)(\o,\bk))))\notag\\
&\qquad-\log(\det(I_{2^d}-(i\o
 I_{2^d}-\cE(\bk)-G_{l+1}(\beta_2)(\o,\bk))^{-1}\notag\\
&\qquad\qquad\qquad\qquad\quad\cdot (G_{l}(\beta_2)(\o,\bk)-G_{l+1}(\beta_2)(\o,\bk))))|\notag\\
&\quad + \frac{1}{2\pi L^d}\sum_{\bk\in
 \G(L)^*}\left(\int_{-\frac{\pi}{\beta_1}}^{-\frac{\pi}{\beta_2}}d\o+
\int^{\frac{\pi}{\beta_1}}_{\frac{\pi}{\beta_2}}d\o\right)
\sum_{j=l}^{N_{\beta_2}}1_{\chi_j(\o,\bk)\neq 0}\notag\\
&\qquad\cdot |\log(\det(I_{2^d}-(i\o
 I_{2^d}-\cE(\bk)-G_{l+1}(\beta_2)(\o,\bk))^{-1}\notag\\
&\qquad\qquad\qquad\qquad\quad\cdot (G_{l}(\beta_2)(\o,\bk)-G_{l+1}(\beta_2)(\o,\bk))))|\notag\\
&\le \frac{1}{2\pi L^d}\sum_{\bk\in \G(L)^*}
\left(\int^{-\frac{\pi}{\beta_1}}_{-\pi h}d\o+
      \int_{\frac{\pi}{\beta_1}}^{\pi
 h}d\o\right)\sum_{j=l}^{N_{\beta_1}}1_{\chi_j(\o,\bk)\neq 0}
c(d)M^{-j}\beta_1^{-\frac{1}{2}}\alpha^{-2}\notag\\
&\quad + \frac{1}{2\pi L^d}\sum_{\bk\in \G(L)^*}
\left(\int^{-\frac{\pi}{\beta_2}}_{-\frac{\pi}{\beta_1}}d\o+
      \int^{\frac{\pi}{\beta_1}}_{\frac{\pi}{\beta_2}}d\o\right)
\sum_{j=l}^{N_{\beta_2}}1_{\chi_j(\o,\bk)\neq 0} c(d)\alpha^{-2}\notag\\
&\le
 c(d,M)\beta_1^{-\frac{1}{2}}f_{\bt}^{-\frac{d}{2}}\alpha^{-2}M^{dl}.\notag
\end{align}
By coupling \eqref{eq_IR_final_continuous_discrete} with
 \eqref{eq_IR_final_continuous_continuous} we obtain the claimed
 inequality.
\end{proof}

Here we introduce sets of covariances. In the next
subsection we will see that the
actual covariances in the infrared integration belong to these sets. For $l\in\Z_{\le 0}$ we define the subset $\cR(D,c_0,M)(l)$
of $\Map(\overline{D},\Map(I_0^{2},\C))$ as follows. $C\in
\Map(\overline{D},\Map(I_0^2,\C))$ belongs to $\cR(D,c_0,M)(l)$ if and
only if the following statements hold.
\begin{enumerate}[(i)]
\item\label{item_IR_covariance_analyticity}
$C(\cdot)(\bX)\in C(\overline{D};\C)\cap C^{\o}(D;\C)$, $(\forall \bX\in
     I_0^2)$. 
\item\label{item_IR_covariance_determinant}
\begin{align}
&|\det(\<\bp_i,\bq_j\>_{\C^m}C(\bU)(X_i,Y_j))_{1\le i,j\le n}|\le
 (c_{0}M^{dl})^n,\label{eq_IR_covariance_determinant}\\
&\ (\forall m,n\in\N,\bp_i,\bq_i\in\C^m\text{ with }
\|\bp_i\|_{\C^m},\|\bq_i\|_{\C^m}\le 1,\notag\\
&\quad X_i,Y_i\in I_0\
 (i=1,2,\cdots,n),\bU\in\overline{D}).\notag
\end{align}
\item\label{item_IR_covariance_decay}
\begin{align}
\|\widetilde{C}(\bU)\|_{l-1,t}\le c_0M^{-l-tl},\quad (\forall
 t\in\{0,1\},\bU\in \overline{D}),\label{eq_IR_covariance_decay}
\end{align}
where $\widetilde{C}(\bU):I^2\to \C$ is the anti-symmetric extension of
     $C(\bU)$ defined as in \eqref{eq_covariance_anti_symmetrization}.
\item\label{item_IR_covariance_invariance}
\begin{align*}
\widetilde{C}(\bU)(\bX)=e^{iQ_2(S_2(\bX))}\widetilde{C}(\bU)(S_2(\bX)),\
 (\forall \bX\in I^2,\bU\in \overline{D}),
\end{align*}
for each $S:I\to I$, $Q:I\to \R$ defined in 
     \eqref{eq_IR_particle_hole}, \eqref{eq_IR_spin_up},
     \eqref{eq_IR_spin_inversion}, \eqref{eq_IR_translation},
     \eqref{eq_IR_momentum_inversion}.
\item\label{item_IR_covariance_invariance_complex}
\begin{align*}
\widetilde{C}(\bU)(\bX)=e^{-iQ_2(S_2(\bX))}\overline{\widetilde{C}(\overline{\bU})(S_2(\bX))},\
 (\forall \bX\in I^2,\bU\in \overline{D}),
\end{align*}
for each $S:I\to I$, $Q:I\to \R$ defined in
     \eqref{eq_IR_adjoint}, \eqref{eq_IR_inversion}.
\end{enumerate}

It will be important to measure the difference between the covariances
defined at different temperatures. For this purpose we introduce the
subset $\hat{\cR}(D,c_0,M)(l)$ of $\cR(\beta_1)(D,c_0,M)(l)\times
\cR(\beta_2)(D,c_0,M)(l)$ on the assumption
\eqref{eq_basic_beta_h_assumption} as
follows. $(C(\beta_1),C(\beta_2))\in \cR(\beta_1)(D,c_0,M)(l)\times
\cR(\beta_2)(D,c_0,M)(l)$ belongs to $\hat{\cR}(D,c_0,M)(l)$ if and only
if the following statements hold true. 
\begin{enumerate}[(i)]
\item\label{item_IR_covariance_determinant_difference}
\begin{align}
&|\det(\<\bp_i,\bq_j\>_{\C^m}C(\beta_1)(\bU)(R_{\beta_1}(X_i,Y_j)))_{1\le
  i,j\le n}\label{eq_IR_covariance_determinant_difference}\\
&\quad-
\det(\<\bp_i,\bq_j\>_{\C^m}C(\beta_2)(\bU)(R_{\beta_2}(X_i,Y_j)))_{1\le
  i,j\le n}|\notag\\
&\le \beta_1^{-\frac{1}{2}}M^{-l}(c_{0} M^{dl})^n,\notag\\
&(\forall m,n\in\N,\bp_i,\bq_i\in\C^m\text{ with }
\|\bp_i\|_{\C^m},\|\bq_i\|_{\C^m}\le 1,\notag\\
&\quad X_i,Y_i\in \hat{I}_0\
 (i=1,2,\cdots,n),\bU\in\overline{D}).\notag
\end{align}
\item\label{item_IR_covariance_decay_difference}
\begin{align}
|\widetilde{C}(\beta_1)(\bU)-\widetilde{C}(\beta_2)(\bU)|_{l-1}\le
 \beta_1^{-\frac{1}{2}}c_{0}M^{-2l},\ (\forall \bU\in\overline{D}).\label{eq_IR_covariance_decay_difference}
\end{align}
\end{enumerate}

Take any $l\in \{0,-1,\cdots,N_{\beta}\}$ and $G_l\in \cK(D,\alpha,M)(l)$.
The same estimation as in Lemma \ref{lem_IR_inverse_bound} ensures that
we can define $C_l\in \Map(\overline{D},$ \\
$\Map(I_0^2,\C))$ by
\begin{align}
C_l(\bU)(\rho\bx\s x,\eta\by\tau y):=&\frac{\delta_{\s,\tau}}{\beta
 L^d}\sum_{(\o,\bk)\in \cM\times
 \G(L)^*}e^{i\<\bx-\by,\bk\>+i(x-y)\o}\chi_l(\o,\bk)\label{eq_effective_covariance_general_definition}\\
&\cdot (i\o
 I_{2^d}-\cE(\bk)-G_l(\bU)(\o,\bk))^{-1}(\rho,\eta).\notag
\end{align}
In fact $C_l$ is intended to be a generalization of the actual
covariance appearing in the infrared integration process which we
perform in the next subsection.
Let us summarize properties of $C_l$. 

\begin{lemma}\label{lem_IR_covariance_inclusion}
Assume that 
\begin{align}
M\ge
 8(d+1)^2(c_{\chi}+(1+\sqrt{2})^2(8c_{\chi}+4\pi)).\label{eq_crucial_condition_M}
\end{align}
Then, there exist a constant $c(d,M,c_w,c_{\chi})\in\R_{>0}$ depending
 only on $d$, $M$, $c_w$, $c_{\chi}$ and a constant $c(d)\in \R_{>0}$
 depending only on $d$ such that the following statements hold if 
$c_0\ge c(d,M,c_w,c_{\chi})f_{\bt}^{-\frac{d}{2}}$ and $\alpha\ge
 c(d)$. 
\begin{enumerate}
\item\label{item_IR_covariance_inclusion}
$$
C_l\in \cR(D,c_0,M)(l).
$$
\item\label{item_IR_covariance_inclusion_difference}
Assume in addition that \eqref{eq_basic_beta_h_assumption} holds,
     $l\in\{0,-1,\cdots,N_{\beta_1}\}$ and \\$(G_l(\beta_1),G_l(\beta_2))\in
     \hat{\cK}(D,\alpha,M)(l)$. Then,
$$
(C_l(\beta_1),C_l(\beta_2))\in \hat{\cR}(D,c_0,M)(l).
$$
\end{enumerate}
\end{lemma}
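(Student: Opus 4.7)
The plan is to verify the defining properties of $\cR(D,c_0,M)(l)$ and $\hat{\cR}(D,c_0,M)(l)$ one by one, using essentially the same tools that appeared in Lemma \ref{lem_UV_covariance_properties} and Lemma \ref{lem_self_energy_properties}. Continuity and analyticity of $C_l(\cdot)(X,Y)$ follow from the Neumann series expansion \eqref{eq_neumann_series_expansion}, which converges uniformly on $\overline{D}$ by Lemma \ref{lem_IR_inverse_bound} thanks to the bound $\alpha\ge c(d)$; since the sum over $(\o,\bk)$ is finite, this yields \eqref{item_IR_covariance_analyticity}. For the invariances \eqref{item_IR_covariance_invariance}, \eqref{item_IR_covariance_invariance_complex}, I would combine the invariances of $\cE(\bk)$ from Lemma \ref{lem_hopping_properties} \eqref{item_unitary_hopping_minus}, \eqref{item_unitary_hopping_invariance} with the analogous invariances of $G_l$ built into the definition of $\cK(D,\alpha,M)(l)$ (properties \eqref{item_kernel_hermiticity}--\eqref{item_kernel_momentum_inversion}) and the obvious symmetries of $\chi_l$ in $(\o,\bk)$; these combine with the definition of $\widetilde{C_l}$ to give the desired phase identities, in direct parallel to the argument in Lemma \ref{lem_self_energy_properties} \eqref{item_self_energy_property}.

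For the determinant bound \eqref{eq_IR_covariance_determinant}, I would realize $C_l(\bU)(X,Y)$ as a Gram inner product in $\ell^2(\cM\times\G(L)^*\times \cB)$, writing $(i\o I_{2^d}-\cE(\bk)-G_l)^{-1} = ((i\o I-\cE-G_l)^{-1/2})^* (i\o I-\cE-G_l)^{-1/2}$ after a harmless factorization; the operator-norm factor is controlled by $M^{-l}$ via Lemma \ref{lem_IR_inverse_bound}, while the counting of the support of $\chi_l$ on $\cM\times\G(L)^*$ gives an extra $c(M,d)f_{\bt}^{-d/2}M^{(d+1)l}$ from Lemma \ref{lem_IR_cut_off_properties} \eqref{item_IR_cut_off_support}. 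Gram's inequality then produces a per-row factor of order $c_0 M^{dl}$ provided $c_0\ge c(d,M,c_w,c_{\chi})f_{\bt}^{-d/2}$.

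The technical heart is the decay bound \eqref{eq_IR_covariance_decay}. Here I would use the standard integration-by-parts trick: each factor $d_j(X,Y)$ appearing in the weight is converted into a derivative $\partial/\partial k_j$ acting on the summand, at the cost of the Leibniz expansion applied to $\chi_l(\o,\bk)(i\o I-\cE(\bk)-G_l(\bU)(\o,\bk))^{-1}$. The derivative estimate on $\chi_l$ comes from Lemma \ref{lem_IR_cut_off_properties} \eqref{item_IR_cut_off_derivative}; the derivatives of $(i\o I-\cE-G_l)^{-1}$ are expanded via the resolvent identity, picking up the derivative bounds on $\cE$ from \eqref{eq_dispersion_derivative_bound} and on $G_l$ from \eqref{eq_kernel_bound_derivative}. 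The Gevrey weight $e^{\sum_j(\fw(l-1)d_j)^{1/2}}$ forces me to keep track of a factorial-type series in each coordinate; the combinatorial bookkeeping is where condition \eqref{eq_crucial_condition_M} enters: it ensures that the combined geometric series in $\fw(l-1)/\fw(l)=M^{-1}$ times the constants produced by the Leibniz expansion (essentially $c_{\chi}$ and $(1+\sqrt 2)^2(8c_{\chi}+4\pi)$) converges with the right slack to absorb the factor $(d+1)^2$ from the $d+1$ coordinates and the extra factor $M^{-tl}$ for the $t=1$ seminorm.

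For part \eqref{item_IR_covariance_inclusion_difference}, the two inequalities \eqref{eq_IR_covariance_determinant_difference}, \eqref{eq_IR_covariance_decay_difference} are proved by comparing the Matsubara Riemann sums at $\beta_1$ and $\beta_2$. The pointwise difference $C_l(\beta_1)-C_l(\beta_2)$ splits into (a) the mismatch between the two discrete sums at the common smooth integrand, controlled by the derivative bound on the integrand together with $1/\beta_1-1/\beta_2 = O(\beta_1^{-2}\beta_2)$ arguments (exactly the kind of trapezoidal-rule estimate already used in \eqref{eq_IR_final_continuous_discrete}), and (b) the difference $G_l(\beta_1)-G_l(\beta_2)$ inside the resolvent, which is controlled by \eqref{eq_kernel_bound_difference}. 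This yields the required $\beta_1^{-1/2}M^{-2l}$ weighted decay; then the determinant-difference bound \eqref{eq_IR_covariance_determinant_difference} follows from the Cauchy-Binet telescoping argument carried out in \eqref{eq_UV_determinant_difference_preliminary}--\eqref{eq_UV_determinant_difference_next}, now with $C_1-C_2$ playing the role of the one-row perturbation and \eqref{eq_IR_covariance_determinant} providing the uniform Gram bound on the remaining rows. The main obstacle throughout is tracking how the large constants produced by iterated derivatives are absorbed by $M$ (via \eqref{eq_crucial_condition_M}) and by $c_0$, and ensuring that the support-counting factor $f_{\bt}^{-d/2}$ survives to fit within the tolerance $c_0M^{dl}$; with the assumed lower bounds on $M$ and $c_0$, this works out exactly as in the preceding lemmas.
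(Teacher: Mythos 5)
Your plan tracks the paper's proof closely: Neumann expansion for analyticity, the inherited invariances, Gram's inequality plus support-counting for the determinant bound, a Leibniz/derivative estimate cascading into a Gevrey-weighted decay bound, the continuous-vs-discrete Matsubara sum split for the anisothermal bound, and the Cauchy--Binet telescoping for the determinant difference. Two small corrections are worth flagging.

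First, the specific Gram factorization you write, $(i\o I-\cE-G_l)^{-1}=((i\o I-\cE-G_l)^{-1/2})^*(i\o I-\cE-G_l)^{-1/2}$, is not valid here because $i\o I-\cE-G_l$ is not Hermitian (even less positive). The standard and correct move is simply to place the full resolvent on one side of the inner product, e.g.\ take $u_X\propto e^{-i\<\bx,\bk\>-ix\o}\chi_l(\o,\bk)^{1/2}\delta_\rho$ and $v_Y\propto e^{-i\<\by,\bk\>-iy\o}\chi_l(\o,\bk)^{1/2}(i\o I-\cE-G_l)^{-1}\delta_\eta$ in $\ell^2(\cM\times\G(L)^*\times\cB)$; the $M^{-2l}$ from Lemma~\ref{lem_IR_inverse_bound} and the support-counting from Lemma~\ref{lem_IR_cut_off_properties}~\eqref{item_IR_cut_off_support} then land cleanly in $\|v_Y\|^2$ and $\|u_X\|^2$ respectively.

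Second, the step from the derivative estimates to the weighted exponential decay is exactly what Lemma~\ref{lem_tool_for_decay_bound} packages; you allude to the factorial series but should invoke that lemma explicitly (applied after the chain of bounds \eqref{eq_derivative_bound_for_decay}, \cite[Lemma C.3~(2)]{K15}, and Lemma~\ref{lem_IR_cut_off_properties}~\eqref{item_IR_cut_off_derivative}) to obtain \eqref{eq_IR_covariance_exponential_decay}. Your description of where \eqref{eq_crucial_condition_M} enters is roughly right but can be made sharper: it is used at the very last step to upgrade the decay weight from the combination $\fw(l)/((d+1)^2(c_\chi+(1+\sqrt2)^2(8c_\chi+4\pi)))$ produced by Lemma~\ref{lem_tool_for_decay_bound} to $8\,\fw(l-1)$, which is what the $\|\cdot\|_{l-1,t}$ and $|\cdot-\cdot|_{l-1}$ (semi-)norms require. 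With these two repairs, your outline agrees in substance with the paper's proof.
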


\begin{remark} To guarantee that $C_l$, $(C_l(\beta_1),C_l(\beta_2))$
 satisfy \eqref{eq_IR_covariance_decay},
 \eqref{eq_IR_covariance_decay_difference} respectively, we use the
 condition \eqref{eq_crucial_condition_M}. 
The bound properties \eqref{eq_IR_covariance_decay},
 \eqref{eq_IR_covariance_decay_difference} are crucially important for
 changing the measurement with the scale $l$ to that with the next scale
 $l-1$ at every step of the infrared integration. We prefer to make
 explicit a sufficient condition for $M$ to justify the crux of our RG
 regime. Also, recall that the only condition of $M$ apart from the
 basic condition $M\ge 2$ so far is $M\ge
 c^{N_v^2}$ for some generic constant $c\in\R_{>0}$ in
 \eqref{eq_UV_parameter_conditions}. The inequality
 \eqref{eq_crucial_condition_M} is the second nontrivial condition imposed on $M$. 
\end{remark}

The next lemma will be useful in the proof of Lemma
\ref{lem_IR_covariance_inclusion}.

\begin{lemma}\label{lem_tool_for_decay_bound}
Let $A_j$,$B$,$C\in\R_{\ge 0}$ $(j=0,1,\cdots,d)$, $D\in \R_{>0}$  and
 assume that $A_j^nB\le CD^n(n!)^2$, $(\forall j\in \{0,1,\cdots,d\},
 n\in \N\cup\{0\})$. Then,
$$
B\le 4C e^{-\sum_{j=0}^d\left(\frac{A_j}{(d+1)^2D}\right)^{1/2}}.
$$
\end{lemma}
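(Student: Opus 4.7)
The plan is to reduce the $(d+1)$-coordinate estimate to a single-variable Stirling-type inequality via a geometric-mean argument. First I dispose of the trivial cases: if $B = 0$ the claim is immediate, and if some $A_{j} = 0$, then the corresponding factor $e^{-\sqrt{A_{j}/((d+1)^{2}D)}}$ on the right-hand side equals $1$ while the $j$-th hypothesis is vacuous, so I may assume $A_{j} > 0$ for all $j$ and $B > 0$.

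The key step applies the hypothesis once in each coordinate and multiplies: for any choice of $n_{0}, n_{1}, \dots, n_{d} \in \N \cup \{0\}$,
\[
B^{d+1} \;=\; \prod_{j=0}^{d} B \;\le\; C^{d+1} \prod_{j=0}^{d} \frac{(n_{j}!)^{2}\, D^{n_{j}}}{A_{j}^{n_{j}}},
\]
and taking $(d+1)$-th roots yields
\[
B \;\le\; C \prod_{j=0}^{d} \left(\frac{(n_{j}!)^{2}\, D^{n_{j}}}{A_{j}^{n_{j}}}\right)^{1/(d+1)}.
\]
Setting $a_{j} := A_{j}/((d+1)^{2} D)$, so that $A_{j}/D = (d+1)^{2} a_{j}$, the problem reduces to choosing, for each $j$, an $n_{j}$ satisfying
\[
\frac{(n_{j}!)^{2}}{((d+1)^{2} a_{j})^{n_{j}}} \;\le\; 4\, e^{-(d+1)\sqrt{a_{j}}};
\]
taking $(d+1)$-th roots of this and multiplying over $j$ produces the total constant $4 \prod_{j} e^{-\sqrt{a_{j}}}$, which is exactly the stated conclusion. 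Writing $y := (d+1)\sqrt{a_{j}}$, this reduces to the single-variable claim that for every $y \ge 0$ there exists $n \in \N \cup \{0\}$ with $n!/y^{n} \le 2\, e^{-y/2}$.

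I would prove the single-variable claim by a case split, taking $n = 0$ for $y \le 2 \log 2$ (where $1 \le 2 e^{-y/2}$ is immediate) and $n = \lfloor y \rfloor$ for $y > 2 \log 2$. In the latter case, Stirling's formula $n! \le \sqrt{2\pi n}\,(n/e)^{n} e^{1/(12n)}$ together with $n/y \le 1$ (giving $(n/(ey))^{n} \le e^{-n}$) and $n \ge y - 1$ yields an estimate $n!/y^{n} \lesssim \sqrt{y}\, e^{-y}$, which beats $2 e^{-y/2}$ once the numerical constants are tracked. The main obstacle is tracking these constants cleanly in the transitional regime $y$ close to $2 \log 2 \approx 1.39$, where Stirling is loose; it is likely easiest to handle $n \in \{1, 2\}$ by direct computation (checking $1!/y \le 2 e^{-y/2}$ on $[1,2]$ and $2!/y^{2} \le 2 e^{-y/2}$ on $[2,3]$, which both hold by elementary calculus on $e^{y/2} - 2y$ and $e^{y/2} - y^{2}$) and to use Stirling only for $y \ge 3$. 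Once the single-variable inequality is established, the remainder of the proof is purely mechanical.
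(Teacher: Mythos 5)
Your reduction to a per-coordinate estimate via the geometric mean is exactly the same combinatorial step the paper uses, so the overall skeleton is right. Where you diverge is in how the single-coordinate estimate is obtained. You fix a single near-optimal $n_j\approx (d+1)\sqrt{a_j}$ per coordinate and try to prove $n!/y^n\le 2e^{-y/2}$ by Stirling plus hand-checking transitional values of $y$; as you correctly flag, this leaves you chasing numerical constants in the regime where Stirling is loose (and in fact, with a standard Stirling constant $\sqrt{2\pi}e^{1/12n}$, the estimate $7\sqrt{y}\,e^{-y}\le 2e^{-y/2}$ only kicks in around $y\ge 4$, so you would need to verify $n=1,2,3$ directly — slightly more than you indicate). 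The paper instead rescales the hypothesis to
\[
\frac{1}{n!}\left(\frac{A_j}{4D}\right)^{n/2}B^{1/2}\le C^{1/2}\left(\frac{1}{2}\right)^n
\]
and then \emph{sums over all $n\ge 0$}, so the left side reassembles into the power series for $\exp\bigl((A_j/(4D))^{1/2}\bigr)$ and the right side into the geometric series $2C^{1/2}$; squaring gives $B\,e^{(A_j/D)^{1/2}}\le 4C$ at once, and the geometric mean over $j$ finishes the proof. This summation trick buys you the exact constant $4$ with no Stirling and no case analysis, and is the cleaner route here; your plan would reach the same conclusion but is genuinely more work to close, and as written it still has the transitional-$y$ verification outstanding.
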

\begin{proof}
By assumption, 
\begin{align*}
\frac{1}{n!}\left(\frac{A_j}{4D}\right)^{\frac{n}{2}}B^{\frac{1}{2}}\le
 C^{\frac{1}{2}}\left(\frac{1}{2}\right)^n,\quad (\forall j\in \{0,1,\cdots,d\},
 n\in \N\cup\{0\}).
\end{align*}
By summing both sides over $n\in \N\cup\{0\}$ and squaring them we obtain
$$
e^{\left(\frac{A_j}{D}\right)^{1/2}}B\le 4C,\quad (\forall j\in \{0,1,\cdots,d\}),
$$
which leads to the result.
\end{proof}

\begin{proof}[Proof of Lemma \ref{lem_IR_covariance_inclusion}]
\eqref{item_IR_covariance_inclusion}: The expansion of the integrand of $C_l$
 as in \eqref{eq_neumann_series_expansion} converges uniformly with
 respect to $\bU\in \overline{D}$. This implies that $C_l$ satisfies the property
 \eqref{item_IR_covariance_analyticity} of $\cR(D,c_0,M)(l)$. Let us
 check that $C_l$ satisfies the invariant properties. The invariance
 with $S:I\to I$, $Q:I\to \R$ defined in \eqref{eq_IR_particle_hole},
 \eqref{eq_IR_spin_up}, \eqref{eq_IR_spin_inversion} is clear. The
 invariance with $S$, $Q$ defined in \eqref{eq_IR_translation} 
straightforwardly follows from the definitions. We can refer to the
 proof of the same invariance in \cite[\mbox{Lemma 7.13
 (3)}]{K15}. For $(\rho,\bx,\s,x),(\eta,\by,\tau,y)\in I_0$,
 $\bU\in\overline{D}$, 
\begin{align*}
&C_l(\bU)((\rho,r_L(-\bx-b(\rho)),\s,x),(\eta,r_L(-\by-b(\eta)),\tau,y))\\
&\quad\cdot e^{i\<-\bx-b(\rho),\frac{2\pi}{L}\beps^L\>+i\<b(\rho),\frac{\pi}{L}\beps^L\>}
e^{-i\<-\by-b(\eta),\frac{2\pi}{L}\beps^L\>-i\<b(\eta),\frac{\pi}{L}\beps^L\>}\\&=\frac{\delta_{\s,\tau}}{\beta
 L^d}\sum_{(\o,\bk)\in \cM\times
 \G(L)^*}e^{i\<\bx-\by,\bk\>+i(x-y)\o}\chi_l\left(\o,-\bk-\frac{2\pi}{L}\beps^L\right) U_d\left(\frac{\pi}{L}\beps^L\right)U_d(\bk)\\
&\quad\cdot \left(i\o
 I_{2^d}-\cE\left(-\bk-\frac{2\pi}{L}\beps^L\right)-G_l(\bU)\left(\o,-\bk-\frac{2\pi}{L}\beps^L\right)\right)^{-1}\\
&\quad\cdot U_d(\bk)^* U_d\left(\frac{\pi}{L}\beps^L\right)^*(\rho,\eta)\\
&=C_l(\bU)((\rho,\bx,\s,x),(\eta,\by,\tau,y)),
\end{align*}
where we used the facts
 $\chi_l(\o,-\bk-(2\pi/L)\beps^L)=\chi_l(\o,\bk)$,
 $\cE(\bk)=E(-\beps^L,-\btheta)(\bk)$, Lemma
 \ref{lem_hopping_properties} \eqref{item_unitary_hopping_invariance}
 and the invariance \eqref{item_kernel_momentum_inversion} of
 $\cK(D,\alpha,M)(l)$. The above equality implies the 
invariance with $S:I\to I$, $Q:I\to \R$
 defined in \eqref{eq_IR_momentum_inversion}. 
Thus, we have checked that all the invariances in the item
 \eqref{item_IR_covariance_invariance} of $\cR(D,c_0,M)(l)$ hold.

An argument based on the invariances $\cE(\bk)=\cE(\bk)^*$,
 $G_l(\bU)(\o,\bk)=G_l(\overline{\bU})(-\o,\bk)^*$ $(\forall \bU\in
 \overline{D},(\o,\bk)\in\cM\times ((2\pi/L)\Z)^d)$, parallel to the
 proof of  \cite[\mbox{Lemma 7.13 (4)}]{K15} shows the invariance
 $$\widetilde{C_l}(\bU)(\bX)=e^{-i Q_2(S_2(\bX))}\overline{\widetilde{C_l}(\overline{\bU})(S_2(\bX))}$$
 with $S$, $Q$ defined in \eqref{eq_IR_adjoint}. 

For $(\rho,\bx,\s,x),(\eta,\by,\tau,y)\in I_0$, 
\begin{align*}
&e^{i\<b(\rho),\bpi\>+i\<b(\eta),\bpi\>}\overline{C_l(\overline{\bU})(\rho\bx\s
 x,\eta\by \tau y)}\\
&=\frac{\delta_{\s,\tau}}{\beta L^d}\sum_{(\o,\bk)\in\cM\times \G(L)^*}
e^{i\<\by-\bx,\bk\>+i(y-x)\o}\chi_l(\o,\bk)\\
&\quad\cdot e^{i\<b(\rho),\bpi\>+i\<b(\eta),\bpi\>}(-i\o
 I_{2^d}-\cE(\bk)-G_l(\overline{\bU})(\o,\bk)^*)^{-1}(\eta,\rho)\\
&=-C_l(\bU)(\eta\by\tau y, \rho\bx\s x),
\end{align*}
where we used Lemma \ref{lem_hopping_properties}
 \eqref{item_unitary_hopping_minus} and the invariance in \eqref{item_kernel_inversion} of $\cK(D,\alpha,M)(l)$. This equality
 leads to the invariance with $S:I\to I$, $Q:I\to \R$ 
defined in \eqref{eq_IR_inversion}. Thus, $C_l$ satisfies the
 invariances stated in the item \eqref{item_IR_covariance_invariance_complex} of
 $\cR(D,c_0,M)(l)$. 

By combining Lemma
 \ref{lem_IR_cut_off_properties}
 \eqref{item_IR_cut_off_implication},\eqref{item_IR_cut_off_support},
 Lemma \ref{lem_IR_inverse_bound} with the standard application of Gram's
 inequality we can show that 
\begin{align}
&|\det(\<\bp_i,\bq_j\>_{\C^m}C_l(\bU)(X_i,Y_j))_{1\le i,j\le n}|\le
 (c(d,M)f_{\bt}^{-\frac{d}{2}}M^{dl})^n,\label{eq_IR_covariance_determinant_pre}\\
&\ (\forall m,n\in\N,\bp_i,\bq_i\in\C^m\text{ with }
\|\bp_i\|_{\C^m},\|\bq_i\|_{\C^m}\le 1,\notag\\
&\quad X_i,Y_i\in I_0\
 (i=1,2,\cdots,n),\bU\in\overline{D}).\notag
\end{align}
This means that $C_l$ satisfies the determinant bound
 \eqref{eq_IR_covariance_determinant} for any $c_0\ge
 c(d,M)f_{\bt}^{-\frac{d}{2}}$.

It remains to prove \eqref{eq_IR_covariance_decay}. Take $j\in
 \{0,1,\cdots,d\}$. By Lemma \ref{lem_hopping_properties}
 \eqref{item_derivative_hopping_upper},
 \eqref{eq_kernel_bound_derivative}, the facts that $\|A\|_{2^d\times
 2^d}\le 2^d\max_{\rho,\eta\in \cB}|A(\rho,\eta)|$ $(\forall A\in
 \Mat(2^d,\C))$, $(2n)!\le 2^{2n}(n!)^2$ $(\forall n\in\N)$, $\fw(0)\le
 1$ and the condition $\alpha^2\ge 2^d$ we have
\begin{align}
&\left\|\left(\frac{\partial}{\partial k_j}\right)^n(i\o I_{2^d}-\cE(\bk)-G_l(\o,\bk))
\right\|_{2^d\times 2^d}\label{eq_derivative_bound_for_decay}\\
&\le 1 + 2^d\alpha^{-2}M^l (2c_{\chi}+\pi)^n \fw(l)^{-n}(2n)!\notag\\
&\le 2M^{l}(8c_{\chi}+4\pi)^n\fw(l)^{-n}(n!)^2,\quad (\forall n\in \N_{\ge 1},(\o,\bk)\in\R^{d+1}).\notag
\end{align}
Take any $(\o,\bk)\in\R^{d+1}$ satisfying $\chi_l(\o,\bk)\neq 0$. 
By \eqref{eq_derivative_bound_for_decay} and Lemma
 \ref{lem_IR_inverse_bound} we can apply \cite[\mbox{Lemma C.3 (2)}]{K15}
 with $s=M^{-l}$, $q=2M^l$, $r=(8c_{\chi}+4\pi)\fw(l)^{-1}$, $t=2$ to
 deduce that 
\begin{align}
&\left\|\left(\frac{\partial}{\partial k_j}\right)^n(i\o I_{2^d}-\cE(\bk)-G_l(\o,\bk))^{-1}
\right\|_{2^d\times 2^d}\label{eq_mono_integrant_derivative}\\
&\le \frac{M^{-2l}\cdot 2M^l}{(1+(M^{-l}\cdot 2M^l)^{\frac{1}{2}})^2}
\big((8c_{\chi}+4\pi)\fw(l)^{-1}(1+(M^{-l}\cdot
 2M^l)^{\frac{1}{2}})^2\big)^n(n!)^2\notag\\
&\le M^{-l}\big((1+\sqrt{2})^2(8c_{\chi}+4\pi)\fw(l)^{-1}\big)^n(n!)^2,\quad
 (\forall n\in \N\cup\{0\}).\notag
\end{align}
Moreover, by Lemma \ref{lem_IR_cut_off_properties}
 \eqref{item_IR_cut_off_derivative},
\begin{align}
&\left\|\left(\frac{\partial}{\partial k_j}\right)^n\chi_l(\o,\bk)(i\o I_{2^d}-\cE(\bk)-G_l(\o,\bk))^{-1}
\right\|_{2^d\times 2^d}\label{eq_total_integrant_derivative}\\
&\le
 M^{-l}\big((c_{\chi}+(1+\sqrt{2})^2(8c_{\chi}+4\pi))\fw(l)^{-1}\big)^n(n!)^2,\notag\\
&(\forall n\in \N\cup\{0\},(\o,\bk)\in\R^{d+1}).\notag
\end{align}
By using the above inequality and Lemma
 \ref{lem_IR_cut_off_properties} \eqref{item_IR_cut_off_implication},\eqref{item_IR_cut_off_support} we can estimate as follows.
\begin{align*}
&\left\|\left(\frac{\beta}{2\pi}\right)^n(e^{-i(x-y)\frac{2\pi}{\beta}}-1)^nC_l(\cdot\bx\s
 x,\cdot \by\tau y)\right\|_{2^d\times 2^d}\\
&\le \frac{1}{\beta L^d}\sum_{(\o,\bk)\in\cM\times
 \G(L)^*}1_{\frac{1}{\beta}\le
 M_{IR}M^{N_{\beta}+1}}\prod_{j=1}^n\left(\frac{\beta}{2\pi}\int_{0}^{\frac{2\pi}{\beta}}d\o_j\right)\\
&\qquad\cdot\left\|\left(\frac{\partial}{\partial \o'}\right)^n\chi_l(\o',\bk)(i\o' I_{2^d}-\cE(\bk)-G_l(\o',\bk))^{-1}
\right\|_{2^d\times 2^d}\Bigg|_{\o'=\o+\sum_{j=1}^n\o_j}\\
&\le
 M^{-l}\big((c_{\chi}+(1+\sqrt{2})^2(8c_{\chi}+4\pi))\fw(l)^{-1}\big)^n(n!)^2\\
&\quad\cdot
 \prod_{j=1}^n\left(\frac{\beta}{2\pi}\int_{0}^{\frac{2\pi}{\beta}}dw_j\right)
\frac{1}{\beta L^d}\sum_{(\o,\bk)\in\cM\times
 \G(L)^*}1_{\frac{1}{\beta}\le
 M_{IR}M^{N_{\beta}+1}}1_{\chi_l(\o+\sum_{j=1}^n\o_j,\bk)\neq 0}\\
&\le
 c(M,d)f_{\bt}^{-\frac{d}{2}}M^{dl}\big((c_{\chi}+(1+\sqrt{2})^2(8c_{\chi}+4\pi))\fw(l)^{-1}\big)^n(n!)^2.
\end{align*}
By repeating the same procedure as above we have that
\begin{align*}
&|d_j(\bX)^n\widetilde{C_l}(\bX)|\\
&\le
 c(M,d)f_{\bt}^{-\frac{d}{2}}M^{dl}\big((c_{\chi}+(1+\sqrt{2})^2(8c_{\chi}+4\pi))\fw(l)^{-1}\big)^n(n!)^2,\\
&(\forall j\in \{0,1,\cdots,d\},n\in \N\cup\{0\},\bX\in I^2).
\end{align*}
Here we can apply Lemma \ref{lem_tool_for_decay_bound} to derive that
\begin{align}
&|\widetilde{C_l}(\bX)|\le
 c(M,d)f_{\bt}^{-\frac{d}{2}}M^{dl}e^{-\sum_{j=0}^d\left(\frac{\fw(l)d_j(\bX)}{(d+1)^2(c_{\chi}+(1+\sqrt{2})^2(8c_{\chi}+4\pi))}\right)^{1/2}},\quad (\forall \bX\in I^2).\label{eq_IR_covariance_exponential_decay}
\end{align}
Moreover, on the assumption \eqref{eq_crucial_condition_M},
\begin{align*}
&|\widetilde{C_l}(\bX)|\le
 c(M,d)f_{\bt}^{-\frac{d}{2}}M^{dl}e^{-2\sqrt{2}\sum_{j=0}^d(\fw(l-1)d_j(\bX))^{1/2}},\quad
 (\forall \bX\in I^2).
\end{align*}
which implies that
\begin{align*}
\|\widetilde{C_l}(\bU)\|_{l-1,t}\le
 c(M,d,c_w)f_{\bt}^{-\frac{d}{2}}M^{-l-tl},\quad (\forall t\in \{0,1\},
 \bU\in\overline{D}).
\end{align*}
Thus, if $c_0\ge c(M,d,c_w)f_{\bt}^{-\frac{d}{2}}$, the covariance $C_l$
satisfies the inequality
 \eqref{eq_IR_covariance_decay}.

\eqref{item_IR_covariance_inclusion_difference}: First note that the
 assumption $\beta_a\ge 1$ implies that $1/\beta_a\le
 M_{IR}M^{N_{\beta_a}+1}$ $(a=1,2)$ and thus the results of Lemma
 \ref{lem_IR_cut_off_properties} \eqref{item_IR_cut_off_support} for
 $\beta_1$, $\beta_2$ are available.
For $l\in \{0,-1,\cdots,N_{\beta_a}\}$, $a\in \{1,2\}$, define
 $C_{ont,l}(\beta_a)\in \Map(\overline{D},\Map(\hat{I}_0^2,\C))$ by 
\begin{align*}
&C_{ont,l}(\beta_a)(\bU)(\rho\bx\s x,\eta\by\tau y)\\
&:=(-1)^{n_{\beta_a}(x)+n_{\beta_a}(y)}\frac{\delta_{\s,\tau}}{2\pi
 L^d}\sum_{\bk\in\G(L)^*}\int_{-\pi h}^{\pi h}d\o
 e^{i\<\bx-\by,\bk\>+i(x-y)\o}\chi_{l}(\o,\bk)\\
&\qquad\cdot (i\o
 I_{2^d}-\cE(\bk)-G_l(\beta_a)(\bU)(\o,\bk))^{-1}(\rho,\eta).
\end{align*}
By taking into account Lemma \ref{lem_IR_cut_off_properties}
 \eqref{item_IR_cut_off_large_matsubara} we can justify the following
 transformation. For any $(\bx,\s,x),(\by,\tau,y)\in
 \G(L)\times\spin\times[-\beta_1/4,\beta_1/4)_h$,
\begin{align*}
&C_{ont,l}(\beta_a)(\cdot\bx\s x,\cdot\by\tau
 y)-C_{l}(\beta_a)(\cdot\bx\s r_{\beta_a}(x),\cdot\by\tau r_{\beta_a}(y))\\
&=(-1)^{n_{\beta_a}(x)+n_{\beta_a}(y)}\frac{\delta_{\s,\tau}}{2\pi
 L^d}\sum_{\bk\in \G(L)^*}e^{i\<\bx-\by,\bk\>}\sum_{m=0}^{\beta_a
 h}\int_{-\pi h -\frac{\pi}{\beta_a}+\frac{2\pi}{\beta_a}m}^{-\pi h
 -\frac{\pi}{\beta_a}+\frac{2\pi}{\beta_a}(m+1)}d\o\\
&\quad\cdot\int_{-\pi h-\frac{\pi}{\beta_a}+\frac{2\pi}{\beta_a}m}^{\o}du 
\frac{\partial}{\partial u}(e^{i(x-y)u}\chi_l(u,\bk)(iu
 I_{2^d}-\cE(\bk)-G_l(\beta_a)(u,\bk))^{-1}).
\end{align*}
Then, by Lemma \ref{lem_IR_cut_off_properties}
 \eqref{item_IR_cut_off_support} and
 \eqref{eq_total_integrant_derivative},
\begin{align}
&\|C_{ont,l}(\beta_a)(\cdot\bx\s x,\cdot\by \tau
 y)-C_l(\beta_a)(\cdot\bx\s r_{\beta_a}(x),\cdot\by \tau
 r_{\beta_a}(y))\|_{2^d\times
 2^d}\label{eq_covariance_continuous_normal}\\
&\le
 c(d,M,c_w,c_{\chi})\beta_1^{-1}f_{\bt}^{-\frac{d}{2}}M^{dl}(|x-y|+M^{-l}).
\notag
\end{align}
Calculation parallel to that leading to
 \eqref{eq_IR_covariance_exponential_decay} yields that
\begin{align}
&|\widetilde{C_{ont,l}}(\beta_a)(\bX)|\le
 c(M,d)f_{\bt}^{-\frac{d}{2}}M^{dl}e^{-\sum_{j=0}^d\left(\frac{\fw(l)\hat{d}_j(\bX)}{(d+1)^2(c_{\chi}+(1+\sqrt{2})^2(8c_{\chi}+4\pi))}\right)^{1/2}},\label{eq_continuous_covariance_decay}\\
&(\forall \bX\in \hat{I}^2).\notag
\end{align}
By using the inequality $d_j(R_{\beta_a}(\bX))\ge (2/\pi)\hat{d}_j(\bX)$
 $(\forall \bX\in \hat{I}^2)$ we can derive from
 \eqref{eq_IR_covariance_exponential_decay} that
\begin{align}
&|\widetilde{C_l}(\beta_a)(R_{\beta_a}(\bX))|\label{eq_IR_covariance_decay_another_metric}\\
&\le
 c(M,d)f_{\bt}^{-\frac{d}{2}}M^{dl}e^{-\sum_{j=0}^d\left(\frac{2\fw(l)\hat{d}_j(\bX)}{\pi
 (d+1)^2(c_{\chi}+(1+\sqrt{2})^2(8c_{\chi}+4\pi))}\right)^{1/2}},\quad (\forall \bX\in \hat{I}^2).\notag
\end{align}
By putting \eqref{eq_covariance_continuous_normal},
 \eqref{eq_continuous_covariance_decay},
 \eqref{eq_IR_covariance_decay_another_metric} together,
\begin{align}
&|\widetilde{C_{ont,l}}(\beta_a)(\bX)-\widetilde{C_l}(\beta_a)(R_{\beta_a}(\bX))|\label{eq_covariance_continuous_normal_decay}\\
&\le
 c(d,M,c_w,c_{\chi})\beta_1^{-\frac{1}{2}}f_{\bt}^{-\frac{d}{2}}M^{dl}(\hat{d}_0(\bX)^{\frac{1}{2}}+M^{-\frac{l}{2}})\notag\\
&\quad\cdot e^{-\sum_{j=0}^d\left(\frac{\fw(l)\hat{d}_j(\bX)}{2\pi(d+1)^2(c_{\chi}+(1+\sqrt{2})^2(8c_{\chi}+4\pi))}\right)^{1/2}}\notag\\
&\le
 c(d,M,c_w,c_{\chi})\beta_1^{-\frac{1}{2}}f_{\bt}^{-\frac{d}{2}}M^{(d-\frac{1}{2})l}
 e^{-\sum_{j=0}^d\left(\frac{\fw(l)\hat{d}_j(\bX)}{4\pi(d+1)^2(c_{\chi}+(1+\sqrt{2})^2(8c_{\chi}+4\pi))}\right)^{1/2}},\notag\\
&(\forall \bX\in \hat{I}^2,a\in\{1,2\}).\notag
\end{align}

We need to establish a decay bound on
 $C_{ont,l}(\beta_1)-C_{ont,l}(\beta_2)$. Remark that 
\begin{align*}
&C_{ont,l}(\beta_1)(\cdot\bx\s x,\cdot\by\tau
 y)-C_{ont,l}(\beta_2)(\cdot\bx\s x,\cdot\by\tau y)\\
&=(-1)^{1_{x<0}+1_{y<0}}\frac{\delta_{\s,\tau}}{2\pi
 L^d}\sum_{\bk\in \G(L)^*}\int_{-\pi h}^{\pi h}d\o 
e^{i\<\bx-\by,\bk\>+i(x-y)\o}\chi_{l}(\o,\bk)\\
&\quad\cdot(i\o I_{2^d}-\cE(\bk)-G_l(\beta_1)(\o,\bk))^{-1}
(G_l(\beta_1)(\o,\bk)-G_l(\beta_2)(\o,\bk))\\
&\quad\cdot(i\o I_{2^d}-\cE(\bk)-G_l(\beta_2)(\o,\bk))^{-1}.
\end{align*}
Then, by Lemma \ref{lem_IR_cut_off_properties}
 \eqref{item_IR_cut_off_support}, \eqref{eq_kernel_bound_difference},
 \eqref{eq_mono_integrant_derivative},
 \eqref{eq_total_integrant_derivative} and the fact $(n!)^2\le (2n)!\le 2^{2n}(n!)^2$ we have that for $j\in$ $\{0,1,\cdots,d\}$, $\bX\in \hat{I}^2$,
\begin{align*}
&\hat{d}_j(\bX)^n|\widetilde{C_{ont,l}}(\beta_1)(\bX)-\widetilde{C_{ont,l}}(\beta_2)(\bX)|\\
&\le
 c(M,d)f_{\bt}^{-\frac{d}{2}}M^{(d-1)l}\beta_1^{-\frac{1}{2}}\alpha^{-2}\\
&\quad\cdot \sum_{m_1=0}^n\left(\begin{array}{c} n \\ m_1\end{array}\right)
\big((c_{\chi}+(1+\sqrt{2})^2(8c_{\chi}+4\pi))\fw(l)^{-1}\big)^{m_1}(m_1!)^2\\
&\quad\cdot \sum_{m_2=0}^{n-m_1}\left(\begin{array}{c} n-m_1 \\ m_2\end{array}\right)
((2c_{\chi}+\pi^2)\fw(l)^{-1})^{m_2}(2m_2)!\\
&\quad\cdot ((1+\sqrt{2})^2(8c_{\chi}+4\pi)\fw(l)^{-1})^{n-m_1-m_2}((n-m_1-m_2)!)^2\\
&\le
 c(M,d)f_{\bt}^{-\frac{d}{2}}M^{(d-1)l}\beta_1^{-\frac{1}{2}}\alpha^{-2}(2n)!\\
&\quad\cdot\big((c_{\chi}+(1+\sqrt{2})^2(8c_{\chi}+4\pi))\fw(l)^{-1}
+(2c_{\chi}+\pi^2)\fw(l)^{-1}\\
&\qquad+(1+\sqrt{2})^2(8c_{\chi}+4\pi)\fw(l)^{-1}\big)^n\\
&\le
 c(M,d)f_{\bt}^{-\frac{d}{2}}M^{(d-1)l}\beta_1^{-\frac{1}{2}}\alpha^{-2}(n!)^2\\&\quad\cdot (4\pi (c_{\chi}+(1+\sqrt{2})^2(8c_{\chi}+4\pi))\fw(l)^{-1})^n,
\end{align*}
which combined with Lemma \ref{lem_tool_for_decay_bound} implies that
\begin{align}
&|\widetilde{C_{ont,l}}(\beta_1)(\bX)-\widetilde{C_{ont,l}}(\beta_2)(\bX)|\label{eq_covariance_continuous_continuous_decay}\\
&\le
 c(M,d)f_{\bt}^{-\frac{d}{2}}M^{(d-1)l}\beta_1^{-\frac{1}{2}}\alpha^{-2}
 e^{-\sum_{j=0}^d\left(\frac{\fw(l)\hat{d}_j(\bX)}{4\pi(d+1)^2(c_{\chi}+(1+\sqrt{2})^2(8c_{\chi}+4\pi))}\right)^{1/2}}.\notag
\end{align}

On the assumption \eqref{eq_crucial_condition_M}, the inequalities
 \eqref{eq_covariance_continuous_normal_decay},
 \eqref{eq_covariance_continuous_continuous_decay} yield that
\begin{align}
&|\widetilde{C_{l}}(\beta_1)(\bU)(R_{\beta_1}(\bX))-\widetilde{C_{l}}(\beta_2)(\bU)(R_{\beta_2}(\bX))|\label{eq_IR_covariance_decay_difference_pre}\\
&\le
 c(d,M,c_w,c_{\chi})\beta_1^{-\frac{1}{2}}f_{\bt}^{-\frac{d}{2}}M^{(d-1)l}
 e^{-\sqrt{2}\sum_{j=0}^d\left(\frac{1}{\pi}\fw(l-1)\hat{d}_j(\bX)\right)^{1/2}},\notag\\
&(\forall \bX\in \hat{I}^2,\bU\in\overline{D}),\notag
\end{align}
and thus,
\begin{align*}
&|\widetilde{C_l}(\beta_1)(\bU)-\widetilde{C_l}(\beta_2)(\bU)|_{l-1}\le 
 c(d,M,c_w,c_{\chi})f_{\bt}^{-\frac{d}{2}}
\beta_1^{-\frac{1}{2}}M^{-2l},\quad (\forall \bU\in\overline{D}).
\end{align*}
Therefore, the inequality \eqref{eq_IR_covariance_decay_difference}
 holds for $c_0\ge c(d,M,c_w,c_{\chi})f_{\bt}^{-\frac{d}{2}}$. 
By using \eqref{eq_IR_covariance_determinant_pre},
 \eqref{eq_IR_covariance_decay_difference_pre} and applying the
 Cauchy-Binet formula as in the proof of Lemma
 \ref{lem_UV_covariance_properties}
 \eqref{item_UV_covariance_determinant_difference} we can prove that
 \eqref{eq_IR_covariance_determinant_difference} holds for $c_0\ge
 c(d,M,c_w,c_{\chi})$ $f_{\bt}^{-\frac{d}{2}}$.
\end{proof}

We conclude this subsection by describing the recursive structure of
the infrared integration in terms of the scale-dependent sets of
Grassmann polynomials and covariances introduced so far. 
The proof of the following lemma is essentially based on the
general results  \cite[\mbox{Lemma 3.9, Proposition 5.6, Proposition
5.9}]{K15}. See  \cite[\mbox{Subsection 2.2}]{K15} for the meaning of
uniform convergence of a sequence of Grassmann polynomials.

\begin{lemma}\label{lem_IR_recursive_structure}
There exists a constant
 $c\in\R_{>0}$ independent of any parameter such that if 
\begin{align}
M^{d-\frac{3}{2}}\ge c,\quad \alpha\ge c
 M^{d+\frac{3}{2}},\label{eq_condition_M_recursive}
\end{align}
the following statements hold true.
\begin{enumerate}
\item\label{item_IR_recursive_structure}
If $l\in\Z_{<0}$, 
$$J^{l+1}\in\cS(D,c_0,\alpha,M)(l+1),\quad
 C_{l+1}\in\cR(D,c_0,M)(l+1),$$ 
then, 
\begin{align*}
\sum_{n=0}^{\infty}\frac{1}{n!}\left(\frac{d}{d z}\right)^n\Big|_{z=0}
\log\left(\int
     e^{z\sum_{m=4}^NJ_m^{l+1}(\bU)(\psi+\psi^1)}d\mu_{C_{l+1}(\bU)}(\psi^1)\right)
\end{align*}
uniformly converges with $\bU\in\overline{D}$. Let $J^l$ denote it. Then, 
$$
J^l\in\cS(D,c_0,\alpha,M)(l).
$$
\item\label{item_IR_recursive_structure_difference}
In addition, assume that \eqref{eq_basic_beta_h_assumption} holds and 
\begin{align*}
&(J^{l+1}(\beta_1),J^{l+1}(\beta_2))\in\hat{\cS}(D,c_0,\alpha,M)(l+1),\\
&(C_{l+1}(\beta_1),C_{l+1}(\beta_2))\in\hat{\cR}(D,c_0,M)(l+1).
\end{align*}
Then,
$$
(J^{l}(\beta_1),J^{l}(\beta_2))\in\hat{\cS}(D,c_0,\alpha,M)(l).
$$
\end{enumerate}
\end{lemma}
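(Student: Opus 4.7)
The plan is to mimic the structure of Lemma \ref{lem_UV_integration}/Lemma \ref{lem_UV_integration_difference}, but with the Matsubara UV covariances replaced by the scale-$(l+1)$ effective covariance $C_{l+1}$. I would first verify the invariant properties of the output $J^l$. Since $C_{l+1}\in\cR(D,c_0,M)(l+1)$ satisfies the invariances listed in items \eqref{item_IR_covariance_invariance} and \eqref{item_IR_covariance_invariance_complex}, an induction on the tree-expansion representation of $J^l$ built from \cite[Lemma 3.9]{K15} shows that the Grassmann Gaussian integral transports every symmetry \eqref{eq_IR_particle_hole}--\eqref{eq_IR_inversion} of $\sum_{m\ge 4}J_m^{l+1}$ to $J^l$. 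Analyticity and continuity in $\bU$ then follow once the power series in $z$ given by \cite[Proposition 5.6]{K15} is shown to converge uniformly on $\overline{D}$, since each term is analytic by inductive application of \cite[Proposition 5.9]{K15}, together with the analyticity of both $J^{l+1}(\cdot)$ and $C_{l+1}(\cdot)$.

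The core of the proof will be the bound step. Using the tree-expansion bound of \cite[Proposition 5.6]{K15} with determinant bound $c_0M^{d(l+1)}$ from \eqref{eq_IR_covariance_determinant} and decay bound $c_0M^{-(l+1)-t(l+1)}$ from \eqref{eq_IR_covariance_decay}, each $n$-th order contribution of degree $m$ is bounded by a product of factors controlled by the hypothesis \eqref{eq_IR_boundedness_0th}, \eqref{eq_IR_boundedness} on $J^{l+1}$. The switch of the weight from $\fw(l+1)$ to $\fw(l)$ costs a factor that, by comparison of the weighted $L^1$-norms used in \eqref{eq_definition_norm_semi_norm}, is absorbed by shifting the exponent $M^{dlm/2}$ by one scale; the remaining discrepancy is paid by the decay factor $M^{-(l+1)}$ and the fact that only monomials of degree $\ge 4$ enter the integration, which in dimension $d\ge 2$ gives a net contraction $M^{-(d-3/2)}$ per expansion order. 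Summing the resulting geometric series yields, for every $t\in\{0,1\}$,
\begin{align*}
&\frac{h}{N}|J_0^l(\bU)-J_0^{l+1}(\bU)|\le c M^{(d+\tfrac{3}{2})l}\alpha^{-2},\\
&M^{-(d+\tfrac{3}{2})l+tl}\sum_{m=2}^{N}c_0^{m/2}\alpha^m M^{\tfrac{d}{2}lm}\|J_m^l(\bU)-J_m^{l+1}(\bU)\|_{l,t}\le c\alpha^{-1},
\end{align*}
with $c$ a generic constant absorbed in \eqref{eq_condition_M_recursive}. Combined with the inductive bounds on $J^{l+1}$ (rewritten at scale $l$ using that $M^{(d+3/2)(l+1)}=M^{d+3/2}\cdot M^{(d+3/2)l}$ and requiring $M$ large enough), this proves \eqref{eq_IR_boundedness_0th}, \eqref{eq_IR_boundedness} for $J^l$.

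For part \eqref{item_IR_recursive_structure_difference} I would run the same argument but in the anisothermal setting, replacing \cite[Proposition 5.6]{K15} by its difference version analogous to \cite[Lemma 4.6]{K15}, where the extra temperature-dependence in the kernels and in the covariance is controlled by \eqref{eq_IR_boundedness_difference_0th}, \eqref{eq_IR_boundedness_difference} and by \eqref{eq_IR_covariance_determinant_difference}, \eqref{eq_IR_covariance_decay_difference} respectively. The invariance of $C_{l+1}(\beta_a)$ and $J^{l+1}(\beta_a)$ under the temperature-translation map appearing in \eqref{eq_IR_translation} (as used in Lemma \ref{lem_UV_polynomial_temperature_invariance}) is what allows one to compare the two tree expansions term by term. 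The main obstacle I anticipate is the bookkeeping of the scale factors in the power-counting: at scale $l$ the kernel norms carry the weight $M^{\frac{d}{2}lm}$ and the $0$-th order bound carries $M^{(d+3/2)l}$, and one must check that the loss from the $\ge 4$-degree tadpole-subtracted integration against $C_{l+1}$, which contributes $M^{(d+1)(l+1)}$ from the momentum-space measure and $M^{-(l+1)m/2}$ per contraction, is strictly outweighed by the implicit gain $M^{-(d-3/2)}$ per tree order. This is precisely the irrelevance of quartic and higher monomials in dimension $d>1$, and closes the induction provided the conditions \eqref{eq_condition_M_recursive} are imposed with sufficiently large $c$.
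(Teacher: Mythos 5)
Your overall strategy coincides with the paper's: the heart of the proof is the general tree-expansion bound in \cite[Proposition~5.6]{K15} (applied with $a_1=d$, $a_2=1$, $a_3=1$, $a_4=1/2$) for part (1), the anisothermal analogue \cite[Proposition~5.9]{K15} for part (2), and \cite[Lemma~3.9]{K15} to transport the invariances. Two items are worth correcting.

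First, you invoke \cite[Proposition~5.9]{K15} to justify analyticity of each term of the power series. That is a misattribution: Proposition~5.9 is the anisothermal ($\beta_1$--$\beta_2$) estimate used only for \eqref{item_IR_recursive_structure_difference}. Analyticity of $F^l$ and $T^{l,(n)}$ in $\bU$ comes directly from the tree formula (e.g.\ \cite[Theorem~3]{SW}): each is a \emph{finite} sum of products of $J^{l+1}(\cdot)$ and $C_{l+1}(\cdot)$, both of which are continuous on $\overline{D}$ and analytic on $D$, and the uniform convergence of the series then passes analyticity to $J^l$.

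Second, the intermediate bounds you postulate,
\[
\frac{h}{N}\bigl|J_0^l(\bU)-J_0^{l+1}(\bU)\bigr|\le cM^{(d+3/2)l}\alpha^{-2},\qquad
M^{-(d+3/2)l+tl}\sum_{m}c_0^{m/2}\alpha^m M^{\frac{d}{2}lm}\|J_m^l-J_m^{l+1}\|_{l,t}\le c\alpha^{-1},
\]
to be ``combined with the inductive bounds on $J^{l+1}$'', do not line up with what must be shown. The map producing $J^l$ only ever sees $\sum_{m\ge 4}J_m^{l+1}$; $J_0^l$ and $J_2^l$ have no additive $J_0^{l+1}$ or $J_2^{l+1}$ component, so a difference estimate of that form is not the natural target. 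The only additive $J^{l+1}$ content is the uncontracted $J_m^{l+1}$ inside $F_m^l$ for $m\ge 4$, and rewriting its scale-$(l+1)$ bound at scale $l$ already supplies the contraction factor $M^{-(d/2)m+(d+3/2)-t}\le M^{-(d-3/2)-t}$, which is exactly the irrelevance of $\ge 4$-degree monomials in $d\ge 2$ that you describe. This is what \cite[Proposition~5.6]{K15} packages in one shot; the paper therefore needs no separate difference estimate, and the stated output bounds are directly those required by the definition of $\cS(D,c_0,\alpha,M)(l)$.
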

\begin{proof}
\eqref{item_IR_recursive_structure}: 
Let us define $F^l,\ T^{l,(n)}\in \Map(\overline{D},\bigwedge \cV)$
 $(n\in \N_{\ge 2})$ by
\begin{align*}
&F^{l}(\bU)(\psi):=\frac{d}{dz}\Big|_{z=0}
\log\left(\int
     e^{z\sum_{m=4}^NJ_m^{l+1}(\bU)(\psi+\psi^1)}d\mu_{C_{l+1}(\bU)}(\psi^1)\right),\\
&T^{l,(n)}(\bU)(\psi):=
\frac{1}{n!}\left(\frac{d}{d z}\right)^n\Big|_{z=0}
\log\left(\int
     e^{z\sum_{m=4}^NJ_m^{l+1}(\bU)(\psi+\psi^1)}d\mu_{C_{l+1}(\bU)}(\psi^1)\right).
\end{align*}
It is implied by \cite[\mbox{Proposition 5.6}]{K15}
 with $a_1=d$, $a_2=1$, $a_3=1$, $a_4=1/2$ that on the assumption
 \eqref{eq_condition_M_recursive}
\begin{align*}
&\frac{h}{N}\Bigg(|F_0^l(\bU)|+\sum_{n=2}^{\infty}|T_0^{l,(n)}(\bU)|\Bigg)
\le M^{(d+\frac{3}{2})l}\alpha^{-1},\\
&M^{-(d+\frac{3}{2})l+tl}\sum_{m=2}^Nc_0^{\frac{m}{2}}\alpha^mM^{\frac{d}{2}lm}
\Bigg(\|F_m^l(\bU)\|_{l,t}+\sum_{n=2}^{\infty}\|T_m^{l,(n)}(\bU)\|_{l,t}
\Bigg)\le 1,\\
&(\forall \bU\in\overline{D},\ t\in \{0,1\}).
\end{align*}
Moreover, it is clear from the derivation of the inequalities
 ``(5.63)'', ``(5.66)'' in the proof of \cite[\mbox{Proposition
 5.6}]{K15} that
\begin{align*}
\sum_{n=2}^{\infty}\sup_{\bU\in
 \overline{D}}\Bigg(|T_0^{l,(n)}(\bU)|+\sum_{m=2}^N\|T_m^{l,(n)}(\bU)\|_{l,0}\Bigg)<\infty.
\end{align*}
Thus, $F^l+\sum_{n=2}^{\infty}T^{l,(n)}$ uniformly converges with
 respect to $\bU\in \overline{D}$.
By the definition of the free integration and the tree formula (see,
 e.g. \cite[\mbox{Theorem 3}]{SW}), $F^l,T^{l,(n)}$ $(n\in\N_{\ge 2})$
 consist of finite sums and products of $J^{l+1}, C_{l+1}$. Thus,
$F^l,T^{l,(n)}\in C(\overline{D};\bigwedge \cV)\cap C^{\o}(D;\bigwedge
 \cV)$ $(\forall n\in \N_{\ge 2})$. Therefore, the uniform convergent
 property ensures that $J^l\in C(\overline{D};\bigwedge \cV)\cap
 C^{\o}(D;\bigwedge \cV)$. The above inequalities imply the bound properties
 \eqref{eq_IR_boundedness_0th}, \eqref{eq_IR_boundedness}. We can apply \cite[\mbox{Lemma 3.9}]{K15}
 to prove that $J^l$ inherits the invariant properties claimed in the
 items \eqref{item_IR_invariances}, \eqref{item_IR_invariances_complex}
 of $\cS(D,c_0,\alpha,M)(l)$ from $J^{l+1}$ and $C_{l+1}$. Therefore,
 $J^l\in \cS(D,c_0,\alpha,M)(l)$. 

\eqref{item_IR_recursive_structure_difference}: On the assumption
 \eqref{eq_condition_M_recursive} the claim follows from \\
\cite[\mbox{Proposition 5.9}]{K15} with $a_1=d$,
 $a_2=1$, $a_3=1$, $a_4=1/2$. 
\end{proof}

\subsection{Completion of the infrared integration}\label{subsec_completion_IR}
Here we implement the infrared integration scheme to prove Theorem
\ref{thm_main_theorem}. Most of the necessary tools for justifying the
multi-scale integration have already been prepared in the preceding
sections. By putting together these lemmas and a lemma separately made in
Appendix \ref{app_h_L_limit} we will reach the proof of Theorem
\ref{thm_main_theorem}. First of all let us describe properties of the
output of the Matsubara UV integration in terms of the sets 
$\cS(D,c_0,\alpha,M)(0)$, $\hat{\cS}(D,c_0,\alpha,M)(0)$.
In the following $C_{>0}^{\delta}:I^2_0\to \C$ $(\delta=+,-)$ are the
covariances defined in \eqref{eq_UV_covariance_+},
\eqref{eq_UV_covariance_-} with the cut-off function $\phi(M_{UV}^{-2}h^2|1-e^{i\o/h}|^2)$ in
place of $\chi(h|1-e^{i\o/h}|)$.

\begin{lemma}\label{lem_input_to_IR_integration}
There exist a constant $c\in\R_{>0}$ independent of any parameter and a
 constant $c(M,d)\in \R_{\ge 1}$ depending only on $M$, $d$ such that if
 \eqref{eq_UV_parameter_conditions} holds with $c$, the following
 statements hold for any $c_0\ge c(M,d)$ and $r\in\R_{>0}$ satisfying 
\begin{align}
r e^{d\fw(0)^{1/2}}\sum_{m=1}^{N_v}c_0^m\alpha^{2m}v_m(\fw(0))\le
 \frac{1}{2}.
\label{eq_initial_data_adjustment}
\end{align}
\begin{enumerate}
\item\label{item_connection_to_original_formulation}
There exist 
     $r(\beta,L)\in\R_{>0}$ dependent on $\beta$, $L$, independent of
     $h$ and $J^0\in\cS(D(r)^{n_v},c_0,\alpha,M)(0)$ such that 
\begin{align}
&J^0(\bU)(\psi)=\frac{1}{2}\sum_{\delta\in\{+,-\}}\log\left(\int
 e^{-V^{\delta}(\bU)(\psi+\psi^1)}d\mu_{C_{>0}^{\delta}}(\psi^1)\right)+\beta
 V_0^L(\bU),\label{eq_connection_to_original_formulation}\\
&(\forall \bU\in \overline{D(r(\beta,L))}^{n_v}).\notag
\end{align}
\item\label{item_input_anisothermal}
On the assumption \eqref{eq_basic_beta_h_assumption},
$$
(J^0(\beta_1),J^0(\beta_2))\in \hat{\cS}(D(r)^{n_v},c_0,\alpha,M)(0).
$$
\end{enumerate}
\end{lemma}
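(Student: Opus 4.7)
The plan is to construct $J^0$ as the symmetrization of the two outputs of the Matsubara UV integration of Section~\ref{sec_UV}. For each $\delta\in\{+,-\}$, let $J^{0,\delta}(\bU)(\psi)\in\bigwedge\cV$ be the Grassmann polynomial produced by the iterative procedure of Subsection~\ref{subsec_isothermal}, starting from the initial data $F^{N_h}(\bU)(\psi)=-V^{\delta}(\bU)(\psi)+\beta V_0^L(\bU)$, and set
\begin{align*}
J^0(\bU)(\psi):=\tfrac12\bigl(J^{0,+}(\bU)(\psi)+J^{0,-}(\bU)(\psi)\bigr).
\end{align*}
The assumption \eqref{eq_initial_data_adjustment} is exactly $r\le U_{max}(\alpha,M)$, so Lemma~\ref{lem_UV_integration} at $l=0$ gives, for each $\delta$ and $t\in\{0,1\}$,
\begin{align*}
\tfrac hN|J^{0,\delta}_0(\bU)|\le\alpha^{-1},\qquad \sum_{m=2}^Nc_0^{m/2}\alpha^m\|J^{0,\delta}_m(\bU)\|_{0,t}\le 1,
\end{align*}
uniformly on $\overline{D(r)}^{n_v}$, provided $c_0\ge c(M,d)$ (the constant from Lemma~\ref{lem_UV_covariance_properties}). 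The triangle inequality passes these bounds to $J^0$, yielding \eqref{eq_IR_boundedness_0th}, \eqref{eq_IR_boundedness} at $l=0$. The uniform convergence claims \eqref{eq_UV_tree_part_uniform_convergence_0}, \eqref{eq_UV_tree_part_uniform_convergence} imply that $\bU\mapsto J^0_m(\bU)$ lies in $C(\overline{D(r)}^{n_v};\C)\cap C^{\o}(D(r)^{n_v};\C)$.

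To get the representation \eqref{eq_connection_to_original_formulation}, observe that since $V^{\delta}$ is linear in $\bU$, the integral $\int e^{-V^{\delta}(\bU)(\psi+\psi^1)}d\mu_{C_{>0}^{\delta}}(\psi^1)$ reduces to $1$ at $\bU=\mathbf{0}$, and its constant-in-$\psi$ part is a finite polynomial in $\bU$ whose coefficients depend only on $\beta,L$ (not on $h$, as the kernels of $V^{\delta}$ do not depend on $h$). Hence there is $r(\beta,L)\in\R_{>0}$, depending only on $\beta,L$, such that this part has positive real part on $\overline{D(r(\beta,L))}^{n_v}$; on that disk the iterative definition of $J^{0,\delta}$ coincides with $\beta V_0^L(\bU)+\log\!\int e^{-V^{\delta}(\bU)(\psi+\psi^1)}d\mu_{C_{>0}^{\delta}}(\psi^1)$, by the Gaussian semigroup property together with $C_{>0}^{\delta}=\sum_{l=1}^{N_h}C_l^{\delta}$. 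Averaging in $\delta$ yields \eqref{eq_connection_to_original_formulation}.

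The substantive work is verifying the invariances (ii), (iii) in the definition of $\cS(D(r)^{n_v},c_0,\alpha,M)(0)$. Each invariance is propagated along the UV scale index by induction using \cite[Lemma~3.9]{K15}, once one has checked that the corresponding $(S,Q)$-action leaves $V^{\delta}$ and $C_l^{\delta}$ invariant (possibly up to the swap $\delta\leftrightarrow-\delta$). The invariances \eqref{eq_IR_particle_hole}, \eqref{eq_IR_spin_up}, \eqref{eq_IR_spin_inversion}, \eqref{eq_IR_translation}, \eqref{eq_IR_momentum_inversion} hold for each $J^{0,\delta}$ individually, using respectively the equal number of $\psi$ and $\opsi$ in each monomial of $V^{\delta}$, the kernel symmetries \eqref{eq_spin_parity}, \eqref{eq_spin_reflection}, \eqref{eq_translation}, \eqref{eq_inversion}, \eqref{eq_U1_invariance}, the antiperiodic-translation property of $C_l^{\delta}$, and Lemma~\ref{lem_hopping_properties}\eqref{item_unitary_hopping_invariance}. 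The $\theta$-flipping invariances \eqref{eq_IR_adjoint}, \eqref{eq_IR_inversion} are the delicate ones: the transformation swaps $\psi\leftrightarrow\opsi$, and the particle-hole identity \eqref{eq_particle_hole} combined with the hermiticity \eqref{eq_hermiticity} shows that this swap interchanges $V^{+}$ with $V^{-}$ (the factor $(-1)^m$ in $V^-$ being exactly what is produced by the reordering of the Grassmann generators), while Lemma~\ref{lem_hopping_properties}\eqref{item_unitary_hopping_minus} similarly interchanges $C_l^{+}$ with $C_l^{-}$. Hence the transformation sends $J^{0,+}$ to (the complex conjugate of) $J^{0,-}$, and averaging in $\delta$ restores the required invariance. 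Propagating this $\delta$-swap across every UV step in a manner compatible with the inductive framework of \cite[Lemma~3.9]{K15} is the main technical obstacle.

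For part~\eqref{item_input_anisothermal}, the same construction is repeated at both inverse temperatures $\beta_1,\beta_2$, and Lemma~\ref{lem_UV_integration_difference} applied at $l=0$ furnishes, for each $\delta$,
\begin{align*}
&\Big|\tfrac h{N(\beta_1)}J^{0,\delta}_0(\beta_1)(\bU)-\tfrac h{N(\beta_2)}J^{0,\delta}_0(\beta_2)(\bU)\Big|\le\beta_1^{-1/2}\alpha^{-1},\\
&\sum_{m=2}^{N(\beta_2)}c_0^{m/2}\alpha^m|J^{0,\delta}_m(\beta_1)(\bU)-J^{0,\delta}_m(\beta_2)(\bU)|_0\le\beta_1^{-1/2},
\end{align*}
which are exactly \eqref{eq_IR_boundedness_difference_0th}, \eqref{eq_IR_boundedness_difference} at $l=0$ after averaging over $\delta$, establishing $(J^0(\beta_1),J^0(\beta_2))\in\hat\cS(D(r)^{n_v},c_0,\alpha,M)(0)$.
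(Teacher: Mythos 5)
Your construction of $J^0$ as the symmetrization $\tfrac12(J^{0,+}+J^{0,-})$ of the two UV outputs, the application of Lemma~\ref{lem_UV_integration} and Lemma~\ref{lem_UV_integration_difference} at $l=0$ for the bound properties \eqref{eq_IR_boundedness_0th}, \eqref{eq_IR_boundedness}, \eqref{eq_IR_boundedness_difference_0th}, \eqref{eq_IR_boundedness_difference}, the use of the uniform convergence to get $J^0\in C(\overline{D(r)}^{n_v};\bigwedge\cV)\cap C^{\o}(D(r)^{n_v};\bigwedge\cV)$, and the identification of the closed-form representation on a small $h$-independent disk $\overline{D(r(\beta,L))}^{n_v}$ all match the paper's proof. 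However, the invariance verification has a genuine gap, and you yourself flag it: you propose to propagate the invariances scale-by-scale through the UV recursion, but then write that carrying the $\delta$-swap across every UV step ``in a manner compatible with the inductive framework'' is ``the main technical obstacle,'' i.e., you do not actually do it. The paper takes a different, cleaner route: it never propagates any invariance through the $N_h$ UV scales. Instead, it establishes the closed-form identity \eqref{eq_connection_to_original_formulation_pre} on $\overline{D(r(\beta,L))}^{n_v}$, verifies the seven invariances of $\cS(D,c_0,\alpha,M)(0)$ directly on that small disk using the explicit interaction invariances \eqref{eq_initial_interaction_invariance}--\eqref{eq_initial_interaction_invariance_complex_opposite} and the covariance invariances \eqref{eq_UV_covariance_invariance_integral}, \eqref{eq_UV_covariance_invariance_integral_complex}, \eqref{eq_UV_covariance_invariance_integral_complex_opposite}, and then invokes the identity theorem and continuity (which require precisely the analyticity you have already established) to extend the invariance statements from $\overline{D(r(\beta,L))}^{n_v}$ to all of $\overline{D(r)}^{n_v}$. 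This sidesteps the scale-by-scale bookkeeping of the $\delta$-swap entirely.

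One further inaccuracy: you group \eqref{eq_IR_adjoint} and \eqref{eq_IR_inversion} together as ``the $\theta$-flipping invariances'' that interchange $\delta\leftrightarrow-\delta$. In fact only \eqref{eq_IR_inversion} involves the $\delta$-swap: the paper shows $-\overline{V^{\delta}(\overline{\bU})}(\cR\psi)+\beta\overline{V_0^L(\overline{\bU})}=-V^{\delta}(\bU)(\psi)+\beta V_0^L(\bU)$ and $e^{-iQ_2(S_2(\bX))}\overline{\widetilde{C_{>0}^{\delta}}(S_2(\bX))}=\widetilde{C_{>0}^{\delta}}(\bX)$ for the $(S,Q)$ of \eqref{eq_IR_adjoint}, so that invariance holds for each $\delta$ separately with no averaging needed. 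Only for \eqref{eq_IR_inversion} do $V^{\delta}\mapsto V^{-\delta}$ and $C_{>0}^{\delta}\mapsto C_{>0}^{-\delta}$, and only there is the symmetrization in $\delta$ essential. To repair your argument you should either fill in the scale-by-scale propagation (proving inductively that $J^{l,\delta}(\bU)(\psi)=J^{l,\delta}(\bU)(\cR\psi)$ for the first five $(S,Q)$, $\overline{J^{l,\delta}(\overline{\bU})}(\cR\psi)=J^{l,\delta}(\bU)(\psi)$ for \eqref{eq_IR_adjoint}, and $\overline{J^{l,\delta}(\overline{\bU})}(\cR\psi)=J^{l,-\delta}(\bU)(\psi)$ for \eqref{eq_IR_inversion}, which would require extending \cite[Lemma~3.9]{K15} to cover the cross-$\delta$ relation), or, as the paper does, verify the invariances from the closed form on the small disk and extend by analyticity.
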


\begin{proof}
There exists a constant
 $c(M,d)\in \R_{\ge 1}$ depending only on $M$, $d$ such that the
 conclusions of Lemma \ref{lem_UV_covariance_properties} hold for any
 $c_0\ge c(M,d)$. Fix such $c_0$. Let
 $F^{N_h,\delta}(\psi),T^{N_h,\delta}(\psi),J^{N_h,\delta}(\psi),F^{l,\delta}(\psi),T^{l,\delta,(n)}(\psi),J^{l,\delta}(\psi)$
 $(\in
 \bigwedge \cV)$ $(\delta\in
 \{+,-\},l\in\{0,1,\cdots,N_h-1\},n\in\N_{\ge 2})$ be defined as in the beginning of Subsection \ref{subsec_isothermal}. Here we explicitly show the dependency on the parameter $\delta$, while
 we concealed it in Subsection \ref{subsec_isothermal} and Subsection
 \ref{subsec_anisothermal}. Then, set $J^0(\psi):=(J^{0,+}(\psi)+J^{0,-}(\psi))/2$. By Lemma
 \ref{lem_UV_integration} and Lemma \ref{lem_UV_integration_difference}
 there exists a constant $c\in\R_{>0}$ independent of any parameter such
 that if \eqref{eq_UV_parameter_conditions} holds with $c$, the
 following bounds hold with any $r\in\R_{>0}$
 satisfying \eqref{eq_initial_data_adjustment}.
\begin{align*}
&\frac{h}{N}|J_0^0(\bU)|\le \alpha^{-1},\\
&\sum_{m=2}^Nc_0^{\frac{m}{2}}\alpha^m\|J_m^0(\bU)\|_{0,t}\le 1,\quad
 (\forall t\in\{0,1\},\bU\in \overline{D(r)}^{n_v}).
\end{align*}
On the assumption \eqref{eq_basic_beta_h_assumption},
\begin{align*}
&\left|\frac{h}{N(\beta_1)}J_0^0(\beta_1)(\bU)-\frac{h}{N(\beta_2)}J_0^0(\beta_2)(\bU)\right|\le \beta_1^{-\frac{1}{2}}\alpha^{-1},\\
&\sum_{m=2}^{N(\beta_2)}c_0^{\frac{m}{2}}\alpha^m|J_m^0(\beta_1)(\bU)-J_m^0(\beta_2)(\bU)|_{0}\le \beta_1^{-\frac{1}{2}},\quad
 (\forall \bU\in \overline{D(r)}^{n_v}).
\end{align*}
Moreover, by Lemma \ref{lem_UV_integration}
\begin{align}
&\sum_{n=2}^{\infty}\sup_{\bU\in
 \overline{D(r)}^{n_v}}\sum_{\delta\in\{+,-\}}\left(|T_0^{l,\delta,(n)}(\bU)|+\sum_{m=2}^N\|T_m^{l,\delta,(n)}(\bU)\|_{0,0}\right)<\infty,\label{eq_uniform_convergence_UV_application}\\
&(\forall l\in \{0,1,\cdots,N_h-1\}).\notag
\end{align}
Let us prove that 
\begin{align}
J^{l,\delta}\in C\left(\overline{D(r)}^{n_v};\bigwedge \cV\right)\cap C^{\o}\left(D(r)^{n_v};\bigwedge
 \cV\right)\quad (\forall \delta\in \{+,-\})
\label{eq_uniform_convergence_UV_induction}
\end{align}
for any $l\in \{0,1,\cdots,N_h\}$.
It is clear from the definition that
 \eqref{eq_uniform_convergence_UV_induction} holds for $l=N_h$.
Assume that \eqref{eq_uniform_convergence_UV_induction} holds for $l+1$.
Then, by definition $F^{l,\delta},T^{l,\delta,(n)}\in
 C(\overline{D(r)}^{n_v};\bigwedge \cV)\cap C^{\o}(D(r)^{n_v};\bigwedge
 \cV)$ $(\forall \delta\in \{+,-\},n\in\N_{\ge 2})$.
The bound property \eqref{eq_uniform_convergence_UV_application} implies
 that $\sum_{n=2}^{\infty}T^{l,\delta,(n)}$ uniformly converges. Thus
\eqref{eq_uniform_convergence_UV_induction} holds for $l$. By induction,
\eqref{eq_uniform_convergence_UV_induction} holds for any $l\in
 \{0,1,\cdots,N_h\}$. We especially have
 that $J^{0}\in
 C(\overline{D(r)}^{n_v};\bigwedge \cV)\cap C^{\o}(D(r)^{n_v};\bigwedge
 \cV)$.

By the same argument as in the proof of 
\cite[\mbox{Proposition 6.4
 (3)}]{K15} we can conclude that there exists $r(\beta,L)\in\R_{>0}$ depending on $\beta$, $L$ and
 independent of $h$ such that
\begin{align}
&J^0(\bU)(\psi)=\frac{1}{2}\sum_{\delta\in\{+,-\}}\log\left(\int
 e^{-V^{\delta}(\bU)(\psi+\psi^1)+\beta V_0^L(\bU)}d\mu_{C_{>0}^{\delta}}(\psi^1)\right),\label{eq_connection_to_original_formulation_pre}\\
&(\forall \bU\in \overline{D(r(\beta,L))}^{n_v}).\notag
\end{align}
We can see from the properties of $V^{\delta}(\bU)(\psi)$, $V_0^L(\bU)$
 and the definition of logarithm of Grassmann polynomial (see,
 e.g. \cite[\mbox{Subsection 2.2}]{K15}) that the right-hand side of
 \eqref{eq_connection_to_original_formulation_pre} is equal to that of
\eqref{eq_connection_to_original_formulation} if
 $\max_{j\in\{1,2,\cdots,n_v\}}|U_j|$ is sufficiently small. 
The inequality \eqref{eq_grassmann_0th_with_1} implies that there exists
 $r(\beta,L)'\in \R_{>0}$ dependent on $\beta,L$, independent of $h$
 such that the right-hand side of
 \eqref{eq_connection_to_original_formulation} is analytic with $\bU$ in
 $D(r(\beta,L)')^{n_v}$. Thus, by using the identity theorem and
 continuity and taking $r(\beta,L)$ smaller independently of $h$ if
 necessary we obtain the equality
 \eqref{eq_connection_to_original_formulation} for $\bU\in
 \overline{D(r(\beta,L))}^{n_v}$.

It remains to check that $J^0$ satisfies the invariant properties. 
Recall the definitions \eqref{eq_Grassmann_interaction},
 \eqref{eq_Grassmann_interaction_plus_minus}. The invariance
\begin{align}
&-V^{\delta}(\bU)(\cR\psi)+\beta V_0^L(\bU)=-V^{\delta}(\bU)(\psi)+\beta
 V_0^L(\bU),\label{eq_initial_interaction_invariance}\\
&(\forall \bU\in \overline{D(r)}^{n_v},\delta \in \{+,-\})\notag
\end{align}
for $S:I\to I$, $Q:I\to \R$ defined in \eqref{eq_IR_particle_hole},
 \eqref{eq_IR_spin_up}, \eqref{eq_IR_spin_inversion},
 \eqref{eq_IR_translation}
follows from the
 definition of $V^{\delta}$, \eqref{eq_spin_parity}, \eqref{eq_spin_reflection},
\eqref{eq_periodicity} and \eqref{eq_translation} respectively. 
The properties \eqref{eq_periodicity}, \eqref{eq_inversion} and
 \eqref{eq_U1_invariance} imply
 \eqref{eq_initial_interaction_invariance} for $S,Q$ defined in
 \eqref{eq_IR_momentum_inversion} as well.
Moreover, the property
 \eqref{eq_hermiticity} ensures that
\begin{align}
&-\overline{V^{\delta}(\overline{\bU})}(\cR\psi)+\beta \overline{V_0^L(\overline{\bU})}=-V^{\delta}(\bU)(\psi)+\beta
 V_0^L(\bU),\label{eq_initial_interaction_invariance_complex}\\
&(\forall \bU\in \overline{D(r)}^{n_v},\delta \in \{+,-\})\notag
\end{align}
for $S:I\to I$, $Q:I\to \R$ defined in \eqref{eq_IR_adjoint}.
It also follows from the definition of $V^{\delta}(\psi)$, \eqref{eq_U1_invariance} and \eqref{eq_hermiticity} that
\begin{align}
&-\overline{V^{\delta}(\overline{\bU})}(\cR\psi)+\beta \overline{V_0^L(\overline{\bU})}=-V^{-\delta}(\bU)(\psi)+\beta
 V_0^L(\bU),\label{eq_initial_interaction_invariance_complex_opposite}\\
&(\forall \bU\in \overline{D(r)}^{n_v},\delta \in \{+,-\})\notag
\end{align}
for $S:I\to I$, $Q:I\to \R$ defined in \eqref{eq_IR_inversion}.

Next let us confirm some invariances involving the covariances
 $C_{>0}^+$, $C_{>0}^-$. 
Let $S:I\to I$, $Q:I\to \R$ be one of those defined in \eqref{eq_IR_particle_hole},
 \eqref{eq_IR_spin_up}, \eqref{eq_IR_spin_inversion},
 \eqref{eq_IR_translation}, \eqref{eq_IR_momentum_inversion}. It is the
 same procedure as in the proof of Lemma
 \ref{lem_IR_covariance_inclusion} \eqref{item_IR_covariance_inclusion}
 to prove that
\begin{align*}
e^{iQ_2(S_2(\bX))}\widetilde{C_{>0}^{\delta}}(S_2(\bX))=\widetilde{C_{>0}^{\delta}}(\bX),\quad
 (\forall \bX\in I^2,
 \delta\in\{+,-\}).
\end{align*}
Moreover, we can see that for any $\bX\in I^m$, $\delta\in \{+,-\}$, 
\begin{align}
\int \psi_{\bX}d\mu_{C_{>0}^{\delta}}(\psi)&=e^{-\sum_{\bY\in
 I^2}\widetilde{C_{>0}^{\delta}}(\bY)\frac{\partial}{\partial
 \psi_{\bY}}}\psi_{\bX}\Big|_{\psi=0}\label{eq_UV_covariance_invariance_integral}\\
&= e^{-\sum_{\bY\in
 I^2}e^{-iQ_2(S_2(\bY))}\widetilde{C_{>0}^{\delta}}(\bY)\frac{\partial}{\partial
 \psi_{S_2(\bY)}}}e^{iQ_m(S_m(\bX))}\psi_{S_m(\bX)}\Big|_{\psi=0}\notag\\
&=e^{-\sum_{\bY\in
 I^2}\widetilde{C_{>0}^{\delta}}(\bY)\frac{\partial}{\partial
 \psi_{\bY}}}(\cR \psi)_{\bX}\Big|_{\psi=0}\notag\\
&=\int(\cR\psi)_{\bX}d\mu_{C_{>0}^{\delta}}(\psi).\notag
\end{align}
Let $S:I\to I$, $Q:I\to \R$ be defined in \eqref{eq_IR_adjoint}. By
 repeating the argument parallel to the proof of \cite[\mbox{Lemma 7.13
 (4)}]{K15} we obtain that 
\begin{align*}
e^{-iQ_2(S_2(\bX))}\overline{\widetilde{C_{>0}^{\delta}}(S_2(\bX))}=\widetilde{C_{>0}^{\delta}}(\bX),\quad
 (\forall \bX\in I^2,
 \delta\in\{+,-\}).
\end{align*}
Moreover, for any $\bX\in I^m$, $\delta \in \{+,-\}$, 
\begin{align*}
\int \psi_{\bX}d\mu_{C_{>0}^{\delta}}(\psi)
&= e^{-\sum_{\bY\in
 I^2}e^{iQ_2(S_2(\bY))}\widetilde{C_{>0}^{\delta}}(\bY)\frac{\partial}{\partial
 \psi_{S_2(\bY)}}}e^{-iQ_m(S_m(\bX))}\psi_{S_m(\bX)}\Big|_{\psi=0}\\
&=e^{-\sum_{\bY\in
 I^2}\overline{\widetilde{C_{>0}^{\delta}}(\bY)}\frac{\partial}{\partial
 \psi_{\bY}}}e^{-iQ_m(S_m(\bX))}\psi_{S_m(\bX)}\Big|_{\psi=0}\\
&=\int
 e^{-iQ_m(S_m(\bX))}\psi_{S_m(\bX)}d\mu_{\overline{C_{>0}^{\delta}}}(\psi),
\end{align*}
or
\begin{align}
\int\psi_{\bX}d\mu_{\overline{C_{>0}^{\delta}}}(\psi)=\int (\cR\psi)_{\bX}d\mu_{C_{>0}^{\delta}}(\psi).\label{eq_UV_covariance_invariance_integral_complex}
\end{align}
For $S$, $Q$ defined in \eqref{eq_IR_inversion} and
 $(\rho,\bx,\s,x,\theta)$, $(\eta,\by,\tau,y,\xi)\in I$, 
\begin{align}
&e^{-iQ_2(S_2(\rho\bx \s x \theta,\eta \by\tau y
 \xi))}\overline{\widetilde{C_{>0}^{\delta}}(S_2(\rho\bx \s x \theta,\eta
 \by\tau y \xi))}\label{eq_inversion_decomposition}\\
&=-\frac{1}{2}e^{i\<b(\rho),\bpi\>+i\<b(\eta),\bpi\>}\notag\\
&\quad\cdot(1_{(\theta,\xi)=(1,-1)}
\overline{C_{>0}^{\delta}(\eta \by\tau y, \rho \bx \s x)}
-1_{(\theta,\xi)=(-1,1)}
\overline{C_{>0}^{\delta}(\rho\bx \s x ,\eta \by\tau y)}).\notag
\end{align}
Moreover, by Lemma \ref{lem_hopping_properties}
 \eqref{item_unitary_hopping_minus},
\begin{align*}
&e^{i\<b(\rho),\bpi\>+i\<b(\eta),\bpi\>}\overline{C_{>0}^{+}(\rho\bx \s x
 ,\eta \by\tau y)}\\
&=\frac{\delta_{\s,\tau}}{\beta L^d}\sum_{(\o,\bk)\in\cM_h\times
 \G(L)^*} e^{i\<\by-\bx,\bk\>+i(y-x)\o}(1-\chi_{h,0}(\o))\\
&\quad\cdot e^{i\<b(\rho),\bpi\>+i\<b(\eta),\bpi\>}h^{-1}(I_{2^d}-e^{i\frac{\o}{h}I_{2^d}+\cE(\bk)})^{-1}(\eta,\rho)\\
&=\frac{\delta_{\s,\tau}}{\beta L^d}\sum_{(\o,\bk)\in\cM_h\times
 \G(L)^*} e^{i\<\by-\bx,\bk\>+i(y-x)\o}(1-\chi_{h,0}(\o))h^{-1}\\
&\quad\cdot U_d(\bpi)(I_{2^d}-e^{i\frac{\o}{h}I_{2^d}+\cE(\bk)})^{-1}U_d(\bpi)^*(\eta,\rho)\\
&=-C_{>0}^-(\eta \by\tau y,\rho\bx \s x).
\end{align*}
This implies that for $\delta\in \{+,-\}$,
\begin{align*}
e^{i\<b(\rho),\bpi\>+i\<b(\eta),\bpi\>}\overline{C_{>0}^{\delta}(\rho\bx \s x
 ,\eta \by\tau y)}=-C_{>0}^{-\delta}(\eta \by\tau y,\rho\bx \s x).
\end{align*}
By substituting this equality into \eqref{eq_inversion_decomposition} we
 obtain 
\begin{align*}
e^{-iQ_2(S_2(\bX))}\overline{\widetilde{C_{>0}^{\delta}}(S_2(\bX))}=\widetilde{C_{>0}^{-\delta}}(\bX),\quad
 (\forall \bX\in I^2,\delta
 \in\{+,-\}).
\end{align*}
Furthermore, based on this equality, transformations parallel to the derivation of
 \eqref{eq_UV_covariance_invariance_integral_complex} yield
\begin{align}
\int\psi_{\bX}d\mu_{\overline{C_{>0}^{\delta}}}(\psi)=\int (\cR\psi)_{\bX}d\mu_{C_{>0}^{-\delta}}(\psi),\quad
 (\forall \bX\in I^m,\delta
 \in\{+,-\}).\label{eq_UV_covariance_invariance_integral_complex_opposite}
\end{align}

Fix $\bU\in \overline{D(r(\beta,L))}^{n_v}$. Let $S$, $Q$ be one of
 those defined in \eqref{eq_IR_particle_hole},
 \eqref{eq_IR_spin_up}, \eqref{eq_IR_spin_inversion},
 \eqref{eq_IR_translation}, \eqref{eq_IR_momentum_inversion}. 
By \eqref{eq_connection_to_original_formulation_pre}, 
\eqref{eq_initial_interaction_invariance} and
\eqref{eq_UV_covariance_invariance_integral}, 
\begin{align*}
J^0(\bU)(\cR\psi)&=\frac{1}{2}\sum_{\delta\in\{+,-\}}\log\left(\int
 e^{-V^{\delta}(\bU)(\cR\psi+\cR\psi^1)+\beta
 V_0^L(\bU)}d\mu_{C_{>0}^{\delta}}(\psi^1)\right)\\
&=J^0(\bU)(\psi).
\end{align*}
Let $S$, $Q$ be defined by \eqref{eq_IR_adjoint}. By 
\eqref{eq_connection_to_original_formulation_pre}, 
\eqref{eq_initial_interaction_invariance_complex} and 
\eqref{eq_UV_covariance_invariance_integral_complex}, 
\begin{align*}
\overline{J^0(\overline{\bU})}(\cR\psi)&=\frac{1}{2}\sum_{\delta\in\{+,-\}}\log\left(\int
 e^{-\overline{V^{\delta}(\overline{\bU})}(\cR\psi+\cR\psi^1)+\beta
\overline{V_0^L(\overline{\bU})}}d\mu_{C_{>0}^{\delta}}(\psi^1)\right)\\
&=J^0(\bU)(\psi).
\end{align*}
Finally, let $S$, $Q$ be defined by \eqref{eq_IR_inversion}. By 
\eqref{eq_connection_to_original_formulation_pre}, 
\eqref{eq_initial_interaction_invariance_complex_opposite} and 
\eqref{eq_UV_covariance_invariance_integral_complex_opposite},
\begin{align*}
\overline{J^0(\overline{\bU})}(\cR\psi)&=\frac{1}{2}\sum_{\delta\in\{+,-\}}\log\left(\int
 e^{-\overline{V^{\delta}(\overline{\bU})}(\cR\psi+\cR\psi^1)+\beta
\overline{V_0^L(\overline{\bU})}}d\mu_{C_{>0}^{-\delta}}(\psi^1)\right)\\
&=J^0(\bU)(\psi).
\end{align*}
Since $\bU\mapsto J^0(\bU)(\psi)$, $\bU\mapsto J^0(\bU)(\cR\psi)$, 
 $\bU\mapsto \overline{J^0(\overline{\bU})}(\cR\psi)$ are continuous in
 $\overline{D(r)}^{n_v}$ and analytic in $D(r)^{n_v}$, the identity
 theorem and the continuity ensure the claimed invariances for any
 $\bU\in \overline{D(r)}^{n_v}$.
\end{proof}

Note that we can choose a constant $c(d)\in\R_{>0}$ depending only on
$d$, a constant $c(d,c_{\chi},N_v)\in\R_{>0}$ depending only on $d$,
$c_{\chi}$, $N_v$ so that if 
$$
h\ge e^{4d},\quad L\ge \beta,\quad M\ge c(d,c_{\chi},N_v),\quad \alpha
\ge c(d)M^{d+\frac{3}{2}},
$$
all the assumptions imposed on $h$, $L$, $\beta$, $\alpha$, $M$ in 
Lemma \ref{lem_self_energy_properties},
Lemma \ref{lem_IR_final_output},
Lemma \ref{lem_IR_covariance_inclusion},
Lemma \ref{lem_IR_recursive_structure},
Lemma \ref{lem_input_to_IR_integration} are satisfied. 
Then, there exists a constant $c(d,M,c_w,c_{\chi})\in\R_{>0}$ depending
only on $d$, $M$, $c_w$, $c_{\chi}$ such that the conclusions of  
Lemma \ref{lem_self_energy_properties},
Lemma \ref{lem_IR_covariance_inclusion},
Lemma \ref{lem_input_to_IR_integration} hold for
$c_0:=c(d,M,c_w,c_{\chi})f_{\bt}^{-\frac{d}{2}}$. With this $c_0$, set
\begin{align*}
r_{max}:=\frac{1}{2}\left(e^{d\fw(0)^{1/2}}
\sum_{m=1}^{N_v}c_0^{m}\alpha^{2m}v_m(\fw(0))\right)^{-1}.
\end{align*}
To make clear, let us sum up these assumptions. From now we assume that
\begin{align}
&h\ge e^{4d},\quad L\ge \beta,\quad M\ge c(d,c_{\chi},N_v),\quad \alpha
\ge c(d)M^{d+\frac{3}{2}},\label{eq_condition_IR}\\
&c_0=c(d,M,c_w,c_{\chi})f_{\bt}^{-\frac{d}{2}},\quad 
r_{max}=\frac{1}{2}\left(e^{d\fw(0)^{1/2}}
\sum_{m=1}^{N_v}c_0^{m}\alpha^{2m}v_m(\fw(0))\right)^{-1}.\notag
\end{align}

Recall that in Lemma \ref{lem_grassmann_symmetric} we derived the
Grassmann integral formulation
$$
\log\left(\int e^{\frac{1}{2}(R^+(\psi)+R^-(\psi))}d\mu_{C^{\infty}_{\le
0}}(\psi)\right)
$$
assuming that the coupling constants are sufficiently small, where
$$
R^{\delta}(\psi)=\log\left(\int
e^{-V^{\delta}(\psi+\psi^1)}d\mu_{C^{\delta}_{>0}}(\psi^1)\right)
\quad (\delta =+,-).
$$
Here we consider the cut-off functions $\chi(h|1-e^{i\o/h}|)$, $\chi(|\o|)$
inside $C_{>0}^{\delta}$, $C_{\le
0}^{\infty}$ as $\phi(M_{UV}^{-2}h^2|1-e^{i\o/h}|^2)$,
$\phi(M_{UV}^{-2}\o^2)$ respectively. The next lemma shows how we can
analytically continue this Grassmann integral formulation 
by means of the iterative infrared multi-scale integration or the
renormalization group method.

\begin{lemma}\label{lem_complete_IR_integration}
The following statements hold true. 
\begin{enumerate}
\item\label{item_IR_full_expansion}
There exist 
\begin{align*}
&J_0^l\in\cS(D(r_{max})^{n_v},c_0,\alpha,M)(l)\cap
     C(\overline{D(r_{max})}^{n_v};\C)\\
&\quad (l=0,-1,\cdots,N_{\beta}-1),\\ 
&E_l\in \cK(D(r_{max})^{n_v},\alpha,M)(l)\quad (l=0,-1,\cdots,N_{\beta})
\end{align*}
and $r(\beta,L)\in\R_{>0}$ depending on $\beta$, $L$, independent of
     $h$ such that
\begin{align}
&E_l(\bU)(\o,\bk)-E_{l+1}(\bU)(\o,\bk)=O,\label{eq_kernel_support}\\
&(\forall l\in \{0,-1,\cdots,N_{\beta}\},
 \bU\in \overline{D(r_{max})}^{n_v},\notag\\
&\quad(\o,\bk)\in\R^{d+1}\text{ with
 }\hat{\chi}_{\le l}(\o,\bk)=0),\notag\\
&-\frac{1}{\beta L^d}\log\left(\int
 e^{\frac{1}{2}(R^+(\bU)(\psi)+R^-(\bU)(\psi))}d\mu_{C_{\le
 0}^{\infty}}(\psi)\right)\label{eq_IR_full_expansion}\\
&=-\frac{1}{\beta
 L^d}\sum_{l=0}^{N_{\beta}-1}J_0^l(\bU)\notag\\
&\quad -\sum_{l=0}^{N_{\beta}}\frac{2}{\beta
 L^d}\sum_{(\o,\bk)\in\cM\times\G(L)^*}\notag\\
&\qquad\cdot\log(\det(I_{2^d}-(i\o
 I_{2^d}-\cE(\bk)-E_{l+1}(\bU)(\o,\bk))^{-1}\notag\\
&\qquad\qquad\qquad\qquad\quad \cdot
 (E_{l}(\bU)(\o,\bk)-E_{l+1}(\bU)(\o,\bk))))\notag\\
&\quad +\frac{1}{L^d}V_0^L(\bU),\quad (\forall \bU\in
 \overline{D(r(\beta,L))}^{n_v}),\notag
\end{align}
     where we set $E_1:=0$.
\item\label{item_IR_full_difference}
In addition, assume \eqref{eq_basic_beta_h_assumption}. Then, 
$J_0^l(\beta_a)$, $E_l(\beta_a)$
     introduced in
     \eqref{item_IR_full_expansion} for $a=1,2$ satisfy that
\begin{align*}
&(J_0^l(\beta_1),J_0^l(\beta_2))\\
&\in\hat{\cS}(D(r_{max})^{n_v},c_0,\alpha,M)(l)\cap
 (C(\overline{D(r_{max})}^{n_v};\C)\times
 C(\overline{D(r_{max})}^{n_v};\C)),\\
&(\forall
 l\in\{0,-1,\cdots,N_{\beta_1}-1\}),\\
&(E_l(\beta_1),E_l(\beta_2))\in \hat{\cK}(D(r_{max})^{n_v},\alpha,M)(l)
,\quad (\forall
 l\in\{0,-1,\cdots,N_{\beta_1}\}).
\end{align*}
\end{enumerate}
\end{lemma}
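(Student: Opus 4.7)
The plan is to prove both statements jointly by a downward induction on scale $l\in\{0,-1,\ldots,N_\beta\}$, maintaining at each level the loop invariants: (a) a Grassmann polynomial $J^l\in\cS(D(r_{max})^{n_v},c_0,\alpha,M)(l)$ (and, in part (2), with $(J^l(\beta_1),J^l(\beta_2))\in\hat{\cS}$), (b) self-energies $E_j\in\cK(\cdot)(j)$ already constructed for $j=0,-1,\ldots,l$ from the quadratic kernels of $J^0,\ldots,J^l$ via \eqref{eq_effective_kernel_definition}--\eqref{eq_effective_self_energy_definition}, and (c) an identity which rewrites the target logarithm as a partial sum of $J_0^j$'s plus $\log\det$ corrections plus a remaining Grassmann integral over scales $\le l$. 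The initialization is given by Lemma \ref{lem_input_to_IR_integration}, which produces $J^0\in\cS(D(r_{max})^{n_v},c_0,\alpha,M)(0)$ with the representation \eqref{eq_connection_to_original_formulation} valid on $\overline{D(r(\beta,L))}^{n_v}$; together with the identity $\tfrac12(R^++R^-)=J^0-\beta V_0^L$, this reduces the left-hand side of \eqref{eq_IR_full_expansion} to $-\tfrac1{\beta L^d}\log\int e^{J^0(\psi)}d\mu_{C_{\le 0}^\infty}(\psi)+\tfrac1{L^d}V_0^L$, where the cut-off telescoping $\hat\chi_{\le 0}=\sum_{l=N_\beta}^0\chi_l$ (Lemma \ref{lem_IR_cut_off_properties} \eqref{item_IR_cut_off_equivalence}) identifies $C_{\le 0}^\infty$ with the sum of the bare single-scale covariances.

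The inductive step at level $l$ proceeds in two substeps. First, Lemma \ref{lem_self_energy_properties} and Lemma \ref{lem_IR_covariance_inclusion} guarantee that $E_l\in\cK(\cdot)(l)$ and the effective covariance $C_l$ (defined by \eqref{eq_effective_covariance_general_definition} with $E_l$ in the denominator) belongs to $\cR(\cdot)(l)$, so that Lemma \ref{lem_IR_inverse_bound} makes the Neumann expansion of $(i\omega I_{2^d}-\cE(\bk)-E_l(\bU))^{-1}$ uniformly convergent on $\overline{D(r_{max})}^{n_v}$. Second, I will dress the propagator by absorbing $E_l-E_{l+1}=\hat\chi_{\le l}W^l$ into the denominator of the scale-$l$ piece of the covariance: the Gaussian change-of-covariance formula yields a normalization factor equal to $\det(I_{2^d}-(i\omega I_{2^d}-\cE-E_{l+1})^{-1}(E_l-E_{l+1}))^{-2}$ (the $-2$ coming from the two spins and fermionic sign), producing the $l$-th $\log\det$ term of \eqref{eq_IR_full_expansion}. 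The remaining integration of $\sum_{m\ge 4}J^l_m$ against $C_l$ is controlled by Lemma \ref{lem_IR_recursive_structure} \eqref{item_IR_recursive_structure}, yielding $J^{l-1}\in\cS(\cdot)(l-1)$ with $J_0^{l-1}$ being the scalar contribution extracted at this step. The termination at $l=N_\beta$ is automatic: by \eqref{eq_IR_cut_off_vanish} the cut-off $\hat\chi_{\le N_\beta-1}$ vanishes identically, so no Grassmann integral remains, and the support condition \eqref{eq_kernel_support} follows because $E_l-E_{l+1}=\hat\chi_{\le l}W^l$ is supported on $\{\hat\chi_{\le l}\neq 0\}$. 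Analyticity and continuity on the whole disc $\overline{D(r_{max})}^{n_v}$ come from the $\cS$-membership for the $J_0^l$'s and from Lemma \ref{lem_IR_final_output} \eqref{item_IR_final_analyticity} for the $\log\det$'s; the final identity \eqref{eq_IR_full_expansion} is first proved on $\overline{D(r(\beta,L))}^{n_v}$ where the starting Grassmann-integral representation holds.

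Part (2) is obtained by running the same induction at the two inverse temperatures $\beta_1,\beta_2$ simultaneously, adding the anisothermal loop invariants $(J^l(\beta_1),J^l(\beta_2))\in\hat\cS$ and $(E_l(\beta_1),E_l(\beta_2))\in\hat\cK$: the propagating lemmas are the part-(2) versions of Lemma \ref{lem_input_to_IR_integration}, Lemma \ref{lem_self_energy_properties}, Lemma \ref{lem_IR_covariance_inclusion} and Lemma \ref{lem_IR_recursive_structure}, and the upper bound on $l$ (namely $l\ge N_{\beta_1}$ for $E_l$ and $l\ge N_{\beta_1}-1$ for $J^l$) is exactly the range where the two cut-off decompositions of Lemma \ref{lem_IR_cut_off_properties} \eqref{item_IR_cut_off_equivalence} coincide. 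The main obstacle I anticipate is the bookkeeping of the covariance-change identity: one must verify that the quadratic part of $J^l$, which is defined in position/Matsubara space, re-packages in momentum space precisely as $\hat\chi_{\le l}W^l$ entering $E_l-E_{l+1}$, and that the resulting $\log\det$ correction factor is correctly separated from the higher-order Grassmann integrand handled by Lemma \ref{lem_IR_recursive_structure}. This is essentially a Fourier-space computation plus careful tracking of the $U(1)$, translation, inversion, and conjugation invariances listed in the definition of $\cS$, which the lemmas have been set up to preserve under the step, but it is the one place where the bare book-keeping has to be carried out in detail rather than delegated to a black-box lemma.
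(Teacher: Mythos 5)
Your proposal reproduces the paper's own proof: initialize with Lemma \ref{lem_input_to_IR_integration}, iterate downward through the scales using Lemmas \ref{lem_self_energy_properties}, \ref{lem_IR_covariance_inclusion} and \ref{lem_IR_recursive_structure} to produce $J^l$ and $E_l$ (and their anisothermal pairs for part (2)), note that \eqref{eq_kernel_support} is immediate from \eqref{eq_effective_self_energy_definition}, and then invoke the telescoping-covariance expansion that the paper delegates to \cite[Lemma 7.18 (3)]{K15}. The one slip is in your heuristic for the change-of-covariance normalization: the factor produced by absorbing $E_l-E_{l+1}$ into the propagator at scale $l$ is $\det(\cdots)^{+2}$, not $\det(\cdots)^{-2}$, because a fermionic Gaussian integral $\int e^{\opsi W\psi}\,d\mu_C(\psi)$ yields a determinant with a positive exponent (the $2$ is simply the spin degeneracy), and the minus sign visible in front of the $\log\det$ terms in \eqref{eq_IR_full_expansion} then comes from the overall $-\frac{1}{\beta L^d}\log$ applied to the partition-function representation.
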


\begin{proof}
\eqref{item_IR_full_expansion}: By Lemma
 \ref{lem_input_to_IR_integration}
 \eqref{item_connection_to_original_formulation} there exist
 $J^0\in\cS(D(r_{max})^{n_v},c_0,\alpha,M)(0)$ and
 $r(\beta,L)\in\R_{>0}$ depending on $\beta$, $L$, independent of $h$
 such that \eqref{eq_connection_to_original_formulation} holds. Let
 $l\in \{0,-1,\cdots,N_{\beta}\}$ and assume that we have \\
 $(J^0,J^{-1},\cdots,J^l)\in
 \prod_{j=0}^l\cS(D(r_{max})^{n_v},c_0,\alpha,M)(j)$.  
Define $W^j,E_l\in$\\ $\Map(\overline{D(r_{max})}^{n_v},\Map(\R^{d+1},\Mat(2^d,\C)))$ 
$(j=0,-1,\cdots,l)$ by \eqref{eq_effective_kernel_definition},\\
  \eqref{eq_effective_self_energy_definition} respectively.
By Lemma \ref{lem_self_energy_properties} \eqref{item_self_energy_property}, $E_l\in
 \cK(D(r_{max})^{n_v},\alpha,M)(l)$. Define $C_l\in\Map(\overline{D(r_{max})}^{n_v},\Map(I_0^2,\C))$ by
 \eqref{eq_effective_covariance_general_definition} with $E_l$ in place
 of $G_l$. We can apply Lemma \ref{lem_IR_covariance_inclusion}
 \eqref{item_IR_covariance_inclusion} to conclude that
 $C_l\in\cR(D(r_{max})^{n_v},c_0,M)(l)$. Define $J^{l-1}\in
 \Map(\overline{D(r_{max})}^{n_v},\bigwedge \cV)$ by
\begin{align*}
J^{l-1}(\bU)(\psi):=\sum_{n=0}^{\infty}\frac{1}{n!}\left(\frac{d}{dz}\right)^n\Big|_{z=0}\log\left(\int
 e^{z\sum_{m=4}^NJ_m^l(\bU)(\psi+\psi^1)}d\mu_{C_l(\bU)}(\psi^1)\right).
\end{align*}
By Lemma \ref{lem_IR_recursive_structure}
 \eqref{item_IR_recursive_structure},
 $J^{l-1}\in\cS(D(r_{max})^{n_v},c_0,\alpha,M)(l-1)$. Thus, we have
 inductively created 
$J^{l}\in\cS(D(r_{max})^{n_v},c_0,\alpha,M)(l)$
 $(l=0,-1,\cdots,N_{\beta}-1)$,
$E_{l}\in\cK(D(r_{max})^{n_v},\alpha,M)(l)$
 $(l=0,-1,\cdots,N_{\beta})$. 
It is clear from the definition
 \eqref{eq_effective_self_energy_definition} that $E_l$ satisfies
 \eqref{eq_kernel_support}.
Note that by \eqref{eq_connection_to_original_formulation} and taking
 $r(\beta,L)$ smaller independently of $h$ if necessary the left-hand
 side of \eqref{eq_IR_full_expansion} is equal to 
\begin{align*}
-\frac{1}{\beta L^d}\log\left(\int e^{J^0(\bU)(\psi)}d\mu_{C_{\le 0}^{\infty}}(\psi)\right)+\frac{1}{L^d}V_0^L(\bU)
\end{align*}
for any $\bU\in \overline{D(r(\beta,L))}^{n_v}$. Then, we can expand the
 first term in the same way
 as in the proof of \cite[\mbox{Lemma 7.18 (3)}]{K15} by taking
 $r(\beta,L)$ smaller if necessary again 
and obtain the equality \eqref{eq_IR_full_expansion}.

\eqref{item_IR_full_difference}: By Lemma
 \ref{lem_input_to_IR_integration} \eqref{item_input_anisothermal},
$(J^{0}(\beta_1),J^{0}(\beta_2))\in\hat{\cS}(D(r_{max})^{n_v},c_0,\alpha,M)(0)$.
 Assume that $l\in\{0,-1,\cdots,N_{\beta_1}\}$ and
 $(J^{j}(\beta_1),J^{j}(\beta_2))\in$\\
 $\hat{\cS}(D(r_{max})^{n_v},c_0,\alpha,M)(j)$
 for all $j\in \{0,-1,\cdots,l\}$. By Lemma
 \ref{lem_self_energy_properties}
 \eqref{item_self_energy_property_difference}, 
$(E_l(\beta_1),E_l(\beta_2))\in\hat{\cK}(D(r_{max})^{n_v},\alpha,M)(l)$. 
Then, by Lemma \ref{lem_IR_covariance_inclusion}
 \eqref{item_IR_covariance_inclusion_difference}, 
$(C_l(\beta_1),C_l(\beta_2))\in \hat{\cR}(D(r_{max})^{n_v},c_0,M)(l)$.
Then, by Lemma \ref{lem_IR_recursive_structure}
 \eqref{item_IR_recursive_structure_difference} 
$(J^{l-1}(\beta_1),J^{l-1}(\beta_2))\in
 \hat{\cS}(D(r_{max})^{n_v},c_0,\alpha,M)(l-1)$.
The induction with $l$ ensures that the result holds true.
\end{proof}

\begin{remark} In fact the derivation of \eqref{eq_IR_full_expansion}
 well describes how to update the covariance and integrate the Grassmann
 polynomial by using the updated covariance at every step of the IR
 integration. Despite its conceptional importance, here we only refer to
 the corresponding part of \cite[\mbox{Lemma 7.18 (3)}]{K15} without
 reproducing it, since this paper is intended to be a continuation of \cite{K15}.
 However, we should remark that we need to use the relation 
\begin{align*}
&J_2^l(\bU)(\psi)=\frac{1}{h^2}\sum_{(\rho,\bx,\s,x),(\eta,\by,\tau,y)\in
 I_0}\frac{\delta_{\s,\tau}}{\beta L^d}\sum_{(\o,\bk)\in \cM_h\times
 \G(L)^*}e^{i\<\bk,\bx-\by\>+i\o(x-y)}\\
&\qquad\qquad\qquad\cdot W^l(\bU)(\o,\bk)(\rho,\eta)\psi_{\rho\bx\s
 x}\opsi_{\eta\by \tau y},\\
&(\bU\in \overline{D(r_{max})}^{n_v})
\end{align*}
to update the covariance by substituting $W^l$. To derive this equality,
 we use the invariance $J_2^l(\bU)(\psi)=J_2^l(\bU)(\cR\psi)$ with
 $S:I\to I$, $Q:I\to \R$ defined in \eqref{eq_IR_particle_hole},
 \eqref{eq_IR_spin_up}, \eqref{eq_IR_spin_inversion},
 \eqref{eq_IR_translation}, embodied in
 $\cS(D(r_{max})^{n_v},c_0,\alpha,M)(l)$. 
See \cite[\mbox{Lemma 7.6 (1)}]{K15} for the derivation of the same relation.
\end{remark}

Define $J_{end}\in \Map(\overline{D(r_{max})}^{n_v},\C)$ by the
right-hand side of \eqref{eq_IR_full_expansion}. Analyticity and
convergent properties of the free energy density follow from the
properties of $J_{end}$. Let us summarize them in the next two lemmas.

\begin{lemma}\label{lem_final_term_properties}
There exists a constant $c'(d,M,c_w,c_{\chi})\in\R_{>0}$ depending only
 on $d$, $M$, $c_w$, $c_{\chi}$ such that the following statements hold
 true. 
\begin{enumerate}
\item\label{item_final_term_regularity}
$$
J_{end}\in C(\overline{D(r_{max})}^{n_v};\C)\cap C^{\o}(D(r_{max})^{n_v};\C).
$$
\item\label{item_final_term_upper_bound}
\begin{align*}
|J_{end}(\bU)|\le
 c'(d,M,c_w,c_{\chi})f_{\bt}^{-\frac{d}{2}}\alpha^{-1}+v_0r_{max},\quad
 (\forall \bU\in \overline{D(r_{max})}^{n_v}).
\end{align*}
\item\label{item_final_term_difference}
In addition, assume \eqref{eq_basic_beta_h_assumption}.
Then,
\begin{align*}
&|J_{end}(\beta_1)(\bU)-J_{end}(\beta_2)(\bU)|\le
 c'(d,M,c_w,c_{\chi})\beta_1^{-\frac{1}{2}}f_{\bt}^{-\frac{d}{2}}\alpha^{-1},\\
&(\forall \bU\in \overline{D(r_{max})}^{n_v}).
\end{align*}
\end{enumerate}
\end{lemma}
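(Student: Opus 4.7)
The plan is to prove all three statements by decomposing $J_{end}(\bU)$ into its three constituent parts,
\begin{align*}
J_{end}(\bU)=-\frac{1}{\beta L^d}\sum_{l=N_{\beta}}^{0}J_0^l(\bU)\ -\ 2\sum_{l=N_{\beta}}^{0}H_l(\bU)\ +\ \frac{1}{L^d}V_0^L(\bU),
\end{align*}
where $H_l$ denotes the function defined in Lemma~\ref{lem_IR_final_output} with $G_l:=E_l$, $G_{l+1}:=E_{l+1}$ (with $E_1:=0$). The structural hypotheses $J_0^l\in\cS(D(r_{max})^{n_v},c_0,\alpha,M)(l)\cap C(\overline{D(r_{max})}^{n_v};\C)$, $E_l\in\cK(D(r_{max})^{n_v},\alpha,M)(l)$, together with the support property \eqref{eq_kernel_support}, guarantee that Lemma~\ref{lem_IR_final_output} applies termwise.

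For part \eqref{item_final_term_regularity} I would argue that each summand belongs to $C(\overline{D(r_{max})}^{n_v};\C)\cap C^{\o}(D(r_{max})^{n_v};\C)$: the zero-degree part $J_0^l$ is in this class by the given inclusion in $C(\overline{D(r_{max})}^{n_v};\C)$ combined with the analyticity built into $\cS(D(r_{max})^{n_v},c_0,\alpha,M)(l)$; each $H_l$ is in the same class by Lemma~\ref{lem_IR_final_output}~\eqref{item_IR_final_analyticity}; and $V_0^L$ is linear in $\bU$ by the property \eqref{item_linearity}. Because these are finitely many terms (indexed by $l\in\{N_{\beta},\ldots,0\}$), no further convergence argument is needed.

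For part \eqref{item_final_term_upper_bound} I would sum the pointwise bounds. Using $N=2^{d+2}L^d\beta h$ and \eqref{eq_IR_boundedness_0th},
\begin{align*}
\frac{1}{\beta L^d}\sum_{l=N_{\beta}}^{0}|J_0^l(\bU)|\le 2^{d+2}\alpha^{-1}\sum_{l=N_{\beta}}^{0}M^{(d+\frac{3}{2})l}\le \frac{2^{d+2}}{1-M^{-(d+3/2)}}\alpha^{-1}.
\end{align*}
Lemma~\ref{lem_IR_final_output}~\eqref{item_IR_final_bound} (whose hypothesis $\alpha\ge c(d)$ holds by \eqref{eq_condition_IR}) gives $|H_l(\bU)|\le c(d,M,c_w,c_{\chi})f_{\bt}^{-d/2}M^{(d+1)l}\alpha^{-2}$, and summing the geometric series yields a bound of the form $c(d,M,c_w,c_{\chi})f_{\bt}^{-d/2}\alpha^{-2}$. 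Finally, $|V_0^L(\bU)|/L^d\le U_{max}v_0\le r_{max}v_0$ since $\bU\in\overline{D(r_{max})}^{n_v}$. Adding the three contributions and absorbing $\alpha^{-2}\le\alpha^{-1}$ proves the claimed inequality.

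For part \eqref{item_final_term_difference} the main work is to track the mismatch between $N_{\beta_1}$ and $N_{\beta_2}$; this is the step I expect to be the most delicate. On the common range $l\in\{N_{\beta_1},\ldots,0\}$ I would use \eqref{eq_IR_boundedness_difference_0th} from the hypothesis $(J_0^l(\beta_1),J_0^l(\beta_2))\in\hat{\cS}$ and Lemma~\ref{lem_IR_final_output}~\eqref{item_IR_final_bound_difference} (applicable because $(E_l(\beta_1),E_l(\beta_2))\in\hat{\cK}$), summing
\begin{align*}
\sum_{l=N_{\beta_1}}^{0}M^{(d+\frac{1}{2})l}\ \ \text{and}\ \ \sum_{l=N_{\beta_1}}^{0}M^{dl}
\end{align*}
to obtain bounded geometric sums multiplied by $\beta_1^{-1/2}$. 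For $l\in\{N_{\beta_2},\ldots,N_{\beta_1}-1\}$ only the $\beta_2$-side terms appear; applying the pointwise bounds from \eqref{eq_IR_boundedness_0th} and Lemma~\ref{lem_IR_final_output}~\eqref{item_IR_final_bound} to $J_0^l(\beta_2)$ and $H_l(\beta_2)$ gives contributions majorised by geometric series with ratios $M^{d+3/2}$, $M^{d+1}$ respectively, evaluated near $l=N_{\beta_1}$. Since $M^{N_{\beta_1}}\lesssim 1/\beta_1$ by Lemma~\ref{lem_IR_cut_off_properties}~\eqref{item_IR_cut_off_implication} (with $\beta=\beta_1$), these tails are bounded by $\beta_1^{-(d+3/2)}$ and $\beta_1^{-(d+1)}$, both of which are dominated by $\beta_1^{-1/2}$ because $\beta_1\ge 1$ and $d\ge 2$. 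The $V_0^L/L^d$ terms are identical at $\beta_1$ and $\beta_2$ (the property \eqref{item_infinite_volume_limit} is irrelevant here; only that $V_0^L$ does not depend on $\beta$), so they cancel. Combining these estimates yields the stated $\beta_1^{-1/2}f_{\bt}^{-d/2}\alpha^{-1}$ bound, completing the proof.
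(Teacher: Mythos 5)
The overall decomposition, the per-term application of Lemma~\ref{lem_IR_final_output}, and the geometric summation all match the paper's proof; parts~\eqref{item_final_term_regularity} and \eqref{item_final_term_upper_bound} are correct as written.

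In part~\eqref{item_final_term_difference}, however, your key citation is in the wrong direction. Lemma~\ref{lem_IR_cut_off_properties}~\eqref{item_IR_cut_off_implication} gives $\tfrac{1}{\beta_1}\le M_{IR}M^{N_{\beta_1}+1}$, i.e.\ a \emph{lower} bound $M^{N_{\beta_1}}\ge (\beta_1 M_{IR} M)^{-1}$. What you actually need to control the geometric tails over the extra scales is the \emph{upper} bound $M^{N_{\beta_1}}\lesssim 1/\beta_1$. That bound is true, but it follows directly from the definition
\begin{align*}
N_{\beta}=\min\left\{\left\lfloor\frac{\log\bigl(\sqrt{3}/(\beta M_{IR})\bigr)}{\log M}\right\rfloor,\,0\right\},
\end{align*}
which gives $M^{N_{\beta}}\le \sqrt{3}/(\beta M_{IR})$ whenever $N_{\beta}<0$, while for $N_{\beta}=0$ one has $\beta\le\sqrt{3}/M_{IR}$ so that $M^{N_{\beta}}=1\lesssim \beta^{-1/2}$ trivially. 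Replace the citation by this elementary observation from the definition; as it stands, the step points a reader to an inequality that does not yield the needed bound.

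Two minor indexing corrections: the zero-degree sum in the decomposition should run over $l\in\{N_{\beta}-1,\ldots,0\}$ (matching \eqref{eq_IR_full_expansion}), not $\{N_{\beta},\ldots,0\}$; correspondingly, in part~\eqref{item_final_term_difference} the common range for the $J_0^l$ difference is $\{N_{\beta_1}-1,\ldots,0\}$ and the extra $\beta_2$-only scales are $\{N_{\beta_2},\ldots,N_{\beta_1}-2\}$. These off-by-one shifts do not change the estimates but should be tidied. Also, your remark that the domination requires $d\ge 2$ is unnecessary: $d+\tfrac{3}{2}\ge\tfrac{1}{2}$ and $d+1\ge\tfrac{1}{2}$ hold for all $d\ge 0$, so $\beta_1^{-(d+3/2)},\beta_1^{-(d+1)}\le\beta_1^{-1/2}$ for $\beta_1\ge 1$ with no constraint on $d$.
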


\begin{remark}
Since we have fixed the $(d,M,c_w,c_{\chi})$-dependent constant
 $c(d,M,c_w,c_{\chi})$ in \eqref{eq_condition_IR}, we use the different notation $c'(d,M,c_w,c_{\chi})$
 to express a positive constant depending only on $d,M,c_w,c_{\chi}$.
\end{remark}

\begin{proof}[Proof of Lemma \ref{lem_final_term_properties}]
Since \eqref{eq_kernel_support} holds,  Lemma
 \ref{lem_IR_final_output} \eqref{item_IR_final_analyticity} implies
 the claim \eqref{item_final_term_regularity}. Moreover,
 by \eqref{eq_IR_boundedness_0th} and Lemma \ref{lem_IR_final_output}
 \eqref{item_IR_final_bound},
\begin{align*}
&|J_{end}(\bU)|\\
&\le
 c(d)\sum_{l=0}^{N_{\beta}-1}M^{(d+\frac{3}{2})l}\alpha^{-1}+
c'(d,M,c_w,c_{\chi})\sum_{l=0}^{N_{\beta}}f_{\bt}^{-\frac{d}{2}}M^{(d+1)l}\alpha^{-2}+v_0
 r_{max}\\
&\le c'(d,M,c_w,c_{\chi})f_{\bt}^{-\frac{d}{2}}\alpha^{-1}+v_0
 r_{max},\quad (\forall \bU\in \overline{D(r_{max})}^{n_v}).
\end{align*}
Thus, the claim \eqref{item_final_term_upper_bound} is true.

To prove the claim \eqref{item_final_term_difference}, assume
 \eqref{eq_basic_beta_h_assumption}. By \eqref{eq_IR_boundedness_0th},
 \eqref{eq_IR_boundedness_difference_0th} and
 Lemma \ref{lem_IR_final_output} \eqref{item_IR_final_bound},\eqref{item_IR_final_bound_difference},
\begin{align*}
&|J_{end}(\beta_1)(\bU)-J_{end}(\beta_2)(\bU)|\\
&\le\left|\frac{1}{\beta_1L^d}\sum_{l=0}^{N_{\beta_1}-1}J_0^l(\beta_1)(\bU)-
\frac{1}{\beta_2L^d}\sum_{l=0}^{N_{\beta_1}-1}J_0^l(\beta_2)(\bU)\right|\\
&\quad +
 \frac{1}{\beta_2L^d}\sum_{l=N_{\beta_1}-2}^{N_{\beta_2}}|J_{0}^l(\beta_2)(\bU)|\\
&\quad + c'(d,M,c_w,c_{\chi})\beta_1^{-\frac{1}{2}}
f_{\bt}^{-\frac{d}{2}}\sum_{l=0}^{N_{\beta_1}}M^{dl}\alpha^{-2}\\
&\quad + c'(d,M,c_w,c_{\chi})
f_{\bt}^{-\frac{d}{2}}\sum_{l=N_{\beta_1}-1}^{N_{\beta_2}}M^{(d+1)l}\alpha^{-2}\\
&\le c'(d,M,c_w,c_{\chi})\beta_1^{-\frac{1}{2}}
f_{\bt}^{-\frac{d}{2}}\alpha^{-1},\quad (\forall \bU\in \overline{D(r_{max})}^{n_v}).
\end{align*}
\end{proof}

\begin{lemma}\label{lem_final_term_convergence}
Let $K$ be a non-empty compact set of $\C^{n_v}$ satisfying $K\subset
 D(r_{max})^{n_v}$. Then, the following statements hold.
\begin{enumerate}
\item\label{item_final_term_h_limit}
For any $\beta\in \R_{>0}$, $L\in\N$ with $L\ge \beta$,
     $J_{end}(\beta,L,h)$ converges
     in $C(K;\C)$ as $h\to\infty (h\in (2/\beta)\N)$.
\item\label{item_final_term_L_limit}
Set $J(\beta,L):=\lim_{h\to\infty, h\in
     (2/\beta)\N}J_{end}(\beta,L,h)$. Then, for any $\beta\in \R_{>0}$,
     $J(\beta,L)$ converges in $C(K;\C)$ as $L\to\infty(L\in\N)$.
\item\label{item_final_term_beta_limit}
Set $J(\beta):=\lim_{L\to \infty,L\in\N}J(\beta,L)$. Then,
     $J(\beta)$ converges in $C(K;\C)$ as $\beta\to \infty(\beta\in\N)$.
\end{enumerate}
\end{lemma}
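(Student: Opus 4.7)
The plan is to apply Vitali's convergence theorem for locally uniformly bounded families of analytic functions: if $\{f_n\}\subset C^{\o}(D(r_{max})^{n_v};\C)$ is bounded uniformly on compacts and converges pointwise on a subset of $D(r_{max})^{n_v}$ with an accumulation point, then $f_n$ converges uniformly on compact subsets. By Lemma \ref{lem_final_term_properties} \eqref{item_final_term_regularity},\eqref{item_final_term_upper_bound}, each $J_{end}(\beta,L,h)$ is analytic in $D(r_{max})^{n_v}$ and bounded there by a constant independent of $h$, $L$ and $\beta$. Thus the ``normal family'' hypothesis is automatic in all three parts, and the task reduces to establishing pointwise convergence in part (1) and Cauchy-ness in $C(K;\C)$ in parts (2) and (3).

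For part (1), I will extract pointwise convergence on a real segment inside $D(r_{max})^{n_v}$. Using \eqref{eq_IR_full_expansion} to identify $J_{end}(\beta,L,h)(\bU)$ with $-(\beta L^d)^{-1}\log(\int e^{(R^+(\bU)+R^-(\bU))/2}d\mu_{C_{\le 0}^{\infty}}(\psi)) + V_0^L(\bU)/L^d$ for $\bU$ in a small real neighborhood (given by $r(\beta,L)$), Lemma \ref{lem_grassmann_symmetric} \eqref{item_logarithm_final_h_estimate} identifies this with $-(\beta L^d)^{-1}\log(\int e^{-V(\bU)}d\mu_{C}(\psi)) + V_0^L(\bU)/L^d$ up to $O(1/h)$, and Lemma \ref{lem_grassmann_formulation} \eqref{item_grassmann_formulation_convergence} then gives pointwise convergence as $h\to\infty$ to $-(\beta L^d)^{-1}\log(\Tr e^{-\beta H}/\Tr e^{-\beta H_0}) + V_0^L(\bU)/L^d$. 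Vitali's theorem upgrades this to uniform convergence on $K$. Part (3) follows similarly from the uniform Cauchy bound $|J_{end}(\beta_1)(\bU)-J_{end}(\beta_2)(\bU)|\le c\beta_1^{-1/2}$ of Lemma \ref{lem_final_term_properties} \eqref{item_final_term_difference}: taking $h\to\infty$ and then $L\to\infty$ (using parts (1) and (2)) preserves this bound, making $\{J(\beta)\}_{\beta\in\N}$ Cauchy in $C(K;\C)$ and hence convergent.

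The main obstacle is part (2), since the paper does not provide a direct $L$-analogue of Lemma \ref{lem_final_term_properties} \eqref{item_final_term_difference}. I plan to produce one by revisiting the scale-by-scale construction of Subsection \ref{subsec_completion_IR}. At each scale $l$ one compares the effective interactions $J^l$ and self-energies $E_l$ defined at two lattice sizes $L_1\le L_2$ (viewing any object on $\G(L_1)$ as a periodic object on $\G(L_2)$), and tracks the difference in the scale norms $\|\cdot\|_{l,t}$ already built into $\cS(D,c_0,\alpha,M)(l)$. The exponential decay of covariances recorded in \eqref{eq_IR_covariance_exponential_decay} and \eqref{eq_continuous_covariance_decay}, together with the determinant and decay bounds of $\cR(D,c_0,M)(l)$, should produce an $L_1^{-\gamma}$ gain analogous to the $\beta_1^{-1/2}$ gain in Subsection \ref{subsec_anisothermal}; the inductive RG step of Lemma \ref{lem_IR_recursive_structure} then propagates this gain to every scale. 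Combined with the analogous $L$-limit at the level of the UV integration (the counterpart for $L$ of Appendix \ref{app_h_L_limit}, which handles the $h$-limit), this will yield a uniform estimate $|J(\beta,L_1)(\bU)-J(\beta,L_2)(\bU)|\le c L_1^{-\gamma}$ on $K$, completing the Cauchy argument and thereby part (2).
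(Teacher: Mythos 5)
Your approach via Vitali's theorem for part \eqref{item_final_term_h_limit} is a valid alternative to the paper's: the uniform bound from Lemma \ref{lem_final_term_properties} \eqref{item_final_term_upper_bound} gives normality, and the pointwise identification with $-\frac{1}{\beta L^d}\log(\Tr e^{-\beta H}/\Tr e^{-\beta H_0})$ on a real neighborhood (via Lemma \ref{lem_grassmann_symmetric} \eqref{item_logarithm_final_h_estimate} and Lemma \ref{lem_grassmann_formulation} \eqref{item_grassmann_formulation_convergence}) gives pointwise convergence on a uniqueness set. (Minor slip: $J_{end}$ equals $-\frac{1}{\beta L^d}\log\big(\int e^{(R^++R^-)/2}d\mu_{C_{\le 0}^{\infty}}\big)$ on $\overline{D(r(\beta,L))}^{n_v}$ with no extra $V_0^L/L^d$ term, since $J_{end}$ is defined to be the \emph{right-hand} side of \eqref{eq_IR_full_expansion}.) Part \eqref{item_final_term_beta_limit} of your argument matches the paper's.

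Your treatment of part \eqref{item_final_term_L_limit}, however, contains a genuine gap and also a misreading of the paper. You state that the paper provides no $L$-analogue and that Appendix \ref{app_h_L_limit} ``handles the $h$-limit'', leading you to propose constructing an entirely new anisotropic-in-$L$ RG estimate (comparing $L_1\le L_2$ at every scale) from scratch. But Appendix \ref{app_h_L_limit} is titled ``Lemmas for the time-continuum, infinite-volume limit'', and Lemma \ref{lem_truncation_h_L_limit} \eqref{item_truncation_L_limit} already proves exactly the $L\to\infty$ convergence of each Taylor coefficient $a_n(L) = \lim_{h\to\infty}a_n(L,h)$ in $C(\overline{D(r)}^{n_v};\C)$. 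The paper's proof expands $J_{end}(\beta,L,h)(\bU)$ in a Taylor series at $\bU=\mathbf 0$ (equation \eqref{eq_final_term_taylor_expansion}), uses the Cauchy estimate \eqref{eq_final_term_each_bound} from Lemma \ref{lem_final_term_properties} \eqref{item_final_term_upper_bound} to obtain a geometrically decaying, $L$- and $h$-independent envelope for the coefficients, identifies each coefficient with a piece that vanishes uniformly as $h\to\infty$ plus $-\frac{1}{\beta L^d n!}(\partial/\partial z)^n\log\big(\int e^{-zV(\bU)}d\mu_C\big)\big|_{z=0}$ in \eqref{eq_final_each_term_decomposition}, and then applies the dominated convergence theorem in $l^1(\N\cup\{0\};C(\overline{D(r_0-\eps)}^{n_v};\C))$, invoking Lemma \ref{lem_truncation_h_L_limit} for both the $h$-limit and the $L$-limit of the coefficients. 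Your proposed scale-by-scale $L_1$ versus $L_2$ comparison is not only unnecessary, it is also not carried out, and it is not at all clear that the spatial finite-size error would produce an $L_1^{-\gamma}$ gain analogous to the temporal $\beta_1^{-1/2}$ gain of Subsection \ref{subsec_anisothermal} (the periodicity of the lattice in space enters the covariance decay bounds quite differently from the Matsubara cutoff). You should instead route part \eqref{item_final_term_L_limit}, and indeed part \eqref{item_final_term_h_limit} as well, through the Taylor-coefficient argument and Lemma \ref{lem_truncation_h_L_limit}.
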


\begin{proof}
Though the proof is parallel to the proof of \cite[\mbox{Lemma 7.20}]{K15},
we present it for completeness.
Take $r_0\in (0,r_{max})$ and $\eps\in (0,r_0)$. Since $J_{end}\in
 C^{\o}(D(r_{max})^{n_v};\C)$ by Lemma \ref{lem_final_term_properties}
\eqref{item_final_term_regularity}, we have that for any $\bU\in \overline{D(r_0-\eps)}^{n_v}
$, 
\begin{align}
J_{end}(\beta,L,h)(\bU)=\sum_{n=0}^{\infty}\frac{1}{n!}\left(\frac{\partial}{\partial
 z}\right)^nJ_{end}(\beta,L,h)(z\bU)\Big|_{z=0}.\label{eq_final_term_taylor_expansion}
\end{align}
By Lemma \ref{lem_final_term_properties}
 \eqref{item_final_term_upper_bound},
\begin{align}
&\sup_{\bU\in \overline{D(r_0-\eps)}^{n_v}}\left|\frac{1}{n!}\left(\frac{\partial}{\partial
 z}\right)^nJ_{end}(\beta,L,h)(z\bU)\Big|_{z=0}\right|\label{eq_final_term_each_bound}\\
&=\sup_{\bU\in \overline{D(r_0-\eps)}^{n_v}}\left|\frac{1}{2\pi
 i}\oint_{|z|=r_0/(r_0-\eps)}dz \frac{J_{end}(\beta,L,h)(z\bU)}{z^{n+1}}\right|\notag\\
&\le \left(\frac{r_0-\eps}{r_0}\right)^n(
 c'(d,M,c_w,c_{\chi})f_{\bt}^{-\frac{d}{2}}\alpha^{-1}+v_0r_{max}),\quad
 (\forall n\in \N\cup\{0\}).\notag
\end{align}
By Lemma \ref{lem_grassmann_symmetric}
 \eqref{item_grassmann_symmetric_positive} and
Lemma \ref{lem_complete_IR_integration} \eqref{item_IR_full_expansion}
 there exist $h$-independent constants $h_0$, $c_1\in\R_{>0}$ such
 that for any $h\in (2/\beta)\N$ with $h\ge h_0$ and
 $\bU\in\overline{D(r_0-\eps)}^{n_v}$,
\begin{align}
&\frac{1}{n!}\left(\frac{\partial}{\partial
 z}\right)^nJ_{end}(\beta,L,h)(z\bU)\Big|_{z=0}\label{eq_final_each_term_decomposition}\\
&=\frac{1}{2\pi i}\oint_{|z|=c_1}dz \frac{1}{z^{n+1}}
\Bigg(-\frac{1}{\beta L^d}\log\Bigg(\int
 e^{\frac{1}{2}(R^+(z\bU)(\psi)+R^-(z\bU)(\psi))}d\mu_{C_{\le 0}^{\infty}}(\psi)
\Bigg)
\notag\\
&\qquad\qquad\qquad\qquad\quad+\frac{1}{\beta L^d}\log\Bigg(\int
 e^{-V(z\bU)(\psi)}d\mu_{C}(\psi)\Bigg)\Bigg)\notag\\
&\quad -\frac{1}{\beta L^d n!}\left(\frac{\partial}{\partial
 z}\right)^n\log\Bigg(\int
 e^{-zV(\bU)(\psi)}d\mu_{C}(\psi)\Bigg)\Bigg|_{z=0}.\notag
\end{align}
Here we used that $V(\bU)(\psi)$ is linear with $\bU$. By Lemma
 \ref{lem_grassmann_symmetric} \eqref{item_logarithm_final_h_estimate}
 the first term of the right-hand side of
 \eqref{eq_final_each_term_decomposition} uniformly converges to 0 with
 respect to $\bU\in \overline{D(r_0-\eps)}^{n_v}$ as $h\to \infty$.
By Lemma \ref{lem_truncation_h_L_limit} proved in Appendix
 \ref{app_h_L_limit} the second term of the right-hand side of
 \eqref{eq_final_each_term_decomposition} uniformly converges with
 $\bU\in \overline{D(r_0-\eps)}^{n_v}$ as we send $h\to \infty$ and then
 send $L\to \infty$.
Therefore, 
\begin{align*}
&\lim_{h\to \infty\atop h\in (2/\beta)\N}\frac{1}{n!}\left(\frac{\partial}{\partial
 z}\right)^nJ_{end}(\beta,L,h)(z\cdot)\Big|_{z=0},\\
&\lim_{L\to\infty\atop L\in \N}
\lim_{h\to \infty\atop h\in (2/\beta)\N}\frac{1}{n!}\left(\frac{\partial}{\partial
 z}\right)^nJ_{end}(\beta,L,h)(z\cdot)\Big|_{z=0}
\end{align*}
converge in $C(\overline{D(r_0-\eps)}^{n_v};\C)$. Since the right-hand
 side of \eqref{eq_final_term_each_bound} is summable over $n\in \N\cup
 \{0\}$, we can apply the dominated convergence theorem in
 $l^1(\N\cup\{0\};C(\overline{D(r_0-\eps)}^{n_v};\C))$ to the expansion
 \eqref{eq_final_term_taylor_expansion} and conclude that
\begin{align*}
&\lim_{h\to \infty\atop h\in (2/\beta)\N}J_{end}(\beta,L,h),\quad
\lim_{L\to\infty\atop L\in \N}
\lim_{h\to \infty\atop h\in (2/\beta)\N}J_{end}(\beta,L,h)
\end{align*}
converge in $C(\overline{D(r_0-\eps)}^{n_v};\C)$. Set
$$
J(\beta):=\lim_{L\to\infty\atop L\in \N}
\lim_{h\to \infty\atop h\in (2/\beta)\N}J_{end}(\beta,L,h).
$$
By taking the limits in the inequality obtained in Lemma
 \ref{lem_final_term_properties} \eqref{item_final_term_difference} we
 see that $(J(\beta))_{\beta\in\N}$ is a Cauchy sequence in
 $C(\overline{D(r_0-\eps)}^{n_v};\C)$. Thus, $\lim_{\beta\to
 \infty,\beta\in\N}J(\beta)$ converges in this Banach space. For any
 compact set $K$ of $\C^{n_v}$ satisfying $K\subset D(r_{max})^{n_v}$ we
 can choose $r_0\in (0,r_{max})$ and $\eps\in (0,r_0)$ so that $K\subset
 \overline{D(r_0-\eps)}^{n_v}$. Thus, the
 claimed convergence results in $C(K;\C)$ follow from the above arguments. 
\end{proof}

Before proceeding to the proof of Theorem \ref{thm_main_theorem} we
state a couple of necessary lemmas, which are close to \cite[\mbox{Lemma
E.2, Lemma E.3}]{K15}. In the proofs of these lemmas
$\|\cdot\|_{\mathfrak{B}(F_f)}$ denotes the operator norm for linear
operators on $F_f(L^2(\cB\times \G(L)\times \spin))$. 

\begin{lemma}\label{lem_partition_function_difference}
For any $\beta \in\R_{\ge 1}$,
\begin{align*}
&\left|\frac{1}{\beta L^d}\log\left(\frac{\Tr e^{-\beta H}}{\Tr
 e^{-\beta H_0}}\right)- \frac{1}{\lfloor\beta\rfloor L^d}\log\left(\frac{\Tr e^{-\lfloor\beta\rfloor H}}{\Tr
 e^{-\lfloor\beta\rfloor H_0}}\right)\right|\\
&\le \int_{\lfloor\beta\rfloor}^{\beta}d\gamma\frac{1}{\gamma^2L^d}\left|\log\left(\frac{\Tr e^{-\gamma H}}{\Tr
 e^{-\gamma H_0}}\right)\right|\\
&\quad +\Bigg(v_0\max_{l\in\{1,2,\cdots,n_v\}}
 |U_{l}|+2^{d+1}\sum_{j=1}^{N_v}v_j(0)\max_{l\in
 \{1,2,\cdots,n_v\}}|U_{l}|\\
&\qquad\quad +2^{d+2}\sup_{\bk\in\R^d}\|E(\beps^L,\btheta)(\bk)\|_{2^d\times
 2^d}\Bigg)\log\left(\frac{\beta}{\lfloor \beta\rfloor}\right).
\end{align*}
\end{lemma}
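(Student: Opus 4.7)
The plan is to apply the fundamental theorem of calculus to $f(\gamma) := \frac{1}{\gamma L^d}\log(\Tr e^{-\gamma H}/\Tr e^{-\gamma H_0})$ on $[\lfloor\beta\rfloor,\beta]$. Writing $g(\gamma) := \log(\Tr e^{-\gamma H}/\Tr e^{-\gamma H_0})$, the quotient rule yields
\begin{align*}
f'(\gamma) = -\frac{g(\gamma)}{\gamma^2 L^d} + \frac{g'(\gamma)}{\gamma L^d},\qquad g'(\gamma) = -\frac{\Tr(H e^{-\gamma H})}{\Tr e^{-\gamma H}} + \frac{\Tr(H_0 e^{-\gamma H_0})}{\Tr e^{-\gamma H_0}}.
\end{align*}
Integrating from $\lfloor\beta\rfloor$ to $\beta$ and applying the triangle inequality gives the first term on the right-hand side of the claimed bound directly, leaving $\left|\int_{\lfloor\beta\rfloor}^\beta g'(\gamma)/(\gamma L^d)\,d\gamma\right|$ to estimate.

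Since every thermal expectation $\Tr(A e^{-\gamma K})/\Tr e^{-\gamma K}$ is bounded by $\|A\|_{\mathfrak{B}(F_f)}$, one has $|g'(\gamma)| \le \|H\|_{\mathfrak{B}(F_f)} + \|H_0\|_{\mathfrak{B}(F_f)} \le 2\|H_0\|_{\mathfrak{B}(F_f)} + \|V\|_{\mathfrak{B}(F_f)}$. Combined with $\int_{\lfloor\beta\rfloor}^\beta d\gamma/\gamma = \log(\beta/\lfloor\beta\rfloor)$, this reduces the problem to bounding these two operator norms in terms of $L^d$ times the constants appearing in the prefactor of the statement.

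For the kinetic part, I would diagonalize $H_0$ in momentum space as $H_0 = \sum_{\rho,\bk,\s}\lambda_\rho(\bk) c^*_{\rho\bk\s} c_{\rho\bk\s}$ with $|\lambda_\rho(\bk)|\le \|E(\beps^L,\btheta)(\bk)\|_{2^d\times 2^d}$; summing absolute values over the $2^{d+1}L^d$ single-particle labels (bands, momenta, and spin) yields $\|H_0\|_{\mathfrak{B}(F_f)}\le 2^{d+1}L^d \sup_\bk\|E(\beps^L,\btheta)(\bk)\|_{2^d\times 2^d}$, which accounts for the coefficient $2^{d+2}\sup_\bk\|E\|$. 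For the interaction, bound $\|V\|_{\mathfrak{B}(F_f)}$ by the sum of absolute values of its coefficients: the $m=0$ term gives $|V_0^L(\bU)|\le L^d v_0 U_{max}$ by the definition of $v_0$, and for $m\ge 1$ one uses translation invariance \eqref{eq_translation} together with the bijectivity of $\nu$, exactly as in the proof of Lemma \ref{lem_estimations_Grassmann_V}, to get the bound $2^{d+1}L^d U_{max} v_m(0)$ on each term. Dividing through by $L^d$ assembles the bracketed prefactor.

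The argument is essentially a one-line calculus identity followed by operator-norm bookkeeping; the only care required is matching the combinatorial factors of $2$ so that the multi-band structure (each site of $\G(2L)$ corresponds to a pair $(\rho,\bx)\in\cB\times\G(L)$, doubled by spin) lines up with the stated exponents $2^{d+1}$ and $2^{d+2}$, which is not so much a technical obstacle as a careful accounting step.
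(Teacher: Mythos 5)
Your proof is correct and follows the same route the paper takes: the paper establishes the operator-norm bound $\|V\|_{\mathfrak{B}(F_f)}\le L^dv_0U_{max}+2^{d+1}L^dU_{max}\sum_{j=1}^{N_v}v_j(0)$ and then invokes the calculus argument of \cite[Lemma E.2]{K15} (fundamental theorem of calculus applied to $\gamma\mapsto\frac{1}{\gamma L^d}\log(\Tr e^{-\gamma H}/\Tr e^{-\gamma H_0})$, thermal-expectation bound for $g'(\gamma)$, and the diagonalization estimate $\|H_0\|_{\mathfrak{B}(F_f)}\le 2^{d+1}L^d\sup_{\bk}\|E(\beps^L,\btheta)(\bk)\|_{2^d\times 2^d}$), which is precisely what you spell out. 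Your bookkeeping of the factors $2^{d+1}$ and $2^{d+2}$ matches the statement.
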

\begin{proof}
Since $\|\psi_X^*\|_{\mathfrak{B}(F_f)}=\|\psi_X\|_{\mathfrak{B}(F_f)}=1$ $(\forall X\in
 \cB\times \G(L)\times \spin)$, 
\begin{align*}
\|V\|_{\mathfrak{B}(F_f)}\le L^dv_0\max_{l\in\{1,2,\cdots,n_v\}}|U_{l}|+2^{d+1}L^d\sum_{j=1}^{N_v}v_j(0)\max_{l\in\{1,2,\cdots,n_v\}}|U_{l}|.
\end{align*}
By using this inequality in place of the inequality ``(E.3)'' and
 straightforwardly following the proof of \cite[\mbox{Lemma E.2}]{K15}, we can
 derive the claimed inequality.
\end{proof}

We may consider $\bU$ inside the operator $H$ as complex variables.
\begin{lemma}\label{lem_partition_function_direct_analyticity}
For any $r\in\R_{>0}$ there exists a domain $O$ of $\C$ such that
 $(-r,r)\subset O$ and $\log(\Tr e^{-\beta H})$ is analytic with respect
 to $\bU$ in $O^{n_v}$.
\end{lemma}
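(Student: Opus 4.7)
The plan is to prove that $f(\bU):=\Tr e^{-\beta H(\bU)}$ extends to an entire function on $\C^{n_v}$ and is strictly positive on $\R^{n_v}$, and then to take $O$ to be a sufficiently thin complex thickening of the real interval $(-r,r)$ on which nonvanishing persists and a single-valued analytic branch of $\log f$ can be fixed.

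First I would use the linearity property \eqref{item_linearity} of each $V_m^L$ to conclude that $V$, and hence $H(\bU)=H_0+V(\bU)$, depends affinely on $\bU$. Because the Fock space $F_f(L^2(\cB\times\G(L)\times\spin))$ is finite-dimensional, $e^{-\beta H(\bU)}$ is a matrix exponential whose entries are entire functions of $\bU$; taking the trace preserves this, so $f$ is entire on $\C^{n_v}$.

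Next, for real $\bU$ I would show that $V(\bU)^{*}=V(\bU)$: the adjoint of the monomial $\psi_{\bx_1\s_1}^{*}\cdots\psi_{\bx_m\s_m}^{*}\psi_{\by_1\tau_1}\cdots\psi_{\by_m\tau_m}$ reverses both the creation and the annihilation tuples, the two order-reversal signs coming from the bi-anti-symmetry \eqref{eq_bi_anti_symmetric} cancel, and the hermiticity \eqref{eq_hermiticity} with $\overline{\bU}=\bU$ conjugates the coefficient back to itself. Since $H_0$ is also self-adjoint, $H(\bU)$ is self-adjoint and $e^{-\beta H(\bU)}$ is positive definite with strictly positive trace; this is the same self-adjointness argument already invoked in Subsection \ref{subsec_hamiltonian}.

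Finally, $[-r,r]^{n_v}\subset\R^{n_v}$ is compact and $f$ is continuous and strictly positive there, so $\mu:=\min_{[-r,r]^{n_v}}f>0$, and uniform continuity of $f$ on a slightly larger compact set yields $\eta>0$ such that $f$ does not vanish on the $\eta$-neighborhood of $[-r,r]^{n_v}$ in $\C^{n_v}$. Setting
\begin{align*}
O:=\left\{z\in\C\ \Big|\ \mathrm{dist}(z,[-r,r])<\frac{\eta}{\sqrt{n_v}}\right\},
\end{align*}
one obtains an open, simply connected planar domain containing $(-r,r)$, and every $\bU\in O^{n_v}$ lies within Euclidean distance $\eta$ of its nearest real vector in $[-r,r]^{n_v}$, so $f\neq 0$ on $O^{n_v}$. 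Since $O^{n_v}$ is simply connected (a finite product of simply connected planar domains) and $f:O^{n_v}\to\C\setminus\{0\}$ is holomorphic, the branch of $\log f$ fixed by agreeing with the real logarithm on the real slice $(-r,r)^{n_v}$ is well defined and analytic on $O^{n_v}$. No serious obstacle is expected; the only mildly delicate point is the simultaneous use of \eqref{item_linearity}, \eqref{eq_bi_anti_symmetric}, and \eqref{eq_hermiticity} to secure self-adjointness of $V(\bU)$ for real $\bU$.
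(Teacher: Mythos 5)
Your overall strategy is sound and genuinely different from the paper's. The paper does not invoke compactness or topology at all: it writes down the explicit integral estimate
\begin{align*}
|\Tr e^{-\beta (H_0+V(\bU+i\delta\ba))}-\Tr e^{-\beta (H_0+V(\bU))}|
\le \delta \beta 2^{2^{d+1}L^d}\|V(\ba)\|\,e^{\beta(\|H_0\|+\|V(\bU)\|+\|V(\ba)\|)}
\end{align*}
for real $\bU\in[-r,r]^{n_v}$, $\ba\in[-1,1]^{n_v}$, pairs this with a uniform lower bound on $\Tr e^{-\beta(H_0+V(\bU))}$, and concludes directly that $\Re\,\Tr e^{-\beta H}>0$ on the open box $\{x+iy\ |\ x\in(-r,r),\ y\in(-\delta,\delta)\}^{n_v}$ for small $\delta$. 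Since the trace then stays in the slit plane $\C\setminus\R_{\le 0}$ where the principal $\log$ (the one fixed earlier in the paper) is defined and holomorphic, $\log(\Tr e^{-\beta H})$ is analytic as a composition, with no need for any branch-fixing. Your version replaces this quantitative estimate with a soft compactness/uniform-continuity argument and then pays for it with a topological detour (simple connectedness of $O^{n_v}$, existence of a holomorphic branch, matching it to the real logarithm on the real slice). Both are valid; the paper's is more aligned with the explicit-constant style of the surrounding text, yours is shorter in spirit.

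One small imprecision worth fixing: you only conclude that $f:=\Tr e^{-\beta H}$ \emph{does not vanish} on $O^{n_v}$, but the lemma as stated uses $\log$ in the sense fixed in Subsection~\ref{subsec_grassmann}, namely the principal value on $\C\setminus\R_{\le 0}$; for that to even be applicable you need $f$ to avoid the whole negative real axis, not merely the origin. The fix costs nothing: the very same uniform-continuity estimate you invoke gives $|f(z)-f(w)|<\mu/2$ with $f(w)=\Re f(w)\ge\mu$, hence $\Re f(z)>\mu/2>0$ on your $\eta$-neighborhood. Stating it this way both closes the gap and renders the simple-connectedness / branch-selection paragraph unnecessary, since the principal branch is automatically the right one on $\{\Re z>0\}$. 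Also, a cosmetic point on self-adjointness of $V(\bU)$: the sign produced when reversing the creation tuple and the one from reversing the annihilation tuple already cancel by pure operator anti-commutation, so \eqref{eq_bi_anti_symmetric} is not strictly needed there; only linearity \eqref{item_linearity} and hermiticity \eqref{eq_hermiticity} are used, exactly as the paper remarks after \eqref{eq_one_band_interaction}.
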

\begin{proof}
Take any $r\in\R_{>0}$, $\ba\in [-1,1]^{n_v}$, $\bU\in [-r,r]^{n_v}$ and
 $\delta\in[0,1]$. Note that 
\begin{align*}
&|\Tr e^{-\beta (H_0+V(\bU+i\delta\ba))}-\Tr e^{-\beta (H_0+V(\bU))}|
\le \int_0^{\delta}d\eps
\left|\frac{d}{d\eps}\Tr e^{-\beta (H_0+V(\bU)+i\eps
 V(\ba))}\right|\\
&\le \delta \beta 2^{2^{d+1}L^d}\|V(\ba)\|_{\mathfrak{B}(F_f)}e^{\beta
 (\|H_0\|_{\mathfrak{B}(F_f)}+\|V(\bU)\|_{\mathfrak{B}(F_f)}
+\|V(\ba)\|_{\mathfrak{B}(F_f)})},
\end{align*}
where we used the linearity of $V$ with respect to the coupling constants. 
Thus,
\begin{align*}
&\Re \Tr e^{-\beta (H_0+V(\bU+i\delta\ba))}\\
&\ge \Tr e^{-\beta (H_0+V(\bU))}\\
&\quad- \delta \beta
 2^{2^{d+1}L^d}\|V(\ba)\|_{\mathfrak{B}(F_f)}
e^{\beta
 (\|H_0\|_{\mathfrak{B}(F_f)}+\|V(\bU)\|_{\mathfrak{B}(F_f)}+\|V(\ba)\|_{\mathfrak{B}(F_f)})}\\
&\ge e^{-\beta L^d r v_0}\\
&\quad 
-\delta \beta 2^{2^{d+1}L^d}\sup_{\bz\in[-r,r]^{n_v} \atop
 \bb\in[-1,1]^{n_v}}
\big(\|V(\bb)\|_{\mathfrak{B}(F_f)}
e^{\beta
 (\|H_0\|_{\mathfrak{B}(F_f)}+\|V(\bz)\|_{\mathfrak{B}(F_f)}+\|V(\bb)\|_{\mathfrak{B}(F_f)})}\big)\\
&>0,\quad (\forall \bU\in [-r,r]^{n_v},\ba\in[-1,1]^{n_v}),
\end{align*}
if $\delta$ is sufficiently small. Therefore, there exists $\delta
 \in\R_{>0}$ such that $\log (\Tr e^{-\beta H})$ is
 analytic with respect to $\bU$ in the domain 
$\{x+iy\ |\ x\in(-r,r),y\in (-\delta,\delta)\}^{n_v}$.
\end{proof}
 
Here we can give the proof of Theorem \ref{thm_main_theorem}.
\begin{proof}[Proof of Theorem \ref{thm_main_theorem}]
Assume that the condition \eqref{eq_condition_IR} holds. By \\
Lemma
 \ref{lem_final_term_convergence} there exist 
$$J(\beta,L), J(\beta),J\in 
 C(\overline{D(r_{max}/2)}^{n_v};\C)\cap C^{\o}(D(r_{max}/2)^{n_v};\C)$$
 such that 
\begin{align}
&\lim_{h\to \infty\atop h\in
 (2/\beta)\N}J_{end}(\beta,L,h)=J(\beta,L),\quad (\forall \beta\in
 \R_{>0}, L\in\N\text{ with }L\ge \beta),\label{eq_final_convergence_summary_h}\\
&\lim_{L\to \infty\atop L\in\N}J(\beta,L)=J(\beta),\quad (\forall \beta\in\R_{>0}),\label{eq_final_convergence_summary_L}\\
&\lim_{\beta\to \infty\atop \beta\in\N}J(\beta)=J\label{eq_final_convergence_summary_beta}
\end{align}
in $C(\overline{D(r_{max}/2)}^{n_v};\C)$. 
By Lemma \ref{lem_grassmann_formulation}
 \eqref{item_grassmann_formulation_convergence},
Lemma \ref{lem_grassmann_symmetric} 
\eqref{item_logarithm_final_h_estimate},
Lemma \ref{lem_complete_IR_integration}
 \eqref{item_IR_full_expansion} and \eqref{eq_final_convergence_summary_h}
there exists a constant $c_1\in \R_{>0}$
 independent of $h$ such that for any $\bU\in\overline{D(c_1)}^{n_v}\cap\R^{n_v}$,
\begin{align}
J(\beta,L)(\bU)
&=\lim_{h\to\infty\atop h\in (2/\beta)\N}\Bigg(
-\frac{1}{\beta L^d}\log\left(\int
 e^{\frac{1}{2}(R^+(\bU)(\psi)+R^-(\bU)(\psi))}d\mu_{C_{\le
 0}^{\infty}}(\psi)\right)\label{eq_final_output_partition_function}\\
&\qquad\qquad\quad +\frac{1}{\beta L^d}\log\left(\int
 e^{-V(\bU)(\psi)}d\mu_{C}(\psi)\right)\Bigg)\notag\\
&\quad -\lim_{h\to\infty\atop h\in (2/\beta)\N}
\frac{1}{\beta L^d}\log\left(\int
 e^{-V(\bU)(\psi)}d\mu_{C}(\psi)\right)\notag\\
&=-\frac{1}{\beta L^d}\log\left(\frac{\Tr e^{-\beta H}}{\Tr e^{-\beta
 H_0}}\right).\notag
\end{align}
By Lemma \ref{lem_partition_function_direct_analyticity} there exists a
 domain $O\subset \C^{n_v}$ such that $\overline{D(r_{max}/2)}^{n_v}\cap
 \R^{n_v}\subset O$ and the right-hand side of
 \eqref{eq_final_output_partition_function} is analytic with $\bU$ in
 $O$. Thus, by the identity theorem and continuity the equality
 \eqref{eq_final_output_partition_function} holds for any $\bU\in
 \overline{D(r_{max}/2)}^{n_v}\cap \R^{n_v}$, $\beta \in\R_{> 0}$,
 $L\in\N$ with $L\ge \beta$. Then, by Lemma \ref{lem_hopping_properties}
 \eqref{item_hopping_upper}, 
Lemma
 \ref{lem_final_term_properties} \eqref{item_final_term_upper_bound} and
 Lemma \ref{lem_partition_function_difference},
\begin{align}
&\left|\frac{1}{\beta L^d}\log\left(\frac{\Tr e^{-\beta H}}{\Tr
 e^{-\beta H_0}}\right)- \frac{1}{\lfloor\beta\rfloor L^d}\log\left(\frac{\Tr e^{-\lfloor\beta\rfloor H}}{\Tr
 e^{-\lfloor\beta\rfloor H_0}}\right)\right|\label{eq_partition_function_final_difference}\\
&\le \int_{\lfloor\beta\rfloor}^{\beta}d\gamma\frac{1}{\gamma}
\left(c'(d,M,c_w,c_{\chi})f_{\bt}^{-\frac{d}{2}}\alpha^{-1}+\frac{v_0}{2}r_{max}\right)\notag\\
&\quad +\left(
\frac{v_0}{2}r_{max}+2^d\sum_{j=1}^{N_v}v_j(0)r_{max}+
2^{d+3}d\right)\log\left(\frac{\beta}{\lfloor
 \beta\rfloor}\right)\notag\\
&\le \left(c'(d,M,c_w,c_{\chi})f_{\bt}^{-\frac{d}{2}}\alpha^{-1}+
v_0r_{max}+2^d\sum_{j=1}^{N_v}v_j(0)r_{max}+
2^{d+3}d\right)\notag\\
&\quad\cdot \log\left(\frac{\beta}{\lfloor
 \beta\rfloor}\right),\notag
\end{align}
for any $\bU\in \overline{D(r_{max}/2)}^{n_v}\cap \R^{n_v}$, $\beta
 \in\R_{\ge 1}$,
 $L\in\N$ with $L\ge \beta$. 

Let $\alpha_\rho^L(\bk)$ $(\rho\in\cB)$ denote the eigen values of
 $E(\beps^L,\btheta)(\bk)$ for $\bk\in \G(L)^*$. Then, by 
\cite[\mbox{Lemma E.1}]{K15}, 
\begin{align*}
-\frac{1}{\beta (2L)^d}\log(\Tr e^{-\beta H_0})&=-\frac{2}{\beta
 (2L)^d}\sum_{\rho\in \cB}\sum_{\bk\in\G(L)^*}\log(1+e^{-\beta
 \alpha_{\rho}^L(\bk)})\\
&=-\frac{2}{\beta
 (2L)^d}\sum_{\bk\in\G(L)^*}\log\det(I_{2^d}+e^{-\beta
 E(\beps^L,\theta)(\bk)}).
\end{align*}
We can deduce from the definition that
$\lim_{L\to \infty,L\in\N}E(\beps^L,\btheta)(\bk)(\rho,\eta)$ converges for
 any $\bk\in \R^d$, $\rho,\eta\in\cB$ and if we set
 $E(\btheta)(\bk):=\lim_{L\to\infty,L\in\N}$ $E(\beps^L,\btheta)(\bk)$, 
$E(\btheta)(\bk)^*=E(\btheta)(\bk)$ $(\forall \bk\in\R^d)$. 
Let $\alpha_{\rho}(\bk)$ $(\rho\in \cB)$ be the eigen values of
 $E(\btheta)(\bk)$ for $\bk\in \R^d$. Then, by Lemma
 \ref{lem_hopping_properties} \eqref{item_hopping_upper},
 $|\alpha_{\rho}(\bk)|\le 2d$ $(\forall \rho\in\cB,\bk\in\R^d)$. By
 considering these facts we can apply the dominated convergence theorem
 to prove that
\begin{align}
&\lim_{L\to \infty\atop L\in\N}\left(-\frac{1}{\beta (2L)^d}\log(\Tr
 e^{-\beta H_0})\right)\label{eq_free_free_energy_zero_temperature}\\
&=-\frac{2}{\beta
 (4\pi)^d}\int_{[0,2\pi]^d}d\bk\log\det(I_{2^d}+e^{-\beta
 E(\theta)(\bk)})\notag\\
&=-\frac{2}{\beta
 (4\pi)^d}\int_{[0,2\pi]^d}d\bk\sum_{\rho\in\cB}\log(1+e^{-\beta
\alpha_{\rho}(\bk)}),\notag\\
&\lim_{\beta\to \infty\atop \beta\in\R_{>0}}
\lim_{L\to \infty\atop L\in\N}\left(-\frac{1}{\beta (2L)^d}\log(\Tr
 e^{-\beta H_0})\right)=\frac{2}{(4\pi)^d}
\int_{[0,2\pi]^d}d\bk\sum_{\rho\in \cB}1_{\alpha_{\rho}(\bk)<0}\alpha_{\rho}(\bk).\notag
\end{align}

Let us define $F(\beta,L)$, $F(\beta)$, $F\in C(\overline{D(r_{max}/2)}^{n_v};\C)\cap$\\
 $C^{\o}(D(r_{max}/2)^{n_v};\C)$ $(\beta\in\R_{>0},L\in\N\text{
 with }L\ge \beta)$ by 
\begin{align*}
&F(\beta,L):=2^{-d}J(\beta,L)-\frac{1}{\beta (2L)^d}\log(\Tr
 e^{-\beta H_0}),\\
&F(\beta):=\lim_{L\to \infty\atop L\in \N}F(\beta,L),\\
&F:=\lim_{\beta\to \infty\atop \beta\in \N}F(\beta).
\end{align*}
By \eqref{eq_final_convergence_summary_L}, \eqref{eq_final_convergence_summary_beta},
 \eqref{eq_partition_function_final_difference} and the fact that
 \eqref{eq_final_output_partition_function} holds for any $\bU\in \overline{D(r_{max}/2)}^{n_v}\cap
 \R^{n_v}$, $\beta\in\R_{\ge 1}$, $L\in\N$ with $L\ge \beta$ we see that 
\begin{align*}
&|F(\beta)(\bU)-F(\bU)|\\
&\le 2^{-d}|J(\beta)(\bU)-J(\lfloor \beta \rfloor)(\bU)|
+ 2^{-d}|J(\lfloor \beta \rfloor)(\bU)-J(\bU)|\\
&\quad +\left|
\lim_{L\to\infty\atop L\in \N}\frac{1}{\beta (2L)^d}\log(\Tr e^{-\beta
 H_0})-\lim_{\beta\to\infty\atop \beta\in\R_{>0}}\lim_{L\to\infty\atop
 L\in \N}\frac{1}{\beta (2L)^d}\log(\Tr e^{-\beta H_0})\right|\\
&\le \left(c'(M,d,c_w,c_{\chi})2^{-d}f_{\bt}^{-\frac{d}{2}}\alpha^{-1}+2^{-d}v_0r_{max}
+\sum_{j=1}^{N_v}v_j(0)r_{max} +2^3d\right)\\
&\quad\cdot\log\left(\frac{\beta}{\lfloor
 \beta\rfloor}\right)\\
&\quad + 2^{-d}|J(\lfloor \beta \rfloor)(\bU)-J(\bU)|\\
&\quad +\left|
\lim_{L\to\infty\atop L\in \N}\frac{1}{\beta (2L)^d}\log(\Tr e^{-\beta
 H_0})-\lim_{\beta\to\infty\atop \beta\in\R_{>0}}\lim_{L\to\infty\atop
 L\in \N}\frac{1}{\beta (2L)^d}\log(\Tr e^{-\beta H_0})\right|,\\
&(\forall \bU\in \overline{D(r_{max}/2)}^{n_v}\cap
 \R^{n_v}, \beta\in\R_{\ge 1}).
\end{align*}
Then, \eqref{eq_final_convergence_summary_beta} and 
 \eqref{eq_free_free_energy_zero_temperature}
imply that $\lim_{\beta\to \infty,\beta\in\R_{>0}}F(\beta)=F$ in\\
 $C(\overline{D(r_{max}/2)}^{n_v}\cap \R^{n_v};\C)$. By the same basic
 argument as the final part of \cite[\mbox{Proof of Theorem 1.1, Section
 7}]{K15} we can deduce from the above convergence property that 
$\lim_{\beta\to \infty,\beta\in\R_{>0}}F(\beta)=F$ in
 $C(\overline{D(r_{max}/2)}^{n_v};\C)$. 

To conclude the proof of the
 theorem under the assumption \eqref{eq_hopping_amplitude_condition},
 we may conceal the dependency on the artificial parameters $\alpha$,
 $M$, $c_{w}$, $c_{\chi}$ in \eqref{eq_condition_IR}. Then, we can read
 from the conditions \eqref{eq_condition_IR} and $4f_{\bt}\le 1$ that
there exists a constant $c(d,N_v)\in\R_{>0}$ depending only on $d$,
 $N_v$ such that 
$$
\left(\sum_{m=1}^{N_v}c(d,N_v)^mv_m(c(d,N_v))\right)^{-1}(4f_{\bt})^{\frac{dN_v}{2}}\le \frac{r_{max}}{2}.
$$ 
The left-hand side of the above inequality is equal to $R$ set in
 Theorem \ref{thm_main_theorem} if
 \eqref{eq_hopping_amplitude_condition} holds.
By recalling Lemma \ref{lem_free_energy_equivalence} we see that the
 above arguments have proved the theorem under the assumption
 \eqref{eq_hopping_amplitude_condition}. 

Let us show that the theorem in the general case follows from the
 theorem proved under
 \eqref{eq_hopping_amplitude_condition}. Let us temporarily write $R_{\bt}$
 in place of $R$. Set $t_{max}:=\max_{j\in\{1,2,\cdots,d\}}t_j$. By the
 theorem for the Hamiltonian $\frac{1}{t_{max}}\sH_0+\sV$, there exist
$F(\beta,L)$, $F(\beta)$, $F\in C(\overline{D(R_{\bt/t_{max}})}^{n_v};\C)\cap
 C^{\o}(D(R_{\bt/t_{max}})^{n_v};\C)$ 
such that 
\begin{align*} 
&F(\beta,L)(\bU)=-\frac{1}{\beta (2L)^d}\log(\Tr
 e^{-\beta (\frac{1}{t_{max}}\sH_0+\sV)}),\\
&(\forall \bU\in \overline{D(R_{\bt/t_{max}})}^{n_v}\cap \R^{n_v},
 \beta\in\R_{>0},L\in\N\text{ with }L\ge \beta),\\
&\lim_{L\to \infty\atop L\in \N}F(\beta,L)=F(\beta)\text{ in
 }C(\overline{D(R_{\bt/t_{max}})}^{n_v};\C),\\
&\lim_{\beta\to \infty\atop \beta\in \R_{>0}}F(\beta)=F\text{ in
 }C(\overline{D(R_{\bt/t_{max}})}^{n_v};\C).
\end{align*}
Then, by the linearity of $V_m^{L}(\bU)(\cdot)$ with $\bU$,
\begin{align*} 
&F(t_{max}\beta,L)\left(\frac{1}{t_{max}}\bU\right)=-\frac{1}{t_{max}\beta (2L)^d}\log(\Tr
 e^{-\beta \sH}),\\
&(\forall \bU\in \overline{(t_{max}D(R_{\bt/t_{max}}))}^{n_v}\cap \R^{n_v},
 \beta\in\R_{>0},L\in\N\text{ with }L\ge t_{max}\beta).
\end{align*}
Since $t_{max}D(R_{\bt/t_{max}})=D(R_{\bt})$,
\begin{align*}
t_{max}F(t_{max}\beta,L)\left(\frac{1}{t_{max}}\cdot\right)
\in  
 C(\overline{D(R_{\bt})}^{n_v};\C)\cap C^{\o}(D(R_{\bt})^{n_v};\C)
\end{align*}
and
\begin{align*}
&\lim_{L\to \infty\atop L\in
 \N}t_{max}F(t_{max}\beta,L)\left(\frac{1}{t_{max}}\cdot\right)=t_{max}F(t_{max}\beta)\left(\frac{1}{t_{max}}\cdot\right),\\
&\lim_{\beta\to \infty\atop \beta\in \R_{>0}}
t_{max}F(t_{max}\beta)\left(\frac{1}{t_{max}}\cdot\right)
=t_{max}F\left(\frac{1}{t_{max}}\cdot\right)
\text{ in
 }C(\overline{D(R_{\bt})}^{n_v};\C).
\end{align*}
Thus, the theorem has been proved.
\end{proof}

\appendix
\section{Reordering in a non-commutative $\C$-algebra}\label{app_normal_order}
Here we prove a lemma which is used in the proof of Lemma \ref{lem_half_filled}.
 Though the actual
problem involves the Fermionic creation/annihilation operators, we set up
the problem in a non-commutative $\C$-algebra for simplicity. Let $n\in \N$. Let $A$ be a $\C$-algebra spanned by 
products of the elements $a_1,a_2,\cdots,a_n$,
$a_1^*,a_2^*,\cdots,a_n^*$ satisfying the relation
\begin{align}
&a_j^*a_k+a_ka_j^*=\delta_{j,k},\quad a_ja_k+a_ka_j=0,\label{eq_ACR}\\
&a_j^*a_k^*+a_k^*a_j^*=0.\quad (\forall j,k\in\{1,2,\cdots,n\}).\notag
\end{align}
Set $S:=\{1,2,\cdots,n\}$. We call a function $f_m:S^m\times S^m\to\C$
bi-anti-symmetric if 
\begin{align*}
&f_m((x_{\s(1)},x_{\s(2)},\cdots,x_{\s(m)}),(y_{\tau(1)},y_{\tau(2)},\cdots,y_{\tau(m)}))\\
&=\sgn(\s)\sgn(\tau)f_m((x_{1},x_{2},\cdots,x_{m}),(y_{1},y_{2},\cdots,y_{m})),\\
&(\forall \s,\tau\in\S_m,
 (x_1,x_2,\cdots,x_m),(y_1,y_2,\cdots,y_m)\in S^m).
\end{align*}
For $\bX=(x_1,x_2,\cdots,x_m)\in S^m$ let $a_{\bX}$, $a_{\bX}^*$ denote 
$a_{x_1}a_{x_2}\cdots a_{x_m}$,\\ $a_{x_1}^*a_{x_2}^*\cdots a_{x_m}^*$
respectively.
Moreover, let $\widetilde{\bX}$ denote $(x_m,x_{m-1},\cdots,x_1)$.

\begin{lemma}\label{lem_normal_order}
For any bi-anti-symmetric function $f_m:S^m\times S^m\to \C$, 
\begin{align*}
&\sum_{\bX,\bY\in
 S^m}f_m(\bX,\bY)a_{\bX}^*a_{\bY}\\
&=\sum_{l=0}^m\sum_{\bX,\bY\in
 S^{m-l}\atop \bZ\in S^l}(-1)^{m-l}\left(\begin{array}{c} m \\
			       l\end{array}\right)^2l!
f_m((\bX,\bZ),(\widetilde{\bZ},\bY))a_{\bY}a_{\bX}^*.
\end{align*}
\end{lemma}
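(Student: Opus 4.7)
The plan is to split the argument into a purely operator-algebraic step and a purely combinatorial step. First I would prove, by induction on $m$ using the CAR relation $a^*_i a_j = \delta_{i,j} - a_j a^*_i$ together with the anti-commutativity of the $a_i$'s among themselves and of the $a^*_i$'s among themselves, a Wick-type expansion valid for all $\bX, \bY \in S^m$:
\begin{align*}
a_{\bX}^* a_{\bY}
= \sum_{l=0}^m \sum_{\substack{A=\{i_1<\cdots<i_l\}\subseteq\{1,\ldots,m\}\\ B=\{j_1<\cdots<j_l\}\subseteq\{1,\ldots,m\}\\ \sigma\in\S_l}} \varepsilon(A,B,\sigma)\prod_{k=1}^l \delta_{x_{i_k},y_{j_{\sigma(k)}}}\, a_{\bY_{B^c}} a_{\bX_{A^c}}^*,
\end{align*}
where $\bY_{B^c}$ and $\bX_{A^c}$ are the subtuples obtained by deleting the entries indexed by $B$ and $A$ (preserving the original order), and $\varepsilon(A,B,\sigma)\in\{\pm 1\}$ records the signs produced by the iterated reorderings. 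The base case $m=1$ is immediate from CAR, and the inductive step proceeds by pulling $a^*_{x_m}$ past $a_{y_1}\cdots a_{y_m}$ via the relation
\[ a^*_i a_{y_1}\cdots a_{y_m} = (-1)^m a_{y_1}\cdots a_{y_m}\, a^*_i + \sum_{k=1}^m (-1)^{k-1}\delta_{i,y_k}\, a_{y_1}\cdots\widehat{a_{y_k}}\cdots a_{y_m}, \]
and then applying the induction hypothesis to the factor $a^*_{x_1}\cdots a^*_{x_{m-1}}$ acting on what remains.

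Next, I would multiply the expansion by $f_m(\bX,\bY)$ and sum over $\bX,\bY\in S^m$. For a fixed $l$ and a fixed triple $(A,B,\sigma)$, the bi-antisymmetry of $f_m$ allows me to relabel the summation variables: I permute the entries of $\bX$ so that the contracted indices $(x_{i_1},\ldots,x_{i_l})$ are moved into the last $l$ slots, and similarly move $(y_{j_1},\ldots,y_{j_l})$ into the last $l$ slots of $\bY$. Each such permutation contributes a sign to $f_m$ which, combined with $\varepsilon(A,B,\sigma)$, collapses to a uniform sign independent of $(A,B,\sigma)$. A further bi-antisymmetric relabeling of the contracted entries converts the bijection $\sigma$ into the order-reversal appearing in $(\widetilde{\bZ},\bY)$ on the right-hand side of the lemma, absorbing the factor $\sgn(\sigma)$. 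Since every triple $(A,B,\sigma)$ then contributes the same summand, the combinatorial multiplicity $\binom{m}{l}\cdot\binom{m}{l}\cdot l!=\binom{m}{l}^2 l!$ appears, and summing over $l$ yields the claimed identity.

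The hard part will be the sign bookkeeping in the second step. The sign $\varepsilon(A,B,\sigma)$ from the Wick expansion depends nontrivially on $A$, $B$, and $\sigma$; the relabeling signs arising from applying bi-antisymmetry to $f_m$ likewise depend on these choices; and the reversal $\widetilde{\bZ}$ versus $\bZ$ in the second argument of $f_m$ is precisely what is needed to absorb $\sgn(\sigma)$. Verifying that all these contributions combine to the single uniform factor $(-1)^{m-l}$ -- as is consistent with direct computation for $m=1,2$, which provides a useful sanity check -- is where the bulk of the careful work lies. Once that cancellation is in place, the rest of the argument is routine bookkeeping.
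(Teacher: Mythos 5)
Your plan is sound and genuinely different from the paper's. The paper proves the lemma by a single induction on $m$ carried out \emph{directly on the summed identity}: it peels off the outermost $a^*_x$ and $a_y$, applies the inductive hypothesis to the middle factor (treating $f_{m+1}((x,\cdot),(\cdot,y))$ as a bi-anti-symmetric function of the $m$ remaining pairs), and then reorganizes the result using the CAR and bi-anti-symmetry one commutation at a time, tracking a single combinatorial coefficient $D(l,m)$ at each step. In contrast, you factor the argument into two modular pieces: a pure operator-algebra Wick expansion of $a_{\bX}^*a_{\bY}$ (an identity valid before any $f_m$ enters), and a separate combinatorial collapse of the sum over contraction triples $(A,B,\sigma)$ using bi-anti-symmetry. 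Both routes reach the same formula, and both ultimately hinge on bi-anti-symmetry to turn a messy contraction sum into a clean coefficient $\binom{m}{l}^2 l!$ with sign $(-1)^{m-l}$. What the paper's approach buys is that it never has to write down an explicit sign $\varepsilon(A,B,\sigma)$ at all: each inductive step only ever produces a handful of terms whose signs are immediate, and bi-anti-symmetry is applied as one goes, so the uniformity of the sign is established locally rather than globally. Your approach buys a cleaner separation of concerns and a reusable Wick theorem, but the price is exactly the sign-bookkeeping you flag as the hard part.

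That flagged step is not yet done in your proposal, and it is the entire content of the lemma. You correctly identify that one must show the product of $\varepsilon(A,B,\sigma)$, the relabeling signs $\sgn(\tau_A)\sgn(\tau_B)$ from moving $A$ and $B$ to the last $l$ slots, and the sign from converting $\sigma$ into the order reversal $\widetilde{\bZ}$ is $(-1)^{m-l}$ uniformly in $(A,B,\sigma)$ -- but you only assert this (supported by an $m\le 2$ check) rather than proving it. Until that uniformity is verified, the factor $\binom{m}{l}^2 l!$ cannot legitimately be extracted in front. I would suggest either proving the uniform-sign claim by a separate induction on the number of crossings in the contraction pattern, or noting that after relabeling one may assume $A=B=\{m-l+1,\ldots,m\}$ and $\sigma$ is the reversal (which is where $\widetilde{\bZ}$ comes from), and then verifying the base sign once for that canonical contraction; either route closes the gap and makes your argument complete.
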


\begin{proof}
We prove the claim by induction with $m$. The equality for $m=1$ follows
 from the relation \eqref{eq_ACR}. Assume that the claim is true for
 some $m\in \N$. Let $f_{m+1}:S^{m+1}\times S^{m+1}\to\C$ be a
 bi-anti-symmetric function. By the hypothesis of induction, 
\begin{align*}
&\sum_{\bX,\bY\in S^{m+1}}f_{m+1}(\bX,\bY)a_{\bX}^*a_{\bY}\\
&=\sum_{x,y\in S}\sum_{l=0}^m\sum_{\bX,\bY\in
 S^{m-l}\atop \bZ\in S^l}(-1)^{m-l}\left(\begin{array}{c} m \\
			       l\end{array}\right)^2l!
f_{m+1}((x,\bX,\bZ),(\widetilde{\bZ},\bY,y))a_{x}^*a_{\bY}a_{\bX}^*a_y.
\end{align*}
Moreover, by \eqref{eq_ACR} and the bi-anti-symmetric property of
 $f_{m+1}$,
\begin{align*}
&\sum_{\bX,\bY\in S^{m+1}}f_{m+1}(\bX,\bY)a_{\bX}^*a_{\bY}\\
&=\sum_{l=0}^m\sum_{\bX,\bY\in
 S^{m+1-l}\atop \bZ\in S^l}(-1)^{m-l}\left(\begin{array}{c} m \\
			       l\end{array}\right)^2l!
f_{m+1}((\bX,\bZ),(\widetilde{\bZ},\bY))\\
&\qquad\qquad\cdot a_{x_1}^*a_{y_1}\cdots
 a_{y_{m-l}}a_{x_2}^*\cdots a_{x_{m-l+1}}^*a_{y_{m-l+1}}\\
&=\sum_{l=0}^m\sum_{\bX,\bY\in
 S^{m-l}\atop \bZ\in S^{l+1}}\left(\begin{array}{c} m \\
			       l\end{array}\right)^2l!
f_{m+1}((\bX,\bZ),(\widetilde{\bZ},\bY))a_{y_1}\cdots
 a_{y_{m-l-1}}a_{\bX}^*a_{y_{m-l}}\\
&\quad +\sum_{l=0}^m\sum_{\bX,\bY\in
 S^{m+1-l}\atop \bZ\in S^{l}}(-1)^{m-l+1}\left(\begin{array}{c} m \\
			       l\end{array}\right)^2l!
f_{m+1}((\bX,\bZ),(\widetilde{\bZ},\bY))\\
&\qquad\qquad\quad\cdot 
a_{y_1}a_{x_1}^*a_{y_2}
\cdots
 a_{y_{m-l}}a_{x_2}^*\cdots a_{x_{m-l+1}}^*a_{y_{m-l+1}}\\
&=2\sum_{l=0}^m\sum_{\bX,\bY\in
 S^{m-l}\atop \bZ\in S^{l+1}}\left(\begin{array}{c} m \\
			       l\end{array}\right)^2l!
f_{m+1}((\bX,\bZ),(\widetilde{\bZ},\bY))a_{y_1}\cdots
 a_{y_{m-l-1}}a_{\bX}^*a_{y_{m-l}}\\
&\quad +\sum_{l=0}^m\sum_{\bX,\bY\in
 S^{m+1-l}\atop \bZ\in S^{l}}(-1)^{m-l+2}\left(\begin{array}{c} m \\
			       l\end{array}\right)^2l!
f_{m+1}((\bX,\bZ),(\widetilde{\bZ},\bY))\\
&\qquad\qquad\quad\cdot 
a_{y_1}a_{y_2}a_{x_1}^*a_{y_3}
\cdots
 a_{y_{m-l}}a_{x_2}^*\cdots a_{x_{m-l+1}}^*a_{y_{m-l+1}}\\
&=\sum_{l=0}^m\sum_{\bX,\bY\in
 S^{m-l}\atop \bZ\in S^{l+1}}\left(\begin{array}{c} m \\
			       l\end{array}\right)^2l!(m-l)
f_{m+1}((\bX,\bZ),(\widetilde{\bZ},\bY))\\
&\qquad\qquad\quad\cdot a_{y_1}\cdots
 a_{y_{m-l-1}}a_{\bX}^*a_{y_{m-l}}\\
&\quad +\sum_{l=0}^m\sum_{\bX,\bY\in
 S^{m+1-l}\atop \bZ\in S^{l}}\left(\begin{array}{c} m \\
			       l\end{array}\right)^2l!
f_{m+1}((\bX,\bZ),(\widetilde{\bZ},\bY))a_{y_1}\cdots a_{y_{m-l}}a_{\bX}^*a_{y_{m-l+1}}\\
&=\sum_{l=0}^{m+1}\sum_{\bX,\bY\in
 S^{m+1-l}\atop \bZ\in S^{l}}\Bigg(1_{l\ge 1}\left(\begin{array}{c} m \\
			       l-1\end{array}\right)^2(l-1)!(m+1-l)
+1_{l\le m}\left(\begin{array}{c} m \\
			       l\end{array}\right)^2 l!\Bigg)\\
&\qquad\qquad\quad \cdot f_{m+1}((\bX,\bZ),(\widetilde{\bZ},\bY))
a_{y_1}\cdots a_{y_{m-l}}a_{\bX}^*a_{y_{m-l+1}}.
\end{align*}
Set 
\begin{align*}
D(l,m):=
1_{l\ge 1}\left(\begin{array}{c} m \\
			       l-1\end{array}\right)^2(l-1)!(m+1-l)
+1_{l\le m}\left(\begin{array}{c} m \\
			       l\end{array}\right)^2 l!.
\end{align*}
Then, by considering that $D(m+1,m)=0$,
\begin{align*}
&\sum_{\bX,\bY\in S^{m+1}}f_{m+1}(\bX,\bY)a_{\bX}^*a_{\bY}\\
&=\sum_{l=0}^{m}\sum_{\bX,\bY\in S^{m+1-l}\atop \bZ\in S^l}D(l,m)
f_{m+1}((\bX,\bZ),(\widetilde{\bZ},\bY))
a_{y_1}\cdots a_{y_{m-l}}a_{\bX}^*a_{y_{m-l+1}}\\
&=\sum_{l=0}^m\sum_{\bX,\bY\in S^{m-l}\atop \bZ\in
 S^{l+1}}(-1)^{m-l}D(l,m)
f_{m+1}((\bX,\bZ),(\widetilde{\bZ},\bY))
a_{\bY}a_{\bX}^*\\
&\quad+ \sum_{l=0}^{m}\sum_{\bX,\bY\in S^{m+1-l}\atop \bZ\in
 S^{l}}(-1)D(l,m)
f_{m+1}((\bX,\bZ),(\widetilde{\bZ},\bY))\\
&\qquad\qquad\cdot a_{y_1}\cdots a_{y_{m-l}}a_{x_1}^*\cdots a_{x_{m-l-1}}^*
a_{x_{m-l}}^*a_{y_{m-l+1}}a_{x_{m-l+1}}^*\\
&=\sum_{l=0}^m\sum_{\bX,\bY\in S^{m-l}\atop \bZ\in
 S^{l+1}}(-1)^{m-l}(m-l+1)D(l,m)
f_{m+1}((\bX,\bZ),(\widetilde{\bZ},\bY))
a_{\bY}a_{\bX}^*\\
&\quad+ \sum_{l=0}^{m}\sum_{\bX,\bY\in S^{m+1-l}\atop \bZ\in
 S^{l}}(-1)^{m-l+1}D(l,m)
f_{m+1}((\bX,\bZ),(\widetilde{\bZ},\bY))a_{\bY}a_{\bX}^*\\
&= \sum_{l=0}^{m+1}\sum_{\bX,\bY\in S^{m+1-l}\atop \bZ\in
 S^{l}}(1_{l\ge 1}(-1)^{m+1-l}(m+2-l)D(l-1,m)\\
&\qquad\qquad\qquad\quad +(-1)^{m+1-l}D(l,m))
f_{m+1}((\bX,\bZ),(\widetilde{\bZ},\bY))a_{\bY}a_{\bX}^*.
\end{align*}
By calculation we can derive that  
\begin{align*}
&1_{l\ge 1}(-1)^{m+1-l}(m+2-l)D(l-1,m)+(-1)^{m+1-l}D(l,m)\\
&=(-1)^{m+1-l}\left(\begin{array}{c} m+1 \\
			       l\end{array}\right)^2 l!,
\end{align*}
which implies the claimed equality for $m+1$. The induction with
 $m$ concludes the proof.
\end{proof}

\section{The flux phase problem on a periodic hyper-cubic lattice}\label{app_flux_phase}
 
In order to deduce Corollary \ref{cor_minimum_energy} from Theorem
\ref{thm_main_theorem}, we need to know when the free energy density
is minimum in the flux phase problem on the hyper-cubic lattice
$\G(2L)$ with the periodic boundary condition. A sufficient condition was
essentially proved by Lieb in \cite{L}.  It was also claimed by Macris
and Nachtergaele in \cite{MN}. In \cite[\mbox{Appendix A}]{K15} we restated Lieb's
theorem on a periodic square lattice with
supplementary arguments which were not explicit in the letter
\cite{L}. In order to assist the readers in deriving Corollary
\ref{cor_minimum_energy} from Theorem \ref{thm_main_theorem}, here we
restate Lieb's theorem on the flux phase problem on the periodic
hyper-cubic lattice with explanations of how to extend the arguments in 
 \cite[\mbox{Appendix A}]{K15} into the $d$-dimensional case. Not to
 confuse the problem, we should make clear the dependency between the
 original article \cite{L}, the preceding section \cite[\mbox{Appendix
 A}]{K15} and this section. In this section we admit the basic lemmas 
 \cite[\mbox{Lemma A.2, Lemma A.4}]{K15} and the contents of the proof of 
 \cite[\mbox{Theorem A.5}]{K15} which was based on the original key
 lemma \cite[\mbox{Lemma}]{L}. For those who know how to apply the
 reflection positivity lemma \cite[\mbox{Lemma}]{L} well, there is no
 need to follow the proof of Theorem \ref{thm_flux_phase}
 below. However, we should remark that Lemma
 \ref{lem_partition_gauge_invariance} claimed below itself is necessary to prove
 not only Theorem \ref{thm_flux_phase} but also Theorem
 \ref{thm_main_theorem}. In fact Lemma
 \ref{lem_partition_gauge_invariance} is referred in the proof of 
Lemma \ref{lem_free_energy_equivalence} in Section \ref{sec_formulation}.

First let us extend \cite[\mbox{Lemma A.2}]{K15} into the
$d$-dimensional case. Assume that phases $\varphi_1$,
$\varphi_2:\Z^d\times\Z^d\to \R$ satisfy
\eqref{eq_phase_condition_general} and 
\begin{align*}
&\varphi_1(\bx+\be_j,\bx)+\varphi_1(\bx+\be_j+\be_k,\bx+\be_j)\\
&\quad+\varphi_1(\bx+\be_k,\bx+\be_j+\be_k)+\varphi_1(\bx,\bx+\be_k)\\
&=\varphi_2(\bx+\be_j,\bx)+\varphi_2(\bx+\be_j+\be_k,\bx+\be_j)\\
&\quad+\varphi_2(\bx+\be_k,\bx+\be_j+\be_k)+\varphi_2(\bx,\bx+\be_k)\quad(\text{mod }2\pi),\\ 
&\sum_{m=0}^{2L-1}\varphi_1(\bx+(m+1)\be_j,\bx+m\be_j)=\sum_{m=0}^{2L-1}\varphi_2(\bx+(m+1)\be_j,\bx+m\be_j)\\
&(\text{mod }2\pi),\quad(\forall \bx\in\Z^d,j,k\in\{1,2,\cdots,d\}).
\end{align*}

For $\bx,\by\in\Z^d$ we simply write $(\varphi_1-\varphi_2)(\bx,\by)$ in place
of $\varphi_1(\bx,\by)-\varphi_2(\bx,\by)$.

\begin{lemma}\label{lem_phase_circuit}
Assume that $n\ge 2$, $\bx_1,\bx_2,\cdots,\bx_n\in\G(2L)$ and for any
 $j\in \{1,2,\cdots,n\}$ there exists $p\in \{1,2,\cdots,d\}$ such that 
$\bx_j-\bx_{j+1}=\be_p$ or $-\be_p$ in $(\Z/2L\Z)^d$, where
 $\bx_{n+1}:=\bx_1$. Then, 
\begin{align}
\sum_{j=1}^n(\varphi_1-\varphi_2)(\bx_{j+1},\bx_j)=0\quad(\text{mod
 }2\pi).\label{eq_phase_circuit}
\end{align}
\end{lemma}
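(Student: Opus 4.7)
The plan is to reinterpret $g := \varphi_1 - \varphi_2$ (mod $2\pi$) as an antisymmetric $\R/(2\pi\Z)$-valued $1$-form on nearest-neighbor edges of $\Z^d$. The periodicity \eqref{eq_phase_condition_general} of $\varphi_1,\varphi_2$ makes $g$ invariant under translations by $2L\be_p$, and the two standing assumptions state precisely that (i) the discrete curl of $g$ around every elementary plaquette of $\Z^d$ vanishes mod $2\pi$ and (ii) the integral of $g$ along each of the $d$ generating large circles of the torus $(\Z/2L\Z)^d$ vanishes mod $2\pi$. The lemma then asserts that the integral of $g$ along an arbitrary closed lattice loop in $(\Z/2L\Z)^d$ is $0\pmod{2\pi}$, a discrete analogue of the fact that $H^1$ of the torus is generated by the $d$ circle classes.

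First I would construct a scalar potential $\psi:\Z^d\to\R/(2\pi\Z)$ by setting $\psi(\b0):=0$ and extending via
\begin{equation*}
\psi(\bx) := \sum_{j=0}^{k-1} g(\by_{j+1},\by_j)\pmod{2\pi}
\end{equation*}
along any nearest-neighbor path $\b0=\by_0,\by_1,\dots,\by_k=\bx$ in $\Z^d$. Well-definedness modulo $2\pi$ amounts to path-independence. Two lattice paths in $\Z^d$ with common endpoints can be converted into one another by a finite sequence of local moves, each either a back-and-forth insertion $\by\to\bz\to\by$ (trivially zero by antisymmetry) or a replacement of two sides of a plaquette by the opposite two sides (which changes the sum by the plaquette curl, hence $0\pmod{2\pi}$ by assumption (i)). This reduction is the one used in the two-dimensional case \cite[Lemma A.2]{K15}; in $d$ dimensions one first brings each path to a lexicographically monotone canonical form by repeated commutations of adjacent steps in distinct directions (each such commutation is a plaquette replacement), and then cancels opposite steps along the same direction.

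With $\psi$ in hand, the circulation along the loop $\bx_1,\dots,\bx_n,\bx_{n+1}=\bx_1$ is computed as follows. Lift the loop to $\Z^d$ by starting from the fixed representative $\bx_1\in\G(2L)$ and realizing each step $\bx_{j+1}-\bx_j=\pm\be_p$ literally in $\Z^d$, ending at some $\bx_1'=\bx_1+2L\sum_{p=1}^d m_p\be_p$ with $m_p\in\Z$. By construction and by $2L$-periodicity of $g$,
\begin{equation*}
\sum_{j=1}^n g(\bx_{j+1},\bx_j)\;\equiv\;\psi(\bx_1')-\psi(\bx_1)\pmod{2\pi}.
\end{equation*}
Using path-independence, I would evaluate the right-hand side along the rectangular path that performs $2L|m_1|$ steps in direction $\sgn(m_1)\be_1$, then $2L|m_2|$ steps in direction $\sgn(m_2)\be_2$, and so on. Each block of $2L$ consecutive steps in a single direction $\be_p$ is (up to a sign handled by antisymmetry) the integral of $g$ along one of the generating large circles, which vanishes mod $2\pi$ by hypothesis (ii); summing over the $d$ blocks gives $0\pmod{2\pi}$, which is the assertion of the lemma.

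The main obstacle is the path-independence claim for $\psi$: in dimension $d\ge 3$ there is no canonical notion of ``area enclosed between two paths,'' so the inductive argument used in \cite[Lemma A.2]{K15} has to be replaced by the combinatorial reduction to a monotone normal form sketched above. Careful bookkeeping is required to ensure that each commutation move corresponds to a genuine unit plaquette of $\Z^d$ (not one wrapped across a periodicity boundary), but this is automatic because the whole path-independence argument is carried out on the universal cover $\Z^d$ before descending to the torus via the hypothesis on large circles.
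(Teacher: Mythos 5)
Your proposal is correct, but it takes a genuinely different route from the paper's proof. The paper's argument is an induction on the number of distinct lattice directions appearing in the loop: the base case is the two-direction result \cite[Lemma A.2]{K15} (denoted $(\star)$ in the proof), and the inductive step projects the loop onto a hyperplane with fixed $(l+1)$-th coordinate, interpolating with $\be_{l+1}$-segments so that each ``error'' sub-loop uses only two directions and is handled by $(\star)$, while the telescoped projected loop uses only directions $1,\dots,l$ and is handled by the induction hypothesis. Your proposal instead constructs a scalar potential $\psi$ on the universal cover $\Z^d$ by path integration, proves path-independence from the plaquette condition via combinatorial plaquette moves (the discrete version of $\Z^d$ being simply connected), and then identifies the circulation on the torus with $\psi(\bx_1')-\psi(\bx_1)$, which you evaluate along a rectangular path whose $d$ legs each reduce to a large-circle sum that vanishes by the second standing hypothesis. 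Both arguments are sound; what the paper's approach buys is a direct reuse of the already-proved 2D lemma with essentially no new combinatorics, whereas yours is more conceptual (it is the $H^1(\T^d)$-is-generated-by-circle-classes picture made discrete) but requires you to establish the path-independence/normal-form reduction from scratch. You correctly flag that reduction as the main obstacle; it is a genuine combinatorial lemma (simple connectivity of the cubical $2$-skeleton of $\R^d$), and the sketch via commutation to lexicographic normal form plus cancellation of opposed steps is adequate, though not trivially so — one must also check that cancellable pairs can always be brought adjacent via commutations, and that each commutation replaces two sides of a genuine unit square by the opposite two. One further small point worth making explicit in a write-up: periodicity of $g$ in the second argument follows from periodicity in the first together with antisymmetry, which is what licenses the identification of the circulation over the lifted path in $\Z^d$ with the circulation over the torus loop.
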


\begin{proof}
It follows from \cite[\mbox{Lemma A.2}]{K15} that if there are $p,q\in
 \{1,2,\cdots,$ $d\}$ such that for any $j\in\{1,2,\cdots,n\}$
 $\bx_j-\bx_{j+1}$ is equal to one of $\be_p$, $-\be_p$, $\be_q$,
 $-\be_q$ in $(\Z/2L\Z)^d$, then \eqref{eq_phase_circuit} holds. Let us
 call this property $(\star)$.

As hypothesis of induction, let us assume that there exists $l\in\{1,2,$
 $\cdots,d-1\}$ such that if for any $j\in\{1,2,\cdots,n\}$
 $\bx_j-\bx_{j+1}$ is equal to one of $\be_1$, $-\be_1$, $\be_2$,
 $-\be_2$, $\cdots$, $\be_l$, $-\be_l$ in $(\Z/2L\Z)^d$, then
 \eqref{eq_phase_circuit} holds.
Let $\bx_1,\bx_2,\cdots,\bx_n\in\G(2L)$ satisfy that 
for any $j\in\{1,2,\cdots,n\}$,
 $\bx_j-\bx_{j+1}$ is equal to one of $\be_1$, $-\be_1$, $\be_2$,
 $-\be_2$, $\cdots$, $\be_{l+1}$, $-\be_{l+1}$ in $(\Z/2L\Z)^d$.
Let us prove \eqref{eq_phase_circuit} for $\bx_1,\bx_2,\cdots,\bx_n$.
If $\bx_j-\bx_{j+1}$ is equal to $\be_{l+1}$ or $-\be_{l+1}$ in 
$(\Z/2L\Z)^d$ for any $j\in \{1,2,\cdots,n\}$, then
 \eqref{eq_phase_circuit} follows from $(\star)$. 
Assume that there exist $k_1,k_2,\cdots,k_m\in \{1,2,\cdots,n\}$ such
 that 
\begin{align*}
&k_1<k_2<\cdots<k_m,\\
&\bx_{k_p}-\bx_{k_{p}+1}\neq \be_{l+1},-\be_{l+1}\text{ in
 }(\Z/2L\Z)^d\quad(\forall p\in \{1,2,\cdots,m\}),\\
&\bx_j-\bx_{j+1}=\be_{l+1}\text{ or }-\be_{l+1}\text{ in
 }(\Z/2L\Z)^d\\
&(\forall j\in
 \{1,2,\cdots,n\}\backslash\{k_1,k_2,\cdots,k_m\}).
\end{align*}
If $m=1$, again \eqref{eq_phase_circuit} follows from
 $(\star)$. Assume that $m\ge 2$. Define the map
 $P:\G(2L)\to \G(2L)$ by 
$$
P(\bx):=(\bx(1),\cdots,\bx(l),\bx_1(l+1),\bx(l+2),\cdots,\bx(d)).
$$
For any $j\in \{1,2,\cdots,m-1\}$ we can choose
 $\by_{j,1},\by_{j,2},\cdots,\by_{j,q_j}\in \G(2L)$ so that
$\by_{j,1}=\bx_{k_j+1}$, $\by_{j,q_j}=P(\bx_{k_j+1})$, 
$\by_{j,p}-\by_{j,p+1}=\be_{l+1}$ or $-\be_{l+1}$ in $(\Z/2L\Z)^d$
 ($\forall p\in \{1,2,\cdots,q_j-1\}$). By $(\star)$,
\begin{align}
&(\varphi_1-\varphi_2)(P(\bx_{k_1+1}),\bx_1)\label{eq_next_remark}\\
&=\sum_{r=1}^{k_1}(\varphi_1-\varphi_2)(\bx_{r+1},\bx_r)+
\sum_{p=1}^{q_1-1}(\varphi_1-\varphi_2)(\by_{1,p+1},\by_{1,p})
\quad(\text{mod }2\pi).\notag
\end{align}
Moreover, by $(\star)$, for any $j\in \{1,2,\cdots,m-2\}$,
\begin{align}
&(\varphi_1-\varphi_2)(P(\bx_{k_{j+1}+1}),P(\bx_{k_{j}+1}))\label{eq_middle_remark}\\
&=-\sum_{p=1}^{q_j-1}(\varphi_1-\varphi_2)(\by_{j,p+1},\by_{j,p})+
\sum_{r=k_j+1}^{k_{j+1}}(\varphi_1-\varphi_2)(\bx_{r+1},\bx_{r})\notag\\
&\quad + \sum_{p=1}^{q_{j+1}-1}(\varphi_1-\varphi_2)(\by_{j+1,p+1},\by_{j+1,p})
\quad(\text{mod }2\pi),\notag\\
&(\varphi_1-\varphi_2)(\bx_{n+1},P(\bx_{k_{m-1}+1}))\label{eq_final_remark}\\
&=-\sum_{p=1}^{q_{m-1}-1}(\varphi_1-\varphi_2)(\by_{m-1,p+1},\by_{m-1,p})+
\sum_{r=k_{m-1}+1}^{n}(\varphi_1-\varphi_2)(\bx_{r+1},\bx_{r})\notag\\
&(\text{mod }2\pi).\notag
\end{align}
By adding \eqref{eq_next_remark}, \eqref{eq_middle_remark},
 \eqref{eq_final_remark} together,
\begin{align}
&(\varphi_1-\varphi_2)(P(\bx_{k_{1}+1}),\bx_{1})+\sum_{j=1}^{m-2}(\varphi_1-\varphi_2)(P(\bx_{k_{j+1}+1}),P(\bx_{k_{j}+1}))\label{eq_sum_chain}\\
&+(\varphi_1-\varphi_2)(\bx_{n+1},P(\bx_{k_{m-1}+1}))=
\sum_{r=1}^{n}(\varphi_1-\varphi_2)(\bx_{r+1},\bx_{r})\quad(\text{mod
 }2\pi).\notag
\end{align}
By the hypothesis of induction the left-hand side of
 \eqref{eq_sum_chain} is 0 (mod $2\pi$) and thus
 \eqref{eq_phase_circuit} holds.

The induction with $l\in\{1,2,\cdots,d\}$ concludes the proof.
\end{proof}

The next lemma is the $d$-dimensional version of \cite[\mbox{Lemma
A.3}]{K15}. However, the content is essentially same as
\cite[\mbox{Lemma 2.1}]{LL}.

\begin{lemma}(\cite[\mbox{Lemma 2.1}]{LL})
\label{lem_phase_gauge_invariance}
There exists a function $\theta:\G(2L)\to \R$ such that for any
 $\bx,\by\in\G(2L)$ satisfying that $\bx-\by$ is equal to one of
 $\be_1$, $-\be_1$, $\be_2$, $-\be_2$, $\cdots$, $\be_d$, $-\be_d$
in $(\Z/2L\Z)^d$,
$$
\varphi_1(\bx,\by)=\varphi_2(\bx,\by)+\theta(\bx)-\theta(\by)\quad(\text{mod
 }2\pi).
$$
\end{lemma}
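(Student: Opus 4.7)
The plan is to construct $\theta$ as a lattice potential for the closed $1$-form $\varphi_1-\varphi_2$, exploiting Lemma \ref{lem_phase_circuit} which tells us that $\varphi_1-\varphi_2$ integrates to $0$ (mod $2\pi$) around every closed bond-path in $\G(2L)$.

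First I would fix a base point, say $\b0\in\G(2L)$, and set $\theta(\b0):=0$. Since the graph with vertex set $\G(2L)$ and edges given by pairs $\bx,\by$ with $\bx-\by\in\{\pm\be_1,\ldots,\pm\be_d\}$ (mod $2L$) is connected, for every $\bx\in\G(2L)$ I can pick a path $\bx_0=\b0,\bx_1,\ldots,\bx_n=\bx$ of such edges and define
\begin{equation*}
\theta(\bx):=\sum_{j=0}^{n-1}(\varphi_1-\varphi_2)(\bx_{j+1},\bx_j)\pmod{2\pi}.
\end{equation*}
Any real representative of this residue class can be taken as the value of $\theta(\bx)$, so $\theta$ lands in $\R$ as required.

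The key step is well-definedness mod $2\pi$. If $\b0=\by_0,\by_1,\ldots,\by_m=\bx$ is another such path, then the concatenation $\b0=\bx_0,\ldots,\bx_n=\bx=\by_m,\by_{m-1},\ldots,\by_0=\b0$ is a closed lattice loop of the form covered by Lemma \ref{lem_phase_circuit}. Applying that lemma, the total sum of $(\varphi_1-\varphi_2)$ around this loop vanishes mod $2\pi$, so the two path sums agree mod $2\pi$. Hence $\theta$ is well defined modulo $2\pi$.

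Finally, given $\bx,\by\in\G(2L)$ with $\bx-\by\in\{\pm\be_1,\ldots,\pm\be_d\}$ in $(\Z/2L\Z)^d$, I choose any path from $\b0$ to $\by$ and extend it by the single edge $(\by,\bx)$ to reach $\bx$; comparing the two resulting definitions of $\theta$ gives $\theta(\bx)-\theta(\by)=(\varphi_1-\varphi_2)(\bx,\by)$ (mod $2\pi$), using the antisymmetry modulo $2\pi$ of $\varphi_1-\varphi_2$ from \eqref{eq_phase_condition_general}. There is no real obstacle here beyond Lemma \ref{lem_phase_circuit}, which already does the nontrivial work of ruling out nontrivial holonomy of $\varphi_1-\varphi_2$ around both small plaquette loops and the large loops wrapping the periodic lattice; the remainder is the standard lattice analogue of the statement that a closed $1$-form on a connected graph is exact.
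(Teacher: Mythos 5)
Your proof is correct and follows essentially the same route as the paper's: both constructions define $\theta$ as a discrete path integral of $\varphi_1-\varphi_2$ from the base point $\b0$, and both invoke Lemma \ref{lem_phase_circuit} to handle the closed-loop cancellation. The only cosmetic difference is that the paper fixes a canonical coordinate-by-coordinate path so $\theta$ is given by an explicit formula, whereas you allow an arbitrary path and separately argue well-definedness modulo $2\pi$.
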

\begin{proof}
Define $\theta:\G(2L)\to\R$ by 
\begin{align*}
&\theta((x_1,x_2,\cdots,x_d))\\
&:=1_{x_1\ge
 1}\sum_{j=0}^{x_1-1}(\varphi_1-\varphi_2)((j+1,0,\cdots,0),(j,0,\cdots,0))\\
&\quad +1_{x_2\ge
 1}\sum_{j=0}^{x_2-1}(\varphi_1-\varphi_2)((x_1,j+1,0,\cdots,0),(x_1,j,0,\cdots,0))+\cdots\\
&\quad +1_{x_d\ge
 1}\sum_{j=0}^{x_d-1}(\varphi_1-\varphi_2)((x_1,\cdots,x_{d-1},j+1),(x_1,\cdots,x_{d-1},j)).
\end{align*}
Then, Lemma \ref{lem_phase_circuit} implies that for any $\bx$,
 $\by\in\G(2L)$ satisfying that $\bx-\by$ is equal to one of 
$\be_1$, $-\be_1$, $\be_2$, $-\be_2$, $\cdots$, $\be_d$, $-\be_d$
in $(\Z/2L\Z)^d$,
$$
\theta(\bx)+(\varphi_1-\varphi_2)(\by,\bx)-\theta(\by)=0\quad(\text{mod
 }2\pi).
$$
\end{proof}

With a phase $\varphi:\Z^d\times\Z^d\to\R$ satisfying
\eqref{eq_phase_condition_general} we define the free Hamiltonian
$\sH_0(\varphi)$ by \eqref{eq_one_band_free_hamiltonian} and set
$\sH(\varphi)=\sH_0(\varphi)+\sV$ with the generalized interaction
$\sV$ defined in \eqref{eq_one_band_interaction}. 

\begin{lemma}\label{lem_partition_gauge_invariance}
$$
\Tr e^{-\beta \sH(\varphi_1)}=\Tr e^{-\beta \sH(\varphi_2)}.
$$
\end{lemma}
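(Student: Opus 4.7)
The plan is to realize the passage from $\varphi_2$ to $\varphi_1$ as a unitary gauge transformation on the Fock space. Since $\varphi_1,\varphi_2$ satisfy \eqref{eq_phase_condition_general} together with identical flux-per-plaquette and flux-through-large-circles conditions, Lemma \ref{lem_phase_gauge_invariance} produces a function $\theta:\G(2L)\to\R$ such that $\varphi_1(\bx,\by)=\varphi_2(\bx,\by)+\theta(\bx)-\theta(\by)$ (mod $2\pi$) for every nearest-neighbor pair $\bx,\by\in\G(2L)$. I extend $\theta$ to $\theta:\Z^d\to\R$ by $\theta(\bx):=\theta(r_{2L}(\bx))$; by construction this extension satisfies $\theta(\bx)=\theta(\by)$ whenever $\bx=\by$ in $(\Z/2L\Z)^d$, which is exactly the hypothesis of property \eqref{item_U1_invariance} for the kernels $V_m^L$.

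Next I introduce the unitary $V_{\theta}$ on $F_f(L^2(\G(2L)\times\spin))$ by $V_{\theta}\O_{2L}:=\O_{2L}$ and, by linearity,
\begin{align*}
V_{\theta}\psi_{\bx_1\s_1}^*\cdots\psi_{\bx_n\s_n}^*\O_{2L}
:=e^{i\sum_{j=1}^n\theta(\bx_j)}\psi_{\bx_1\s_1}^*\cdots\psi_{\bx_n\s_n}^*\O_{2L}.
\end{align*}
A direct check shows $V_{\theta}\psi_{\bx\s}^{*}V_{\theta}^{*}=e^{i\theta(\bx)}\psi_{\bx\s}^{*}$ and $V_{\theta}\psi_{\bx\s}V_{\theta}^{*}=e^{-i\theta(\bx)}\psi_{\bx\s}$, so the canonical anti-commutation relations are preserved and $V_{\theta}$ is unitary.

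The final step is to compute $V_{\theta}\sH(\varphi_2)V_{\theta}^{*}$. For the kinetic part, conjugation multiplies each term $\psi_{\bx\s}^{*}\psi_{\by\s}$ by $e^{i(\theta(\bx)-\theta(\by))}$, so Lemma \ref{lem_phase_gauge_invariance} yields $V_{\theta}\sH_0(\varphi_2)V_{\theta}^{*}=\sH_0(\varphi_1)$. For $\sV$, conjugation contributes the factor $\exp(i\sum_{j=1}^m\theta(\bx_j)-i\sum_{j=1}^m\theta(\by_j))$ inside the sum \eqref{eq_one_band_interaction}, and property \eqref{item_U1_invariance} of $V_m^L$ (applicable thanks to the periodicity of the extended $\theta$) exactly cancels this factor, giving $V_{\theta}\sV V_{\theta}^{*}=\sV$. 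Hence $V_{\theta}\sH(\varphi_2)V_{\theta}^{*}=\sH(\varphi_1)$, and cyclicity of the trace gives the claimed identity.

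The steps are all essentially routine; the only subtle point is ensuring that the gauge function produced by Lemma \ref{lem_phase_gauge_invariance} admits a $(2L)$-periodic extension to $\Z^d$, so that the $U(1)$-invariance axiom \eqref{item_U1_invariance} of the interaction kernels can be invoked. This is guaranteed once $\theta$ is defined on $\G(2L)$ and then extended by periodicity, and is the reason why property \eqref{item_U1_invariance} was stated in exactly that form.
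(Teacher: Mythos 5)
Your proposal is correct and takes essentially the same approach as the paper: the paper constructs the gauge-transformation unitary (calling it $B$) by deferring to [K15, Lemma A.4] and then, exactly as you do, invokes the $U(1)$-invariance axiom \eqref{eq_U1_invariance} of the interaction kernels to conclude $B\sV B^*=\sV$. Your added remark about extending $\theta$ by $(2L)$-periodicity so that the hypothesis of \eqref{item_U1_invariance} applies is a useful clarification of a point the paper handles implicitly.
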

\begin{proof}
By using the function $\theta$ introduced in Lemma
 \ref{lem_phase_gauge_invariance} and following the proof of 
\cite[\mbox{Lemma A.4}]{K15} we can construct the unitary transform $B$
 on $F_f(L^2(\G(2L)\times\spin))$ so that
 $B\sH(\varphi_2)B^*=\sH(\varphi_1)$. 
Here we need the invariance \eqref{eq_U1_invariance} to ensure that
 $B\sV B^*=\sV$.
This implies the result.
\end{proof}

Here we can state the sufficient condition to be a minimizer of the flux
phase problem. In the following we restrict the interaction
$\sV$ to have the reflection positive form
\eqref{eq_reflection_positive_interaction}. 

\begin{theorem}(\cite{L})\label{thm_flux_phase}
Assume that the phase $\theta_L$ satisfies
 \eqref{eq_phase_condition_general}, \eqref{eq_flux_per_plaquette} 
with $\theta_{j,k}=\pi$ for any $j,k\in\{1,2,\cdots,d\}$ with $j<k$ and
\eqref{eq_flux_per_circle} with $\eps_l^L=1_{L\in 2\Z}$ for any
 $l\in\{1,2,\cdots,d\}$. Then,
\begin{align*}
&-\frac{1}{\beta}\log(\Tr e^{-\beta \sH(\theta_L)})\\
& =\min\left\{-\frac{1}{\beta}\log(\Tr e^{-\beta \sH(\varphi)})\
 \Big|\ \varphi:\Z^d\times\Z^d\to \R\text{ satisfying
 }\eqref{eq_phase_condition_general}\right\}.
\end{align*}
\end{theorem}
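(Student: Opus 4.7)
The plan is to follow Lieb's reflection positivity approach, adapting the proof of \cite[\mbox{Theorem A.5}]{K15} from the 2-dimensional square lattice to the general $d$-dimensional hyper-cubic lattice. The argument naturally splits into a gauge-reduction step and a reflection-averaging step, and the key inputs are already available: Lemma \ref{lem_partition_gauge_invariance}, Lemma \ref{lem_phase_gauge_invariance}, the reflection positivity lemma \cite[\mbox{Lemma}]{L}, and the fact that the specific interaction \eqref{eq_reflection_positive_interaction} is designed to be reflection-positive (the on-site, nearest-neighbor density-density, and nearest-neighbor spin-spin terms all split cleanly across any coordinate hyperplane when written in terms of the relevant reflected operators).

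First, I would use Lemma \ref{lem_phase_gauge_invariance} and Lemma \ref{lem_partition_gauge_invariance} to reduce the minimization over all phases $\varphi$ satisfying \eqref{eq_phase_condition_general} to a minimization over the finite-dimensional set of gauge-inequivalent flux configurations: the plaquette fluxes $\{\phi_{j,k}(\bx)\}_{j<k,\,\bx\in\G(2L)}$ (mod $2\pi$) together with the large-circle fluxes $\{\Phi_l\}_{l=1}^{d}$ (mod $2\pi$). This is an immediate extension of the corresponding reduction in \cite[\mbox{Appendix A}]{K15}, since Lemma \ref{lem_phase_circuit} and Lemma \ref{lem_phase_gauge_invariance} have already been stated for arbitrary $d$.

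Second, I would perform the reflection-averaging step. For each pair $(j,k)$ with $j<k$ and each coordinate hyperplane $\Pi$ orthogonal to $\be_j$ (or $\be_k$) that cuts the periodic torus $\G(2L)$ into two equal halves, reflection positivity gives a Cauchy--Schwarz type inequality of the form
\begin{equation*}
\Tr e^{-\beta \sH(\varphi)}\le \bigl(\Tr e^{-\beta \sH(\varphi_{L})}\bigr)^{1/2}\bigl(\Tr e^{-\beta \sH(\varphi_{R})}\bigr)^{1/2},
\end{equation*}
where $\varphi_L,\varphi_R$ denote the configurations obtained by replacing the half $\Pi_R$ (resp.\ $\Pi_L$) by the reflection of $\Pi_L$ (resp.\ $\Pi_R$). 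This is exactly the mechanism exploited in the proof of \cite[\mbox{Theorem A.5}]{K15}; the only modification required in $d$ dimensions is that the reflections act trivially on all coordinate planes orthogonal to the reflecting hyperplane, so the plaquette fluxes in the $(p,q)$-plane for $\{p,q\}\cap\{j\}=\emptyset$ (respectively $\{k\}=\emptyset$) are unchanged, and one only needs to control what the reflection does to the $(j,k)$-plaquettes and to the winding flux $\Phi_j$ (resp.\ $\Phi_k$). An inductive iteration over the $\binom{d}{2}$ coordinate planes then drives every $(j,k)$-plaquette flux to the reflection-invariant value $\pi$ and pins each winding flux $\Phi_l$ to the unique value compatible with the parity of $L$, namely $\eps_l^L\pi=1_{L\in 2\Z}\pi$. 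Since the phase $\theta_L$ in the statement realizes exactly this fixed-point configuration, the chain of Schwarz inequalities shows that $\sH(\theta_L)$ achieves the minimum.

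The main obstacle will be book-keeping the combined action of a reflection on the $(j,k)$-plaquette fluxes and the winding flux $\Phi_j$ simultaneously, together with verifying that the sequence of reflections can be chosen so that after finitely many steps every plaquette flux has been equalized to $\pi$ \emph{without} subsequent reflections undoing earlier progress. In two dimensions this is transparent (one alternates horizontal and vertical reflections), but in $d$ dimensions one must argue that the partial order induced by the $\binom{d}{2}$ planes admits a reflection schedule that monotonically increases the number of plaquettes with flux $\pi$. This follows because reflections in direction $\be_j$ fix all plaquettes in planes not containing $\be_j$, so one can process the $\binom{d}{2}$ planes sequentially, each time invoking the 2-dimensional argument of \cite[\mbox{Appendix A}]{K15} on the restricted lattice. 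The parity condition $\eps_l^L=1_{L\in 2\Z}$ enters at the final step because reflection averaging around a non-contractible cycle of length $2L$ leaves invariant only the winding flux $0$ when $L$ is odd and $\pi$ when $L$ is even, which is precisely the content of \eqref{eq_flux_per_circle} for the minimizer.
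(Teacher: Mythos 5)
Your overall approach matches the paper's: gauge-reduce via Lemma \ref{lem_partition_gauge_invariance}, then iterate the reflection-positivity inequality from \cite[\mbox{Lemma}]{L} over coordinate hyperplanes so that the minimizing configuration is driven to the all-$\pi$ plaquette configuration with winding flux $1_{L\in 2\Z}\pi$. You also correctly identify the central danger, namely that later reflections may disturb plaquette fluxes already equalized.

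However, the remedy you sketch --- ``process the $\binom{d}{2}$ planes sequentially, each time invoking the 2-dimensional argument on the restricted lattice'' --- does not work if read literally as one $(j,k)$-plane at a time, and the remark that ``reflections in direction $\be_j$ fix all plaquettes in planes not containing $\be_j$'' is not enough to salvage it. Concretely, after fixing the $(1,2)$- and $(1,3)$-plaquettes, there is no available reflection direction left to fix $(2,3)$: reflections across hyperplanes orthogonal to $\be_2$ disturb $(1,2)$, and those orthogonal to $\be_3$ disturb $(1,3)$, so a one-plane-at-a-time schedule stalls. The paper circumvents this by batching the induction on $l$: at step $l+1$ it uses \emph{only} reflections $R_s$ across $H_{l+1}(s)$ (together with gauge transforms $G_{\theta}$), and it first verifies the crucial stability statement that $R_s$ and $G_\theta$ preserve the already achieved conditions $f_{j,k}(\varphi)(\bx)=\pi$, $f_m(\varphi)(\bx)=1_{L\in 2\N}\pi$ for all $j,k,m\le l$. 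The batched step then drives \emph{all} of $f_{1,l+1},\dots,f_{l,l+1}$ to $\pi$ and pins $f_{l+1}$ simultaneously, by running the second half of the 2-dimensional argument of \cite[\mbox{Theorem A.5}]{K15} once with the $(l+1)$-th coordinate in the role of the first coordinate and an arbitrary $k\le l$ in the role of the second. This batched schedule is the ingredient your plan leaves unspecified; the rest of your outline (including the role of the parity condition for the large-circle fluxes) is correct and aligned with the paper.
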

\begin{proof}
By Lemma
 \ref{lem_partition_gauge_invariance} it is sufficient to prove the
 existence of a phase with the claimed properties minimizing the free
 energy. For any $j,k\in \{1,2,\cdots,d\}$, $\bx\in \Z^d$, $s\in \Z$ and
 $\eta:\Z^d\times\Z^d\to \R$, set
\begin{align*}
&f_{j,k}(\eta)(\bx):=
\eta(\bx+\be_j,\bx)+\eta(\bx+\be_j+\be_k,\bx+\be_j)\\
&\qquad\qquad\quad+\eta(\bx+\be_k,\bx+\be_j+\be_k)+\eta(\bx,\bx+\be_k),\\
&f_j(\eta)(\bx):=\sum_{m=0}^{2L-1}\eta(\bx+(m+1)\be_j,\bx+m\be_j),\\
&H_j(s):=\Big\{\Big(y_1,\cdots,y_{j-1},s+\frac{1}{2},y_{j+1},\cdots,y_d\Big)\in\R^d\\
&\qquad\qquad\qquad  \Big|\ y_1,\cdots,y_{j-1},y_{j+1},\cdots,y_d\in\R\Big\}.
\end{align*}
Since the interaction $\sV$ is assumed to satisfy the positivity
 convention, we can apply the reflection positivity lemma
 \cite[\mbox{Lemma}]{L} with respect to the cutting hyper-plane
 $H_j(s)$. Recall that in the proof of \cite[\mbox{Theorem A.5}]{K15}
 first we did the reflection with the horizontal line $\{(x,1/2)\in\R^2\
 |\ x\in\R\}$ and secondly we did the reflections with the vertical
 lines $\{(s+1/2,y)\in\R^2\ |\ y\in\R\}$ $(s=0,1,\cdots,L-1)$.
The argument involving the reflections with the hyper-planes
 $H_2(0)$, $H_1(s)$ $(s=0,1,\cdots,L-1)$, parallel to the proof of 
 \cite[\mbox{Theorem A.5}]{K15}, proves that there exists a phase
 $\varphi$ satisfying \eqref{eq_phase_condition_general},
\begin{align*}
&f_{1,2}(\varphi)(\bx)=\pi,\quad f_1(\varphi)(\bx)=f_2(\varphi)(\bx)=
1_{L\in 2\N}\pi
\quad(\text{mod }2\pi),\ (\forall \bx\in\Z^d)
\end{align*}
and minimizing the free energy. This concludes the proof in the case
 $d=2$. Let us consider the case that $d\ge 3$. As hypothesis of
 induction, assume that $l\in \{2,3,\cdots,d-1\}$ and there exists a
 phase $\varphi$ satisfying \eqref{eq_phase_condition_general},
\begin{align}
&f_{j,k}(\varphi)(\bx)=\pi,\quad f_m(\varphi)(\bx)=1_{L\in 2\N}\pi
\quad(\text{mod }2\pi),\label{eq_phase_condition_induction}\\
&(\forall \bx\in\Z^d,j,k,m\in \{1,2,\cdots,l\}\text{ with
 }j<k)\notag
\end{align}
and minimizing the free energy.
For $s\in\{0,1,\cdots,L-1\}$ let us define the map 
$\Ref_s:\Z^d\to\Z^d$ by
\begin{align*}
&\Ref_s(\bx):=(x_1,\cdots,x_l,2s+1-x_{l+1},x_{l+2},\cdots,x_d).
\end{align*}
Then, define the transform $R_s$ on $\Map(\Z^d\times\Z^d,\R)$ by 
\begin{align*}
&R_s(\eta)(\bx,\by)\\
&:=\left\{\begin{array}{ll}\eta(\Ref_s(\bx),\Ref_s(\by))+\pi 
& \text{if }\exists j,k\in \{s+1,s+2,\cdots,s+L\}\text{
 s.t. }\\
 & \bx(l+1)=j,\ \by(l+1)=k\quad(\text{mod }2L),\\
\eta(\bx,\by)& \text{otherwise,}
\end{array}
\right.\\
&(\bx,\by\in\Z^d).
\end{align*}
Also, for any function $\theta:\Z^d\to \R$ satisfying that
 $\theta(\bx)=\theta(\by)$ for any $\bx,\by\in \Z^d$ with $\bx=\by$ in
 $(\Z/2L\Z)^d$ we define the transform $G_{\theta}$ on
 $\Map(\Z^d\times\Z^d,\R)$ by 
$$
G_{\theta}(\eta)(\bx,\by):=\eta(\bx,\by)+\theta(\bx)-\theta(\by),\quad
 (\bx,\by\in \Z^d).
$$
Note that if $\eta\in \Map(\Z^d\times\Z^d,\R)$ satisfies \eqref{eq_phase_condition_general} and
 \eqref{eq_phase_condition_induction}, so do $R_s(\eta)$,
 $G_{\theta}(\eta)$. We reform $\Tr e^{-\beta \sH(\varphi)}$ by
 repeating the reflection with respect to the hyper-planes $H_{l+1}(s)$
 $(s=0,1,\cdots,L-1)$ and the gauge transformations. This procedure is
 parallel to the part of the proof of \cite[\mbox{Theorem A.5}]{K15}
 demonstrating the reflections with respect to the vertical lines
$\{(s+1/2,y)\in\R^2\ |\ y\in\R\}$ $(s=0,1,\cdots,L-1)$. Here we consider
 the $l+1$-th coordinate, the $k$-th coordinate $(k\in \{1,2,\cdots,l\})$ as
 the first coordinate, the second coordinate respectively in the part
 of the proof of \cite[\mbox{Theorem A.5}]{K15} after the first
 reflection with the horizontal line $\{(x,1/2)\in\R^2\ |\ x\in\R\}$. By
 replacing $\be_1$, $\be_2$ by $\be_{l+1}$, $\be_{k}$ respectively there
 we can see that there exists a phase $\varphi'\in
 \Map(\Z^d\times\Z^d,\R)$ satisfying \eqref{eq_phase_condition_general},
\begin{align*}
&f_{k,l+1}(\varphi')(\bx)=\pi,\quad f_{l+1}(\varphi')(\bx)=1_{L\in 2\N}\pi
\quad(\text{mod }2\pi),\\
&(\forall \bx\in\Z^d,k\in \{1,2,\cdots,l\})
\end{align*}
and minimizing the free energy. In fact the phase $\varphi'$ is derived
 by repeatedly applying the transforms $R_s$, $G_{\theta}$ with
 $s\in\{0,1,\cdots,L-1\}$ and some periodic functions $\theta:\Z^d\to
 \R$ to the phase $\varphi$ given by the induction hypothesis. 
As remarked above, the phase
 $\varphi'$ still satisfies \eqref{eq_phase_condition_induction}. The induction with $l$ concludes
 the existence of a phase with the claimed properties.
\end{proof}

\section{Lemmas for the time-continuum, infinite-volume limit}\label{app_h_L_limit}
Here we prove that each term of the Taylor expansion of the free energy
density with respect to the amplitude of the interaction converges in
the time-continuum, infinite-volume limit. This fact is used
 to prove that the free energy
itself converges in these limits in Subsection \ref{subsec_completion_IR}. Basic ideas of this section are not
essentially different from those in \cite[\mbox{Appendix B}]{K10},
\cite[\mbox{Appendix D}]{K14},  \cite[\mbox{Appendix D}]{K15}. Since we introduced a class of interactions, which are different from
the interactions in the preceding papers and some properties of our
interactions are necessary to prove the fact of concern, we should again
demonstrate the major part of the proof.

For $n\in \N\cup \{0\}$ set
\begin{align*}
&a_n(L,h)(\bU):=-\frac{1}{\beta L^d n!}\left(\frac{\partial}{\partial
 z}\right)^n\log\left(\int e^{-zV(\bU)(\psi)}d\mu_{C}(\psi)
\right)\Big|_{z=0},
\end{align*}
where the Grassmann Gaussian integral is same as that considered in
Lemma \ref{lem_grassmann_formulation}. Our aim here is to prove the
uniform convergence property of $a_n(L,h)(\bU)$ with the coupling $\bU$
as $h,L\to\infty$. The covariance $C:(\cB\times\G(L)\times\spin
\times[0,\beta))^2\to \C$ was originally defined in
\eqref{eq_covariance_original_definition}. We can periodically extend the domain of
$C$ into $(\cB\times\Z^d\times\spin\times[0,\beta))^2$. Then, by taking
into account Lemma \ref{lem_hopping_properties}
\eqref{item_hopping_upper},\eqref{item_derivative_hopping_upper} we can
see that the same procedure as in the derivation of the inequalities 
\cite[\mbox{(D.3), (D.4), Appendix D}]{K15} yields the following
results.

\begin{lemma}\label{lem_original_covariance_naive_decay}
There exists a constant $c(\beta,d,(t_j)_{1\le j\le d})\in\R_{>0}$ depending
 only on $\beta$, $d$, $(t_j)_{1\le j\le d}$ such that the following
 inequalities hold. 
\begin{align}
&|C(\rho\bx\s x,\eta\by\tau y)|\le
 \frac{c(\beta,d,(t_j)_{1\le j\le d})}{1+\sum_{j=1}^d\left|\frac{L}{2\pi}
\big(e^{i\frac{2\pi}{L}\<\bx-\by,\be_j\>}-1\big)\right|^{d+1}},
\label{eq_original_covariance_naive_decay}\\
&(\forall (\rho,\bx,\s,x),(\eta,\by,\tau, y)\in \cB\times\Z^d\times
 \spin \times [0,\beta)),\notag\\
&|C(\rho\bx\s x,\eta\by\tau y)|\le
 \frac{c(\beta,d,(t_j)_{1\le j\le d})}{1+\left(\frac{2}{\pi}\right)^{d+1}
\sum_{j=1}^d|\<\bx-\by,\be_j\>|^{d+1}}
\label{eq_original_covariance_naive_decay_L_independent}\\
&(\forall (\rho,\bx,\s,x),(\eta,\by,\tau, y)\in \cB\times\Z^d\times
 \spin \times [0,\beta)\notag\\
&\quad\text{ with }|\<\bx-\by,\be_j\>|\le L/2\ (\forall
 j\in \{1,2,\cdots,d\})).\notag
\end{align}
\end{lemma}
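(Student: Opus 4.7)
The plan is to obtain polynomial decay in position space from smoothness in momentum space by a standard discrete integration-by-parts argument, exactly parallel to \cite[Appendix D]{K15}. Starting from the momentum-space representation \eqref{eq_introduction_covariance} of $C$ (extended periodically and taken in the time-continuum limit, which only replaces the Matsubara sum $\cM_h$ by $\cM$ and the resolvent $h^{-1}(I_{2^d}-e^{-i\o/h I_{2^d}+\frac{1}{h}\cE(\bk)})^{-1}$ by $(i\o I_{2^d}-\cE(\bk))^{-1}$), one has a representation $C(\rho\bx\s x,\eta\by\tau y)=\delta_{\s,\tau}L^{-d}\sum_{\bk\in\G(L)^*}e^{i\<\bx-\by,\bk\>}K(\bk,x-y)(\rho,\eta)$ for some matrix-valued kernel $K$. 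The factor $(e^{-i\frac{2\pi}{L}\<\bx-\by,\be_j\>}-1)$ acting on this sum can be absorbed, after shifting the summation index $\bk\mapsto \bk+(2\pi/L)\be_j$, into a first discrete difference $(\Delta_j K)(\bk,x-y):=K(\bk+(2\pi/L)\be_j,x-y)-K(\bk,x-y)$ of the kernel. Iterating this $d+1$ times yields the identity
\begin{equation*}
\left(\frac{L}{2\pi}\right)^{d+1}(e^{-i\frac{2\pi}{L}\<\bx-\by,\be_j\>}-1)^{d+1}C(\rho\bx\s x,\eta\by\tau y)=\delta_{\s,\tau}\frac{1}{L^d}\sum_{\bk\in\G(L)^*}e^{i\<\bx-\by,\bk\>}\left(\frac{L}{2\pi}\right)^{d+1}\Delta_j^{d+1}K(\bk,x-y)(\rho,\eta),
\end{equation*}
and the factor $(L/2\pi)^{d+1}\Delta_j^{d+1}$ is uniformly (in $L$) bounded by $\sup_{\bk}\|\partial_{k_j}^{d+1}K(\bk,x-y)\|_{2^d\times 2^d}$.

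Next I would bound this supremum by a constant depending only on $\beta,d,(t_j)$. The kernel $K(\bk,x-y)$ is a smooth matrix-valued function of $\cE(\bk)$ (essentially $e^{(x-y)\cE(\bk)}(I_{2^d}+e^{-\beta\cE(\bk)})^{-1}$ with the appropriate sign for $x\gtrless y$); its partial derivatives in $k_j$ unfold by the Leibniz rule into finite sums of products involving $(\partial_{k_j}^n\cE(\bk))$ for $1\le n\le d+1$, each of which is uniformly bounded by $t_j$ by Lemma \ref{lem_hopping_properties} \eqref{item_hopping_upper},\eqref{item_derivative_hopping_upper}, together with bounded functions of $\cE(\bk)$ whose norm depends only on $\beta$ and $\sup_{\bk}\|\cE(\bk)\|_{2^d\times 2^d}\le 2\sum_j t_j$. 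This yields
\begin{equation*}
\left|\left(\frac{L}{2\pi}\right)^{d+1}(e^{-i\frac{2\pi}{L}\<\bx-\by,\be_j\>}-1)^{d+1}C(\rho\bx\s x,\eta\by\tau y)\right|\le c(\beta,d,(t_j))
\end{equation*}
for each $j\in\{1,\cdots,d\}$; combined with the trivial bound $|C|\le c(\beta,d,(t_j))$, this yields \eqref{eq_original_covariance_naive_decay} after summing the $d+1$ estimates.

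For \eqref{eq_original_covariance_naive_decay_L_independent}, I would invoke the elementary lower bound $|e^{i\theta}-1|\ge (2/\pi)|\theta|$ for $|\theta|\le\pi$, applied with $\theta=(2\pi/L)\<\bx-\by,\be_j\>$ on the regime $|\<\bx-\by,\be_j\>|\le L/2$: this gives $|\frac{L}{2\pi}(e^{i\frac{2\pi}{L}\<\bx-\by,\be_j\>}-1)|\ge (2/\pi)|\<\bx-\by,\be_j\>|$, and substituting into \eqref{eq_original_covariance_naive_decay} produces the claimed $L$-independent decay.

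The main technical obstacle is the bound on $\partial_{k_j}^{d+1}K(\bk,x-y)$ uniformly in $\bk$ and $x-y\in[0,\beta)$. Since $\cE(\bk)$ is hermitian with spectrum bounded by $2\sum_j t_j$ (Lemma \ref{lem_hopping_properties} \eqref{item_hopping_upper}) and the function $z\mapsto e^{(x-y)z}(1+e^{-\beta z})^{-1}$ is analytic and bounded on an $\R$-neighbourhood of this spectrum, one obtains such a bound by expanding via the holomorphic functional calculus or by a direct Neumann-series expansion of the resolvent. The bookkeeping is standard but combinatorially heavy; it need not be executed explicitly, as only the existence of a $(\beta,d,(t_j))$-dependent constant is claimed.
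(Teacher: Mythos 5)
Your proposal is correct and takes essentially the same approach as the paper. The paper's own ``proof'' of this lemma is only a pointer to \cite[Appendix D]{K15} (invoking Lemma \ref{lem_hopping_properties} \eqref{item_hopping_upper}, \eqref{item_derivative_hopping_upper}), and what you describe — writing $C$ in momentum space with a smooth matrix-valued kernel of the form $e^{\pm(x-y)\cE(\bk)}(I_{2^d}+e^{\mp\beta\cE(\bk)})^{-1}$, converting the factor $\frac{L}{2\pi}(e^{-i\frac{2\pi}{L}\<\bx-\by,\be_j\>}-1)$ into a discrete difference by a shift of the momentum sum, iterating $d+1$ times, and controlling the resulting $(d+1)$-th $\bk$-derivative uniformly via the bounded spectrum of $\cE(\bk)$ and the derivative bounds from Lemma \ref{lem_hopping_properties} — is precisely the discrete integration-by-parts argument executed there, and your derivation of \eqref{eq_original_covariance_naive_decay_L_independent} from $|e^{i\theta}-1|\ge (2/\pi)|\theta|$ for $|\theta|\le\pi$ matches as well.
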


For conciseness we set 
\begin{align*}
&J:=\cB\times \G(L)\times\spin\times\{1,-1\},\\
&J_c:=\cB\times
 \left[-\frac{L}{2},\frac{L}{2}\right)\times\spin\times \{1,-1\},\\
&J_0:=\cB\times \{\b0\}\times\spin\times\{1,-1\},\quad J_{\infty}:=\cB\times
 \Z^d\times\spin\times \{1,-1\}.
\end{align*}
Using the original kernels $V_m^L$ $(m=0,1,\cdots,N_v)$ of the
interaction, we define $V_0^0\in \Map(\C^{n_v},\C)$, $V_{2m}^0\in
\Map(\C^{n_v},\Map(J_{\infty}^{2m},\C))$ ($m=1,2,\cdots,N_v$) by 
\begin{align*}
&V_0^0(\bU):=V_0^L(\bU),\\
&V_{2m}^0(\bU)((\rho_1,\bx_1,\s_1,\theta_1),\cdots,(\rho_{2m},\bx_{2m},\s_{2m},\theta_{2m}))\\
&:=\frac{1}{(2m)!}\sum_{\xi\in\S_{2m}}\sgn(\xi)1_{(\theta_{\xi(1)},\cdots,\theta_{\xi(m)})=(1,\cdots,1),(\theta_{\xi(m+1)},\cdots,\theta_{\xi(2m)})=(-1,\cdots,-1)}\\
&\quad\cdot V_m^{L}(\bU)(((2\bx_{\xi(1)}+b(\rho_{\xi(1)}),\s_{\xi(1)}),\cdots,
(2\bx_{\xi(m)}+b(\rho_{\xi(m)}),\s_{\xi(m)})),\\
&\qquad\quad ((2\bx_{\xi(m+1)}+b(\rho_{\xi(m+1)}),\s_{\xi(m+1)}),\cdots,
(2\bx_{\xi(2m)}+b(\rho_{\xi(2m)}),\s_{\xi(2m)}))).
\end{align*}
The next lemma summarizes some properties of $V_0^0$, $V_{2m}^0$ which we
will use later.

\begin{lemma}\label{lem_temperature_independent_interaction}
For any $r\in\R_{>0}$, $m\in \{1,2,\cdots,N_v\}$ the following
 statements hold.
\begin{align}
&\sup_{\bU\in
 \overline{D(r)}^{n_v}}|V_{2m}^0(\bU)(\rho_1\b0\s_1\theta_1,\rho_2\bx_2\s_2\theta_2,\cdots,\rho_{2m}\bx_{2m}\s_{2m}\theta_{2m})|\label{eq_temperature_independent_interaction_decay}\\
&\le r
 e^dv_m(1)e^{-\frac{1}{2m-1}\sum_{p=2}^{2m}\sum_{j=1}^d(\frac{2}{\pi}|\<\bx_p,\be_j\>|)^{1/2}},\notag\\
&\quad(\forall (\rho_1,\s_1,\theta_1)\in\cB\times \spin\times \{1,-1\},\notag\\
&\qquad  (\rho_j,\bx_j,\s_j,\theta_j)\in J_{\infty}\cap J_c\
 (j=2,3,\cdots,2m)).\notag\\
&\frac{1}{L^d}V_0^0,V_{2m}^0(\bX)\text{ converge in
 }C(\overline{D(r)}^{n_v};\C)\text{ as }L\to \infty(L\in\N)\notag\\
&\text{ for
 any }\bX\in J_{\infty}^{2m}.\notag
\end{align}
\end{lemma}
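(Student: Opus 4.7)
The plan for part one is to combine the linearity of $V_m^L$ in $\bU$, the bi-anti-symmetry \eqref{eq_bi_anti_symmetric}, and the sum bound \eqref{eq_decay_bound}. By \eqref{item_linearity}, $V_m^L(\bU) = r \, V_m^L(\bU/r)$ for any $\bU \in \overline{D(r)}^{n_v}$, which reduces the estimate to the case $\bU \in \overline{D(1)}^{n_v}$ and extracts the prefactor $r$. Expanding $V_{2m}^0$ by its definition, the $\theta$-split indicator restricts the sum to at most $(m!)^2$ permutations $\xi$; for each such $\xi$ I will invoke \eqref{eq_bi_anti_symmetric} to move the anchor site $X_1 = (b(\rho_1), \s_1)$ (the image of $\bx_1 = \b0$ under $\bx \mapsto 2\bx + b(\rho)$) into the first argument slot of $V_m^L$ at the cost of only a sign.

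Next I will extract pointwise decay from \eqref{eq_decay_bound}. Selecting a single summand from the sum over $(\bx_j, \s_j)$ with $c = 1$ gives, for each $q \in \{1,\ldots,2m-1\}$,
\begin{equation*}
|V_m^L(\bU')(\cdots)| \le v_m(1) \exp\Bigl(-\sum_{j=1}^d \bigl(\tfrac{L}{\pi}|e^{i\pi \<X_1 - X_{i_q}, \be_j\>/L}-1|\bigr)^{1/2}\Bigr).
\end{equation*}
Taking the geometric mean of these $2m-1$ estimates produces the factor $1/(2m-1)$ in the exponent. Translating the lattice distance into the macroscopic coordinate uses $|b(\rho_1)(k) - 2\bx_p(k) - b(\rho_p)(k)| \ge |\bx_p(k)|$ (trivially when $\bx_p(k) = 0$; via $2|\bx_p(k)| - 1 \ge |\bx_p(k)|$ when $|\bx_p(k)| \ge 1$) together with $\frac{L}{\pi}|e^{i\pi y/L}-1| \ge \frac{2}{\pi}|y|$ for $|y| \le L$, yielding $\frac{L}{\pi}|e^{i\pi \<X_1 - X_p, \be_j\>/L}-1| \ge \frac{2}{\pi}|\<\bx_p, \be_j\>|$. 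The combinatorial factor $(m!)^2/(2m)! \le 1$ together with a mild boundary slack will fit inside the cushion $e^d$.

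Part two follows from linearity and the infinite-volume property \eqref{item_infinite_volume_limit}. By \eqref{item_linearity} the maps $\bU \mapsto V_0^0(\bU)$ and $\bU \mapsto V_{2m}^0(\bU)(\bX)$ are linear, so it suffices to show that each partial derivative $\frac{1}{L^d}\partial V_0^L/\partial U_l$ and $\partial V_{2m}^0(\bX)/\partial U_l$ (both constants in $\bU$) converges as $L \to \infty$. The first convergence is \eqref{item_infinite_volume_limit} verbatim. For the second, the defining formula expresses $\partial V_{2m}^0(\bX)/\partial U_l$ as a finite sum over $\xi \in \S_{2m}$ of values of $\partial V_m^L/\partial U_l$ evaluated at the reordered sites $(2\bx_j + b(\rho_j), \s_j)$; each of these converges by \eqref{item_infinite_volume_limit}, using that for fixed $\bX \in J_\infty^{2m}$ these sites eventually lie inside $\G(2L)$ without wraparound once $L$ is large enough. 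Since $|U_l| \le r$ and $n_v$ is finite, the linear combination converges uniformly on $\overline{D(r)}^{n_v}$.

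The main technical obstacle is the bookkeeping in part one: verifying that after the bi-anti-symmetric reordering the indices $\{X_{i_q} : q = 1, \ldots, 2m-1\}$ run bijectively through $\{X_p : p = 2, \ldots, 2m\}$ for each surviving $\xi$, and handling the boundary case $\bx_p(k) = -L/2$ where $2\bx_p(k)$ plus $b$-corrections can equal $-L-1$ and wraps modulo $2L$ to $L-1$; this wraparound slightly alters the form of the lower bound but still preserves $\ge |\bx_p(k)|$ up to a constant absorbed in $e^d$.
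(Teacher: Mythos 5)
Your proposal is essentially the same argument as the paper's: linearity of $V_m^L$ in $\bU$ extracts the factor $r$, restriction to $\bU\in\overline{D(1)}^{n_v}$ plus a single summand of \eqref{eq_decay_bound} at $c=1$ gives the pointwise decay with constant $v_m(1)$, the $2m-1$ anchor choices are combined by a geometric mean to produce the $\tfrac{1}{2m-1}$ exponent, and the $e^d$ slack absorbs the conversion between the $\tfrac{L}{2\pi}|e^{i\frac{2\pi}{L}\<\cdot,\be_j\>}-1|$ and $\tfrac{L}{\pi}|e^{i\frac{\pi}{L}\<\cdot,\be_j\>}-1|$ metrics and the $b(\rho)$ offset; part two follows from linearity and \eqref{item_infinite_volume_limit} exactly as you say.

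One small inaccuracy worth noting: you claim you can ``invoke \eqref{eq_bi_anti_symmetric} to move the anchor site into the first argument slot of $V_m^L$ at the cost of only a sign,'' but \eqref{eq_bi_anti_symmetric} only permutes within the block of the first $m$ arguments or within the block of the last $m$ arguments — it cannot move an argument between blocks. When $\theta_1 = -1$, the anchor sits among the last $m$ arguments (the ``annihilation'' slots), and to bring it to position one of the first block you must also use the hermiticity property \eqref{eq_hermiticity}, which swaps $\bX\leftrightarrow\bY$ together with complex conjugation of $\bU$ (harmless since the bound is a supremum over $\overline{D(r)}^{n_v}$, which is closed under conjugation). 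The paper's proof cites both \eqref{eq_bi_anti_symmetric} and \eqref{eq_hermiticity} for exactly this reason; your sketch silently assumes $\theta_1 = 1$.
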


\begin{proof}
Take any $\bU\in\overline{D(r)}^{n_v}$, $(\rho_j,\bx_j,\s_j,\theta_j)\in
 J_{\infty}\cap J_c$ $(j=1,2,\cdots,2m)$, $p\in\{2,3,\cdots,2m\}$.
Note that 
\begin{align*}
e^{\sum_{j=1}^d(\frac{L}{2\pi}|e^{i\frac{2\pi}{L}\<\bx_1-\bx_p,\be_j\>}-1|)^{1/2}}
\le
 e^{\sum_{j=1}^d(\frac{L}{\pi}|e^{i\frac{\pi}{L}\<2\bx_1+b(\rho_1)-2\bx_p-b(\rho_p),\be_j\>}-1|)^{1/2}+d}.
\end{align*}
By this inequality, the linearity with $\bU$,
 \eqref{eq_bi_anti_symmetric}, \eqref{eq_hermiticity} and 
 \eqref{eq_decay_bound},
\begin{align*}
&e^{\sum_{j=1}^d(\frac{L}{2\pi}|e^{i\frac{2\pi}{L}\<\bx_1-\bx_p,\be_j\>}-1|)^{1/2}}|V_{2m}^0(\bU)(\rho_1\bx_1\s_1\theta_1,\cdots,\rho_{2m}\bx_{2m}\s_{2m}\theta_{2m})|\\
&\le
 r e^{d}v_m(1).
\end{align*}
Thus, 
\begin{align*}
&\sup_{\bU\in
 \overline{D(r)}^{n_v}}|V^0_{2m}(\bU)(\rho_1\bx_1\s_1\theta_1,\cdots,\rho_{2m}\bx_{2m}\s_{2m}\theta_{2m})|\\
&\le r
 e^{d}v_m(1)e^{-\frac{1}{2m-1}\sum_{p=2}^{2m}\sum_{j=1}^d(\frac{L}{2\pi}|e^{i\frac{2\pi}{L}\<\bx_1-\bx_p,\be_j\>}-1|)^{1/2}}.
\end{align*}
The above inequality implies \eqref{eq_temperature_independent_interaction_decay}. The claimed convergence properties
 follow from that $\bU\mapsto
 V_0^L(\bU)$, $\bU\mapsto V_m^L(\bU)(\bX)$ are linear and $\frac{1}{L^d}\frac{\partial}{\partial
 U_j}V_0^L(\bU)$, $\frac{\partial}{\partial
 U_j}V_m^L(\bU)(\bX)$ converge as $L\to \infty$. 
\end{proof}

For convenience in the proof of the next lemma we introduce some more
notations. Define the transform $P_0$ on $J_{\infty}^m$ by 
\begin{align*}
&P_0(((\rho_1,\bx_1,\s_1,\theta_1),\cdots,(\rho_m,\bx_m,\s_m,\theta_m)))\\
&=((\rho_1,\b0,\s_1,\theta_1),\cdots,(\rho_m,\b0,\s_m,\theta_m)).
\end{align*}
Define the map $P_s$ from $J_{\infty}$ to $\Z^d$ by
$P_s((\rho,\bx,\s,\theta)):=\bx$. Moreover, define the map $P_L$ from
$J_{\infty}^m$ to $J^m$ by
\begin{align*}
&P_L(((\rho_1,\bx_1,\s_1,\theta_1),\cdots,(\rho_m,\bx_m,\s_m,\theta_m)))\\
&=((\rho_1,\bx_1',\s_1,\theta_1),\cdots,(\rho_m,\bx_m',\s_m,\theta_m)),
\end{align*}
where $\bx_j'\in\G(L)$ and $\bx_j=\bx_j'$ in $(\Z/L\Z)^d$
$(j=1,2,\cdots,m)$. We also define a map from $(\cB\times
\Z^d\times\spin\times[0,\beta)\times\{1,-1\})^m$ to 
 $(\cB\times
\G(L)\times\spin\times[0,\beta)\times\{1,-1\})^m$ in the same way as
above and let $P_L$ denote the map, though this is abuse of notation.
For any
$\bX=((\rho_1,\bx_1,\s_1,\theta_1),\cdots,(\rho_m,\bx_m,\s_m,\theta_m))\in
J_{\infty}^m$, $s\in [0,\beta)$ we define $(\bX|s)\in (\cB\times
\Z^d\times\spin\times[0,\beta)\times\{1,-1\})^m$ by
$$
(\bX|s):=((\rho_1,\bx_1,\s_1,s,\theta_1),\cdots,(\rho_m,\bx_m,\s_m,s,\theta_m)).$$
Furthermore, for any
$\bX=((\rho_1,\bx_1,\s_1,s_1,\theta_1),\cdots,(\rho_m,\bx_m,\s_m,s_m,\theta_m))\in
(\cB\times\Z^d\times\spin\times[0,\beta)\times\{1,-1\})^m$ and
$\bx\in\Z^d$ we define $\bX+\bx\in  (\cB\times
\Z^d\times\spin\times[0,\beta)\times\{1,-1\})^m$ by
$$
\bX+\bx:=((\rho_1,\bx_1+\bx,\s_1,s_1,\theta_1),\cdots,(\rho_m,\bx_m+\bx,\s_m,s_m,\theta_m)).$$
For any $\bY\in J_{\infty}^{m}$ and $\bx\in\Z^d$ we also define $\bY+\bx\in J_{\infty}^m$
in the same way. We use the same notational rules for different power
$m$ for simplicity. We should make clear that the notations introduced
above are used only in the rest of this section, not used anywhere else
in this paper.

Here let us note that
\begin{align*}
V(\psi)=\beta V_0^0+\frac{1}{h}\sum_{s\in
 [0,\beta)_h}\sum_{m=1}^{N_v}\sum_{\bX\in
 J^{2m}}V^0_{2m}(\bX)\psi_{(\bX|s)}.
\end{align*}

\begin{lemma}\label{lem_truncation_h_L_limit}
For any $r\in\R_{>0}$, $n\in \N\cup\{0\}$ the following statements hold
 true.
\begin{enumerate}
\item\label{item_truncation_h_limit}
$a_n(L,h)$ converges in $C(\overline{D(r)}^{n_v};\C)$ as $h\to \infty(h\in(2/\beta)\N)$.
\item\label{item_truncation_L_limit}
Set $a_n(L):=\lim_{h\to \infty, h\in(2/\beta)\N}a_n(L,h)$. Then,
$a_n(L)$ converges in $C(\overline{D(r)}^{n_v};\C)$ as $L\to \infty(L\in\N)$.
\end{enumerate}
\end{lemma}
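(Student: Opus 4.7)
The strategy is the standard perturbative one used in \cite[\mbox{Appendix B}]{K10}, \cite[\mbox{Appendix D}]{K14}, \cite[\mbox{Appendix D}]{K15}: I would rewrite $a_n(L,h)(\bU)$ as a tree expansion whose terms are explicit, absolutely convergent sums of products of the vertex kernels $V_{2m_j}^0$ and the covariance $C$, and then pass each limit inside via dominated convergence. The cases $n=0$ (giving $0$) and $n=1$ are handled by direct computation, using Lemma \ref{lem_temperature_independent_interaction} for the convergence of $L^{-d}V_0^0$ and the explicit Gaussian integral formula for the term involving $V_{2m}^0$ together with the pointwise convergence of $C$.

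For $n\ge 2$, the constant $\beta V_0^0$ does not contribute to cumulants of order $\ge 2$, so $a_n(L,h)(\bU)$ equals $-(-1)^n(\beta L^d n!)^{-1}$ times the $n$-th truncated expectation of $V(\bU)(\psi)-\beta V_0^0(\bU)$ with respect to $d\mu_C$. Applying a Brydges--Kennedy-type tree formula, this truncated expectation expands as a finite sum, indexed by tuples $(m_1,\ldots,m_n)\in\{1,\ldots,N_v\}^n$ and spanning trees $T$ on $\{1,\ldots,n\}$, of integrals over tree parameters in $[0,1]^{n-1}$ of products of kernel values $V_{2m_j}^0(\bX_j)$, propagators $C$ on the $n-1$ tree edges, and a determinant of $C$ for the remaining pairings, all summed over time variables $(s_1,\ldots,s_n)\in[0,\beta)_h^n$ and spatial/spin/theta labels. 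Using the translation invariance \eqref{eq_translation} of $V_m^L$ (inherited by $V_{2m}^0$) together with time translation invariance, I can fix the base position of vertex $1$ at $\b0$ and $s_1=0$, absorbing the prefactor $1/(\beta L^d)$ and reducing to a sum over the remaining positions in the fundamental domain $[-L/2,L/2)^d$ and times in $[0,\beta)_h^{n-1}$.

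For part \eqref{item_truncation_h_limit}, I would observe that at fixed $L$ the Matsubara representation of $C$ converges pointwise in the time variables as $h\to\infty$ to its continuous-time counterpart, while the Riemann sums $\tfrac1h\sum_{s\in[0,\beta)_h}$ converge to $\int_0^\beta ds$. Each term of the tree expansion therefore converges uniformly in $\bU\in\overline{D(r)}^{n_v}$, since $V_{2m}^0$ depends linearly on $\bU$ and is uniformly bounded on $\overline{D(r)}^{n_v}$. To interchange the limit with the sum I would apply dominated convergence, using Lemma \ref{lem_original_covariance_naive_decay} \eqref{eq_original_covariance_naive_decay} and Lemma \ref{lem_temperature_independent_interaction} \eqref{eq_temperature_independent_interaction_decay} together with Gram's inequality for the loop determinant to produce an $h$-uniform absolute bound. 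For part \eqref{item_truncation_L_limit}, with $L\to\infty$ the covariance $C$ converges pointwise to the infinite-volume continuous-time covariance via convergence of the Riemann sum over $\G(L)^*$, $V_{2m}^0$ and $L^{-d}V_0^0$ converge by Lemma \ref{lem_temperature_independent_interaction}, and the $L$-independent bound \eqref{eq_original_covariance_naive_decay_L_independent} applies inside the fundamental domain. A second dominated-convergence argument on $(\Z^d)^{n-1}$ then yields the desired uniform convergence.

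The main obstacle is verifying that the tree-graph combinatorics combined with the bounds of Lemma \ref{lem_original_covariance_naive_decay} and Lemma \ref{lem_temperature_independent_interaction} assemble into an $(h,L)$-uniform dominating sum. The key quantitative inputs are that the polynomial decay of order $d+1$ in \eqref{eq_original_covariance_naive_decay_L_independent} is summable over $\Z^d$, so each of the $n-1$ tree-edge propagators contributes an absolutely summable decay in the relative position of its endpoints, and that the stretched-exponential factor in \eqref{eq_temperature_independent_interaction_decay} binds the $2m_j-1$ ``leg'' positions of each vertex to its base position, making those internal sums absolutely convergent. Once this bookkeeping is in place, the remainder of the argument is a routine application of Fubini and dominated convergence in the product space of spatial and time variables.
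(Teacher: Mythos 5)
Your proposal follows the same route as the paper: a tree expansion of the $n$-th cumulant, spatial and temporal translation invariance of the vertex kernels to pin down a base vertex and absorb the $1/(\beta L^d)$ prefactor, decay bounds from Lemma \ref{lem_original_covariance_naive_decay} and Lemma \ref{lem_temperature_independent_interaction} to produce an $(h,L)$-uniform dominating function, and then two successive applications of dominated convergence in $L^1([0,\beta)^n\times \bJ_\infty, C(\overline{D(r)}^{n_v};\C))$. The only cosmetic difference is that you invoke Gram's inequality for the loop determinant, whereas the paper uses the cruder factorial bound \eqref{eq_ope_naive_bound} borrowed from \cite[Lemma 4.5]{K09}, but both give $(h,L)$-uniform control at fixed $n$ and $m_j$, so the arguments are equivalent.
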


\begin{proof}
Since $a_0(L,h)=0$, the claims are trivial for $n=0$. First let us prove
 the claims for $n=1$. By the translation invariance
 \eqref{eq_translation} and the periodicity \eqref{eq_periodicity},
\begin{align*}
a_1(L,h)
&=\frac{1}{\beta L^d}\int V(\psi)d\mu_{C}(\psi)\\
&=\frac{1}{L^d}V_0^0+\frac{1}{L^d}\sum_{m=1}^{N_v}\sum_{\bX\in
 J^{2m}}V_{2m}^0(\bX)\int \psi_{(\bX|0)}d\mu_{C}(\psi)\\
&=\frac{1}{L^d}V_0^0+\frac{1}{L^d}\sum_{m=1}^{N_v}\sum_{\bX\in
 J^{2m}}V_{2m}^0(\bX-P_s(X_{1}))\\
&\qquad\qquad\qquad\qquad\qquad\cdot \int
 \psi_{P_L((\bX|0)-P_s(X_{1}))}d\mu_{C}(\psi)\\
&=\frac{1}{L^d}V_0^0+\sum_{m=1}^{N_v}\sum_{X\in J_0}
\sum_{\bX\in
 J^{2m-1}}V_{2m}^0(X,\bX)
 \int\psi_{((X,\bX)|0)}d\mu_{C}(\psi)\\
&=\frac{1}{L^d}V_0^0+\sum_{m=1}^{N_v}\sum_{X\in J_0}
\sum_{\bX\in
 J_{\infty}^{2m-1}}1_{\bX\in J_c^{2m-1}}
V_{2m}^0(X,\bX)\\
&\qquad\qquad\qquad\qquad\qquad\cdot\int
 \psi_{P_L(((X,\bX)|0))}d\mu_{C}(\psi).
\end{align*}
Set 
\begin{align*}
F_m'(\bU)(X,\bX):=1_{\bX\in J_c^{2m-1}}
V_{2m}^0(\bU)(X,\bX)\int
 \psi_{P_L(((X,\bX)|0))}d\mu_{C}(\psi).
\end{align*}
Note that $F_m'$ is independent of $h$. Then, it follows from 
 \eqref{eq_original_covariance_naive_decay}, the convergence property of
 $C$ as $L\to \infty$ and Lemma
 \ref{lem_temperature_independent_interaction} that 
\begin{align*}
&\sup_{\bU\in \overline{D(r)}^{n_v}}|F_m'(\bU)(X,\bX)|\\
&\le
 m!c(\beta,d,(t_j)_{1\le j\le d})^mr e^dv_m(1)
e^{-\frac{1}{2m-1}\sum_{p=1}^{2m-1}\sum_{j=1}^d(\frac{2}{\pi}|\<P_s(X_p),\be_j\>|)^{1/2}}
\end{align*}
and $\lim_{L\to \infty, L\in\N}F_m'(\cdot)(X,\bX)$ converges in
 $C(\overline{D(r)}^{n_v};\C)$ for any $m\in \{1,2,$ $\cdots,N_v\}$, $X\in
 J_0$, $\bX\in J_{\infty}^{2m-1}$.
Therefore, by the dominated convergence theorem in $L^1(J_0\times
 J_{\infty}^{2m-1},C(\overline{D(r)}^{n_v};\C))$ and the convergence property of
 $(1/L^d)V_0^0$ we see that $a_1(L,h)$ has the claimed convergence
 properties.

Let $n\ge 2$. Here we need to recall the tree formula for $a_n(L,h)$. We
 adopt a version of the tree formula \cite[\mbox{Theorem
 3}]{SW}, which states that
\begin{align*}
&a_n(L,h)\\
&= \frac{(-1)^{n+1}}{n!\beta L^d}\sum_{T\in \T_n}\prod_{\{p,q\}\in
 T}(\D_{p,q}(C)+\D_{q,p}(C))ope(T,C)\prod_{j=1}^nV(\psi^j)\Bigg|_{\psi^j=0\atop(\forall
 j\in\{1,2,\cdots,n\})},
\end{align*}
where $\T_n$ is the set of all trees over the vertices
$\{1,2,\cdots,n\}$, 
\begin{align*}
\D_{r,s}(C)&:=-\sum_{X,Y\in I_0}C(X,Y)\frac{\partial}{\partial
 \opsi_X^r}\frac{\partial}{\partial \psi_Y^s},\quad (\forall r,s\in
 \{1,2,\cdots,n\})
\end{align*}
with the Grassmann left derivatives $\partial/\partial \opsi_X^r$,
 $\partial/\partial \psi_X^r$, and 
\begin{align*}
ope(T,C)&:=\int_{[0,1]^{n-1}}d\bs\sum_{\xi\in
 \S_n(T)}\varphi(T,\xi,\bs)e^{\sum_{r,s=1}^nM(T,\xi,\bs)(r,s)\D_{r,s}(C)},
\end{align*}
with a $T$-dependent subset $\S_n(T)$ of $\S_n$, a $(T,\xi)$-dependent
 function $\varphi(T,\xi,\cdot)\in C([0,1]^{n-1};\R_{\ge 0})$ satisfying 
\begin{align}\label{eq_ope_irrelevant_function}
\int_{[0,1]^{n-1}}d\bs\sum_{\xi\in \S_n(T)}\varphi(T,\xi,\bs)=1,\quad
 (\forall T\in\T_n),
\end{align}
and a $(T,\xi)$-dependent matrix-valued function $M(T,\xi,\cdot)\in$\\ 
 $C([0,1]^{n-1};\Mat(n,\R))$ satisfying 
\begin{align}
&|M(T,\xi,\bs)(r,s)|\le 1,\label{eq_ope_matrix_component_bound}\\
&(\forall T\in \T_n,\xi\in\S_n(T),\bs\in [0,1]^{n-1},r,s\in
 \{1,2,\cdots,n\}).\notag
\end{align}

The important bound property of the operator $ope(T,C)$ is that
\begin{align}
&\left|ope(T,C)\psi_{\bX_1}^1\psi_{\bX_2}^2\cdots\psi_{\bX_n}^n\Big|_{\psi^j=0\atop(\forall
 j\in\{1,2,\cdots,n\})}\right|\label{eq_ope_naive_bound}\\
&\le \left\lfloor
 \frac{1}{2}\sum_{k=1}^{n}m_k\right\rfloor!c(\beta,d,(t_j)_{1\le j\le d})^{\frac{1}{2}\sum_{k=1}^{n}m_k},\notag\\
&(\forall m_j\in\N\cup\{0\},\bX_j\in I^{m_j}\ (j=1,2,\cdots,n)),\notag
\end{align}
which follows from \eqref{eq_original_covariance_naive_decay},
 \eqref{eq_ope_irrelevant_function} and
 \eqref{eq_ope_matrix_component_bound}.
The proof of \cite[\mbox{Lemma 4.5}]{K09} essentially shows how to
 derive \eqref{eq_ope_naive_bound}.

Define the anti-symmetric function
 $\widetilde{C}:(\cB\times\Z^d\times\spin\times[0,\beta)\times
 \{1,-1\})^2\to\C$ by 
\begin{align*}
&\widetilde{C}((X,\theta),(Y,\xi)):=\frac{1}{2}(1_{(\theta,\xi)=(1,-1)}C(X,Y)-1_{(\theta,\xi)=(-1,1)}C(Y,X)),\\
&(\forall X,Y\in \cB\times\Z^d\times\spin\times [0,\beta),
 \theta,\xi\in\{1,-1\}).
\end{align*}
Then, we have that 
\begin{align*}
\D_{p,q}(C)+\D_{q,p}(C)=-2\sum_{\bX\in
 I^2}\widetilde{C}(\bX)\frac{\partial}{\partial
 \psi_{X_1}^p}\frac{\partial}{\partial \psi_{X_2}^q}.
\end{align*}
The term $a_n(L,h)$ can be expanded as follows.
\begin{align*}
&a_n(L,h)\\
&= \frac{2^{n-1}}{n!\beta}\sum_{T\in \T_n}\prod_{j=1}^n\left(\sum_{m_j=2}^{2N_v}\right)
\frac{1}{L^d}ope(T,C)
\prod_{\{p,q\}\in T}\left(\sum_{\bY\in I^2}\widetilde{C}(\bY)\frac{\partial}{\partial
 \psi_{Y_1}^p}\frac{\partial}{\partial \psi_{Y_2}^q}\right)\\
&\quad\cdot 
\prod_{k=1}^n\left(\left(\frac{1}{h}\right)^{m_k}\sum_{\bX_k\in I^{m_k}}V_{m_k}(\bX_k)\psi_{\bX_k}^k\right)\Bigg|_{\psi^j=0\atop(\forall
 j\in\{1,2,\cdots,n\})}.
\end{align*}
For $T\in\T_n$ and $j\in \{1,2,\cdots,n\}$ let $d_j(T)$ denote the
 degree of the vertex $j$ in $T$. Fix $m_j\in \{2,4,\cdots,2N_v\}$ $(j=1,2,\cdots,n)$.
If $d_j(T)$ is larger than $m_j$ for
 some $j$, the derivatives along the lines of $T$ erase the Grassmann
 polynomials completely and thus such a tree does not contribute to the result.
  Take any $T\in
 \T_n$ satisfying $d_j(T)\le m_j$ $(\forall j\in
 \{1,2,\cdots,n\})$. Then, set
\begin{align*}
a_n'(L,h)
&:=
\frac{1}{L^d}ope(T,C)
\prod_{\{p,q\}\in T}\left(\sum_{\bY\in I^2}\widetilde{C}(\bY)\frac{\partial}{\partial
 \psi_{Y_1}^p}\frac{\partial}{\partial \psi_{Y_2}^q}\right)\\
&\quad\cdot 
\prod_{k=1}^n\left(\left(\frac{1}{h}\right)^{m_k}\sum_{\bX_k\in I^{m_k}}V_{m_k}(\bX_k)\psi_{\bX_k}^k\right)\Bigg|_{\psi^j=0\atop(\forall
 j\in\{1,2,\cdots,n\})}.
\end{align*}
It suffices to prove the convergence properties of $a_n'(L,h)$ instead
 of $a_n(L,h)$. By changing the numbering if necessary, we may assume
 that if $\{p,q\}\in T$ and $p<q$ the length of the shortest path
 between 1 and $p$ in $T$ is shorter than that between 1 and $q$. Then,
 we can define the map $f:\{2,3,\cdots,n\}\to
 \{1,2,\cdots,n-1\}$
by $f(q):=p$ with $p\in \{1,2,\cdots,q-1\}$ satisfying $\{p,q\}\in T$.

 To shorten formulas, we use the notational
 convention that for integers $l,l+1,\cdots,l+m$ and objects
 $w_l,w_{l+1},\cdots,w_{l+m}$, 
$$
\prod_{j=l\atop order}^{l+m}w_j,\quad\prod_{j=l+m\atop order}^{l}w_j 
$$
denote
$$
w_lw_{l+1}\cdots w_{l+m},\quad w_{l+m}w_{l+m-1}\cdots w_l
$$
respectively. Also, it will be convenient to write $X\subset \bY$ for
 $X\in J$, $\bY\in J^n$ if there exists $j\in\{1,2,\cdots,n\}$ such that $X=Y_j$.
  By using the notations
 introduced so far, anti-symmetry, translation invariance and
 periodicity from part to part we can transform as follows.
\begin{align*}
&a_n'(L,h)\\
&=
\frac{1}{L^d}ope(T,C)\prod_{j=1}^n\left(\frac{1}{h}\sum_{s_j\in
 [0,\beta)_h}\right)\\
&\quad\cdot\prod_{q=n\atop order}^{2}\left(\sum_{Y,Z_q\in
 J}\widetilde{C}((Y|s_{f(q)}),(Z_q|s_q))\frac{\partial}{\partial
 \psi_{(Y|s_{f(q)})}^{f(q)}}\right)\sum_{\bX_1\in
 J^{m_1}}V_{m_1}^0(\bX_1)\psi^1_{(\bX_1|s_1)}\\
&\quad\cdot
\prod_{k=2\atop order}^n\left(m_k\sum_{\bX_k\in J^{m_k-1}}V_{m_k}^0(Z_k,\bX_k)\psi_{(\bX_k|s_k)}^k\right)\Bigg|_{\psi^j=0\atop(\forall
 j\in\{1,2,\cdots,n\})}\\
&=
\frac{1}{L^d}\prod_{j=1}^n\left(\frac{1}{h}\sum_{s_j\in
 [0,\beta)_h}\right)\sum_{X_0\in J\atop \bX_1\in
 J^{m_1-1}}V_{m_1}^0(X_0,\bX_1+P_s(X_0))\\
&\quad\cdot\prod_{q=2}^{n}\Bigg(m_q\sum_{Y_{q},Z_q\in
 J\atop \bX_q\in
 J^{m_q-1}}\widetilde{C}((Y_{q}|s_{f(q)})+P_s(X_0),(Z_q|s_q)+P_s(X_0))\\
&\qquad\qquad\qquad\qquad\quad\cdot
V_{m_q}^0(Z_q+P_s(X_0),\bX_q+P_s(X_0))\Bigg)\\
&\quad\cdot ope(T,C)\prod_{k=n\atop order}^2\left(\frac{\partial}{\partial
 \psi_{P_L((Y_{k}+P_s(X_0)|s_{f(k)}))}^{f(k)}}\right)\\
&\quad\cdot
\psi^1_{(X_0|s_1)}\psi_{P_L((\bX_1+P_s(X_0)|s_1))}^1
\prod_{l=2\atop order}^n\psi_{P_L((\bX_l+P_s(X_0)|s_l))}^l\Bigg|_{\psi^j=0\atop(\forall
 j\in\{1,2,\cdots,n\})}\\
&=
\prod_{j=1}^n\left(\frac{1}{h}\sum_{s_j\in
 [0,\beta)_h}\right)\sum_{X_0\in J_0\atop \bX_1\in
 J^{m_1-1}}V_{m_1}^0(X_0,\bX_1)\\
&\quad\cdot\prod_{q=2}^{n}\Bigg(m_q\sum_{Y_{q},Z_q\in
 J\atop \bX_q\in
 J^{m_q-1}}\widetilde{C}((Y_{q}|s_{f(q)}),(Z_q|s_q))
V_{m_q}^0(Z_q,\bX_q)\Bigg)\\
&\quad\cdot ope(T,C)\prod_{k=n\atop order}^2\left(\frac{\partial}{\partial
 \psi_{(Y_{k}|s_{f(k)})}^{f(k)}}\right)
\psi^1_{(X_0|s_1)}\psi_{(\bX_1|s_1)}^1
\prod_{l=2\atop order}^n\psi_{(\bX_l|s_l)}^l\Bigg|_{\psi^j=0\atop(\forall
 j\in\{1,2,\cdots,n\})}\\
&=
\prod_{j=1}^n\left(\frac{1}{h}\sum_{s_j\in
 [0,\beta)_h}\right)\sum_{X_0\in J_0\atop \bX_1\in
 J^{m_1-1}}V_{m_1}^0(X_0,\bX_1)\\
&\quad\cdot\prod_{q=2}^{n}\Bigg(m_q\sum_{Y_{q},Z_q\in
 J\atop \bX_q\in
 J^{m_q-1}}\widetilde{C}((Y_{q}|s_{f(q)}),(Z_q|s_q))
V_{m_q}^0(Z_q,\bX_q+P_s(Z_q))\Bigg)\\
&\quad\cdot ope(T,C)\prod_{k=n\atop order}^2\left(\frac{\partial}{\partial
 \psi_{(Y_{k}|s_{f(k)})}^{f(k)}}\right)\\
&\quad\cdot
\psi^1_{(X_0|s_1)}\psi_{(\bX_1|s_1)}^1
\prod_{l=2\atop order}^n\psi_{P_L((\bX_l+P_s(Z_l)|s_l))}^l\Bigg|_{\psi^j=0\atop(\forall
 j\in\{1,2,\cdots,n\})}\\
&=
\prod_{j=1}^n\left(\frac{1}{h}\sum_{s_j\in
 [0,\beta)_h}\right)\sum_{X_0\in J_0\atop \bX_1\in
 J^{m_1-1}}V_{m_1}^0(X_0,\bX_1)\\
&\quad\cdot\prod_{q=2}^{n}\Bigg(m_q\sum_{Y_{q},Z_q\in
 J\atop \bX_q\in
 J^{m_q-1}}\widetilde{C}((Y_{q}|s_{f(q)}),(Z_q|s_q))
V_{m_q}^0(P_0(Z_q),\bX_q)\Bigg)\\
&\quad\cdot ope(T,C)\prod_{k=n\atop order}^2\left(\frac{\partial}{\partial
 \psi_{(Y_{k}|s_{f(k)})}^{f(k)}}\right)\\
&\quad\cdot
\psi^1_{(X_0|s_1)}\psi_{(\bX_1|s_1)}^1
\prod_{l=2\atop order}^n\psi_{P_L((\bX_l+P_s(Z_l)|s_l))}^l\Bigg|_{\psi^j=0\atop(\forall
 j\in\{1,2,\cdots,n\})}\\
&=
\prod_{j=1}^n\left(\frac{1}{h}\sum_{s_j\in
 [0,\beta)_h}\right)\sum_{X_0\in J_0\atop \bX_1\in
 J^{m_1-1}}V_{m_1}^0(X_0,\bX_1)\\
&\quad\cdot\prod_{q=2}^{n}\Bigg(m_q\sum_{Y_{q},Z_q\in
 J\atop \bX_q\in
 J^{m_q-1}}\widetilde{C}((Y_{q}|s_{f(q)}),(Z_q+P_s(Y_{q})|s_q))
V_{m_q}^0(P_0(Z_q),\bX_q)\Bigg)\\
&\quad\cdot ope(T,C)\prod_{k=n\atop order}^2\left(\frac{\partial}{\partial
 \psi_{(Y_{k}|s_{f(k)})}^{f(k)}}\right)\\
&\quad\cdot
\psi^1_{(X_0|s_1)}\psi_{(\bX_1|s_1)}^1
\prod_{l=2\atop order}^n\psi_{P_L((\bX_l+P_s(Z_l)+P_s(Y_{l})|s_l))}^l\Bigg|_{\psi^j=0\atop(\forall
 j\in\{1,2,\cdots,n\})}\\
&=
\prod_{j=1}^n\left(\frac{1}{h}\sum_{s_j\in
 [0,\beta)_h}\right)\sum_{X_0\in J_0\atop \bX_1\in
 J^{m_1-1}}V_{m_1}^0(X_0,\bX_1)\\
&\quad\cdot\prod_{q=2}^{n}\Bigg(\sum_{Y_{q}\in J_0,Z_q\in
 J\atop \bX_q\in
 J^{m_q-1}}\widetilde{C}((Y_{q}|s_{f(q)}),(Z_q|s_q))
V_{m_q}^0(P_0(Z_q),\bX_q)\Bigg)\\
&\quad\cdot F((s_j)_{j=1}^n,X_0,\bX_1,(\bX_j)_{j=2}^n,(Y_{j})_{j=2}^n,(Z_j)_{j=2}^n),
\end{align*}
where $F$ is the function on 
$$
 [0,\beta)_h^n\times J_0\times
 J_{\infty}^{m_1-1}\times\prod_{j=2}^nJ_{\infty}^{m_j-1}\times
 J_0^{n-1}\times J_{\infty}^{n-1}
$$
defined by
\begin{align*}
&F((s_j)_{j=1}^n,X_0,\bX_1,(\bX_j)_{j=2}^n,(Y_{j})_{j=2}^n,(Z_j)_{j=2}^n)\\
&:=\prod_{q=2}^{n}\Bigg(m_q\sum_{\by_{q}\in \G(L)}\Bigg) ope(T,C)\prod_{k=n\atop order}^2\left(\frac{\partial}{\partial
 \psi_{(Y_{k}+\by_{k}|s_{f(k)})}^{f(k)}}\right)\\
&\qquad\cdot
\psi^1_{(X_0|s_1)}\psi_{P_L((\bX_1|s_1))}^1
\prod_{l=2\atop order}^n\psi_{P_L((\bX_l+P_s(Z_l)+\by_{l}|s_l))}^l\Bigg|_{\psi^j=0\atop(\forall
 j\in\{1,2,\cdots,n\})}.
\end{align*}
Note that 
\begin{align*}
&F((s_j)_{j=1}^n,X_0,\bX_1,(\bX_j)_{j=2}^n,(Y_{j})_{j=2}^n,(Z_j)_{j=2}^n)\\
&=\prod_{j=2}^nm_j\prod_{r\in f^{-1}(1)}\Bigg(\sum_{\by_{r}\in
 \G(L)}1_{Y_{r}+\by_{r}\subset P_L((X_0,\bX_1))}\Bigg)\\
&\quad\cdot\prod_{q=2\atop order}^{n-1}\Bigg(\prod_{p\in f^{-1}(q)}
\Bigg(\sum_{\by_{p}\in
 \G(L)}1_{Y_{p}+\by_{p}\subset P_L(\bX_q+P_s(Z_q)+\by_{q})}\Bigg)\Bigg)\\
&\quad\cdot  ope(T,C)\prod_{k=n\atop order}^2\left(\frac{\partial}{\partial
 \psi_{(Y_{k}+\by_{k}|s_{f(k)})}^{f(k)}}\right)\\
&\quad\cdot
\psi^1_{(X_0|s_1)}\psi_{P_L((\bX_1|s_1))}^1
\prod_{l=2\atop order}^n\psi_{P_L((\bX_l+P_s(Z_l)+\by_{l}|s_l))}^l\Bigg|_{\psi^j=0\atop(\forall
 j\in\{1,2,\cdots,n\})},
\end{align*}
and thus by \eqref{eq_ope_naive_bound},
\begin{align}
&|F((s_j)_{j=1}^n,X_0,\bX_1,(\bX_j)_{j=2}^n,(Y_{j})_{j=2}^n,(Z_j)_{j=2}^n)|\label{eq_ope_aposteriori_bound}\\
&\le \prod_{j=2}^nm_j\prod_{r\in
 f^{-1}(1)}m_1\prod_{q=2}^{n-1}\prod_{p\in f^{-1}(q)}(m_q-1)
\left(\frac{1}{2}\sum_{l=1}^nm_l-n+1\right)!\notag\\
&\quad\cdot c(\beta,d,(t_j)_{1\le j\le d})^{\frac{1}{2}\sum_{l=1}^nm_l-n+1}\notag\\
&\le  \prod_{j=1}^nm_j^{d_T(j)}
\left(\frac{1}{2}\sum_{l=1}^nm_l-n+1\right)!c(\beta,d,(t_j)_{1\le j\le d})^{\frac{1}{2}\sum_{l=1}^nm_l-n+1},\notag\\
&(\forall
 ((s_j)_{j=1}^n,X_0,\bX_1,(\bX_j)_{j=2}^n,(Y_{j})_{j=2}^n,(Z_j)_{j=2}^n)\in
 [0,\beta)_h^n\times \bJ_{\infty}).\notag
\end{align}
Here we set 
$$
\bJ_{\infty}:=J_0\times
 J_{\infty}^{m_1-1}\times\prod_{j=2}^nJ_{\infty}^{m_j-1}\times
 J_0^{n-1}\times J_{\infty}^{n-1}.
$$

For any $s\in
 [0,\beta)$ we let $\hat{s}$ denote an element of $[0,\beta)_h$
 satisfying $s\in [\hat{s},\hat{s}+1/h)$. By periodicity we can rewrite
 $a_n'(L,h)$ as follows.
\begin{align*}
a_n'(L,h)
&=
\prod_{j=1}^n\left(\int_{0}^{\beta}ds_j\right)
\sum_{X_0\in J_0\atop \bX_1\in
 J^{m_1-1}_{\infty}}V_{m_1}^0(X_0,\bX_1)\\
&\quad\cdot\prod_{q=2}^{n}\Bigg(\sum_{Y_{q}\in J_{0},Z_q\in
 J_{\infty}\atop \bX_q\in
 J_{\infty}^{m_q-1}}\widetilde{C}((Y_{q}|\hat{s}_{f(q)}),(Z_q|\hat{s}_q))
V_{m_q}^0(P_0(Z_q),\bX_q)\Bigg)\\
&\quad\cdot 1_{((\bX_j)_{j=1}^n,(Z_j)_{j=2}^n)\in
 \prod_{j=1}^nJ_c^{m_j-1}\times J_c^{n-1}}\\
&\quad\cdot F((\hat{s}_j)_{j=1}^n,X_0,\bX_1,(\bX_j)_{j=2}^n,(Y_{j})_{j=2}^n,(Z_j)_{j=2}^n).
\end{align*}
By \eqref{eq_original_covariance_naive_decay_L_independent},
 \eqref{eq_temperature_independent_interaction_decay} and
 \eqref{eq_ope_aposteriori_bound}, 
\begin{align}
&\sup_{\bU\in \overline{D(r)}^{n_v}}\Bigg|
V_{m_1}^0(\bU)(X_0,\bX_1)\prod_{q=2}^{n}(\widetilde{C}((Y_{q}|\hat{s}_{f(q)}),(Z_q|\hat{s}_q))
V_{m_q}^0(\bU)(P_0(Z_q),\bX_q))\label{eq_all_integrant_bound}\\
&\qquad\qquad\cdot 1_{((\bX_j)_{j=1}^n,(Z_j)_{j=2}^n)\in
 \prod_{j=1}^nJ_c^{m_j-1}\times J_c^{n-1}}\notag\\
&\qquad\qquad\cdot
 F((\hat{s}_j)_{j=1}^n,X_0,\bX_1,(\bX_j)_{j=2}^n,(Y_{j})_{j=2}^n,(Z_j)_{j=2}^n)\Bigg|\notag\\
&\le 
\left(\frac{1}{2}\sum_{l=1}^nm_l-n+1\right)!\prod_{q=1}^n\Bigg(\frac{ r
 e^dv_{m_q/2}(1)c(\beta,d,(t_j)_{1\le j\le d})^{m_q/2}m_q^{d_T(q)}}{1+1_{q\neq
 1}(\frac{2}{\pi})^{d+1}\sum_{j=1}^d|\<P_s(Z_q),\be_j\>|^{d+1}}\notag\\
&\qquad\qquad\qquad\qquad\qquad\qquad\quad\cdot e^{-\frac{1}{m_q-1}\sum_{l=1}^{m_q-1}\sum_{j=1}^d(\frac{2}{\pi}|\<P_s(X_{q,l}),\be_j\>|)^{1/2}}\Bigg),\notag\\
&(\forall
 (({s}_j)_{j=1}^n,X_0,\bX_1,(\bX_j)_{j=2}^n,(Y_{j})_{j=2}^n,(Z_j)_{j=2}^n)\in [0,\beta)^n\times \bJ_{\infty}).\notag
\end{align}
The right-hand side of \eqref{eq_all_integrant_bound} is integrable with
 respect to 
 $$(({s}_j)_{j=1}^n,X_0,\bX_1,(\bX_j)_{j=2}^n,(Y_{j})_{j=2}^n,(Z_j)_{j=2}^n)$$
 over $[0,\beta)^n\times \bJ_{\infty}$. 

Since $F$ becomes a finite sum of products of the covariance $C$
 after applying all the Grassmann derivatives to the monomial, the
 domain of $F$ can be naturally extended into $[0,\beta)^n\times
 \bJ_{\infty}$. Moreover, we can see that the function
 $F:[0,\beta)^n\times \bJ_{\infty}\to\C$ is independent of $h$.
Since $(s,t)\mapsto
 \widetilde{C}((X|s),(Y|t))$ is continuous a.e. in $[0,\beta)^2$ for any
 $X,Y\in J_{\infty}$, so is $\bs\mapsto F(\bs,Z)$ a.e. in $[0,\beta)^n$
 for any $Z\in \bJ_{\infty}$. Thus, for any $X,Y\in J_{\infty}$, $Z\in \bJ_{\infty}$,
\begin{align*}
&\lim_{h\to\infty\atop
 h\in(2/\beta)\N}\widetilde{C}((X|\hat{s}),(Y|\hat{t}))=\widetilde{C}((X|s),(Y|t))\text{
 a.e. }(s,t)\in[0,\beta)^2,\\
&\lim_{h\to\infty\atop
 h\in(2/\beta)\N}F((\hat{s}_j)_{j=1}^n,Z)=F((s_j)_{j=1}^n,Z)\text{
 a.e. }(s_j)_{j=1}^n\in[0,\beta)^n.
\end{align*}
Furthermore, by using the fact that $\lim_{L\to
 \infty,L\in\N}C(\bX)$ converges for any $\bX\in(\cB\times
 \Z^d\times\spin\times [0,\beta))^2$ we can check that 
 $\lim_{L\to \infty,L\in\N}F(\bs,Z)$ converges for any $(\bs,Z)\in[0,\beta)^n\times
 \bJ_{\infty}$.

Now we can apply the dominated convergence theorem in \\
$L^1([0,\beta)^n\times \bJ_{\infty},C(\overline{D(r)}^{n_v};\C))$
 to prove that $a_n'(L,h)$ converges
 in \\
$C(\overline{D(r)}^{n_v};\C)$ as $h\to \infty(h\in(2/\beta)\N)$ and 
\begin{align}
\lim_{h\to\infty\atop
 h\in(2/\beta)\N} a_n'(L,h)
&=
\prod_{j=1}^n\left(\int_{0}^{\beta}ds_j\right)
\sum_{X_0\in J_0\atop \bX_1\in
 J^{m_1-1}_{\infty}}V_{m_1}^0(X_0,\bX_1)\label{eq_all_integrant_h_limit}\\
&\quad\cdot\prod_{q=2}^{n}\Bigg(\sum_{Y_{q}\in J_{0},Z_q\in
 J_{\infty}\atop \bX_q\in
 J_{\infty}^{m_q-1}}\widetilde{C}((Y_{q}|{s}_{f(q)}),(Z_q|{s}_q))
V_{m_q}^0(P_0(Z_q),\bX_q)\Bigg)\notag\\
&\quad\cdot 1_{((\bX_j)_{j=1}^n,(Z_j)_{j=2}^n)\in
 \prod_{j=1}^nJ_c^{m_j-1}\times J_c^{n-1}}\notag\\
&\quad\cdot F(({s}_j)_{j=1}^n,X_0,\bX_1,(\bX_j)_{j=2}^n,(Y_{j})_{j=2}^n,(Z_j)_{j=2}^n).\notag
\end{align}
Set $a_n'(L):=\lim_{h\to\infty,h\in (2/\beta)\N}a_n'(L,h)$. 
By sending $h\to\infty$ we obtain the inequality \eqref{eq_all_integrant_bound} with
 a.e. $(s_j)_{j=1}^n\in[0,\beta)^n$ in place of
 $(\hat{s}_j)_{j=1}^n$ in the left-hand side. Then, by the convergence property of $V_{2m}^0$
 proved in Lemma \ref{lem_temperature_independent_interaction}, the
 convergence properties of $F$ and $\widetilde{C}$ in the limit $L\to
 \infty$ and that
\begin{align*}
&\lim_{L\to \infty\atop L\in \N}1_{((\bX_j)_{j=1}^n,(Z_j)_{j=2}^n)\in
 \prod_{j=1}^nJ_c^{m_j-1}\times J_c^{n-1}}=1,\\
&\Bigg(\forall ((\bX_j)_{j=1}^n,(Z_j)_{j=2}^n)\in
 \prod_{j=1}^nJ_{\infty}^{m_j-1}\times J_{\infty}^{n-1}\Bigg),
\end{align*}
 we can again apply the dominated convergence theorem in
 $L^1([0,\beta)^n\times \bJ_{\infty},C(\overline{D(r)}^{n_v};\C))$
 to deduce from \eqref{eq_all_integrant_h_limit} that
$a_n'(L)$ converges
 in \\$C(\overline{D(r)}^{n_v};\C)$ as $L\to\infty(L\in \N)$.
\end{proof}

\section*{Acknowledgments}
This work was supported by JSPS KAKENHI Grant Number 
26870110.

\section*{Supplementary list of notations}
\subsection*{Parameters and constants}
\begin{center}
\begin{longtable}{p{4cm}|p{7.5cm}|p{3.5cm}}
Notation & Description & Reference \\
\hline
$\theta_{j,k}$ & flux per plaquette & Subsection
 \ref{subsec_hamiltonian}\\
$\ (1\le j<k\le d)$ &  & \\ 
$\eps^L_{j}$  & flux per large circles around & Subsection
 \ref{subsec_hamiltonian}\\
$\ (j=1,2,$ $\cdots,d)$  &  periodic lattice  & \\ 
$t_{j}$  & hopping amplitude & Subsection
 \ref{subsec_hamiltonian}\\
$\ (j=1,2,\cdots,d)$ &  & \\ 
$n_v$  & number of coupling constants & Subsection
 \ref{subsec_hamiltonian}\\
$N_v$ & maximal degree of interacting part of Hamiltonian & Subsection
 \ref{subsec_hamiltonian}\\
$v_0,v_m(c)$ & integral of kernels of interaction
 & beginning of\\
$\ (m =1,\cdots,N_v)$ &   &  Subsection
 \ref{subsec_examples}\\ 
$\bpi$ & $(\pi,\pi,\cdots,\pi)\in\R^d$ & Subsection
 \ref{subsec_multi_band_hamiltonian}\\
$\beps^L$ & $(\eps_j^L)_{1\le j\le d}$ & Subsection
 \ref{subsec_multi_band_hamiltonian}\\
$\btheta$ & $(\theta_{j,k})_{1\le j<k\le d}$ & Subsection
 \ref{subsec_multi_band_hamiltonian}\\
$N$ & $2^{d+2}L^d\beta h$, cardinality of the index set $I$ & Subsection
 \ref{subsec_grassmann}\\
$M_{UV}$ & $\frac{2\sqrt{6}}{\pi}(2d+1)$ & Subsection
 \ref{subsec_UV_covariance}\\ 
$\fw(0)$ & $c_w(d+1)^{-2}M^{-2}$ & Subsection \ref{subsec_isothermal}\\ 
$f_{\bt}$ & parameter depending only on $(t_j)_{1\le j\le d}$, $(\theta_{j,k})_{1\le j<k\le d}$ & Subsection
 \ref{subsec_general_lemma}\\ 
$c_{\chi}$ & constant appearing in an upper bound on derivatives of cut-off
 functions & Lemma \ref{lem_IR_cut_off_properties} \eqref{item_IR_cut_off_derivative}
\end{longtable}
\end{center}

\subsection*{Sets and spaces}
\begin{center}
\begin{longtable}{p{4cm}|p{7.5cm}|p{3.5cm}}
Notation & Description & Reference \\
\hline
$\G(L)$ & $\{0,1,\cdots,L-1\}^d$ & Subsection \ref{subsec_hamiltonian}\\
$\Map(A,B)$ & set of maps from $A$ to $B$  & Subsection
 \ref{subsec_hamiltonian}\\
$D(c)$  &  $\{z\in\C\ |\ |z|<c\}$ & Subsection \ref{subsec_results}\\
$C(\overline{D};\C)$ & set of continuous functions on $\overline{D}$ &
 Subsection \ref{subsec_results}\\
$\Mat(n,\C)$ & set of $n\times n$ complex matrices &
 Subsection \ref{subsec_multi_band_hamiltonian}\\
$\cB$ & $\{1,2,3,\cdots,2^d\}$ &
 Subsection \ref{subsec_multi_band_hamiltonian}\\
$\G(L)^*$ & $\{0,\frac{2\pi}{L},\cdots,\frac{2\pi}{L}(L-1)\}^d$ &
 Subsection \ref{subsec_multi_band_hamiltonian}\\
$C(\overline{D};\bigwedge \cV)$ & set of Grassmann polynomials
 continuous with $\bz\in\overline{D}$ &
 Subsection \ref{subsec_general_lemma}\\
$C^{\o}(D;\bigwedge \cV)$ & set of Grassmann polynomials
 analytic with $\bz\in D$ &
 Subsection \ref{subsec_general_lemma}\\
$\cS(D,c_0,\alpha,M)$ $(l)$ & subset of $C(\overline{D};\bigwedge \cV)\cap C^{\o}(D;\bigwedge \cV)$
& Subsection \ref{subsec_general_lemma}\\
$\hat{\cS}(D,c_0,\alpha,M)$ $(l)$ & subset of
 $\cS(D,c_0,\alpha,M)(l)(\beta_1)\times$ $\cS(D,c_0,\alpha,M)(l)(\beta_2)$ & Subsection \ref{subsec_general_lemma}\\
$\cK(D,\alpha,M)(l)$ & subset of 
& Subsection \ref{subsec_general_lemma}\\
 & $\Map(\overline{D},C^{\infty}(\R^{d+1};\Mat(2^d,\C)))$ & \\
$\hat{\cK}(D,\alpha,M)(l)$ & subset of
 $\cK(D,\alpha,M)(l)(\beta_1)\times \cK(D,\alpha,M)(l)(\beta_2)$
& Subsection \ref{subsec_general_lemma}\\
$C^{\o}(D;\C)$ & set of analytic functions in $D$ &
 Subsection \ref{subsec_general_lemma}\\
$\cR(D,c_0,M)(l)$ & subset of $\Map(\overline{D},\Map(I_0^2,\C))$
& Subsection \ref{subsec_general_lemma}\\
$\hat{\cR}(D,c_0,M)(l)$ & subset of $\cR(D,c_0,M)(l)(\beta_1)\times\cR(D,c_0,M)(l)(\beta_2)$
& Subsection \ref{subsec_general_lemma}
\end{longtable}
\end{center}

\subsection*{Functions and maps}
\begin{center}
\begin{longtable}{p{4cm}|p{7.5cm}|p{3.5cm}}
Notation & Description & Reference \\
\hline
$\sH$ & 1-band Hamiltonian  &
 Subsection \ref{subsec_hamiltonian} \\
$\sH_0$ & kinetic part of $\sH$  &
 Subsection \ref{subsec_hamiltonian} \\
$\sV$ & interacting part of $\sH$  &
 Subsection \ref{subsec_hamiltonian} \\
$b$ & bijection from $\cB$ to $\{0,1\}^d$  &
 Subsection \ref{subsec_multi_band_hamiltonian} \\
$U_d((\xi)_{1\le j\le d})$ & $2^d\times 2^d$ diagonal unitary matrix &
 Subsection \ref{subsec_multi_band_hamiltonian} \\
$\nu$ & bijection from $\cB\times\G(L)$ to $\G(2L)$ &
 Subsection \ref{subsec_multi_band_hamiltonian} \\
$H$ & $2^d$-band Hamiltonian &
 Subsection \ref{subsec_multi_band_hamiltonian} \\
$H_0$ & kinetic part of $H$  &
Subsection \ref{subsec_multi_band_hamiltonian} \\
$V$ & interacting part of $H$  &
 Subsection \ref{subsec_multi_band_hamiltonian} \\
$\cE$ & $2^d\times 2^d$-matrix-valued function  & Subsection \ref{subsec_grassmann}
\end{longtable}
\end{center}

\subsection*{Other notations}
\begin{center}
\begin{longtable}{p{4cm}|p{7.5cm}|p{3.5cm}}
Notation & Description & Reference \\
\hline
$\be_j$ 
& standard basis of $\R^d$ & Subsection \ref{subsec_hamiltonian} \\
$\ (j=1,2,\cdots,d)$ & & \\ 
$\|\cdot\|_{n\times n}$ & operator norm for $n\times n$-matrices & Subsection \ref{subsec_multi_band_hamiltonian}
\end{longtable}
\end{center}

\end{document}